    \def\input@path{{packages/}}
 \newcommand{\bibprocessor}{biblatex}
\newcommand{\btheta}{{\boldsymbol \theta}}
\newcommand{\bx}{{\boldsymbol x}}
\newcommand{\bbeta}{{\boldsymbol \beta}}
\newcommand{\bxi}{{\boldsymbol \xi}}
\newcommand{\bW}{{\boldsymbol W}}
\def\hE{\mathbb{E}}
\newcommand{\bphi}{{\boldsymbol \phi}}
\def\cE{{\cal E}}
\def\m{\mu}
\def\n{\nu}
\def\Si{\Sigma}
\def\cC{{\cal C}}
\def\cE{{\cal E}}
\def\cI{{\cal I}}
\def\cJ{{\cal J}}
\def\cL{{\cal L}}
\def\hE{\mathbb{E}}
\def\hN{\mathbb{N}}
\def\hP{\mathbb{P}}
\def\hR{\mathbb{R}}
\def\qq{\qquad}
\def\cd{\cdot}
\def\bx{{\bf x}}
\def\by{{\bf y}}
\def\pop{\bigotimes}
\newcommand{\bR}{{\boldsymbol R}}
\newcommand{\MX}{{\mathcal{X}}}
\newcommand{\bchi}{{\boldsymbol{\chi}}}
\newcommand{\xeta}{x}
\newcommand{\bw}{{\boldsymbol w}}
\newcommand{\bL}{{\boldsymbol L}}
\newcommand{\bU}{{\boldsymbol U}}
\newcommand{\bH}{{\boldsymbol H}}
\newcommand{\bI}{{\boldsymbol I}}
\newcommand{\MK}{{\mathcal{K}}}
\newcommand{\cov}{{\mbox{Cov}}}
\newcommand{\bTheta}{{\boldsymbol \Theta}}
\newcommand{\bSigma}{{\boldsymbol \Sigma}}
\def\hE{\mathbb{E}}
\def\cE{{\cal E}}
\newcommand{\bT}{{\boldsymbol T}}
\newcommand{\la}{\langle}
\newcommand{\ra}{\rangle}
\newcommand{\ba}{\begin{array}}
\newcommand{\ea}{\end{array}}
\newcommand{\bea}{\begin{eqnarray}}
\newcommand{\eea}{\end{eqnarray}}
\newcommand{\beaa}{\begin{eqnarray*}}
\newcommand{\eeaa}{\end{eqnarray*}}
\def\m{\mu}
\def\n{\nu}
\def\Si{\Sigma}
\def\cC{{\cal C}}
\def\cE{{\cal E}}
\def\cI{{\cal I}}
\def\cJ{{\cal J}}
\def\cL{{\cal L}}
\def\hE{\mathbb{E}}
\def\hN{\mathbb{N}}
\def\hP{\mathbb{P}}
\def\hR{\mathbb{R}}
\def\qq{\qquad}
\def\cd{\cdot}
\def\bx{{\bf x}}
\def\by{{\bf y}}
\def\qed{ \hfill \vrule width.25cm height.25cm depth0cm\smallskip}
\newcommand{\basa}{\begin{assumption}}
\newcommand{\easa}{\end{assumption}}
\newcommand{\bas}{\begin{assum}}
\newcommand{\eas}{\end{assum}}
\def\limP2{\,\mathop{\buildrel \Pi_2\over\longrightarrow\,}}
 \def\cd{\cdot}
\def\bx{{\bf x}}
\def\1{{\bf 1}}
\def\by{{\bf y}}
\def\:{\!:\!}
\def \proof{{\noindent \bf Proof\quad}}
\newtheorem{corollary}{Corollary}
\newtheorem{lemma}{Lemma}
\newtheorem{remark}{Remark}
\newtheorem{assumption}{Assumption}
\newtheorem{assump}{Assumption}
\let\origtheassumption\theassumption
\edef\oldassumption{\the\numexpr\value{assumption}+1}
\long\def\MyI#1%
\newcommand{%
  \input{z.out}

  {%
    \fontsize{8}{10}\tt
    \VerbatimInput[
      firstnumber = 1,
      numbers     = left,
      xleftmargin = 0.33in,
    ]{z.out}
  }
  \FloatBarrier
}
{%
  \input{z.out}

  {%
    \fontsize{8}{10}\tt
    \VerbatimInput[
      firstnumber = 1,
      numbers     = left,
      xleftmargin = 0.33in,
    ]{z.out}
  }
  \FloatBarrier
}  
\begin{document}

\maketitle

% Define front matter
%     dedication
%     acknowledgments
%     preface
%     table of contents
%     list of tables
%     list of figures
%     list of symbols
%     abbreviations
%     nomenclature
%     glossary
%     abstract
\include{ch-0.front}

%
% Put chapter \include commands here.
%

% Introductions may precede the first chapters or major divisions of theses.
% Reference: TM2017, page 31.
% CHANGE NEXT LINE?
\chapter{INTRODUCTION}

AI safety has long been an important issue in the deep learning community. A promising solution to the problem is  Markov chain Monte Carlo (MCMC), which leads to asymptotically correct uncertainty quantification for deep neural network (DNN) models. However, traditional MCMC algorithms \cite{Metropolis1953,Hastings1970} are not scalable to big datasets that deep learning models rely on, although they have achieved significant successes in many scientific areas such as statistical physics and bioinformatics. It was not until the study of stochastic gradient Langevin dynamics (SGLD) \cite{Welling11} that resolves the scalability issue encountered in Monte Carlo computing for big data problems. Ever since a variety of scalable stochastic gradient Markov chain Monte Carlo (SGMCMC) algorithms have been developed based on strategies such as momentum augmentation \cite{Chen14, yian2015, Ding14}, Hessian approximation \cite{Ahn12, Li16, Simsekli2016}, and higher-order numerical schemes \cite{Chen15, Li19}. Despite their theoretical guarantees in statistical inference \cite{Chen15, Teh16, VollmerZW2016} and non-convex optimization \cite{Yuchen17, Maxim17, Xu18}, these algorithms often converge slowly, which makes them hard to be used for efficient uncertainty quantification for many AI safety problems.

To develop more efficient SGMCMC algorithms, we seek inspirations from traditional MCMC algorithms, such as simulated annealing \cite{Kirkpatrick83optimizationby}, simulated tempering \cite{ST}, parallel tempering (also known as replica exchange) \cite{PhysRevLett86, Geyer91}, and histogram algorithms \cite{Berg1991Multicanonical, Hesselbo1995MonteCS, WangLandau2001}. In particular, simulated annealing proposes to decay temperatures to increase the hitting probability to the global optima \cite{Mangoubi18}, which, however, often gets stuck into a local optimum with a fast cooling schedule. Simulated tempering requires a lot on the approximation of the normalizing constant, and is often not adopted in big data. 

In the first part of the thesis, we start with the replica exchange Langevin Monte Carlo (also known as parallel tempering), which includes a Poisson jump process to accelerate the computations and is suitable for implementation and parallelism. Specifically, the replica exchange Langevin diffusion utilizes multiple diffusion processes with different temperatures to balance between exploration and exploitation and proposes to swap the processes during the training. Intuitively, the high-temperature process acts as a bridge to connect the various modes. As such, the acceleration effect can be theoretically quantified \cite{Paul12, chen2018accelerating}. However, despite these advantages, a proper replica exchange SGMCMC (reSGMCMC) has long been missing in the deep learning community. 

In chapter \ref{icml20}, we figure out why the popular reSGMCMC is not widely adopted. In particular, the noisy energy estimator in big data leads to a large bias in the swaps, and a correction term depending on the variance of the energy estimator is required to maintain the detailed balance in a stochastic sense \cite{Andrieu09, Matias19}. We further establish the convergence of the algorithm in 2-Wasserstein distance ($W_2$), which shows the potential of using biased corrections and a large batch size to obtain better performance. However, the corrections in the bias-corrected swaps are too large and often yield little accelerations. Therefore, how to reduce the variance of the noisy energy estimators becomes quite important for accelerating the computations.

In chapter \ref{vr_resgld_iclr}, we study the control variates method to reduce the variance of the energy estimator. The crucial goal is to build correlated control variates to counteract the noise of the energy estimators. Motivated by the stochastic variance-reduced gradient (SVRG) in optimization \cite{SVRG}, we propose to update the control variate periodically to build correlated energy estimators. Interestingly, the variance-reduced energy estimators yield the potential of exponential accelerations instead of solely reducing the discretization error as in the gradient-based variance reduction \cite{Dubey16,Xu18}.

In chapter \ref{NRPT_uncertainty}, we propose an alternative approach to obtain more effective swaps by increasing the number of chains. A standard scheme to conduct the population-chain replica exchange is the deterministic even-odd scheme (DEO) \cite{DEO}, which exploits the non-reversibility property and has successfully reduced the communication cost from $O(P^2)$ to $O(P)$ given sufficient many $P$ chains \cite{Syed_jrssb}. However, the noisy energy estimators in big data have greatly reduced the swap rate. As a result, it is quite challenging to adopt sufficient many chains due to the limited budget in practice. To handle this issue, we generalize the DEO scheme to promote the non-reversibility and obtain an optimal communication cost $O(P \log P)$ in non-asymptotic scenarios. In addition, we also propose the approximate exploration kernels based on stochastic gradient descents with different learning rates. Such a user-friendly nature greatly reduces the cost in hyperparameter tuning and makes the algorithm much more scalable to big datasets.

In the second part of the thesis, we adapt the histogram algorithms in deep learning. The histogram algorithms, such as the multicanonical \cite{Berg1991Multicanonical}, Wang-Landau \cite{WangLandau2001}, and 1/k-ensemble sampler \cite{Hesselbo1995MonteCS} algorithms, were first proposed to sample discrete states of Ising models by yielding a flat or other desired histograms in the energy space and then extended as a general dynamic importance sampling algorithm, the so-called stochastic approximation Monte Carlo (SAMC) algorithm \cite{Liang05, Liang07, LiangPL2009, Liang05}. Theoretical studies \cite{leli2008, Liang10, Fort15} support the efficiency of the histogram algorithms in Monte Carlo computing for small data problems. However, it is still unclear how to adapt the histogram idea to accelerate the convergence of SGMCMC, ensuring efficient uncertainty quantification for AI safety problems. 

In chapter \ref{CSGLD_multi_distribution}, we propose an adaptively weighted stochastic gradient Langevin dynamics algorithm (SGLD), so-called contour stochastic gradient Langevin dynamics (CSGLD), for non-convex Bayesian learning in big data statistics. The proposed algorithm is essentially a scalable dynamic importance sampler, which automatically flattens the target distribution such that the simulation for a multi-modal distribution can be greatly facilitated. Theoretically, we prove a stability condition and establish the asymptotic convergence of the self-adapting parameter to a unique fixed-point, regardless of the non-convexity of the original energy function; we also present an error analysis for the weighted averaging estimators. Empirically, the CSGLD algorithm is tested on multiple benchmark datasets including CIFAR10 and CIFAR100. The numerical results indicate its superiority over the existing state-of-the-art algorithms in training deep neural networks.

In chapter \ref{ICSGLD}, we propose an interacting contour stochastic gradient Langevin dynamics (ICSGLD) sampler, an embarrassingly parallel multiple-chain contour stochastic gradient Langevin dynamics (CSGLD) sampler with efficient interactions.  We show that ICSGLD can be theoretically more efficient than a single-chain CSGLD with an equivalent computational budget. We also present a novel random-field function, which is proven to facilitate the estimation of self-adapting parameters for ICSGLD in big data problems. Empirically, we compare the proposed algorithm with various well-established benchmark methods for posterior sampling. The numerical results show a great potential of ICSGLD for uncertainty estimation in both supervised learning and reinforcement learning.

In chapter \ref{awsgld}, we propose an alternatively adaptively weighted stochastic gradient Langevin dynamics (AWSGLD) algorithm for Bayesian learning of big data problems. The proposed algorithm is scalable and possesses a self-adjusting mechanism: It adaptively flattens the high energy region and protrudes the low energy region during simulations such that both Monte Carlo simulation and global optimization tasks can be greatly facilitated in a single run. The self-adjusting mechanism enables the proposed algorithm to be immune to local traps.

\part{REPLICA EXCHANGE STOCHASTIC LANGEVIN MONTE CARLO}
% \ProvidesFile{ch-reSGLD_icml20.tex}[2021-08-23 introduction chapter]

\chapter{NON-CONVEX LEARNING VIA REPLICA EXCHANGE STOCHASTIC GRADIENT LANGEVIN DYNAMICS}
\label{icml20}

\let\origtheassumption\theassumption

\edef\oldassumption{\the\numexpr\value{assumption}+1}

\section{Introduction}

Replica exchange method, also known as parallel tempering, has become the go-to workhorse for simulations of complex multi-modal distributions. However, a proper replica exchange SGMCMC (reSGMCMC) has long been missing in the deep learning community.

A bottleneck that hinders the development of reSGMCMC is the na\"ive extension of the acceptance-rejection criterion that fails in mini-batch settings. Various attempts \cite{talldata17, Anoop14} were proposed to solve this issue. However, they introduce biases even with the ideal normality assumption on the noise. Some unbiased estimators \cite{PhysRevLett85, Alexandros06} have ever been presented, but the large variance leads to inefficient inference. To remove the bias while maintaining efficiency, \cite{penalty_swap99} proposed a corrected criterion under normality assumptions, and \cite{Daniel17, Matias19} further analyzed the model errors with the asymptotic normality assumptions. However, the above algorithms fail when the required corrections are time-varying and much larger than the energies as shown in Fig.\ref{cifar_biases}(a-b). Consequently, an effective algorithm with the potential to adaptively estimate the corrections and balance between acceleration and accuracy is in great demand.

In this chapter, we propose an adaptive replica exchange SGMCMC algorithm via stochastic approximation (SA) \cite{RobbinsM1951, Liang07, deng2019}, a standard method in adaptive sampling to estimate the latent variable: the unknown correction. The adaptive algorithm not only shows the asymptotic convergence in standard scenarios but also gives a good estimate when the corrections are time-varying and excessively large. We theoretically analyze the discretization error for reSGMCMC in mini-batch settings and show the accelerated convergence in 2-Wasserstein distance. Such analysis sheds light on the use of biased estimates of unknown corrections to obtain a trade-off between acceleration and accuracy. In summary, this algorithm has three main contributions:

\begin{itemize}
    \item We propose a novel reSGMCMC to speed up the computations of SGMCMC in DNNs with theoretical guarantees. The theory shows the potential of using biased corrections and a large batch size to obtain better performance.
    \item We identify the problem of time-varying corrections in DNNs and propose to adaptively estimate the time-varying corrections, with potential extension to a variety of time-series prediction techniques.
    \item We test the algorithm through extensive experiments using various models. It achieves the state-of-the-art results in both supervised learning and semi-supervised learning tasks on CIFAR10, CIFAR100, and SVHN datasets.
\end{itemize}

\section{Preliminaries}

A standard sampling algorithm is the Langevin diffusion, which is a stochastic differential equation (SDE) as follows:
\begin{equation}
\label{sde_1}
    d \bbeta_t  = - \nabla U(\bbeta_t ) dt+\sqrt{2\tau^{(1)}} d\bW_t,
\end{equation}
where $\bbeta_t\in\mathbb{R}^d$, $U(\cdot)$ is the energy function, $\bW_t\in\mathbb{R}^d$ is the Brownian motion, and $\tau^{(1)}>0$ is the temperature. 

Under mild growth conditions on $U$, the Langevin diffusion $\{\bbeta_t^{(1)}\}_{t\geq 0}$ converges to the unique invariant Gibbs distribution $\pi_{\tau^{(1)}}(\bbeta)\propto \exp(-\frac{U(\bbeta)}{\tau^{(1)}})$, where the temperature $\tau^{(1)}$ is crucial for both optimization and sampling of the non-convex energy function $U$. On the one hand, a high-temperature $\tau^{(1)}$ achieves the \emph{exploration} effect: the convergence to the flattened Gibbs distribution of the whole domain is greatly facilitated. However, the flattened distribution is less concentrated around the global optima \cite{Maxim17}, and the geometric connection to the global minimum is affected \cite{Yuchen17}.  
On the other hand, a low-temperature $\tau^{(1)}$ leads to the \emph{exploitation} effect: the solutions explore the local geometry rapidly, but they are more likely to get trapped in local optima, leading to a slow convergence in both optimization and sampling. Therefore, a fixed temperature is quite limited in the computations.

% \begin{figure}[!ht]
%   \begin{center}
%   % \vskip -0.2in
%     \includegraphics[width=0.25\textwidth]{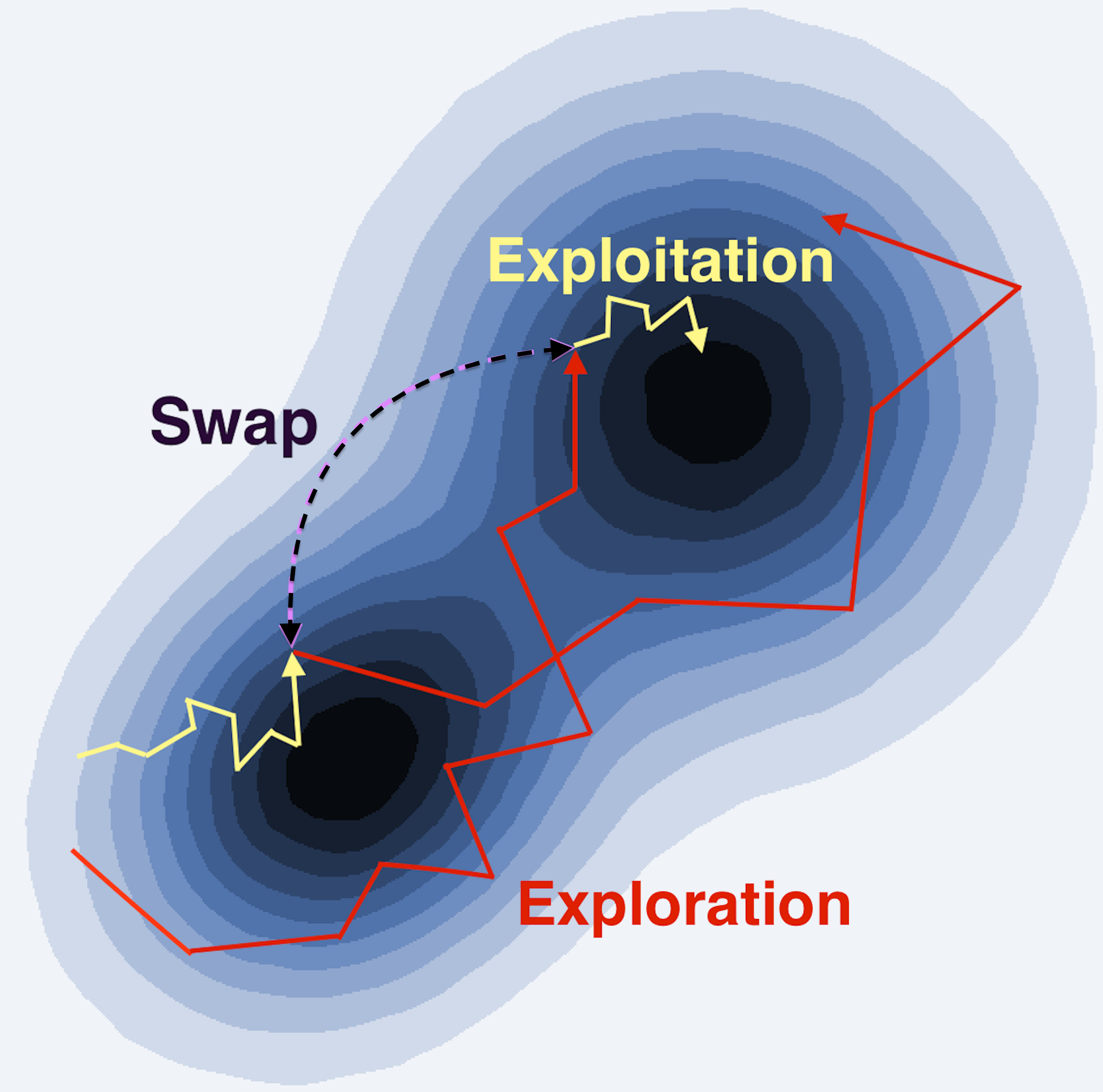}
%   \end{center}
%   % \vskip -0.1in
%   \caption{{Paths.}}
%   \label{path_demo}
% \end{figure}

A powerful algorithm called replica exchange  Langevin diffusion (reLD), also known \begin{wrapfigure}{r}{0.25\textwidth}
   \begin{center}
   \vskip -0.45in
     \includegraphics[width=0.25\textwidth]{figures/path_v5.png}
   \end{center}
   \vskip -0.25in
  \caption{{Paths.}}
  \label{path_demo}
\end{wrapfigure} as parallel tempering Langevin diffusion, has been proposed to accelerate the convergence of the SDE as shown in Figure \ref{path_demo}. reLD proposes to simulate a high-temperature particle $\bbeta_t^{(2)}$ for \emph{exploration} and a low-temperature particle $\bbeta_t^{(1)}$ for \emph{exploitation} and allows them to swap simultaneously. Now consider the following coupled processes with a higher temperature $\tau^{(2)}>\tau^{(1)}$ and $\bW^{(2)}$ independent of $\bW^{(1)}$:
\begin{equation}
\label{sde_2_couple_icml20}
\begin{split}
    d \bbeta_t^{(2)} &= - \nabla U(\bbeta_t^{(2)}) dt+\sqrt{2\tau^{(2)}} d\bW_t^{(2)}.\\
    d \bbeta_t^{(1)} &= - \nabla U(\bbeta_t^{(1)}) dt+\sqrt{2\tau^{(1)}} d\bW_t^{(1)}\\
\end{split}
\end{equation}
Eq.(\ref{sde_2_couple_icml20}) converges to the invariant distribution with density 
\begin{equation}
\label{pt_density_main}
\begin{split}
    \pi(\bbeta^{(1)}, \bbeta^{(2)})\propto e^{-\frac{U(\bbeta^{(1)})}{\tau^{(1)}}-\frac{U(\bbeta^{(2)})}{\tau^{(2)}}}.
\end{split}
\end{equation}

By allowing the two particles to swap, the particles at the position $(\beta^{(1)}, \beta^{(2)})$ are likely to change from $(\beta^{(1)},\beta^{(2)})$ to $(\beta^{(2)},\beta^{(1)})$ with a swapping rate $a (1\wedge S(\beta^{(1)},\beta^{(2)}))dt$, where the constant $a\geq 0$ is the swapping intensity, and $S(\cdot, \cdot)$ satisfies
\begin{equation}
\begin{split}
    S(\beta^{(1)}, \beta^{(2)}):=e^{ \left(\frac{1}{\tau^{(1)}}-\frac{1}{\tau^{(2)}}\right)\left(U(\beta^{(1)})-U(\beta^{(2)})\right)}.
\end{split}
\end{equation}
In such a case, reLD is a Markov jump process, which is reversible \cite{chen2018accelerating} and leads to the same invariant distribution ($\ref{pt_density_main}$).

\section{Replica Exchange Stochastic Gradient Langevin Dynamics}

The wide adoption of the replica exchange Monte Carlo in traditional MCMC algorithms motivates us to design replica exchange stochastic gradient Langevin dynamics for DNNs, but the straightforward extension of reLD to replica exchange stochastic gradient Langevin dynamics is highly non-trivial \cite{Chen14, yian2015, Simsekli2016}. In this section, we will first show that na\"ive extensions of replica exchange Monte Carlo to SGLD (na\"ive reSGLD) lead to large biases. Afterward, we will present an adaptive replica exchange stochastic gradient Langevin dynamics (reSGLD) that will automatically adjust the bias and yield a good approximation to the correct distribution.

\subsection{Na\"ive reSGLD}

We denote the entire data by $\mathcal{D}=\{\bm{\mathrm{d}}_i\}_{i=1}^N$, where $\bm{\mathrm{d}}_i$ is a data point. Given the model parameter $\widetilde \bbeta$, we consider the following energy function (negative log-posterior)
\begin{equation}
L(\widetilde \bbeta)= -\log p(\widetilde \bbeta) - \sum_{i=1}^N \log P(\bm{\mathrm{d}}_i\mid\widetilde \bbeta).
\end{equation}
where $p(\widetilde \bbeta)$ is a proper prior and $\sum_{i=1}^N \log P(\bm{\mathrm{d}}_i\mid\widetilde \bbeta)$ is the complete data log-likelihood. When the number of data points $N$ is large, it is expensive to evaluate $L(\widetilde \bbeta)$ directly. Instead, we propose to approximate the energy function with a mini-batch of data $\mathcal{B}=\{\bm{\mathrm{d}}_{s_i}\}_{i=1}^n$, where $s_i\in \{1, 2, ..., N\}$. We can easily check that if $\mathcal{B}$ is sampled randomly with or without replacement, we obtain the following unbiased estimator of the energy function
\begin{equation}
\widetilde L(\widetilde \bbeta) = -\log p(\widetilde \bbeta)-\frac{N}{n}\sum_{i=1}^n \log P(\bm{\mathrm{d}}_{s_i}\mid\widetilde \bbeta).
\end{equation}
Let $\widetilde \bbeta_k$ denote the estimate of $\widetilde \bbeta$ at the $k$-th iteration. Next, SGLD proposes the following iterations:
\begin{equation}
\begin{split}
&\widetilde \bbeta_{k+1}=\widetilde \bbeta_k - \eta_k \nabla \widetilde{L}(\widetilde \bbeta_k)+\sqrt{2\eta_k\tau^{(1)}}\bxi_k,\\
\end{split}
\end{equation}
where $\eta_k$ is the learning rate, the stochastic gradient $\nabla \widetilde{L}(\widetilde \bbeta_k)$ is the unbiased estimator of the exact gradient $\nabla L(\widetilde \bbeta_k)$, $\bxi$ is a standard $d$-dimensional Gaussian vector with mean $\bm{0}$ and identity covariance matrix. It is known that SGLD asymptotically converges to a unique invariant distribution $\pi(\widetilde \bbeta)\propto \exp\left(-L(\widetilde \bbeta)/\tau^{(1)}\right)$ \cite{Teh16} as $\eta_k\rightarrow 0$. If we simply replace gradients with stochastic gradients in the replica exchange gradient Langevin dynamics, we have
\begin{equation}
\label{naive_resgld}
\begin{split}
    \widetilde \bbeta^{(2)}_{k+1} &= \widetilde \bbeta^{(2)}_{k} - \eta_k\nabla \widetilde L(\widetilde \bbeta^{(2)}_k)+\sqrt{2\eta_k\tau^{(2)}} \bxi_k^{(2)}.\\
    \widetilde \bbeta^{(1)}_{k+1} &= \widetilde \bbeta^{(1)}_{k}- \eta_k \nabla \widetilde L(\widetilde \bbeta^{(1)}_k)+\sqrt{2\eta_k\tau^{(1)}} \bxi_k^{(1)}\\
\end{split}
\end{equation}
Furthermore, we swap the Markov chains in (\ref{naive_resgld}) with the na\"ive stochastic swapping rate $a (1\wedge\mathbb{S}(\widetilde \bbeta_{k+1}^{(1)}, \widetilde \bbeta_{k+1}^{(2)}))\eta_k$\footnote[4]{In the implementations, we fix $a\eta_k=1$ by default.}:
\begin{equation}
\label{swap_naive}
\begin{split}
    &\mathbb{S}(\widetilde \bbeta_{k+1}^{(1)}, \widetilde \bbeta_{k+1}^{(2)})=e^{ \left(\frac{1}{\tau^{(1)}}-\frac{1}{\tau^{(2)}}\right)\left(\widetilde L(\widetilde \bbeta_{k+1}^{(1)})-\widetilde L(\widetilde \bbeta_{k+1}^{(2)})\right)}.\\
\end{split}
\end{equation}
Apparently, the unbiased estimators $\widetilde L(\widetilde \bbeta_{k+1}^{(1)})$ and $\widetilde L(\widetilde \bbeta_{k+1}^{(2)})$ in $\mathbb{S}(\widetilde \bbeta_{k+1}^{(1)}, \widetilde \bbeta_{k+1}^{(2)})$ do not provide an unbiased estimator of $S(\widetilde \bbeta_{k+1}^{(1)}, \widetilde \bbeta_{k+1}^{(2)})$ after a non-linear transformation as shown in (\ref{swap_naive}), which leads to a large bias.

\subsection{Replica Exchange SGLD with Correction}
A viable MCMC algorithm requires the approximately unbiased estimators of the swapping rates to ``satisfy'' the detailed balance property \cite{penalty_swap99, Andrieu09, Geoff12} and the weak solution of a Markov jump process with unbiased stochastic coefficients has also been studied in \cite{Gyongy86, Amel}. When we make normality assumption on the stochastic energy $\widetilde L(\bbeta)\sim\mathcal{N}( L(\bbeta), \sigma^2)$, it follows 
\begin{equation}
\label{diff_L}
    \widetilde L(\widetilde \bbeta^{(1)})-\widetilde L(\widetilde \bbeta^{(2)})=
    L(\widetilde \bbeta^{(1)})-L(\widetilde \bbeta^{(2)}) + \sqrt{2}\sigma W_1, \\
\end{equation}
where $W_1$ follows the standard normal distribution and can be viewed as a Brownian motion at $t=1$. Consider the evolution of the stochastic swapping rate $\{\widetilde S_t\}_{t\in[0,1]}$ in each swap as a geometric Brownian motion:
\begin{equation}
% \small
\begin{split}
\label{analytic}
    \widetilde S_t&=e^{\left(\frac{1}{\tau^{(1)}}-\frac{1}{\tau^{(2)}}\right)\left(\widetilde L(\widetilde \bbeta^{(1)})-\widetilde L(\widetilde \bbeta^{(2)})-\left(\frac{1}{\tau^{(1)}}-\frac{1}{\tau^{(2)}}\right)\sigma^2 t\right)}\\
    &=e^{\left(\frac{1}{\tau^{(1)}}-\frac{1}{\tau^{(2)}}\right)\left(L(\widetilde \bbeta^{(1)})-L(\widetilde \bbeta^{(2)})-\left(\frac{1}{\tau^{(1)}}-\frac{1}{\tau^{(2)}}\right)\sigma^2 t+\sqrt{2}\sigma W_t\right)}.\\
\end{split}
\end{equation}
Set $\tau_{\delta}=\frac{1}{\tau^{(1)}}-\frac{1}{\tau^{(2)}}$ and take the partial derivatives of $\widetilde S_t$
\begin{equation*}
\begin{split}
     &\frac{d \widetilde S_t}{d t}=-\tau_{\delta}^2\sigma^2\widetilde S_t,\  \frac{d \widetilde S_t}{dW_t}=\sqrt{2}\tau_{\delta}\sigma \widetilde S_t,\frac{d^2 \widetilde S_t}{d W_t^2}=2\tau_{\delta}^2\sigma^2 \widetilde S_t.
\end{split}
\end{equation*}
It\^{o}'s lemma shows that
\begin{equation*}
\begin{split}
    d \widetilde S_t&=\left(\frac{d\widetilde S_t}{dt}+\frac{1}{2}\frac{d^2 \widetilde S_t}{dW_t^2}\right)dt + \frac{d\widetilde S_t}{dW_t}dW_t= \sqrt{2}\tau_{\delta}\sigma \widetilde S_t d W_t.
\end{split}
\end{equation*}
Notice that $\{\widetilde S_t\}_{t\in[0,1]}$ is a Martingale and yields the same expectation for $\forall t\in [0, 1]$. By fixing $t=1$ in (\ref{analytic}), we have
\begin{equation}
\begin{split}
\label{sto_acceptance}
    \widetilde S_1&=e^{ \left(\frac{1}{\tau^{(1)}}-\frac{1}{\tau^{(2)}}\right)\left( \widetilde L(\widetilde \bbeta^{(1)})- \widetilde L(\widetilde \bbeta^{(2)})-\left(\frac{1}{\tau^{(1)}}-\frac{1}{\tau^{(2)}}\right)\sigma^2\right)},\\
\end{split}
\end{equation}
where the stochastic swapping rate $\widetilde S_1$ is an unbiased estimator of $\widetilde S_0=e^{\left(\frac{1}{\tau^{(1)}}-\frac{1}{\tau^{(2)}}\right)\left(L(\widetilde \bbeta^{(1)})-L(\widetilde \bbeta^{(2)})\right)}$, and the correction term $\left(\frac{1}{\tau^{(1)}}-\frac{1}{\tau^{(2)}}\right)\sigma^2$ aims to remove the bias from the swaps. 

An advantage of interpreting the correction term from the perspective of geometric Brownian motion is that we may extend it to geometric L\'{e}vy process \cite{David04}, which is more suitable for the heavy-tailed energy noise \cite{Simsekli2019b}. Admittedly, the estimation of the tail-index of extreme-value distributions and the correction under L\'{e}vy process go beyond the scope of this chapter, so we leave it for future works.

\subsection{Adaptive Replica Exchange SGLD}

In reality, the exact variance $\sigma^2$ is hardly known and subject to estimation. The normality assumption may be violated and even no longer time-independent.

\subsubsection{Fixed Variance $\sigma^2$}
We use stochastic approximation (SA) to adaptively estimate the unknown variance while sampling from the posterior. In each SA step, we obtain an unbiased sample variance $\tilde \sigma^2$ for the true $\sigma^2$ and update the adaptive estimate $\widehat \sigma_{m+1}^2$ through
\begin{equation}
    \widehat \sigma^2_{m+1} = (1-\gamma_m)\widehat \sigma^2_{m}+\gamma_m \tilde \sigma^2_{m+1},
\end{equation}
where $\gamma_m$ is the smoothing factor at the $m$-th SA step. The SA step is updated less frequently than the standard sampling to reduce the computational cost. When the normality assumption holds, we notice that $\widehat \sigma_{m}^2=\sum_{i=1}^m \tilde \sigma_{i}^2/m$ when $\gamma_m=\frac{1}{m}$. Following central limit theorem (CLT), we have that $\widehat \sigma_m^2-\sigma^2 =\mathcal{O}(\frac{1}{m})$. 
Inspired by theorem 2 from \cite{Chen15}, we expect that the weak convergence of the adaptive sampling algorithm holds since the bias decreases sufficiently fast ($\frac{1}{m}\sum_{l=1}^{m} \mathcal{O}(\frac{1}{l})\rightarrow 0$ as $m\rightarrow \infty$).

In practice, the normality assumption is likely to be violated when we use a small batch size $n$, but the unknown distribution asymptotically approximates the normal distribution as $n\rightarrow \infty$ and yield a bias $\mathcal{O}(\frac{1}{n})$ in each SA step. Besides, the mini-batch setting usually introduces a very large noise on the estimator of the energy function, which requires a large correction term and leads to \emph{almost-zero swapping rates}. 

To handle this issue, we introduce a correction factor $F$ to reduce the correction term from $\left(\frac{1}{\tau^{(1)}}-\frac{1}{\tau^{(2)}}\right)\widehat \sigma^2$ to $\frac{\left(\frac{1}{\tau^{(1)}}-\frac{1}{\tau^{(2)}}\right)\widehat \sigma^2}{F}$. We note that a large $F>1$ introduces some bias, but may significantly increase the acceleration effect, giving rise to an acceleration-accuracy trade-off in finite time. Now, we show the algorithm in Alg.\ref{alg_icml20}. In addition to simulations of multi-modal distributions, our algorithm can be also combined with simulated annealing \cite{PT_SA, OPT_SA}
to accelerate the non-convex optimization and increase the hitting probability to the global optima \cite{Mangoubi18}.

\begin{algorithm}[tb]
  \caption{Adaptive Replica Exchange Stochastic Gradient Langevin Dynamics Algorithm. For \emph{sampling purposes}, we fix the temperatures $\tau^{(1)}$ and $\tau^{(2)}$; for \emph{optimization purposes}, we keep annealing $\tau^{(1)}$ and $\tau^{(2)}$ during each epoch. Empirically, a larger $\gamma_m$ tracks the dynamics better but is less robust. The intensity $r$ and $\eta$ are omitted in the corrected swaps.}
  \label{alg_icml20}
\begin{algorithmic}
\REPEAT
  \STATE{\textbf{Sampling Step}}
  \begin{equation*}
  \small
      \begin{split}
        \widetilde \bbeta^{(2)}_{k+1} &= \widetilde \bbeta^{(2)}_{k} - \eta_k^{(2)}\nabla \widetilde L(\widetilde \bbeta^{(2)}_k)+\sqrt{2\eta_k^{(2)}\tau^{(2)}} \bxi_k^{(2)},\\
        \widetilde \bbeta^{(1)}_{k+1} &= \widetilde \bbeta^{(1)}_{k}- \eta_k^{(1)} \nabla \widetilde L(\widetilde \bbeta^{(1)}_k)+\sqrt{2\eta_k^{(1)}\tau^{(1)}} \bxi_k^{(1)}\\
      \end{split}
  \end{equation*}
  \STATE{\textbf{SA Step}}
  \STATE{Obtain an unbiased estimate $\tilde \sigma_{m+1}^2$ for $\sigma^2$.}
  \begin{equation*}
  \begin{split}
      &\widehat \sigma^2_{m+1} = (1-\gamma_m)\widehat \sigma^2_{m}+\gamma_m \tilde \sigma^2_{m+1},\\
  \end{split}
  \end{equation*}
  \STATE{\textbf{Swapping Step}}
  \STATE{Generate a uniform random number $u\in [0,1]$.}
  \begin{equation*}
      \textstyle \hat S_1=e^{ \left(\frac{1}{\tau^{(1)}}-\frac{1}{\tau^{(2)}}\right)\left( \widetilde L(\widetilde \bbeta_{k+1}^{(1)})- \widetilde L(\widetilde \bbeta_{k+1}^{(2)})-\frac{\left(\frac{1}{\tau^{(1)}}-\frac{1}{\tau^{(2)}}\right)\hat \sigma^2_{m+1}}{F}\right)}.
  \end{equation*}
  \IF{$u<\hat S_1$}
  \STATE Swap $\widetilde \bbeta_{k+1}^{(1)}$ and $\widetilde \bbeta_{k+1}^{(2)}$.
  \ENDIF
  \UNTIL{$k=k_{\max}$}
    \vskip -1 in
\end{algorithmic}
\end{algorithm}

\subsubsection{Time-varying Variance $\sigma^2$}
In practice, the variance $\sigma^2$ usually varies with time, resulting in time-varying corrections. For example, in the optimization of residual networks on CIFAR10 and CIFAR100 datasets, we notice from Fig.\ref{cifar_biases}(a-b) that the corrections are time-varying. As such, we cannot use a fixed correction anymore to deal with the bias. The treatment for the time-varying corrections includes standard methods for time-series data, and a complete recipe for modeling the data goes beyond our scope. We still adopt the method of stochastic approximation and choose a fixed smoothing factor $\gamma$ so that
\begin{equation}
\label{smoothing}
    \widehat \sigma^2_{m+1} = (1-\gamma)\widehat \sigma^2_{m}+\gamma \tilde \sigma^2_{m+1}.
\end{equation}

Such a method resembles the simple exponential smoothing and acts as robust filters to remove high-frequency noise. It can be viewed as a special case of autoregressive integrated moving average (ARIMA) (0,1,1) model but often outperforms the ARIMA equivalents because it is less sensitive to the model selection error \cite{john66}. From the regression perspective, this method can be viewed as a zero-degree
local polynomial kernel model \cite{Gijbels99}, which is robust to distributional
assumptions.

\section{Convergence Analysis}

We theoretically analyze the acceleration effect and the accuracy of reSGLD in terms of 2-Wasserstein distance between the Borel probability measures $\mu$ and $\nu$ on $\mathbb{R}^d$ 
\begin{equation*}
    W_2(\mu, \nu):=\inf_{\Gamma\in \text{Couplings}(\mu, \nu)}\sqrt{\int\|\bbeta_{\mu}-\bbeta_{\nu}\|^2 d \Gamma(\bbeta_{\mu}, \bbeta_{\nu})},
\end{equation*}
where $\|\cdot\|$ is the Euclidean norm, and the infimum is taken over all joint distributions $\Gamma(\bbeta_{\mu}, \bbeta_{\nu})$ with $\mu$ and $\nu$ being its marginal distributions.

Our analysis begins with the fact that reSGLD in Algorithm.\ref{alg_icml20} tracks the replica exchange Langevin diffusion (\ref{sde_2_couple_icml20}). For ease of analysis, we consider a fixed learning rate $\eta$ for both chains. reSGLD can be viewed as a special discretization of the continuous-time Markov jump process. In particular, it differs from the standard discretization of the continuous-time Langevin algorithms \cite{chen2018accelerating, yin_zhu_10, Maxim17,Sato2014ApproximationAO} in that we need to consider the discretization of the Markov jump process in a stochastic environment. To handle this issue, we follow \cite{Paul12} and view the swaps of positions as swaps of the temperatures, which have been proven equivalent in distribution. 
\begin{lemma}[Discretization Error, informal version of Lemma \ref{discretization_appendix_reSGLD}]
\label{discretization_main_reSGLD}
Given the smoothness and dissipativity assumptions \eqref{assump: lip and alpha beta} and \eqref{assump: dissipitive}, and a small learning rate $\eta$, we have
\begin{equation*}
\begin{split}
\hE[\sup_{0\le t\le T}\|\bbeta_t-\widetilde \bbeta^{\eta}_t\|^2] \le \mathcal{\tilde O}(\eta+\max_i\hE[\|\bphi_i\|^2]+ \max_{i}\sqrt{\hE\left[\mid\psi_{i}\mid^2\right]}),\\
\end{split}
\end{equation*}
where $\widetilde \bbeta^{\eta}_t$ is the continuous-time interpolation for reSGLD, $\bphi:=\nabla \widetilde U-\nabla U$ is the noise in the stochastic gradient, and $\psi:=\widetilde S-S$ is the noise in the stochastic swapping rate. 
\end{lemma}

Then we quantify the evolution of the 2-Wasserstein distance between $\nu_{t}$ and the invariant distribution $\pi$, where $\nu_t$ is the probability measure associated with reLD at time $t$. The key tool is the exponential decay of entropy when $\pi$ satisfies the log-Sobolev inequality (LSI) \cite{Bakry2014}. To justify LSI, we first verify LSI for reLD without swaps, which is a direct result \cite{Cattiaux2010} given the Lyapunov function criterion and the Poincar\'{e} inequality \cite{chen2018accelerating}. Then we verify LSI for reLD with swaps by analyzing the corresponding Dirichlet form, which is strictly larger than the Dirichlet form associated with reLD without swaps. Finally, the exponential decay of the 2-Wasserstein distance follows from the Otto-Villani theorem \cite{Bakry2014} by connecting 2-Wasserstein distance with the relative entropy.

\begin{lemma}[Accelerated exponential decay of $W_2$, informal version of Lemma \ref{exponential decay_appendix}]\label{exponential decay_main}
Under the smoothness and dissipativity assumptions \eqref{assump: lip and alpha beta} and \eqref{assump: dissipitive}, we have that the replica exchange Langevin diffusion converges exponentially fast to the invariant distribution $\pi$:
\begin{equation}
    W_2(\nu_t,\pi) \leq  D_0 e^{-k\eta(1+\delta_S)/c_{\text{LS}}},
\end{equation}
where $\delta_{S}:=\inf_{t>0}\frac{\cE_S(\sqrt{\frac{d\nu_t}{d\pi}})}{\cE(\sqrt{\frac{d\nu_t}{d\pi}})}-1$ is the acceleration effect depending on the swapping rate $S$, $\cE$ and $\cE_S$ are the Dirichlet forms defined in the appendix, $c_{\text{LS}}$ is the constant in the log-Sobolev inequality, $D_0=\sqrt{2c_{\text{LS}}D(\nu_0\|\pi)}$.
\end{lemma}{}

Finally, combining the definition of Wasserstein distance and the triangle inequality, we have that 

\begin{theorem}[Convergence of reSGLD, informal version of Theorem \ref{convergence_reSGLD_appendix}]
\label{convergence_reSGLD_main}
Let the smoothness and dissipativity assumptions \eqref{assump: lip and alpha beta} and \eqref{assump: dissipitive} hold. For the distribution $\{\mu_{k}\}_{k\ge 0}$ associated with the discrete dynamics $\{\widetilde \bbeta_k\}_{k\ge 1}$, we have the following estimates for $k\in \mathbb N^{+}$:
\begin{equation*}
\begin{split}
    W_2(\mu_{k}, &\pi) \le  D_0 e^{-k\eta(1+\delta_S)/c_{\text{LS}}}+\footnotesize{\mathcal{\tilde O}(\eta^{\frac{1}{2}}+\max_i(\hE[\|\bphi_i\|^2])^{\frac{1}{2}}+ \max_{i}(\hE\left[\mid\psi_{i}\mid^2\right]})^{\frac{1}{4}}),\\
\end{split}
\end{equation*}
where $D_0=\sqrt{2c_{\text{LS}}D(\mu_0\|\pi)}$, $\delta_{S}:=\min_{i}\frac{\cE_S(\sqrt{\frac{d\mu_i}{d\pi}})}{\cE(\sqrt{\frac{d\mu_i}{d\pi}})}-1$.
\end{theorem}{}
Ideally, we want to boost the acceleration effect $\delta_S$ by using a larger swapping rate $S$ and increase the accuracy by reducing the mean squared errors $\hE[\|\bphi_i\|^2]$ and $\hE[\mid\psi_i\mid^2]$. One possible way is to apply a \emph{large enough batch size}, which may be yet inefficient given a large dataset. Another way is to \emph{balance} between acceleration and accuracy by tuning the correction factor $F$. In practice, a larger $F$ leads to a larger acceleration effect and also injects more biases.

\section{Experiments}

\subsection{Simulations of Gaussian Mixture Distributions}

In this group of experiments, we evaluate the acceleration effects and the biases for reSGLD on multi-modal distributions based on different assumptions on the estimators for the energy function. As a comparison, we choose SGLD and the na\"ive reSGLD without corrections as baselines. The learning rates $\eta^{(1)}$ and $\eta^{(2)}$ are both set to 0.03, and the temperatures $\tau^{(1)}$ and $\tau^{(2)}$ are set to 1 and 10, respectively. In particular, SGLD uses the learning rate $\eta^{(1)}$ and the temperature $\tau^{(1)}$. We simulate 100,000 samples from each distribution and propose to estimate the correction every 100 iterations. The correction estimator is calculated based on the variance of 10 samples of $\widetilde U_1(x)$. The initial correction is set to 100 and the step size $\gamma_m$ for stochastic approximation is chosen as $\frac{1}{m}$. The correction factor $F$ is 1 in the first two examples.

\begin{figure*}[!ht]
  \centering
  \subfloat[$\scriptstyle \widetilde U_1(x)\sim \mathcal{N}(U_1(x), 2^2)$.]{\includegraphics[scale=0.21]{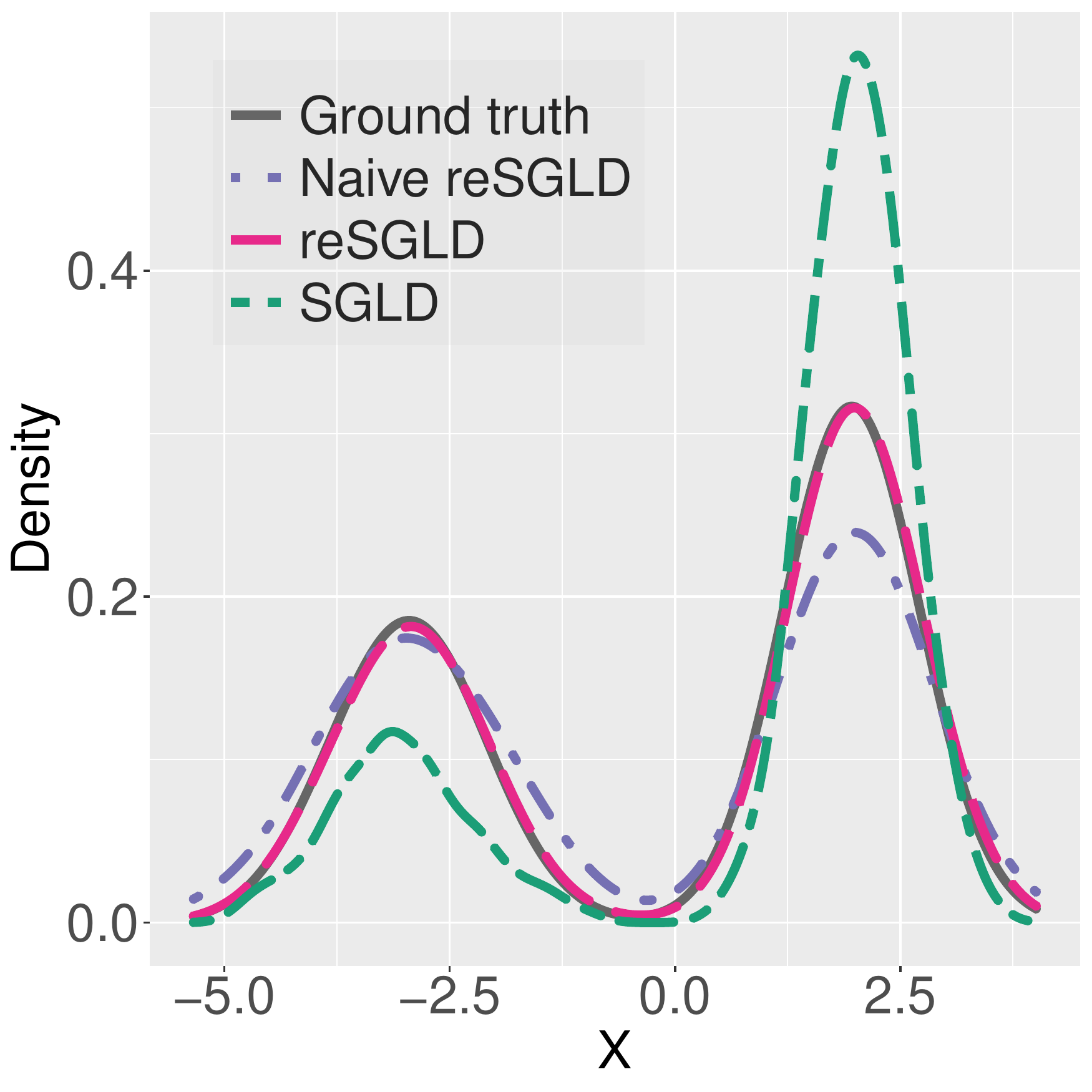}}\label{fig: 22a}\quad
  \subfloat[Convergence.]{\includegraphics[scale=0.21]{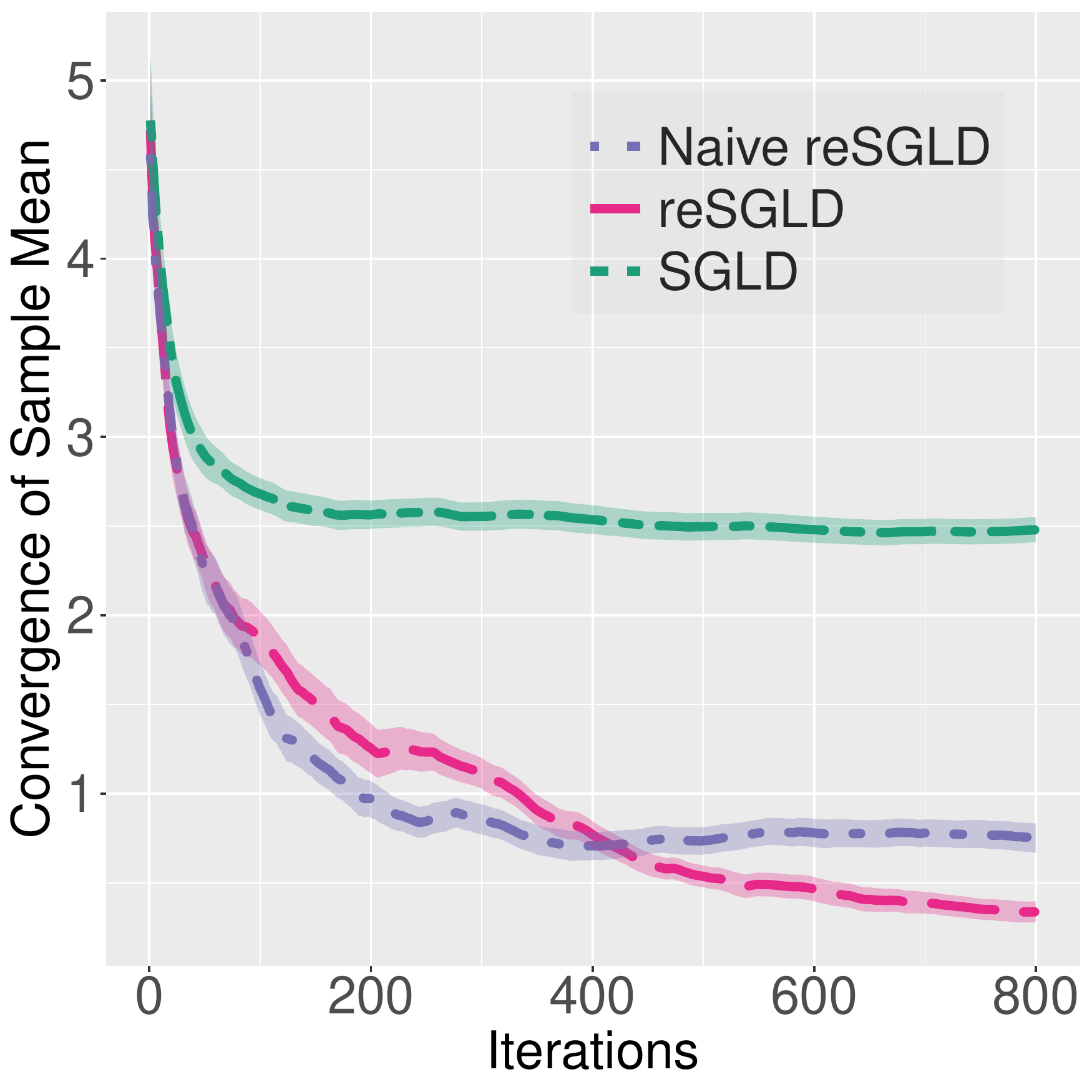}}\label{fig: 22b}\quad
  \subfloat[$\scriptstyle \widetilde U_2(x)\sim U_2(x)+ t(\nu=5)$.]{\includegraphics[scale=0.21]{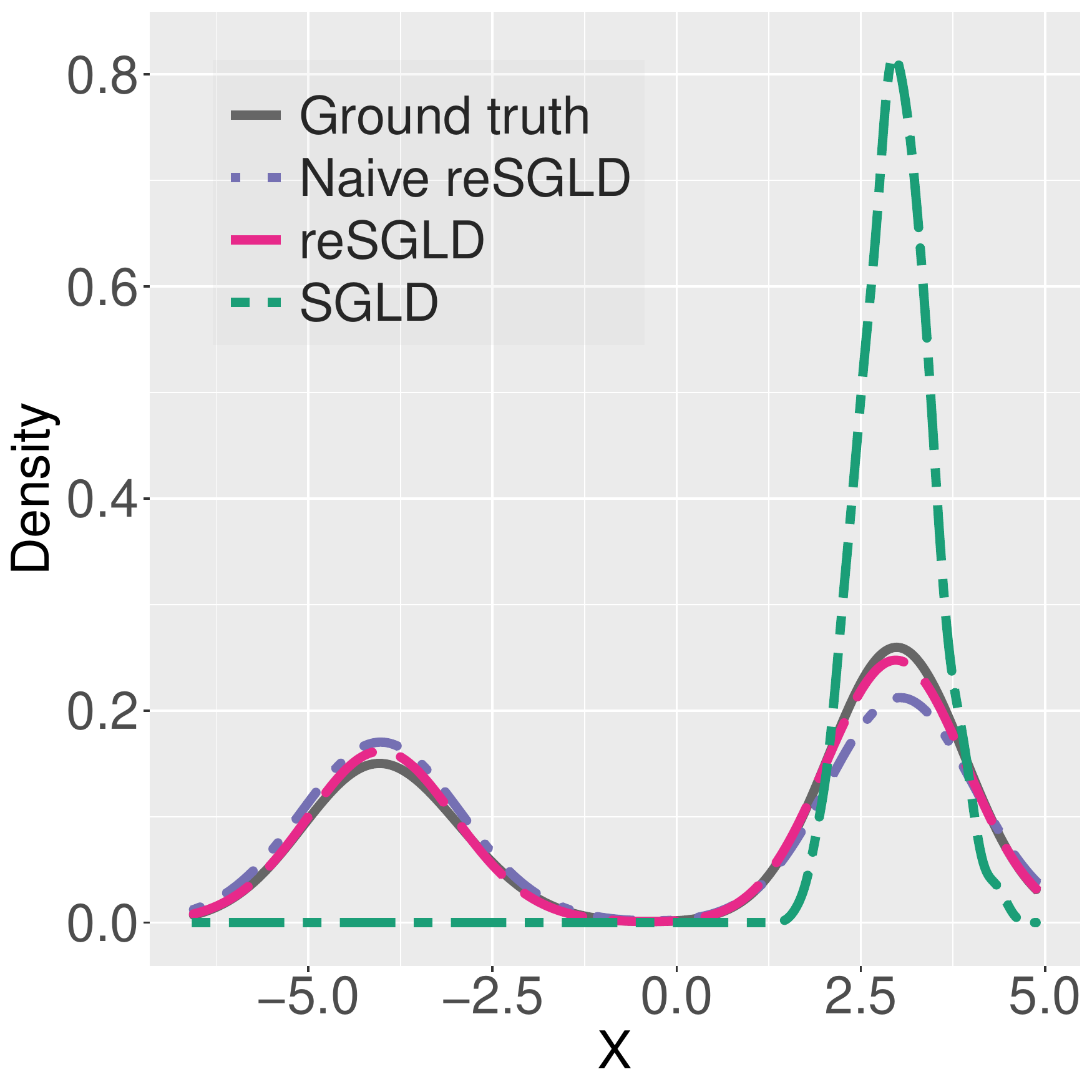}}\label{fig: 22c}\quad
  \subfloat[$\scriptstyle \widetilde U_3(x)\sim U_3(x)+7 t(\nu=10)$.]{\includegraphics[scale=0.22]{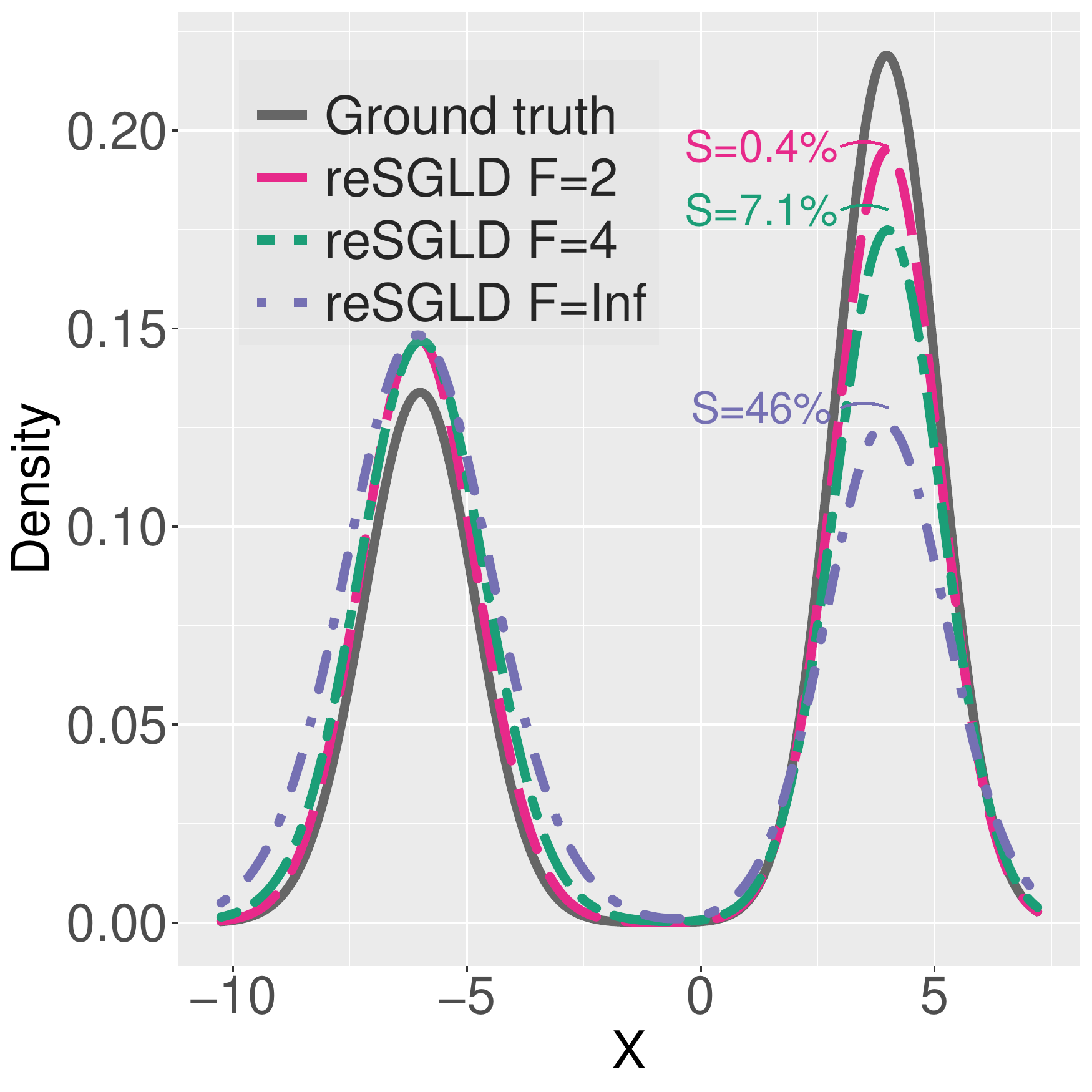}}\label{fig: 22d}
    % \vskip -0.05in
  \caption{Evaluation of reSGLD on Gaussian mixture distributions, where the na\"ive reSGLD doesn't make any corrections and reSGLD proposes to adaptively estimate the unknown corrections. In Fig.1(d), we omit SGLD because it gets stuck in a single mode.}\label{mixture}
%   \vspace{-1em}
\end{figure*}

We first demonstrate reSGLD on a simple Gaussian mixture distribution $e^{-U_1(x)}\sim 0.4 \mathcal{N}(-3, 0.7^2)+0.6\mathcal{N}(2, 0.5^2)$, where $U_1(x)$ is the energy function. We assume we can only obtain the unbiased energy estimator $\widetilde U_1(x)\sim \mathcal{N}(U_1(x), 2^2)$ and the corresponding stochastic gradient at each iteration.  From Fig.\ref{mixture}(a,b), we see that SGLD suffers from the local trap problem and takes a long time to converge. The na\"ive reSGLD algorithm alleviated the local trap problem, but is still far away from the ground truth distribution without a proper correction. The na\"ive reSGLD converges faster than reSGLD in the early phase due to a higher swapping rate, but ends up with a large bias when the training continues. By contrast, reSGLD successfully identifies the right correction through adaptive estimates and yields a close approximation to the ground truth distribution. The high-temperature chain serves as a bridge to facilitate the movement, and the local trap problem is greatly reduced.

In the second example, we relax the normality assumption to a heavy-tail distribution. Given a Gaussian mixture distribution $e^{-U_2(x)}\sim 0.4 \mathcal{N}(-4, 0.7^2)+0.6\mathcal{N}(3, 0.5^2)$, we assume that we can obtain the stochastic energy estimator $\widetilde U_2(x)\sim U_2(x)+t(\nu=5)$, where $t(\nu=5)$ denotes the Student's t-distribution with degree of freedom 5. We see from Fig.\ref{mixture}(c) that reSGLD still gives a good approximation to the ground true distribution while the others don't.

In the third example, we show a case when the correction factor $F$ is useful. We sample from a Gaussian mixture distribution $e^{-U_3(x)}\sim 0.4 \mathcal{N}(-6, 0.7^2)+0.6\mathcal{N}(4, 0.5^2)$. We note that the two modes are far away from each other and the distribution is more difficult to simulate. More interestingly, we assume $\widetilde U_3(x)\sim U_3(x)+7 t(\nu=10)$, which requires a large correction term and ends up with no swaps in the finite 100,000 iterations at $F=1$. In such a case, the unbiased algorithm behaves like the ordinary SGLD as in Fig.\ref{mixture}(c) and still suffers from the local trap problems. To achieve larger acceleration effects to avoid local traps and maintain accuracy, we try $F$ at $2$, $4$ and $\infty$ ($\text{Inf}$), where the latter is equivalent to the na\"ive reSGLD. We see from Fig.\ref{mixture}(d) that $F=2$ shows the best approximations, despite that the swapping rate $S$ is only $0.4\%$. Further increases on the acceleration effect via larger correction factors $F$ give larger swapping rates (7.1\% and 46\%) and potentially accelerate the convergence in the beginning. However, the biases become more significant as we increase $F$ and lead to larger errors in the end.

\subsection{Optimization of Image Data}
\label{SL}
We evaluate the adaptive replica exchange Monte Carlo on CIFAR10 and CIFAR100, which consist of 50,000 $32\times32$ RGB images for training and 10,000 images for testing. CIFAR10 and CIFAR100 have 10 classes and 100 classes, respectively. We adopt the well-known residual networks (ResNet) and wide ResNet (WRN)  as model architectures. We use the 20, 32, 56-layer ResNet (denoted as ResNet-20, et al.), WRN-16-8 and WRN-28-10, where, for example, WRN-16-8 denotes a ResNet that has 16 layers and is 8 times wider than the original. Inspired by the popular momentum stochastic gradient descent, we use stochastic gradient Hamiltonian Monte Carlo (SGHMC) as the baseline sampling algorithm and use the numerical method proposed by \cite{Saatci17} to reduce the tuning cost. We refer to the momentum stochastic gradient descent algorithm as mSGD and the adaptive replica exchange SGHMC algorithm as reSGHMC.

\begin{figure*}[!ht]
  \centering
  \subfloat[Corrections v.s. losses on CIFAR10]{\includegraphics[width=3.85cm, height=3.3cm]{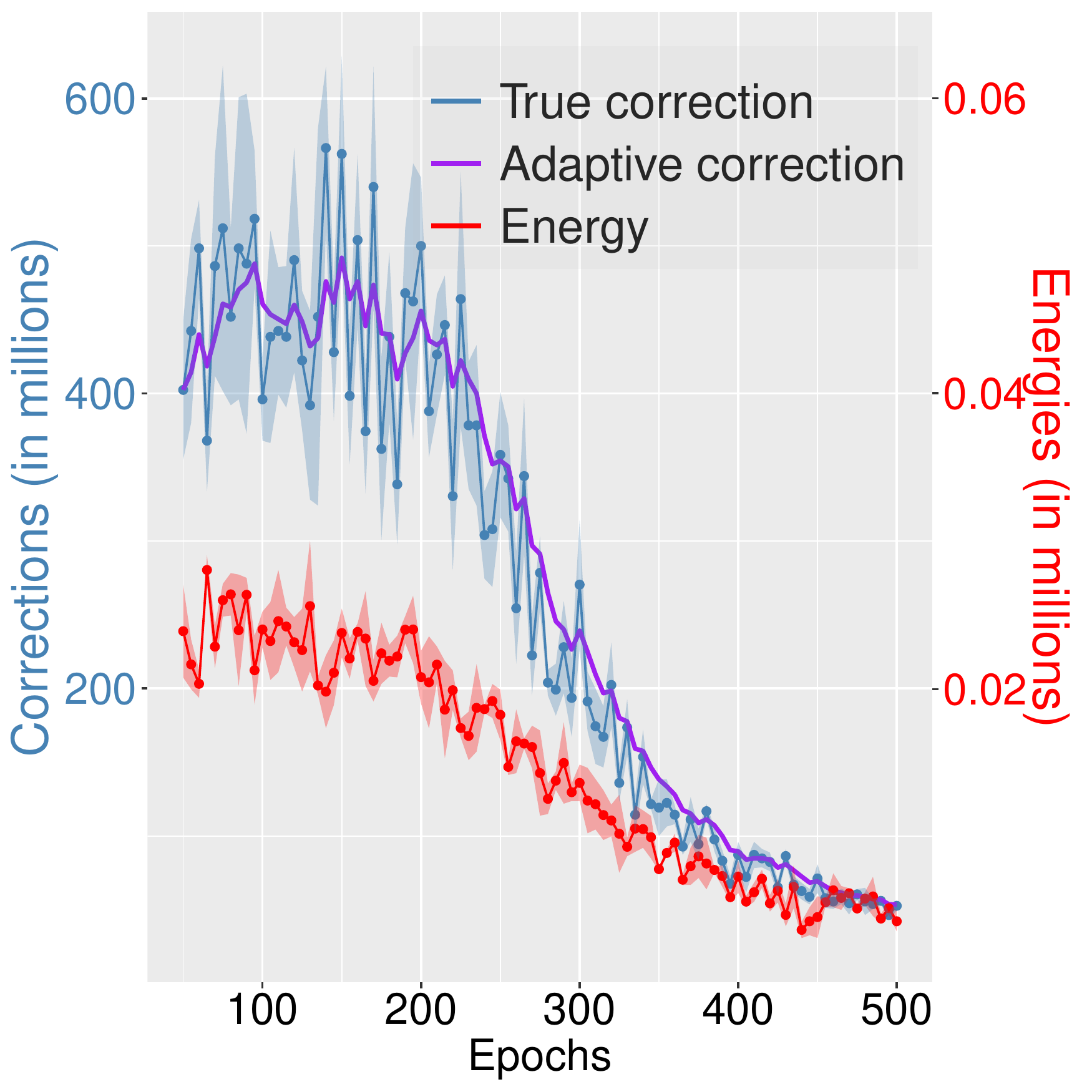}}\quad
  \subfloat[Corrections v.s. losses on CIFAR100]{\includegraphics[width=3.85cm, height=3.3cm]{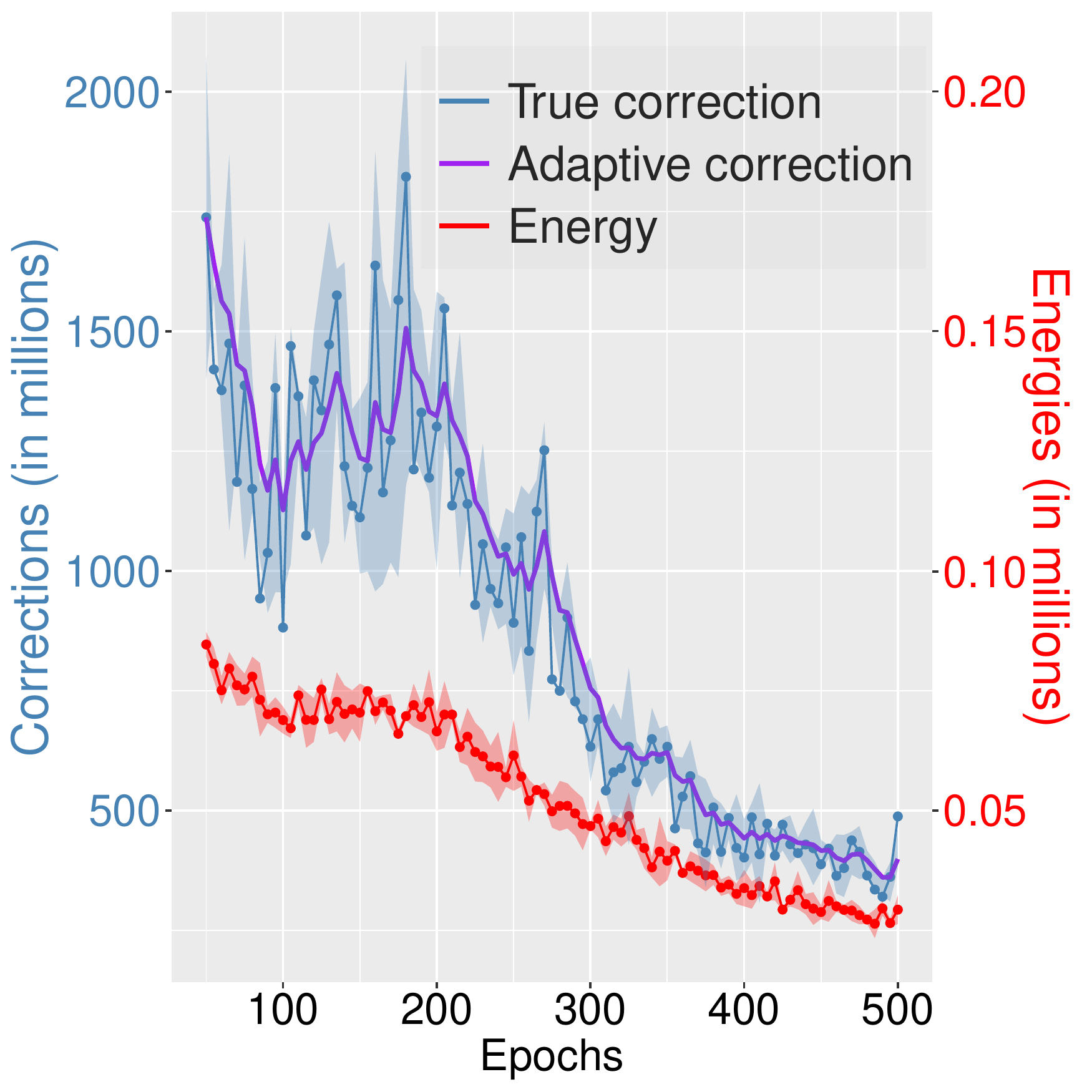}}\quad
  \subfloat[Effects of correction factors on CIFAR10]{\includegraphics[width=3.5cm, height=3.3cm]{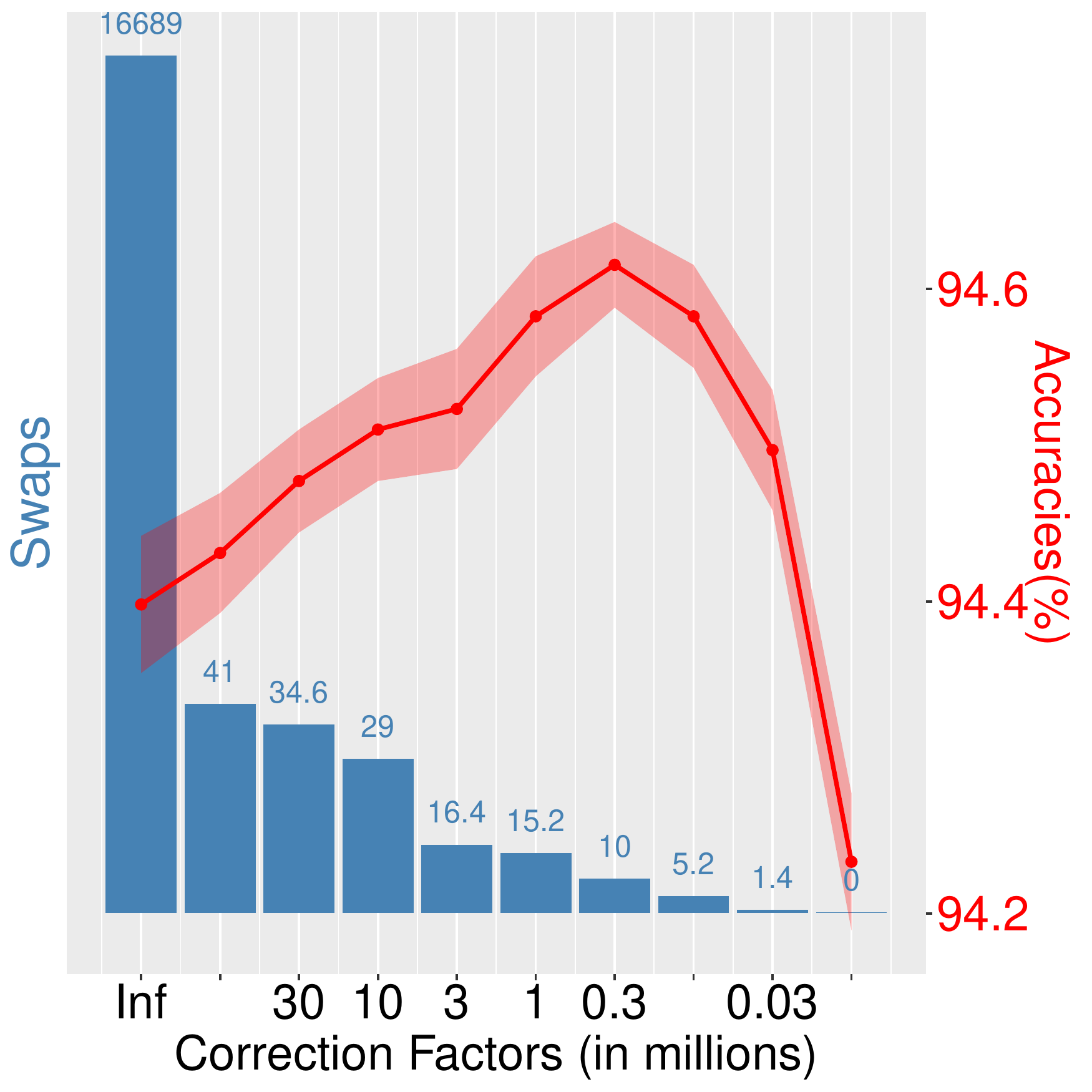}}\quad
  \subfloat[Effects of correction factors on CIFAR100]{\includegraphics[width=3.5cm, height=3.3cm]{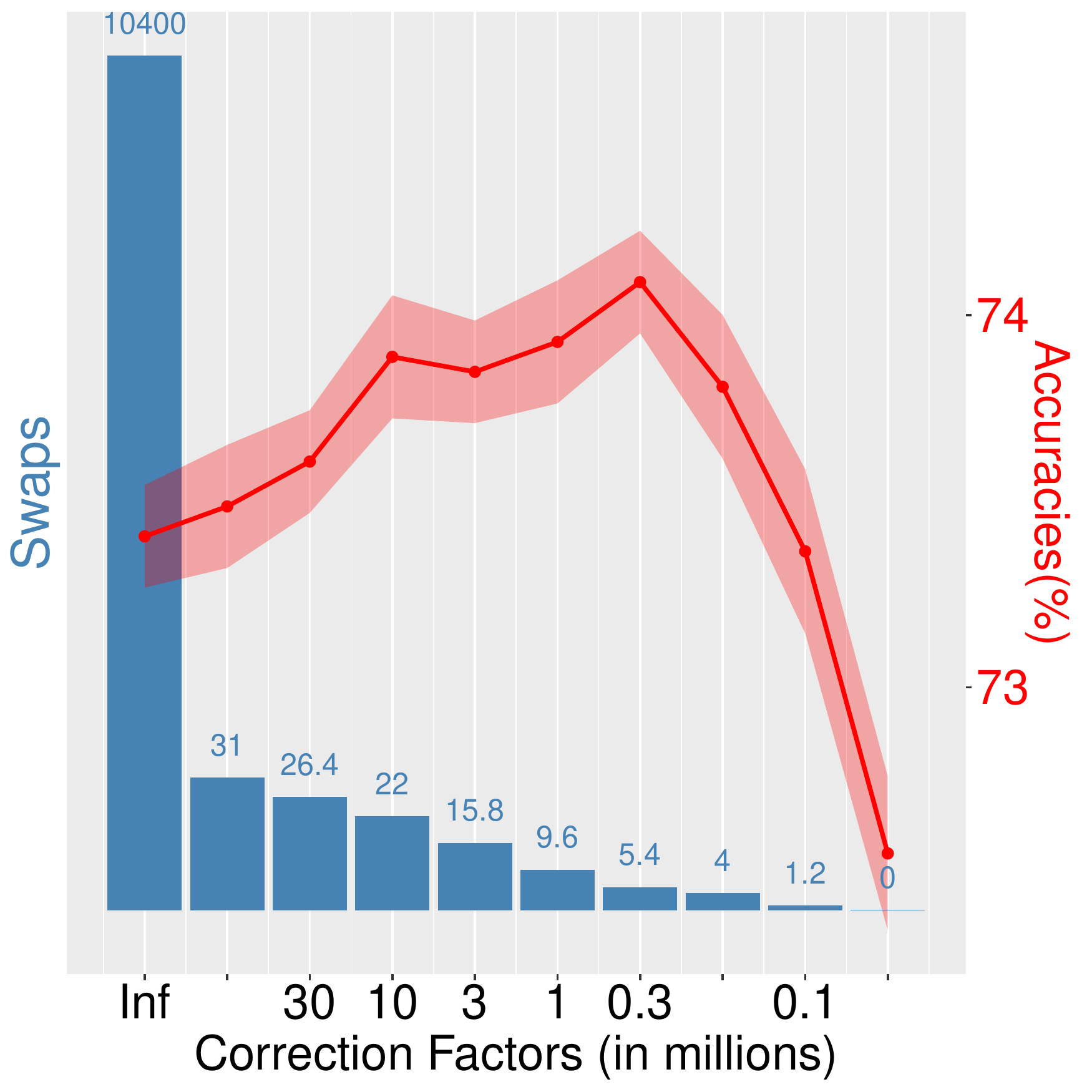}}
  \caption{Time-varying variances of the stochastic energy based on batch-size 256 on CIFAR10 \& CIFAR100 datasets.}
  \label{cifar_biases}
\end{figure*}

We first run several experiments to study the ideal corrections for the optimization of deep neural networks based on the fixed temperatures $\tau^{(1)}=0.01$ and $\tau^{(2)}=0.05$ \footnote[4]{Since data augmentation greatly increases the number of training points and leads to a much more concentrated posterior, the cold temperature effect is inevitable \cite{Florian2020, Aitchison2021}.}. We observe from Fig.(\ref{cifar_biases})(a, b) that the corrections are thousands of times larger than the energy losses, which implies that an exact correction leads to no swaps in practice and no acceleration can be achieved. The desire to obtain more acceleration effects drives us to manually shrink the corrections by increasing $F$ to increase the swapping rates, although we have to suffer from some model bias. 

\begin{table*}[ht]
\begin{sc}
\footnotesize
\caption{Prediction accuracies (\%) with different architectures on CIFAR10 and CIFAR100.}\label{nonconvex_funcs_icml20}
\vskip 0.15in
\begin{center} 
\begin{tabular}{c|ccc|ccc}
\hline
\multirow{2}{*}{Model} & \multicolumn{3}{c|}{CIFAR10} & \multicolumn{3}{c}{CIFAR100} \\
\cline{2-7}
 & \upshape{m}SGD & SGHMC & \lowercase{re}{SGHMC} & \upshape{m}SGD & SGHMC & reSGHMC  \\
\hline
\hline
ResNet-20 & 94.21$\pm$0.16 & 94.22$\pm$0.12 & \textbf{94.62$\pm$0.18} & 72.45$\pm$0.20 & 72.49$\pm$0.18 & \textbf{74.14$\pm$0.22}\\ 
ResNet-32 & 95.15$\pm$0.08 & 95.18$\pm$0.06 & \textbf{95.35$\pm$0.08} & 75.01$\pm$0.22 & 75.14$\pm$0.28 & \textbf{76.55$\pm$0.30} \\ 
ResNet-56 & 96.01$\pm$0.08 & 95.95$\pm$0.10 & \textbf{96.12$\pm$0.06} & 78.96$\pm$0.32 & 79.04$\pm$0.30 & \textbf{80.14$\pm$0.34}    \\
\hline
\hline
WRN-16-8 &  96.71$\pm$0.06 & 96.73$\pm$0.08 & \textbf{96.87$\pm$0.06} & 81.70$\pm$0.26 & 82.07$\pm$0.22 & \textbf{82.95$\pm$0.30}\\
WRN-28-10 & 97.33$\pm$0.08 & 97.32$\pm$0.06 & \textbf{97.42$\pm$0.06} & 83.79$\pm$0.18 & 83.76$\pm$0.14 & \textbf{84.38$\pm$0.18} \\
\hline
\end{tabular}
\end{center} 
\end{sc}
\vspace{-1em}
\end{table*}

We study the model performance by applying different correction factors $F$. We choose batch-size 256 and run the experiments within 500 epochs. We first tune the optimal hyperparameters for mSGD, SGHMC and the low-temperature chain of reSGHMC: we set the learning rate $\eta_k^{(1)}$ to 2e-6 in the first 200 epochs and decay it afterward by a factor of 0.984 every epoch; the low temperature follows an annealing schedule $\tau^{(1)}=\frac{0.01}{1.02^k}$ to accelerate the optimization; the weight decay is set to 25. Then, as to the high-temperature chain of reSGHMC, we use a larger learning rate $\eta^{(2)}_k=1.5 \eta^{(1)}_k$ and a higher temperature $\tau^{(2)}=5\tau^{(1)}$. We set $F$ as $F_0$ in the beginning and then adapt the value to counteract the annealing effect of the temperatures.
%Following the dynamic temperatures, we fix $\small{F_k={F_0 \alpha^{\mathbb{N}_k}}{1.02^k}}$, where $\mathbb{N}_k$ denotes the number of swaps in the first $k$ epochs and $\alpha=0.8$. %From the optimization perspective, a proper temperature annealing schedule in the replica exchange Monte Carlo \cite{PT_SA, OPT_SA} tends to provide more temperature ladders for reSGHMC and perform similar to multi-chain reSGHMC with fixed temperatures. 
The variance estimator is updated each epoch based on the variance of 10 samples of the stochastic energies and the smoothing factor is set to $\gamma=0.3$ in (\ref{smoothing}). Consequently, the computations only increase by less than 5\%. In addition, we use a thinning factor $200$ and report all the results based on Bayesian model averaging. We repeat every experiment five times to obtain the mean and 2 standard deviations.

We see from Fig.\ref{cifar_biases}(c,d) that both datasets rely on a very large initial correction factor $F_0$ to yield good performance and the optimal initial correction factor $\widehat F_0$ is achieved at 3e5 \footnote[4]{The adoption of data augmentation leads to a cold posterior \cite{Florian2020, Aitchison2021} and the accompanying correction factor $F$ becomes larger.}. Empirically, we notice that the first five swaps provide the \emph{largest marginal improvement} in acceleration. A larger $F_0$ than $\widehat F_0$ leads to a larger swapping rate with more swaps and thus a larger acceleration effect, however, the performance still decreases as we increase $F_0$, implying that the biases start to dominate the error and the diminishing marginal improvement on the acceleration effect is no longer significant. We note that there is only one extra hyper-parameter, namely, the correction factor $F$, required to tune, and it is independent of the standard SGHMC. This shows that the tuning cost is acceptable. 

\begin{table}[ht]
\begin{sc}
\vskip -0.1in
\caption{Prediction accuracies (\%) with different batch sizes on CIFAR10 \& CIFAR100 using ResNet-20.}
\label{batch_effects}
\vskip 0.15in
\begin{center} 
\begin{tabular}{c|ccc}
\hline
Batch & \upshape{m}SGD & SGHMC & reSGHMC  \\
\hline
\hline
 &\multicolumn{3}{c}{CIFAR10} \\
256 & 94.21$\pm$0.16 & 94.22$\pm$0.12& 94.62$\pm$0.18 \\ 
1024 & 94.49$\pm$0.12 & 94.57$\pm$0.14& \textbf{95.01$\pm$0.16} \\ 
\hline
\hline
 &\multicolumn{3}{c}{CIFAR100} \\
256 & 72.45$\pm$0.20 & 72.49$\pm$0.18 & 74.14$\pm$0.21 \\ 
1024 & 73.31$\pm$0.18 & 73.23$\pm$0.20& \textbf{75.11$\pm$0.26} \\
\hline
\end{tabular}
\end{center} 
\end{sc}
\vspace{-1em}
\end{table}

To obtain a comprehensive evaluation of reSGHMC, we use the optimal correction factor for reSGHMC and test it on ResNet20, 32, 56, WRN-16-8 and WRN-28-10. From Table.\ref{nonconvex_funcs_icml20}, we see that reSGHMC consistently outperforms SGHMC and mSGD on both datasets, showing the robustness of reSGHMC to various model architectures. For CIFAR10, our method works better with ResNet-20 and ResNet-32 and improves the prediction accuracy by 0.4\% and 0.2\%, respectively. Regarding the other model architectures, it still slightly outperforms the baselines by roughly 0.1\%-0.2\%, although this dataset is highly optimized. Specifically, reSGHMC achieves the state-of-the-art 97.42\% accuracy with WRN-28-10 model. For CIFAR100, reSGHMC works particularly well based on various model architectures. It outperforms the baseline by as high as 1.5\% using ResNet-20 and ResNet-32, and around 1\% based on the other architectures. It also achieves the state-of-the-art 84.38\% based on WRN-28-10 on CIFAR100. 

We also conduct the experiments using larger batch sizes with ResNet-20 and report the results in Table.\ref{batch_effects}. We run the same iterations and keep the other setups the same. We find that a larger batch size significantly boosts the performance of reSGHMC by as much as 0.4\% accuracies on CIFAR10 and 1\% on CIFAR100, which shows the \emph{potential of using a large batch size} in the future.

\section{Conclusion and Future Work}

We propose the adaptive replica exchange SGMCMC algorithm and prove the accelerated convergence in terms of 2-Wasserstein distance. The theory implies an accuracy-acceleration trade-off and guides us to tune the correction factor $F$ to obtain the optimal performance. We support our theory with extensive experiments and obtain significant improvements over the vanilla SGMCMC algorithms on CIFAR10, CIFAR100.

For future works, it is promising to conduct variance reduction \cite{Xu18} of the stochastic noise to obtain a larger acceleration effect. From the computational perspective, it is also interesting to study parallel multi-chain reSGMCMC in larger machine learning tasks. 

\chapter{ACCELERATING CONVERGENCE OF REPLICA EXCHANGE STOCHASTIC GRADIENT MCMC VIA VARIANCE REDUCTION}

\label{vr_resgld_iclr}
\section{Introduction}

reSGLD \cite{deng2020} uses multiple processes based on stochastic gradient Langevin dynamics (SGLD) where interactions between different SGLD chains are conducted in a manner that encourages large jumps. In addition to the ideal utilization of parallel computation, the resulting process is able to jump to more informative modes for more robust uncertainty quantification. However, the noisy energy estimators in mini-batch settings lead to a large bias in the na\"{i}ve swaps, and a large correction is required to reduce the bias, which yields few effective swaps and insignificant accelerations. Therefore, how to reduce the variance of noisy energy estimators becomes essential in speeding up the convergence.

A long standing technique for variance reduction is the control variates method. The key to reducing the variance is to properly design correlated control variates so as to counteract some noise. Towards this direction, \cite{Dubey16, Xu18} proposed to update the control variate periodically for the stochastic gradient estimators and \cite{baker17} studied the construction of control variates using local modes. Despite the advantages in near-convex problems, a natural discrepancy between theory \cite{Niladri18, Xu18, SVRG_HMC_Zou} and practice \cite{kaiming15, bert} is \emph{whether we should avoid the gradient noise in non-convex problems}. To fill in the gap, we only focus on the variance reduction of noisy energy estimators to exploit the theoretical accelerations but no longer consider the variance reduction of the noisy gradients so that the empirical experience from stochastic gradient descents with momentum (\upshape{m}SGD) can be naturally imported.

In this chapter we propose the variance-reduced replica exchange stochastic gradient Langevin dynamics (VR-reSGLD) algorithm to accelerate convergence by reducing the variance of the noisy energy estimators. This algorithm not only \emph{shows the potential of exponential acceleration} via much more effective swaps in the non-asymptotic analysis but also \emph{demonstrates remarkable performance in practical tasks} where a limited time is required; while others \cite{Xu18, SAGA_LD_Zou} may only work well when the dynamics is sufficiently mixed and the discretization error becomes a major component. Empirically, we test the algorithm through extensive experiments and achieve state-of-the-art performance in both optimization and uncertainty estimates.

\begin{figure*}[!ht]
  \centering
  \vskip -0.1in
  \subfloat[Gibbs measures at three temperatures $\tau$.]{\includegraphics[scale=0.22]{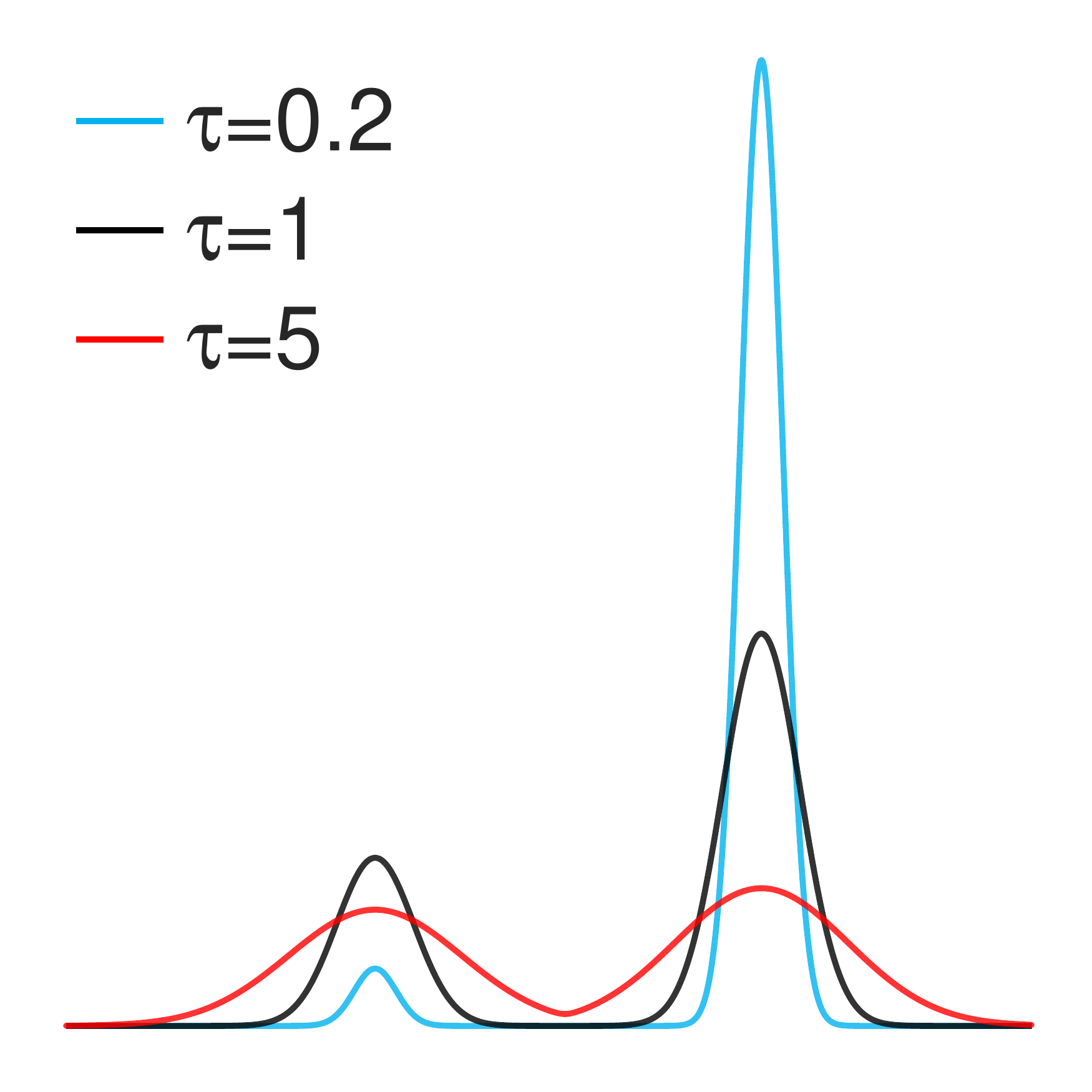}}\quad\quad
  \hspace{0.3cm}
  \subfloat[Sample trajectories on a energy landscape.]{\includegraphics[scale=0.16]{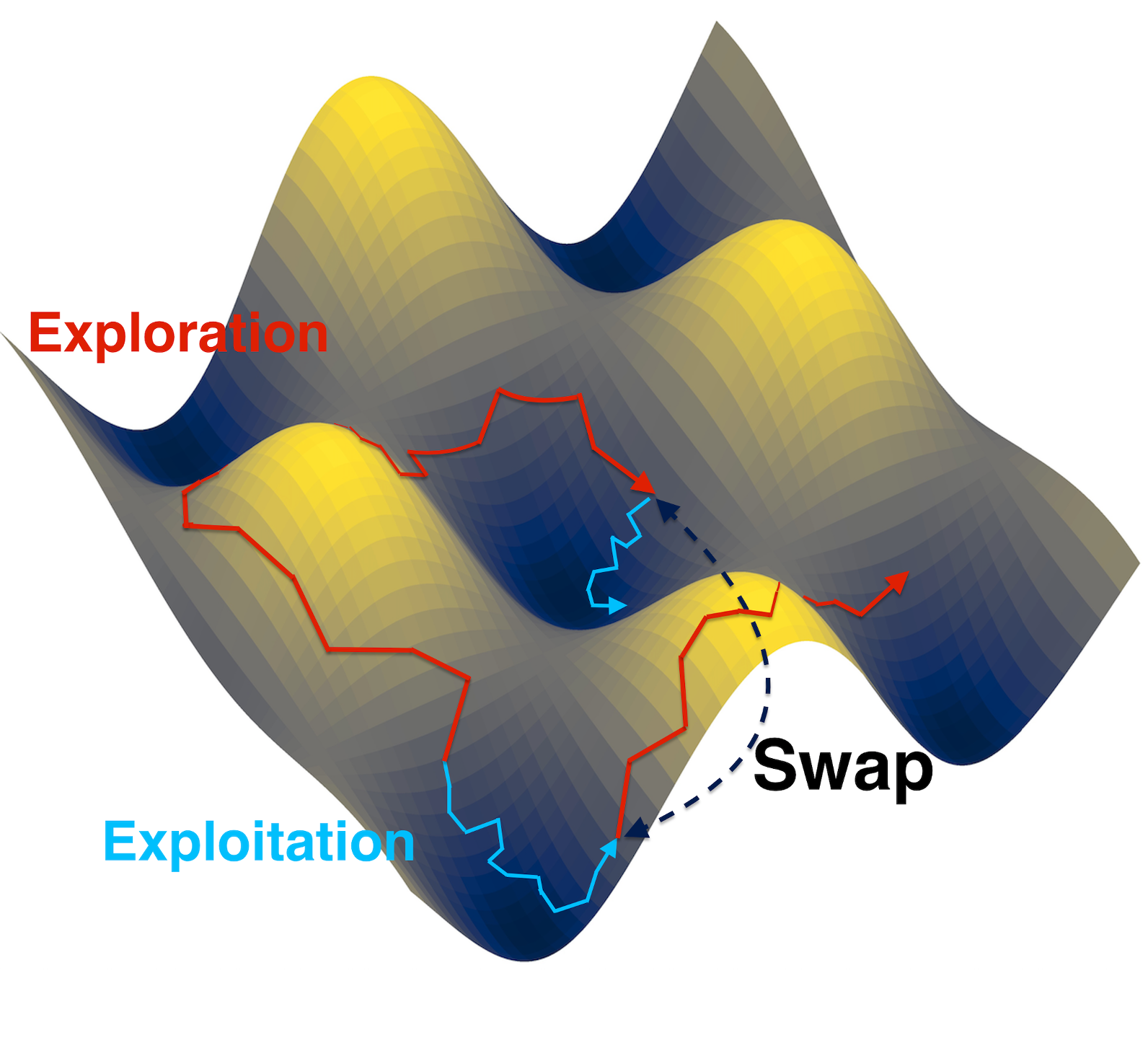}}\quad\quad
  \subfloat[Faster exponential convergence in $W_2$]{\includegraphics[scale=0.22]{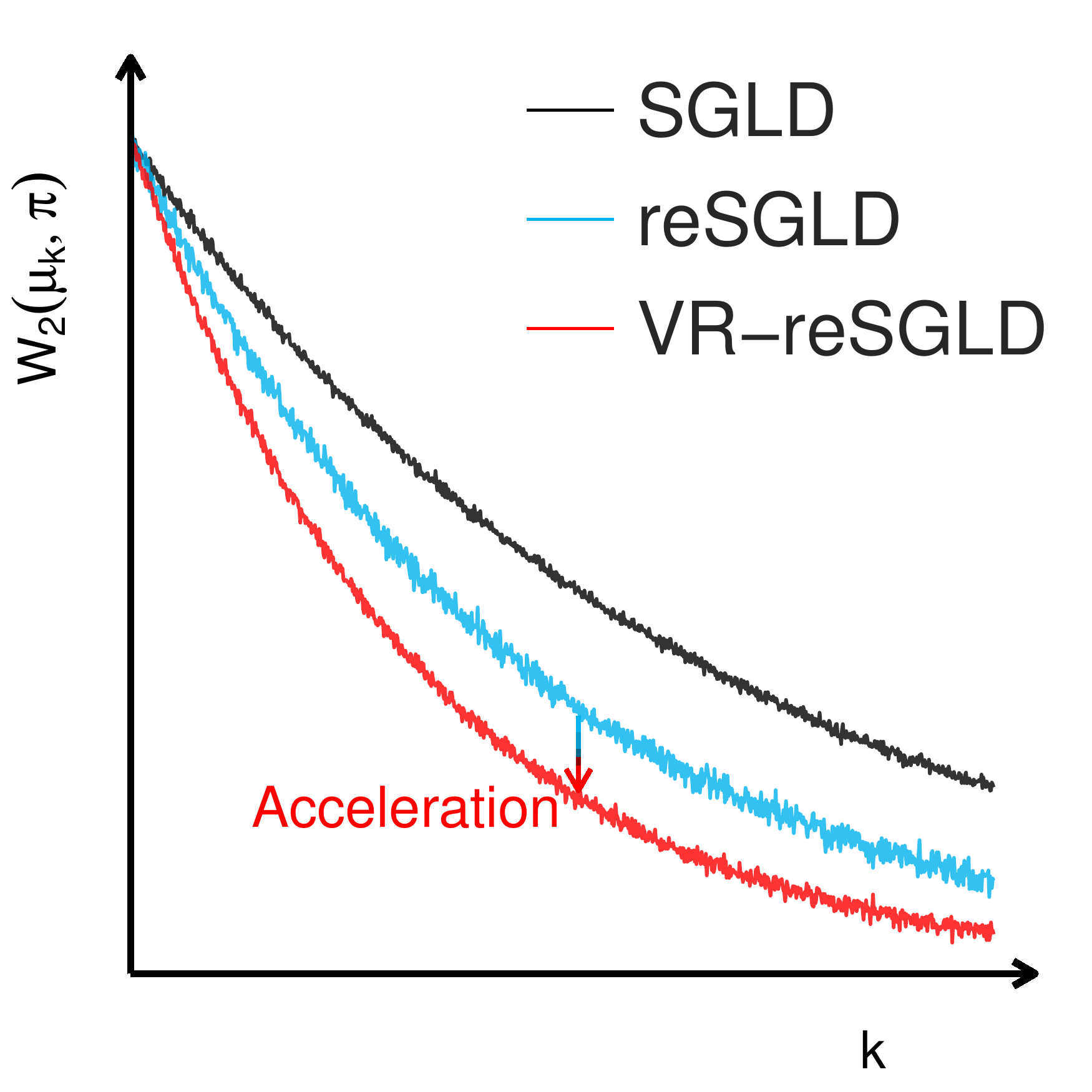}}
  \vspace{-0.5em}
  \caption{An illustration of replica exchange Monte Carlo algorithms for non-convex learning.}
  \label{demo_reld}
  \vspace{-1em}
\end{figure*}

\section{Preliminaries}
\label{reSGLD_section}

A common problem, in Bayesian inference,  is the simulation from a posterior $\mathrm{P}(\bbeta\mid\bm{X})\propto \mathrm{P}(\bbeta)\prod_{i=1}^N \mathrm{P}(\bx_i\mid\bbeta)$, where $\mathrm{P}(\bbeta)$ is a proper prior, $\prod_{i=1}^N \mathrm{P}(\bx_i\mid\bbeta)$ is the likelihood function and $N$ is the number of data points. When $N$ is large, the standard Langevin dynamics is too costly in evaluating the gradients. To tackle this issue, stochastic gradient Langevin dynamics (SGLD) \cite{Welling11} was proposed to make the algorithm scalable by approximating the gradient through a mini-batch data $B$ of size $n$ such that
\begin{equation}
     \bbeta_{k}=  \bbeta_{k-1}- \eta_k \frac{N}{n}\sum_{i\in B_{k}}\nabla  L( \bx_i\mid\bbeta_{k-1})+\sqrt{2\eta_k\tau} \bxi_{k},
\end{equation}
where $\bbeta_{k}\in\mathbb{R}^d$, $\tau$ denotes the temperature, $\eta_k$ is the learning rate at iteration $k$, $\bxi_{k}$ is a standard Gaussian vector, and $L(\cdot):=-\log \mathrm{P}(\bbeta\mid\bm{X})$ is the energy function. SGLD is known to converge weakly to a stationary Gibbs measure $\pi_{\tau}(\bbeta)\propto \exp\left(-L(\bbeta)/\tau\right)$ as $\eta_k$ decays to $0$ \cite{Teh16}.

The temperature $\tau$ is the key to accelerating the computations in multi-modal distributions. On the one hand, a high temperature flattens the Gibbs distribution $\exp\left(-{L(\bbeta)}/{\tau}\right)$ (see the red curve in Fig.\ref{demo_reld}(a)) and accelerates mixing by facilitating exploration of the whole domain, but the resulting distribution becomes much less concentrated around the global optima. On the other hand, a low temperature exploits the local region rapidly; however, it may cause the particles to stick in a local region for an exponentially long time, as shown in the blue curve in Fig.\ref{demo_reld}(a,b). To  bridge the gap between global exploration and local exploitation, \cite{deng2020} proposed the replica exchange SGLD algorithm (reSGLD), which consists of a low-temperature SGLD to encourage exploitation and a high-temperature SGLD to support exploration 
\begin{equation*}
\begin{split}
    \bbeta^{(1)}_{k} &=  \bbeta^{(1)}_{k-1}- \eta_k \frac{N}{n}\sum_{i\in B_{k}}\nabla  L( \bx_i\mid\bbeta^{(1)}_{k-1})+\sqrt{2\eta_k\tau^{(1)}} \bxi_{k}^{(1)} \\
    \bbeta^{(2)}_{k} &=  \bbeta^{(2)}_{k-1}- \eta_k \frac{N}{n}\sum_{i\in B_{k}}\nabla  L( \bx_i\mid\bbeta^{(2)}_{k-1})+\sqrt{2\eta_k\tau^{(2)}} \bxi_{k}^{(2)},
\end{split}
\end{equation*}
where the invariant measure is known to be $\pi(\bbeta^{(1)}, \bbeta^{(2)})\propto\exp\left(-\frac{L(\bbeta^{(1)})}{\tau^{(1)}}-\frac{L(\bbeta^{(2)})}{\tau^{(2)}}\right)$ as $\eta_k\rightarrow 0$ and $\tau^{(1)}< \tau^{(2)}$. Moreover, the two processes may swap the positions to allow tunneling between different modes. To avoid inducing a large bias in mini-batch settings, a corrected swapping rate $\widehat S$ is developed such that
\begin{equation*}
\label{vanilla_S}
\small
\widehat S=\exp\Big\{ \left(\frac{1}{\tau^{(1)}}-\frac{1}{\tau^{(2)}}\right)\Big(  \frac{N}{n}\sum_{i\in B_k} L(\bx_i\mid\bbeta^{(1)}_k)-  \frac{N}{n}\sum_{i\in B_k} L(\bx_i\mid\bbeta^{(2)}_k)-\frac{\left(\frac{1}{\tau^{(1)}}-\frac{1}{\tau^{(2)}}\right)\widehat \sigma^2}{F}\Big)\Big\},
\end{equation*}
where $2\widehat \sigma^2$  is an estimator of the variance of $\frac{N}{n}\sum_{i\in B_k} L(\bx_i\mid\bbeta^{(1)}_k)-  \frac{N}{n}\sum_{i\in B_k} L(\bx_i\mid\bbeta^{(2)}_k)$ and $F$ is the correction factor to balance between acceleration and bias. In other words, the parameters switch the positions from $(\bbeta_k^{(1)}, \bbeta_k^{(2)})$ to $(\bbeta_k^{(2)}, \bbeta_k^{(1)})$ with a probability $a(1\wedge \widehat S)\eta_k$, where the constant $a$ is the swapping intensity and can set to $\frac{1}{\eta_k}$ for simplicity.

From a probabilistic point of view, reSGLD is a discretization scheme of replica exchange Langevin diffusion (reLD) in mini-batch settings. Given a smooth test function $f$ and a (truncated) swapping-rate function $S$, the infinitesimal generator $\cL_{S}$ associated with the continuous-time reLD follows
\begin{equation*}
% \small
\begin{split}
    &\cL_{S}f(\bbeta^{(1)}, \bbeta^{(2)})=-\langle\nabla_{\bbeta^{(1)}}f(\bbeta^{(1)},\bbeta^{(2)}),\nabla L(\bbeta^{(1)})\rangle-\langle \nabla_{\bbeta^{(2)}}f(\bbeta^{(1)},\bbeta^{(2)}),\nabla L(\bbeta^{(2)})\rangle\\
    &\ \ \ +\tau^{(1)}\Delta_{\bbeta^{(1)}}f(\bbeta^{(1)},\bbeta^{(2)})+\tau^{(2)}\Delta_{\bbeta^{(2)}}f(\bbeta^{(1)},\bbeta^{(2)})+ aS(\bbeta^{(1)},\bbeta^{(2)})\cd (f(\bbeta^{(2)},\bbeta^{(1)})-f(\bbeta^{(1)},\bbeta^{(2)})),
\end{split}
\end{equation*}
where the last term arises from swaps and $\Delta_{\bbeta^{(\cdot)}}$ is the the Laplace operator with respect to $\bbeta^{(\cdot)}$. Note that the infinitesimal generator is closely related to Dirichlet forms in characterizing the evolution of a stochastic process. By standard calculations in Markov semigroups \cite{chen2018accelerating}, the Dirichlet form $\cE_{S}$ associated with the infinitesimal generator $\cL_{S}$ follows
\begin{equation}
\label{dirichlet_forms_main}
% \small
\begin{split}
    \cE_{S}(f)=&\underbrace{\int \Big(\tau^{(1)}\|\nabla_{\bbeta^{(1)}}f(\bbeta^{(1)}, \bbeta^{(2)})\|^2+\tau^{(2)}\|\nabla_{\bbeta^{(2)}}f(\bbeta^{(1)}, \bbeta^{(2)})\|^2 \Big)d\pi(\bbeta^{(1)},\bbeta^{(2)})}_{\text{vanilla term } \cE(f)}\\
    &\ +\underbrace{\frac{a}{2}\int S(\bbeta^{(1)},\bbeta^{(2)})\cd (f(\bbeta^{(2)},\bbeta^{(1)})-f(\bbeta^{(1)},\bbeta^{(2)}))^2d\pi(\bbeta^{(1)},\bbeta^{(2)})}_{\text{acceleration term}},
\end{split}
\end{equation}
which leads to a strictly positive acceleration under mild conditions and is crucial for the exponentially accelerated convergence in the $W_2$ distance (see Fig.\ref{demo_reld}(c)). However, the acceleration depends on the swapping-rate function $S$ and becomes much smaller given a noisy estimate of $\frac{N}{n}\sum_{i\in B} L(\bx_i\mid\bbeta)$ due to the demand of large corrections to reduce the bias.

\section{Variance Reduction in Replica Exchange SGLD}
\label{vrresgld}

The desire to obtain more effective swaps and larger accelerations drives us to design more efficient energy estimators. A na\"{i}ve idea would be to apply a large batch size $n$, which reduces the variance of the noisy energy estimator proportionally. However, this comes with a significantly increased memory overhead and computations and therefore is inappropriate for big data problems.

A natural idea to propose more effective swaps is to reduce the variance of the noisy energy estimator $L(B\mid\bbeta^{(h)})=\frac{N}{n}\sum_{i\in B}L(\bx_i\mid\bbeta^{(h)})$ for $h\in\{1,2\}$. Considering an unbiased estimator $L(B\mid\widehat\bbeta^{(h)})$ for $\sum_{i=1}^N L(\bx_i\mid\widehat\bbeta^{(h)})$ and a constant $c$, we see that a new estimator $\widetilde L(B\mid \bbeta^{(h)})$, which follows
\begin{equation}
    \widetilde L(B\mid\bbeta^{(h)})= L(B\mid\bbeta^{(h)}) +c\left( L(B\mid\widehat\bbeta^{(h)}) -\sum_{i=1}^N L (\bx_i\mid \widehat \bbeta^{(h)})\right),
\end{equation}
is still the unbiased estimator for $\sum_{i=1}^N L(\bx_i\mid \bbeta^{(h)})$. By decomposing the variance, we have
\begin{equation*}
\small
\begin{split}
    % \footnotesize
    &\Var(\widetilde L(B\mid\bbeta^{(h)}))=\Var\left( L(B\mid\bbeta^{(h)})\right)+c^2 \Var\left( L(B\mid\widehat\bbeta^{(h)})\right)+2c\text{Cov}\left( L(B\mid\bbeta^{(h)}),  L(B\mid\widehat\bbeta^{(h)})\right).
\end{split}
\end{equation*}
In such a case, $\Var(\widetilde L(B\mid\bbeta^{(h)}))$ achieves the minimum variance $(1-\rho^2)\Var(L(B\mid\bbeta^{(h)}))$ given $c^{\star}:=-\frac{\text{Cov}( L(B\mid\bbeta^{(h)}),  L(B\mid\widehat\bbeta^{(h)}))}{ \Var(L(B\mid\widehat \bbeta^{(h)}))}$, where $\text{Cov}(\cdot, \cdot)$ denotes the covariance and $\rho$ is the correlation coefficient of $ L(B\mid\bbeta^{(h)})$ and $ L(B\mid\widehat\bbeta^{(h)})$. To propose a correlated control variate, we follow \cite{SVRG} and update $\widehat \bbeta^{(h)}=\bbeta^{(h)}_{m\lfloor \frac{k}{m}\rfloor}$ every $m$ iterations. Moreover, the optimal $c^{\star}$ is often unknown in practice. To handle this issue, a well-known solution \cite{SVRG} is to fix $c=-1$ given a high correlation $\mid\rho\mid$ of the estimators and then we can present the VR-reSGLD algorithm in Algorithm \ref{alg_iclr21}. Since the exact variance for correcting the stochastic swapping rate is unknown and even time-varying, we follow \cite{deng2020} and propose to use stochastic approximation \cite{RobbinsM1951} to adaptively update the unknown variance.

\begin{algorithm}[tb]
  \caption{Variance-reduced replica exchange stochastic gradient Langevin dynamics (VR-reSGLD). The learning rate and temperature can be set to dynamic to speed up the computations. A larger smoothing factor $\gamma$ captures the trend better but becomes less robust. }
  \label{alg_iclr21}
\begin{algorithmic}
\STATE{\textbf{Input } The initial parameters $\bbeta_0^{(1)}$ and $\bbeta_0^{(2)}$, learning rate $\eta$, temperatures $\tau^{(1)}$ and $\tau^{(2)}$, correction factor $F$ and smoothing factor $\gamma$. }
\REPEAT
  \STATE{\textbf{Parallel sampling} \text{Randomly pick a mini-batch set $B_{k}$ of size $n$.}}
  \begin{equation}
      \bbeta^{(h)}_{k} =  \bbeta^{(h)}_{k-1}- \eta \frac{N}{n}\sum_{i\in B_{k}}\nabla  L( \bx_i\mid\bbeta^{(h)}_{k-1})+\sqrt{2\eta\tau^{(h)}} \bxi_{k}^{(h)}, \text{ for } h\in\{1,2\}.
  \end{equation}
%   \vskip -0.5in
  \STATE{\textbf{Variance-reduced energy estimators} \small{Update $\tiny{\widehat L^{(h)}=\sum_{i=1}^N L\left(\bx_i\mid \bbeta^{(h)}_{m\lfloor \frac{k}{m}\rfloor}\right)}$ every $m$ iterations.}}
  \begin{equation}
  \label{__vr_loss}
      \widetilde L(B_{k}\mid\bbeta_{k}^{(h)})=\frac{N}{n}\sum_{i\in B_{k}}\left[ L(\bx_i\mid \bbeta_{k}^{(h)}) - L\left(\bx_i\mid \bbeta^{(h)}_{m\lfloor \frac{k}{m}\rfloor}\right) \right]+\widehat L^{(h)} , \text{ for } h\in\{1,2\}. 
  \end{equation}
  \IF{$k\ \text{mod}\ m=0$} 
  \STATE{{Update $\widetilde \sigma^2_{k} = (1-\gamma)\widetilde \sigma^2_{k-m}+\gamma \sigma^2_{k}$, }
  \STATE{where $\sigma_k^2$ is an estimate for $\Var\left(\widetilde L( B_k\mid\bbeta_{k}^{(1)})-\widetilde L( B_{k}\mid\bbeta_{k}^{(2)})\right)$.}}
  \ENDIF

  \STATE{\textbf{Bias-reduced swaps} \small{Swap $ \bbeta_{k+1}^{(1)}$ and $ \bbeta_{k+1}^{(2)}$ if $u<\widetilde S_{\eta,m,n}$, where $u\sim \text{Unif }[0,1]$, and $\widetilde S_{\eta,m,n}$ follows}}
  \begin{equation}
  \label{stochastic_jump_rate}
  \small
      \textstyle \widetilde S_{\eta,m,n}=\exp\left\{ \left(\frac{1}{\tau^{(1)}}-\frac{1}{\tau^{(2)}}\right)\left(  \widetilde L( B_{k+1}\mid\bbeta_{k+1}^{(1)})-  \widetilde L( B_{k+1}\mid\bbeta_{k+1}^{(2)})-\frac{1}{F}\left(\frac{1}{\tau^{(1)}}-\frac{1}{\tau^{(2)}}\right)\widetilde \sigma^2_{m\lfloor \frac{k}{m}\rfloor}\right)\right\}.
  \end{equation}
  \UNTIL{$k=k_{\max}$.}
\STATE{\textbf{Output:}  The low-temperature process $\{\bbeta_{i\mathbb{T}}^{(1)}\}_{i=1}^{\lfloor k_{\max}/\mathbb{T}\rfloor}$, where $\mathbb{T}$ is the thinning factor.}
\end{algorithmic}
% \vspace{-0.3em}
\end{algorithm}

\subsection{Variants of VR-reSGLD} 

The number of iterations $m$ to update the control variate $\widehat \bbeta^{(h)}$ gives rise to a trade-off in computations and variance reduction. A small $m$ introduces a highly correlated control variate at the cost of expensive computations; a large $m$, however, may yield a less correlated control variate and setting $c=-1$ fails to reduce the variance. In spirit of the adaptive variance in \cite{deng2020} to estimate the unknown variance, we explore the idea of the adaptive coefficient $\widetilde {c}_k=(1-\gamma_k)\widetilde {c}_{k-m}+\gamma_k c_k$ such that the unknown optimal $c^{\star}$ is well approximated. We present the adaptive VR-reSGLD in Algorithm \ref{adaptive_alg} in Appendix \ref{adaptive_c} and show empirically later that the adaptive VR-reSGLD leads to a significant improvement over VR-reSGLD for the less correlated estimators.

A parallel line of research is to exploit the SAGA algorithm \cite{SAGA} in the study of variance reduction. Despite the most effective performance in variance reduction \cite{Niladri18}, the SAGA type of sampling algorithms require an excessively memory storage of $\mathcal{O}(Nd)$, which is too costly for big data problems. Therefore, we leave the study of the lightweight SAGA algorithm inspired by \cite{pracSVRG, Dongruo} for future works.

\subsection{Related Work} 

Although our VR-reSGLD is, in spirit, similar to VR-SGLD \cite{Dubey16, Xu18}, it differs from VR-SGLD in two aspects: First, VR-SGLD conducts variance reduction on the gradient and only shows promises in the nearly log-concave distributions or when the Markov process is sufficiently converged; however, our VR-reSGLD solely focuses on the variance reduction of the energy estimator to propose more effective swaps, and therefore we can import the empirical experience in hyper-parameter tuning from \upshape{m}SGD to our proposed algorithm. Second, VR-SGLD doesn't accelerate the continuous-time Markov process but only focuses on reducing the discretization error; VR-reSGLD possesses a larger acceleration term in the Dirichlet form (\ref{dirichlet_forms_main}) and shows a potential in exponentially speeding up the convergence of the continuous-time process in the early stage, in addition to the improvement on the discretization error. In other words, our algorithm is not only theoretically sound but also more empirically appealing for a wide variety of problems in non-convex learning.

\section{Theoretical Properties}

% Unlike the existing variance reduction methods on stochastic gradient MCMC, which hinders the performance polynomially \cite{Xu18} (only performs well given a high accuracy target). 
The large variance of noisy energy estimators directly limits the potential of the acceleration and significantly slows down the convergence compared to the replica exchange Langevin dynamics. As a result, VR-reSGLD may lead to a more efficient energy estimator with a much smaller variance.

\begin{lemma}[Variance-reduced energy estimator. Informal version of Lemma \ref{vr-estimator} in the appendix]
\label{vr-estimator_main}
Under the smoothness
and dissipativity assumptions \ref{assump: lip and alpha beta_ICLR21} and \ref{assump: dissipitive} in Appendix \ref{prelim}, the variance of the variance-reduced energy estimator $\widetilde L(B\mid\bbeta^{(h)})$, where $h\in\{1,2\}$, is upper bounded by
\begin{equation*}
    \Var\left(\widetilde L(B\mid\bbeta^{(h)})\right)\leq \min\Big\{ \mathcal{O}\left(\frac{m^2 \eta}{n}\right), \Var\Big(\frac{N}{n}\sum_{i\in B} L(\bx_i\mid \bbeta^{(h)})\Big)+\Var\Big(\frac{N}{n}\sum_{i\in B} L(\bx_i\mid \widehat\bbeta^{(h)})\Big)\Big\},
\end{equation*}
where the detailed $\mathcal{O}(\cdot)$ constants is shown in Lemma \ref{vr-estimator} in the appendix.
\end{lemma}

The analysis shows the variance-reduced estimator $\widetilde L(B\mid\bbeta^{(h)})$ yields a much-reduced variance given a smaller learning rate $\eta$ and a smaller $m$ for updating control variates based on the batch size $n$. Although the truncated swapping rate $S_{\eta, m, n}=\min\{1, \widetilde S_{\eta, m, n}\}$ satisfies the ``stochastic'' detailed balance given an unbiased swapping-rate estimator $\widetilde S_{\eta, m, n}$ \cite{deng2020} \footnote[2]{\cite{Andrieu09, Matias19} achieve a similar result based on the unbiased likelihood estimator for the Metropolis-hasting algorithm. See section 3.1 \cite{Matias19} for details.}, it doesn't mean the efficiency of the swaps is not affected. By contrast, we can show that the number of swaps may become \emph{exponentially smaller on average}.

\begin{lemma}[Variance reduction for larger swapping rates. Informal version of Lemma \ref{exp_S} in the appendix] \label{exp_S_main_body} Given a large enough batch size $n$, the variance-reduced energy estimator $\widetilde L(B_{k}\mid\bbeta_{k}^{(h)})$ yields a truncated swapping rate that satisfies
\begin{equation}
\label{larger_SSS}
     \E[S_{\eta, m, n}]\approx \min\Big\{1, S(\bbeta^{(1)}, \bbeta^{(2)})\Big(\mathcal{O}\Big(\frac{1}{n^2}\Big)+e^{-\mathcal{O}\left(\frac{m^2\eta}{n}+\frac{1}{n^2}\right)}\Big)\Big\},
\end{equation}
\end{lemma}
where $S(\bbeta^{(1)}, \bbeta^{(2)})$ is the deterministic swapping rate defined in Appendix \ref{exp_acc}. The proof is shown in Lemma.\ref{exp_S} in Appendix \ref{exp_acc}. Note that the above lemma doesn't require the normality assumption. As $n$ goes to infinity, where the asymptotic normality holds, the RHS of (\ref{larger_SSS}) changes to $\min\Big\{1, S(\bbeta^{(1)}, \bbeta^{(2)})e^{-\mathcal{O}\left(\frac{m^2\eta}{n}\right)}\Big\}$, which becomes exponentially larger as we use a smaller update frequency $m$ and learning rate $\eta$. Since the continuous-time reLD induces a jump operator in the infinitesimal generator, the resulting Dirichlet form potentially leads to a 
much larger acceleration term which linearly depends on the swapping rate $S_{\eta, m, n}$ and yields a faster exponential convergence. Now we are ready to present the first main result.

\begin{theorem}[Exponential convergence. Informal version of Theorem \ref{exponential decay_iclr} in the appendix]\label{exponential decay_main_body_iclr}
Under the smoothness
and dissipativity assumptions \ref{assump: lip and alpha beta_ICLR21} and \ref{assump: dissipitive}, the probability measure associated with reLD at time $t$, denoted as $\nu_t$, converges exponentially fast to the invariant measure $\pi$:
\begin{equation}
    W_2(\nu_t,\pi) \leq  D_0 \exp\left\{-t\left(1+\delta_{ S_{\eta, m, n}}\right)/c_{\text{LS}}\right\},
\end{equation}
where $D_0$ is a constant depending on the initialization, $\delta_{ S_{\eta, m, n}}:=\inf_{t>0}\frac{\cE_{ S_{\eta, m, n}}(\sqrt{\frac{d\n_t}{d\pi}})}{\cE(\sqrt{\frac{d\n_t}{d\pi}})}-1\geq 0$ depends on $S_{\eta, m, n}$, $\cE_{ S_{\eta, m, n}}$ and $\cE$ are the Dirichlet forms based on the swapping rate $S_{\eta, m, n}$ and are defined in Eq.(\ref{dirichlet_forms_main}), $c_{\text{LS}}$ is the constant of the log-Sobolev inequality for reLD without swaps.
\end{theorem}{}

Note that $S_{\eta,m,n}=0$ leads to the same performance as the standard Langevin diffusion and $\delta_{ S_{\eta, m, n}}$ is strictly positive when $\frac{d\n_t}{d\pi}$ is asymmetric \cite{chen2018accelerating}; given a smaller $\eta$ and $m$ or a large $n$, the variance becomes much reduced according to Lemma \ref{vr-estimator_main}, yielding a much larger truncated swapping rate by Lemma \ref{exp_S_main_body} and a faster exponential convergence to the invariant measure $\pi$ compared to reSGLD.

\section{Experiments}

\subsection{Simulations of Gaussian Mixture Distributions \label{subsec:Toy-multi-mo}}

We first study the proposed variance-reduced replica exchange stochastic gradient Langevin dynamics algorithm (VR-reSGLD) on a Gaussian mixture distribution \cite{Dubey16}. The distribution follows from $x_{i}\mid\beta\sim0.5\text{N}(\beta,\sigma^{2})+0.5\text{N}(\phi-\beta,\sigma^{2})$,
where $\phi=20$, $\sigma=5$ and $\beta=-5$. We use a training dataset of size $N=10^{5}$ and propose to estimate the posterior distribution over $\beta$. We compare the performance of VR-reSGLD against that of the standard stochastic gradient Langevin dynamics (SGLD), and replica exchange SGLD (reSGLD).

\begin{figure}
\center
\subfloat[Trace plot for $\bbeta^{(1)}$\label{fig:Ex1_recovery_svrg_csgld_theta_1}]{\includegraphics[width=0.24\columnwidth]{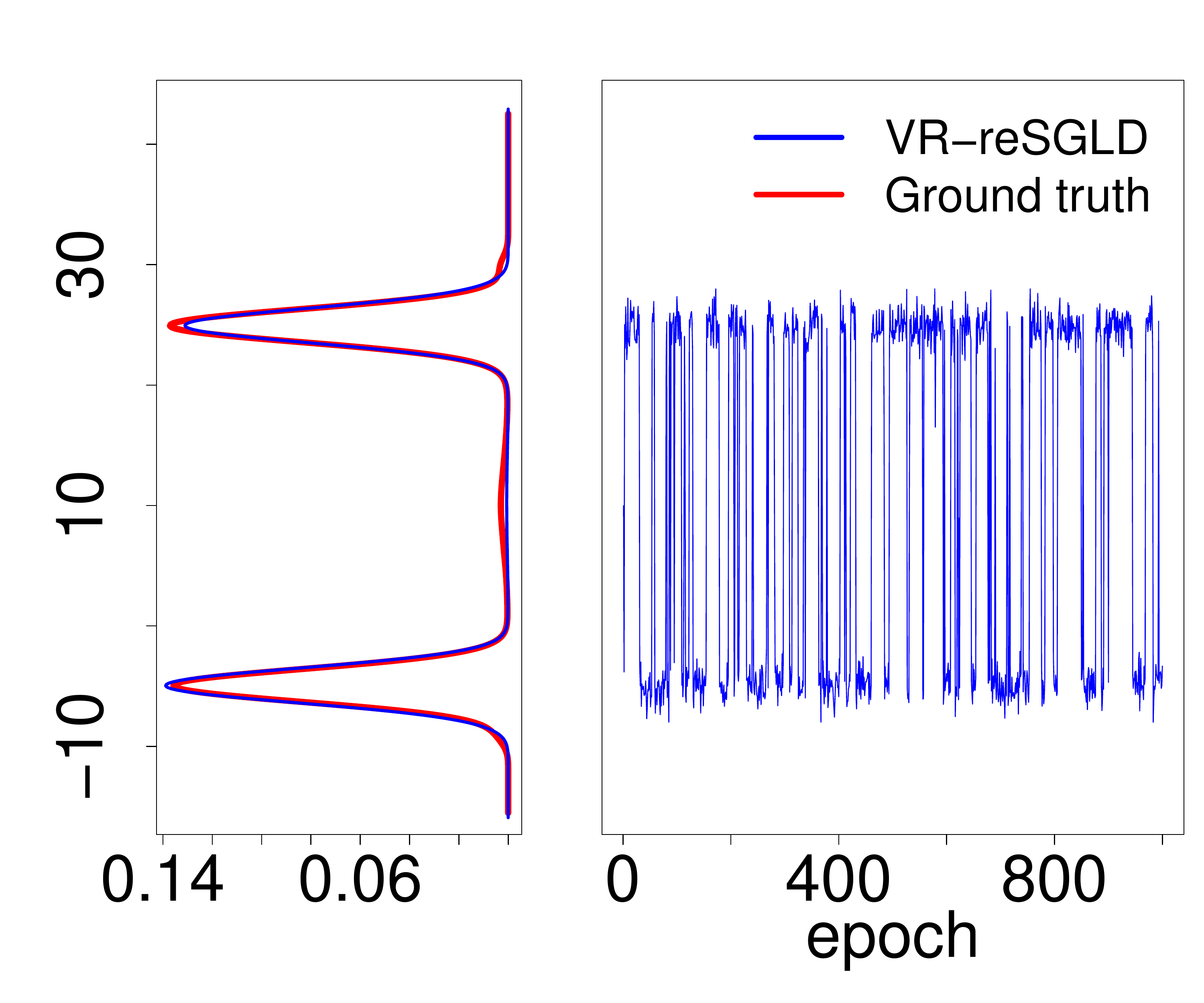}}
\enskip
\subfloat[Trace plot for $\bbeta^{(1)}$\label{fig:Ex1_recovery}]{\includegraphics[width=0.24\columnwidth]{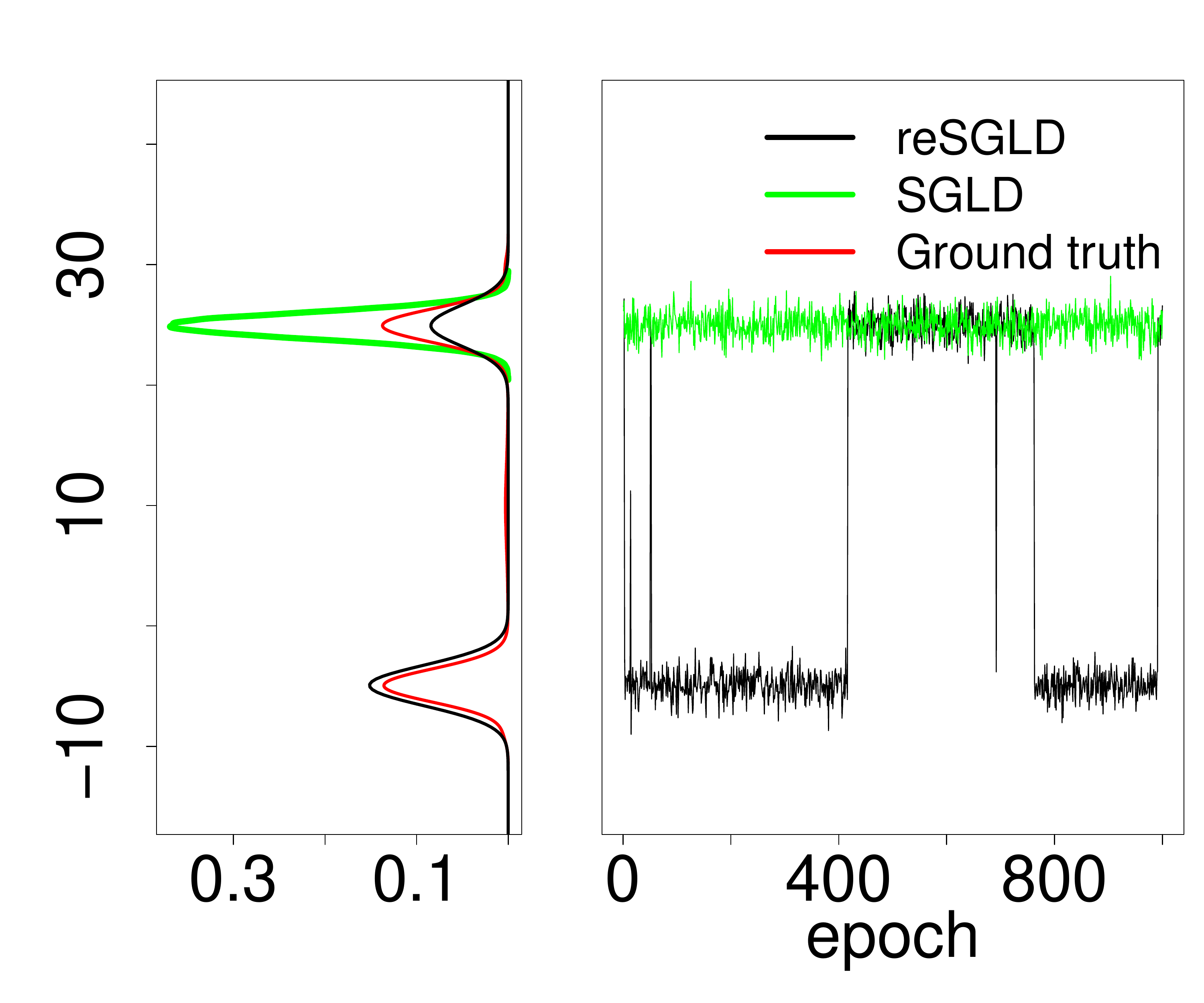}}
\enskip
\subfloat[Paths of $\log_{10}\widetilde{\sigma}^{2}$\label{fig:Ex1_variancetraceplot}]{\includegraphics[width=0.21\columnwidth]{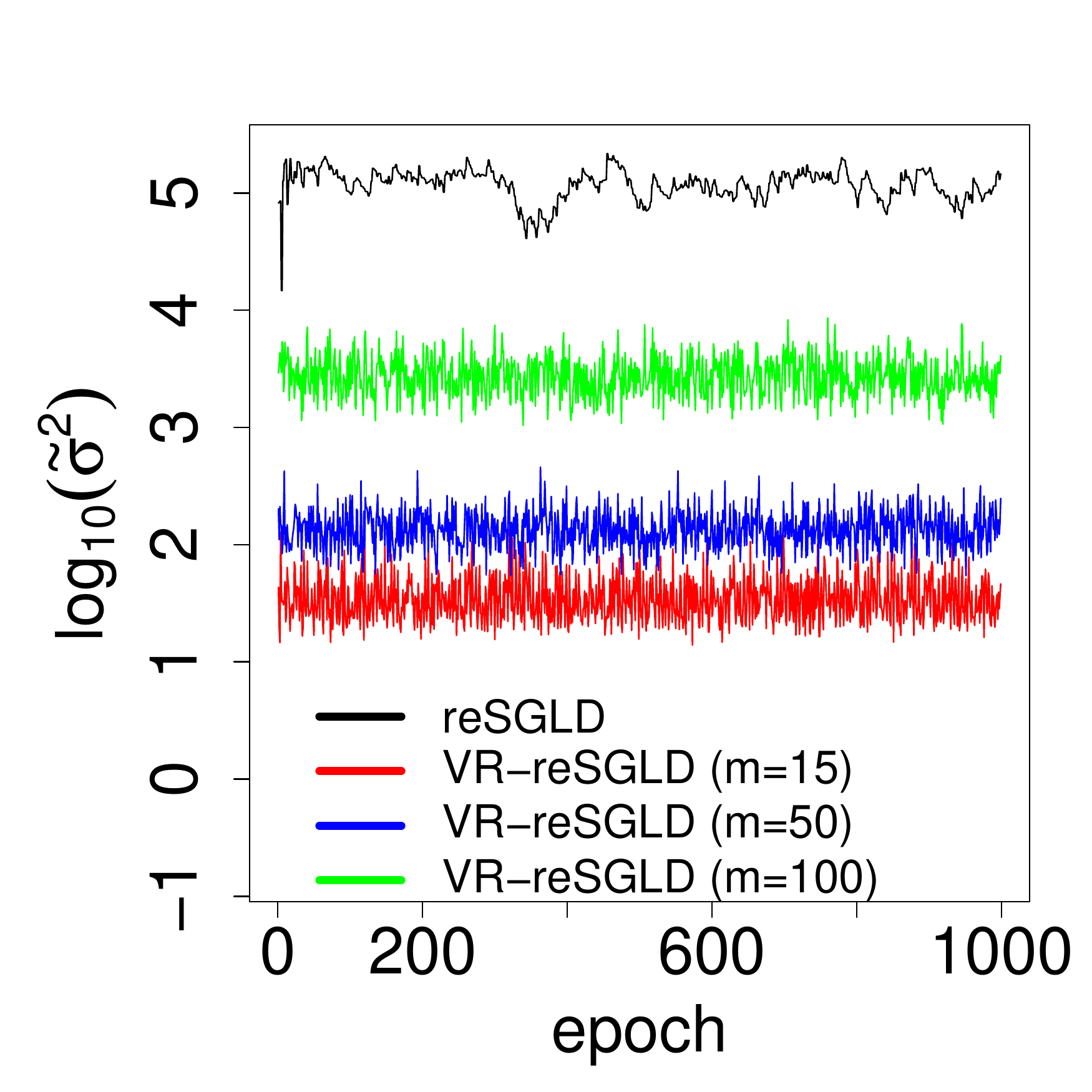}}
\enskip
\subfloat[Contour of $\log_{10}\widetilde{\sigma}^{2}$ \label{fig:Ex1_tau2_m_relation}]{\includegraphics[width=0.22\columnwidth]{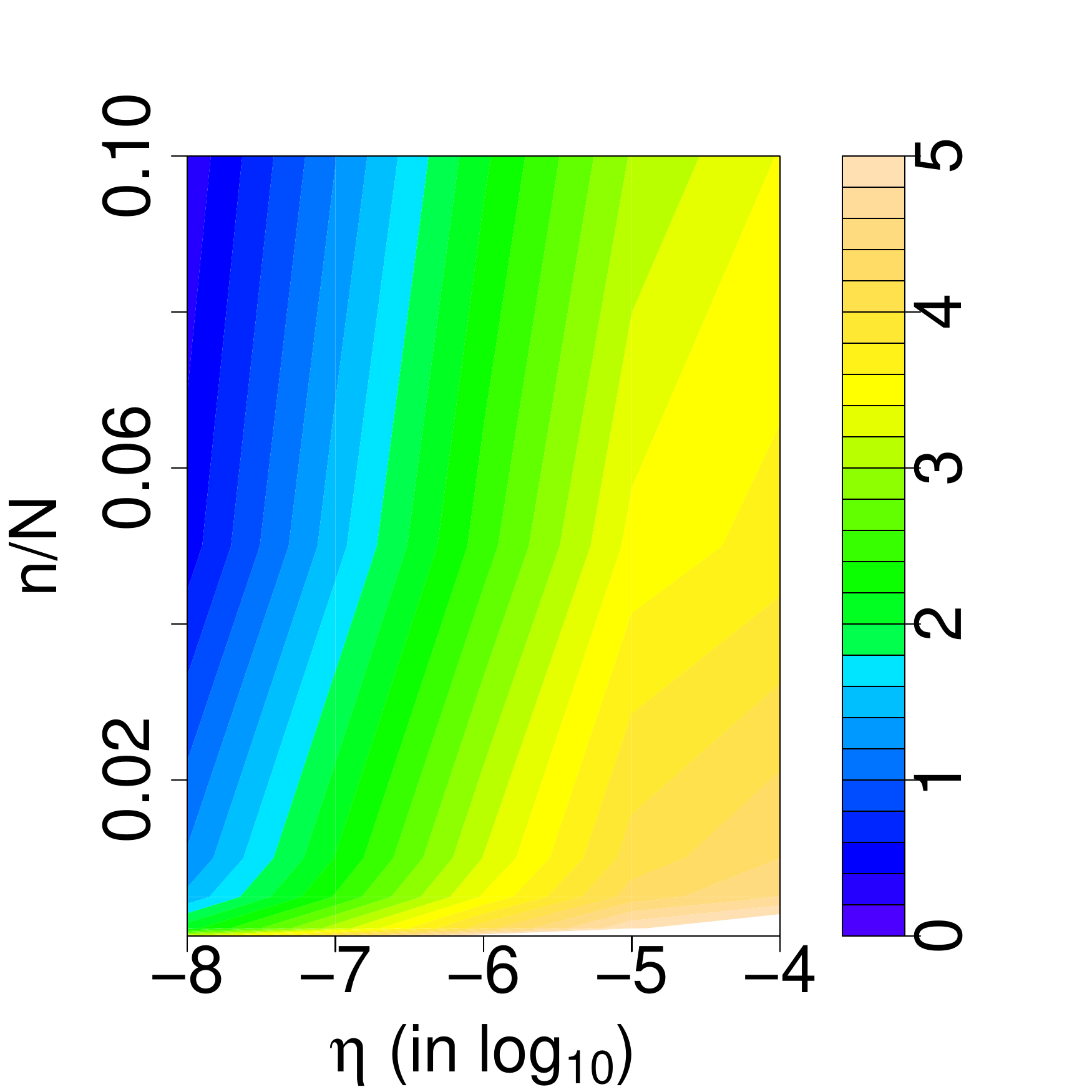}}
% \vskip -0.2in
\caption{Trace plots, KDEs of $\bbeta^{(1)}$, and sensitivity study of $\widetilde{\sigma}^2$ with respect to $m, \eta$ and $n$.}
\vspace{-1em}
\end{figure}

In Figs \ref{fig:Ex1_recovery_svrg_csgld_theta_1} and \ref{fig:Ex1_recovery},
we present trace plots and kernel density estimates (KDE) of samples generated 
from   VR-reSGLD with $m=40$, $\tau^{(1)}=10$ \footnote[2]{We choose $\tau^{(1)}=10$ instead of $1$ to avoid peaky modes for ease of illustration.}, $\tau^{(2)}=1000$,
$\eta=1e-7$, and $F=1$; reSGLD adopt the same hyper-parameters except for $F=100$ because a smaller $F$ may fail to propose any swaps;
SGLD uses $\eta=1e-7$ and $\tau=10$. As the posterior
density is intractable, we consider a ground truth
by running replica exchange Langevin dynamics with long enough iterations. We observe
that VR-reSGLD is able to fully recover the posterior density, and successfully jump between
the two modes passing the energy barrier frequently enough. By contrast, SGLD, initialized at $\beta_{0}=30$,
is attracted to the nearest mode and fails to escape throughout
the run; reSGLD manages to jump between the 
two modes, however,  $F$ is chosen as large as $100$, which induces a large bias and only yields three to five swaps and exhibits the metastability issue. 
In Figure \ref{fig:Ex1_variancetraceplot}, we present the evolution
of the variance for VR-reSGLD over a range of different $m$
and compare it with reSGLD. We see that the variance reduction mechanism has
successfully reduced the variance by hundreds of times. In Fig \ref{fig:Ex1_tau2_m_relation},
we present the sensitivity study of $\tilde{\sigma}^{2}$
as a function of the ratio $n/N$ and the
learning rate $\eta$; for this estimate we average out $10$ realizations
of VR-reSGLD, and our results agree with the theoretical analysis in Lemma \ref{vr-estimator_main}.

\subsection{Non-convex Optimization for Image Data}
\label{nonconvex_optimization}

We further test the proposed algorithm on CIFAR10 and CIFAR100. We choose the 20, 32, 56-layer residual networks as the training models and denote them by ResNet-20, ResNet-32, and ResNet-56, respectively. Considering the wide adoption of \upshape{m}SGD, stochastic gradient Hamiltonian Monte Carlo (SGHMC) is selected as the baseline. We refer to the standard replica exchange SGHMC algorithm as reSGHMC and the variance-reduced reSGHMC algorithm as VR-reSGHMC. We also include another baseline called cyclical stochastic gradient MCMC (cycSGHMC), which proposes a cyclical learning rate schedule. To make a fair comparison, we test the variance-reduced replica exchange SGHMC algorithm with cyclic learning rates and refer to it as cVR-reSGHMC.

\begin{figure*}[!ht]
  \centering
%   \vskip -0.1in
  \subfloat[\footnotesize{CIFAR10: Original v.s. proposed (m=50)} ]{\includegraphics[width=3.8cm, height=3.8cm]{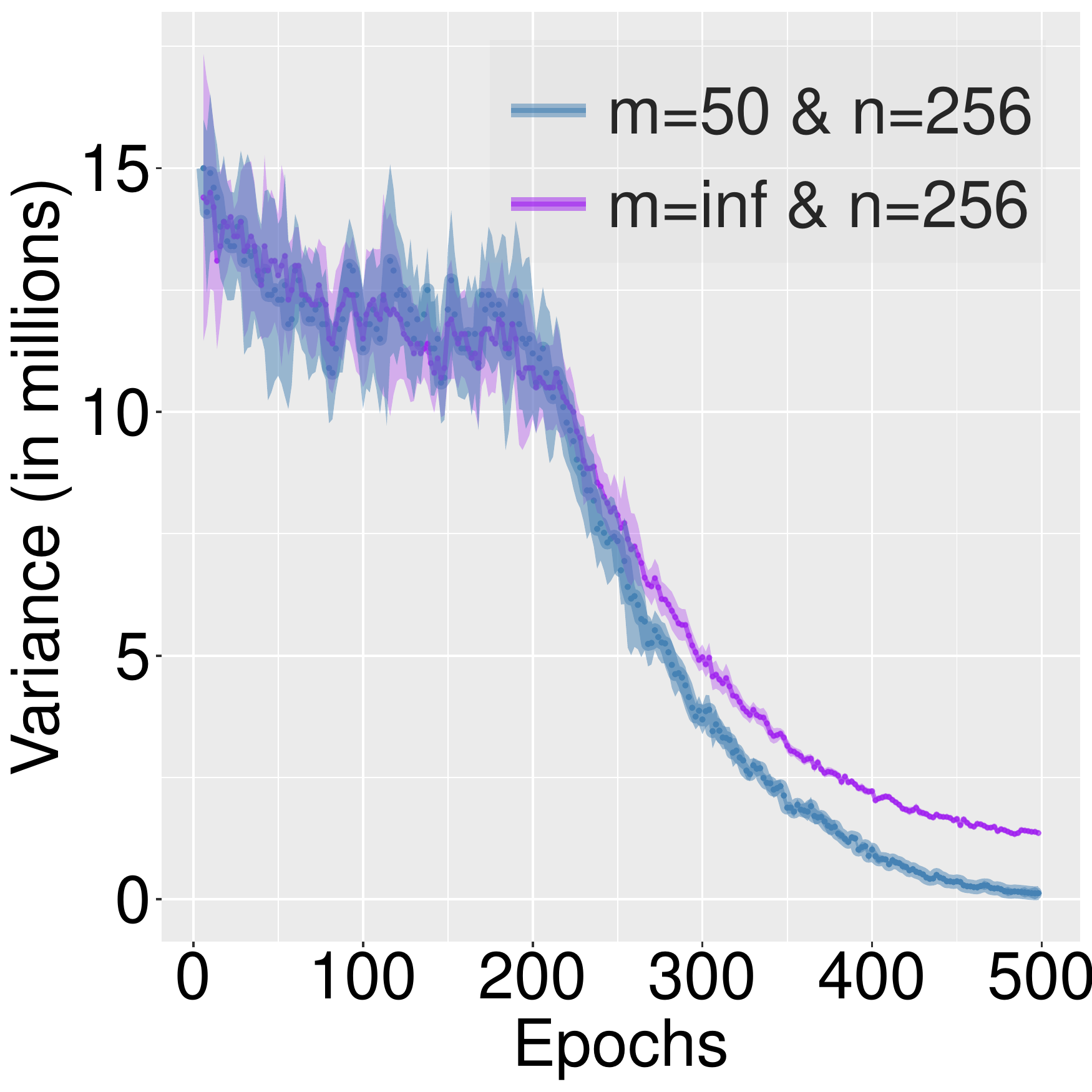}}\label{fig: c2a}\enskip
  \subfloat[\footnotesize{CIFAR100: Original v.s. proposed (m=50)} ]{\includegraphics[width=3.8cm, height=3.8cm]{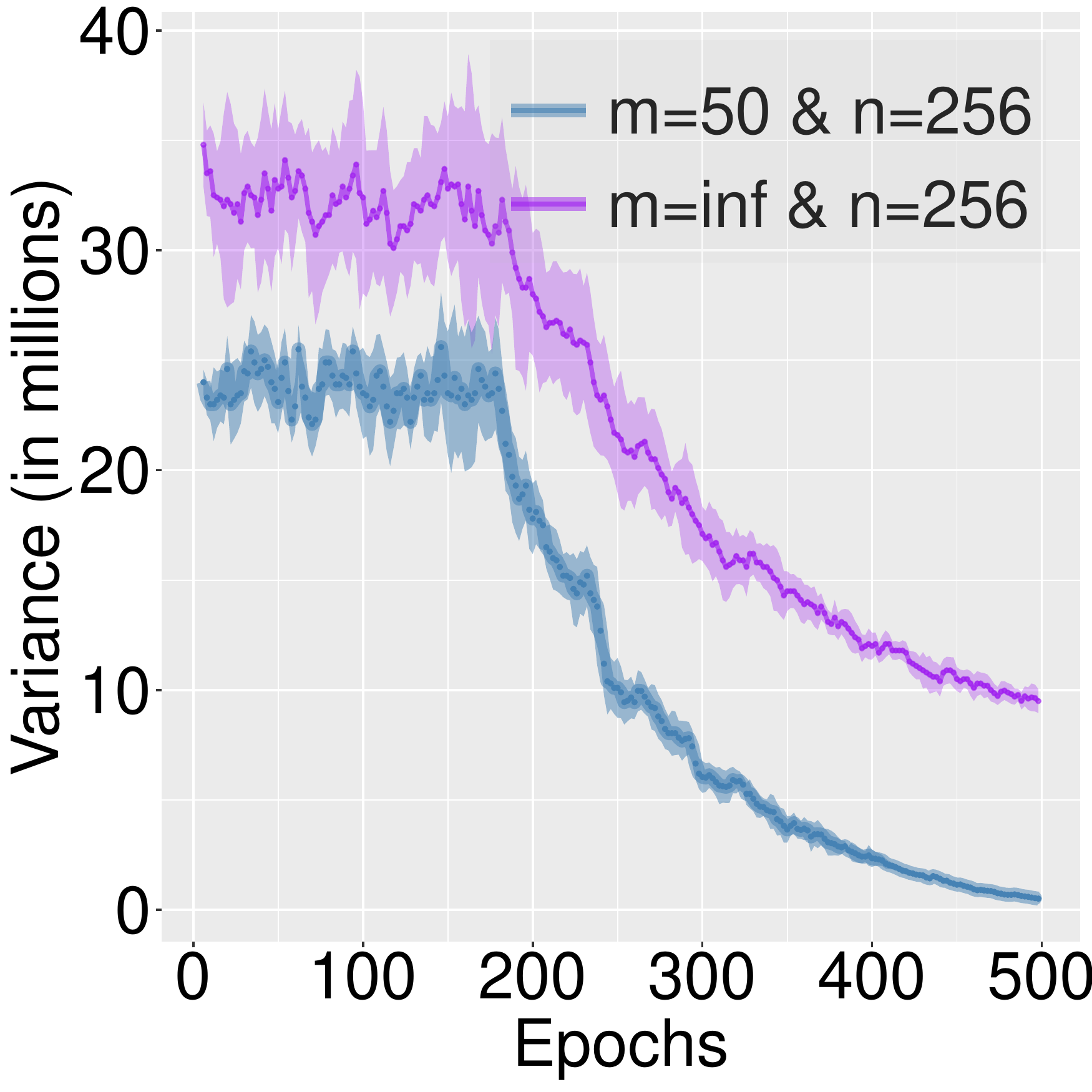}}\label{fig: c2b}\enskip
  \subfloat[Variance reduction setups on CIFAR10]{\includegraphics[width=3.8cm, height=3.8cm]{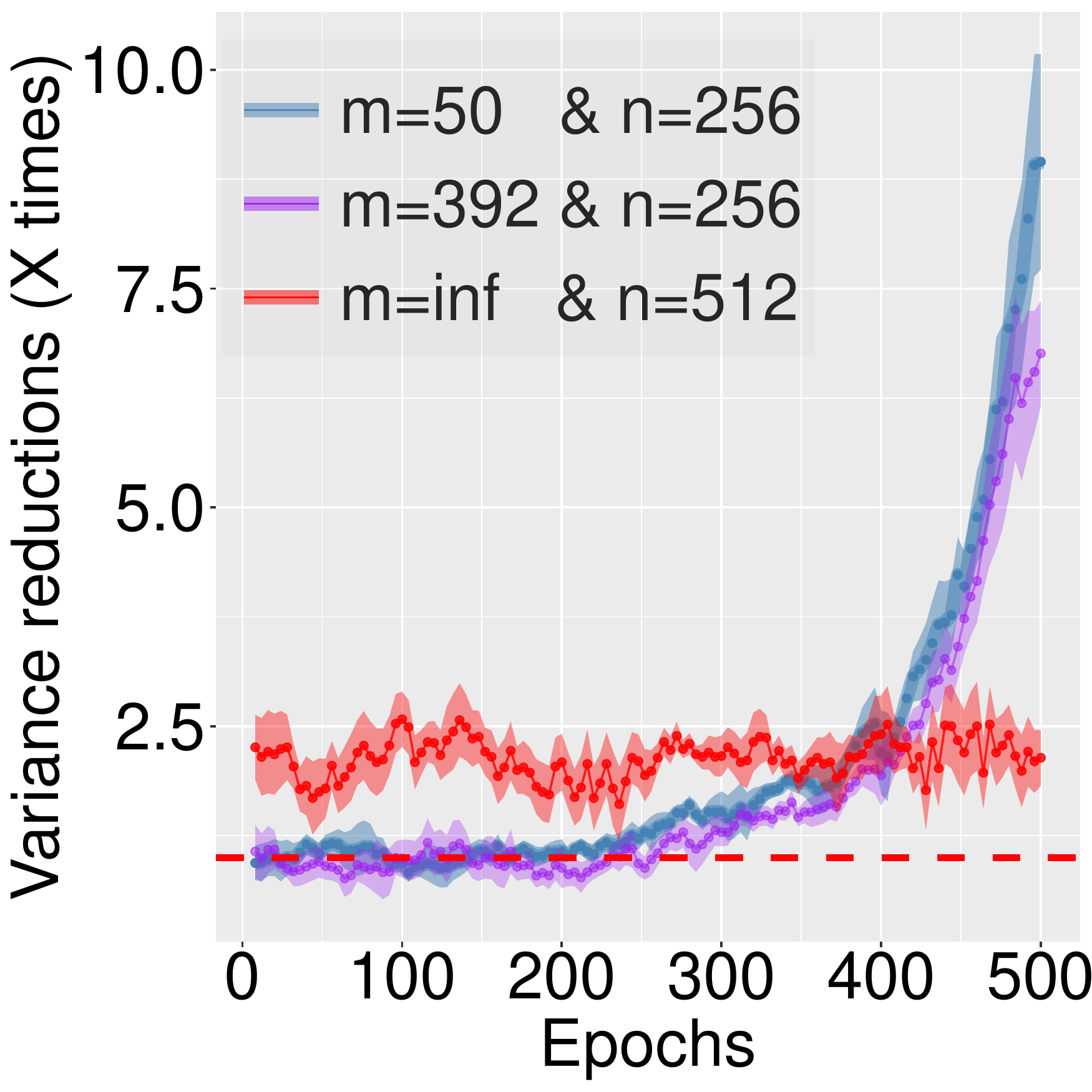}}\label{fig: c2c}\enskip
  \subfloat[Variance reduction setups on CIFAR100]{\includegraphics[width=3.8cm, height=3.8cm]{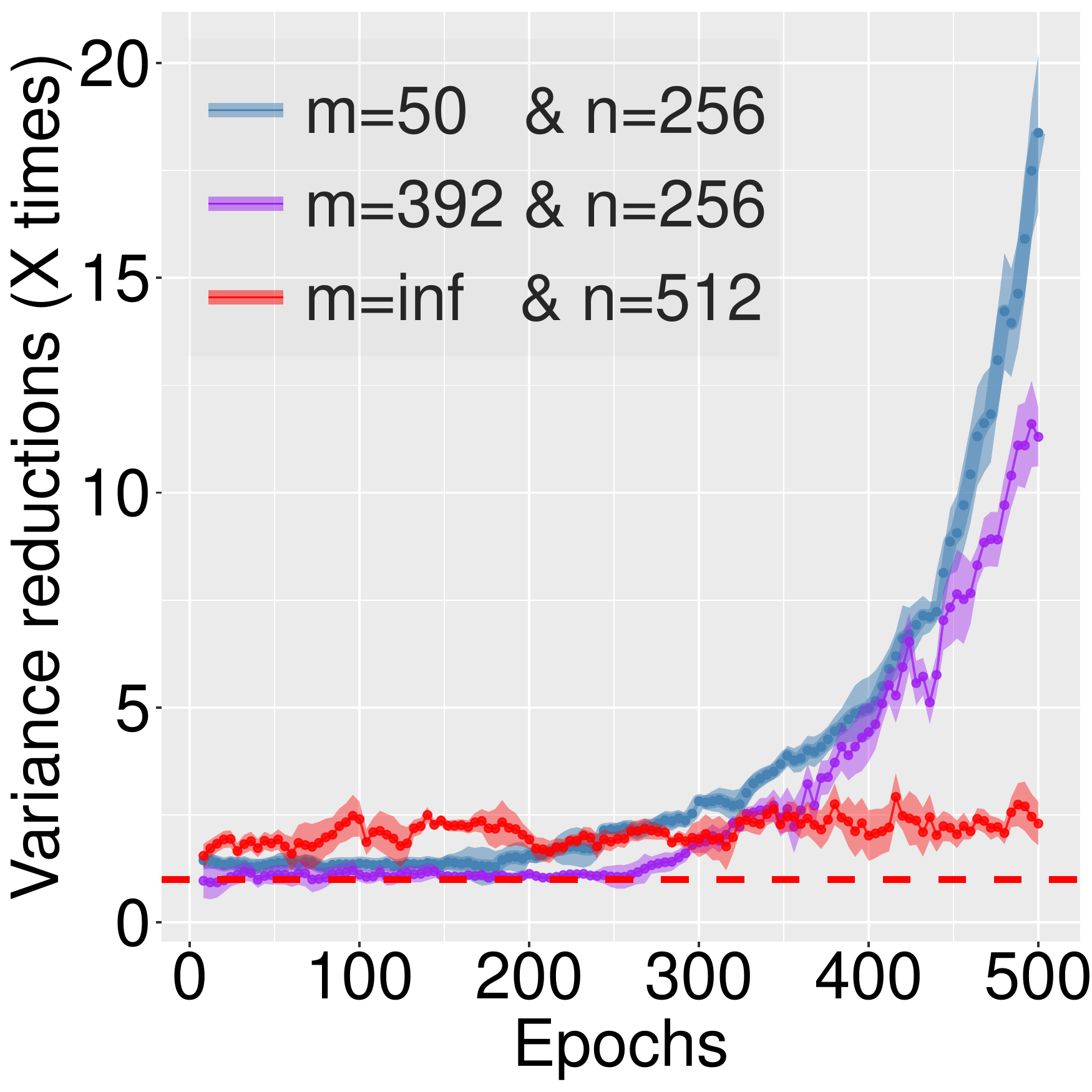}}\label{fig: c2d}
    % \vskip -0.1in
  \caption{Variance reduction on the noisy energy estimators on CIFAR10 \& CIFAR100 datasets.}
  \label{cifar_biases_v2}
%   \vspace{-1em}
\end{figure*}

We run \upshape{m}SGD, SGHMC and (VR-)reSGHMC for 500 epochs. For these algorithms, we follow a setup from \cite{deng2020}. We fix the learning rate $\eta_k^{(1)}=\text{2e-6}$ in the first 200 epochs and decay it by 0.984 afterwards. For SGHMC and the low-temperature processes of (VR-)reSGHMC, we anneal the temperature following $\tau_k^{(1)}=0.01 / 1.02^k$ in the beginning and keep it fixed after the burn-in steps; regarding the high-temperature process, we set $\eta_k^{(2)}=1.5\eta_k^{(1)}$ and $\tau_k^{(2)}=5\tau_k^{(1)}$. The initial correction factor $F_0$ is fixed at $1.5e5$ \footnote[4]{The adoption of data augmentation leads to a cold posterior \cite{Florian2020, Aitchison2021} and the accompanying correction factor $F$ becomes larger.}. The thinning factor $\mathbb{T}$ is set to $256$. In particular for cycSGHMC, we run the algorithm for 1000 epochs and choose the cosine learning rate schedule with 5 cycles; $\eta_0$ is set to $\text{1e-5}$; we fix the temperature 0.001 and the threshold $0.7$ for collecting the samples. Similarly, we propose the cosine learning rate for cVR-reSGHMC with 2 cycles and run it for 500 epochs using the same temperature 0.001. We only study the low-temperature process for the replica exchange algorithms. Each experiment is repeated five times to obtain the mean and 2 standard deviations.

We evaluate the performance of variance reduction using VR-reSGHMC and compare it with reSGHMC. We first increase the batch size $n$ from 256 to 512 for reSGHMC and notice that the reduction of variance is around 2 times (see the red curves in Fig.\ref{cifar_biases_v2}(c,d)). Next, we try $m=50$ and $n=256$ for the VR-reSGHMC algorithm, which updates the control variates every 50 iterations. As shown in Fig.\ref{cifar_biases_v2}(a,b), during the first 200 epochs, where the largest learning rate is used, the variance of VR-reSGHMC is slightly reduced by 37\% on CIFAR100 and doesn't make a difference on CIFAR10. However, as the learning rate and the temperature decrease, the reduction of the variance gets more significant. We see from  Fig.\ref{cifar_biases_v2}(c,d) that the reduction of variance can be \emph{up to 10 times on CIFAR10 and 20 times on CIFAR100}. This is consistent with our theory proposed in Lemma \ref{vr-estimator_main}. The reduction of variance based on VR-reSGHMC starts to outperform the baseline with $n=512$ when the epoch is higher than 370 on CIFAR10 and 250 on CIFAR100. We also try $m=392$, which updates the control variates every 2 epochs, and find a similar pattern.

For computational reasons, we choose $m=392$ and $n=256$ for (c)VR-reSGHMC and compare them with the baseline algorithms. With the help of swaps between two SGHMC chains, reSGHMC already obtains remarkable performance \cite{deng2020} and five swaps often lead to an optimal performance. However, VR-reSGHMC still outperforms reSGHMC by around 0.2\% on CIFAR10 and 1\% improvement on CIFAR100  (Table.\ref{cifar_all_results}) and \emph{the number of swaps is increased to around a hundred under the same setting}. We also try cyclic learning rates and compare cVR-reSGHMC with cycSGHMC, we see cVR-reSGHMC outperforms cycSGHMC significantly even if cycSGHMC is running 1000 epochs, which may be more costly than cVR-reSGHMC due to the lack of mechanism in parallelism. Note that cVR-reSGHMC keeps the temperature the same instead of annealing it as in VR-reSGHMC, which is more suitable for uncertainty quantification.

\begin{table*}[ht]
\fontsize{10}{11}
\begin{sc}
% \vskip 0.3in
\caption{Prediction accuracies (\%) based on Bayesian model averaging. In particular, \upshape{m}SGD and SGHMC run 500 epochs using a single chain; cycSGHMC run 1000 epochs using a single chain; replica exchange algorithms run 500 epochs using two chains with different temperatures.}\label{cifar_all_results}
\vskip 0.2in
\begin{center} 
\begin{tabular}{c|ccc|ccc}
\hline
\multirow{2}{*}{Method} & \multicolumn{3}{c|}{CIFAR10} & \multicolumn{3}{c}{CIFAR100} \\
\cline{2-7}
 & ResNet20 & ResNet32  & ResNet56 & ResNet20 & ResNet32 & ResNet56 \\
\hline
\hline
\upshape{m}SGD & 94.07$\pm$0.11 & 95.11$\pm$0.07 & 96.05$\pm$0.21 & 71.93$\pm$0.13 & 74.65$\pm$0.20  & 78.76$\pm$0.24  \\
SGHMC & 94.16$\pm$0.13 & 95.17$\pm$0.08 & 96.04$\pm$0.18 & 72.09$\pm$0.14 & 74.80$\pm$0.19  & 78.95$\pm$0.22 \\ 
\hline
 
\upshape{re}SGHMC & 94.56$\pm$0.23 & 95.44$\pm$0.16 & 96.15$\pm$0.17 & 73.94$\pm$0.34 & 76.38$\pm$0.23  & 79.86$\pm$0.26 \\ 
\scriptsize{VR-\upshape{re}SGHMC} & \textbf{94.84$\pm$0.11}  & \textbf{95.62$\pm$0.09} &  \textbf{96.32$\pm$0.15}  & \textbf{74.83$\pm$0.18} & \textbf{77.40$\pm$0.27}  & \textbf{80.62$\pm$0.22}   \\
\hline
\scriptsize{\upshape{cyc}SGHMC} &  94.61$\pm$0.15  &  95.56$\pm$0.12  & 96.19$\pm$0.17 & 74.21$\pm$0.22 & 76.60$\pm$0.25  & 80.39$\pm$0.21  \\
\scriptsize{\upshape{c}VR-\upshape{re}SGHMC} & \textbf{94.91$\pm$0.10}  & \textbf{95.64$\pm$0.13} &  \textbf{96.36$\pm$0.16}  & \textbf{75.02$\pm$0.19} & \textbf{77.58$\pm$0.21}  & \textbf{80.50$\pm$0.25}   \\
\hline
\end{tabular}
\end{center} 
\end{sc}
% \vspace{-1em}
\end{table*}

\subsubsection{Training Cost}
\label{cost}

The batch size of $n=512$ almost doubles the training time and memory, which becomes too costly in larger experiments. A frequent update of control variates using $m=50$ is even more time-consuming and is not acceptable in practice. The choice of $m$ gives rise to a tradeoff between computational cost and variance reduction. As such, we choose $m=392$, which still obtains significant reductions of the variance at the cost of 40\% increase on the training time. Note that when we set $m=2000$, the training cost is only increased by 8\% while the variance reduction can be still at most 6 times on CIFAR10 and 10 times on CIFAR100.

\subsubsection{Adaptive Coefficient}
\label{adaptive_c}
We study the correlation coefficient of the noise from the current parameter $\bbeta_k^{(h)}$, where $h\in\{1,2\}$, and the control variate $\bbeta^{(h)}_{m\lfloor \frac{k}{m}\rfloor}$. As shown in Fig.\ref{cifar_adaptive_c_figs}, the correlation coefficients are only around -0.5 due to the large learning rate in the early period. This implies that VR-reSGHMC may overuse the noise from the control variates and thus fails to fully exploit the potential in variance reduction. In spirit to the adaptive variance, we try the adaptive correlation coefficients to capture the pattern of the time-varying correlation coefficients and present it in Algorithm \ref{adaptive_alg}. 
\begin{algorithm}[tb]
  \caption{Adaptive variance-reduced replica exchange SGLD. The learning rate and temperature can be set to dynamic to speed up the computations. A larger smoothing factor $\gamma$ captures the trend better but becomes less robust.}
  \label{adaptive_alg}
\begin{algorithmic}
\footnotesize
\STATE{\textbf{Input } Initial parameters $\bbeta_0^{(1)}$ and $\bbeta_0^{(2)}$, learning rate $\eta$ and temperatures $\tau^{(1)}$ and $\tau^{(2)}$, correction factor $F$.}
\REPEAT
  \STATE{\textbf{Parallel sampling} \text{Randomly pick a mini-batch set $B_{k}$ of size $n$.}}
  \begin{equation*}
      \bbeta^{(h)}_{k} =  \bbeta^{(h)}_{k-1}- \eta \frac{N}{n}\sum_{i\in B_{k}}\nabla  L( \bx_i\mid\bbeta^{(h)}_{k-1})+\sqrt{2\eta\tau^{(h)}} \bxi_{k}^{(h)}, \text{ for } h\in\{1,2\}.
  \end{equation*}
%   \vskip -0.5in
  \STATE{\textbf{Variance-reduced energy estimators} Update $\tiny{\widehat L^{(h)}=\sum_{i=1}^N L\left(\bx_i\mid \bbeta^{(h)}_{m\lfloor \frac{k}{m}\rfloor}\right)}$ every $m$ iterations.}
  \begin{equation*}
  \small
      \widetilde L(B_{k}\mid\bbeta_{k}^{(h)})=\frac{N}{n}\sum_{i\in B_{k}} L(\bx_i\mid \bbeta_{k}^{(h)}) + \widetilde {c}_k\cdot\left[ \frac{N}{n}\sum_{i\in B_{k}} L\left(\bx_i\mid \bbeta^{(h)}_{m\lfloor \frac{k}{m}\rfloor}\right) -\widehat L^{(h)}\right] , \text{ for } h\in\{1,2\}. 
  \end{equation*}
  \IF{$k\ \text{mod}\ m=0$} 
  \STATE{Update $\widetilde \sigma^2_{k} = (1-\gamma)\widetilde \sigma^2_{k-m}+\gamma \sigma^2_{k}$, where $\sigma_k^2$ is an estimate for $\Var\left(\widetilde L( B_k\mid\bbeta_{k}^{(1)})-\widetilde L( B_{k}\mid\bbeta_{k}^{(2)})\right)$.}
  \STATE{Update $\widetilde {c}_k=(1-\gamma)\widetilde {c}_{k-m}+\gamma c_k$, where $c_k$ is an estimate for $-\frac{\text{Cov}\big( L(B\mid\bbeta_k^{(h)}),  L\big(B\mid\bbeta_{m\lfloor \frac{k}{m}\rfloor}^{(h)}\big)\big)}{ \Var\big(L\big(B\mid\bbeta_{m\lfloor \frac{k}{m}\rfloor}^{(h)}\big)\big)}$.}
  \ENDIF
   
  \STATE{\textbf{Bias-reduced swaps} Swap $ \bbeta_{k+1}^{(1)}$ and $ \bbeta_{k+1}^{(2)}$ if $u<\widetilde S_{\eta,m,n}$, where $u\sim \text{Unif }[0,1]$, and $\widetilde S_{\eta,m,n}$ follows}
  \begin{equation*}
  \footnotesize
      \textstyle \widetilde S_{\eta,m,n}=\exp\left\{ \left(\frac{1}{\tau^{(1)}}-\frac{1}{\tau^{(2)}}\right)\left(  \widetilde L( B_{k+1}\mid\bbeta_{k+1}^{(1)})-  \widetilde L( B_{k+1}\mid\bbeta_{k+1}^{(2)})-\frac{1}{F}\left(\frac{1}{\tau^{(1)}}-\frac{1}{\tau^{(2)}}\right)\widetilde \sigma^2_{m\lfloor \frac{k}{m}\rfloor}\right)\right\}.
  \end{equation*}
    
  \UNTIL{$k=k_{\max}$.}
    \vskip -1 in
\STATE{\textbf{Output:}  $\{\bbeta_{i\mathbb{T}}^{(1)}\}_{i=1}^{\lfloor k_{\max}/\mathbb{T}\rfloor}$, where $\mathbb{T}$ is the thinning factor.}
\end{algorithmic}
\end{algorithm}

As a result, we can further improve the performance of variance reduction by as much as 40\% on CIFAR10 and 30\% on CIFAR100 in the first 200 epochs. As the training continues and the learning rate decreases, the correlation coefficient is becoming closer to -1. In the late period, there is still 10\% improvement compared to the standard VR-reSGHMC.

In a nut shell, we can try adaptive coefficients in the early period when the absolute value of the correlation is lower than 0.5 or just use the vanilla replica exchange stochastic gradient Monte Carlo to avoid the computations of variance reduction.

\begin{figure*}[!ht]
  \centering
  \subfloat[\footnotesize{CIFAR10 \& m=50} ]{\includegraphics[width=3.8cm, height=3.4cm]{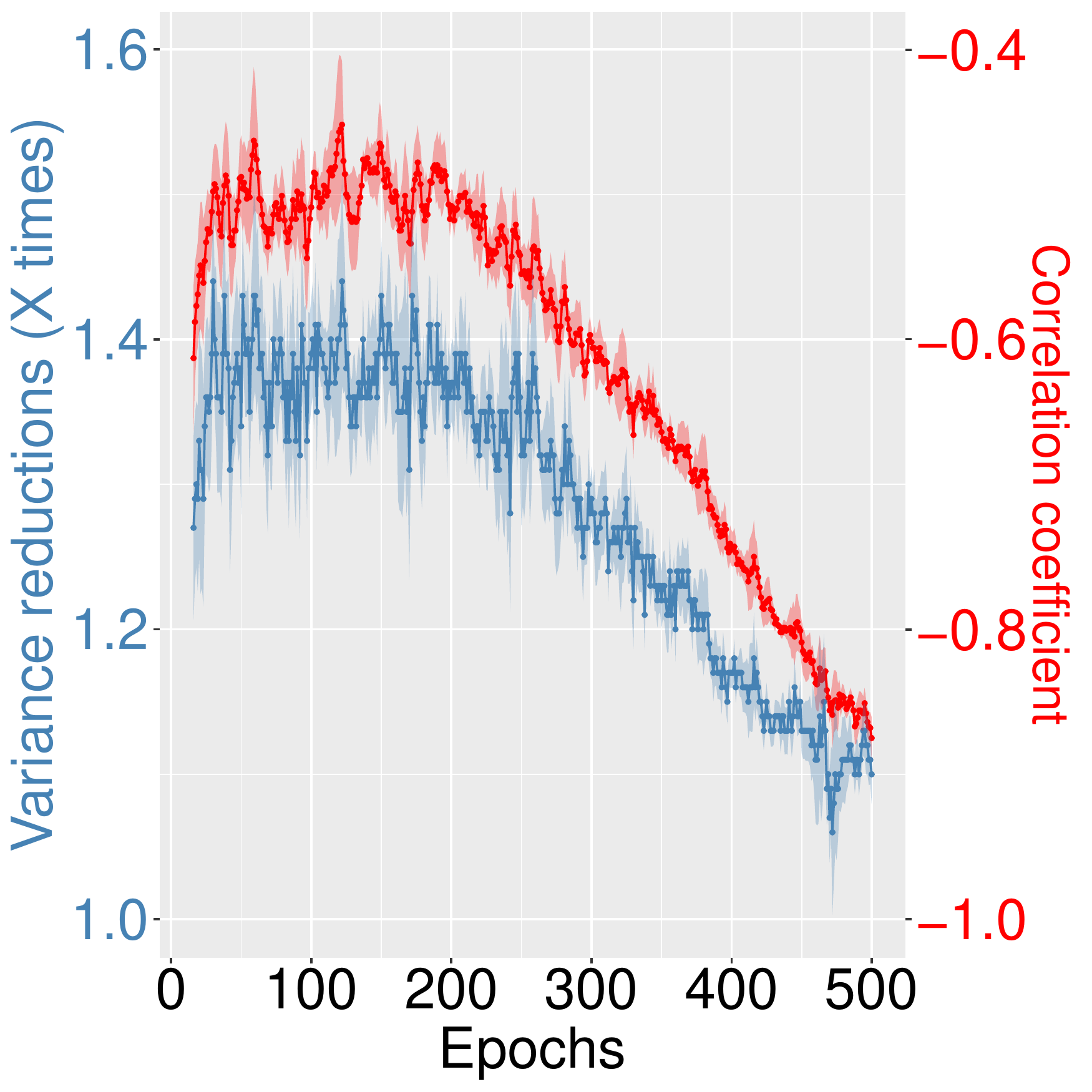}}\enskip
  \subfloat[\footnotesize{CIFAR100 \& m=50} ]{\includegraphics[width=3.8cm, height=3.4cm]{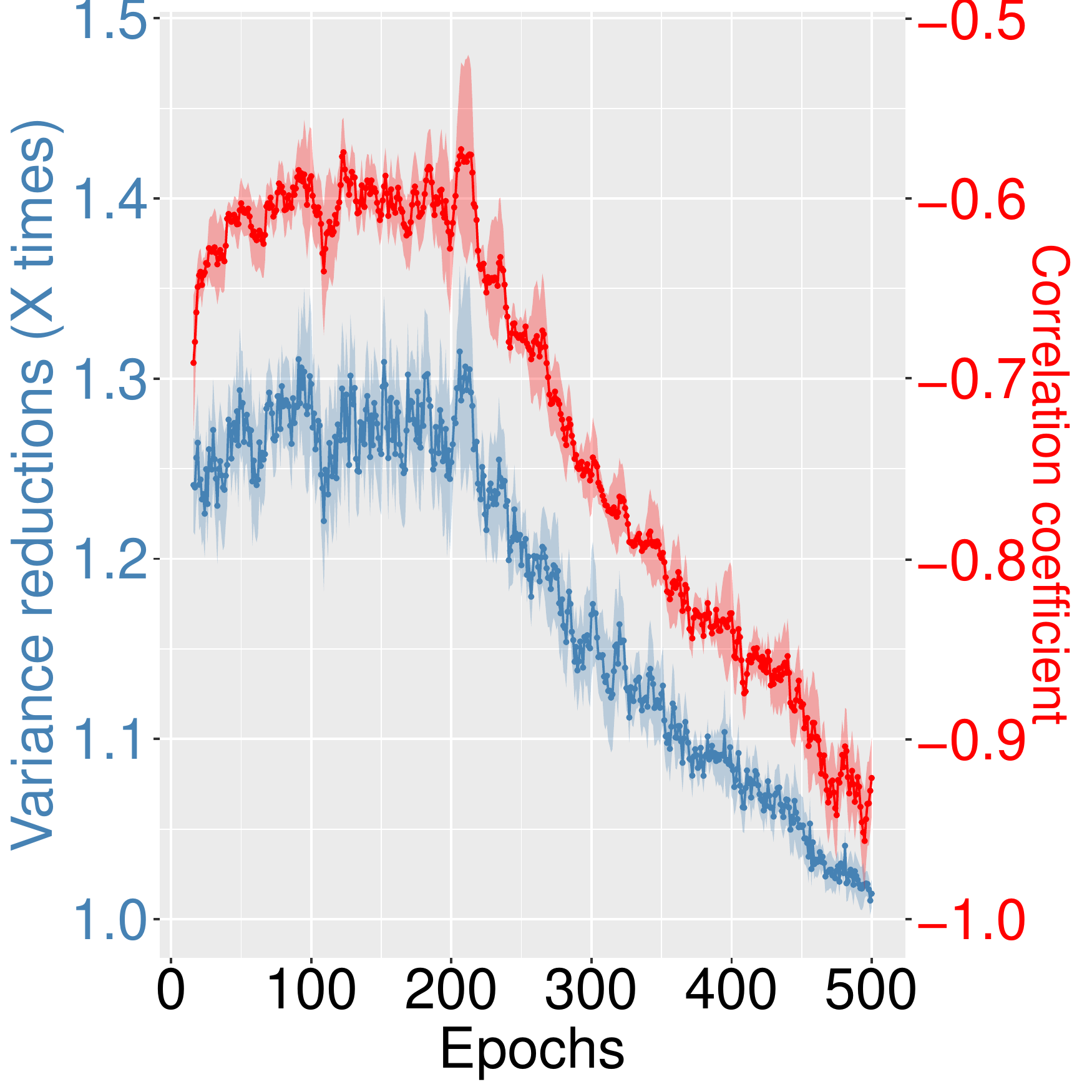}}\enskip
  \subfloat[CIFAR10 \& m=392]{\includegraphics[width=3.8cm, height=3.4cm]{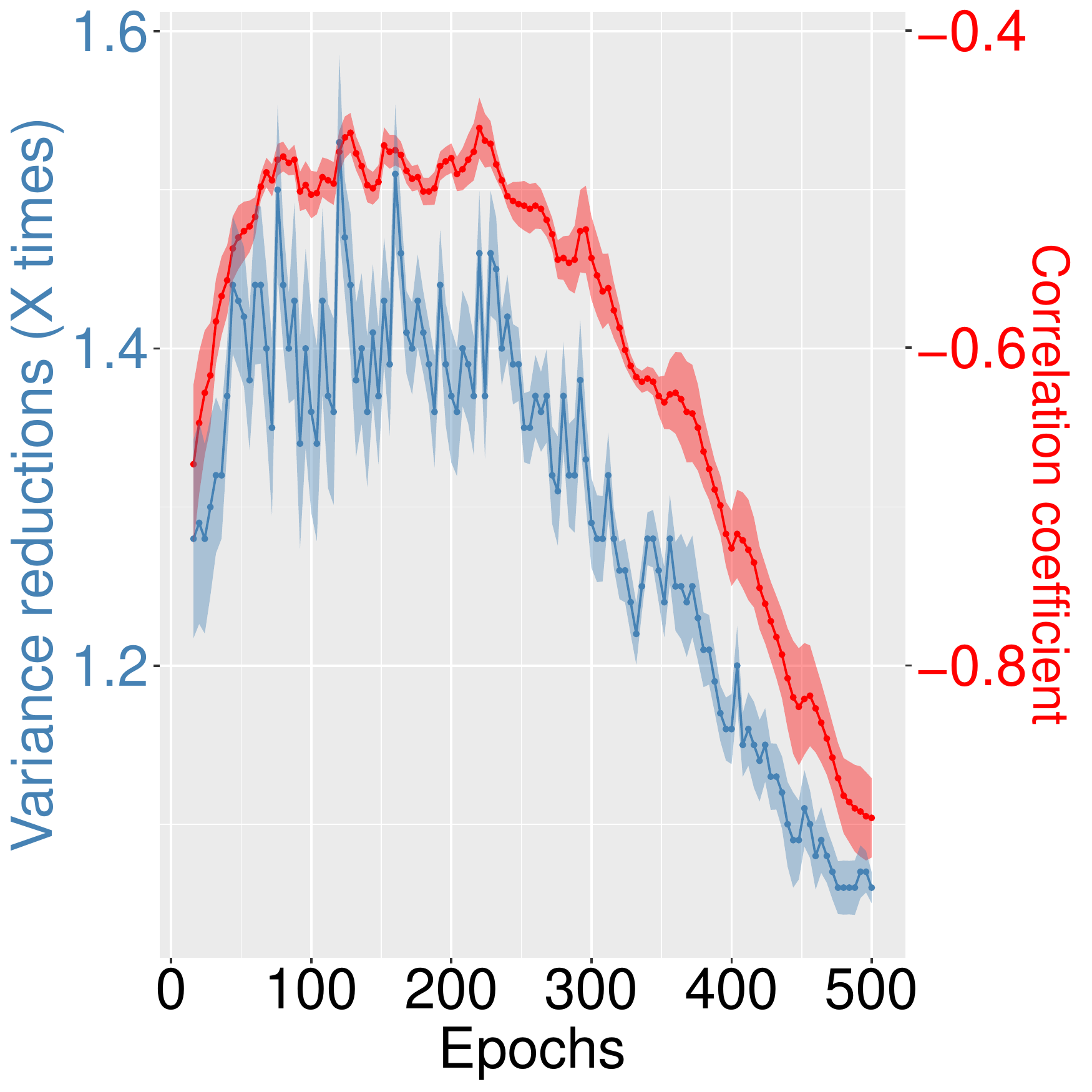}}\enskip
  \subfloat[CIFAR100 \& m=392]{\includegraphics[width=3.8cm, height=3.4cm]{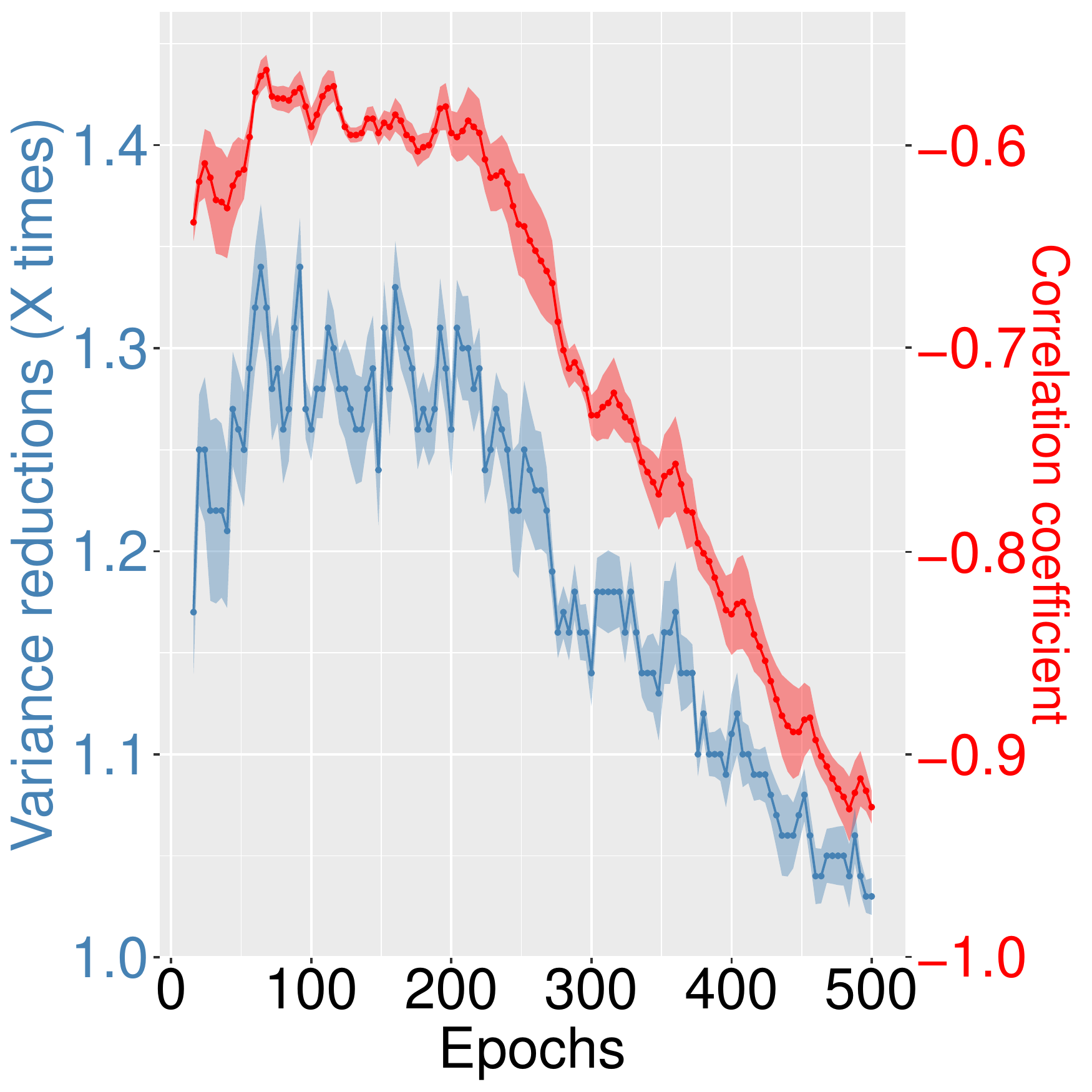}}
    \vskip -0.05in
  \caption{A study of variance reduction techniques using adaptive coefficient and non-adaptive coefficient on CIFAR10 \& CIFAR100 datasets.}
  \label{cifar_adaptive_c_figs}
%   \vspace{-1.3em}
\end{figure*}

\subsection{Uncertainty Quantification for Unknown Samples} 

A reliable model not only makes the right decision among potential candidates but also casts doubts on irrelevant choices. For the latter, we follow \cite{Balaji17} and evaluate the uncertainty on out-of-distribution samples from unseen classes. To avoid over-confident predictions on unknown classes, the ideal predictions should yield a higher uncertainty on the out-of-distribution samples, while maintaining the accurate uncertainty for the in-distribution samples.

Continuing the setup in Sec.\ref{nonconvex_optimization}, we collect the ResNet20 models trained on CIFAR10 and quantify  the entropy on the Street View House Numbers (SVHN) dataset, which contains 26,032 RGB testing images of digits instead of objects. We compare cVR-reSGHMC with \upshape{m}SGD, SGHMC, reSGHMC, and cSGHMC. Ideally, the predictive distribution should be the uniform distribution and leads to the highest entropy.  We present the empirical cumulative distribution function (CDF) of the entropy of the predictions on SVHN and report it in Fig.\ref{UQ}. As shown in the left figure, \upshape{m}SGD shows the smallest probability for high-entropy predictions, implying the weakness of stochastic optimization methods in uncertainty estimates. By contrast, the proposed cVR-reSGHMC yields the highest probability for predictions of high entropy. Admittedly, the standard ResNet models are poorly calibrated in the predictive probabilities and lead to inaccurate confidence. To alleviate this issue, we adopt the temperature-scaling method with a scale of 2 to calibrate the predictive distribution \cite{temperature_scaling} and present the entropy in Fig.\ref{UQ} (right). In particular, we see that 77\% of the predictions from cVR-reSGHMC yields the entropy higher than 1.5, which is 7\% higher than reSGHMC and 10\% higher than cSGHMC and much better than the others.

\begin{figure}[!ht]
  \begin{center}
  \vskip -0.2in
     \includegraphics[width=0.7\textwidth]{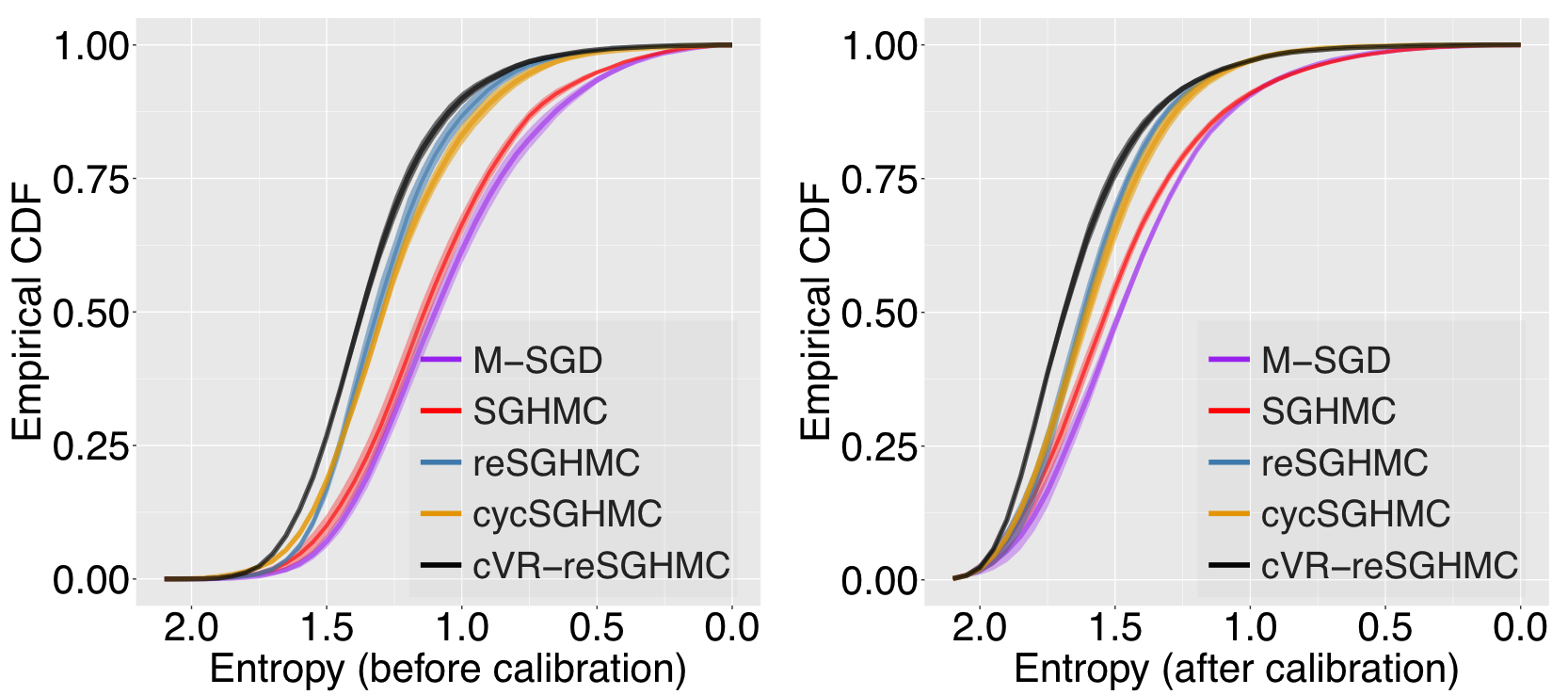}
  \end{center}
  \vskip -0.1in
  \caption{{CDF of entropy for predictions on SVHN via CIFAR10 models. A temperature scaling is used in calibrations.}}
  \label{UQ}
\end{figure}

\chapter{NON-REVERSIBLE PARALLEL TEMPERING FOR UNCERTAINTY APPROXIMATION IN DEEP LEARNING}

\label{NRPT_uncertainty}

\section{Introduction}
%\vskip -0.05in
Recall that Langevin diffusion proposes to inject the Brownian motion to a gradient flow as follows
\begin{equation*}
\label{sde}
\begin{split}
    d \bbeta_t &= - \nabla L(\bbeta_t) dt+\sqrt{2\tau} d\bW_t,
\end{split}
\end{equation*}

where $\bbeta_t\in\mathbb{R}^d$, $\nabla L(\cdot)$ is the gradient of the energy function $L(\cdot)$, $\bW_t\in\mathbb{R}^d$ is a Brownian motion, and $\tau$ is the temperature. The diffusion process converges to a stationary distribution $\pi(\bbeta)\propto e^{-\frac{L(\bbeta)}{\tau}}$ and setting $\tau=1$ yields a Bayesian posterior. 
When $L(\cdot)$ is convex, the rapid convergence has been widely studied in \cite{dm+16, dk17}; however, when $L(\cdot)$ is non-convex, a slow mixing rate is inevitable \cite{Maxim17}. To accelerate the simulation, replica exchange Langevin diffusion (reLD) proposes to include a high-temperature particle $\bbeta_t^{(P)}$, where $P\in \mathbb{N}^{+} \setminus \{1\}$, for \emph{exploration}. Meanwhile, a low-temperature particle $\bbeta_t^{(1)}$ is presented for \emph{exploitation}:
\begin{equation}
\label{sde_2_couple}
\begin{split}
    d \bbeta_t^{(P)} &= - \nabla L(\bbeta_t^{(P)}) dt+\sqrt{2\tau^{(P)}} d\bW_t^{(P)},\\
    \quad d \bbeta_t^{(1)} &= - \nabla L(\bbeta_t^{(1)}) dt+\sqrt{2\tau^{(1)}} d\bW_t^{(1)},\\
\end{split}
\end{equation}
where $\tau^{(P)}>\tau^{(1)}$ and $\bW_t^{(P)}$ is independent of $\bW_t^{(1)}$. To promote more explorations for the low-temperature particle, the particles at the position $(\beta^{(1)}, \beta^{(P)})\in\mathbb{R}^{2d}$ swap with a probability 
\begin{equation}
\label{swap_function}
    aS(\beta^{(1)}, \beta^{(P)})=a\cdot \bigg(1\wedge e^{ \big(\frac{1}{\tau^{(1)}}-\frac{1}{\tau^{(P)}}\big)\big( L(\beta^{(1)})- L(\beta^{(P)})\big)}\bigg),
\end{equation}
where $a\in(0,\infty)$ is the swap intensity. In specific, the conditional swap rate at time $t$ follows that
\begin{equation*}
\begin{split}
    \mathbb{P}(\bbeta_{t+dt}=(\beta^{(P)}, \beta^{(1)})\mid\bbeta_t=(\beta^{(1)}, \beta^{(P)}))&=a S(\beta^{(1)}, \beta^{(P)}) d t,\\
    \mathbb{P}(\bbeta_{t+dt}=(\beta^{(1)}, \beta^{(P)})\mid\bbeta_t=(\beta^{(1)}, \beta^{(P)}))&=1-a S(\beta^{(1)}, \beta^{(P)}) d t.\\
\end{split}
\end{equation*}
In the longtime limit, the Markov jump process converges to the joint distribution $\pi(\bbeta^{(1)}, \bbeta^{(P)})\propto e^{-\frac{L(\bbeta^{(1)})}{\tau^{(1)}}-\frac{L(\bbeta^{(P)})}{\tau^{(P)}}}$. For convenience, we refer to the marginal distribution $\pi^{(1)}(\bbeta)\propto e^{-\frac{L(\bbeta)}{\tau^{(1)}}}$ and $\pi^{(P)}(\bbeta)\propto e^{-\frac{L(\bbeta)}{\tau^{(P)}}}$ as the \emph{target distribution} and \emph{reference distribution}, respectively.

\section{Preliminaries}
%\vskip -0.05in
Achieving sufficient explorations requires a large $\tau^{(P)}$, which leads to limited accelerations due to a \emph{small overlap} between $\pi^{(1)}$ and $\pi^{(P)}$. To tackle this issue, one can bring in multiple particles with temperatures $(\tau^{(2)}, \cdots, \tau^{(P-1)})$, where $\tau^{(1)}<\tau^{(2)}<\cdots <\tau^{(P)}$, to hollow out ``tunnels''. To  maintain feasibility, numerous schemes are presented to select candidate pairs to attempt the swaps. 

\subsection{APE} The all-pairs exchange (APE) attempts to swap arbitrary pair of chains \cite{Brenner07, Martin09}, however, such a method requires a swap time (see definition in section \ref{others}) of $O(P^3)$  and may not be user-friendly in practice.

\subsection{ADJ} In addition to swap arbitrary pairs, one can also swap \emph{adjacent} (ADJ) pairs iteratively from $(1,2)$, $(2,3)$, to $(P-1,P)$ under the Metropolis rule. Despite the convenience, the \emph{sequential nature} requires to wait for exchange information from previous exchanges, which only works well with a small number of chains and has greatly limited its extension to a multi-core or distributed context.

\subsection{SEO} The stochastic even-odd (SEO) scheme first divides the adjacent pairs $\{(p-1, p)\mid p=2,\cdots, P\}$ into $E$ and $O$, where $E$ and $O$ denote even and odd pairs of forms $(2p-1, 2p)$ and $(2p, 2p+1)$, respectively. Then, SEO randomly picks $E$ or $O$ pairs with an equal chance in each iteration to attempt the swaps. Notably, it can be conducted \emph{simultaneously} without waiting from other chains. The scheme yields a reversible process (see Figure \ref{illustration}(a)), however, the gains in overcoming the sequential obstacle don't offset the \emph{$O(P^2)$ round trip time} and SEO is still not effective enough.

\begin{figure*}[!ht]
  \centering
%   %\vskip -0.05in
  \subfloat[\scriptsize{Reversible indexes}]{\includegraphics[width=3.5cm, height=3.5cm]{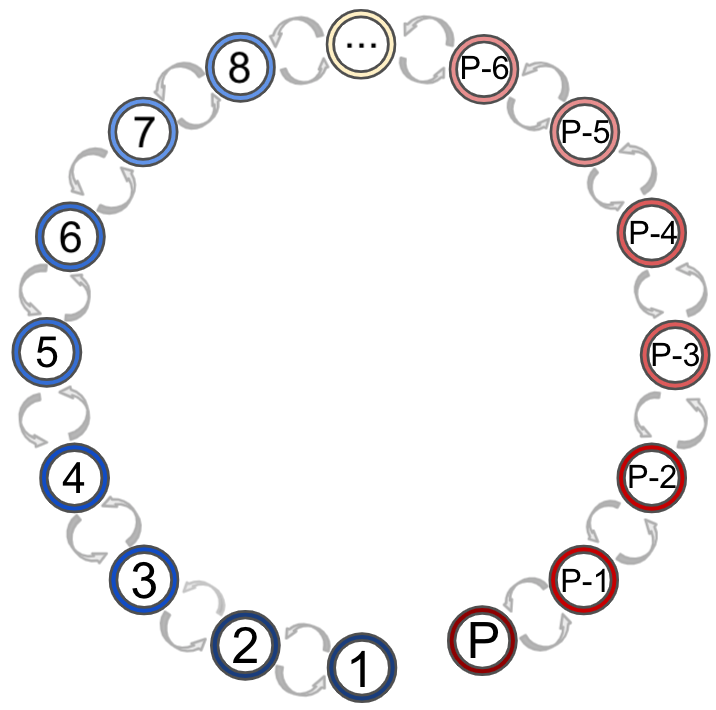}}\qquad
  \subfloat[\scriptsize{Non-reversible indexes}]{\includegraphics[width=3.5cm, height=3.4cm]{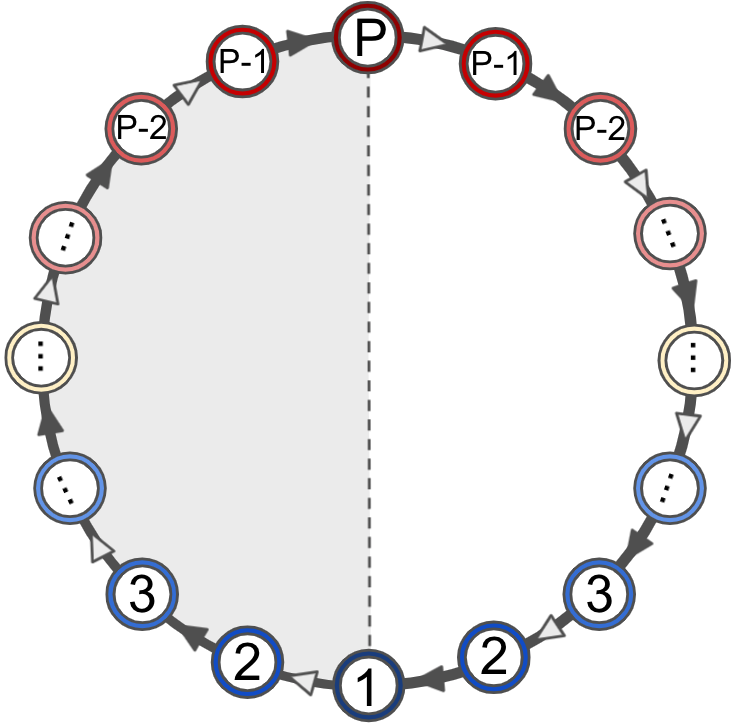}}\qquad
    \subfloat[Non-reversible chains]{\includegraphics[width=7.5cm, height=3.3cm]{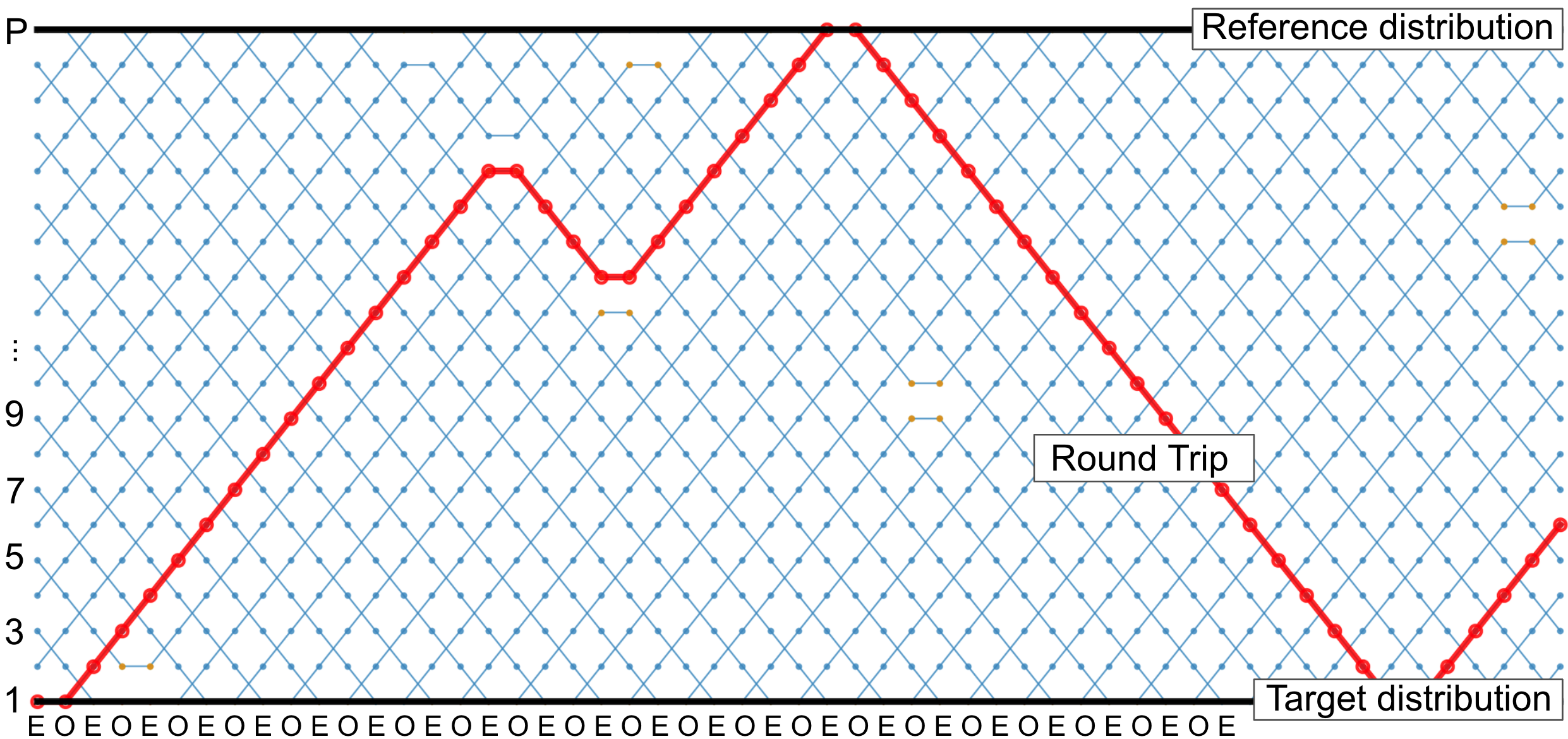}}\enskip
    % %\vskip -0.05in
  \caption{Reversibility v.s. non-reversibility. In (a),  a reversible index takes $O(P^2)$ time to communicate; in (b), an ideal non-reversible index moves along a periodic orbit, where the dark and light arrows denote even and odd iterations, respectively; (c) shows how a non-reversible chain conducts a round trip with the DEO scheme.}
  %\vskip -0.05in
  \label{illustration}
\end{figure*}

\subsection{DEO} The deterministic even-odd (DEO) scheme instead attempts to swap even ($E$) pairs at even ($E$) iterations and odd ($O$) pairs at odd ($O$) iterations alternatingly\footnote[2]{\scriptsize{$E$ shown in iterations means even iterations; otherwise, it denotes even pairs for chain indexes. The same logic applies to $O$.}} \cite{DEO}. The asymmetric manner was later interpreted as a non-reversible PT \cite{Syed_jrssb} and an ideal index process follows a periodic orbit, as shown in Figure \ref{illustration}(b). With a large swap rate, Figure \ref{illustration}(c) shows how the scheme yields an almost straight path and a linear round trip time can be expected.

\subsection{Equi-acceptance} The power of PT hinges on maximizing the number of round trips, which is equivalent to minimizing $\sum_{p=1}^{P-1} \frac{1}{1-r_p}$ \cite{Nadler07_V2}, where $r_p$ denotes the rejection rate for the chain pair $(p, p+1)$. Moreover, $\sum_{p=1}^{P-1} r_p$ converges to a fixed barrier $\Lambda$ as $P\rightarrow \infty$ \cite{Predescu04, Syed_jrssb}. Applying Lagrange multiplies to the constrained optimization problem leads to $r_1=r_2=\cdots=r_{P-1}:=r$, where $r$ is the \emph{equi-rejection rate}. In general, a quadratic round trip time is required for ADJ and SEO due to the reversible indexes. By contrast, DEO only yields a \emph{linear round trip} time in terms of $P$ as $P\rightarrow \infty$ \cite{Syed_jrssb}.

\section{Optimal Non-reversible Scheme for Parallel Tempering}
%\vskip -0.05in
The linear round trip time is appealing for maximizing the algorithmic potential, however, such an advance only occurs given sufficiently many chains. In \emph{non-asymptotic settings} with limited chains, a pearl of wisdom is to avoid frequent swaps \cite{Paul12} and to keep the average acceptance rate from 20\% to 40\% \cite{Kone2005, Martin09, Yves10}. Most importantly, the acceptance rates are severely reduced in big data due to the bias-corrected swaps associated with stochastic energies \cite{deng2020}, see details in section \ref{reSGLD_appendix}. As such, maintaining low rejection rates becomes quite challenging and the \emph{issue of quadratic costs} still exists. 

\subsection{Generalized DEO Scheme}
Continuing the equi-acceptance settings, we see in Figure.\ref{illustration_DEO}(a) that the probability for the blue particle to move upward 2 steps to maintain the same momentum after a pair of even and odd iterations is $(1-r)^2$. As such, with a large equi-rejection rate $r$, the blue particle often makes little progress (Figure.\ref{illustration_DEO}(b-d)). To handle this issue, the key is to propose small enough rejection rates to track the periodic orbit in Figure.\ref{illustration}(b). Instead of pursuing excessive amount of chains, \emph{we resort to a different solution by introducing the generalized even and odd iterations} $E_W$ and $O_W$, where $W\in\mathbb{N}^{+}$, $E_W=\{\lfloor \frac{k}{W}\rfloor \text{ mod } 2 =0\mid k=1,2,\cdots, \infty\}$ and $O_W=\{\lfloor \frac{k}{W}\rfloor \text{ mod } 2 =1\mid k=1,2,\cdots, \infty\}$. Now, we present the generalized DEO scheme with a window size $W$ as follows and refer to it as DEO$_W$: \footnote[4]{{The generalized DEO with the optimal window size is denoted by DEO$_{\star}$ and will be studied in section \ref{windiw_size_round_trip}.}}
\begin{equation}
\begin{split}
\label{gDEO}
     \circ \text{ Attempt to swap } E \text{ (or } O\text{) pairs at }E_W\text{ (or }O_W\text{) iterations.} \qquad\qquad\qquad\qquad\ \ \\
    \circ \text{ Allow }\emph{at most one} \text{ swap during each cycle of }E_W \text{ (or }O_W\text{) iterations.}\qquad\qquad\ \ 
\end{split}
\end{equation}

\begin{figure*}[!ht]
  \centering
%   \vskip -0.1in
  \subfloat[DEO]{\includegraphics[width=2.1cm, height=1.8cm]{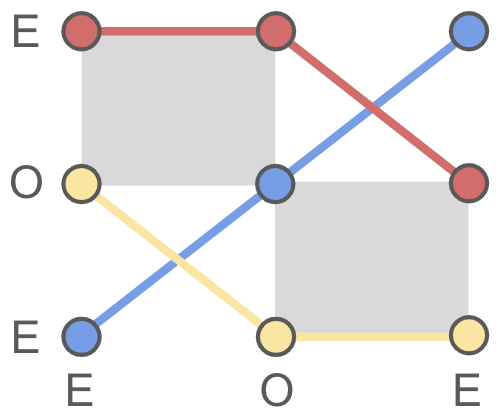}}\qquad
  \subfloat[Bad case 1]{\includegraphics[width=2.1cm, height=1.8cm]{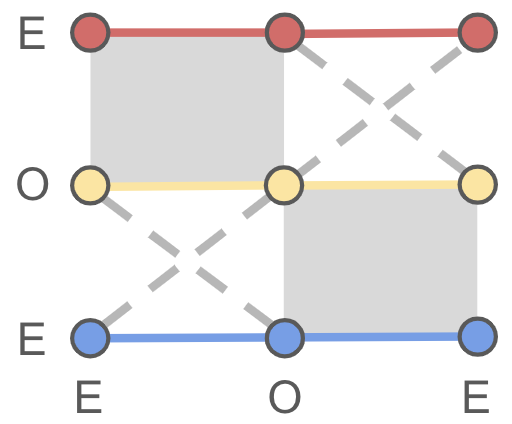}}\qquad
  \subfloat[Bad case 2]{\includegraphics[width=2.1cm, height=1.8cm]{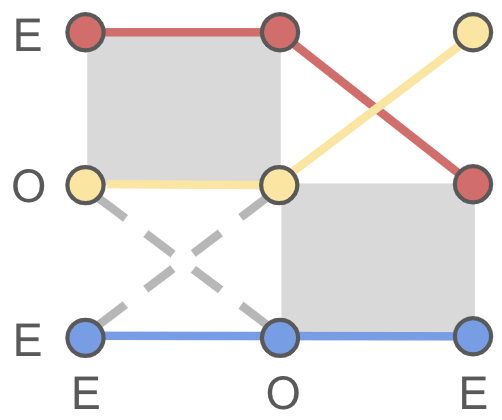}}\qquad
  \subfloat[Bad case 3]{\includegraphics[width=2.1cm, height=1.8cm]{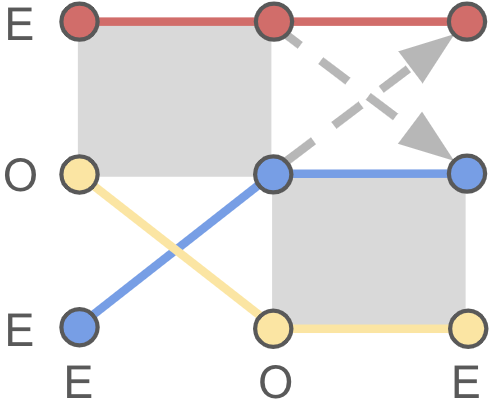}}\qquad
  \subfloat[$\text{DEO}_2$]{\includegraphics[width=3.6cm, height=1.8cm]{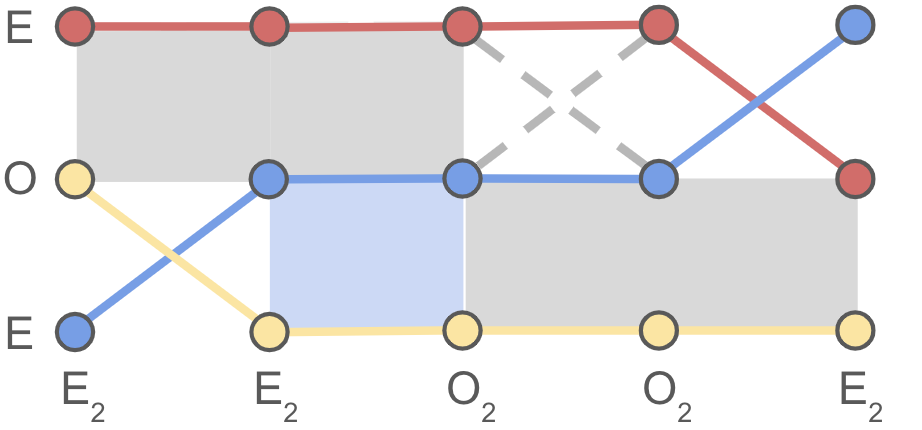}}\qquad
    % \vskip -0.1in
  \caption{Illustration of DEO and DEO$_2$. In (a), we show an ideal case of DEO; (b-d) show bad cases of DEO based on a large equi-rejection rate $r$; (e) show how the generalized DEO scheme of window size 2 tackles the issue with a large $r$. The x-axis and y-axis denote (generalized) $E$ (or $O$) iterations and $E$ (or $O$) pairs, respectively. The dashed line denotes the failed swap attempts; the gray shaded areas are frozen to refuse swapping odd pairs at even iterations (or vice versa); the blue shaded area ensures \emph{at most one swap in a window}.}
%   \vskip -0.1in
  \label{illustration_DEO}
\end{figure*}

As illustrated in Figure.\ref{illustration_DEO}(e), the blue particle has a larger chance of $(1-r^2)^2$ to move upward 2 steps given $W=2$ instead of $(1-r)^2$ when $W=1$, although the window number is also halved. Such a trade-off inspires us to analyze the expected round trip time based on the window of size $W$. Although allowing at most one swap introduces a stopping time and may affect the distribution, the gap of acceptance rates between bias-corrected swap rates and frozen swaps of rate 0 is rather small, as such the bias is rather mild in big data.

\subsection{Analysis of Round Trip Time}
\label{main_round_trip_analysis}

To bring sufficient interactions between the reference distribution $\pi^{(P)}$ and the target distribution $\pi^{(1)}$, we expect to minimize the expected round trip time $T$ (defined in section \ref{others}) to ensure both efficient exploitation and explorations. Combining the Markov property and the idea of the master equation \cite{Nadler07}, we estimate the expected round trip time $\hE[T]$ as follows
\begin{lemma}
\label{thm:round_trip_time}
Under the stationary and weak dependence assumptions B1 and B2 in section \ref{analysis_of_round_trip}, for $P$ ($P\ge 2$) chains with window size $W$ ($W\ge 1$) and rejection rates $\{r_p\}_{p=1}^{P-1}$, we have
\begin{align} \label{eq:RTT_main}
\footnotesize{\hE[T]=2WP+2WP\sum_{p=1}^{P-1}\frac{r_p^W}{1-r_p^W}}.
\end{align}
\end{lemma}
The proof in section \ref{analyze_round_trip} shows that $\E[T]$ increases as we adopt larger number of chains $P$ and rejection rates $\{r_p\}_{p=1}^{P-1}$. In such a case, the round trip rate $\frac{P}{\hE[T]}$ is also maximized by the key renewal theorem. In particular, applying $W=1$ recovers the vanilla DEO scheme.

\subsection{Analysis of Optimal Window Size and Round Trip Time}
\label{windiw_size_round_trip}
By Lemma \ref{thm:round_trip_time}, we observe a potential to remove the second quadratic term given an appropriate $W$. Such a fact motivates us to study the optimal window size $W$ to achieve the best efficiency. Under the equi-acceptance settings, by treating the window size $W$ as a continuous variable and taking the derivative of $\E[T]$ with respect to $W$, we have
\begin{equation} \label{eq:dWRTT_main}
\frac{\partial}{\partial W}\hE[T]=\frac{2P}{(1-r^W)^2}\left\{(1-r^W)^2+(P-1)r^W(1-r^W+W\log r)\right\},
\end{equation}
where $r$ is the equi-rejection rate for adjacent chains. Define $x:=r^W\in(0, 1)$, where $W=\log_r(x)=\frac{\log x}{\log r}$. The following analysis hinges on the study of the solution $g(x)=(1-x)^2+(P-1)x(1-x+\log(x))=0$. By analyzing the growth of derivatives and boundary values, we can easily identify the \emph{uniqueness} of the solution. Then, we proceed to verify that $\frac{1}{P\log P}$ yields an asymptotic approximation such that $g(\frac{1}{P\log P})=-\frac{\log(\log P)}{\log P}+O\left(\frac{1}{\log P}\right)\rightarrow 0$ as $P\rightarrow \infty$. In the end, we have

\begin{theorem} \label{col:W_approx}
Under Assumptions \ref{Stationarity_process} and \ref{weak_independence} based on equi-acceptance settings, if $P=2,3$, the maximal round trip time is achieved when $W=1$. If $P\ge 4$, with the optimal window size $W_{\star}\approx\left\lceil\frac{\log P+\log\log P}{-\log r}\right\rceil$, where $\lceil\cdot\rceil$ is the ceiling function. The round trip time follows $O(\frac{P\log P}{-\log r})$. 
\end{theorem}

The above result yields a remarkable round trip time of $O(P\log P)$ by setting the optimal window size $W_{\star}$. By contrast, the vanilla DEO scheme only leads to a longer time of $O(P^2)$ \footnote[4]{By Taylor expansion, given a large rejection rate $r$, $-\log(r)=1-r$, which means $\frac{1}{-\log(r)}=O(\frac{r}{1-r})$.}. Denoting by $\text{DEO}_{\star}$ the generalized DEO scheme with the optimal window size $W_{\star}$, we summarize the popular swap schemes in Table.\ref{round_trip_time_cost}, where the DEO$_{\star}$ scheme performs the best among all the three criteria.

\begin{table}[ht]
\begin{sc}
\vspace{0.1in}
\caption{{Round trip time and swap time for different schemes. The APE scheme requires an expensive swap time of $O(P^3)$ and is not compared}.} \label{round_trip_time_cost}
\vspace{0.1in}
\begin{center} 
\begin{tabular}{c|cccc}
\hline
 & \textbf{\footnotesize{Round trip time} \footnotesize{(non-asymptotic)}} & \footnotesize{Round trip time} \footnotesize{(asymptotic)} & \footnotesize{Swap time} \\
\hline
ADJ & $O(P^2)$ {\cite{Nadler07}} & $O(P^2)$ {\cite{Nadler07}} & $O(P)$ \\
SEO & $O(P^2)$ {\cite{Syed_jrssb}} & $O(P^2)$ {\cite{Syed_jrssb}} & \bm{$O(1)$} \\
DEO & $O(P^2)$ {\cite{Syed_jrssb}} & \bm{$O(P)$} {\cite{Syed_jrssb}}  & \bm{$O(1)$} \\
\hline
${\text{DEO}_{\star}}$  &  $\bm{O(P}$ \textbf{\upshape{log}} $\bm{P)}$  & \bm{$O(P)$}  & \bm{$O(1)$} \\
\hline
\end{tabular}
\end{center}
\end{sc}
\vspace{-0.05in}
\end{table}

\section{User-friendly Approximate Explorations in Big Data}

Despite the asymptotic correctness, SGLD only works well given \emph{small enough learning rates} and fails in explorative purposes \cite{Ahn12}. A large learning rate, however, leads to excessive stochastic gradient noise and ends up with a crude approximation. As such, similar to \cite{SWA1, ruqi2020}, we only adopt SGLD for exploitations.

Efficient explorations not only require a high temperature but also prefer a large learning rate. Such a demand inspires us to consider SGD with a constant learning rate $\eta$ as the exploration component
\begin{equation}
\begin{split}\label{sgd_approx}
    \bbeta_{k+1} =\bbeta_{k} - \eta(\nabla L(\bbeta_k)+\varepsilon(\bbeta_k))=\bbeta_{k} - \eta\nabla L(\bbeta_k)+\sqrt{2\eta \big(\frac{\eta}{2}\big)}\varepsilon(\bbeta_k),
\end{split}
\end{equation}
where $\varepsilon(\bbeta_k)\in\mathbb{R}^d$ is the stochastic gradient noise. Under mild normality assumptions on $\varepsilon$ \cite{Mandt, SGD_AOS}, $\bbeta_{k}$ converges approximately to an invariant distribution, where the underlying \emph{temperature linearly depends on the learning rate $\eta$}. Motivated by this fact, we propose an approximate transition kernel $\mathcal{T}_{\eta}$ with $P$ parallel \emph{SGD runs} based on different learning rates
\begin{equation}  
\begin{split}
\label{pt_sgd}
\textbf{Exploration: }\left\{  
             \begin{array}{lr}  
             \footnotesize{\small{\bbeta_{k+1}^{(P)}=\bbeta_{k}^{(P)} - \eta^{(P)}\nabla \widetilde L(\bbeta_k^{(P)})}}, \\  
              \\
              \footnotesize{\cdots}  \\
             \footnotesize{\bbeta_{k+1}^{(2)} =\bbeta_{k}^{(2)} - \eta^{(2)}\nabla \widetilde L(\bbeta_k^{(2)})},\\
             \end{array}
\right. \qquad\qquad\qquad\qquad\  & \\
\textbf{Exploitation: } \ \quad\footnotesize{\bbeta_{k+1}^{(1)}=\bbeta_{k}^{(1)} - \eta^{(1)}\nabla \widetilde L(\bbeta_k^{(1)})\ \ +\overbrace{\Xi_k}^{\text{optional}},\qquad\qquad\quad\ \ }\ &\\
\end{split}
\end{equation}
where $\eta^{(1)}<\eta^{(2)}<\cdots<\eta^{(P)}$, $\Xi_k\sim \mathcal{N}(0, 2\eta^{(1)}\tau^{(1)})$, and $\tau^{(1)}$ is the target temperature.

Since there exists an optimal learning rate for SGD to estimate the desired distribution through Laplace approximation \cite{Mandt}, the exploitation kernel can be also replaced with SGD based on constant learning rates if the accuracy demand is not high. Regarding the validity of adopting different learning rates for parallel tempering, we leave discussions to section \ref{diff_lr}.

Moreover, the stochastic gradient noise exploits the Fisher information \cite{Ahn12, zhanxing_anisotropic, Chaudhari17} and yields convergence potential to wide optima with good generalizations \cite{Berthier, difan_2021}. Despite the implementation convenience, the inclusion of SGDs has made the temperature variable inaccessible, rendering a difficulty in implementing the Metropolis rule Eq.(\ref{swap_function}). To tackle this issue, we utilize the randomness in stochastic energies and propose a \emph{deterministic swap condition} for the approximate kernel $\mathcal{T}_{\eta}$ in Eq.(\ref{pt_sgd}) such that
\begin{equation}
\label{deterministic_swap}
    \textbf{Deterministic swap condition: }\ \footnotesize{(\bbeta^{(p)}, \bbeta^{(p+1)})\rightarrow (\bbeta^{(p+1)}, \bbeta^{(p)}) \text{ if } \widetilde L(\bbeta^{(p+1)})+\mathbb{C}<\widetilde L(\bbeta^{(p)})},
\end{equation}
where $p\in\{1,2,\cdots, P-1\}$, $\mathbb{C}>0$ is a correction buffer to approximate the Metropolis rule Eq.(\ref{swap_function}). 

In addition, our proposed algorithm for uncertainty approximation is highly related to non-convex optimization. For the detailed discussions, we refer interested readers to section \ref{connection_2_non_convex}.

\subsection{Equi-acceptance Parallel Tempering on Optimized Paths}

Stochastic approximation (SA) is a standard method to achieve equi-acceptance \cite{Yves10, Miasojedow_2013}, however, implementing this idea with fixed $\eta^{(1)}$ and $\eta^{(P)}$ is rather non-trivial. Motivated by the linear relation between learning rate and temperature, we propose to adaptively \emph{optimize the learning rates} to achieve equi-acceptance in a user-friendly manner. Further by the geometric temperature spacing commonly adopted by practitioners \cite{Kofke02, parallel_tempering05, Syed_jrssb}, we adopt the following scheme on a \emph{logarithmic scale} such that
\begin{equation}
\label{sa_exp}
    \partial \log(\upsilon_t^{(p)})=h^{(p)}(\upsilon_t^{(p)}),
\end{equation}
where $p\in\{1,2,\cdots, P-1\}$, $\upsilon_t^{(p)}=\eta_t^{(p+1)}-\eta_t^{(p)}$, $h^{(p)}(\upsilon_t^{(p)}):=\int H^{(p)}(\upsilon_k^{(p)}, \bbeta) \pi^{(p, p+1)}(d\bbeta)$ is the mean-field function, $\pi^{(p, p+1)}$ is the joint invariant distribution for the $p$-th and $p+1$-th processes.  In particular, $H^{(p)}(\upsilon_k^{(p)}, \bbeta)=1_{\widetilde L( \bbeta^{(p+1)})+\mathbb{C}<\widetilde L( \bbeta^{(p)})}-\mathbb{S}$ is the random-field function to approximate $h^{(p)}(\upsilon_k^{(p)})$ with limited perturbations, $\upsilon_k^{(p)}$ \footnote[2]{For convenience, $\upsilon_t^{(p)}$ denotes the continuous-time diffusion at time $t$ and $\upsilon_k^{(p)}$ represents the discrete approximations at iteration $k$.} implicitly affects the distribution of the indicator function, and $\mathbb{S}$ is the target swap rate.  Now consider stochastic approximation of Eq.(\ref{sa_exp}), we have
\begin{equation}
\label{sa_exp_euler}
    \log(\upsilon_{k+1}^{(p)})=\log(\upsilon_{k}^{(p)}) + \gamma_k H^{(p)}(\upsilon_k^{(p)}, \bbeta_k),
\end{equation}
where $\gamma_k$ is the step size. Reformulating Eq.(\ref{sa_exp_euler}), we have
\begin{equation*}
\label{sa_exp_taylor}
    \upsilon_{k+1}^{(p)} = \max(0, \upsilon_{k}^{(p)})e^{\gamma_k H(\upsilon_k^{(p)})},
\end{equation*}
where the $\max$ operator is conducted explicitly to ensure the sequence of learning rates is non-decreasing. This means that given fixed boundary learning rates (temperatures) $\eta_k^{(p-1)}$ and $\eta_k^{(p+1)}$, applying $\eta^{(p)}=\eta^{(p-1)}+\upsilon^{(p)}$ and $\eta^{(p)}=\eta^{(p+1)}-\upsilon^{(p+1)}$ for $p\in\{2,3,\cdots, P-1\}$ lead to
\begin{equation}
\begin{split}
    \label{double_SA_forward_backward}
    \underbrace{\eta_k^{(p-1)}+ \max(0, \upsilon_{k}^{(p)})e^{\gamma_k H(\upsilon_k^{(p)})}}_{\text{forward sequence}}=\eta_{k+1}^{(p)}=\underbrace{\eta_k^{(p+1)}- \max(0, \upsilon_{k}^{(p+1)})e^{\gamma_k H(\upsilon_k^{(p+1)})}}_{\text{backward sequence}}.
\end{split}
\end{equation}

\subsubsection{Adaptive Learning Rates (Temperatures)} Now given a fixed $\eta^{(1)}$, the sequence $\eta^{(2)}$, $\eta^{(3)}$, $\cdots$, $\eta^{(P)}$ can be approximated iteratively via the forward sequence of (\ref{double_SA_forward_backward}); conversely, given a fixed $\eta^{(P)}$, the backward sequence $\eta^{(P-1)}$, $\eta^{(P-2)}$, $\cdots$, $\eta^{(1)}$ can be decided reversely as well. Combining the forward and backward sequences, $\eta_{k+1}^{(p)}$ can be approximated via 
\begin{equation}
\begin{split}
    \label{double_SA}
    \eta_{k+1}^{(p)}:&=\frac{\eta_k^{(p-1)}+\eta_k^{(p+1)}}{2}+\frac{\max(0, \upsilon_{k}^{(p)})e^{\gamma_k H(\upsilon_k^{(p)})} - \max(0, \upsilon_{k}^{(p+1)})e^{\gamma_k H(\upsilon_k^{(p+1)})}}{2},
\end{split}
\end{equation}
which resembles the \emph{binary search} in the SA framework. In particular, the first term is the middle point given boundary learning rates and the second term continues to penalize learning rates that violates the equi-acceptance between pairs $(p-1, p)$ and $(p, p+1)$ until an equilibrium is achieved. 

This is the first attempt to achieve equi-acceptance given two fixed boundary values to our best knowledge. By contrast, \cite{Syed_jrssb} proposed to estimate the barrier $\Lambda$ to determine the temperatures and it easily fails in big data given a finite number of chains and bias-corrected swaps.

\subsubsection{Adaptive Correction Buffers} In addition, equi-acceptance does not guarantee a convergence to the desired  acceptance rate $\mathbb{S}$. To avoid this issue, we propose to adaptively optimize $\mathbb{C}$ as follows
\begin{equation}\label{unknown_threhold}
    \mathbb{C}_{k+1}=\mathbb{C}_k+\gamma_k\left(\frac{1}{P-1}\sum_{p=1}^{P-1} 1_{\widetilde L( \bbeta_{k+1}^{(p+1)})+\mathbb{C}_k-\widetilde L( \bbeta_{k+1}^{(p)})<0}-\mathbb{S}\right).
\end{equation}

As $k\rightarrow \infty$, the threshold and the adaptive learning rates converge to the desired fixed points. Note that setting a uniform $\mathbb{C}$ greatly simplifies the algorithm; in more delicate cases, problem-specific rules are also recommended. Now we refer to the approximate non-reversible parallel tempering algorithm with the DEO$_{\star}$ scheme and SGD-based exploration kernels as DEO$_{\star}$-SGD and formally formulate our algorithm in Algorithm \ref{alg}. Extensions of SGD with a preconditioner \cite{Li16} or momentum \cite{Chen14} to further improve the approximation and efficiency are both straightforward \cite{Mandt} and are denoted as DEO$_{\star}$-pSGD and DEO$_{\star}$-mSGD, respectively.

\begin{algorithm}[tb]
  \small
%   \vskip -0.15in
  \caption{Non-reversible parallel tempering with SGD-based exploration kernels (DEO$_{\star}$-SGD).}
  \label{alg}
\begin{algorithmic}
\STATE{\textbf{Input} Number of chains $P\geq 3$, boundary learning rates $\eta^{(1)}$ and $\eta^{(P)}$, target swap rate $\mathbb{S}$.}
\STATE{\textbf{Input} Optimal window size $W:=\left\lceil \frac{\log P + \log\log P}{-\log(1-\mathbb{S})}\right\rceil$, total iterations $K$, and step sizes $\{\gamma_{k}\}_{k=0}^K$.}

\FOR{ $k=1, 2, \text{to } K$ } 
    \STATE{$\bbeta_{k+1}\sim \mathcal{T}_{\eta}(\bbeta_{k})$ following Eq.(\ref{pt_sgd}) \qquad\qquad\ \  $\triangleright$ Exploration / exploitation phase (parallelizable)}
    
    \STATE{$\mathcal{P}=\big\{\forall p\in\{1, 2,\cdots, P\}: p \text{ mod } 2 = \lfloor \frac{k}{W}\rfloor \text{ mod } 2\big\}$. \qquad\qquad$\triangleright$ Generalized even/odd iterations}

    \FOR{ $p=1,2, \text{to } P-1$ }
        \STATE{$\mathcal{A}^{(p)}:=1_{\widetilde L( \bbeta_{k+1}^{(p+1)})+\mathbb{C}_k<\widetilde L( \bbeta_{k+1}^{(p)})}$}

        \STATE{$\mathcal{G}^{(p)}:=1_{k\text{ mod } W=0}$. \quad\qquad\qquad\qquad\qquad\qquad\qquad\qquad\qquad\quad $\triangleright$ Open the gate to allow swaps}
        
        \IF{$p\in \mathcal{P}$ \text{ and } $\mathcal{G}^{(p)}$ \text{ and } $\mathcal{A}^{(p)}$}

        \STATE{\emph{Swap: } $ \bbeta_{k+1}^{(p)}$ and $ \bbeta_{k+1}^{(p+1)}$. \qquad\qquad\qquad\qquad\qquad\ \ \ $\triangleright$ Communication phase (parallelizable)}
        \STATE{ \emph{Freeze: } $\mathcal{G}^{(p)} = 0.$ \qquad\qquad\qquad\qquad\qquad\qquad\qquad\qquad\quad\ \ $\triangleright$ Close the gate to refuse swaps}
        \ENDIF
        \IF{$p>1$}
        \STATE{\emph{Update learning rate (temperature) following Eq.(\ref{double_SA})}}
        \ENDIF
    \ENDFOR
    
    \STATE{\emph{Adaptive correction buffer: } $\mathbb{C}_{k+1}=\mathbb{C}_k+\gamma_{k}\left(\frac{1}{P-1}\sum_{p=1}^{P-1} \mathcal{A}^{(p)}-\mathbb{S}\right).$}
\ENDFOR
\STATE{\textbf{Output } Models collected from the target temperature  $\{\bbeta_{k}^{(1)}\}_{k=1}^{K}$.}
% \vskip -0.15in
\end{algorithmic}
\end{algorithm}

\section{Experiments}
% %\vskip -0.05in
\subsection{Simulations of Multi-modal Distributions}

We first simulate the proposed algorithm on a distribution $\pi(\bbeta)\propto \exp(-L(\bbeta))$, where $\bbeta=(\beta_1, \beta_2)$, $L(\bbeta)=0.2(\beta_1^2+\beta_2^2)-2(\cos(2\pi \beta_1)+\cos(2\pi \beta_2))$. The heat map is shown in Figure \ref{test_of_acceptance_rates}(a) with 25 modes of different volumes. To mimic big data scenarios, we can only access stochastic gradient $\nabla\widetilde L(\bbeta)=\nabla L(\bbeta)+2\mathcal{N}(0, \bm{I}_{2\times2})$ and stochastic energy $\widetilde L(\bbeta)=L(\bbeta)+2\mathcal{N}(0, I)$.

\begin{figure*}[!ht]
  \centering
%   \vskip -0.1in
  \subfloat[\scriptsize{Ground truth} ]{\includegraphics[width=2.3cm, height=2.3cm]{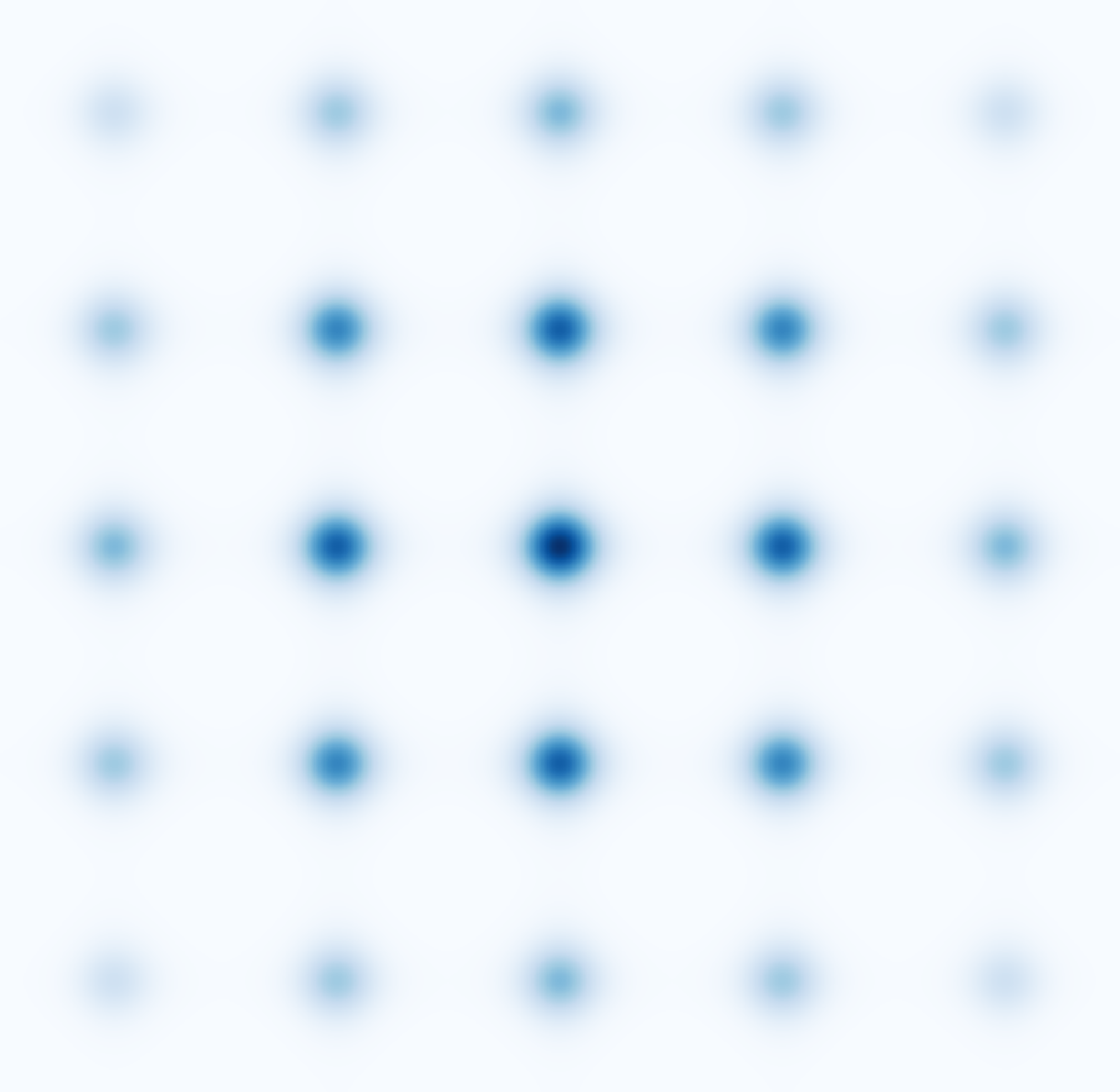}}\qquad
  \subfloat[$\mathbb{S}=0.2$]{\includegraphics[width=2.3cm, height=2.3cm]{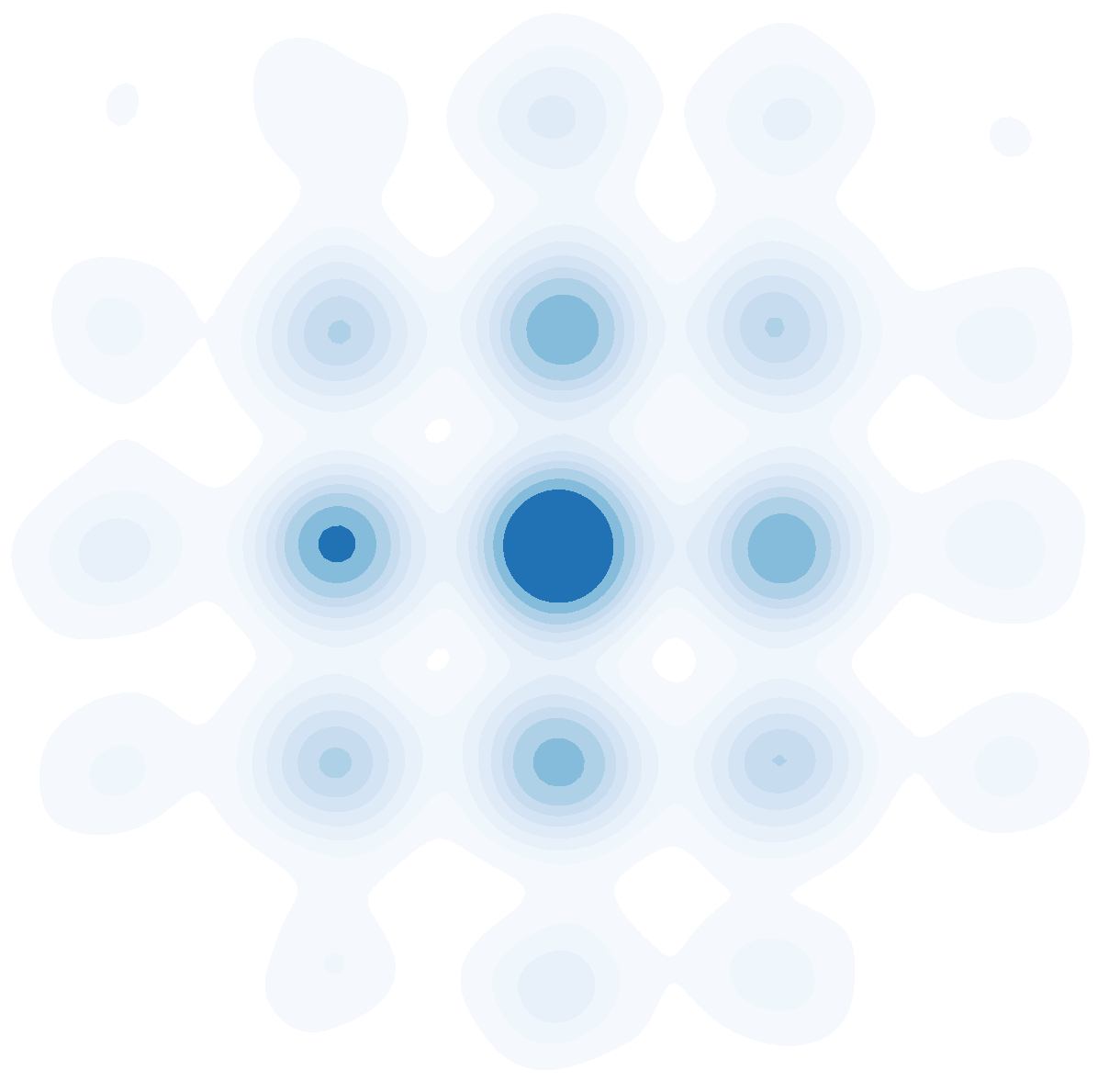}}\quad
  \subfloat[$\mathbb{S}=0.3$]{\includegraphics[width=2.3cm, height=2.3cm]{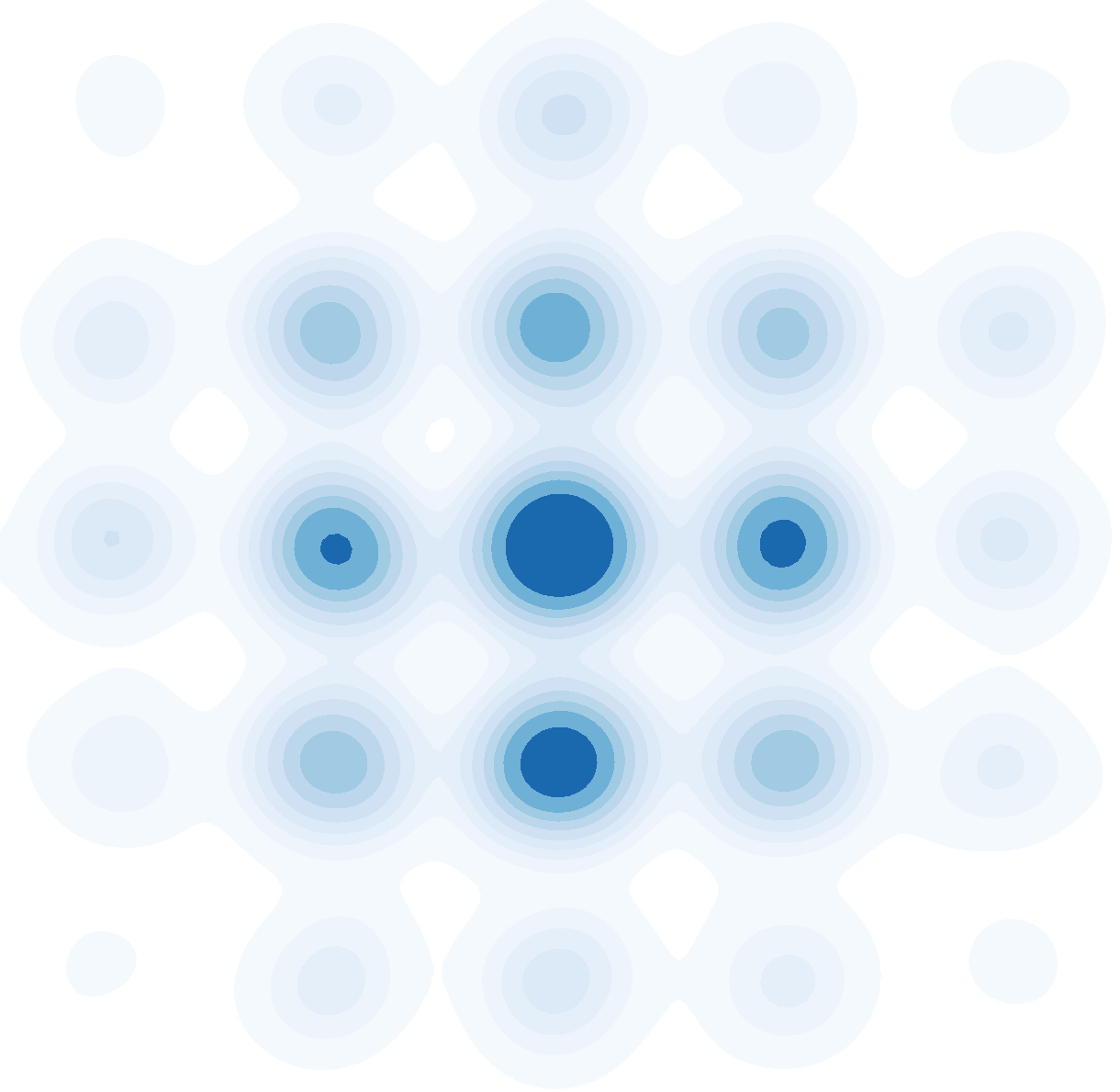}}\quad
  \subfloat[$\mathbb{S}=0.4$]{\includegraphics[width=2.3cm, height=2.3cm]{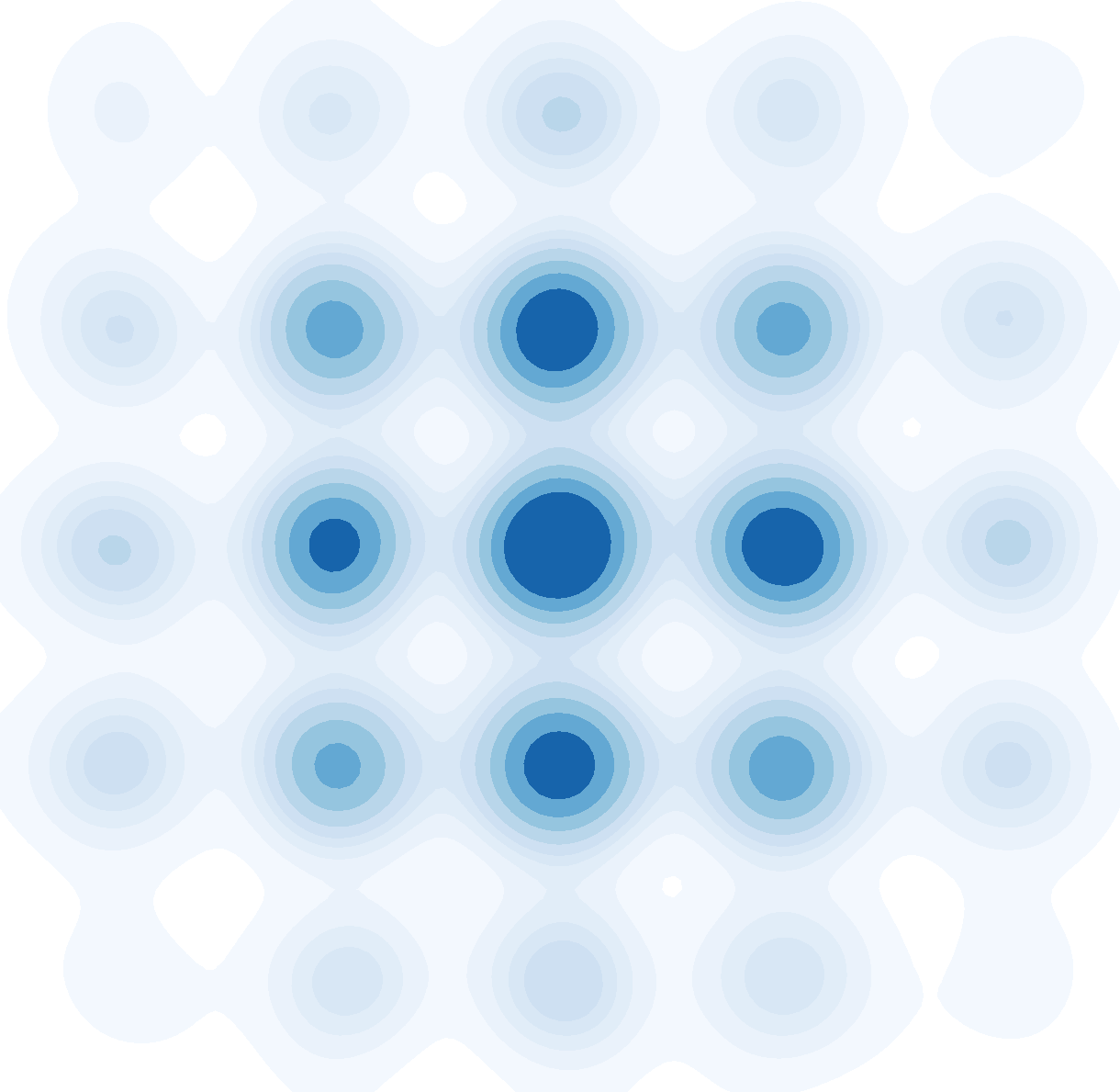}}\quad
  \subfloat[$\mathbb{S}=0.5$]{\includegraphics[width=2.3cm, height=2.3cm]{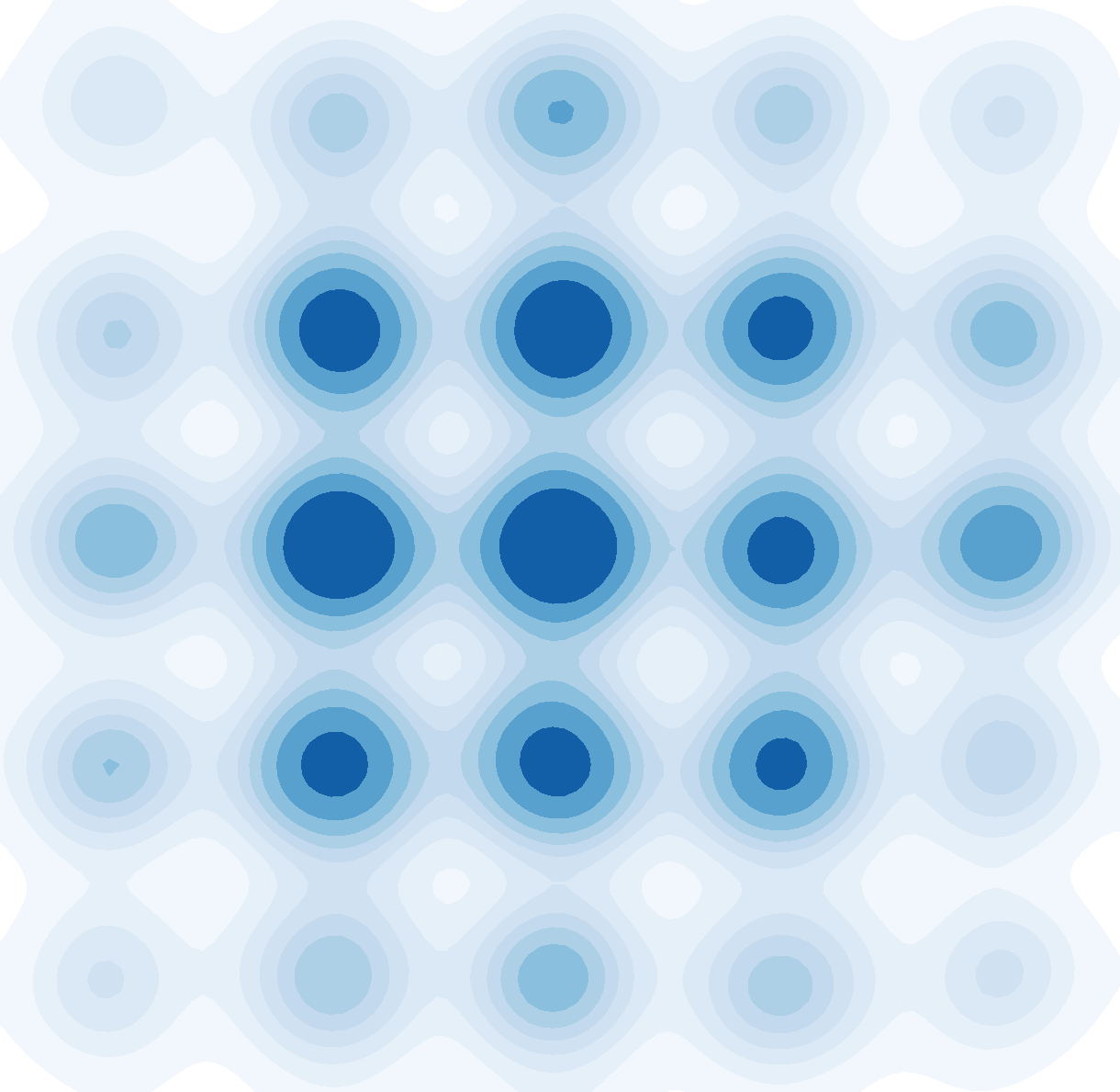}}\quad
  \subfloat[$\mathbb{S}=0.6$]{\includegraphics[width=2.3cm, height=2.3cm]{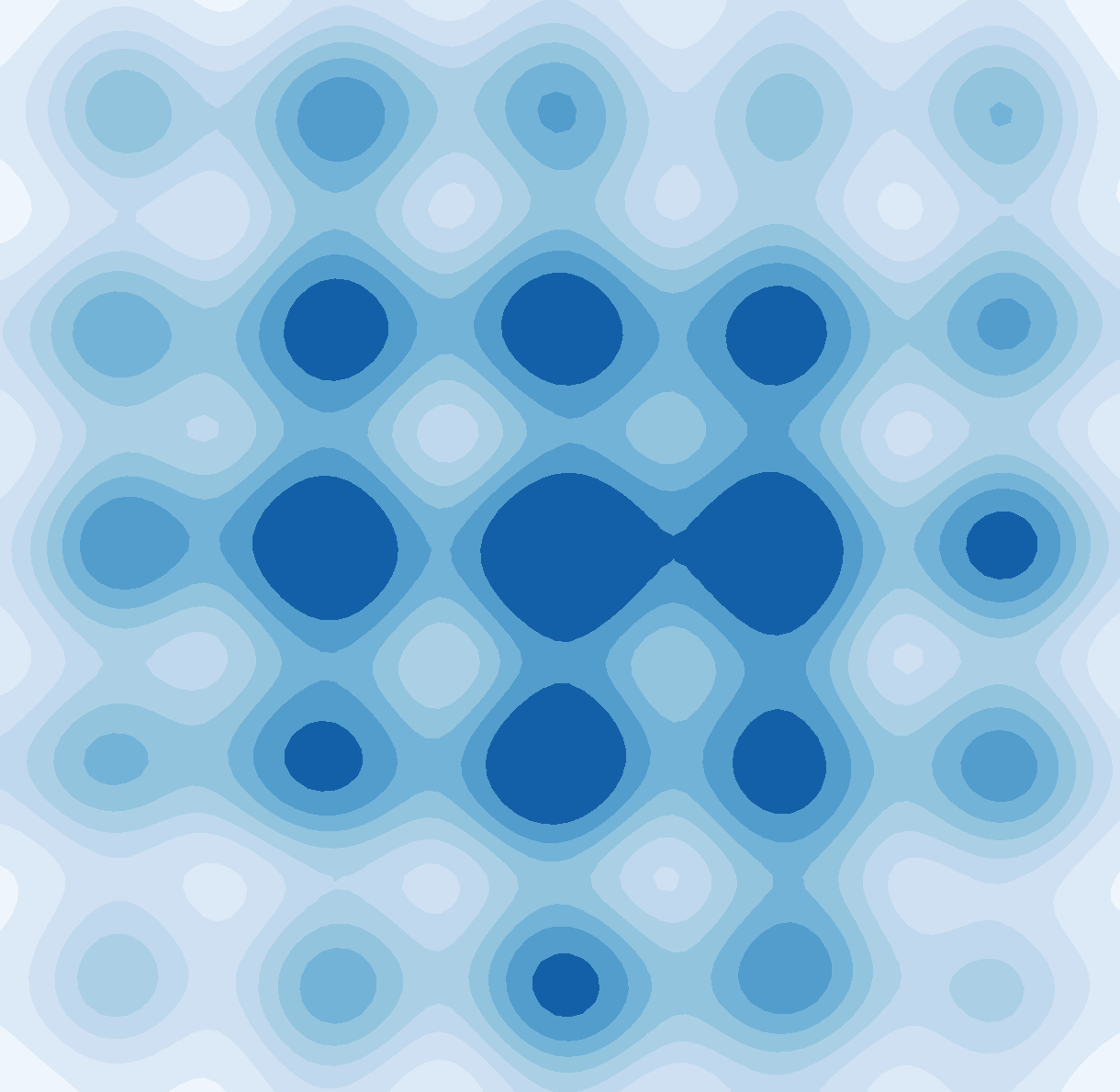}}
    % \vskip -0.1in
  \caption{Study of different target swap rate $\mathbb{S}$ via DEO$_{\star}$-SGD, where SGLD is the exploitation kernel.}
  \label{test_of_acceptance_rates}
  \vspace{-0.4em}
\end{figure*}

We first run DEO$_{\star}$-SGD$\times$P16 based on 16 chains and 20,000 iterations. We fix the lowest learning rate 0.003 and the highest learning 0.6 and propose to tune the target swap rate $\mathbb{S}$ for the acceleration-accuracy trade-off. Fig.\ref{test_of_acceptance_rates} shows that fixing $\mathbb{S}=0.2$ or 0.3 is too conservative and underestimates the uncertainty on the corners; $\mathbb{S}=0.6$ results in too many radical swaps and eventually leads to crude estimations; by contrast, $\mathbb{S}=0.4$ yields the best uncertainty approximation among the five choices.

Next, we select $\mathbb{S}=0.4$ and study the round trips. We observe in Fig.\ref{learning_rate_acceptance_rate}(a) that the vanilla DEO only yields 18 round trips every 1,000 iterations; by contrast, slightly increasing $W$ tends to improve the efficiency significantly and the optimal 45 round trips are achieved at $W=8$, which \emph{matches our theory}. In Fig.\ref{learning_rate_acceptance_rate}(b-c), the geometrically initialized learning rates lead to unbalanced acceptance rates in the early phase and some adjacent chains have few swaps and others swap too much, but as the optimization proceeds, the learning rates gradually converge. We also observe in Fig.\ref{learning_rate_acceptance_rate}(d) that the correction is adaptively estimated to ensure the average acceptance rates converge to $\mathbb{S}=0.4$.

\begin{figure*}[!ht]
  \centering
%   \vskip -0.1in
  \subfloat[Round trips]{\includegraphics[width=3cm, height=3cm]{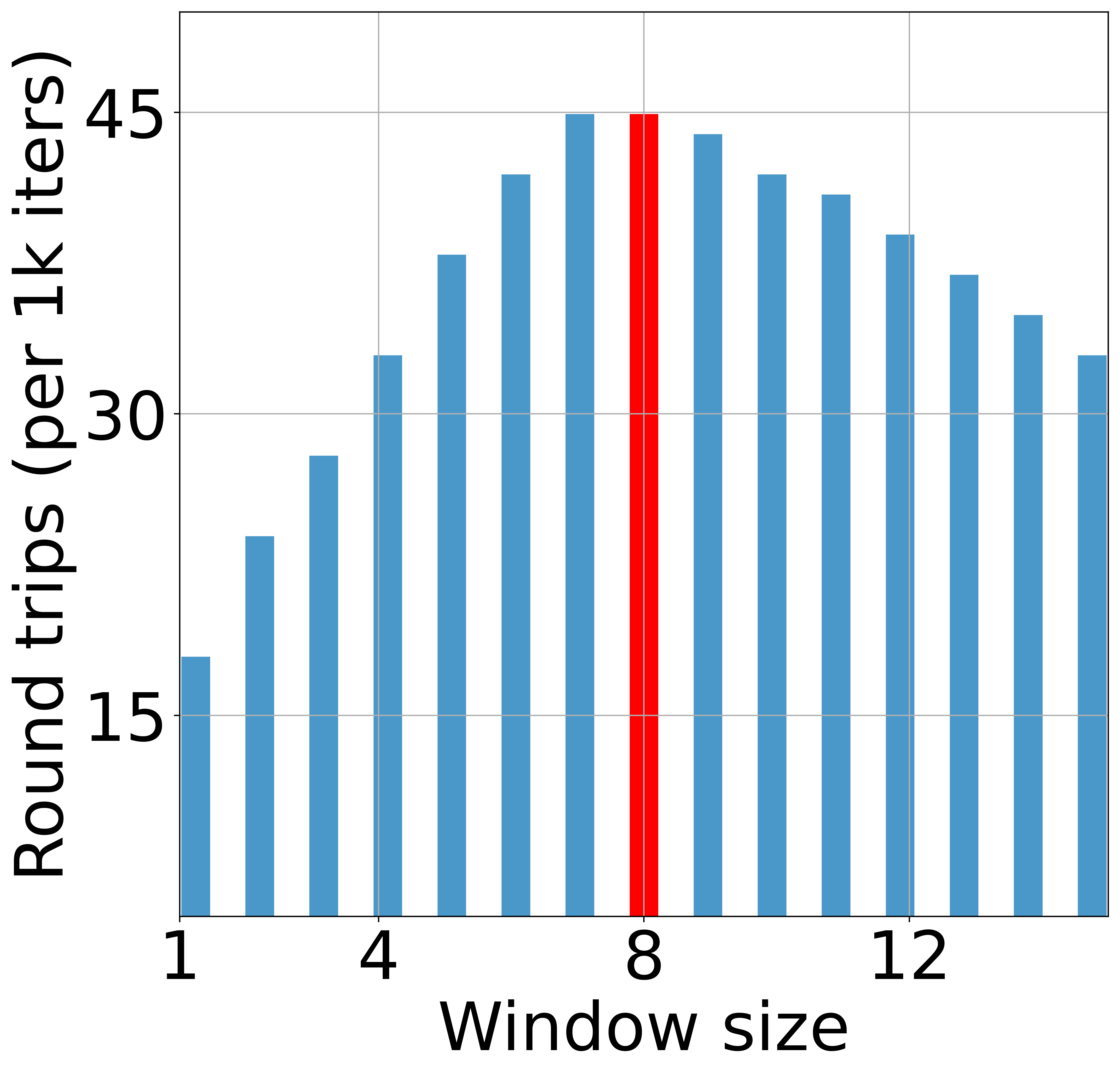}}\qquad
  \subfloat[Learning rates]{\includegraphics[width=3.4cm, height=3cm]{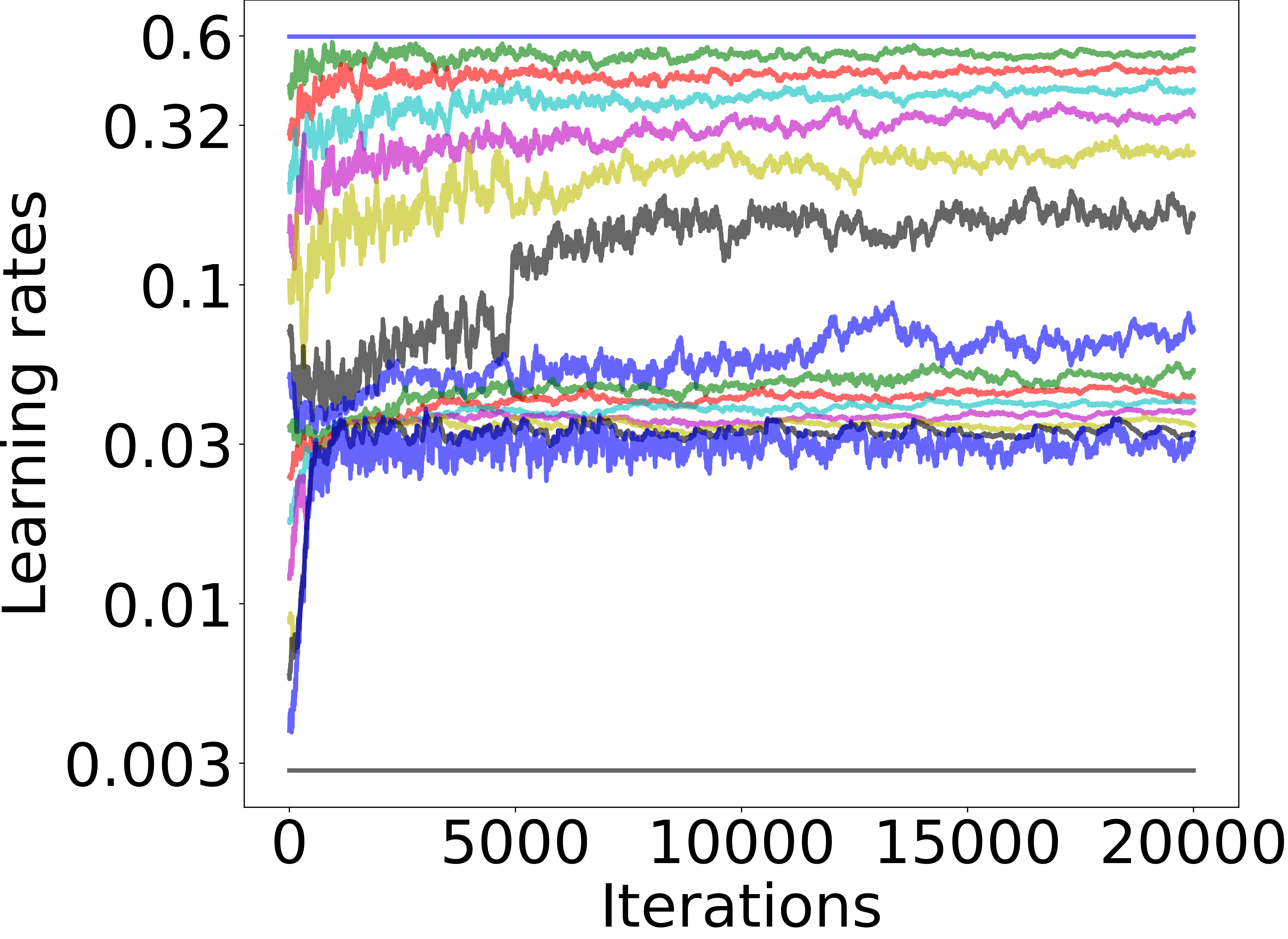}}\qquad
  \subfloat[Acceptance rates]{\includegraphics[width=3.4cm, height=3cm]{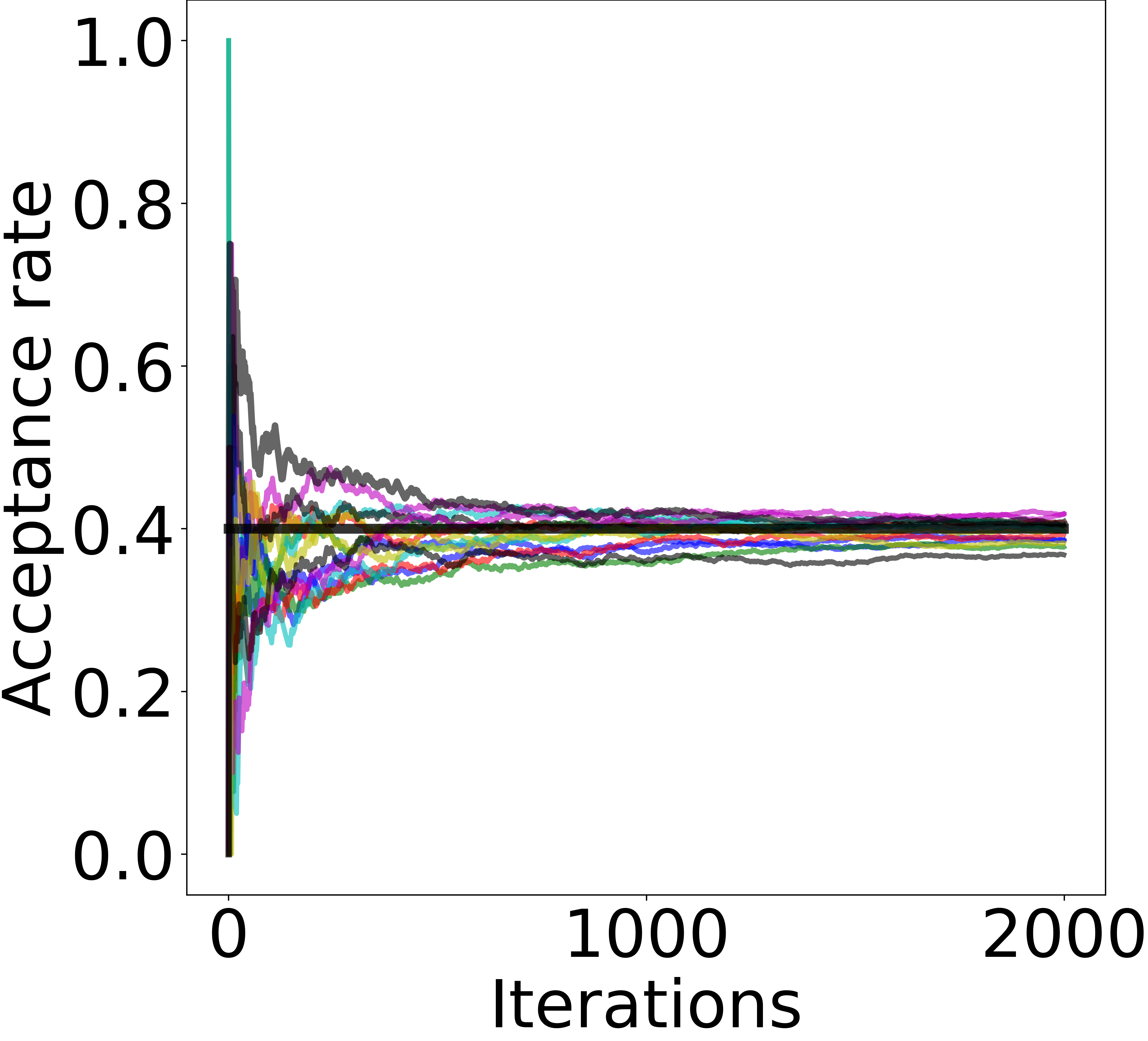}}\qquad
  \subfloat[Corrections]{\includegraphics[width=3.2cm, height=3cm]{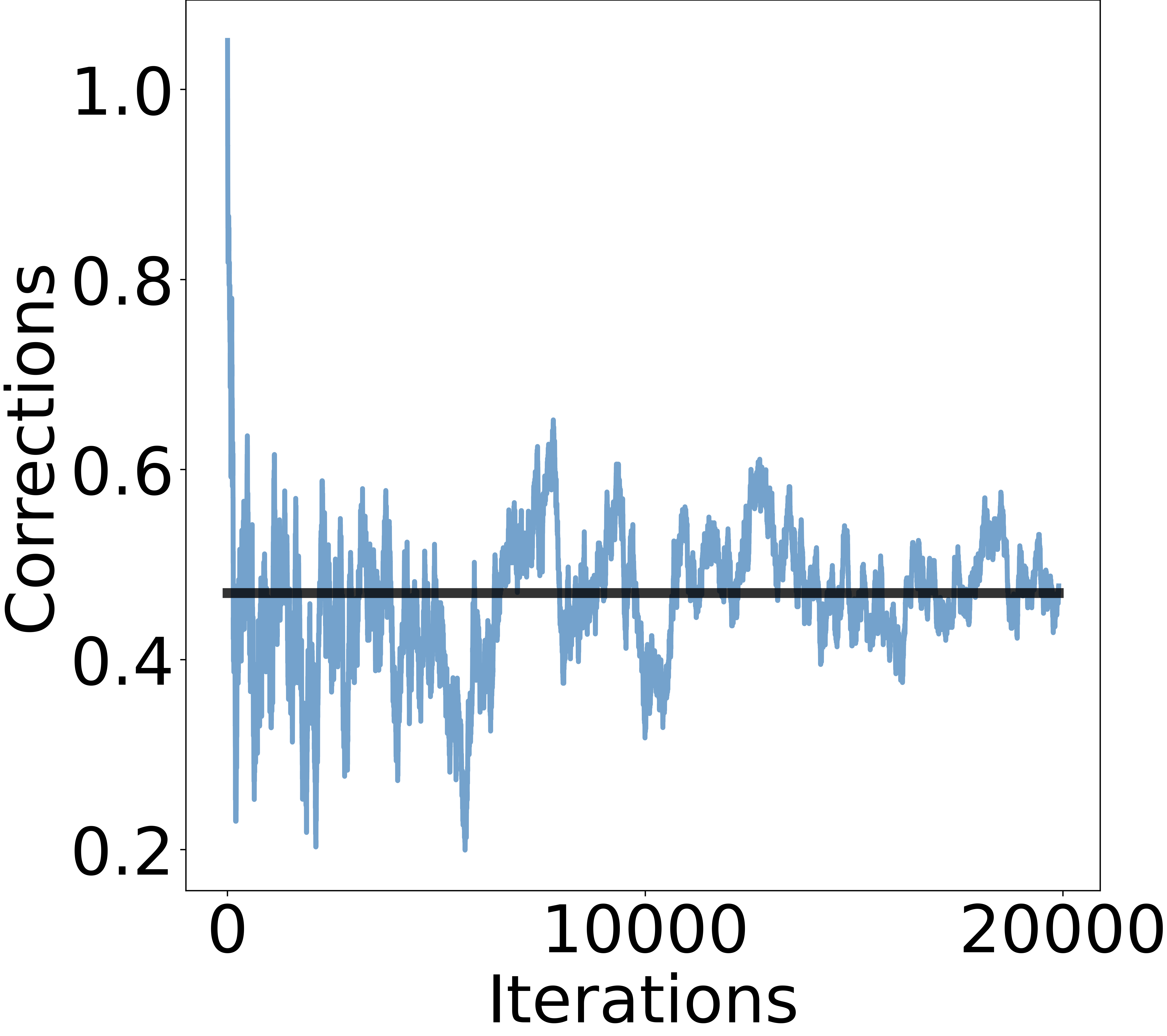}}\quad
    % \vskip -0.1in
  \caption{Study of window sizes, learning rates, acceptance rates, and the corrections.}
  \label{learning_rate_acceptance_rate}
  \vspace{-0.5em}
\end{figure*}

We compare the proposed algorithm with parallel SGLD based on 20,000 iterations and 16 chains (SGLD$\times$P16); we fix the learning rate 0.003 and a temperature 1. We also run cycSGLD$\times$T16, which is short for a single long chain based on 16 times of budget and cosine learning rates \cite{ruqi2020} of 100 cycles. We see in Figure \ref{multi_modal_simulation}(b) that SGLD$\times$P16 has good explorations but fails to quantify the uncertainty. Figure \ref{multi_modal_simulation}(c) shows that cycSGLD$\times$T16 explores most of the modes but overestimates some areas occasionally. Figure \ref{multi_modal_simulation}(d) demonstrates the DEO-SGD with 16 chains (DEO-SGD$\times$P16) estimates the uncertainty of the centering 9 modes well but fails to deal with the rest of the modes. As to DEO$_{\star}$-SGD$\times$P16, the approximation is rather accurate, as shown in Fig.\ref{multi_modal_simulation}(e).

\begin{figure*}[!ht]
  \centering
%   \vskip -0.1in
  \subfloat[\footnotesize{Ground truth} ]{\includegraphics[width=2.9cm, height=2.9cm]{figures/ground_truth_NRPT_iters_20000_chains_16_lr_0.01_0.5_sz_100_swap_rate_0.4_scale_2_window_8_seed_8.png}}\quad
  \subfloat[\footnotesize{SGLD$\times$P16}]{\includegraphics[width=2.9cm, height=2.9cm]{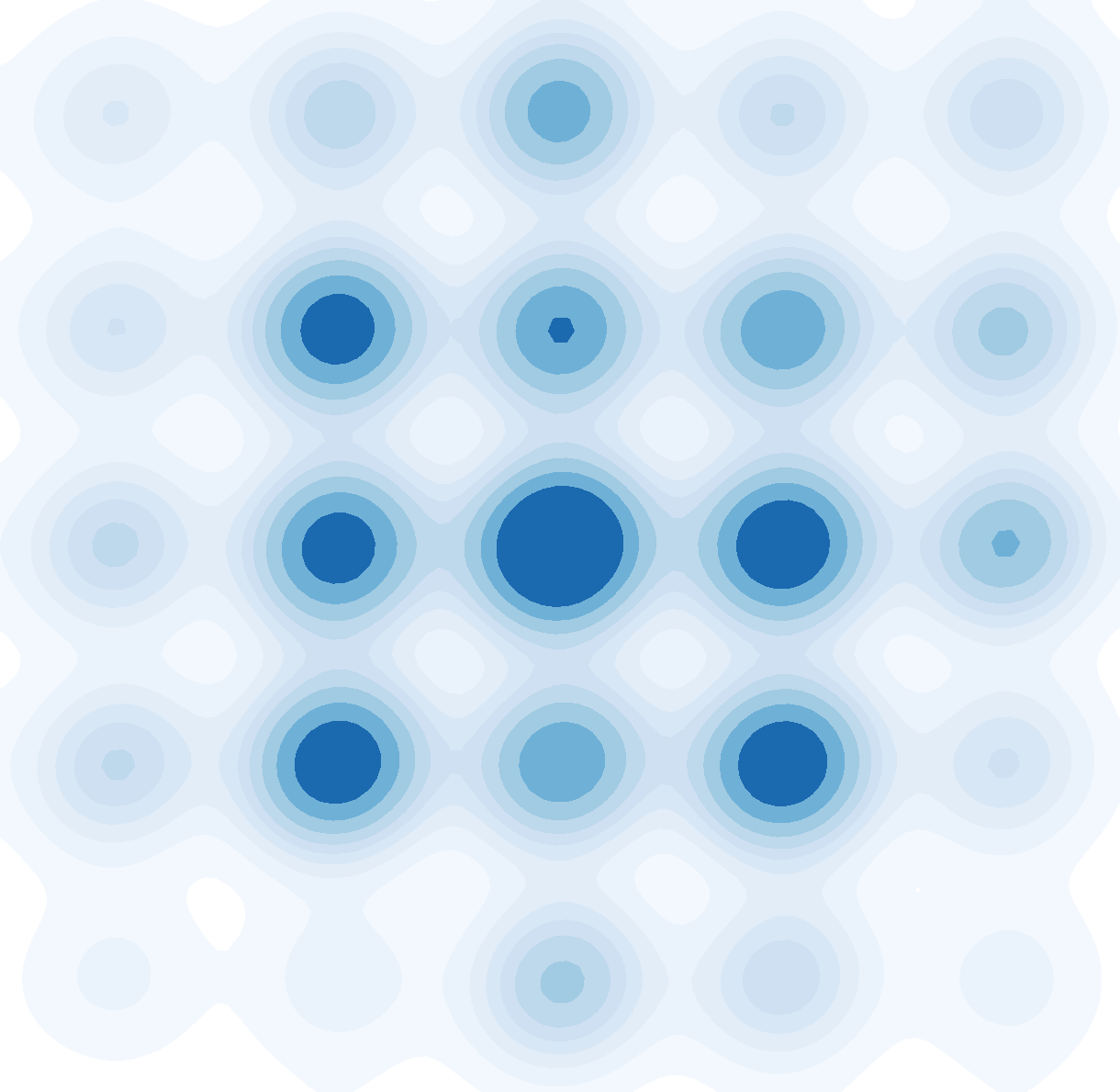}}\quad
  \subfloat[\scriptsize{cycSGLD$\times$T16}]{\includegraphics[width=2.9cm, height=2.9cm]{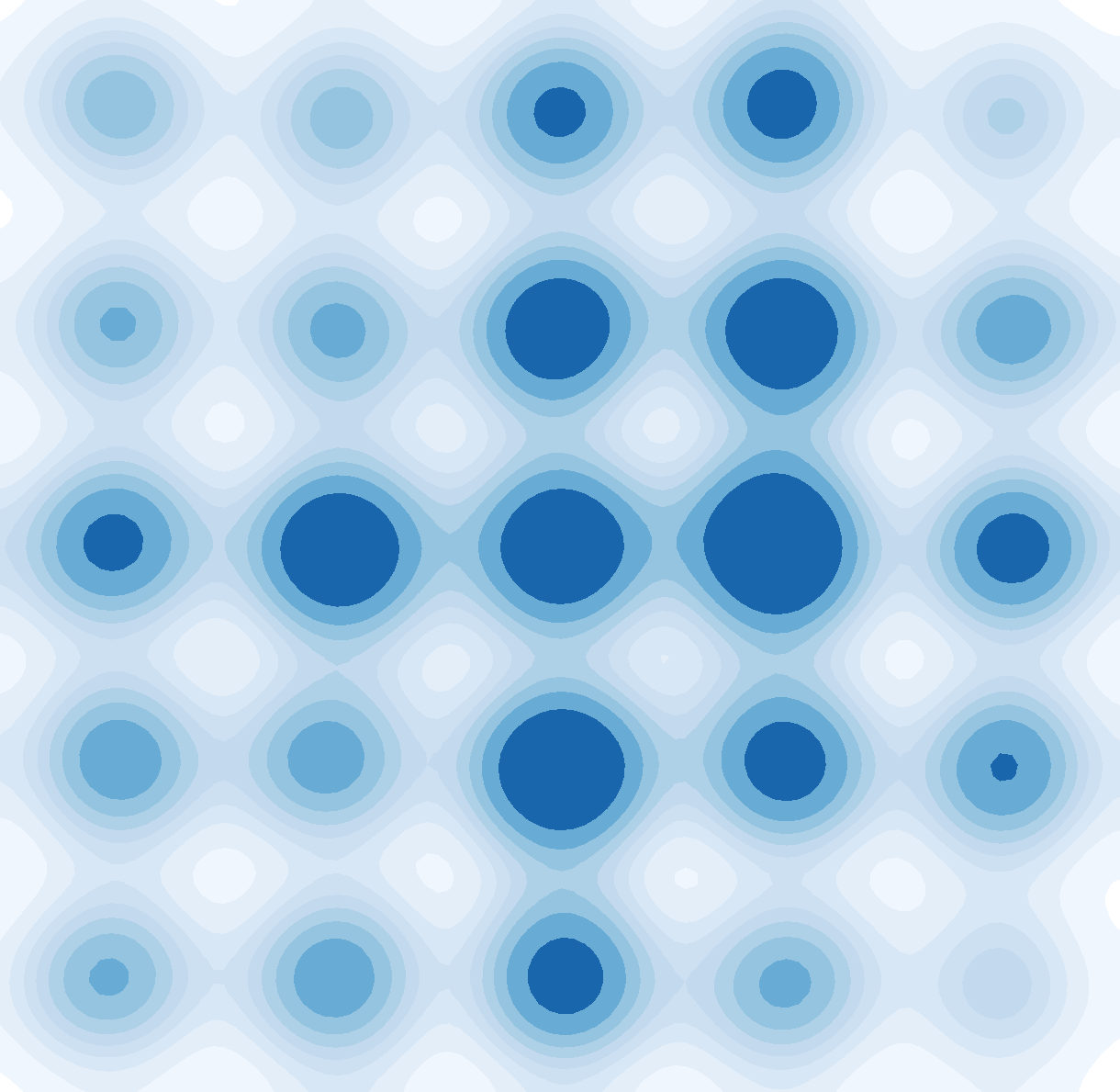}}\quad
  \subfloat[\scriptsize{$\text{DEO}$-SGD$\times$P16}]{\includegraphics[width=2.9cm, height=2.9cm]{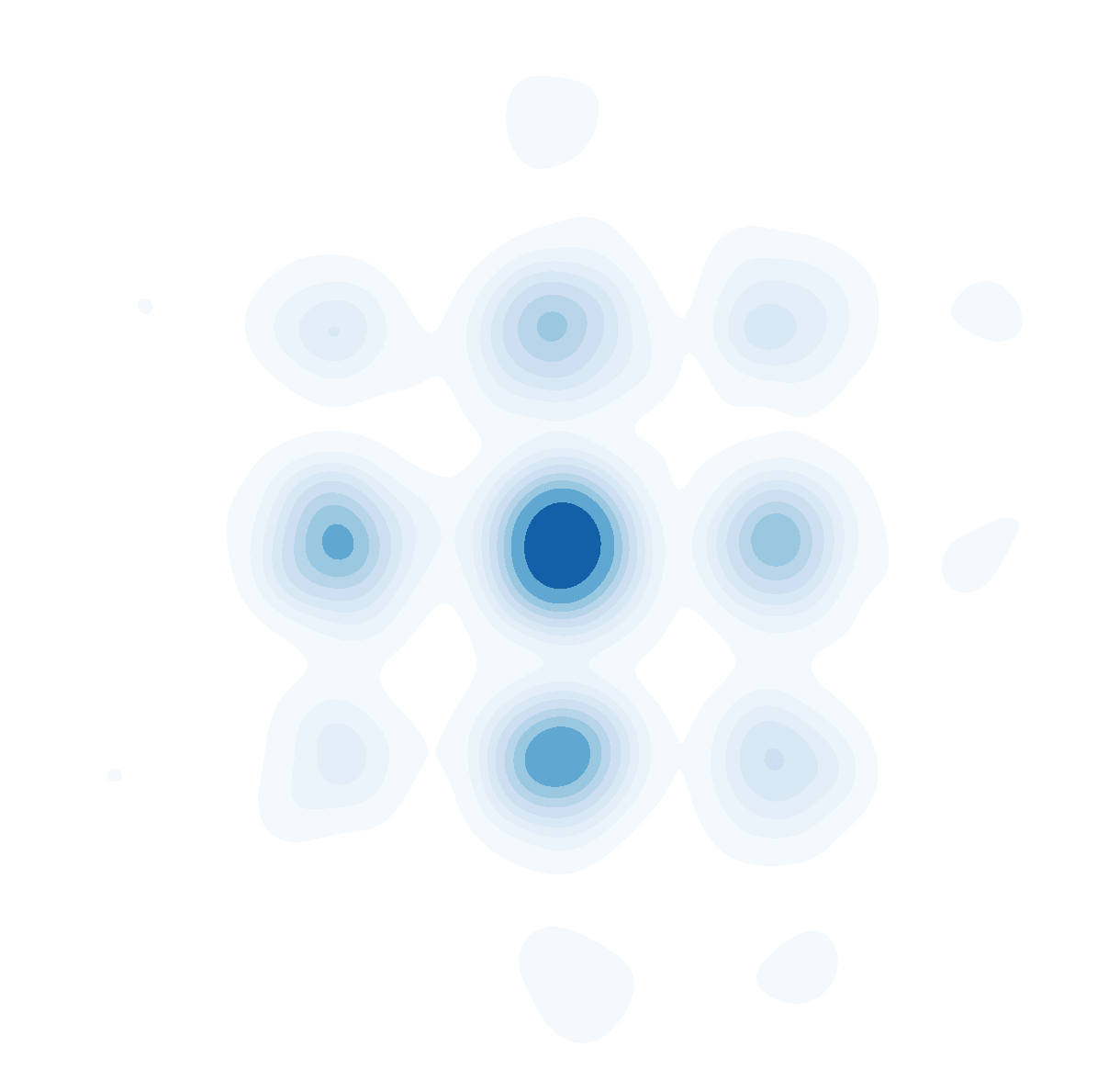}}\quad
  \subfloat[\scriptsize{$\text{DEO}_{\star}$-SGD$\times$P16}]{\includegraphics[width=2.9cm, height=2.9cm]{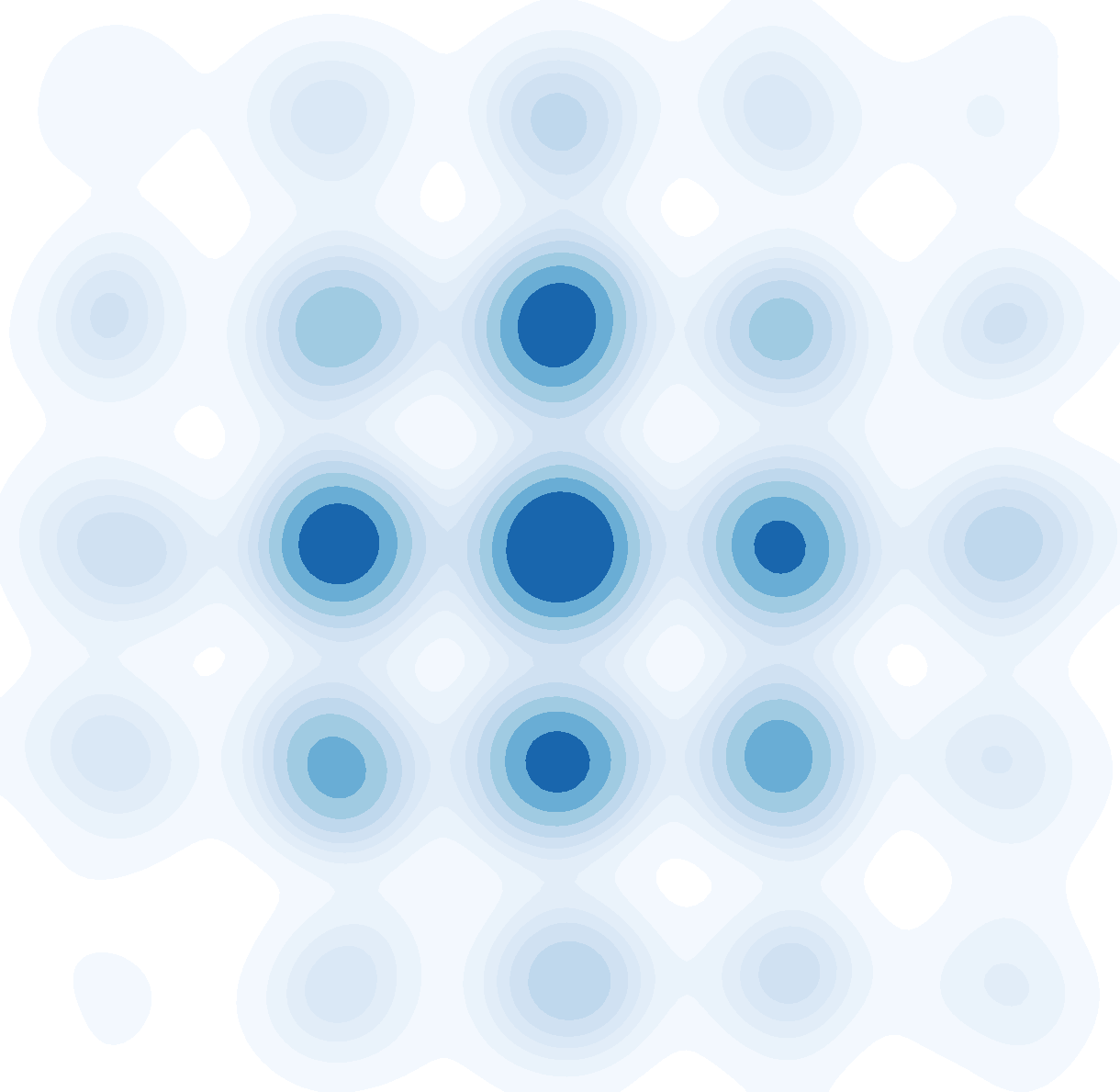}}
    % \vskip -0.10in
  \caption{Simulations of the multi-modal distribution through different sampling algorithms. }
  \label{multi_modal_simulation}
  \vspace{-0.5em}
\end{figure*}

We also present the index process for both schemes in Fig.\ref{Swapping_path_chains16}. We see that the vanilla DEO scheme results in volatile paths and a particle takes quite a long time to complete a round trip; by contrast, DEO$_{\star}$ only conducts at most one cheap swap in a window and yields much more deterministic paths. 

\begin{figure*}[!ht]
  \centering
%   \vskip -0.15in
  \subfloat[$\text{DEO}$-SGD$\times$P16]{\includegraphics[width=8.5cm, height=2cm]{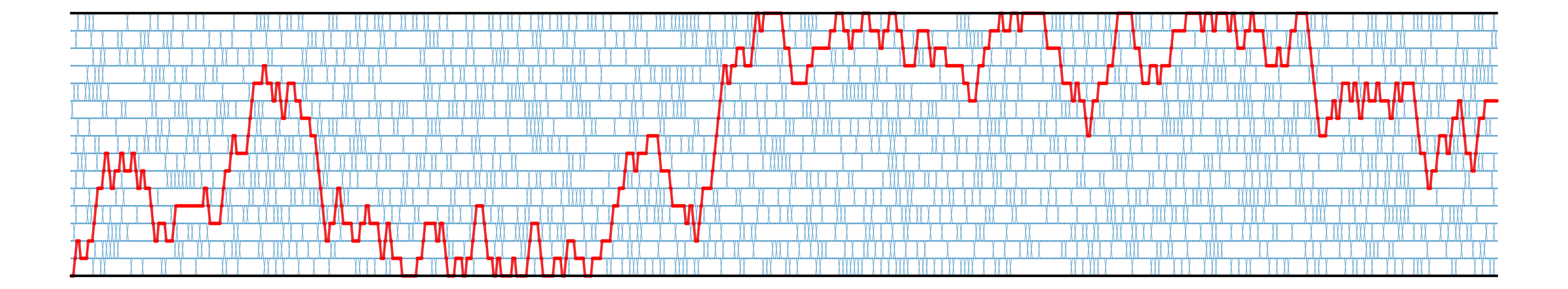}}\hspace*{0.5em} \subfloat[$\text{DEO}_{\star}$-SGD$\times$P16]{\includegraphics[width=8.5cm, height=2cm]{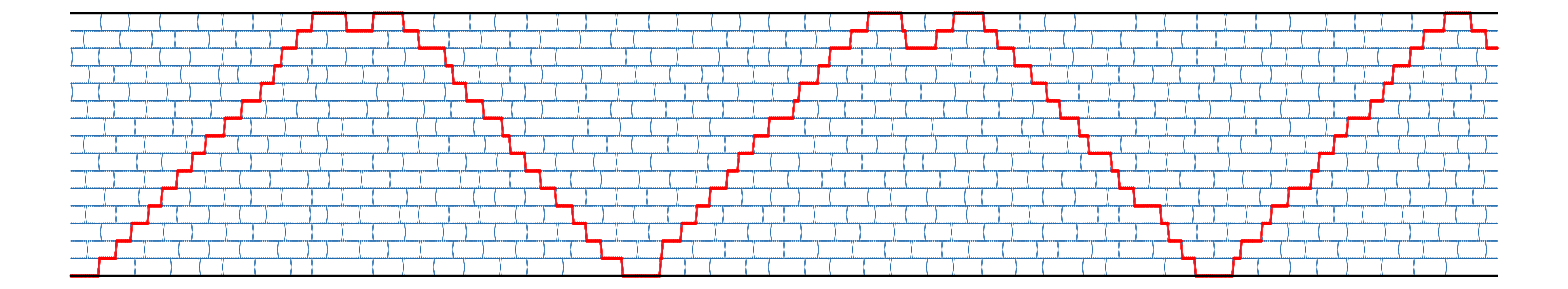}}
    % \vskip -0.15in
  \caption{Dynamics of the index process. The red path denotes the round trip path for a particle. }
  \label{Swapping_path_chains16}
  \vspace{-1em}
\end{figure*}

\subsection{Uncertainty Approximation and Optimization for Image Data}

Next, we conduct experiments on computer vision tasks. We choose ResNet20, ResNet32, and ResNet56 \cite{kaiming15} and train the models on CIFAR100. We not only report test negative log likelihood (NLL) but also present the test accuracy (ACC). For each ResNet model, we first pre-train 10 fixed models via 300 epochs and then run our algorithm based on momentum SGD (mSGD) for 500 epochs with 10 parallel chains and denote it by DEO$_{\star}$-mSGD$\times$P10. We fix the lowest and highest learning rates as 0.005 and 0.02, respectively. For a fair comparison, we also include the baseline DEO-mSGD$\times$P10 with the same setup except that the window size is 1; the standard ensemble mSGD$\times$P10 is also included with a learning rate of 0.005. In addition, we include two baselines based on a single long chain, i.e. we run stochastic gradient Hamiltonian Monte Carlo \cite{Chen14} 5000 epochs with cyclical learning rates and 50 cycles \cite{ruqi2020} and refer to it as cycSGHMC$\times$T10; we run SWAG$\times$T10 \cite{swag} under a similar setup.

In particular for DEO$_{\star}$-mSGD$\times$P10, we tune the target swap rate $\mathbb{S}$ and find an optimum at $\mathbb{S}=0.005$. We compare our proposed algorithm with the four baselines and observe in Table.\ref{UQ_test_NRPT} that  mSGD$\times$P10 can easily obtain competitive results simply through model ensemble \cite{Balaji17}, which outperforms cycSGHMC$\times$T10 and cycSWAG$\times$T10 on ResNet20 and ResNet32 models and perform the worst among the five methods on ResNet56; DEO-\upshape{m}SGD$\times$P10 itself is already a pretty powerful algorithm, however, $\text{DEO}_{\star}$-\upshape{m}SGD$\times$P10 consistently outperforms the vanilla alternative.

\begin{table*}[ht]
\begin{sc}
% \vspace{-0.12in}
\footnotesize
\caption{Uncertainty approximation and optimization on CIFAR100 via $10\times$ budget.} \label{UQ_test_NRPT}
\vspace{0.15in}
\begin{center} 
\begin{tabular}{c|cc|cc|cc}
\hline
\multirow{2}{*}{Model} & \multicolumn{2}{c|}{R\upshape{es}N\upshape{et}20} & \multicolumn{2}{c|}{R\upshape{es}N\upshape{et}32} & \multicolumn{2}{c}{R\upshape{es}N\upshape{et}56}  \\
\cline{2-7}
 &  NLL  & ACC (\%) & NLL & ACC (\%) & NLL  & ACC (\%) \\
\hline
\hline
\scriptsize{\upshape{cyc}SGHMC$\times$T10} &  8198$\pm$59 & 76.26$\pm$0.18  &  7401$\pm$28 & 78.54$\pm$0.15 & 6460$\pm$21 & 81.78$\pm$0.08  \\ 
\scriptsize{\upshape{cyc}SWAG$\times$T10} &   8164$\pm$38 & 76.13$\pm$0.21  & 7389$\pm$32 & 78.62$\pm$0.13 & 6486$\pm$29 & 81.60$\pm$0.14 \\ 
\hline
\hline
\scriptsize{\upshape{m}SGD$\times$P10} &  7902$\pm$64 & 76.59$\pm$0.11 & 7204$\pm$29 & 79.02$\pm$0.09  & 6553$\pm$15  & 81.49$\pm$0.09    \\ 
\scriptsize{DEO-\upshape{m}SGD$\times$P10} &  7964$\pm$23  & 76.84$\pm$0.12  & 7152$\pm$41 & 79.34$\pm$0.15 & 6534$\pm$26 & 81.72$\pm$0.12 \\ 
\scriptsize{$\text{DEO}_{\star}$-\upshape{m}SGD$\times$P10} &  \textbf{7741$\pm$67} & \textbf{77.37$\pm$0.16} & \textbf{7019$\pm$35} & \textbf{79.54$\pm$0.12} & \textbf{6439$\pm$32} & \textbf{82.02$\pm$0.15} \\ 
\hline
\end{tabular}
\end{center}
\end{sc}
\vspace{-0.1in}
\end{table*}

To analyze why the proposed scheme performs well, we study the round trips in Figure.\ref{learning_rate_acceptance_rate_v2}(a) and find that the theoretical optimal window obtains around 11 round trips every 100 epochs, which is almost 2 times as much as the vanilla DEO scheme. In Figure \ref{learning_rate_acceptance_rate_v2}(b), we observe that the smallest learning rate obtains the highest accuracy (blue) for exploitations, while the largest learning rate yields decent explorations (red); we see in Figure \ref{learning_rate_acceptance_rate_v2}(c-d) that geometrically initialized learning rate fails in producing equi-acceptance, but as the training proceeds, the learning rates converge to fixed points and the acceptance rates for different pairs converge to the target swap rate.

\begin{figure*}[!ht]
  \centering
%   %\vskip -0.05in
  \subfloat[Round trips]{\includegraphics[width=3.5cm, height=3.2cm]{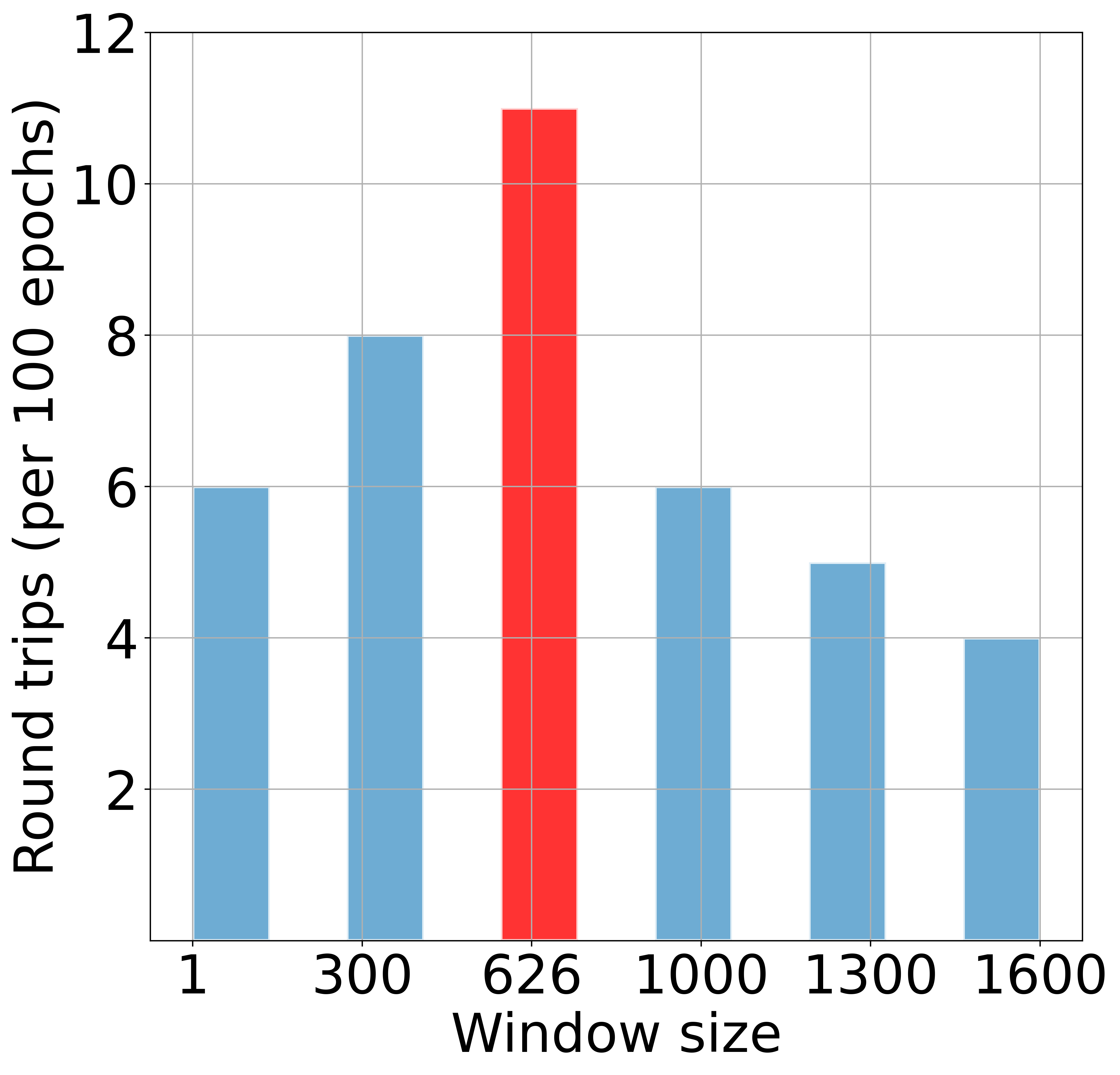}}\qquad
  \subfloat[Accuracies]{\includegraphics[width=3.5cm, height=3.2cm]{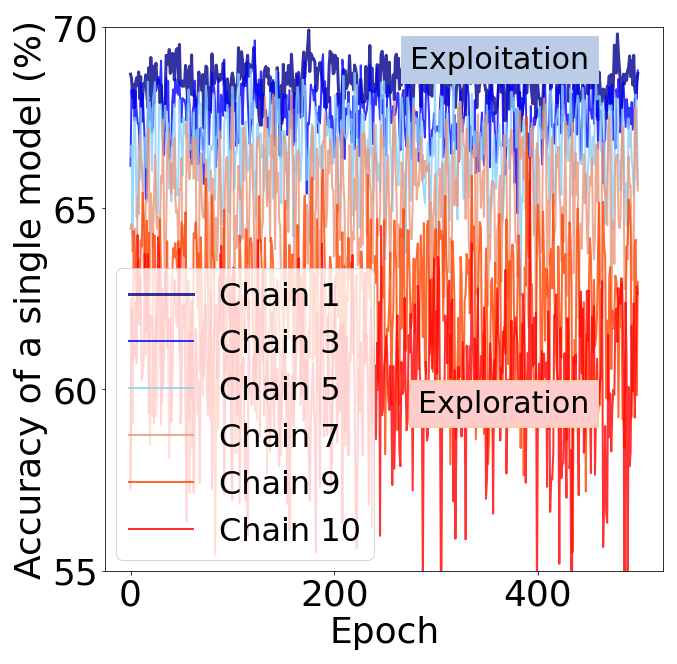}}\qquad
  \subfloat[Learning rates]{\includegraphics[width=3.5cm, height=3.2cm]{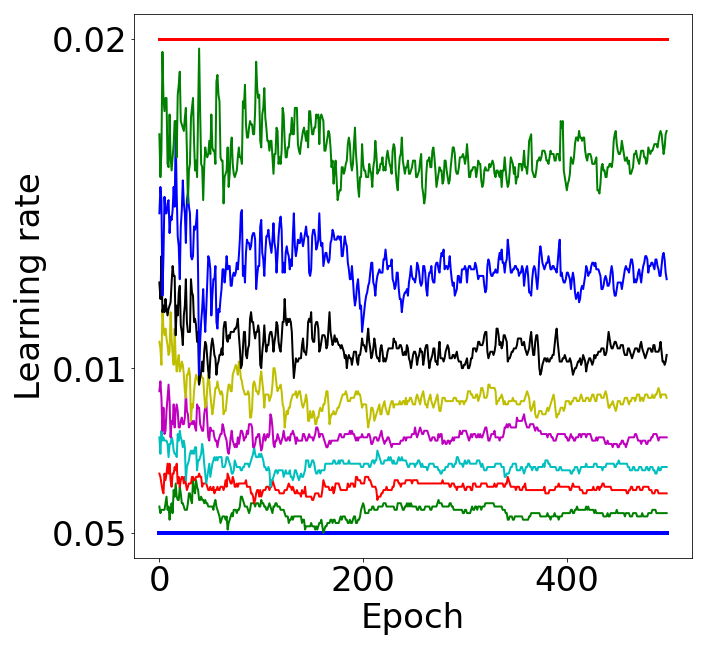}}\qquad
  \subfloat[Acceptance rates]{\includegraphics[width=3.5cm, height=3.2cm]{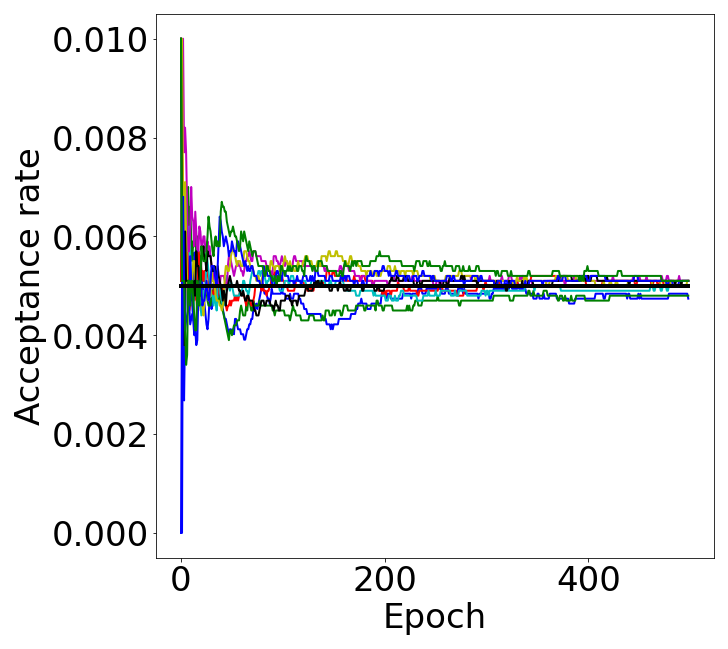}}\qquad
    % \vskip -0.15in
  \caption{Study of window sizes, accuracies, learning rates, and acceptance rates on ResNet20.}
  \label{learning_rate_acceptance_rate_v2}
  \vspace{-0.5em}
\end{figure*}

For the visualization of the index process, we observe in Figure.\ref{adaptive_path_cifar}(a) that the vanilla DEO scheme leads to volatile trajectories wandering back and forth and is rather inefficient; by contrast, the DEO$_{\star}$ scheme yields well-motivated paths with more deterministic round trips. Interestingly, this path resembles cyclic learning rates \cite{ruqi2020}, which provides a novel viewpoint to \emph{interpret why cyclic learning rates work well empirically}. Nevertheless, the stochastic and parallel manner further improves the margin and eventually leads to the most efficient approximations in this task.

\begin{figure*}[!ht]
  \centering
%   \vskip -0.1in
  \subfloat[$\text{DEO}$-mSGD$\times$P10]{\includegraphics[width=8.5cm, height=2.1cm]{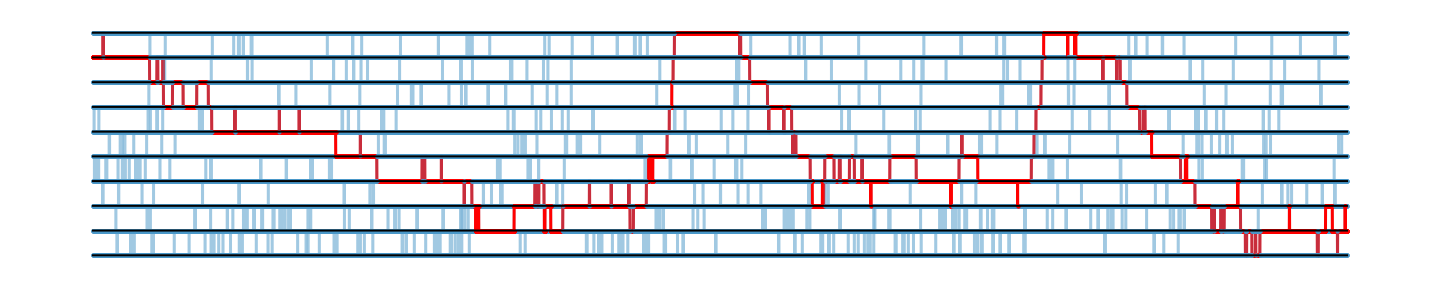}}\hspace*{0.5em} \subfloat[$\text{DEO}_{\star}$-mSGD$\times$P10]{\includegraphics[width=8.5cm, height=2.1cm]{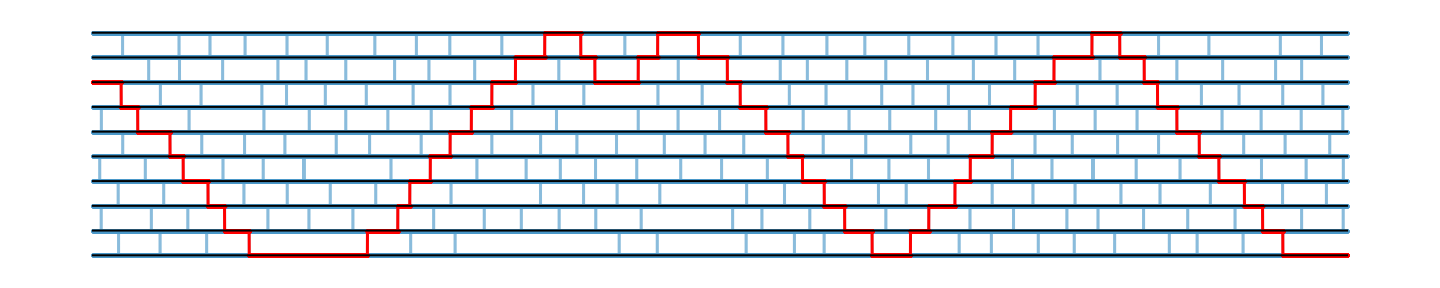}}
    % \vskip -0.05in
  \caption{Dynamics of the index process of ResNet20 models in the last 200 epochs. }
  \label{adaptive_path_cifar}
  \vspace{-1em}
\end{figure*}

\section{Conclusion}

In this chapter, we show how to conduct efficient PT in big data problems. To tackle the inefficiency issue of the popular DEO scheme given limited chains, we present the DEO$_{\star}$ scheme by applying an optimal window size to encourage \emph{deterministic paths} and obtain in a significant \emph{acceleration of $O(\frac{P}{\log P})$ times}. For a user-friendly purpose, we propose a deterministic swap condition to interact with SGD-based exploration kernels and provide a theoretical guarantee to control the bias solely depending on the learning rate $\bm{\eta}$ and the correction buffer $\mathbb{C}$; we also provide a practical algorithm to adaptively approximate $\bm{\eta}$ and $\mathbb{C}$ for achieving the optimal efficiency in \emph{constrained settings}.

\part{DYNAMIC IMPORTANCE SAMPLING}
\chapter{CONTOUR STOCHASTIC GRADIENT LANGEVIN DYNAMICS FOR SIMULATIONS OF MULTI-MODAL DISTRIBUTIONS}
\label{CSGLD_multi_distribution}
\section{Introduction}

The flat histogram algorithms, such as the multicanonical \cite{Berg1991Multicanonical} and Wang-Landau \cite{WangLandau2001} algorithms, were first proposed to sample discrete states of Ising models by yielding a flat histogram in the energy space, and then extended as a general dynamic importance sampling algorithm, the so-called stochastic approximation Monte Carlo (SAMC) algorithm \cite{Liang05, Liang07, LiangPL2009}. Theoretical studies \cite{leli2008, Liang10, Fort15} support the efficiency of the flat histogram algorithms in Monte Carlo computing for small data problems. However, it is still unclear how to adapt the flat histogram idea to accelerate the convergence of SGMCMC, ensuring efficient uncertainty quantification for AI safety problems. 

This chapter proposes the so-called contour stochastic gradient Langevin dynamics (CSGLD) algorithm, which successfully extends the flat histogram idea to SGMCMC. Like the SAMC algorithm \cite{Liang05, Liang07, LiangPL2009}, CSGLD works as a dynamic importance sampling algorithm, which adaptively adjusts the target measure at each iteration and accounts for the bias introduced thereby by importance weights. However, theoretical analysis for the two types of dynamic importance sampling algorithms can be quite different due to the fundamental difference in their transition kernels. We proceed by justifying the stability condition for CSGLD based on the perturbation theory, and establishing ergodicity of CSGLD based on newly developed theory for the convergence of adaptive SGLD. Empirically, we test the performance of CSGLD through a few experiments. It achieves remarkable performance on some synthetic data, UCI datasets, and computer vision datasets such as CIFAR10 and CIFAR100.

\section{Contour Stochastic Gradient Langevin Dynamics}

Suppose we are interested in sampling from a probability measure $\pi(\bx)$ with the density given by  
\begin{equation} \label{CSGLDeq1}
\pi(\bx) \propto \exp(-U(\bx)/\tau), \quad \bx \in \MX,
\end{equation}
where $\MX$ denotes the sample space, $U(\bx)$ is the energy function, and $\tau$ is the temperature. It is known that when $U(\bx)$ is highly non-convex, SGLD can mix very slowly \cite{Maxim17}. To accelerate the convergence, we exploit the flat histogram idea in SGLD.
 %stochastic gradient Markov chain Monte Carlo. 
 
Suppose that we have partitioned the sample space $\MX$ into $m$ subregions based on the energy function $U(\bx)$:   $\MX_1=\{\bx: U(\bx) \leq {u}_1\}$, $\MX_2=\{\bx: {u}_1 < U(\bx) \leq {u}_2\}$, $\ldots$, $\MX_{m-1}=\{\bx: {u}_{m-2} < U(\bx) \leq {u}_{m-1} \}$, and $\MX_{m}=\{\bx: U(\bx) >{u}_{m-1} \}$, where $-\infty <  {u}_1 < {u}_2 < \cdots < {u}_{m-1} <\infty$ are specified by the user. For convenience, we set $u_0=-\infty$ and $u_m=\infty$. Without loss of generality, we assume ${u}_{i+1}-{u}_{i}=\Delta u$ for $i=1,\ldots,m-2$. We propose to simulate from a flattened density 
\begin{equation} \label{1keq1} 
\varpi_{\Psi_{\btheta}}(\bx) \propto \frac{\pi(\bx)}{\Psi^{\zeta}_{\btheta}(U(\bx))},
\end{equation}
where $\zeta>0$ is a hyperparameter controlling the geometric property of the flatted density (see Figure \ref{CSGLD_stats}(a) for illustration),  and $\btheta=(\theta(1), \theta(2), \ldots, \theta(m))$ is an unknown vector taking values in the space: 
\begin{equation}\small
 \bTheta=\left\{\left(\theta(1),\theta(2),\cdots, \theta(m)\right)\mid 0<\theta(1),\theta(2),\cdots, \theta(m)<1 \text{ and } \sum_{i=1}^m \theta(i)=1 \right\}.
\end{equation}

\subsection{A na\"{i}ve Contour SGLD} It is known if we set \footnote{$1_{A}$ is an indicator function that takes value $1$ if event $A$ occurs and $0$ otherwise.}

\begin{equation}
% \small
\label{proposal_iter}
\begin{split}
     &\text{(i)    }\ \zeta=1 \text{ and }\Psi_{\btheta}(U(\bx))= \sum_{i=1}^m \theta(i) 1_{u_{i-1} < U(\bx) \leq u_i}, \\
      &\text{(ii)   } \theta(i)=\theta_{\star}(i), \text{where }\theta_{\star}(i) =\int_{\bchi_i}\pi(\bx)d\bx \text{ for } i\in\{1,2,\cdots, m\},\\
\end{split}
\end{equation}
the algorithm will act like the SAMC algorithm \cite{Liang07},  yielding a flat histogram in the space of energy (see the pink curve in Fig.\ref{CSGLD_stats}(b)). Theoretically, such a density flattening strategy enables a sharper logarithmic Sobolev inequality and accelerates the convergence of simulations \cite{leli2008, Fort15}. However, such a density flattening setting only works under the framework of the Metropolis algorithm \cite{Metropolis1953}. A na\"{i}ve application of the step function in formula (\ref{proposal_iter}(i)) to SGLD results in $\frac{\partial \log \Psi_{{\theta}}(u)}{\partial u}=\frac{1}{\Psi_{\theta}(u)}\frac{\partial \Psi_{{\theta}}(u)}{\partial u}=0$ almost everywhere, which leads to the \emph{vanishing-gradient problem} for SGLD. Calculating the gradient for the na\"{i}ve contour SGLD, we have 
\begin{equation*}
\small{
    \nabla_{x} \log \varpi_{\Psi_{\theta}}(x)=-\left[1+ \zeta \tau\frac{\partial \log{\Psi_{\theta}}(u)}{\partial u} \right] \frac{\nabla_{x} U(x)}{\tau}=-\frac{\nabla_{x} U(x)}{\tau}.}
\end{equation*} 
As such, the na\"{i}ve algorithm behaves like SGLD and fails to simulate from the flattened density (\ref{1keq1}).

%and 
\subsection{How to Resolve the Vanishing Gradient} To tackle this issue, we propose to set $\Psi_{\btheta}(u)$ 
as a piecewise continuous function: 
\begin{equation*}
\Psi_{\btheta}(u)= \sum_{i=1}^m \left(\theta(i-1)e^{(\log\theta(i)-\log\theta(i-1)) \frac{u-u_{i-1}}{\Delta u}}\right) 1_{u_{i-1} < u \leq u_i},
\end{equation*}
where $\theta(0)$ is fixed to $\theta(1)$ for simplicity. A direct calculation shows that
 \begin{equation}
 \label{marK}
 \small
 \begin{split}
     \nabla_{\bx} \log \varpi_{\Psi_{\btheta}}(\bx)
  &=-\left[1+ \zeta \tau\frac{\partial \log{\Psi_{\btheta}}(u)}{\partial u} \right] 
   \frac{\nabla_{\bx} U(\bx)}{\tau}\\
  &=-\left[1+ \zeta \tau {\frac{\log\theta(J(\bx))-\log\theta((J(\bx)-1)\vee 1)}{\Delta u}} \right] 
   \frac{\nabla_{\bx} U(\bx)}{\tau},
 \end{split}
 \end{equation}
where $J(\bx) \in\{1,2,\cdots, m\}$ denotes the index  that $\bx$ belongs to, i.e., $u_{J(\bx)-1}< U(\bx)\leq u_{J(\bx)}$. \footnote[4]{Formula (\ref{marK}) shows a practical numerical scheme. An alternative is presented in the supplementary material. } 

\subsection{Estimation via Stochastic Approximation}
Since $\btheta_{\star}$ is unknown, we propose to estimate it on the fly under the framework of 
stochastic approximation \cite{RobbinsM1951}. Provided that a scalable transition kernel $\Pi_{\bm{\theta_{k}}}(\bm{x}_{k}, \cdot)$ is available and the energy function $U(\bx)$ on the full data can be efficiently evaluated, the weighted density  $\varpi_{\Psi_{\btheta}}(\bx)$ can be simulated by iterating between the following steps:
\begin{equation}
\begin{split}
\label{sa_framework}
    &\text{ (i) Simulate $\bm{x}_{k+1}$ from $\Pi_{\bm{\theta_{k}}}(\bm{x}_{k}, \cdot)$, which admits $\varpi_{\bm{\theta}_{k}}(\bm{x})$ as
the invariant distribution,} \\
    &\text{(ii) $\theta_{k+1}(i)={\theta}_{k}(i)+\omega_{k+1}{\theta}_{k}^{\zeta}( J(\bx_{k+1}))\left(1_{i= J(\bx_{k+1})}-{\theta}_{k}(i)\right)$ \text{ for } $i\in\{1,2,\cdots,m\}.$}
\end{split}    
\end{equation}
where $\btheta_k$ denotes a working estimate of $\btheta$ at the $k$-th iteration. We expect that in a long run, such an algorithm can achieve \emph{an optimization-sampling equilibrium} such that $\btheta_{k}$ converges to the fixed point $\btheta_{\star}$ and the random vector $\bx_{k}$ converges weakly to the distribution $\varpi_{\Psi_{\btheta_{\star}}}(\bx)$. 

To make the algorithm scalable to big data, we propose to adopt the Langevin transition kernel for drawing samples at each iteration, for which a mini-batch of data can be used to accelerate computation. In addition, 
we observe that evaluating $U(\bx)$ on the full data can be quite expensive for big data problems, while it is free to obtain the stochastic energy $\widetilde{U}(\bx)$ in evaluating the stochastic  gradient $\nabla_{\bx} \widetilde{U}(\bx)$ due to the nature of auto-differentiation \cite{paszke2017}. For this reason, we propose a biased index $\tilde J(\bx)$, where $u_{\tilde J(\bx)-1}< \frac{N}{n}\widetilde U(x)\leq u_{\tilde J(\bx)}$, $N$ is the sample size of the full dataset and $n$ is the mini-batch size. Let $\{\epsilon_k\}_{k=1}^{\infty}$ and $\{\omega_k\}_{k=1}^{\infty}$ denote the learning rates and step sizes for SGLD and stochastic approximation, respectively. Given the above notations, the proposed algorithm can be presented in Algorithm \ref{alg:CSGLD},
 which can be viewed as  a \emph{scalable Wang-Landau algorithm} for deep learning and big data problems. 
 
\subsection{Related Work} 
Compared to the existing MCMC algorithms, the proposed   algorithm has a few innovations:
 \begin{algorithm}[tb]
   \caption{Contour SGLD Algorithm. One can conduct a resampling step from the pool of importance samples according to the importance weights to obtain the original distribution.}
   \label{alg:CSGLD}
\begin{algorithmic}
   \STATE {\bfseries [1.] (Data subsampling)} Simulate a mini-batch of data of size $n$ from the whole dataset of size $N$; Compute the stochastic gradient $\nabla_{\bx}\widetilde U(\bx_k)$ and stochastic energy $\widetilde U(\bx_k)$.  

   \STATE {\bfseries [2.] (Simulation step)}
   Sample $\bx_{k+1}$ using the SGLD algorithm based on $\bx_k$ and $\btheta_k$, i.e.,
  \begin{equation} \label{SGLDeq6}
  \footnotesize
 \begin{split}
  \small{\bx_{k+1}=\bx_k - \epsilon_{k+1} \frac{N}{n} \left[1+ 
   \zeta\tau\frac{\log {\theta}_{k}(\tilde J(\bx_k)) - \log{\theta}_{k}((\tilde J(\bx_k)-1)\vee 1)}{\Delta u}  \right]  
    \nabla_{\bx} \widetilde U(\bx_k) +\sqrt{2 \tau \epsilon_{k+1}} \bw_{k+1}}, 
 \end{split}
  \end{equation}
  where $\bw_{k+1} \sim N(0,\bm{I}_d)$, $d$ is the dimension,
  $\epsilon_{k+1}$ is the learning rate,
  and $\tau$ is the temperature. 

  \STATE {\bfseries [3.] (Stochastic approximation)} Update the estimate of $\theta(i)$'s
 for $i=1,2,\ldots,m$ by setting
  \begin{equation} \label{updateeq_csgld}
 {\theta}_{k+1}(i)={\theta}_{k}(i)+\omega_{k+1}{\theta}_{k}^{\zeta}(\tilde J(\bx_{k+1}))\left(1_{i=\tilde J(\bx_{k+1})}-{\theta}_{k}(i)\right), 
 \end{equation} 
 where $1_{i=\tilde J(\bx_{k+1})}$ is an indicator function which equals 1 if $i= \tilde J(\bx_{k+1})$ and 0 otherwise.
\vspace{-0.04in}
\end{algorithmic}
% \vspace{-1.5em}
\end{algorithm}
 
First, CSGLD is an adaptive MCMC algorithm based on the \emph{Langevin transition kernel} instead of the {\it Metropolis transition kernel} \cite{Liang07, Fort15}. As a result, the existing convergence theory for the Wang-Landau algorithm does not apply. 
To resolve this issue, we first prove a stability condition for CSGLD based on the perturbation theory, and then verify regularity conditions for the solution of the Poisson equation so that the fluctuations of the mean-field system induced by CSGLD get controlled, which eventually ensures convergence of CSGLD.

Second, the use of the stochastic index $\tilde J(\bx)$  avoids the evaluation of $U(\bx)$ on the full data and thus significantly accelerates the computation of the algorithm, although it leads to a small bias, depending on the mini-batch size $n$, in parameter estimation. Compared to other methods, such as using a fixed sub-dataset to estimate $U(\bx)$, the implementation is much simpler. Moreover, combining the variance reduction of the noisy energy estimators \cite{deng_VR}, the bias also decreases to zero asymptotically as $\epsilon\rightarrow 0$.

Third, unlike the existing SGMCMC algorithms \cite{Welling11, Chen14, yian2015}, CSGLD works as a \emph{dynamic importance sampler} which \emph{flattens} the target distribution and \emph{reduces the energy barriers} for the sampler to traverse between different regions of the energy landscape (see Fig.\ref{CSGLD_stats}(a) for illustration). The sampling bias introduced thereby is accounted for by the importance weight $\theta^{\zeta}(\tilde J(\cdot))$. Interestingly, CSGLD possesses a {\it self-adjusting mechanism} to ease escapes from local traps, which is similar to the self-repulsive dynamics \cite{mao_mcmc} and can be explained as follows. 
Let's assume that the sampler gets 
trapped into a local optimum at iteration $k$. 
Then CSGLD will automatically increase the multiplier
of the stochastic gradient (i.e., the bracket term of 
(\ref{SGLDeq6})) at iteration $k+1$ by 
increasing the value of $\theta_{k}(\tilde{J}(\bx))$, while decreasing the components of $\btheta_{k}$ corresponding to other subregions. 
This adjustment will continue until the sampler moves away from the current subregion. Then, in the followed several iterations, the multiplier might become negative in neighboring subregions of the local optimum due to the increased value of $\theta(\tilde{J}(\bx))$, which continues to help to drive the sampler to higher energy regions and thus escape from the local trap. That is, in order to escape from local traps, CSGLD is sometimes forced to move toward higher energy regions by changing the sign of the stochastic gradient multiplier! This is a very attractive feature for simulations of multi-modal distributions.

\section{Theoretical Study of the CSGLD Algorithm} \label{convergSect}

In this section, we study the convergence of CSGLD algorithm under the framework of stochastic approximation and show the ergodicity property based on weighted averaging estimators.
 
\subsection{Convergence Analysis} \label{FMalg}

Following the tradition of stochastic approximation analysis, we rewrite the updating rule (\ref{updateeq_csgld}) 
as 
 \begin{equation}
     \btheta_{k+1}=\btheta_k+\omega_{k+1} \widetilde H(\btheta_k,\bx_{k+1}),
 \end{equation}
where $\widetilde H(\btheta,\bx)=(\widetilde H_1(\btheta,\bx), \ldots, 
 \widetilde H_m(\btheta,\bx))$ is a random field function with 
 \begin{equation}
 \label{H_}
     \widetilde H_i(\btheta,\bx)={\theta}^{\zeta}(\tilde J(\bx))\left(1_{i= \tilde J(\bx)}-{\theta}(i)\right), \quad i=1,2,\ldots,m.
 \end{equation}
Notably, $\widetilde H(\btheta,\bx)$ works under an empirical measure $\varpi_{\btheta}(\bx)$ which approximates the invariant measure $\varpi_{\Psi_{\btheta}}(\bx)\propto\frac{\pi(\bx)}{\Psi^{\zeta}_{\btheta}(U(\bx))}$ asymptotically as $\epsilon\rightarrow 0$ and $n\rightarrow N$. As shown in Lemma \ref{convex_main_csgld}, we have the mean-field equation 
\begin{equation} \label{fixedeq}
h(\btheta)=\int_{\MX}\widetilde H(\btheta,\bx)  \varpi_{\btheta}(\bx) d\bx= Z_{\btheta}^{-1} \left(\btheta_{\star}+\varepsilon \beta(\btheta)-\btheta\right)=0, 
\end{equation}
where $\btheta_{\star}=(\int_{\MX_1}\pi(\bx)d\bx, \int_{\MX_2}\pi(\bx)d\bx, \ldots, \int_{\MX_m}\pi(\bx)d\bx)$, $Z_{\btheta}$ is the normalizing constant, $\beta(\btheta)$ is a perturbation term, $\varepsilon$ is a small error depending on $\epsilon, n$ and $m$. 
The mean-field equation implies that for any 
$\zeta>0$, $\btheta_k$ converges to a small neighbourhood of $\btheta_{\star}$. By applying perturbation theory and setting the Lyapunov function  $\mathbb{V}(\btheta)=\frac{1}{2}\|\btheta_{\star}-\btheta\|^2$, we can establish the stability condition:

\begin{lemma}[Stability. Informal version of Lemma \ref{convex_appendix}] \label{convex_main_csgld}
Given a small enough $\epsilon$ (learning rate), a large enough $n$ (batch size) and $m$ (partition number), there is a constant $\phi=\inf_{\btheta} Z_{\btheta}^{-1}>0$ such that the mean-field $h(\btheta)$ satisfies $$\forall \btheta \in \bTheta, \langle h(\btheta), \btheta - \btheta_{\star}\rangle \leq  -\phi\|\btheta - \btheta_{\star}\|^2+\mathcal{O}\left(\epsilon+\frac{1}{m}+\Var(\xi_n)\right),$$
where $\Var(\xi_n)$ denotes the largest variance of the noise in the stochastic energy estimator of batch size $n$ and decays to $0$ as $n\rightarrow N$.
\end{lemma}

Together with the tool of Poisson equation  \cite{Albert90, andrieu05}, which controls the fluctuation of $\widetilde H(\btheta, \bx)-h(\btheta)$, we can 
establish convergence of $\btheta_k$ in Theorem \ref{thm:1}, whose proof is given in the supplementary 
material. 
 
\begin{theorem}[$L^2$ convergence rate. Informal version of Theorem \ref{latent_convergence}]
\label{thm:1}
Given Assumptions $\ref{ass2a}$-$\ref{ass1}$, a small enough learning rate $\epsilon_k$, a large partition number $m$ and a large batch size $n$, $\btheta_k$ converges to $\btheta_{\star}$ such that
 \begin{equation*}
    \E\left[\|\bm{\theta}_{k}-\bm{\theta}_{\star}\|^2\right]=\mathcal{O}\left( \omega_{k}+\sup_{i\geq k_0}\epsilon_i+\frac{1}{m} +\Var(\xi_n)\right),
\end{equation*}
where $k_0$ is some large enough integer, $\btheta_{\star}=(\int_{\MX_1}\pi(\bx)d\bx, \int_{\MX_2}\pi(\bx)d\bx, \ldots, \int_{\MX_m}\pi(\bx)d\bx)$, and $\delta_n(\cdot)$ is a bias term depending on the batch size $n$ and decays to 0 as $n\rightarrow N$.
\end{theorem}

\subsection{Ergodicity and Dynamic Importance Sampler}  \label{FMalg2}

CSGLD belongs to the class of adaptive MCMC algorithms,
but its transition kernel is based on SGLD instead of the Metropolis algorithm. As such, the ergodicity theory for traditional adaptive MCMC algorithms \cite{RobertsRosenthal2007, AndrieuMoulines2006, Fortetal2011, Liang10} is not directly applicable. To tackle this issue, we conduct the following theoretical study. First, rewrite (\ref{SGLDeq6}) as 
  \begin{equation} \label{SGLDeq8}
     \bx_k- \epsilon\left( \nabla_{\bx} 
     \widehat{L}(\bx_k,\btheta_{\star}) + 
     \Upsilon(\bx_k,\btheta_k,\btheta_{\star})\right)+\mathcal{N}({0, 2\epsilon \tau\bm{I}}),
 \end{equation}
where $\nabla_{\bx} \widehat{L}(\bx_k,\btheta_{\star})= \frac{N}{n} \left[1+\frac{\zeta\tau}{\Delta u}  \left(\log \theta_{\star}({J}(\bx_k))-\log\theta_{\star}(({J}(\bx_k)-1)\vee 1) \right) \right]  \nabla_{\bx} \widetilde U(\bx_k)$, the bias term follows that 
$$\Upsilon(\bx_k,\btheta_k,\btheta_{\star})=\nabla_{\bx} \widetilde{L}(\bx_k,\btheta_k)-\nabla_{\bx} \widehat{L}(\bx_k,\btheta_{\star}),$$ and $\nabla_{\bx} \widetilde{L}(\bx_k,\btheta_{k})= \frac{N}{n} \left[1+ \frac{\zeta\tau}{\Delta u}  \left(\log \theta_{k}(\tilde{J}(\bx_k))-\log\theta_{k}((\tilde{J}(\bx_k)-1)\vee 1) \right) \right]  \nabla_{\bx} \widetilde U(\bx_k)$.  The order of the bias is figured out in Lemma C1 in the supplementary material based on the results of Theorem \ref{thm:1}.
    
Next, we show how the empirical mean $\frac{1}{k}\sum_{i=1}^k f(\bx_i)$ deviates from the posterior mean $\int_{\MX}f(\bx)\varpi_{\Psi_{\btheta_{\star}}}(\bx)d\bx$. Note that this is a direct application of Theorem 2 of \cite{Chen15} by treating $\nabla_{\bx} \widehat{L}(\bx,\btheta_{\star})$ as the stochastic gradient of a target distribution and 
 $\Upsilon(\bx,\btheta,\btheta_{\star})$ as the bias of the stochastic gradient. Moreover, considering that $\varpi_{\widetilde \Psi_{\btheta_{\star}}}(\bx)\propto
\frac{\pi(\bx)}{\theta_{\star}^{\zeta}(J(\bx))}\rightarrow\varpi_{ \Psi_{\btheta_{\star}}}$ as $m\rightarrow \infty$ based on Lemma B4 in the supplementary material, we have the following

\begin{lemma}[Convergence of the Averaging Estimators. Informal version of Lemma \ref{avg_converge_appendix}]
\label{avg_converge}
Suppose Assumptions $\ref{ass2a}$-$\ref{ass1}$ hold. For any bounded function $f$, we have
\begin{equation*}
\small
\begin{split}
    \mid\E\left[\frac{\sum_{i=1}^k f(\bx_i)}{k}\right]-\int_{\bchi} f(\bx)\varpi_{\widetilde \Psi_{\btheta_{\star}}}(d\bx)\mid&= \mathcal{O}\left(\frac{1}{k\epsilon}+\sqrt{\epsilon}+\sqrt{\frac{\sum_{i=1}^k \omega_k}{k}}+\frac{1}{\sqrt{m}}+\sqrt{\Var(\xi_n)}\right), \\
\end{split}
\end{equation*}
where $\varpi_{\widetilde \Psi_{\btheta_{\star}}}(\bx)= \frac{1}{Z_{\btheta_{\star}}} 
\frac{\pi(\bx)}{\theta_{\star}^{\zeta}(J(\bx))}$ and $Z_{\btheta_{\star}}=\sum_{i=1}^m \frac{\int_{\MX_i} \pi(\bx)d\bx}{\theta_{\star}(i)^{\zeta}}$.
\end{lemma}

Finally, we consider the problem of estimating the quantity  $\int_{\MX} f(\bx) \pi(\bx) d\bx$. Recall that $\pi(\bx)$ is the target distribution that we would like to make inference for. To estimate this quantity, we naturally 
 consider the weighted averaging estimator $\frac{\sum_{i=1}^k\theta_{i}^{\zeta}( \tilde{J}(\bx_i)) f(\bx_i)}{ 
\sum_{i=1}^k\theta_{i}^{\zeta}( \tilde{J}(\bx_i))}$ by treating $\theta^{\zeta}(\tilde J(\bx_i))$ as
the dynamic importance weight of the sample $\bx_i$ 
for $i=1,2,\ldots,k$. The convergence of this
estimator is established in Theorem \ref{wavg_esti}, which can be proved by repeated applying Theorem \ref{thm:1} and Lemma \ref{avg_converge} with the details given in the supplementary material.
  
\begin{theorem}[Convergence of the Weighted Averaging Estimators. Informal version of Theorem \ref{w_avg_converge_appendix}]
\label{wavg_esti} Suppose Assumptions $\ref{ass2a}$-$\ref{ass1}$ hold. For any bounded function $f$, we have
\label{w_avg_converge}
\begin{equation*}
\footnotesize
\begin{split}
    \mid\E\left[\frac{\sum_{i=1}^k\theta_{i}^{\zeta}
    (\tilde{J}(\bx_i)) f(\bx_i)}{\sum_{i=1}^k \theta_{i}^{\zeta} ( \tilde{J}(\bx_i))}\right]-\int_{\bchi} f(\bx)\pi(d\bx)\mid &= \mathcal{O}\left(\frac{1}{k\epsilon}+\sqrt{\epsilon}+\sqrt{\frac{\sum_{i=1}^k \omega_k}{k}}+\frac{1}{\sqrt{m}}+\sqrt{\Var(\xi_n)}\right).\\
\end{split}
\end{equation*}
\end{theorem}
The bias of the weighted averaging estimator decreases 
if one applies a larger batch size, a finer sample space partition, a smaller learning rate $\epsilon$, and smaller step sizes $\{\omega_k\}_{k\geq 0}$. Admittedly, the
order of this bias is slightly larger than   $\mathcal{O}\left(\frac{1}{k\epsilon}+\epsilon\right)$
 achieved by the standard SGLD. We note that this is necessary as simulating from the flattened distribution $\varpi_{\Psi_{\btheta_{\star}}}$ often leads to a much faster convergence, see e.g. the green curve v.s. the purple curve in Fig.\ref{CSGLD_stats}(c).

\section{Numerical Studies}

\subsection{Simulations of Multi-modal Distributions}

\subsubsection{A Gaussian Mixture Distribution} The first numerical study is to test the performance of CSGLD on a Gaussian mixture distribution $\pi(\bx)=0.4 N(-6,1)+0.6 N(4,1)$. In each experiment, the algorithm was  run for $10^7$ iterations. We fix the temperature $\tau=1$ and the learning rate $\epsilon=0.1$. The step size for stochastic approximation follows $\omega_k=\frac{1}{k^{0.6}+100}$. The sample space is partitioned into 50 subregions with $\Delta u=1$. The stochastic gradients are simulated by injecting additional random noises following $N(0,0.01)$ to the exact gradients. For comparison, SGLD is chosen as the baseline algorithm 
 and implemented with the same setup as CSGLD. We repeat the experiments 10 times and report the average and the associated standard deviation. 
 
We first assume that $\btheta_{\star}$ is known and plot the energy functions for both $\pi(\bx)$ and $\varpi_{\Psi_{\btheta_{\star}}}$ with different values of $\zeta$.  Fig.\ref{CSGLD_stats}(a) shows that the original energy function has a rather large energy barrier which strongly affects the communication between two modes of the distribution. In contrast, CSGLD samples from a modified energy function, which yields a flattened landscape and reduced energy barriers. For example, with $\zeta=0.75$, the energy barrier for this example is {\it greatly reduced from 12 to as small as 2}. Consequently, the local trap problem can be greatly alleviated. Regarding the bizarre 
peaks around $x=4$, we leave the study in the supplementary material.

\begin{figure*}[!ht]
\vspace{-0.03in}
  \centering
  \subfloat[Original v.s. trial energies]{\includegraphics[scale=0.28]{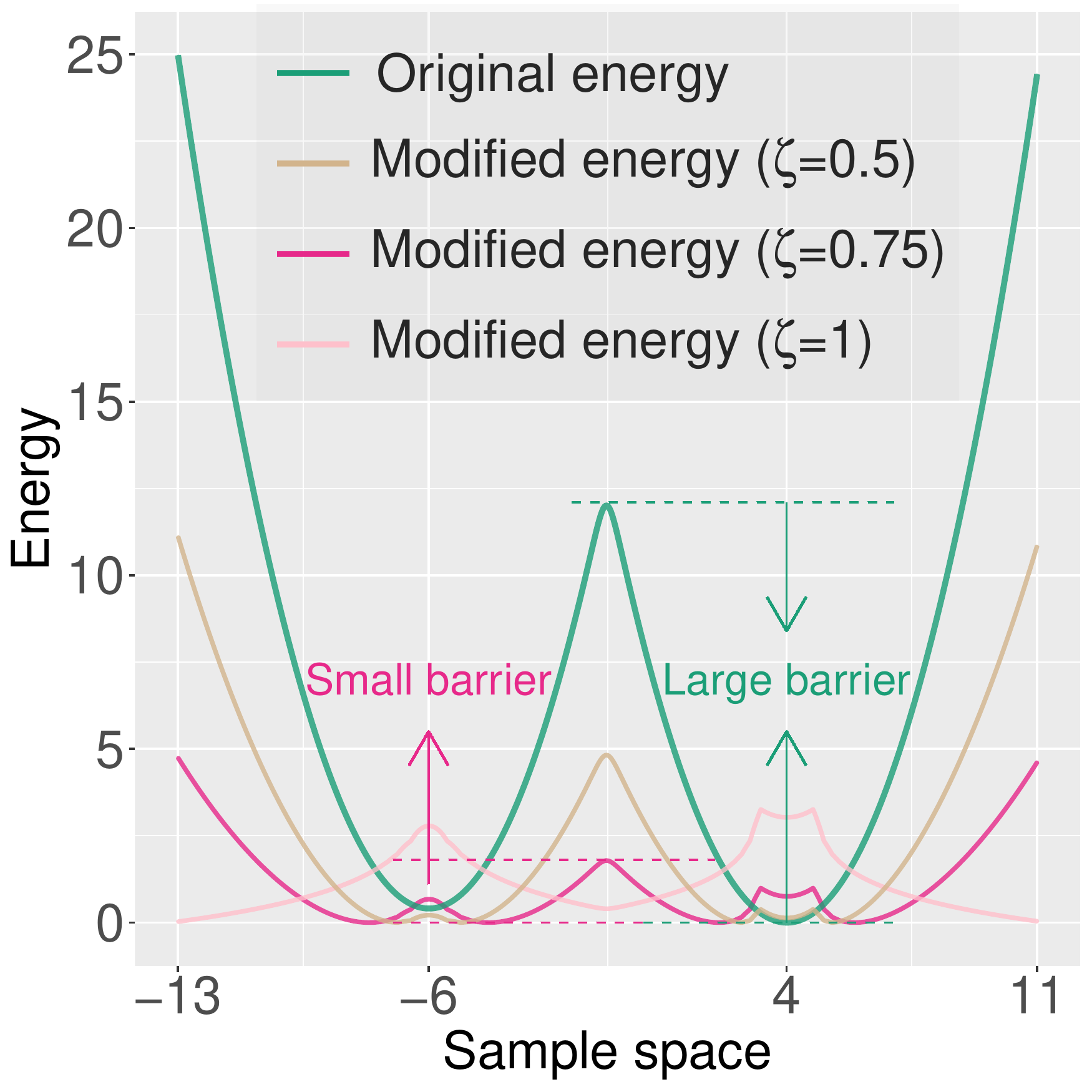}}\label{fig: 4a}\quad
  \subfloat[$\btheta$'s estimates and histograms]{\includegraphics[scale=0.28]{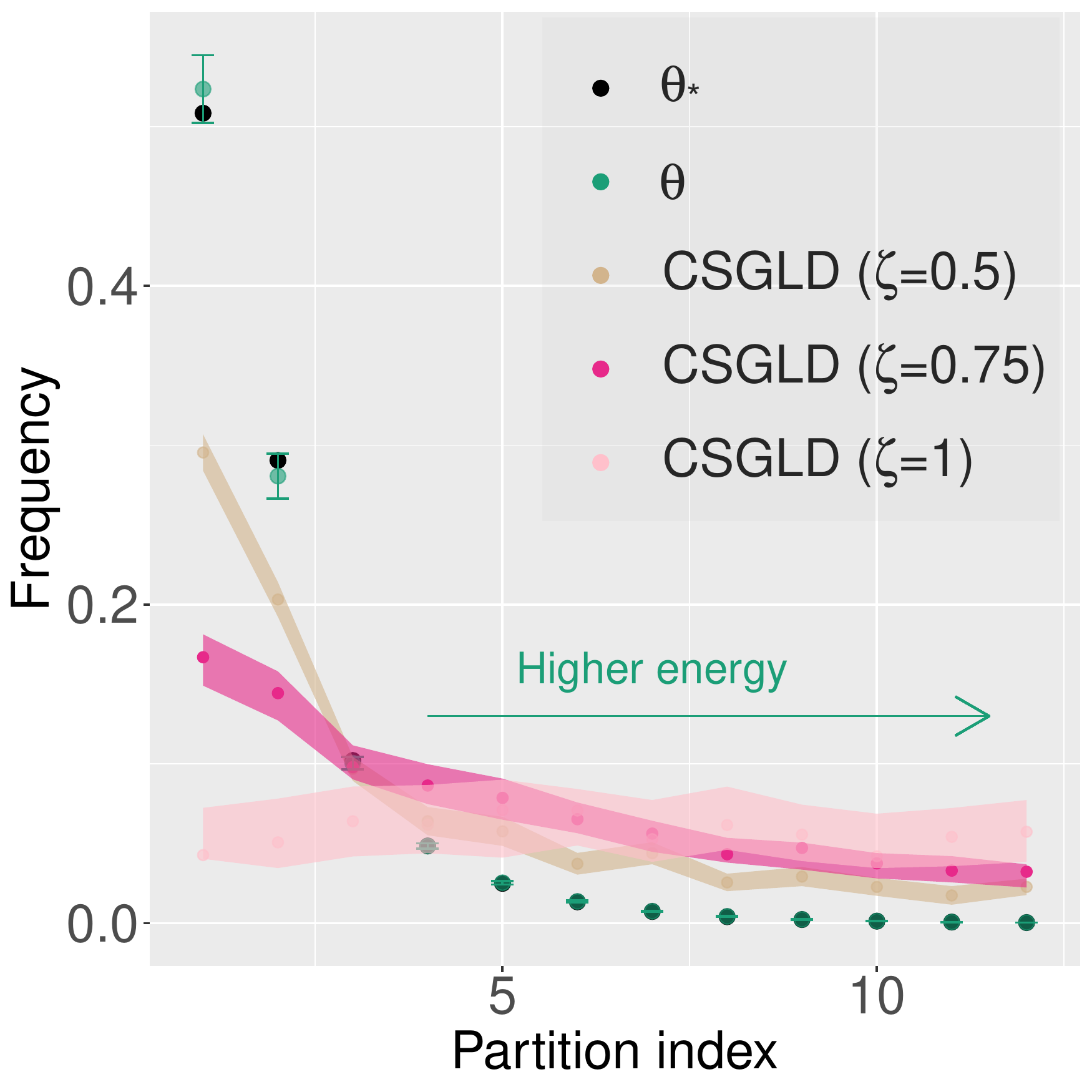}}\label{fig: 4b}\quad
  \subfloat[Estimation errors]{\includegraphics[scale=0.28]{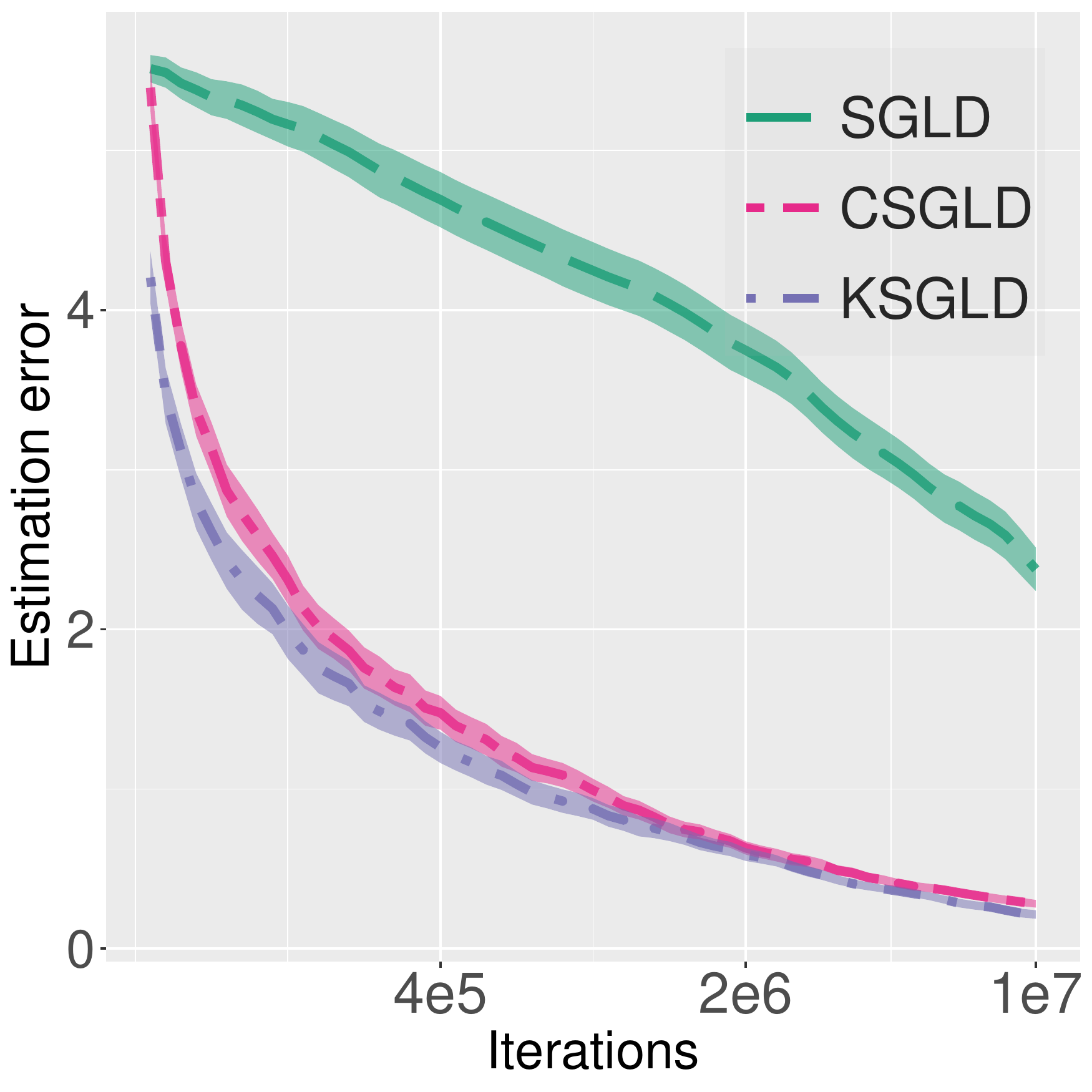}}\label{fig: 4c}
%   \vspace{-0.06in}
  \caption{Comparison between SGLD and CSGLD: Fig.1(b) presents only the first 12 partitions for an illustrative purpose; KSGLD in  Fig.1(c) is implemented by assuming $\btheta_{\star}$ is known.}
  \label{CSGLD_stats}
  \vspace{-0.17in}
\end{figure*}

Fig. \ref{CSGLD_stats}(b) summarizes the estimates of $\btheta_{\star}$ with $\zeta=0.75$, which matches the ground truth value of $\btheta_{\star}$ very well. Notably, we see that $\theta_{\star}(i)$ decays exponentially fast as the partition index $i$ increases, which indicates the exponentially decreasing probability of visiting high energy regions and a severe local trap problem. CSGLD tackles this issue by adaptively updating the transition kernel or, equivalently, the invariant distribution such 
that the sampler moves like a ``random walk'' in the space of energy.  In particular, setting $\zeta=1$ leads to a flat histogram of energy (for the samples produced by CSGLD).

To explore the performance of CSGLD in quantity estimation with the weighed averaging estimator, 
we compare CSGLD ($\zeta=0.75$) with SGLD and KSGLD in estimating \textcolor{black}{the posterior mean $\int_{\MX} \bx \pi(\bx)d\bx$}, where KSGLD was implemented by assuming $\btheta_{\star}$ is known and sampling from $\varpi_{\Psi_{\btheta_{\star}}}$ directly. Each algorithm was run for 10 times, and we recorded the mean absolute 
estimation error along with iterations. As shown in Fig.\ref{CSGLD_stats}(c), the estimation error of SGLD decays quite slow and rarely converges due to the high energy barrier. On the contrary, KSGLD converges much faster, which shows the advantage of sampling from a flattened distribution $\varpi_{\Psi_{\btheta_{\star}}}$. Admittedly, $\btheta_{\star}$ is unknown in practice. CSGLD instead adaptively updates its invariant distribution while optimizing the parameter ${\btheta}$ until \emph{an optimization-sampling equilibrium} is reached. In the early period of the run, CSGLD converges slightly slower than KSGLD, but soon it becomes as efficient as KSGLD.

Finally, we compare the sample path and learning rate for CSGLD and SGLD. As shown in Fig.\ref{trajectory}(a), SGLD tends to be trapped in a deep local optimum for an exponentially long time. CSGLD, in contrast, possesses a {\it self-adjusting mechanism} for escaping from local traps. In the early period of a run, CSGLD might suffer 
from a similar local-trap problem as SGLD (see Fig.\ref{trajectory}(b)). In this case, the components of $\btheta$ corresponding to the current subregion will  increase very fast, eventually rendering \textcolor{black}{a smaller or even negative stochastic gradient multiplier} which \emph{bounces the sampler back to high energy regions}. To illustrate the process, we plot a bouncy zone and an absorbing zone in Fig.\ref{trajectory}(c). The bouncy zone enables the sampler to ``jump'' over large energy barriers to explore other modes. As the run continues, $\btheta_k$ converges to $\btheta_{\star}$. Fig.\ref{trajectory}(d) shows that larger bouncy ``jumps'' (in red lines) can potentially be induced in the bouncy zone, which occurs in both local and global optima. Due to the {\it self-adjusting mechanism}, CSGLD has the local trap problem much alleviated.

\begin{figure*}[!ht]
% \vspace{-1em}
  \centering
  \subfloat[SGLD paths]{\includegraphics[scale=0.21]{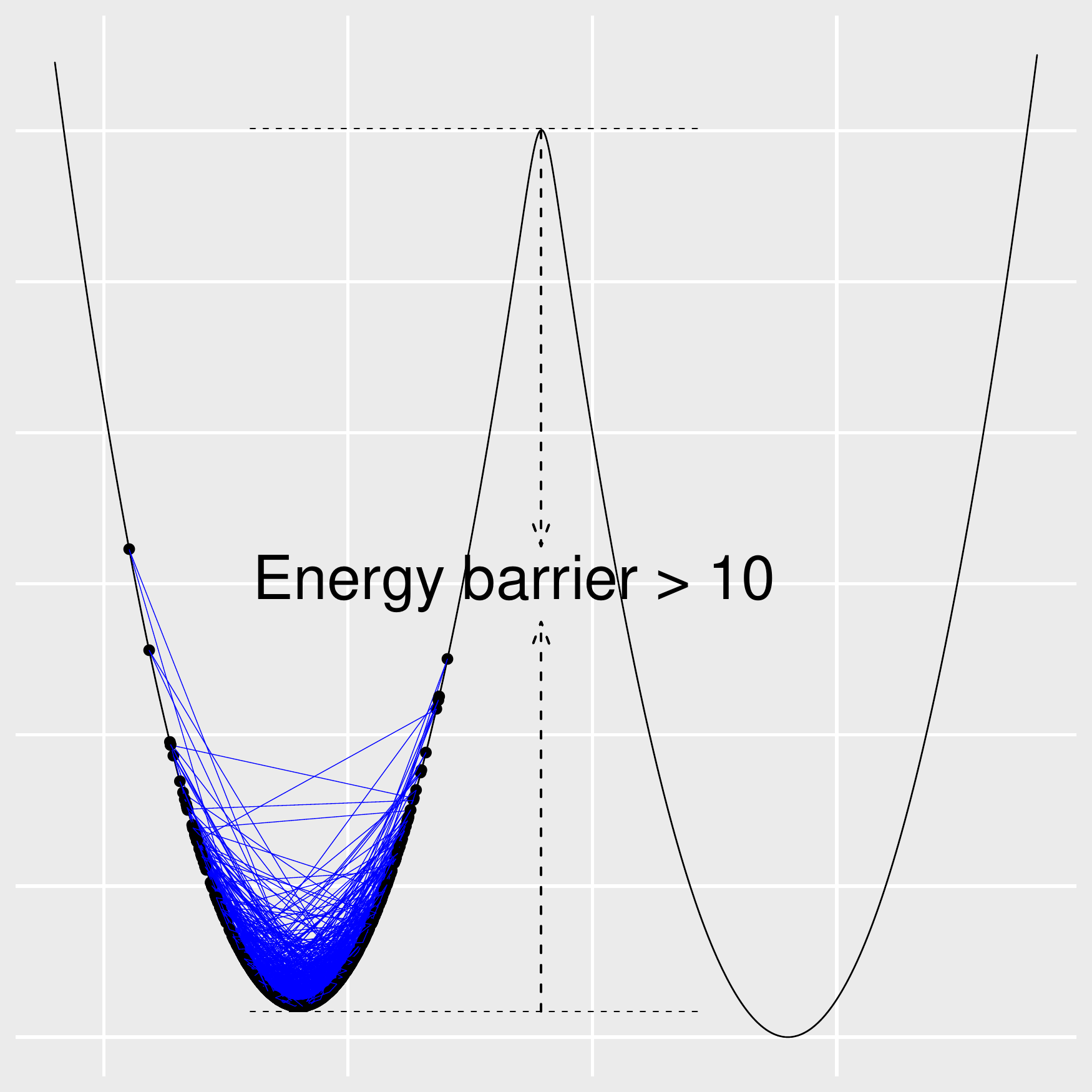}}\quad
  \subfloat[CSGLD paths (early) ]{\includegraphics[scale=0.21]{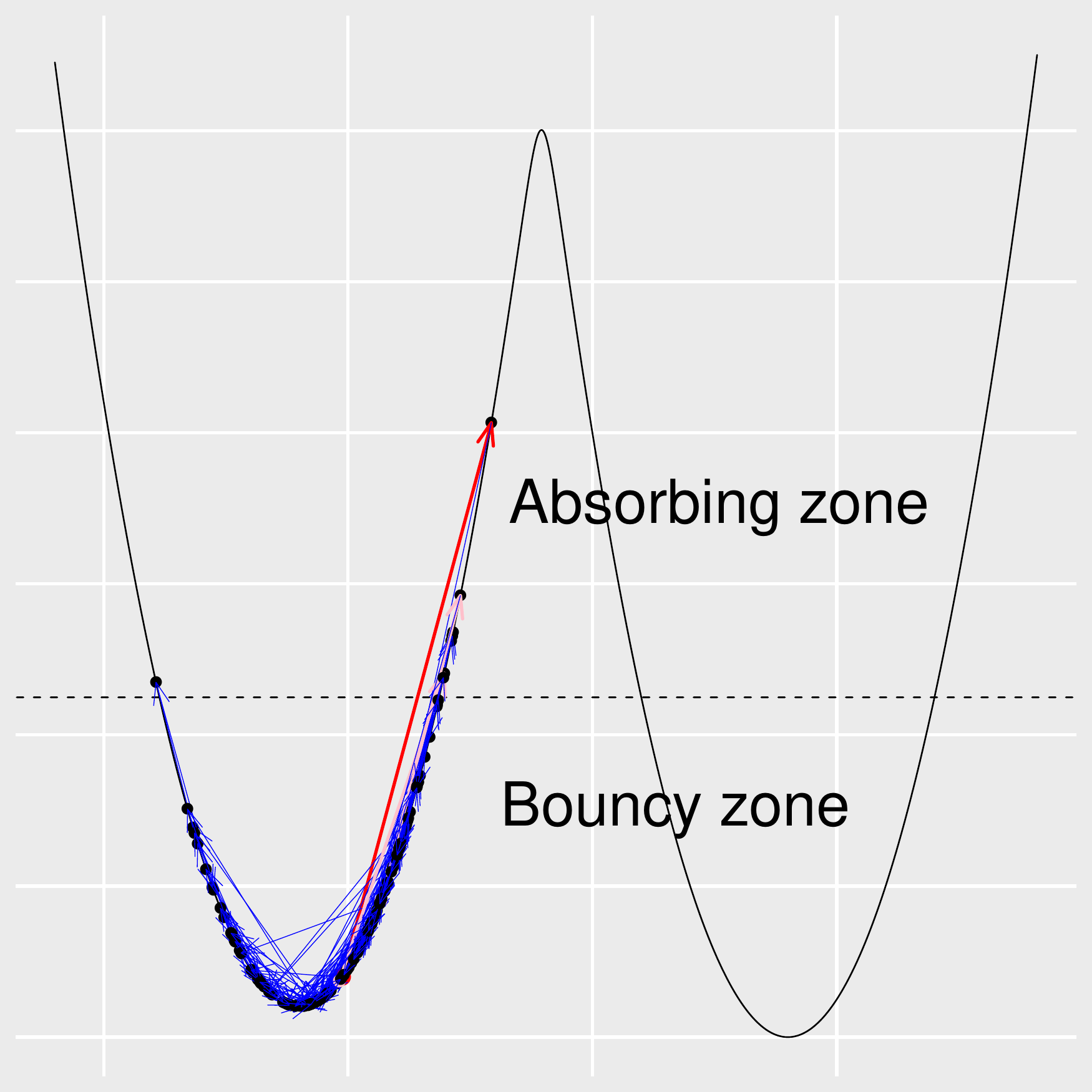}}\quad
  \subfloat[CSGLD paths (mid)]{\includegraphics[scale=0.21]{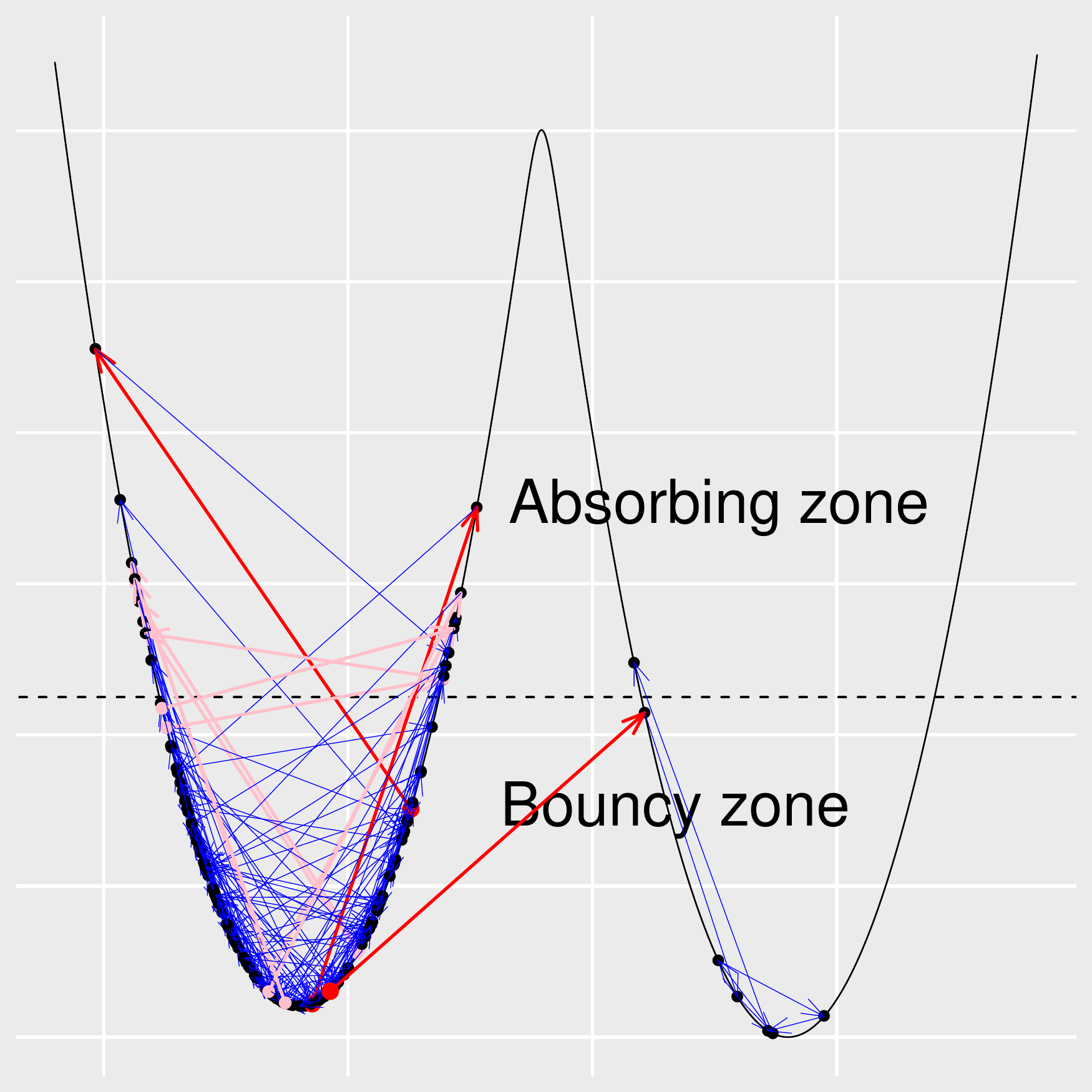}}\quad
  \subfloat[CSGLD paths (late)]{\includegraphics[scale=0.21]{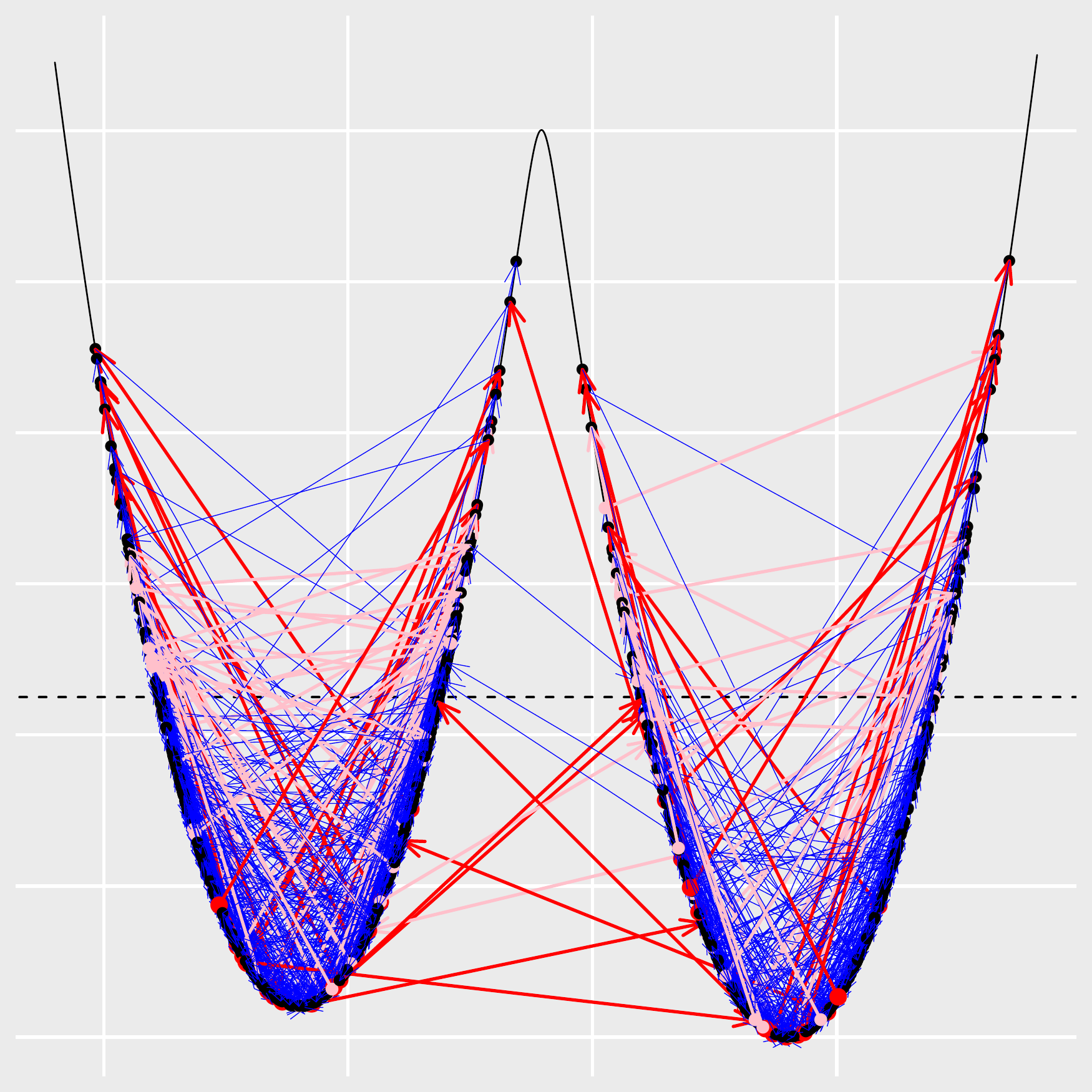}}
%   \vspace{-0.6em}
  \caption{Sample trajectories of SGLD and CSGLD: plots (a) and (c) are implemented by 100,000 iterations with a thinning factor 100 and $\zeta=0.75$, while plot (b) utilizes a thinning factor 10.}
  \label{trajectory}
%   \vspace{-1.3em}
\end{figure*}

\subsubsection{A Synthetic Multi-modal Distribution} We next simulate from a distribution $\pi(\bx)\propto e^{-U(\bx)}$, where $U(\bx)=\sum_{i=1}^2 \frac{x(i)^2-10\cos(1.2\pi x(i))}{3}$ and $\bx=(x(1), x(2))$. We compare CSGLD with SGLD, replica exchange SGLD (reSGLD) \cite{deng2020}, and SGLD with cyclic learning rates (cycSGLD) \cite{ruqi2020} and detail the setups in the supplementary material. Fig.\ref{multi-modal_simulations}(a) shows that the distribution contains nine important modes, where the center mode has the largest probability mass and the four modes on the corners have the smallest mass. We see in Fig.\ref{multi-modal_simulations}(b) that SGLD spends too much time in local regions and only identifies three modes. cycSGLD has a better ability to explore the distribution by leveraging large learning rates cyclically. However, as illustrated in Fig.\ref{multi-modal_simulations}(c), such a mechanism is still not efficient enough to resolve the local trap issue for this problem. reSGLD proposes to include a high-temperature process to encourage exploration and allows interactions between the two processes via appropriate swaps. We observe in Fig.\ref{multi-modal_simulations}(d) that reSGLD obtains both the exploration and exploitation abilities and yields a much better result. However, the noisy energy estimator may hinder the swapping efficiency and it becomes difficult to estimate a few modes on the corners. As to our algorithm, CSGLD first simulates the importance samples and recovers the original distribution according to the importance weights. We notice that the samples from CSGLD can traverse freely in the parameter space and eventually achieve a remarkable performance, as shown in Fig.\ref{multi-modal_simulations}(e).

\begin{figure*}[!ht]
% \vspace{-0.5em}
  \centering
  \subfloat[Ground truth]{\includegraphics[scale=0.28]{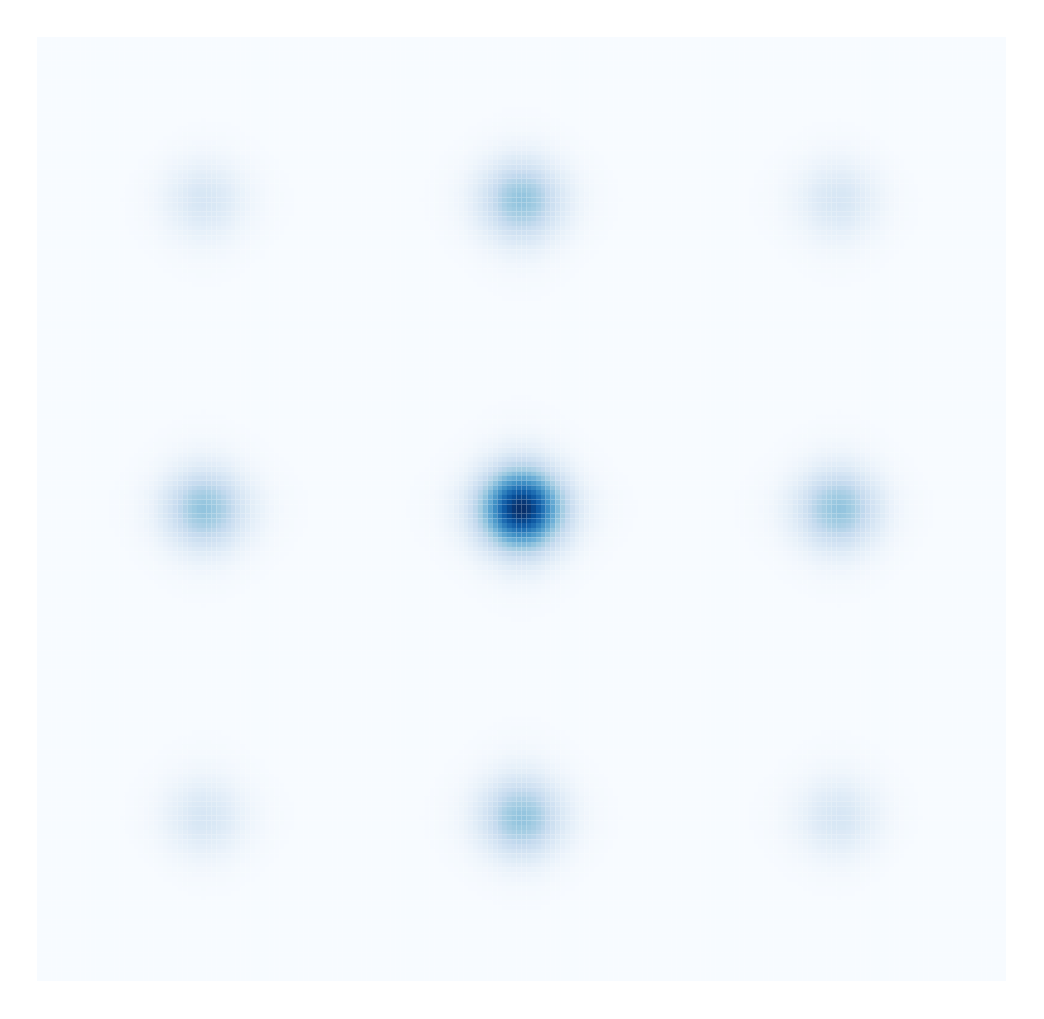}}\label{fig: msa}\;\;
  \subfloat[SGLD]{\includegraphics[scale=0.28]{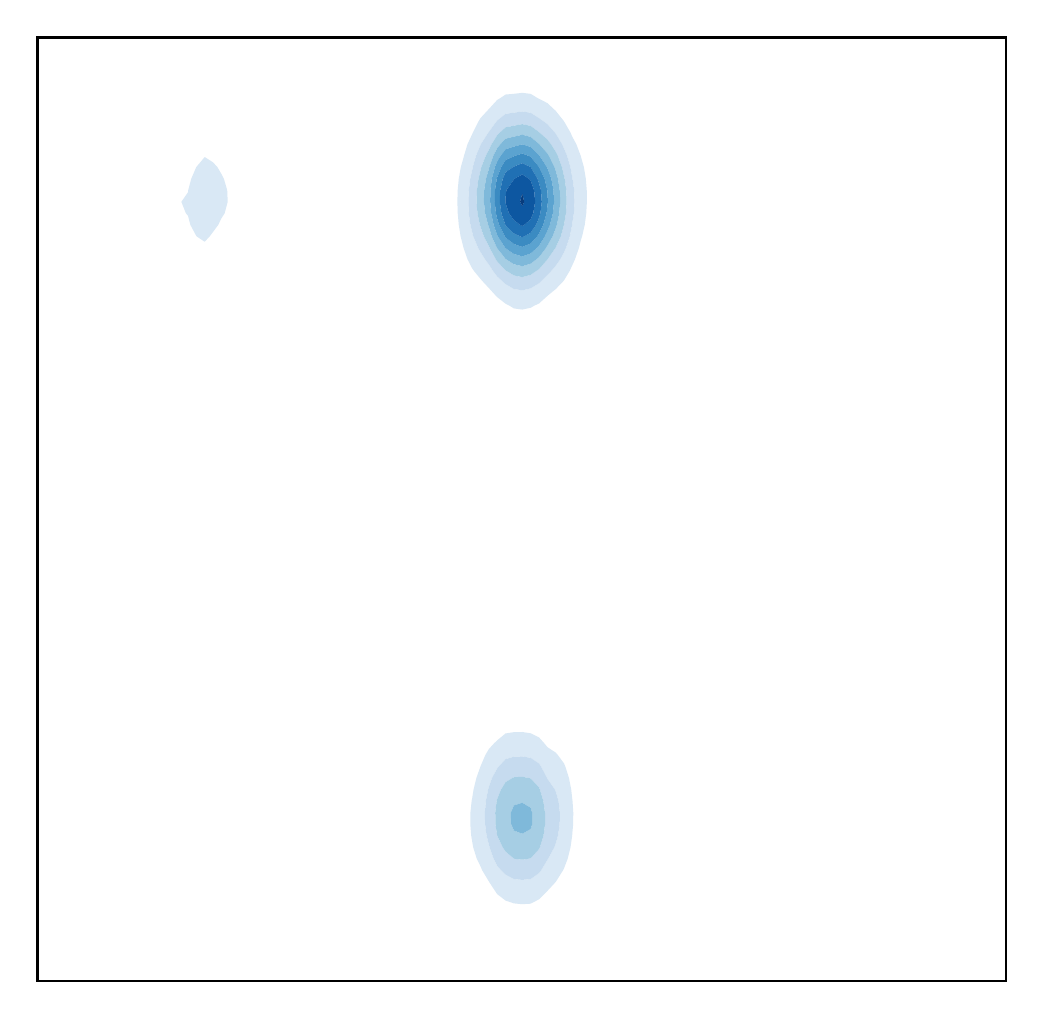}}\label{fig: msb}\;\;
  \subfloat[cycSGLD]{\includegraphics[scale=0.28]{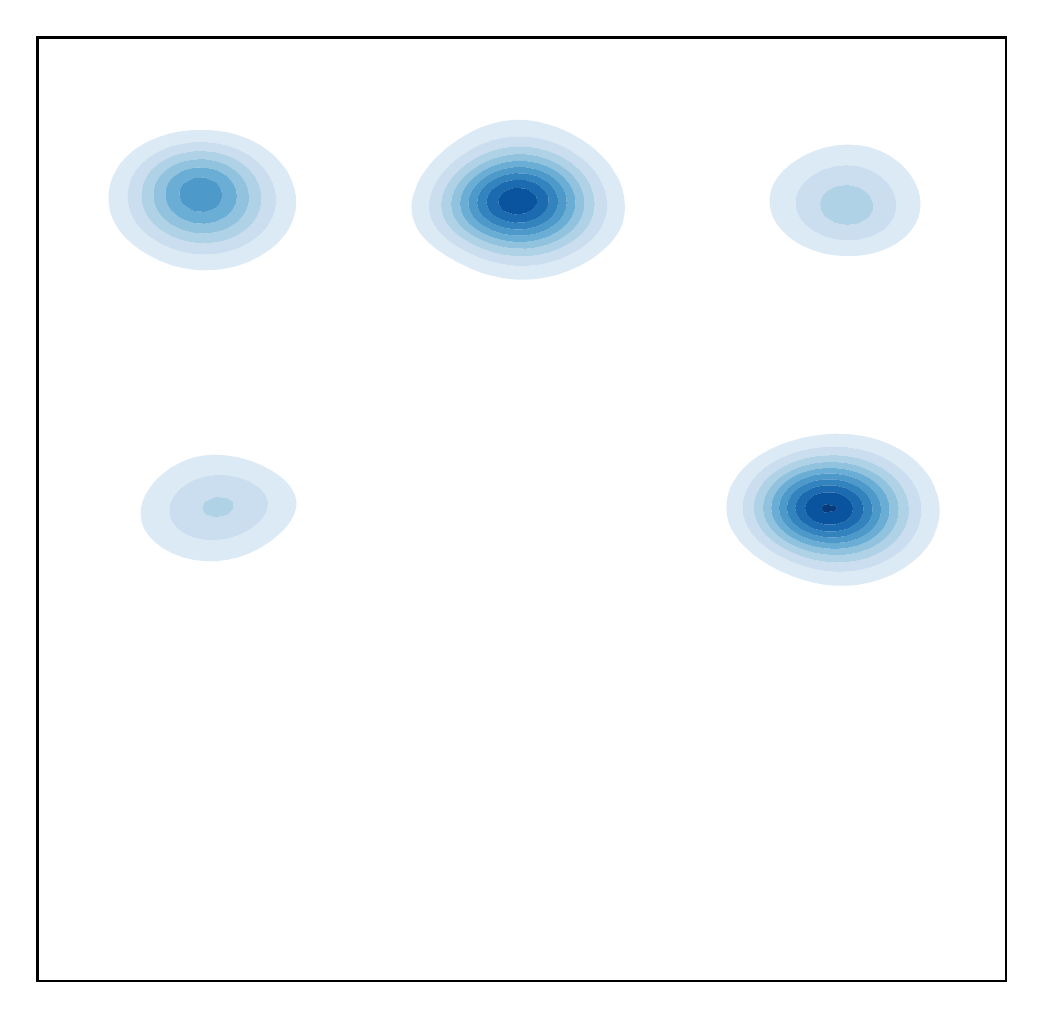}}\label{fig: msc}\;\;
  \subfloat[reSGLD]{\includegraphics[scale=0.28]{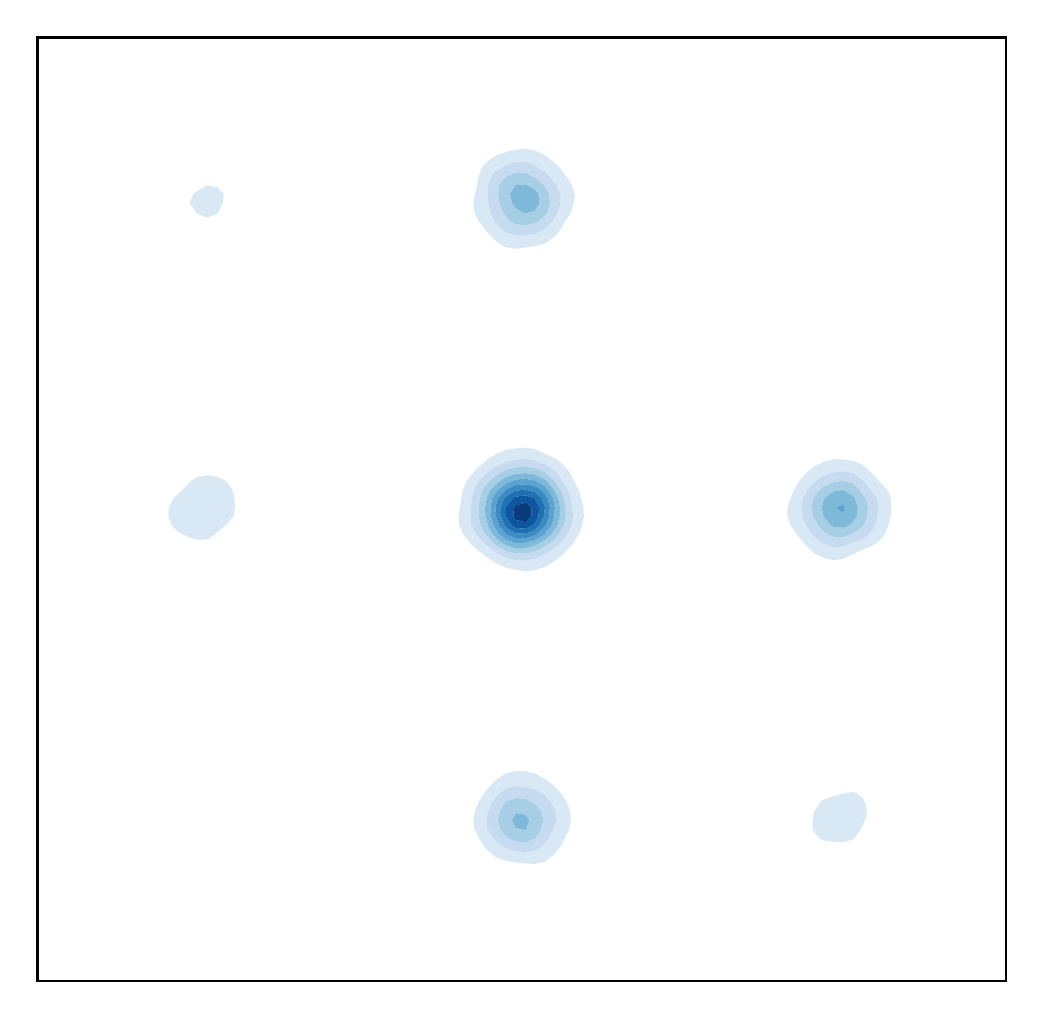}}\label{fig: msd}\;\;
  \subfloat[CSGLD]{\includegraphics[scale=0.28]{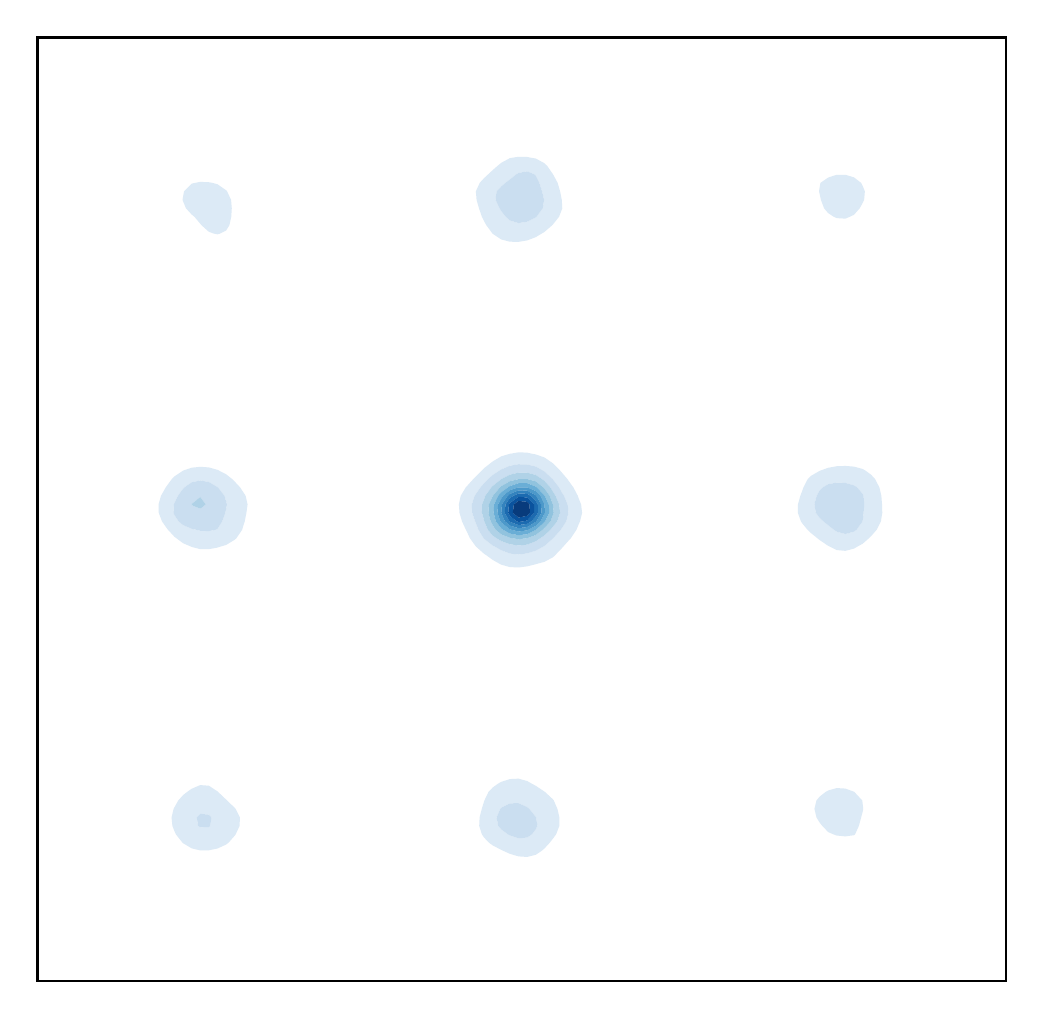}}\label{fig: mse}
%   \vspace{0.1in}
  \caption{Simulations of a multi-modal distribution. A resampling scheme is used for CSGLD.}
  \label{multi-modal_simulations}
%   \vspace{-1em}
\end{figure*}

\subsection{UCI Data}
\label{reg_UCI}

We tested the performance of CSGLD on the \textbf{UCI} regression datasets. For each dataset, we normalized all features and randomly selected 10\% of the observations for testing. Following \cite{Jose_adam_15}, we modeled the data using a Multi-Layer Perception (MLP) with a single hidden layer of 50 hidden units. We set the mini-batch size $n=50$ and trained the model for 5,000 epochs. The learning rate was set to 5e-6 and the default $L_2$-regularization coefficient is 1e-4. For all the datasets, we used the stochastic energy $\frac{N}{n}\widetilde U(\bx)$ to evaluate the partition index. We set the energy bandwidth $\Delta u=100$. We fine-tuned the temperature $\tau$ and the hyperparameter $\zeta$. For a fair comparison, each algorithm was run 10 times with fixed seeds for each dataset. 
In each run, the performance of the algorithm was evaluated by averaging over 50 models, where the averaging estimator was used for SGD and SGLD and the weighted averaging estimator was used for CSGLD. As shown in Table~\ref{UCI}, SGLD outperforms the stochastic gradient descent (SGD) 
algorithm for most datasets due to the advantage of a sampling algorithm in obtaining more informative modes. Since all these datasets are small, there is only very limited potential for improvement. Nevertheless, CSGLD still consistently outperforms all the baselines including SGD and SGLD. 

The contour strategy proposed in the chapter can be naturally extended to SGHMC \cite{Chen14, yian2015} without affecting the theoretical results. In what follows, we adopted a numerical method proposed by \cite{Saatci17} to avoid extra hyperparameter tuning. We set the momentum term to 0.9 and simply inherited all the other parameter settings used in the above experiments. In such a case, we compare the contour SGHMC (CSGHMC) with the baselines, including M-SGD (Momentum SGD) and SGHMC. The comparison indicates that some improvements can be achieved by including the momentum.

\begin{table}[!htb]
  \centering
%   \vspace{-0.3em}
\caption{Algorithm evaluation using average root-mean-square error and its standard deviation.}
\vspace{0.2in}
  \begin{tabular}{c|ccccccc}
    \toprule
    Dataset &   Energy &  Concrete &   Yacht  & Wine  \\
    Hyperparameters ($\tau/\zeta$) &   1/1  & 5/1  & 1/2.5 & 5/10 \\
    \midrule
     SGD  & 1.13$\pm$0.07  & 4.60$\pm$0.14 & 0.81$\pm$0.08 &  0.65$\pm$0.01 \\
    SGLD  & 1.08$\pm$0.07  & 4.12$\pm$0.10 & 0.72$\pm$0.07 & 0.63$\pm$0.01 \\
     CSGLD  & \textbf{1.02$\pm$0.06} & \textbf{3.98$\pm$0.11} & \textbf{0.69$\pm$0.06} &  \textbf{0.62$\pm$0.01} \\
      \midrule
      M-SGD & 0.95$\pm$0.07 & 4.32$\pm$0.27 & 0.73$\pm$0.08 &  0.71$\pm$0.02 \\
      SGHMC & 0.77$\pm$0.06 & 4.25$\pm$0.19 & \textbf{0.66$\pm$0.07} &  0.67$\pm$0.02 \\
     CSGHMC & \textbf{0.76$\pm$0.06} & \textbf{4.15$\pm$0.20} & 0.72$\pm$0.09 & \textbf{0.65$\pm$0.01} \\
    \bottomrule
  \end{tabular}
  \label{UCI}
%   \vspace{-0.2in}
\end{table}

\subsection{Computer Vision Data} 
\vspace{-0.05in}
This section compares only CSGHMC with M-SGD and SGHMC due to the popularity of momentum in accelerating computation for computer vision datasets. We keep partitioning the sample space according to the stochastic energy $\frac{N}{n} \widetilde U(\bx)$, where a mini-batch data of size $n$ is randomly chosen from the full dataset of size $N$ at each iteration. Notably, such a strategy significantly accelerates the computation of CSGHMC. As a result, CSGHMC has almost the same computational cost as SGHMC and SGD. To reduce the bias associated with the stochastic energy, we choose a large batch size $n=1,000$. For more discussions on the hyperparameter settings, we refer readers to section D in the supplementary material.

{\bf CIFAR10} is a standard computer vision dataset with 10 classes and 60,000 images, for which 50,000 images were used for training and the rest for testing. We modeled the data using a Resnet of 20 layers (Resnet20) \cite{kaiming15}. In particular, for CSGHMC, we considered a partition of the energy space in 200 subregions, where the energy bandwidth was set to $\Delta {u}=1000$.  We trained the model for a total of 1000 epochs and evaluated the model every ten epochs based on two criteria, namely, best point estimate (BPE) and Bayesian model average (BMA). We repeated each experiment 10 times and reported in Table \ref{full_cifar} the average prediction accuracy and the corresponding standard deviation. 
  
%%% new
In the first set of experiments, all the algorithms utilized a fixed learning rate $\epsilon=2e-7$ and a fixed temperature $\tau=0.01$ under the Bayesian setting. SGHMC performs quite similarly to M-SGD, both obtaining around 90\% accuracy in BPE and 92\% in BMA. Notably, in this case, simulated annealing is not applied to any of the algorithms and achieving the state-of-the-art is quite difficult. However, BMA still consistently outperforms  BPE, implying the great potential of advanced MCMC techniques in deep learning. Instead of simulating from $\pi(\bx)$ directly, CSGHMC adaptively simulates from a flattened distribution $\varpi_{\btheta_{\star}}$ and adjusts the sampling bias by dynamic importance weights. As a result, the weighted averaging estimators obtain an improvement by as large as 0.8\% on BMA. In addition, the flattened distribution facilitates optimization and the increase in BPE is quite significant. 

In the second set of experiments, we employed a decaying schedule on both learning rates and temperatures (if applicable) to obtain simulated annealing effects. For the learning rate, we fix it at $2\times 10^{-6}$ in the first 400 epochs and then decayed it by a factor of $1.01$ at each epoch. For the temperature, we consistently decayed it by a factor of $1.01$ at each epoch. We call the resulting algorithms by saM-SGD, saSGHMC, and saCSGHMC, respectively. Table \ref{full_cifar} shows that the performances of all 
algorithms are increased quite significantly, where the fine-tuned baselines already obtained the state-of-the-art results. Nevertheless, saCSGHMC further improves BPE by 0.25\% and slightly improve the highly optimized BMA by nearly 0.1\%.

{\bf CIFAR100} dataset has 100 classes, each of which contains 500 training images and 100 testing images. We follow a similar setup as CIFAR10, except that $\Delta u$ is set to 5000. For M-SGD, BMA can be better than BPE by as large as 5.6\%. CSGHMC has led to an improvement of 3.5\% on BPE and 2\% on BMA, which further demonstrates the superiority of advanced MCMC techniques. Table \ref{full_cifar} also shows that with the help of both simulated annealing and importance sampling, saCSGHMC can outperform the highly optimized baselines by almost 1\% accuracy on BPE and 0.7\% on BMA. The significant improvements show the advantage of the proposed method in training DNNs.

\begin{table}[htbp]
% \vspace{-0.4em}
\begin{center}
\caption{Experiments on CIFAR10 \& 100 using Resnet20, where BPE and BMA are short for best point estimate and Bayesian model average, respectively.
}\label{full_cifar}
\vspace{0.3in}
\begin{tabular}{ccccc}
\hline
\multirow{2}{4em}{Algorithms} & \multicolumn{2}{c}{CIFAR10} & \multicolumn{2}{c}{CIFAR100} \\
 & BPE & BMA & BPE & BMA \\ 
\hline
 M-SGD & 90.02$\pm$0.06  & 92.03$\pm$0.08  & 61.41$\pm$0.15  & 67.04$\pm$0.12  \\
SGHMC  & 90.01$\pm$0.07 & 91.98$\pm$0.05 & 61.46$\pm$0.14  & 66.43$\pm$0.11  \\
{CSGHMC}  & \textbf{90.87}$\pm$\textbf{0.04} & \textbf{92.85}$\pm$\textbf{0.05} & \textbf{63.97$\pm$0.21} & \textbf{68.94$\pm$0.23} \\
\hline
 saM-SGD & 93.83$\pm$0.07  & 94.25$\pm$0.04 & 69.18$\pm$0.13 & 71.83$\pm$0.12 \\
saSGHMC  & 93.80$\pm$0.06 & 94.24$\pm$0.06 & 69.24$\pm$0.11 & 71.98$\pm$0.10 \\
{saCSGHMC}  & \textbf{94.06$\pm$0.07} & 94.33$\pm$0.07 & \textbf{70.18$\pm$0.15} & \textbf{72.67$\pm$0.15} \\
\hline
\end{tabular}
% \vspace{-0.3in}
\end{center}
\end{table}

\section{Conclusion}
We have proposed CSGLD as a general scalable Monte Carlo algorithm for both simulation and optimization tasks. CSGLD automatically adjusts the invariant distribution during simulations to facilitate escaping from local traps and traversing over the entire energy landscape. The sampling bias introduced thereby is accounted for by dynamic importance weights. We  proved a stability condition for the mean-field system induced by CSGLD together with the convergence of its self-adapting parameter $\btheta$ to a unique fixed point $\btheta_{\star}$. We established the convergence of a weighted averaging estimator for CSGLD. The bias of the estimator decreases as we employ a finer partition, a larger mini-batch size, and smaller learning rates and step sizes. We tested CSGLD and its variants on a few examples, which show their great potential in deep learning and big data computing. 

Our algorithm ensures AI safety by providing more robust predictions and helps build a safer environment. It is an extension of the flat histogram algorithms from the Metropolis kernel to the Langevin kernel and paves the way for future research in various dynamic importance samplers and adaptive biasing force (ABF) techniques for big data problems. The Bayesian community and the researchers in the area of Monte Carlo methods will enjoy the benefit of our work. 

\chapter{INTERACTING CONTOUR STOCHASTIC GRADIENT LANGEVIN DYNAMICS}
\label{ICSGLD}

To simulate from distributions with complex energy landscapes, e.g., those with a multitude of modes well  separated by high energy barriers, an emerging trend is to run multiple chains, where interactions between different chains can potentially accelerate the convergence of the simulation. For example, \cite{SongWL2014} and \cite{Futoshi2020} showed theoretical advantages of appropriate interactions in ensemble/population simulations. Other multiple chain methods include particle-based nonlinear Markov (Vlasov) processes \cite{SVGD, SPOS} and replica exchange methods (also known as parallel tempering) \cite{deng_VR}. However, the particle-based methods result in an expensive kernel matrix computation given a large number of particles \cite{SVGD}; similarly, na\"{i}vely extending replica exchange methods to population chains leads to a long waiting time to swap between non-neighboring chains \cite{Syed_jrssb}. Therefore, how to conduct interactions between different chains, while maintaining the scalability of the algorithm, is the key to the success of the parallel stochastic gradient MCMC algorithms. 

In this chapter, we propose an interacting contour stochastic gradient Langevin dynamics (ICSGLD) sampler, a pleasingly parallel extension of contour stochastic gradient Langevin dynamics (CSGLD) \cite{CSGLD} with \emph{efficient interactions}. 
The proposed algorithm requires minimal communication cost in that each chain shares with others the marginal energy likelihood estimate only. 
As a result, the interacting mechanism improves the convergence of the simulation, while the minimal communication mode between different chains enables the  proposed algorithm to be naturally adapted to distributed computing with little overhead. 
For the single-chain CSGLD algorithm,  despite its theoretical advantages as shown in \cite{CSGLD}, estimation of the marginal energy likelihood remains a challenging task for the big data problems with a wide energy range, jeopardizing the empirical performance of the class of importance sampling methods \cite{SMC1, SMC2, WangLandau2001, Liang07, Particle_MCMC, CSGLD} in big data applications. To resolve this issue, we resort to a novel random-field function that is constructed based on multiple chains and  proven easier to estimate. As such, we can greatly facilitate the estimation of the marginal energy likelihood so as to accelerate the simulations of notoriously complex distributions. To summarize, the algorithm has three main contributions:

\begin{itemize}
    \item We propose a scalable interacting importance sampling method for big data problems with the minimal communication cost. A novel random-field function is derived to tackle the incompatibility issue of the class of importance sampling methods in big data problems.
    \item Theoretically, we study the local stability of the non-linear mean-field system. In addition, we prove the asymptotic normality for the stochastic approximation process in mini-batch settings and show that ICSGLD is asymptotically more efficient than the single-chain CSGLD with an equivalent computational budget. 
    \item Our proposed algorithm achieves appealing mode explorations using a fixed learning rate on the MNIST dataset and obtains remarkable performance in large-scale uncertainty estimation tasks.
\end{itemize}

\section{Preliminaries}
\vskip -0.05in
\subsection{Stochastic Gradient Langevin Dynamics}

A standard sampling algorithm for big data problems is SGLD \cite{Welling11}, which is a numerical scheme of a stochastic differential equation in mini-batch settings:
\begin{equation}
    \bx_{k+1}=\bx_k-\epsilon_{k} \frac{N}{n}\nabla_{\bx} \widetilde U(\bx_k)+\sqrt{2\tau \epsilon_{k}}\bw_{k},
\end{equation}
where $\bx_k\in \MX\in\mathbb{R}^{d}$, $\epsilon_{k}$ is the learning rate at iteration $k$, $N$ denotes the number of total data points,  $\tau$ is the temperature, and $\bw_k$ is a standard Gaussian vector of dimension $d$. In particular, $\frac{N}{n}\nabla_{\bx} \widetilde U(\bx)$ is an unbiased stochastic gradient estimator based on a mini-batch data $\mathcal{B}$ of size $n$ and $\frac{N}{n}\widetilde U(\bx)$ is the unbiased energy estimator for the exact energy function $U(\bx)$. Under mild conditions on $U$, $\bx_{k+1}$ is known to converge weakly to a unique invariant distribution $\pi(\bx)\propto e^{-\frac{U(\bx)}{\tau}}$ as $\epsilon_k\rightarrow 0$.

\subsection{Contour Stochastic Gradient Langevin Dynamics}
\label{ori_csgld}
Despite its theoretical guarantees, SGLD can converge exponentially slow when $U(\bx)$ is non-convex and exhibits high energy barriers. To remedy this issue, CSGLD \cite{CSGLD} exploits the flat histogram idea and proposes to simulate from a flattened density with much lower energy barriers
\begin{equation}
\label{flat_density}
    \varpi_{\Psi_{\btheta}}(\bx) \propto {\pi(\bx)}/{\Psi^{\zeta}_{\btheta}(U(\bx))},
\end{equation}
where $\zeta$ is a hyperparameter, $\Psi_{\btheta}(u)= \sum_{i=1}^m \bigg(\theta(i-1)e^{(\log\theta(i)-\log\theta(i-1)) \frac{u-u_{i-1}}{\Delta u}}\bigg) 1_{u_{i-1} < u \leq u_i}.$

In particular, $\{u_i\}_{i=0}^m$ determines the partition $\{\MX_i\}_{i=1}^m$ of $\MX$ such that $\MX_i=\{\bx: u_{i-1}<U(\bx)\leq u_i\}$, where $-\infty=u_0<u_1<\cdots<u_{m-1}<u_m=\infty$. For practical purposes, we assume $u_{i+1}-u_i=\Delta u$ for $i=1,\cdots, m-2$. In addition, $\btheta=(\theta(1), \theta(2), \ldots, \theta(m))$ 
 is the self-adapting parameter in the space $\bTheta=\bigg\{\left(\theta(1),\cdots, \theta(m)\right)\mid 0<\theta(1),\cdots,\theta(m)<1 \&\ \sum_{i=1}^m \theta(i)=1 \bigg\}$.
 
Ideally, setting $\zeta=1$ and $\theta(i)=\theta_{\infty}(i)$, where $\theta_{\infty}(i)=\int_{\MX_i} \pi(\bx)d\bx$ for $i\in\{1,2,\cdots, m\}$, enables CSGLD to achieve a ``random walk'' in the space of energy and to penalize the over-visited partition \cite{WangLandau2001, Liang07, Fortetal2011, Fort15}. However, the optimal values of $\btheta_{\infty}$ is unknown {\it a priori}. To tackle this issue, CSGLD proposes the following procedure to adaptively estimate $\btheta$ via stochastic approximation (SA) \cite{RobbinsM1951, Albert90}:
\begin{itemize}
\item[(1)] Sample $\bx_{k+1}=\bx_k+\epsilon_{k} \frac{N}{n}\nabla_{\bx}\widetilde  U_{\Psi_{\btheta_k}}(\bx_k)+\sqrt{2\tau \epsilon_{k}}\bw_{k}$, 
\item[(2)] Optimize $\bm{\theta}_{k+1}=\bm{\theta}_{k}+\omega_{k+1} \mathbb{ \widetilde H}(\bm{\theta}_{k}, \bx_{k+1}),$
\end{itemize}
where $\nabla_{\bx}\widetilde U_{\Psi_{\btheta}}(\cdot)$ is a stochastic gradient function of $\varpi_{\Psi_{\btheta}}(\cdot)$ to be detailed in Algorithm \ref{alg:ICSGLD}. $\mathbb{ \widetilde H}(\btheta,\bx):=\left(\mathbb{ \widetilde H}_1(\btheta,\bx), \cdots, \mathbb{ \widetilde H}_m(\btheta,\bx)\right)$ is random-field function where each entry follows
\begin{equation}
\label{ori_randomF}
    \mathbb{ \widetilde H}_i(\btheta,\bx)={\theta}^{\zeta}( J_{\widetilde U} (\bx))\left(1_{i= J_{\widetilde U}(\bx)}-{\theta}(i)\right), \text{where } J_{\widetilde U}(\bx)=\sum_{i=1}^m i 1_{u_{i-1}<\frac{N}{n} \widetilde U(\bx)\leq u_i}.
\end{equation}
Theoretically, CSGLD converges to a sampling-optimization equilibrium in the sense that $\btheta_{k}$ approaches to a fixed point $\btheta_{\infty}$ and the samples are drawn from the flattened density $\varpi_{\Psi_{\btheta_{\infty}}}(\bx)$. Notably, the mean-field system is \emph{globally stable} with a unique stable equilibrium point in a small neighborhood of $\btheta_{\infty}$. Moreover, such an appealing property holds even when $U(\bx)$ is non-convex.

 \begin{algorithm*}[tb]
   \caption{Interacting contour stochastic gradient Langevin dynamics algorithm (ICSGLD). $\{\MX_i\}_{i=1}^m$ is pre-defined partition and $\zeta$ is a hyperparameter. The update rule in distributed-memory settings and discussions of hyperparameters is detailed in section \ref{hyper_setup} in the supplementary material. 
   }
   \label{alg:ICSGLD}
\begin{algorithmic}
   \STATE{\bfseries [1.] (Data subsampling)} Draw a mini-batch data $\mathcal{B}_k$ from $\mathcal{D}$, and compute stochastic gradients $\nabla_{\bx}\widetilde U(\bx_k^{(p)})$ and energies $\widetilde U(\bx_k^{(p)})$ for each $\bx^{(p)}$, where $p\in\{1,2,\cdots, P\}$, $\mid\mathcal{B}_k\mid=n$, and $\mid\mathcal{D}\mid=N$.

   \STATE {\bfseries [2.] (Parallel simulation)}
  Sample $\bx_{k+1}^{\pop P}:=(\bx_{k+1}^{(1)}, \bx_{k+1}^{(2)}, \cdots, \bx_{k+1}^{(P)})^{\top}$ based on SGLD and $\btheta_k$
   \begin{equation}
   \bx_{k+1}^{\pop P}=\bx_k^{\pop P}+\epsilon_{k} \frac{N}{n}\nabla_{\bx}\widetilde  \bU_{\Psi_{\btheta_k}}(\bx_k^{\pop P})+\sqrt{2\tau \epsilon_{k}}\bw_{k}^{\pop P},
   \end{equation}
   where $\epsilon_{k}$ is the learning rate, $\tau$ is the temperature, $\bw_{k}^{\pop P}$ denotes $P$ independent standard Gaussian vectors, $\nabla_{\bx}\widetilde \bU_{\Psi_{\btheta}}(\bx^{\pop P})=(\nabla_{\bx}\widetilde U_{\Psi_{\btheta}}(\bx^{(1)}), \nabla_{\bx}\widetilde U_{\Psi_{\btheta}}(\bx^{(2)}), \cdots, \nabla_{\bx}\widetilde U_{\Psi_{\btheta}}(\bx^{(P)}))^{\top}$, and $\nabla_{\bx}\widetilde U_{\Psi_{\btheta}}(\bx)= \left[1+ 
   \frac{\zeta\tau}{\Delta u}  \left(\log \theta({J}_{\widetilde U}(\bx))-\log\theta((J_{\widetilde U}(\bx)-1)\vee 1) \right) \right]  
    \nabla_{\bx} \widetilde U(\bx)$ for any $\bx\in\MX$.
  \STATE {\bfseries [3.] (Stochastic approximation)} Update the self-adapting parameter $\theta(i)$ for $i\in\{1,2,\cdots, m\}$ 
  \begin{equation}
  \begin{split}
  \label{updateeq_icsgld}
 \theta_{k+1}(i)&={\theta}_{k}(i)+\omega_{k+1}\frac{1}{P}\sum_{p=1}^P {\theta}_{k}( J_{\widetilde U}(\bx_{k+1}^{(p)}))\left(1_{i=J_{\widetilde U}(\bx_{k+1}^{(p)})}-{\theta}_{k}(i)\right),
 \end{split}
 \end{equation}
 where $1_{A}$ is an indicator function that takes value 1 if the event $A$ appears and equals 0 otherwise. 

\vspace{-0.04in}
\end{algorithmic}
% \vspace{-1.5em}
\end{algorithm*}

\section{Interacting Contour Stochastic Gradient Langevin Dynamics}
\vskip -0.05in
The major goal of interacting CSGLD (ICSGLD) is to improve the efficiency of CSGLD. In particular, the self-adapting parameter $\btheta$ is crucial for ensuring the sampler to escape from the local traps and traverse the whole energy landscape, and how to reduce the variability of $\btheta_k$'s is the key to the success of such a dynamic importance sampling algorithm. To this end, we propose an efficient variance reduction scheme via interacting parallel systems to improve the accuracy of  $\btheta_{k}$'s.

\subsection{Interactions in Parallelism}

Now we first consider a na\"{i}ve parallel sampling scheme with $P$ chains as follows
\begin{equation*}
\small
    \bx_{k+1}^{\pop P}=\bx_k^{\pop P}+\epsilon_{k} \frac{N}{n}\nabla_{\bx}\widetilde  \bU_{\Psi_{\btheta_k}}(\bx_k^{\pop P})+\sqrt{2\tau \epsilon_{k}}\bw_{k}^{\pop P},
\end{equation*}
where $\bx^{\pop P}=(\bx^{(1)}, \bx^{(2)}, \cdots, \bx^{(P)})^{\top}$, $\bw_{k}^{\pop P}$ denotes $P$ independent standard Gaussian vectors, and $\widetilde \bU_{\Psi_{\btheta}}(\bx^{\pop P})=(\widetilde U_{\Psi_{\btheta}}(\bx^{(1)}), \widetilde U_{\Psi_{\btheta}}(\bx^{(2)}), \cdots, \widetilde U_{\Psi_{\btheta}}(\bx^{(P)}))^{\top}$.

Stochastic approximation aims to find the solution $\btheta$ of the mean-field system $h(\btheta)$ such that 
\begin{equation*}
\small
\begin{split}
    h(\btheta)&=\int_{\MX} \widetilde H(\bm{\theta}, \bm{\bx}) \varpi_{\btheta}(d\bx)=0,
\end{split}
\end{equation*}
where $\varpi_{\btheta}$ is the invariant measure simulated via SGLD that approximates $\varpi_{\Psi_{\btheta}}$ in (\ref{flat_density}) and $\widetilde H(\bm{\theta}, \bm{\bx})$ is the novel random-field function to be defined later in (\ref{new_randomF}). Since $h(\btheta)$ is observable only up to large random perturbations (in the form of $\widetilde H(\bm{\theta}, \bm{\bx})$), the optimization of $\btheta$ based on isolated random-field functions may not be efficient enough. However, due to the conditional independence of $\bx^{(1)}, \bx^{(2)},\cdots, \bx^{(P)}$ in parallel sampling, it is very natural to consider a Monte Carlo average
\begin{equation}
\label{monte_carlo_avg}
\small
\begin{split}
    h(\btheta)&=\frac{1}{P}\sum_{p=1}^P\int_{\MX} \widetilde H(\bm{\theta}, \bm{\bx}^{(p)}) \varpi_{\btheta}(d\bx^{(p)})=0.
\end{split}
\end{equation}
Namely, we are considering the following stochastic approximation scheme
\begin{equation}
\label{SA_step}
    \bm{\theta}_{k+1}=\bm{\theta}_{k}+\omega_{k+1} \widetilde \bH(\bm{\theta}_{k}, \bx_{k+1}^{\pop P}),
\end{equation}
where $\widetilde \bH(\bm{\theta}_{k}, \bx_{k+1}^{\pop P})$ is an interacting random-field function $ \widetilde \bH(\bm{\theta}_{k}, \bx_{k+1}^{\pop P})=\frac{1}{P}\sum_{p=1}^P \widetilde H(\bm{\theta}_{k}, \bx_{k+1}^{(p)})$.
Note that the Monte Carlo average is very effective to reduce the variance of the interacting random-field function $\widetilde \bH(\bm{\theta}, \bm{\bx}^{\pop P})$ based on the conditionally independent random field functions. Moreover, each chain shares with others only a very short message during each iteration. Therefore, the interacting parallel system is well suited for distributed computing, where the \emph{implementations and communication costs} are further detailed in section \ref{communication_cost} in the supplementary material. By contrast, each chain of the non-interacting parallel CSGLD algorithm deals with the parameter $\btheta$ and a large-variance random-field function $\widetilde H(\bm{\theta}, \bx)$ individually, leading to coarse estimates in the end.

Formally, for the population/ensemble interaction scheme (\ref{SA_step}),  we define a novel random-field function $\widetilde H(\btheta,\bx)=(\widetilde H_1(\btheta,\bx), \widetilde H_2(\btheta,\bx), \cdots, \widetilde H_m(\btheta,\bx))$, where each component satisfies
\begin{equation}
\label{new_randomF}
\widetilde H_i(\btheta,\bx)={\theta}( J_{\widetilde U} (\bx))\left(1_{i= J_{\widetilde U}(\bx)}-{\theta}(i)\right).  
\end{equation}
As shown in Lemma \ref{convex_main}, the corresponding mean-field function proposes to converge to a different fixed point $\btheta_{\star}$, s.t.
\begin{equation}
\small
\label{relation_two_thetas}
    \theta_{\star}(i)\propto\left(\int_{\MX_i} e^{-\frac{U(\bx)}{\tau}}d\bx\right)^{\frac{1}{\zeta}}\propto \btheta_{\infty}^{\frac{1}{\zeta}}(i).
\end{equation}
A large data set often renders the task of estimating $\btheta_{\infty}$ numerically challenging. By contrast, we resort to a different solution by estimating $\btheta_{\star}$ instead based on a large value of $\zeta$. The proposed algorithm is summarized in Algorithm \ref{alg:ICSGLD}. For more study on the scalablity of the new scheme, we leave the discussion in section \ref{scalability}.

\subsection{Related Work}

Replica exchange SGLD \cite{deng2020, deng_VR} has successfully extended the traditional replica exchange \cite{PhysRevLett86, Geyer91, parallel_tempering05} to big data problems. However, it works with two chains only and has a low swapping rate. As shown in Figure \ref{replica_vs_contour}(a), a na\"{i}ve extension of multi-chain replica exchange SGLD yields low communication efficiency. Despite some recipe in the literature \cite{Katzgraber06, Elmar08, Syed_jrssb}, how to conduct multi-chain replica exchange with low-frequency swaps is still an open question.

\begin{figure*}[!ht]
  \centering
%   \vskip -0.17in
  \subfloat[Replica Exchange (parallel tempering)]{\includegraphics[scale=0.5
  ]{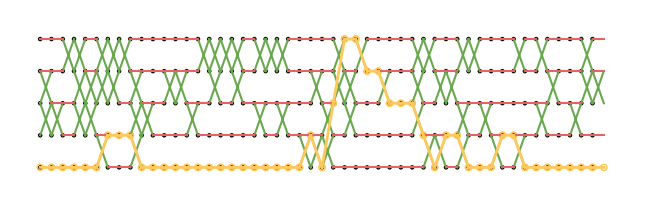}}\enskip
%   \hspace{-0.5cm}
  \subfloat[Interacting contour SGLD (ICSGLD)]{\includegraphics[scale=0.53]{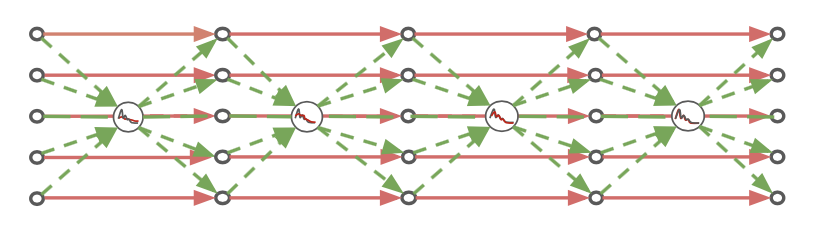}}
%   \vspace{-0.5em}
  \caption{A comparison of communication costs between replica exchange (RE) and ICSGLD. We see RE takes many iterations to swap with all the other chains; by contrast, ICSGLD possesses a pleasingly parallel mechanism where the only cost comes from sharing a light message.}
  \label{replica_vs_contour}
%   \vspace{-0.05in}
\end{figure*}

Stein variational gradient descent (SVGD) \cite{SVGD} is a popular approximate inference method to drive a set of particles for posterior approximation. In particular, repulsive forces are proposed to prevent particles to collapse together into neighboring regions, which resembles our strategy of penalizing over-visited partition. However, SVGD tends to underestimate the uncertainty given a limited number of particles. Moreover, the quadratic cost in kernel matrix computation further raises the scalability concerns as more particles are proposed.

Admittedly, ICSGLD is not the first interacting importance sampling algorithm. For example, 
a population stochastic approximation Monte Carlo (pop-SAMC) algorithm  has been proposed in \cite{SongWL2014}, and 
an interacting particle Markov chain Monte Carlo (IPMCMC) algorithm has been proposed  in \cite{IPMCMC}.
A key difference between our algorithm and others is that our algorithm is mainly devised for big data problems. The IPMCMC and pop-SAMC are gradient-free samplers, which are hard to be adapted to high-dimensional big data problems.

Other parallel SGLD methods \cite{Ahn14_icml, chen16_distributed} aim to reduce the computational cost of gradient estimations in distributed computing, which, however, does not consider interactions for accelerating the convergence. \cite{Li19_v2} proposed asynchronous protocols to reduce communication costs when the master aggregates model parameters from all workers. Instead, we don't communicate the parameter $\bx\in\mathbb{R}^d$ but only share $\btheta\in \mathbb{R}^m$ and the indices, where $m\ll d$.

Our work also highly resembles the well-known Federated Averaging (FedAvg) algorithm \cite{lhy+20}, except that the stochastic gradient $\widetilde U(\bx)$ is replaced with the random field function $\widetilde H(\btheta, \bx)$ and we only share the low-dimensional latent vector $\btheta$. Since privacy concerns and communication cost are not major bottlenecks of our problem, we leave the study of taking the Monte Carlo average in Eq.(\ref{monte_carlo_avg}) every $K>1$ iterations for future works.

\section{Convergence Properties}
\label{all_proof_pop_CSGLD}
\vskip -0.05in
To study theoretical properties of ICSGLD, we first show a local stability property that is well-suited to big data problems, and then we present the asymptotic normality for the stochastic approximation process in \emph{mini-batch settings}, which eventually yields the desired result that ICSGLD is asymptotically more efficient than a single-chain CSGLD with an equivalent computational cost.

\subsection{Local Stability for Non-linear Mean-field Systems in Big Data}

The first obstacle  for the theoretical study is to approximate the components of $\btheta_{\infty}$  corresponding to the high energy region.  
 To get around this issue, the random field function $\widetilde H(\btheta,\bx)$ in (\ref{new_randomF}) is adopted to estimate a different target $\btheta_{\star}\propto \btheta_{\infty}^{\frac{1}{\zeta}}$. As detailed in Lemma \ref{convex_appendix} in the supplementary material, the mean-field equation is now formulated as follows
\begin{equation}
\label{h_i_theta_main}
     h_i(\btheta)\propto \theta_{\star}^{\zeta}(i)-{\left(\theta(i) C_{\btheta}\right)^{\zeta}}+{\text{perturbations}},
\end{equation}
where $C_{\btheta}=\bigg(\frac{\widetilde Z_{\zeta,\btheta}}{\widetilde Z_{\zeta, \btheta_{\star}}^{\zeta}}\bigg)^{\frac{1}{\zeta}}$ and $\widetilde Z_{\zeta,\btheta}=\sum_{k=1}^m  \frac{\int_{\MX_k} \pi(\bx)d\bx}{\theta^{\zeta-1}(k)}$. We see that (\ref{h_i_theta_main}) may not be linearly stable as in \cite{CSGLD}. Although the solution of the mean-field system $h(\btheta)=0$ is still unique, there may exist unstable invariant subspaces, leading us to consider the local properties. For a proper initialization of $\btheta$, which can be achieved by pre-training the model long enough time through SGLD, the mean value theorem implies a linear property in a local region
\begin{equation*}
     h_i(\btheta)\propto \theta_{\star}(i)-\theta(i)+{\text{perturbations}}.
\end{equation*}
Combining the perturbation theory \cite{Eric}, we present the following stability result:
\begin{lemma}[Local stability, informal version of Lemma \ref{convex_appendix}] \label{convex_main} 
Assume Assumptions A1-A4 (given in the supplementary material) hold. For any properly initialized $\btheta$, we have $\langle h(\btheta), \btheta - \widehat\btheta_{\star}\rangle \leq  -\phi\|\btheta - \widehat\btheta_{\star}\|^2$,  where $\widehat \btheta_{\star}=\btheta_{\star}+\mathcal{O}\left(\sup_{\bx}\Var(\xi_n(\bx))+\epsilon+\frac{1}{m}\right)$, $\btheta_{\star}\propto \btheta_{\infty}^{\frac{1}{\zeta}}$ ,
$\phi>0$,  and $\xi_n(\bx)$ denotes the noise in the stochastic energy estimator of batch size $n$ and $\Var(\cdot)$ denotes the variance. 
\end{lemma}

By justifying the drift conditions of the adaptive transition kernel and relevant smoothness properties, we can prove the existence and regularity properties of the solution of the Poisson's equation in Lemma \ref{lyapunov} in the supplementary material. In what follows, we can control the fluctuations in stochastic approximation and eventually yields the $L^2$ convergence. 
\begin{lemma}[$L^2$ convergence rate, informal version of Lemma \ref{latent_convergence_appendix}]
\label{latent_convergence_main}
Given standard Assumptions A1-A5. $\btheta_k$ converges to $\widehat\btheta_{\star}$,
where $\widehat\btheta_{\star}=\btheta_{\star}+\mathcal{O}\left(\sup_{\bx}\Var(\xi_n(\bx))+\epsilon+\frac{1}{m}\right)$, such that
\begin{equation*}
    \E\left[\|\bm{\theta}_{k}-\widehat\btheta_{\star}\|^2\right]= \mathcal{O}\left(\omega_{k}\right).
\end{equation*}
\end{lemma}

The result differs from Theorem 1 of \cite{CSGLD} in that the biased fixed point $\widehat\btheta_{\star}$ instead of $\btheta_{\star}$ is treated as the equilibrium of the continuous system, which provides us a user-friendly proof. Similar techniques have been adopted by \cite{Alain17, Xu18}. Although the global stability \cite{CSGLD} may be sacrificed when $\zeta\neq 1$ based on Eq.(\ref{new_randomF}), $\btheta_{\star}\propto \btheta_{\infty}^{\frac{1}{\zeta}}$ is much easier to estimate numerically for any $i$ that yields $0<\btheta_{\infty}(i)\ll 1$ based on a large $\zeta>1$.

\subsection{Asymptotic Normality}

 To study the asymptotic behavior of $\omega_k^{-\frac{1}{2}}(\btheta_k-\widehat\btheta_{\star})$, where $\widehat\btheta_{\star}$ is the equilibrium point s.t. $\widehat\btheta_{\star}=\btheta_{\star}+\mathcal{O}\left(\Var(\xi_n(\bx))+\epsilon+\frac{1}{m}\right)$, we consider  a fixed step size $\omega$ in the SA step for ease of explanation. Let $\bar\btheta_t$ denote the solution of the mean-field system in continuous time ($\bar\btheta_0=\btheta_0$), and rewrite the single-chain SA step (\ref{SA_step}) as follows
\begin{equation*}
% \footnotesize
\begin{split}
\label{ga_main}
    \btheta_{k+1}-\bar\btheta_{(k+1)\omega}&=\btheta_k-\bar\btheta_{k\omega}+\omega\left(H(\btheta_{k}, \bx_{k+1})-H(\bar\btheta_{k\omega}, \bx_{k+1})\right)\\
    &\quad+\omega\left(H(\bar\btheta_{k\omega}, \bx_{k+1})-h(\bar\btheta_{k\omega})\right)-\left(\bar\btheta_{(k+1)\omega}-\bar\btheta_{k\omega}-\omega  h(\bar\btheta_{k\omega})\right).
\end{split}
\end{equation*}

Further, we set $\widetilde\btheta_{k\omega}:= \omega^{-\frac{1}{2}}(\btheta_{k}-\bar\btheta_{k\omega})$. Then the stochastic approximation differs from the mean field system in that
\begin{equation*}
% \footnotesize
\begin{split}
    \widetilde\btheta_{(k+1)\omega}&= \underbrace{\omega^{\frac{1}{2}}\sum_{i=0}^k \left(H(\btheta_{i}, \bx_{i+1})-H(\bar\btheta_{i\omega}, \bx_{i+1})\right)}_{\text{I: perturbations}}+\omega^{\frac{1}{2}}\sum_{i=0}^k \underbrace{\left(H(\bar\btheta_{i\omega}, \bx_{i+1})-h(\bar\btheta_{i\omega})\right)}_{\text{II: martingale} \ \mathcal{M}_i}-\omega^{\frac{1}{2}}\cdot\text{remainder}\\
    &\approx \omega^{\frac{1}{2}}\sum_{i=0}^k h_{\btheta}(\btheta_{i\omega})  \underbrace{(\btheta_i-\bar\btheta_{i\omega})}_{\approx \omega^{\frac{1}{2}} \widetilde \btheta_{i\omega}}+\omega^{\frac{1}{2}}\sum_{i=0}^k \mathcal{M}_i\approx \int_{0}^{(k+1)\omega}h_{\btheta}(\bar\btheta_{s})\widetilde\btheta_{s}ds+\int_0^{(k+1)\omega} \bR^{\frac{1}{2}}(\bar\btheta_s)d\bW_s,
\end{split}
\end{equation*}
where $h_{\btheta}(\btheta):=\frac{d}{d\btheta} h(\btheta)$ is a matrix, $\bW\in\mathbb{R}^m$ is a standard Brownian motion, the last term follows from a certain central limit theorem \cite{Albert90} and $\bR$ denotes the covariance matrix of the random-field function s.t. $\bR(\btheta):=\sum_{k=-\infty}^{\infty} \cov_{\btheta}(H(\btheta, \bx_k), H(\btheta, \bx_0))$. 

We expect the weak convergence of $\bU_k$ to the stationary distribution of a diffusion
\begin{equation}
\label{slde_main}
    d\bU_t=h_{\btheta}(\btheta_t) \bU_t dt + \bR^{1/2}(\btheta_t)d\bW_t,
\end{equation}
where $\bU_t=\omega_t^{-1/2}(\btheta_t-\widehat\btheta_{\star})$.  Given that $\btheta_t$ converges to $\widehat\btheta_{\star}$ sufficiently fast and the local linearity of $h_{\btheta}$, the diffusion (\ref{slde_main}) resembles the Ornstein–Uhlenbeck process and yields the following solution
\begin{equation*}
% \small
    \bU_t\approx e^{-th_{\btheta}(\widehat\btheta_{\star})}\bU_0+\int_0^t e^{-(t-s)h_{\btheta}(\widehat\btheta_{\star})}\circ \bR(\widehat\btheta_{\star}) d\bW_s.
\end{equation*}
Then we have the following theorem, whose  formal proof is given in section \ref{proof_theorem_1}.
\begin{theorem}[Asymptotic Normality]
\label{Asymptotic}
Assume Assumptions \ref{ass2a}-\ref{ass1} hold. We have the following weak convergence
\begin{equation*}
\begin{split}
    \omega_k^{-1/2}(\btheta_k-\widehat\btheta_{\star})\Rightarrow\mathcal{N}(0, \bSigma), \text{ where  } \bSigma=\int_0^{\infty} e^{t h_{\btheta_{\star}}}\circ \bR\circ  e^{th^{\top}_{\btheta_{\star}}}dt, h_{\btheta_{\star}}=h_{\btheta}(\widehat\btheta_{\star}).
\end{split}
\end{equation*}
\end{theorem}

\subsection{Interacting Parallel Chains are More Efficient}

For clarity, we first denote an estimate of $\btheta$ based on ICSGLD with $P$ interacting parallel chains by $\btheta_k^{P}$ and denote the estimate based on a single-long-chain CSGLD by $\btheta_{kP}$.

Note that Theorem \ref{Asymptotic} holds for any step size $\omega_k=\mathcal{O}(k^{-\alpha})$, where $\alpha\in (0.5, 1]$. If we simply run a single-chain CSGLD algorithm with $P$ times of iterations, by  Theorem \ref{Asymptotic},  
\begin{equation*}
\begin{split}
    \omega_{kP}^{-1/2}(\btheta_{kP}-\widehat\btheta_{\star})\Rightarrow\mathcal{N}(0, \bSigma).
\end{split}
\end{equation*}
As to ICSGLD, since the covariance $\bSigma$ relies on $\bR$, which depends on the covariance of the martingale $\{\mathcal{M}_i\}_{i\geq 1}$, the conditional independence of $\bx^{(1)},\bx^{(2)},\cdots, \bx^{(P)}$ naturally results in an efficient variance reduction such that 
 \begin{corollary}[Asymptotic Normality for ICSGLD]
 Assume the same assumptions. For ICSGLD with $P$ interacting chains, we have the following weak convergence
\begin{equation*}
\begin{split}
    \omega_k^{-1/2}(\btheta_k^P-\widehat\btheta_{\star})\Rightarrow\mathcal{N}(0, \bSigma/P).
\end{split}
\end{equation*}
 \end{corollary}
That is, under a similar computational budget, we have $\frac{\|\Var(\btheta_{kP}-\widehat\btheta_{\star})\|_{\text{F}}}{\|\Var(\btheta_k^P-\widehat{\btheta}_*)\|_{\text{F}}}= \frac{w_{kP}}{w_k/P}\approx P^{1-\alpha}$.

 \begin{corollary}[Efficiency]
Given a decreasing step size $\omega_k=\mathcal{O}(k^{-\alpha})$, where $\alpha \in(0.5, 1]$, \emph{ICSGLD is asymptotically more efficient than the single-chain CSGLD with an equivalent training cost.} 
\end{corollary}

In practice, slowly decreasing step sizes are often preferred in stochastic algorithms for a better non-asymptotic performance \cite{Albert90}.

\section{Experiments}

\subsection{Landscape Exploration On MNIST via the Scalable Random-field Function}

This section shows how the novel random-field function (\ref{new_randomF}) facilitates the exploration of multiple modes on the MNIST dataset\footnote[4]{The random-field function \cite{CSGLD} requires an extra perturbation term as discussed in section D4 in the supplementary material \cite{CSGLD}; therefore it is not practically appealing in big data.}, while the standard methods, such as stochastic gradient descent (SGD) and SGLD, only \emph{get stuck in few local modes}. To simplify the experiments, we choose a large batch size of 2500 and only pick the first five classes, namely digits from 0 to 4. The \emph{learning rate is fixed} to 1e-6 and the temperature is set to $0.1$ \footnote[2]{Data augmentation implicitly leads to a more concentrated posterior \cite{Florian2020, Aitchison2021}.}. We see from Fig.\ref{Uncertainty_estimation_mnist_ICSGLD}{(a)} that both SGD and SGLD lead to fast decreasing losses. By contrast, ICSGLD yields fluctuating losses that traverse freely between high energy and low energy regions. As the particles stick in local regions, the penalty of re-visiting these zones keeps increasing until \emph{a negative learning rate is injected} to encourage explorations.

\begin{figure}[htbp]
% \vspace{-0.05in}
\small
 \begin{tabular}{cccc}
(a) Training Loss & (b)  SGD & (c) SGLD & (d) ICSGLD \\ 
\includegraphics[height=1.4in,width=1.4in]{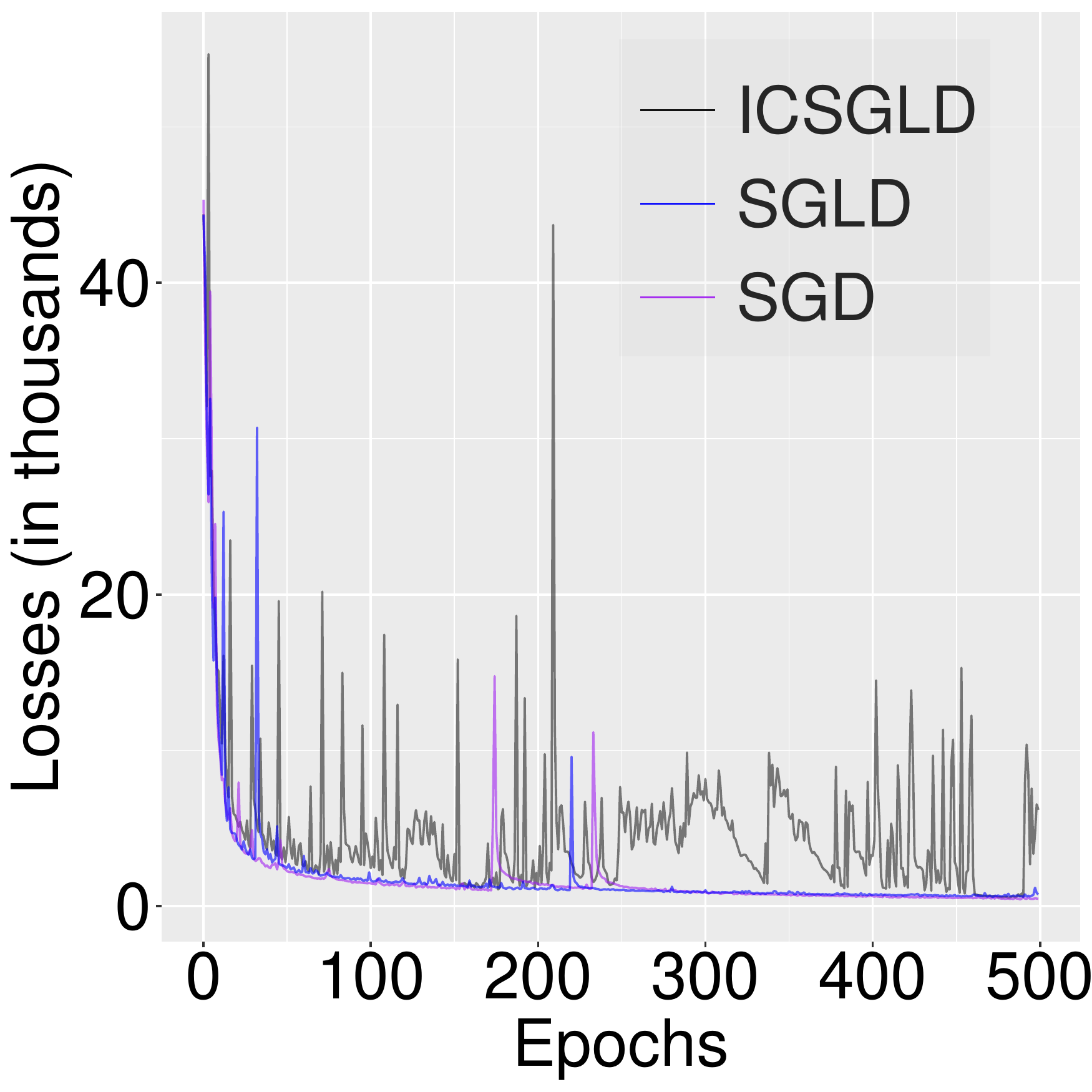} &
\includegraphics[height=1.45in,width=1.45in]{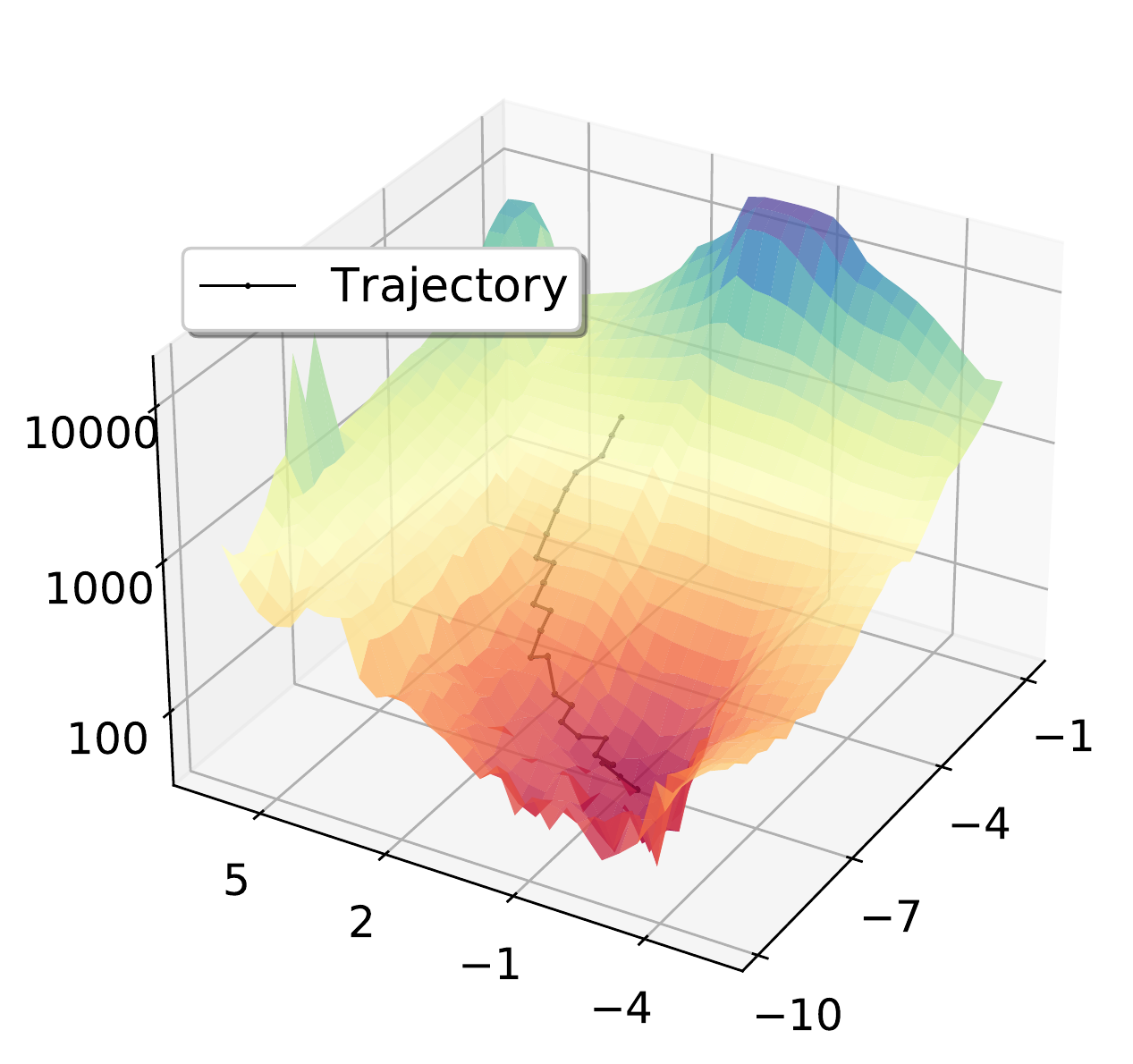} &
\includegraphics[height=1.45in,width=1.45in]{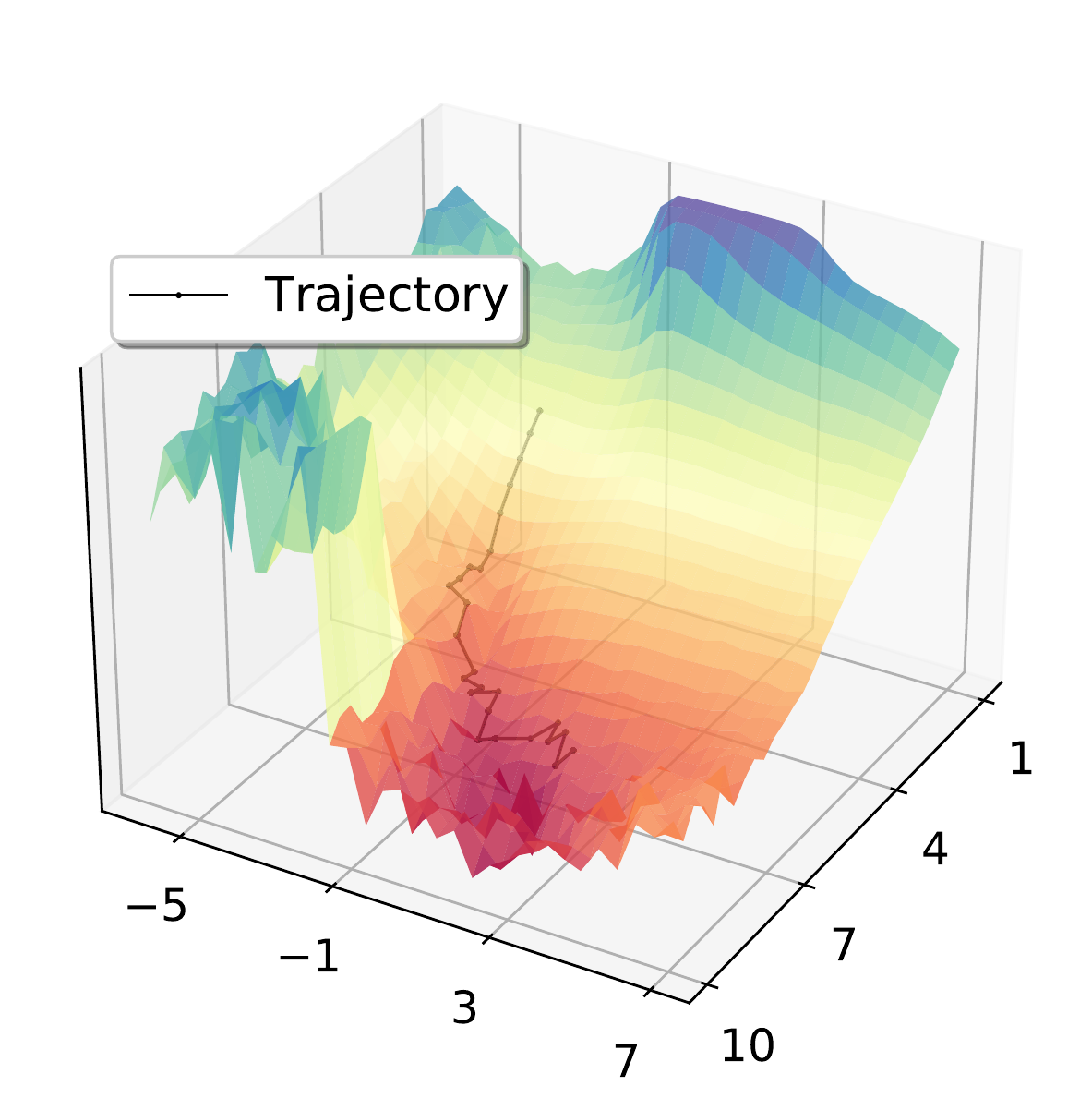} &
\includegraphics[height=1.45in,width=1.45in]{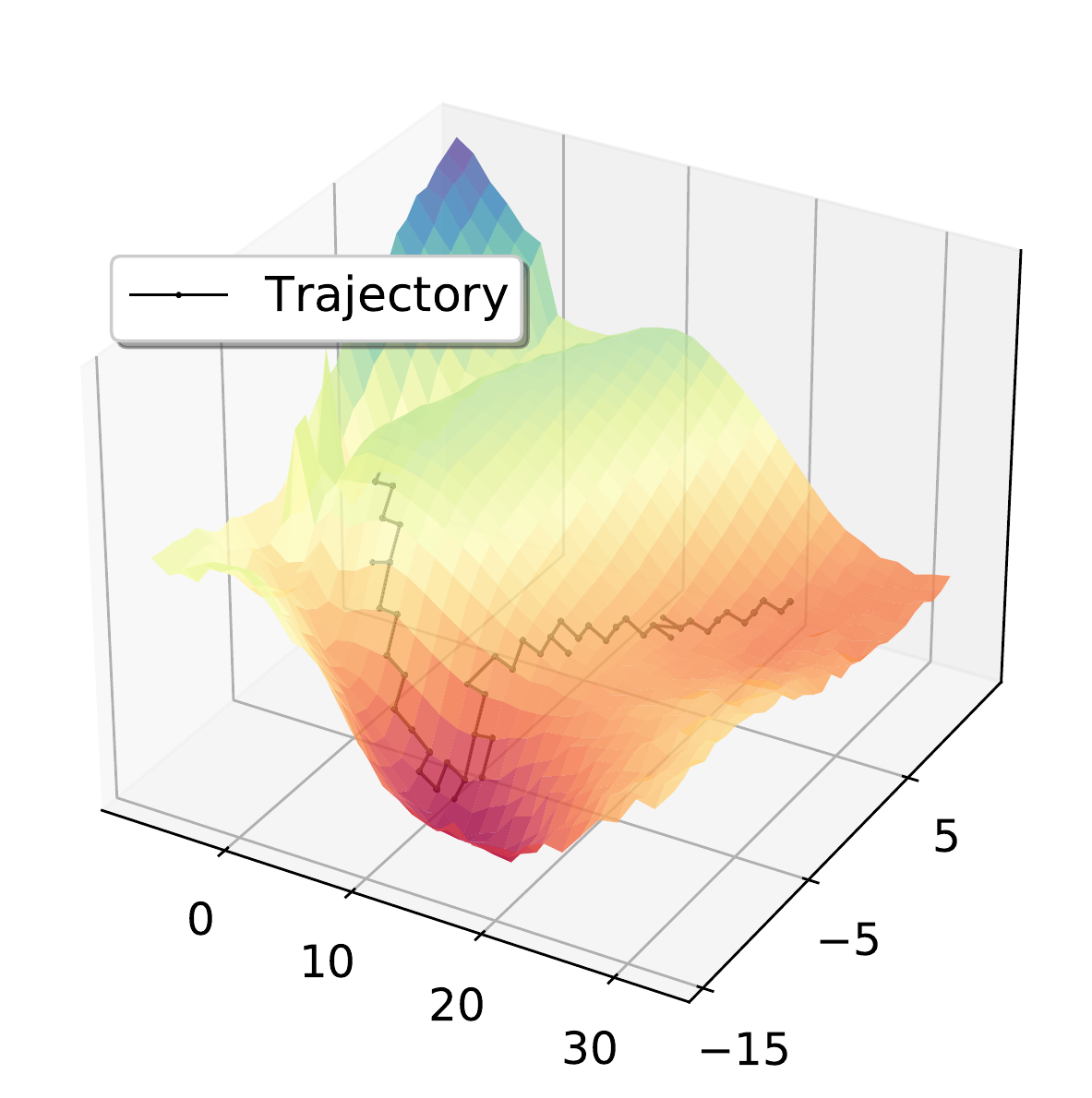}
\end{tabular}
% \vspace{-0.1in}
  \caption{Visualization of mode exploration on a MNIST example based on different algorithms.} 
\label{Uncertainty_estimation_mnist_ICSGLD}
% \vspace{-0.05in}
\end{figure}

We conducted a singular value decomposition (SVD) based on the first two coordinates to visualize the trajectories: We first choose a domain that includes all the coordinates, then we recover the parameter based on the grid point and truncated values in other dimensions, and finally we fine-tune the parameters and present the approximate losses of the trajectories in Fig.\ref{Uncertainty_estimation_mnist_ICSGLD}{(b-d)}. We see SGD trajectories get stuck in a local region; SGLD \emph{exploits a larger region} but is still quite limited in the exploration; ICSGLD, instead, first converges to a local region and then \emph{escapes it once it over-visits this region}. This shows the strength of ICSGLD in the simulations of complex multi-modal distributions. More experimental details are presented in section \ref{mnist_appendix} of the supplementary material.

\subsection{Simulations of Multi-modal Distributions}

This section shows the acceleration effect of ICSGLD via a group of simulation experiments for a multi-modal distribution. The baselines 
include popular Monte Carlo methods such as 
 CSGLD, SGLD, cyclical SGLD (cycSGLD), replica exchange SGLD (reSGLD), and the particle-based SVGD.

The target multi-modal density is presented in Figure \ref{subfig:true}. Figure \ref{figure:simulation}{(b-g)} displays the empirical performance of all the testing methods: the vanilla SGLD with 5 parallel chains ($\times$P5) undoubtedly performs the worst in this example and fails to quantify the weights of each mode correctly; the single-chain cycSGLD with 5 times of iterations ($\times$T5) improves the performance but is still not accurate enough; reSGLD ($\times$P5) and SVGD ($\times$P5) have good performances, while the latter is quite costly in computations; ICSGLD ($\times$P5) does not only traverse freely over the rugged energy landscape, but also yields the most accurate approximation to the ground truth distribution. By contrast, CSGLD ($\times$T5) performs worse than ICSGLD and overestimates the weights on the left side. For the detailed setups, the study of convergence speed, and runtime analysis, we refer interested readers to section \ref{simulation_appendix} in the supplementary material.
  \begin{figure*}[htbp]
%   \vspace{-0.2in}
    \centering
    \subfloat[Truth]{
    \begin{minipage}[t]{0.13\linewidth}
    \centering
    \includegraphics[width=0.86in]{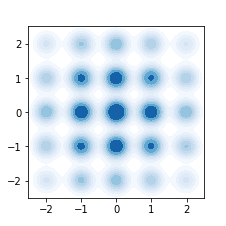}
    \label{subfig:true}
    %\caption{fig1}
    \end{minipage}%
    }%
     \subfloat[SGLD]{
    \begin{minipage}[t]{0.13\linewidth}
    \centering
    \includegraphics[width=0.86in]{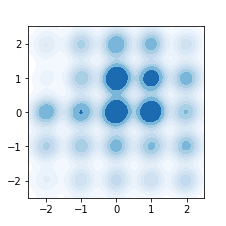}
    %\caption{fig1}
    \label{subfig:pSGLD}
    \end{minipage}%
    }%
    \subfloat[cycSGLD]{
    \begin{minipage}[t]{0.13\linewidth}
    \centering
    \includegraphics[width=0.86in]{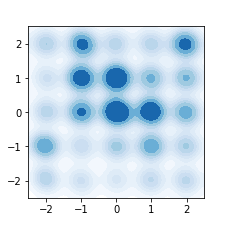}
    %\caption{fig2}
    \label{subfig:cyclical SGLD}
    \end{minipage}%
    }%
    \subfloat[SVGD]{
    \begin{minipage}[t]{0.13\linewidth}
    \centering
    \includegraphics[width=0.87in]{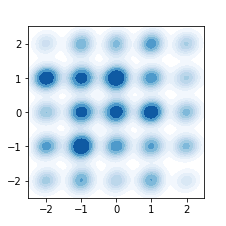}
    %\caption{fig2}
    \label{subfig:pSVGD}
    \end{minipage}
    }%
    \subfloat[reSGLD]{
    \begin{minipage}[t]{0.13\linewidth}
    \centering
    \includegraphics[width=0.87in]{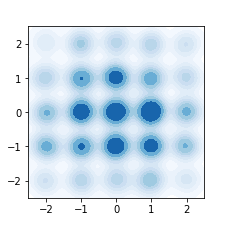}
    %\caption{fig2}
    \label{subfig:reSGLD}
    \end{minipage}
    }%
    \subfloat[CSGLD]{
    \begin{minipage}[t]{0.13\linewidth}
    \centering
    \includegraphics[width=0.87in]{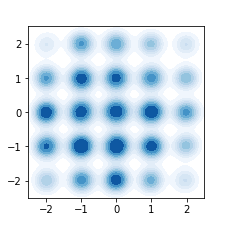}
    \label{subfig:CSGLD after}
    %\caption{fig2}
    \end{minipage}
    }%
    \subfloat[ICSGLD]{
    \begin{minipage}[t]{0.13\linewidth}
    \centering
    \includegraphics[width=0.87in]{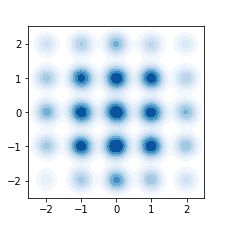}
    \label{subfig:iCSGLD after}
    %\caption{fig2}
    \end{minipage}
    }%
%   \vskip -0.1in
  \caption {Empirical behavior on a simulation dataset. Figure \ref{subfig:cyclical SGLD} and \ref{subfig:CSGLD after} show the simulation based on a single chain with 5 times of iterations ($\times$T5) and the others run 5 parallel chains ($\times$P5). }
  \label{figure:simulation}
%   \vspace{-0.2in}
\end{figure*}

\subsection{Deep Contextual Bandits on Mushroom Tasks}

This section evaluates ICSGLD on the contextual bandit problem based on the UCI Mushroom data set as in \cite{bandits_showdown}. The mushrooms are assumed to arrive sequentially and the agent needs to take an action at each time step based on past feedbacks.
Our goal is to minimize the cumulative regret that measures the difference between the cumulative reward obtained by the proposed policy and optimal policy. We evaluate Thompson Sampling (TS) based on a variety of approximate inference methods for posterior sampling. We choose one $\epsilon$-greedy policy (EpsGreedy) based on the RMSProp optimizer with a decaying learning rate  \cite{bandits_showdown} as a baseline. Two variational methods, namely stochastic gradient descent with a constant learning rate (ConstSGD) \cite{Mandt} and Monte Carlo Dropout (Dropout) \cite{Gal16b} are compared to approximate the posterior distribution. For the sampling algorithms, we include preconditioned SGLD (pSGLD) \cite{Li16}, preconditioned CSGLD (pCSGLD) \cite{CSGLD}, and preconditioned ICSGLD (pICSGLD). Note that all the algorithms run 4 parallel chains with average outputs ($\times$P4) except that pCSGLD runs a single-chain with 4 times of computational budget ($\times$T4). 
For more details, we refer readers to section \ref{bandit_mushroom} in the supplementary material.

Figure.\ref{mushroom} shows that EpsGreedy $\times$P4 tends to explore too much for a long horizon as expected; ConstSGD$\times$P4 and Dropout$\times$P4 perform poorly in the beginning but eventually outperform EpsGreedy $\times$P4 due to the inclusion of uncertainty for exploration, whereas the uncertainty seems  to be inadequate due to the nature of variational inference. By contrast, pSGLD$\times$P4 significantly outperforms the variational methods by considering preconditioners within an exact sampling framework (SGLD).  As a unique algorithm that runs in a single-chain manner, pCSGLD$\times$T4 leads to the worst performance due to the inefficiency in learning the self-adapting parameters, fortunately, pCSGLD$\times$T4 \begin{wrapfigure}{r}{0.45\textwidth}
   \begin{center}
   \vskip -0.2in
     \includegraphics[width=0.45\textwidth]{./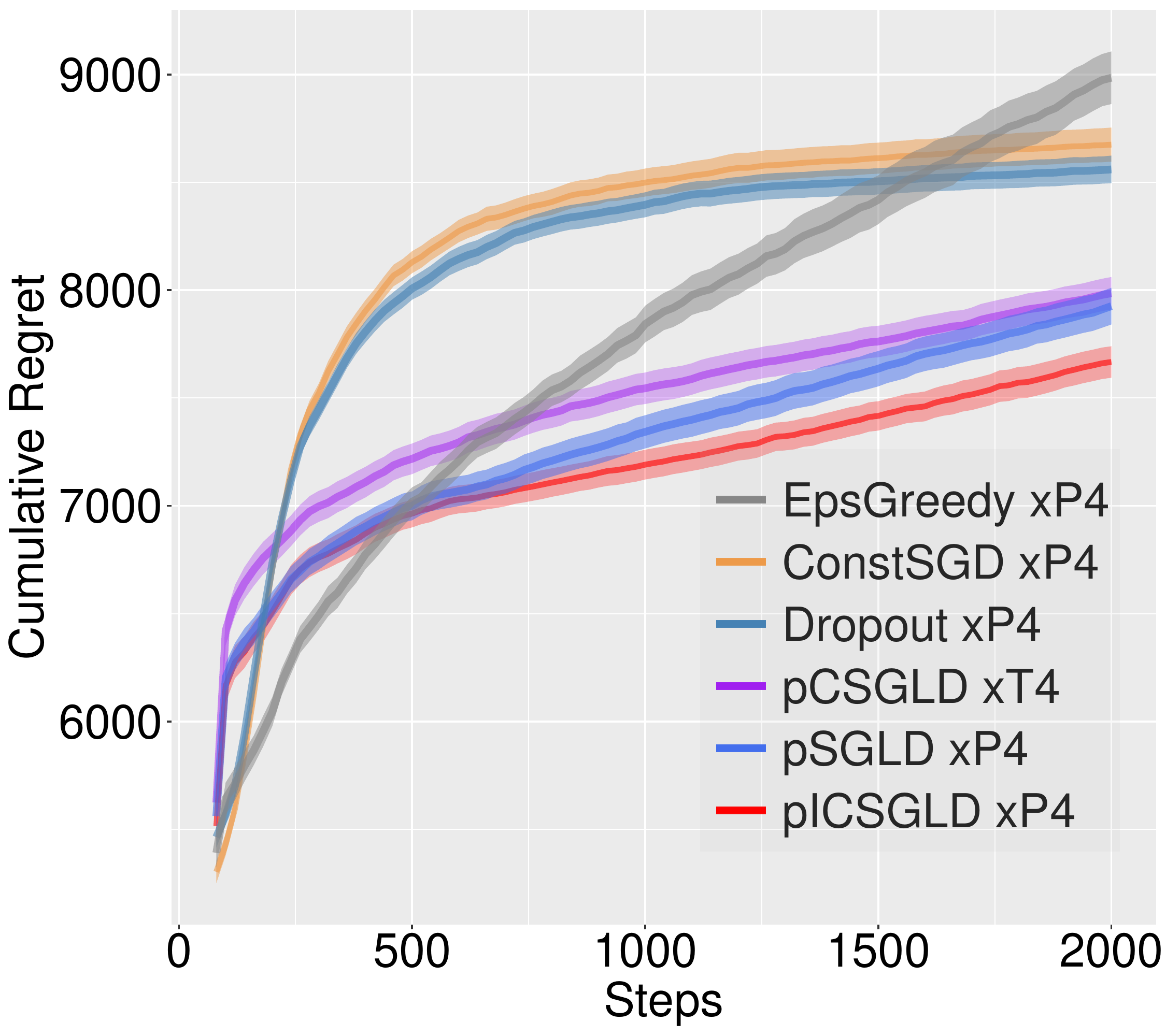}
   \end{center}
   \vskip -0.2in
   \caption{{Cumulative regret.}}
   \label{mushroom}
\end{wrapfigure} slightly outperform pSGLD$\times$P4 in the later phase with the help of the well-estimated self-adapting parameters. Nevertheless,  pICSGLD$\times$P4 propose to optimize the shared self-adapting parameters at the same time, which in turn greatly contributes to the simulation of the posterior. As a result, pICSGLD$\times$P4 consistently shows the lowest regret excluding the very early period. This shows the great superiority of the interaction mechanism in learning the self-adapting parameters for accelerating the simulations.

% \begin{figure}[!ht]
%   \begin{center}
%   % \vskip -0.2in
%     \includegraphics[width=0.45\textwidth]{./figures/pop-CSGLD/mushroom.pdf}
%   \end{center}
%   % \vskip -0.1in
%   \caption{{Cumulative regret.}}
%    \label{mushroom}
% \end{figure}

% \begin{figure}
% \begin{center}
% \vskip -0.2in
%  \includegraphics[width=0.5\textwidth]{./figures/pop-CSGLD/mushroom.pdf}
% \end{center}
% \vskip -0.2in
% \caption{Cumulative regret on the mushroom task. }
% \label{mushroom}
% \end{figure} 

\subsection{Uncertainty Estimation}

This section evaluates the qualify of our algorithm in uncertainty quantification. For model architectures, we use residual networks (ResNet) \cite{kaiming15} and a wide ResNet (WRN) \cite{wide_residual};  we choose 20, 32, and 56-layer ResNets (denoted by ResNet20, et al.) and a WRN-16-8 network, a 16-layer WRN that is 8 times wider than ResNet16. We train the models on CIFAR100, and report the test accuracy (ACC) and test negative log-likelihood (NLL). For the out-of-distribution prediction performance, we test the well-trained models in Brier scores (Brier) \footnote{The Brier score measures the mean squared error between the predictive and actual probabilities.} on the Street View House Numbers dataset (SVHN). 

Due to the wide adoption of momentum stochastic gradient descent (M-SGD), we use stochastic gradient Hamiltonian Monte Carlo (SGHMC) \cite{Chen14} as the baseline sampling algorithm and denote the interacting contour SGHMC by ICSHMC. In addition, we include several high performing baselines, such as SGHMC with cyclical learning rates (cycSGHMC) \cite{ruqi2020}, SWAG based on cyclic learning rates of 10 cycles (cycSWAG) \cite{swag} and variance-reduced replica exchange SGHMC (reSGHMC) \cite{deng_VR}. For a fair comparison, ICSGLD also conducts variance reduction on the energy function to alleviate the bias. Moreover, a large $\zeta=3\times 10^{6}$ is selected \footnote[4]{The adoption of data augmentation leads to a cold posterior \cite{Florian2020, Aitchison2021} and the underlying $\zeta$ becomes much larger.}, which only induces mild gradient multipliers ranging from $-1$ to $2$ to penalize over-visited partitions. We don't include SVGD \cite{SVGD} and SPOS \cite{SPOS} for scalability reasons. A batch size of 256 is selected. We run 4 parallel processes ($\times$P4) with 500 epochs for M-SGD, reSGHMC and ICSGHMC and run cycSGHMC and cycSWAG 2000 epochs ($\times$T4) based on a single process with 10 cycles.  Refer to section \ref{UQ_appendix} of the supplementary material for the detailed settings.

\begin{table*}[ht]
\begin{sc}
\vspace{-0.15in}
\caption{Uncertainty estimations on CIFAR100 and SVHN. } \label{UQ_test}
\vspace{-0.1in}
\small
\begin{center} 
\begin{tabular}{c|ccc|ccc}
\hline
\multirow{2}{*}{Model} & \multicolumn{3}{c|}{R\upshape{es}N\upshape{et}20} & \multicolumn{3}{c}{R\upshape{es}N\upshape{et}32}  \\
\cline{2-7}
 & ACC (\%) & NLL & B\upshape{rier} (\textperthousand) & ACC (\%) & NLL & B\upshape{rier} (\textperthousand) \\
\hline
\hline
\upshape{cyc}SGHMC$\times$T4  & 75.01$\pm$0.10 & 8468$\pm$30 & 2.81$\pm$0.13 & 77.64$\pm$0.17 & 7658$\pm$19 & 3.29$\pm$0.13 \\ 
\upshape{cyc}SWAG$\times$T4                & 75.12$\pm$0.11 & 8456$\pm$26 & 2.78$\pm$0.12  & 77.59$\pm$0.15 & 7656$\pm$22 & 3.19$\pm$0.14 \\ 
\hline
\hline
M-SGD$\times$P4 & 75.89$\pm$0.12  & 8175$\pm$25  &  2.58$\pm$0.08 &  78.12$\pm$0.12 & 7538$\pm$23 & 2.82$\pm$0.15  \\ 
\upshape{re}SGHMC$\times$P4 & 76.15$\pm$0.16  & 8196$\pm$27  & 2.73$\pm$0.10  & 78.57$\pm$0.07 & 7454$\pm$15 & 3.04$\pm$0.09   \\ 
ICSGHMC$\times$P4 & \textbf{76.39$\pm$0.15}  & \textbf{8076$\pm$31}  & \textbf{2.52$\pm$0.07} & \textbf{78.79$\pm$0.16}  & \textbf{7384$\pm$29}  & \textbf{2.73$\pm$0.12} \\ 
\hline
\hline
\multirow{2}{*}{Model} & \multicolumn{3}{c|}{R\upshape{es}N\upshape{et}56} & \multicolumn{3}{c}{WRN-16-8}\\
\cline{2-7}
 & ACC (\%) & NLL & B\upshape{rier} (\textperthousand) & ACC (\%) & NLL & B\upshape{rier} (\textperthousand) \\
\hline
\hline
\upshape{cyc}SGHMC$\times$T4  & 81.23$\pm$0.19 & 6770$\pm$59 & 3.18$\pm$0.08   &  82.98$\pm$0.03 & 6384$\pm$11 & 2.17$\pm$0.05  \\ 
\upshape{cyc}SWAG$\times$T4                & 81.14$\pm$0.11 & 6744$\pm$55 & 3.06$\pm$0.09   &  83.05$\pm$0.04 & 6359$\pm$14 & 2.04$\pm$0.07  \\ 
\hline
\hline
M-SGD$\times$P4 & 80.94$\pm$0.12   & 6847$\pm$22  & \textbf{2.86$\pm$0.08}   & 82.57$\pm$0.07  & 6821$\pm$21 & \textbf{1.77$\pm$0.06} \\ 
\upshape{re}SGHMC$\times$P4 & 81.11$\pm$0.16   & 6915$\pm$40  & 2.92$\pm$0.12   & 82.72$\pm$0.08  & 6452$\pm$19 & 1.92$\pm$0.04  \\ 
ICSGHMC$\times$P4 & \textbf{81.51$\pm$0.18}  & \textbf{6630$\pm$38}   & 2.88$\pm$0.09  & \textbf{83.21$\pm$0.09}   & \textbf{6279$\pm$38} & 1.81$\pm$0.06  \\ 
\hline
\end{tabular}
\end{center}
\end{sc}
\vspace{-0.15in}
\end{table*}

Table \ref{UQ_test} shows that the vanilla ensemble results via M-SGD$\times$P4 surprisingly outperform cycSGHMC$\times$T4 and cycSWAG$\times$T4 on medium models, such as ResNet20 and ResNet32, and show very good performance on the out-of-distribution samples in Brier scores. We suspect that the parallel implementation  ($\times$P4) provides isolated initializations with less correlated samples; by contrast, cycSGHMC$\times$T4 and cycSWAG$\times$T4 explore the energy landscape contiguously, implying a risk to stay near the original region. reSGHMC$\times$P4 shows a remarkable performance overall, but demonstrates a large variance occasionally; this indicates the insufficiency of the swaps when multiple processes are included. When it comes to testing WRN-16-8, cycSWAG$\times$T4 shows a marvelous result and a large improvement compared to the other baselines. We conjecture that cycSWAG is more independent of hyperparameter tuning, thus leading to better performance in larger models. As to ICSGHMC$\times$P4, it consistently performs the best in prediction accuracy (ACC) and the negative log likelihood (NLL) and perform comparable to M-SGD$\times$P4 in Brier scores.

\section{Conclusion}
\vskip -0.05in
We have proposed the ICSGLD as an efficient algorithm for sampling from distributions with a complex energy landscape, and shown theoretically that ICSGLD is indeed more efficient than the single-chain CSGLD for a slowly decreasing step size. To our best knowledge, this is the first interacting importance sampling algorithm that adapts to big data problems without scalability concerns. ICSGLD has been compared with numerous state-of-the-art baselines for various tasks, whose remarkable results indicate its promising future in big data applications.

\chapter{ADAPTIVELY WEIGHTED STOCHASTIC GRADIENT LANGEVIN DYNAMICS FOR MONTE CARLO SIMULATION AND GLOBAL OPTIMIZATION}
\label{awsgld}
In this chapter, we propose an adaptively weighted stochastic gradient Langevin dynamic  (AWSGLD) algorithm  as a general-purpose scalable algorithm for both Monte Carlo simulation and global optimization. %AWSGLD has incorporated all the above three strategies such that it can take care of both the global optimization and Monte Carlo simulation in a simultaneous manner. 
In particular, AWSGLD works under the framework of dynamic importance sampling; it simulates of a trial distribution  that is dynamically adjusted from a tempered target distribution in a similar way to the $1/k$-ensemble sampler but with the Langevin kernel. AWSGLD makes use of the tempering strategy to flatten the high energy region and adopts the histogram Monte Carlo strategy to protrude the low energy region (see Figure \ref{fig:aw_e} for illustration) to bias sampling toward low energy regions. As explained in Section \ref{selfadjusting}, AWSGLD possesses a self-adjusting mechanism for escaping from local traps, where the self-adjustment of the target distribution can also be equivalently viewed as a self-adjustment of the temperature and learning rate for the SGLD sampler. The self-adjusting mechanism enables the AWSGLD algorithm to be essentially immune to local traps. AWSGLD has been tested with deep learning benchmark examples and achieved promising results.

We note that some variants of SGLD, such as stochastic gradient Fisher scoring \cite{Ahn12}, stochastic gradient Riemannian Langevin dynamics \cite{GirolamiG2011, PattersonTeh2013}, preconditioned SGLD (pSGLD) \cite{Li16}, stochastic quasi-Newton Langevin Monte Carlo \cite{Simsekli2016}, stochastic gradient Hamiltonian Monte Carlo (SGHMC) \cite{Chen14},  stochastic gradient Nos\'e-Hoover thermostats  \cite{Ding14} and relativistic Monte Carlo  \cite{Lu17b}, have made use of the geometry information of the energy landscape or the momentums to improve the convergence of the simulation. Refer to \cite{yian2015} and \cite{NemethF2019} for the general formulation and overview of these variants. As discussed at the end of the chapter, many of these variants can be incorporated into AWSGLD to further enhance its performance in both optimization and Monte Carlo simulation.

The remaining part of the chapter is organized as follows. Section \ref{AlgSection} describes the proposed AWSGLD algorithm. Section \ref{Expsect1} illustrates the performance of AWSGLD on numerical examples, including an interesting mode exploration example on the MNIST dataset. Section \ref{conclusion} concludes the chapter with a brief discussion.

\section{An Adaptively Weighted Stochastic Gradient Langevin Dynamics Algorithm} \label{AlgSection}

Suppose that we are interested in sampling from a distribution with the density function given by 
\begin{equation*}
\pi_{T}(\bx) \propto \exp\left(-\frac{U(\bx)}{T}\right), \quad \bx \in \bchi,
\end{equation*}
where $T$ is the target temperature, and the energy function $U(\bx)$ is assumed to be non-convex. 
 As mentioned previously, AWSGLD has essentially incorporated all three strategies, tempering, histogram Monte Carlo and dynamic learning rates, such that it is immune to local traps in simulations of such a multi-modal distribution. The detail of the algorithm can be described as follows.

\subsection{Piece-wise Continuous Weights for Gradient-based Samplers}

We first consider a partition $\{\bchi_i\}_{i=1}^m$ of the sample space $\bchi$: $\bchi_i=\{\bx: u_{i-1}<U(\bx)\leq u_i\}$ for $i=1,2,\ldots,m$, where  $-\infty=u_0<u_1<\cdots<u_{m-1}<u_m=\infty$. For convenience, we assume $u_{i+1}-u_i=\Delta u$ for $i=1,2,\cdots, m-2$. In the sequel, we propose to simulate from a weighted tempering density function
\begin{equation*}
\varpi(\bx) \propto \frac{\pi_{\tau}(\bx)}{\Psi ^{\zeta}(U(\bx))},
\end{equation*}
where $\tau \geq T$ is a temperature, $\zeta>0$ is a hyperparameter to be specified by the user. Running the algorithm at a higher temperature enhances its ability in sample 
space exploration\footnote[4]{If the distribution of interest $\pi_T$ is flat, we can choose $\tau=T$ by default; otherwise, we run at a higher temperature $\tau>T$ for facilitating the exploration and a reweighting scheme will be applied later to recover the distribution $\pi_{T}$.} as explained later. 
If we assume the following conditions are satisfied:
\begin{equation}
\begin{split}
\label{ori_conditions}
     &\text{(i)    }\ \zeta=1 \text{ and }\Psi _{\btheta}(U(\bx))= \sum_{i=1}^m \theta(i) 1_{u_{i-1} < U(\bx) \leq u_i}, \\
      &\text{(ii)   } \theta(i)=\theta_{\star}(i), \text{where }\theta_{\star}(i) =\int_{\bx:U(\bx)\leq u_i}\pi_{\tau}(\bx)d\bx \quad \text{for } i\in\{1,2,\cdots, m\},\\
\end{split}
\end{equation}
running a Metropolis-Hasting algorithm \cite{Metropolis1953,Hastings1970} on $\varpi(\bx)$  will replicate the $1/k$-ensemble sampler \cite{Hesselbo1995MonteCS, Liang2004PRE,Liang05}, which is, however, unscalable with respect to big data. To achieve scalability, we propose to simulate of 
$\varpi(\bx)$ via the SGLD algorithm. Note that the piece-wise constant construction of $\Psi _{\btheta}(u)$ (given in (\ref{ori_conditions})-(i)) leads to the vanishing-gradient problem $\frac{\partial \log{\Psi _{\theta}}(u)}{\partial u}=0$ almost everywhere. To tackle this difficulty, we consider a piece-wise continuous construction of $\Psi _{\btheta}(u)$ such that 
  \begin{equation*}
\textcolor{black}{\Psi _{\btheta}(u)= \sum_{i=1}^m \left(\theta(i-1)e^{(\log\theta(i)-\log\theta(i-1)) \frac{u-u_{i-1}}{\Delta u}}\right) 1_{u_{i-1} u \leq u_i}},
 \end{equation*}
for which a straightforward calculation shows 
 \begin{equation*}
 \nabla_{\bx} \log \varpi(\bx) 
  =-\left[1+ \zeta \tau\frac{\partial \log \Psi (u)}{\partial u} \right] \nabla_{\bx} U(\bx)/\tau. 
  \end{equation*}
  
\subsection{Stochastic Approximation for Unknown Parameters}

Since  the parameter vector $\btheta_{\star}=(\theta_*(1),\ldots,\theta_*(m))$ is unknown, we propose to estimate it under the framework of stochastic approximation \cite{RobbinsM1951}. Let $\epsilon$ denote a positive constant and $\{\omega_k\}_{k=1}^{\infty}$ denote a positive, non-increasing sequence satisfying Assumption \ref{ass1} (given in the Appendix), which are called the learning rate for Monte Carlo simulations and the step size for parameter estimation, respectively. Let $\btheta_k=(\theta_k(1), \theta_k(2),\cdots, \theta_k(m))\in\bTheta$ denote a working estimate of $\btheta_*$ obtained at iteration $k$, where 
\begin{equation}\small
 \bTheta=\left\{\left(\theta(1),\theta(2),\cdots, \theta(m)\right)\mid 0<\theta(1)\leq \theta(2)\leq \cdots\leq \theta(m)=1\right\}.
\end{equation}
Let $J(\bx)$ denote the index of the subregion that $\bx$ 
 belongs to, i.e.
 \begin{equation}
\label{index_J}
    J(\bx)=\sum_{i=1}^m i 1_{u_{i-1}<U(\bx)\leq u_i}.
\end{equation}
Since a mini-batch of data is often used in simulations, we let $\tilde{J}(\bx)$ denote the index of the subregion determined by $\widetilde{U}(\bx)$. Despite the cheap cost in obtaining $\widetilde{U}(\bx)$ and the convenience in implementations, adopting such a stochastic energy estimator inevitably introduces a small bias to the estimate of $\btheta_*$. Fortunately, the bias depends on the variance of the noisy energy estimators as shown in Lemma \ref{bias_in_SA} and can be reduced by increasing the mini-batch size or using variance reduction \cite{deng_VR}. Moreover, it doesn't affect the convergence of the weighted averaging estimators. With the above notations, the AWSGLD algorithm is summarized in Algorithm \ref{alg:AWSGLD}. 
AWSGLD works with the Langevin kernel instead of the Metropolis kernel, which enables its scalability for big data problems.

\begin{algorithm}[tb]
   \caption{AWSGLD Algorithm}
   
   \label{alg:AWSGLD}
\begin{algorithmic}
   \STATE {\bfseries [1.] (Data subsampling)} Draw a subsample of size $n$ from the full dataset of size $N$; obtain the stochastic gradient $\nabla_{\bx}\widetilde U(\bx_k)$ and stochastic energy $\widetilde U(\bx_k)$. 

   \STATE {\bfseries [2.] (SGLD sampling)}
   Draw $\bx_{k+1}$ using SGLD based on the current sample $\bx_k$, i.e.,
  \begin{equation} \label{SGLDeq6_awsgld}
  \bx_{k+1}=\bx_k - \epsilon \frac{N}{n} \underbrace{\left[1+ 
   \frac{\zeta\tau}{\Delta u}  \left( \textcolor{black}{\log \theta_{k}(\tilde{J}(\bx_k))-\log\theta_{k}((\tilde{J}(\bx_k)-1)\vee 1)}\right) \right]}_{\text{Gradient multiplier}}  
    \nabla_{\bx} \widetilde U(\bx_k) +\sqrt{2 \tau \epsilon} \bm{e}_{k+1}, 
  \end{equation}
  where $a\vee b=\max(a,b)$, $\bm{e}_{k+1} \sim N(0,I_d)$, $d$ denotes the dimension of $\bx$,
  and $\tau$ is the temperature. 

  \STATE {\bfseries [3.] (Adaptive parameter updating)} Update 
  $\btheta_k=(\theta_k(1),\theta_k(2),\ldots, 
\theta_k(m))$ by setting
  \begin{equation} \label{updateeq_aw}
 {\theta}_{k+1}(i)={\theta}_{k}(i)+\omega_{k+1}{\theta}_{k}
 (\tilde{J}(\bx_{k+1}))\left(1_{i\geq \tilde{J}(\bx_{k+1})}-{\theta}_{k}(i)\right),  \quad i=1,2,\ldots,m,
 \end{equation} 
 where $1_{i\geq \tilde{J}(\bx_{k+1})}$ is equal to 1 if $i\geq \tilde{J}(\bx_{k+1})$ and 0 otherwise.

\end{algorithmic}
\end{algorithm}

\subsection{On the Self-adjusting Mechanism of AWSGLD} \label{selfadjusting} 

 \emph{AWSGLD overcomes the local trap issue by self-adapting the gradient multiplier, as defined in (\ref{SGLDeq6_awsgld}), according to the energy value of the current sample.}  
 In the high energy region, the gradient multiplier is close to 1, and AWSGLD performs like a high-temperature SGLD, which can have the sample space well explored. In the low energy region, the gradient multiplier takes a large value, which  helps AWSGLD to escape from local traps. More importantly, the gradient multiplier is self-adjusted. For example, if the sampler is trapped into a local minimum in subregion $s$, then, as implied by (\ref{updateeq_aw}),  the gradient multiplier tends to grow in the followed iterations by decreasing the values of  $(\theta_k(1),\theta_k(2),\ldots,\theta_k(s-1))$'s and increasing the values of $(\theta_k(s),\theta_k(s+1),\ldots,\theta_k(m))$'s until it escapes from the local trap. Therefore, AWSGLD is essentially immune to local traps.   

 This self-adjusting mechanism on the gradient multiplier can also be interpreted in terms of temperature and learning rates. Let $\psi_k$ denote the gradient multiplier used at iteration $k+1$, then (\ref{SGLDeq6_awsgld}) can be rewritten as follows: 
\begin{equation} \label{SGLDeq6_awsgldb}
\bx_{k+1}=\bx_k - (\epsilon \psi_k) \frac{N}{n}  
    \nabla_{\bx} \widetilde U(\bx_k) +\sqrt{2 (\tau/\psi_k) (\epsilon \psi_k)} \bm{e}_{k+1},
\end{equation} 
which corresponds to a SGLD iteration at the temperature $\tau/\psi_k$ and with the learning rate $\epsilon \psi_k$.  
Therefore, in the high energy region where $\psi_k\approx 1$, AWSGLD performs like a high-temperature SGLD; and in the low energy region  where $\psi_k\gg 1$,  AWSGLD performs like a low-temperature SGLD but with a large learning rate. The low-temperature makes the simulation more focused on local geometry exploitation and leads to the low-energy region protruding phenomenon as shown in Figure \ref{fig:aw_e}(c), while the large learning rate helps the simulation to escape from local traps as shown by cyclical SGMCMC algorithms \cite{ruqi2020}.
We note that the learning rate updating mechanism of AWSGLD is fundamentally different from that of cyclical SGMCMC.  In AWSGLD, \emph{the learning rate is self-adjusted according to the energy value of the current sample}; while in cyclical SGMCMC it is updated according to a fixed schedule of iterations.  As shown by our numerical examples, see Figure \ref{many_traps}, AWSGLD can be much more efficient than cyclical SGMCMC.

Similar to the convergence established in Section \ref{convergSect} and \ref{all_proof_pop_CSGLD}, AWSGLD falls into the class of dynamic importance SGMCMC algorithms; it converges to a desired trial distribution as the number of iterations becomes large. We note that AWSGLD is not the first dynamic importance stochastic gradient sampler. In spirit, AWSGLD is similar to contour SGLD (CSGLD) \cite{CSGLD}. However, they are still significantly different in some aspects:  CSGLD is a flat histogram algorithm without the tempering strategy used, and it tends to sample uniformly in the space of energy. On the contrary, AWSGLD works with a tempered target distribution to enhance its ability in sample space exploration, and biases sampling toward low energy regions by working with a non-flat histogram function.

\section{Numerical Experiments} \label{Expsect1}

\subsection{CDF Estimation}

The first experiment is to test the convergence of AWSGLD: It is applied to simulate from the standard Gaussian distribution, i.e., $\pi(\bx)=N(0,1)$. In the simulation, we set the temperature $\tau=1$, $\zeta=1$, the total number of iterations $=1\times 10^7$, and the step size $\omega_k=\frac{0.02}{k^{0.6}+100}$. The sample space is partitioned into 1000 subregions according to the energy function with a bandwidth of $\Delta u=0.01$. To mimic the scenario of simulating with stochastic gradients, a random noise generated from $N(0,0.1^2)$ is added to the gradient at each iteration, while the partition index is determined according to the true energy function. The experiments are repeated for 10 times with the results reported in Figure \ref{cdf_in_energy}. Figure \ref{cdf_in_energy}(a) depicts the convergence of $\btheta_k$ to $\btheta_*$, the CDF of the standard Gaussian distribution (by noting $\tau=1$); and it also compares the empirical CDFs of the energy values sampled by SGLD (red curve) and AWSGLD (green curve). The deviation between the two empirical CDFs shows that AWSGLD biases sampling toward low energy regions, which implies a great potential of AWSGLD in stochastic optimization.
 
\begin{figure}[htbp]
 \begin{tabular}{ccc}
   (a) &(b) & (c) \\ 
  \includegraphics[height=1.8in,width=2in]{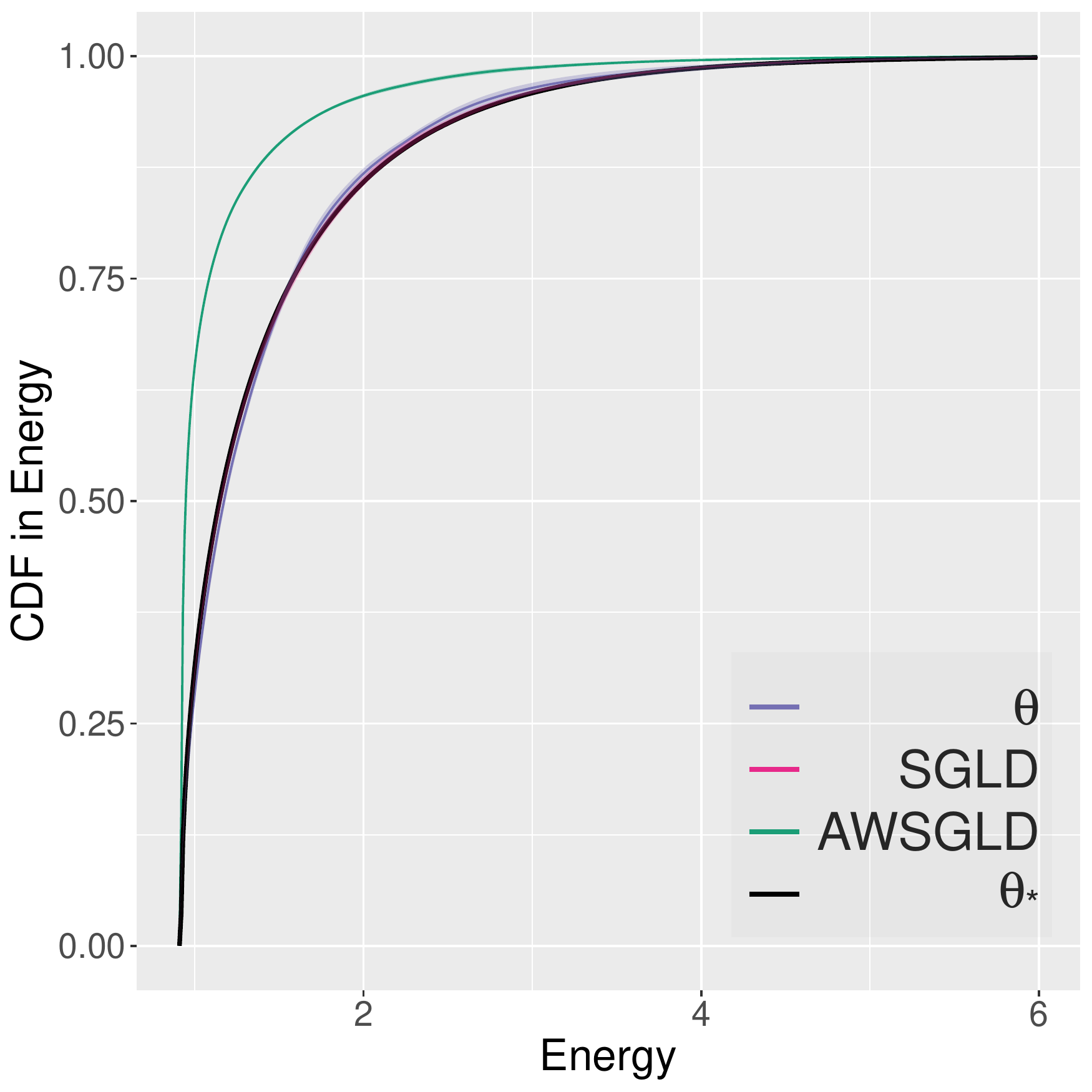} & 
  \includegraphics[height=1.8in,width=2in]{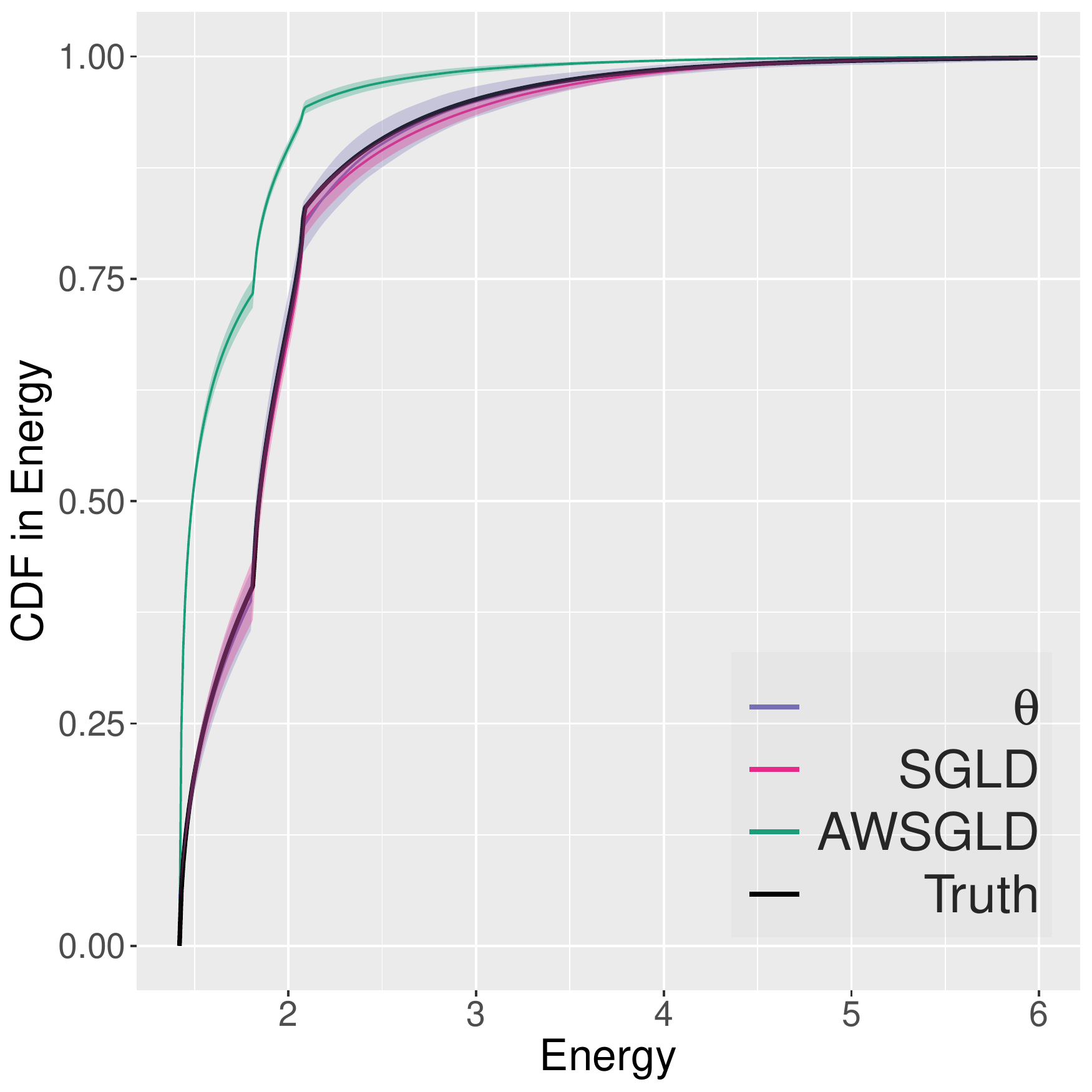}  &
  \includegraphics[height=1.8in,width=2in]{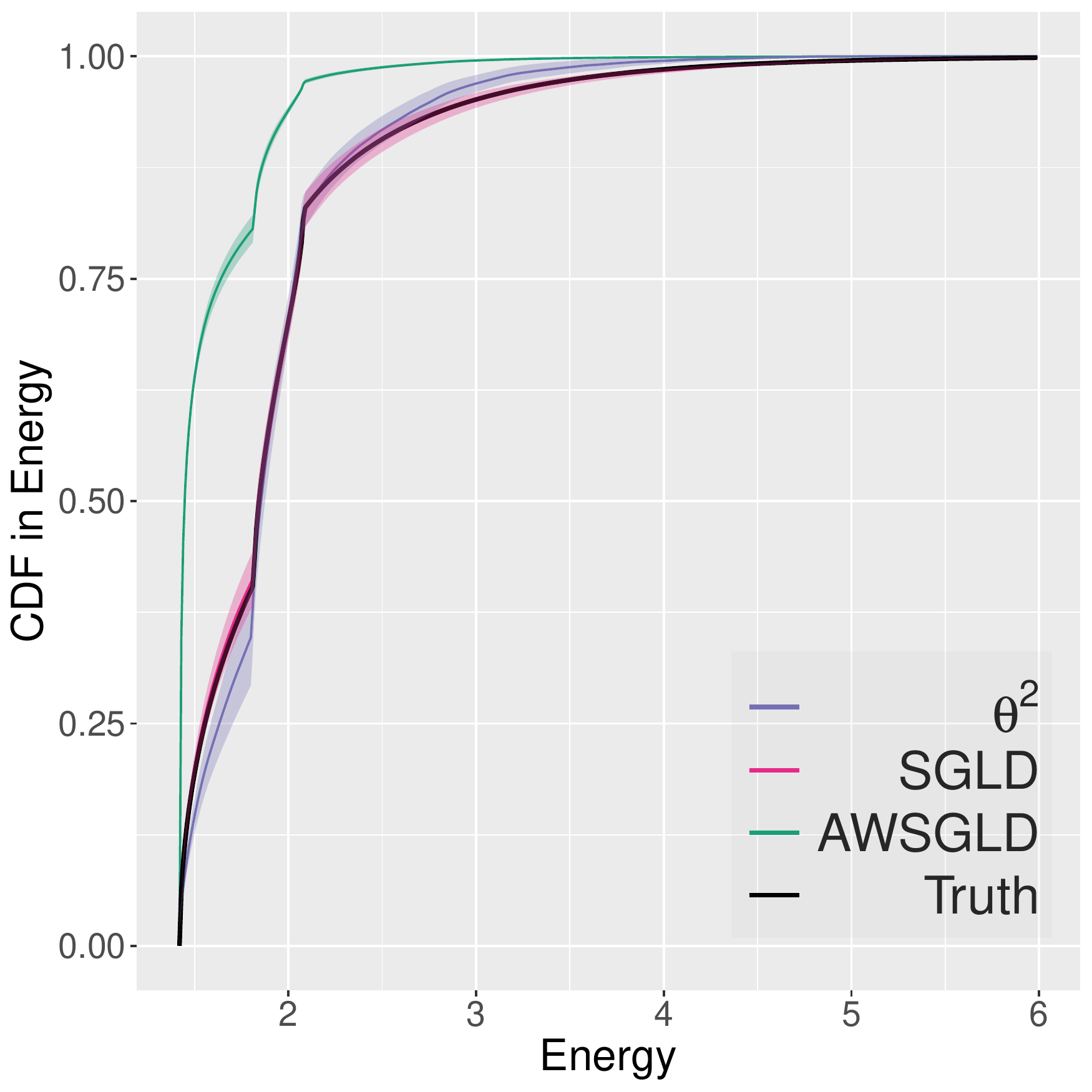} 
\end{tabular}
  \caption{Results of AWSGLD for (a) standard Gaussian with $\zeta=1$; 
  (b) mixture Gaussian with $\zeta=1$; and (c) mixture Gaussian with $\zeta=2$, 
   where the black line is the true CDF; the purple line is the AWSGLD estimate of $\btheta^{\zeta}$;
   the green line is the empirical CDF of AWSGLD samples; and the red
    line is for the empirical CDF of SGLD samples.}
\label{cdf_in_energy}
\end{figure}

We have also tested AWSGLD on a mixture Gaussian distribution $\frac{2}{5} N(-2, 1) + \frac{3}{5} N(1, 1)$, which represents a multimodal distribution. Figure \ref{cdf_in_energy} (b) and (c) summarizes the simulation results of AWSGLD with $\zeta=1$ and $\zeta=2$, respectively. They show that AWSGLD also works very well for the multimodal distribution, providing a good estimate of the energy CDF and biasing samples toward low energy regions. Admittedly, $\zeta=2$ is more effective than $\zeta=1$ in biasing samples toward low energy regions.

\subsection{Illustrative Examples}

\subsubsection{Example 1: Shallow Local Traps}  In stochastic  optimization, raising temperature is a useful strategy for escaping from local traps; however, it also \emph{brings a challenge for efficiently locating the global optimum}. For example, consider the distribution $\pi_{\tau}(\bx)\propto \exp(-U_d(\bx)/\tau)$, where
\begin{equation} \label{toyenergy}
 U_d(\bx)=0.05\|\bx\|^2-0.01\|\bx\|^2 \times \sum_{i=1}^d\cos(2x_i),
 \end{equation}
 and $\bx=(x_1, ..., x_d)$. Fig.\ref{fig:aw_e}(a) shows $U_d(\bx)$ for a 2-dimensional case, 
  which contains many local minima in the high energy region.  
 Fig.\ref{fig:aw_e}(b) shows that the energy landscape can be flattened at a high temperature, for which SGLD is approximately equivalent to a random walk in the flat basin and 
 the global minimum is hard to be located.

 \begin{figure}[htbp]        
\begin{center}
 \begin{tabular}{ccc}
 (a) $U_d(\btheta)$ & (b) $U_d(\btheta)/9$ & (c) $-\log \textcolor{black}{\varpi_{\widehat{\Psi}_{\btheta_{\star}}}}(\bx)$ ($\tau=9$)  \\ 
  \includegraphics[scale=0.14]{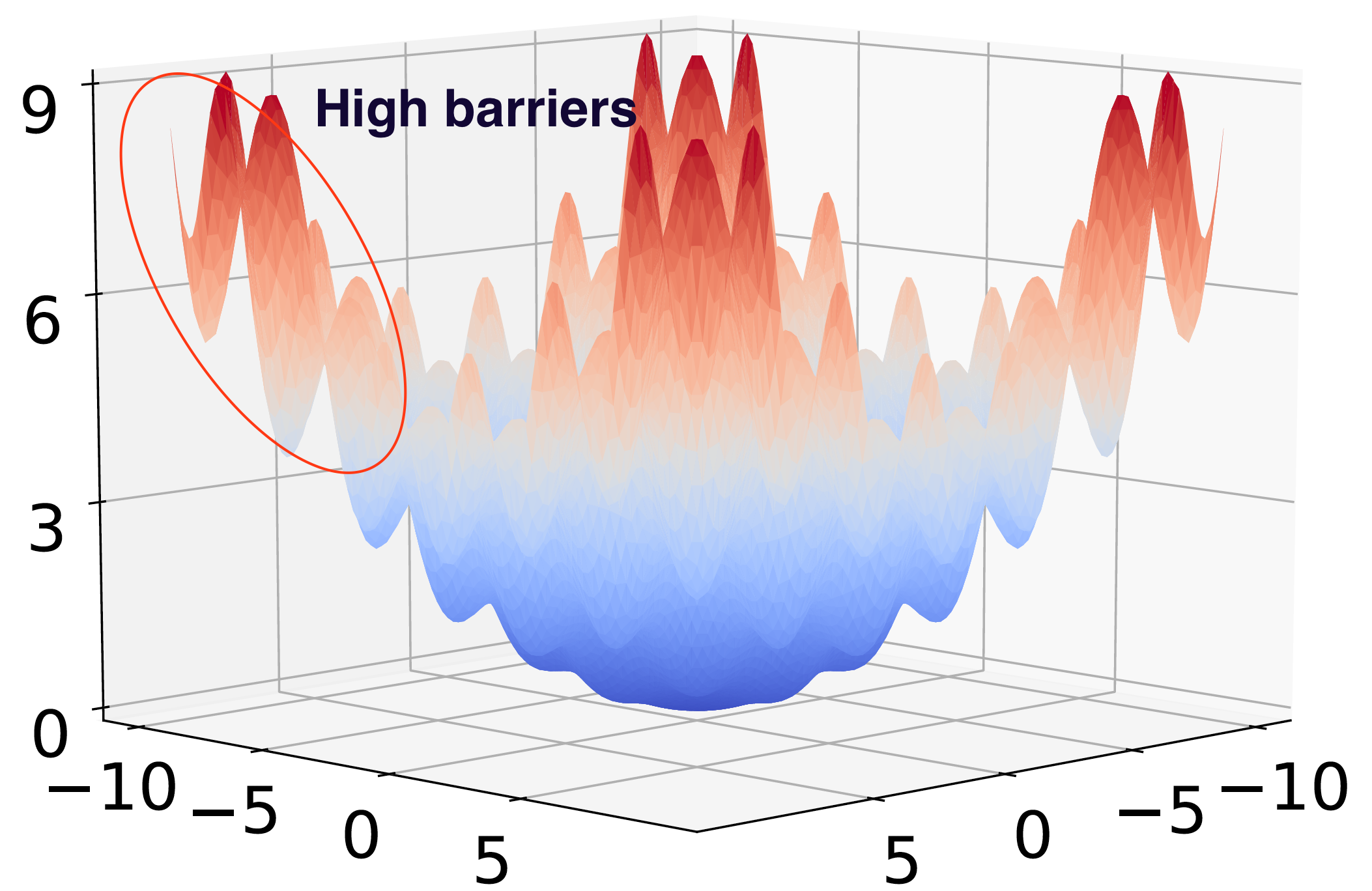} & 
  \includegraphics[scale=0.14]{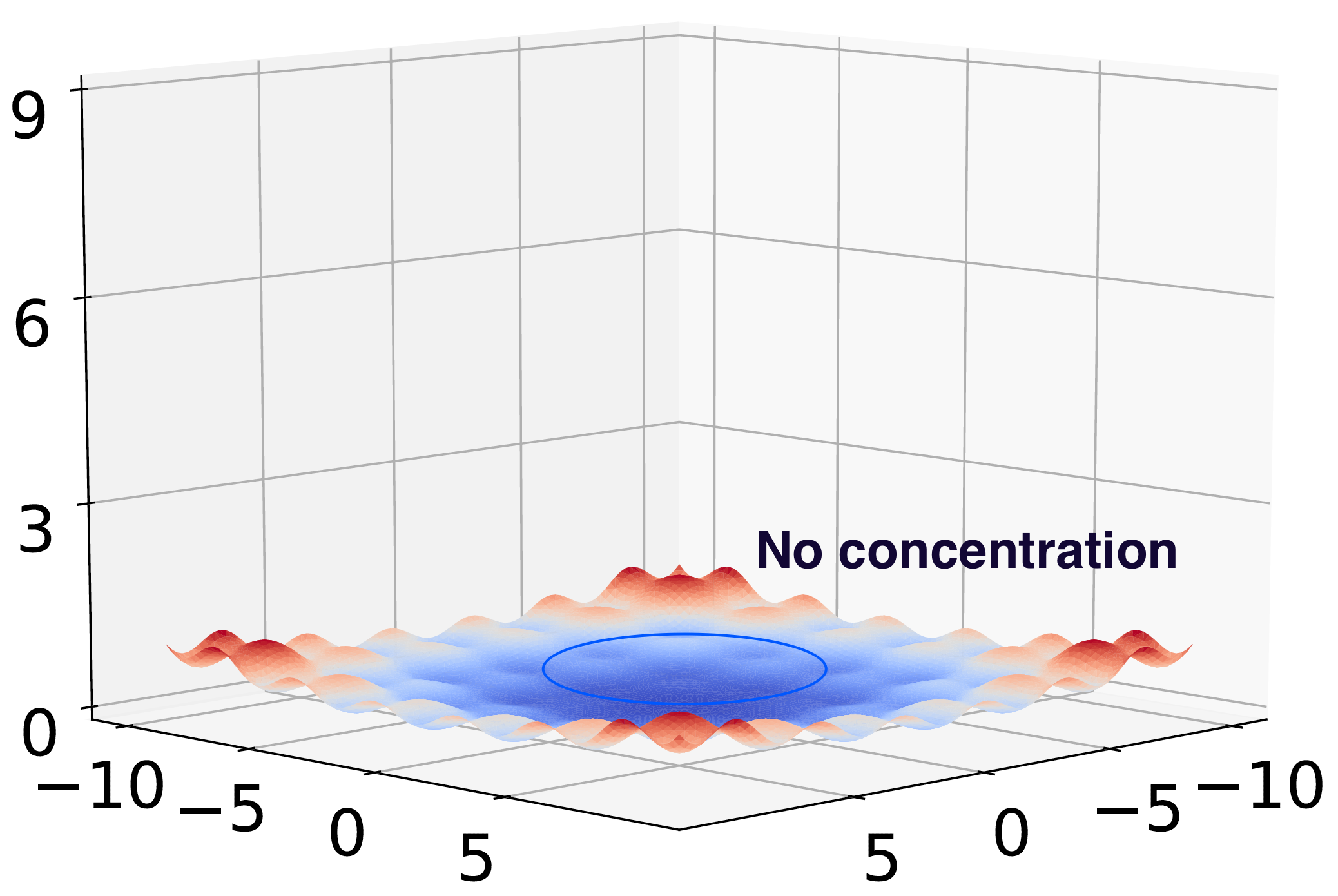} & 
  \includegraphics[scale=0.14]{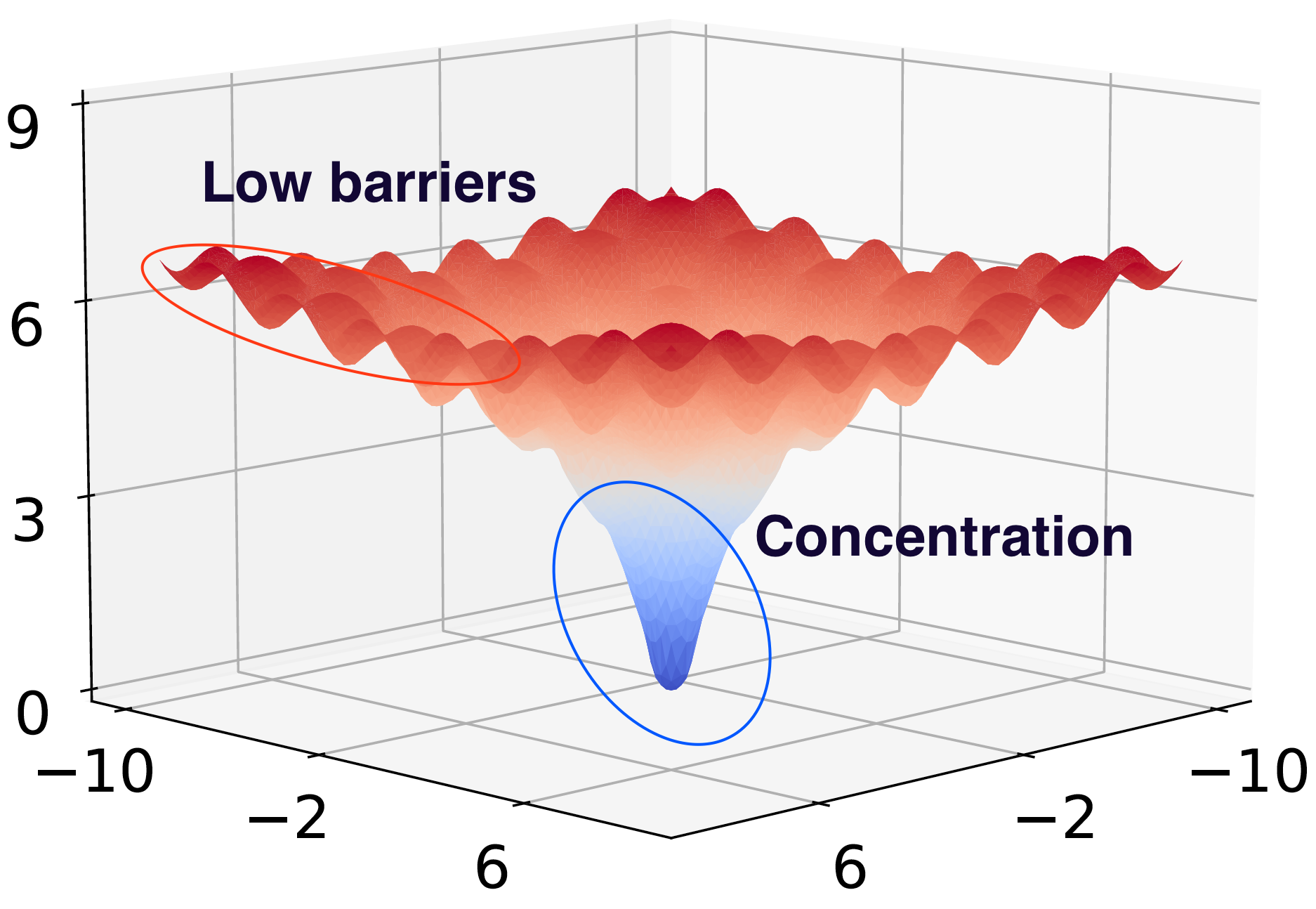} \\
 \end{tabular}
%  \vspace{-2em}
 \end{center}
  \caption{Energy landscapes of different distributions: 
  (a) original distribution $\pi_1(\bx) \propto \exp(-U_d(\bx))$; 
  (b) tempered distribution 
  $\pi_{9}(\bx) \propto \exp(-U_d(\bx)/9)$; 
  (b) $\textcolor{black}{\varpi_{\widehat{\Psi}_{\btheta_{\star}}}(\bx)}$
  with $\tau=9$ and \textcolor{black}{$\zeta=1$},
  where all energy functions
   have been shifted to have a minimum 0 for 
  the purpose of illustration.}
  \label{fig:aw_e}
\end{figure}

AWSGLD tackles this issue by employing an importance sampling strategy, which self-adjusts the tempered target distribution to facilitate global optimization by \emph{flattening the high energy regions} and \emph{over-weighting the low energy regions}. 
Fig.\ref{fig:aw_e}(c) shows the energy landscape of the weighted density $\varpi_{\widehat{\Psi}_{\btheta_{\star}}}(\bx)$ as defined in Section \ref{FMalg2}, which  flattens the local minima in high energy regions while protruding the region of the global energy minimum. In addition, it is 
 easy to see that 
  Fig.\ref{fig:aw_e}(c) has \emph{a  smaller depth of energy barriers} compared to Fig.\ref{fig:aw_e}(a), and \emph{a much larger probability mass at the global optimum} compared to Fig.\ref{fig:aw_e}(b). This implies an improved efficiency of AWSGLD 
   for simulation and global optimization.  
   
As suggested in \cite{Yuchen17} and \cite{Maxim17}, SGLD generally requires to be run at a low temperature to attain its optimal performance in global optimization. Unfortunately, at the low temperature, SGLD tends to be locally trapped on a complex energy landscape. 
  In contrast, AWSGLD prefers to be run under a 
  high temperature, which enhances its ability in sample space exploration, while its dynamic importance sampling mechanism enables it to bias sampling toward low energy regions even when running at a high temperature.

\subsubsection{Example 2: Sample Space Exploration} This example compares the trajectories of AWSGLD, pSGLD, SGHMC, cycSGLD, and high temperature SGLD on a rugged energy landscape with many local energy minima. Note that pSGLD, SGHMC, cycSGLD and high temperature SGLD represent the state-of-the-art SGMCMC algorithms in different categories. The target density function of the example is given by  $\pi_{\tau}(\bx)\propto \exp(-{U(\bx)}/{\tau})$, 
where $U(\bx)$ is given by 
\begin{equation*}
\scriptsize
  \begin{split}
      U(\bx) &=-\{x_1 \sin(20x_2)+x_2\sin (20x_1)\}^2\cosh(\sin(10x_1)x_1) -\{x_1 \cos(10x_2)-x_2\sin(10x_1)\}^2\cosh(\cos(20x_2)x_2).
  \end{split}
\end{equation*}
 This example has been used in multiple papers, e.g., \cite{Robert04} and \cite{Liang07}, as a benchmark example for multimodal sampling.  Fig.\ref{many_traps}(a) shows the contour of the energy function. \begin{figure}[!ht]
  \center
  \includegraphics[scale=0.4]{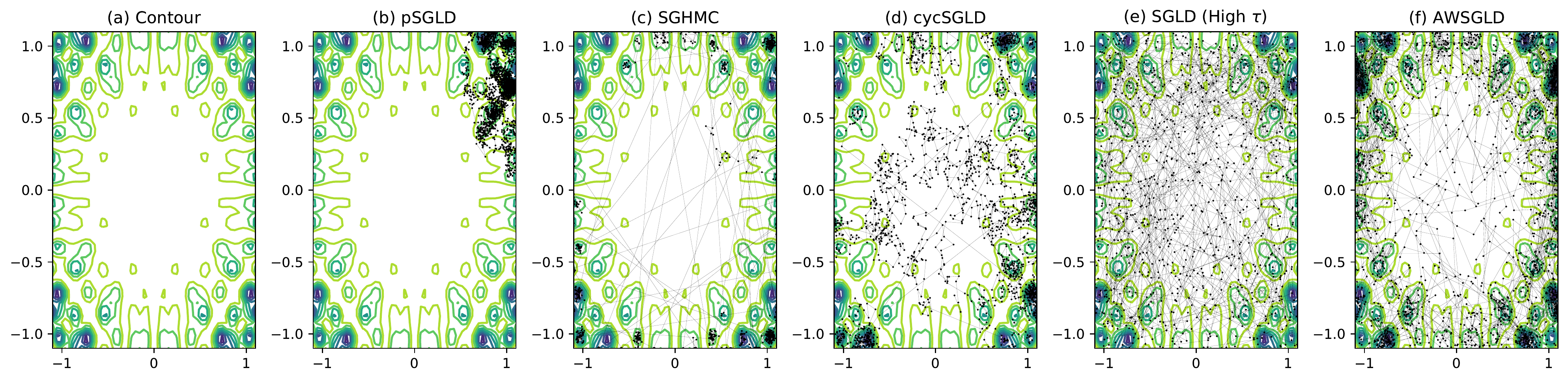}
  \caption{Sample paths of standard sampling algorithms on a rugged energy landscape.}
  \label{many_traps}
\end{figure}
Fig.\ref{many_traps}(b)-(f) show the sample paths produced by different algorithms, whose detailed settings are given in Appendix \ref{sample_space_exploration}. In particular, we note that pSGLD, SGHMC and cycSGLD are run with $\tau=2$, while AWSGLD and the high temperature SGLD are run with $\tau=20$. The sample paths indicate a trade-off in SGMCMC simulations: one  
can employ a low temperature for local geometry exploitation or a high temperature for sample space exploration. At the low temperature, pSGLD exploits well the local geometry of the energy landscape, but fails to traverse the whole sample space; SGHMC and cycSGLD explores more modes, but are still not sufficient enough. At the high temperature, SGLD moves almost uniformly in the sample space, but fails to exploit the global optima. In contrast, AWSGLD with the same high temperature $\tau=20$ can traverse freely over the rugged energy landscape, while having the local geometry well exploited. As explained previously,  in the high energy region, AWSGLD performs like a high-temperature SGLD for sample space exploration; in the low energy region, it performs like a low-temperature SGLD but with a large learning rate. The low-temperature simulation enforces the sampler to exploit the local geometry of the energy landscape, while the large learning rate helps it to escape from local traps. The self-adjusting mechanism on the temperature and learning rate makes AWSGLD essentially immune to local traps.

\subsubsection{Example 3: Non-Convex Optimization}

We apply AWSGLD to 10 benchmark non-convex optimization problems \cite{Laguna2005}. The results are reported in Table \ref{nonconvex_funcs} with the experimental settings given in Appendix \ref{10func}. For each energy function, we report the results of SGLD at two temperatures, where the higher one is also the one employed by AWSGLD.
Each run is terminated when a sample is drawn from the optimal set $\mathcal{S}=\{\bx:U(\bx)\leq u_{\min}+\varrho\}$, where the global minimum value $u_{\min}$ is known for 
each of the energy functions, and the accuracy error 
$\varrho$ together with the settings of other 
parameters are given in Appendix \ref{10func}. 

\begin{table}[!ht]
\caption{Average iteration numbers (in a scale of 1000) required by SGLD and AWSGLD 
 to hit the optimal set for 10 multimodal functions, where AW-10 and AW-100 is short 
 for AWSGLD with a partition of 10 subregions and 100 subregions, respectively; 
 $\infty$ means the number of iterations more than $100,000$.}\label{nonconvex_funcs}
 \vspace{1em}
 \small
 \begin{center}
\begin{tabular}{llccccccccc}
\hline
No & Function $\backslash$ Iters $\times 10^{3}$ & Dim & SGLD (Low $\tau$) & SGLD (High $\tau$) & AW-10 & AW-100 \\
\hline
1 & Rastrigin ($RT_{20}$) & 20 & 4.3$\pm$0.3 &  $\infty$ & 5.1$\pm$0.5 & \textcolor{blue}{3.6$\pm$0.2}  \\
2 & Griewank ($G_{20}$) & 20  & 32.6$\pm$0.9 &  $\infty$ & \textcolor{blue}{1.8$\pm$0.05} & \textcolor{blue}{1.7$\pm$0.02}   \\
3 & Sum Squares ($SS_{20}$) & 20 & 0.07$\pm$0.01 &  0.07$\pm$0.01 & {0.07$\pm$0.01} & 
 {{0.06$\pm$0.01}}  \\
4 & Rosenbrock ($R_{20}$) & 20 & 29.2$\pm$5.7  &  93.2$\pm$10.6 & \textbf{\textcolor{blue}{0.15$\pm$0.01}} & \textcolor{blue}{0.7$\pm$0.1}  \\
5 & Zakharov ($Z_{20}$) & 20  &  20.3$\pm$6.2 & 6.9$\pm$1.9  & {\textcolor{blue}{1.4$\pm$0.7}} & {\textcolor{blue}{0.9$\pm$0.1}}  \\
6 & Powell ($PW_{24}$) & 24 & 10.0$\pm$0.6  &  $\infty$ & \textcolor{blue}{0.3$\pm$0.01} & {\textcolor{blue}{0.2$\pm$0.01}}   \\
7 & Dixon$\&$Price ($DP_{25}$) & 25 & 47.9$\pm$3.6  &  87.5$\pm$6.9 & \textbf{\textcolor{blue}{0.2$\pm$0.02}} & {\textcolor{blue}{0.3$\pm$0.01}}   \\
8 & Levy ($L_{30}$) & 30 & $\infty$  &  22.5$\pm$3.9 & {\textcolor{blue}{0.1$\pm$0.05}} & \textcolor{blue}{0.3$\pm$0.01}   \\
 9 & Sphere ($SR_{30}$) & 30 & 0.3$\pm$0.01  &  0.4$\pm$0.01 & {{0.4$\pm$0.03}}  & {{0.4$\pm$0.03}}  \\
 10 & Ackley ($AK_{30}$) & 30 & 66.2$\pm$12.1  &  $\infty$  & \textbf{\textcolor{blue}{2.7$\pm$0.3}} & {\textcolor{blue}{4.3$\pm$0.2}} \\
\hline
\end{tabular}
\end{center}
\end{table}

For each energy function, each algorithm is run for 10 times. Table \ref{nonconvex_funcs} reports the average hitting time, i.e., the average iteration number, required by each algorithm to hit the optimal set. The comparison shows that for the optimization purpose, SGLD is in general more efficient with a low temperature than with a high one, while AWSGLD significantly outperforms both the low-temperature SGLD and high-temperature SGLD for most of the energy functions. \emph{For certain energy functions, such as $R_{20}$, $DP_{25}$, $L_{30}$ and $AK_{30}$, the accelerations created by AWSGLD are almost 100 times!} Table \ref{nonconvex_funcs} also shows that the sample space partition might affect the performance of AWSGLD, although not very sensitive.

Regarding CPU cost, we note that for most problems that a large dataset is involved, AWSGLD costs about the same CPU time as SGLD per iteration, since the CPU time spent on the extra parameter updating step is really minor compared to that on gradient evaluation. For this example, since no data is involved, the CPU time cost by AWSGLD (with a partition of 100 subregions) per iteration is only about 5\% longer than that by SGLD.  As such, the speed-ups in CPU time by AWSGLD are close to those in iterations.

\subsection{Mode Exploration on MNIST}

The MNIST dataset \cite{lecun1998gradient} is a benchmark data set in computer vision, which contains 60,000 images of handwritten digits (from 0 to 9) for training and 10,000 images for test. We use this example to demonstrate that AWSGLD maintains both Monte Carlo simulation and optimization abilities, while the standard methods, such as stochastic gradient descent (SGD) and SGLD, \emph{get stuck in local modes} and lead to unreliable uncertainty estimation. For simplicity, we consider only a subset of the MNIST data, which consists of five classes with digits from 0 to 4. Following \cite{Jarrett09}, we adopt a standard convolutional neural network (CNN) with two convolutional layers and two fully-connected (FC) layers. The two convolutional 
layers have 32 and 64 feature maps, respectively. The two FC layers have 50 hidden nodes and 5 outputs, respectively. We choose a large batch size of 2500 to reduce the variance of the stochastic gradient. We include standard baselines, such as pSGLD \cite{Li16}, SGHMC \cite{Chen14}, and cycSGLD \cite{ruqi2020}, where the temperature is set to $0.3$ \footnote[2]{The wide adoption of data augmentation, such as random flipping and random cropping, in DNN training implicitly includes more data and leads to a concentrated posterior. See \cite{Florian2020, Aitchison2021} for details.}. Since AWSGLD  prefers to be run at a higher  temperature, we run AWSGLD and a high-temperature SGLD at $\tau=1$.  The model is trained for 500 epochs.

\begin{figure}[!ht]
\small
 \begin{tabular}{cccc}
{(a) Training Loss} &  {(b) Gradient multiplier} & (c) {Test accuracy} & {(d) Uncertainty evaluation} \\ 
  \includegraphics[height=1.4in,width=1.4in]{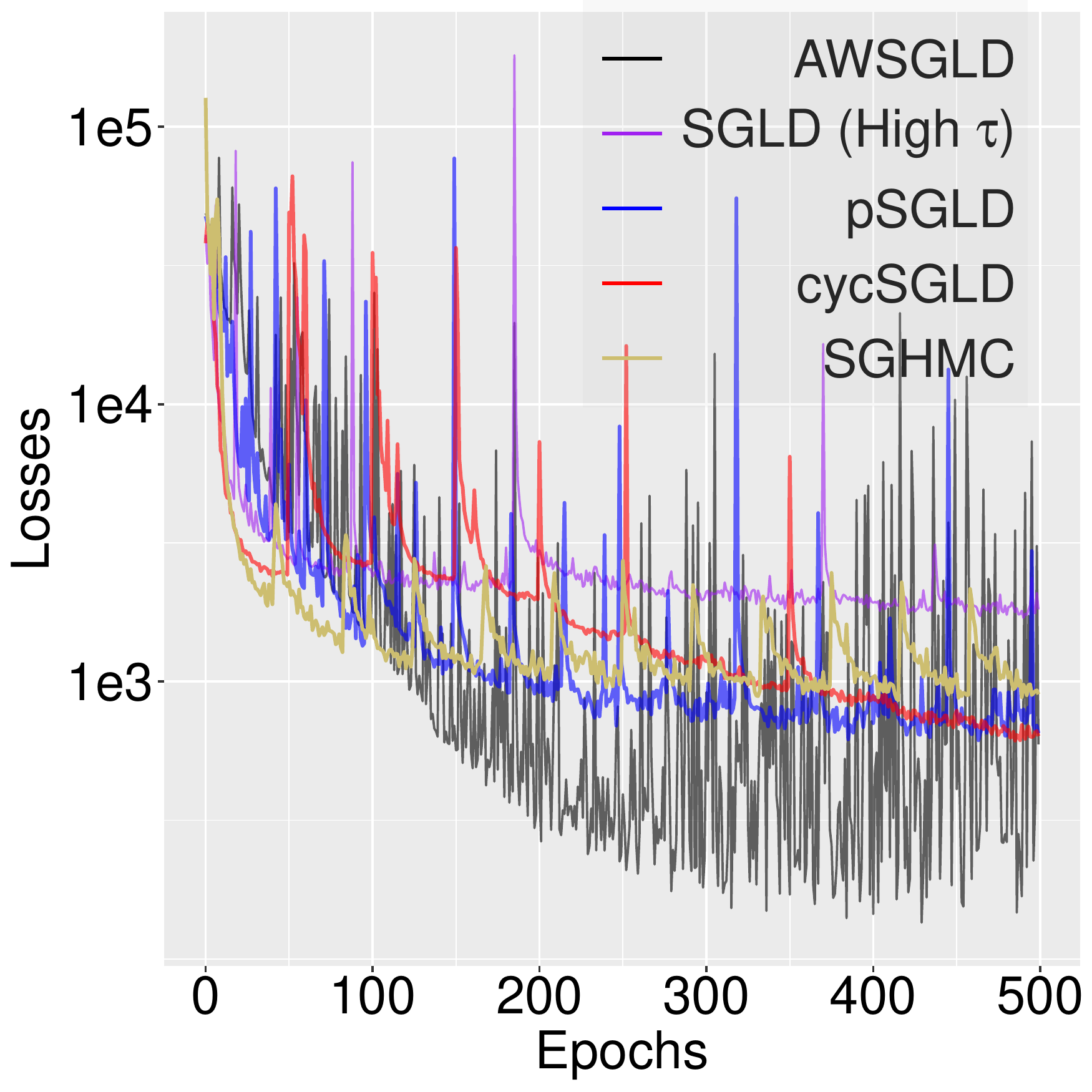} & 
  \includegraphics[height=1.4in,width=1.4in]{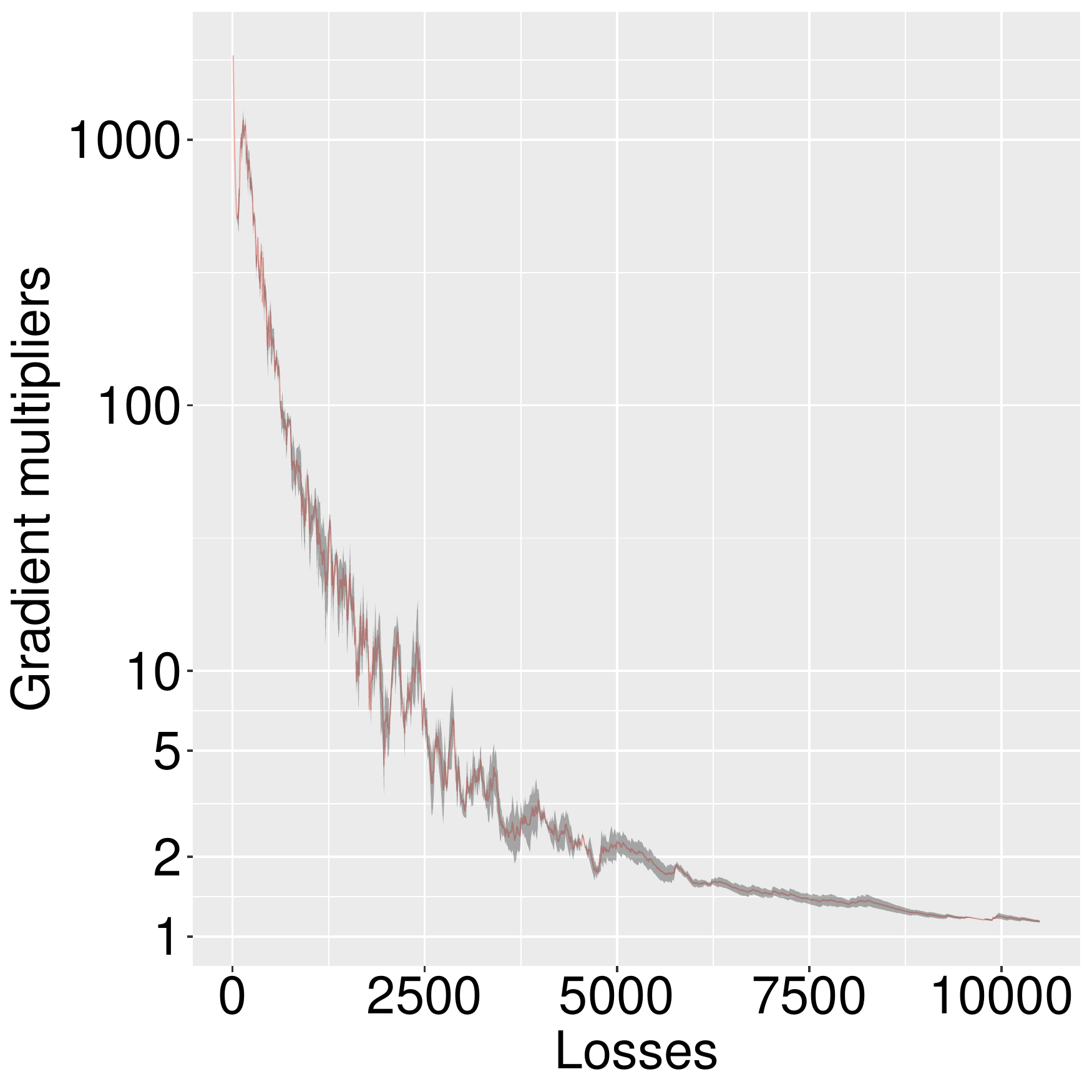} & 
  \includegraphics[height=1.4in,width=1.4in]{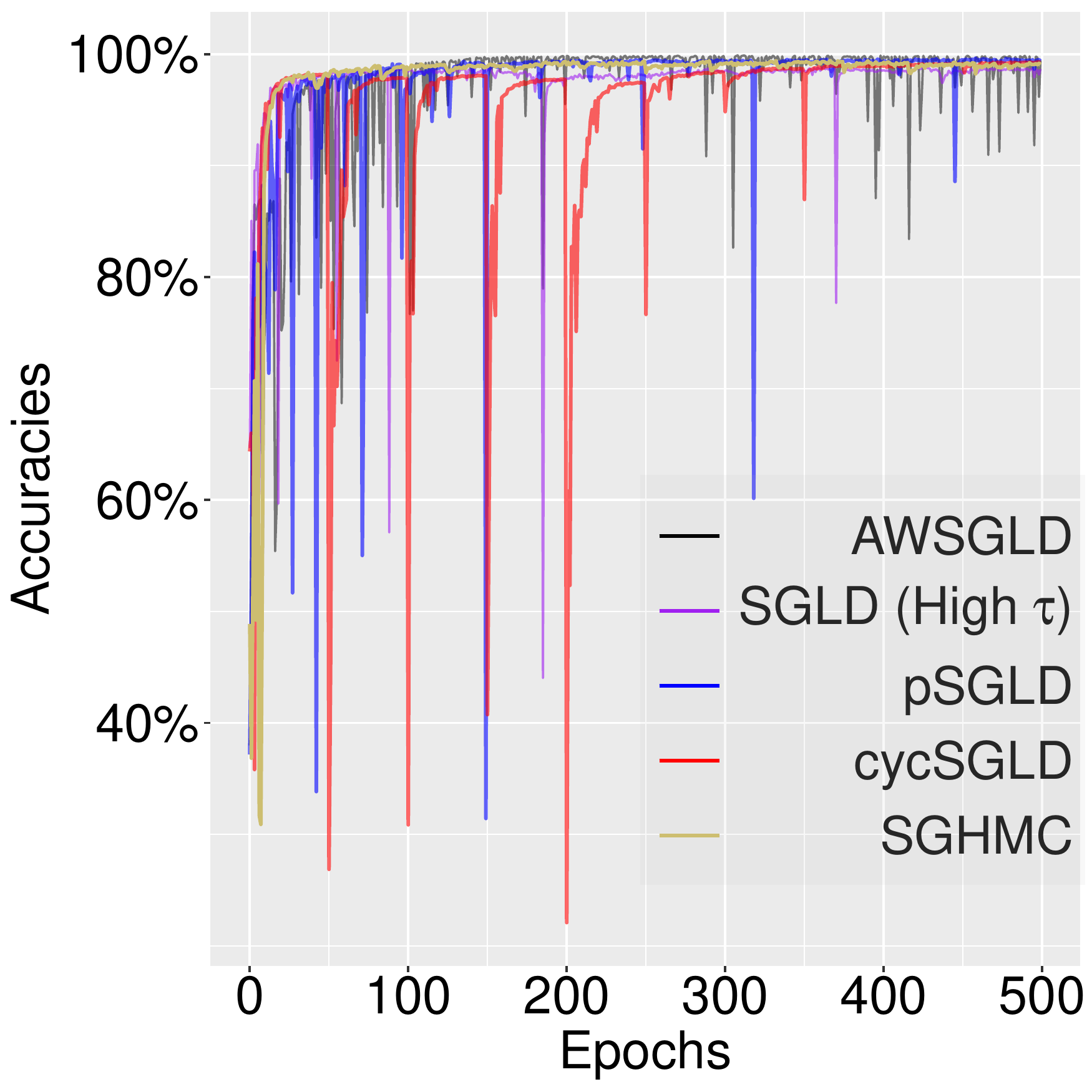} &
  \includegraphics[height=1.4in,width=1.4in]{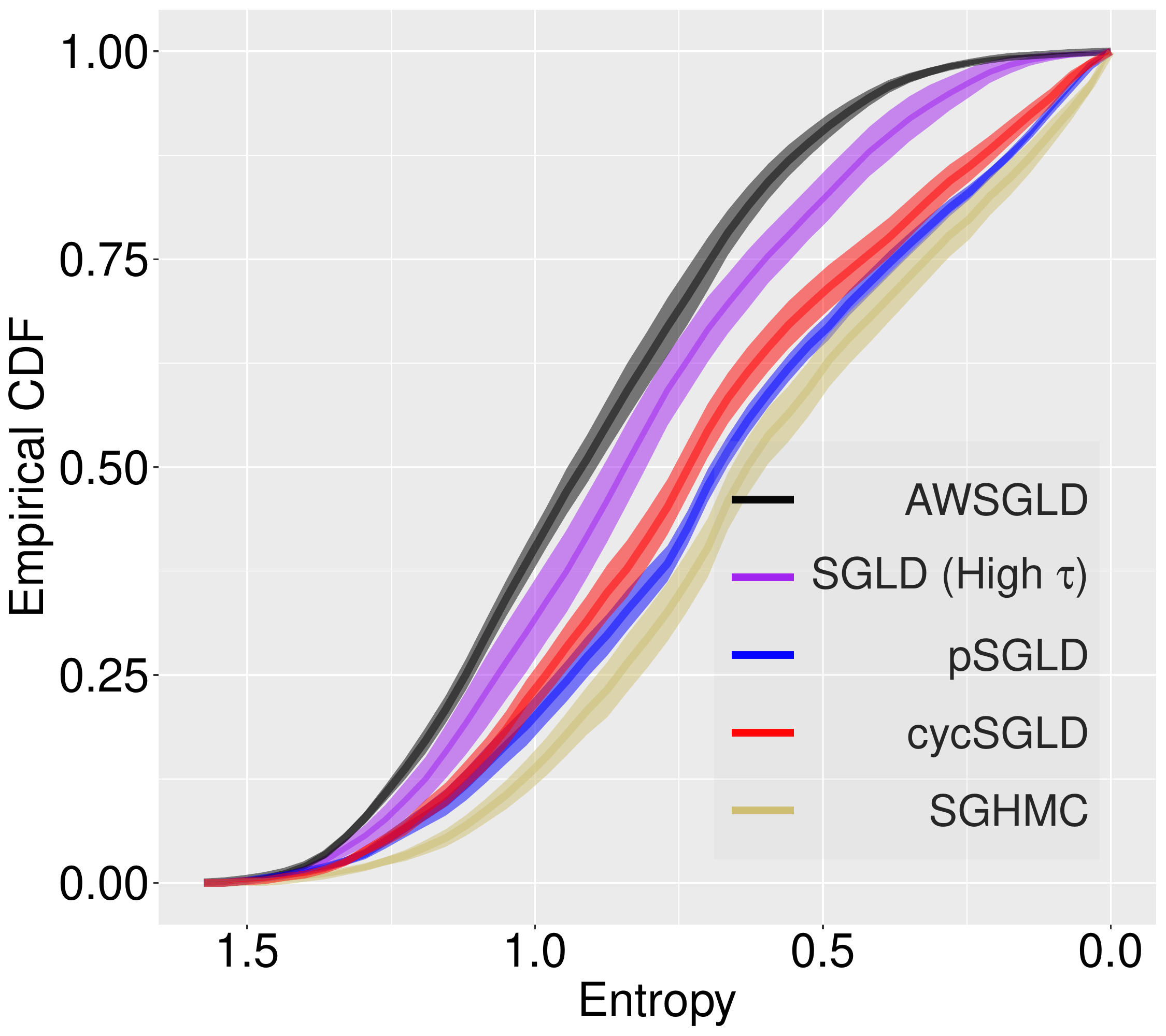} 
\end{tabular}
  \caption{Energy space exploration on a subset of  MNIST data (digits 0-4) by AWSGLD, high-$\tau$ SGLD, pSGLD, SGHMC, and cycSGLD: (a) paths of training loss; (b) gradient multiplier vs. training loss; (c) paths of test accuracy; 
  (d) empirical CDFs of Shannon entropy evaluated on the out-of-distribution images (digits 5-9).} 
\label{Uncertainty_estimation_mnist}
\end{figure}

Fig.\ref{Uncertainty_estimation_mnist}(a) shows the paths of the training loss produced by the five algorithms. The high-temperature SGLD consistently produced higher training loss, which suggests that the algorithm made a decent sample space exploration but an insufficient local geometry exploitation. SGHMC, pSGLD, and cycSGLD occasionally traverse between high-energy regions and low-energy regions, and achieve a better exploitation but a limited exploration. On the contrary, {\it  AWSGLD balances the sample space exploration and local geometry exploitation in a single run, which strikingly traverses freely between high and low energy regions}. 

Fig.\ref{Uncertainty_estimation_mnist}(b) illustrates the self-adjusting mechanism of AWSGLD: {\it it automatically adjusts the gradient multiplier to a large value at low energy regions and a small  value at high energy regions, which explains why AWSGLD is immune to local traps}. For this example, the gradient multiplier can be as large as 1000 in the extremely low energy regions.
 
For a comprehensive evaluation of the five algorithms, we report in Fig.\ref{Uncertainty_estimation_mnist}(c) the paths of test accuracy produced by them, where the prediction accuracy is calculated on the test images with digits from 0 to 4; and we report in Fig.\ref{Uncertainty_estimation_mnist}(d) the empirical CDF of Shannon entropy produced by them, where the Shannon entropy for each of the sampled models is calculated on the rest images of the dataset (digits from 5 to 9). The Shannon entropy provides a good measure for the uncertainty of the models on out-of-distribution detection, and a well-trained model is expected to exhibit high Shannon entropy. In summary, Fig.\ref{Uncertainty_estimation_mnist} shows that AWSGLD is the only algorithm among the five that produces models with low training loss as well as high Shannon entropy. By contrast, the high-temperature SGLD produces models with high training loss and high Shannon entropy; and pSGLD, cycSGLD and SGHMC produce models with low training loss and low Shannon entropy.
 
\begin{figure}[!ht]
\small
 \begin{tabular}{cccccc}
(a) SGD & (b) pSGLD & (c) SGHMC & (c) high-$\tau$ SGLD & (d) AWSGLD  \\ 
\includegraphics[height=1.1in,width=1.1in]{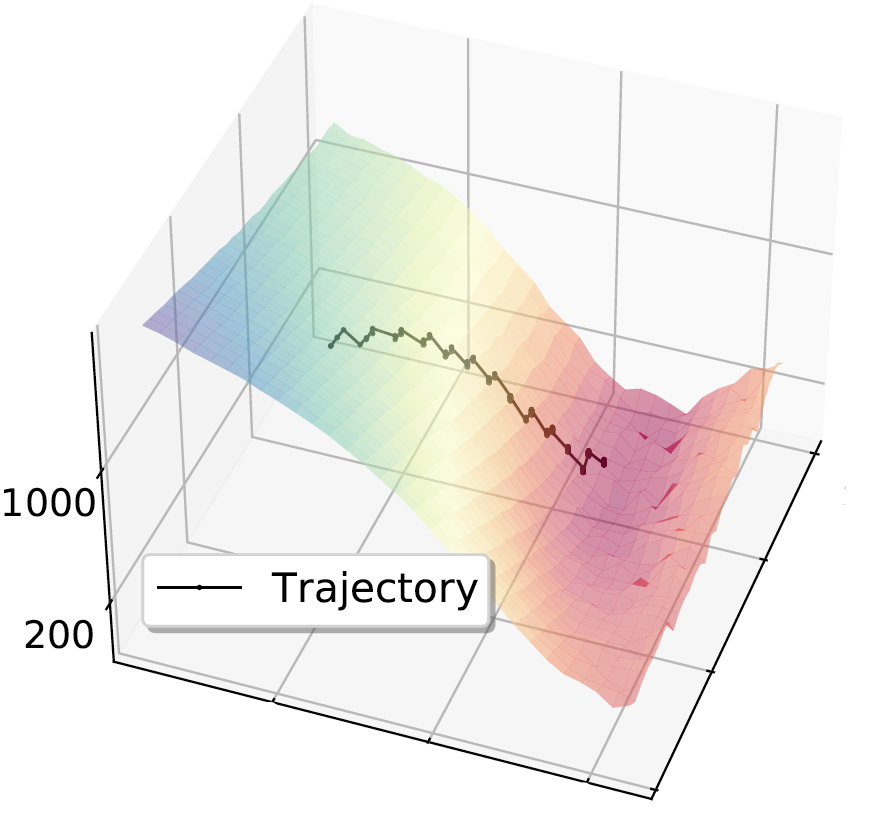} &
\includegraphics[height=1.1in,width=1.1in]{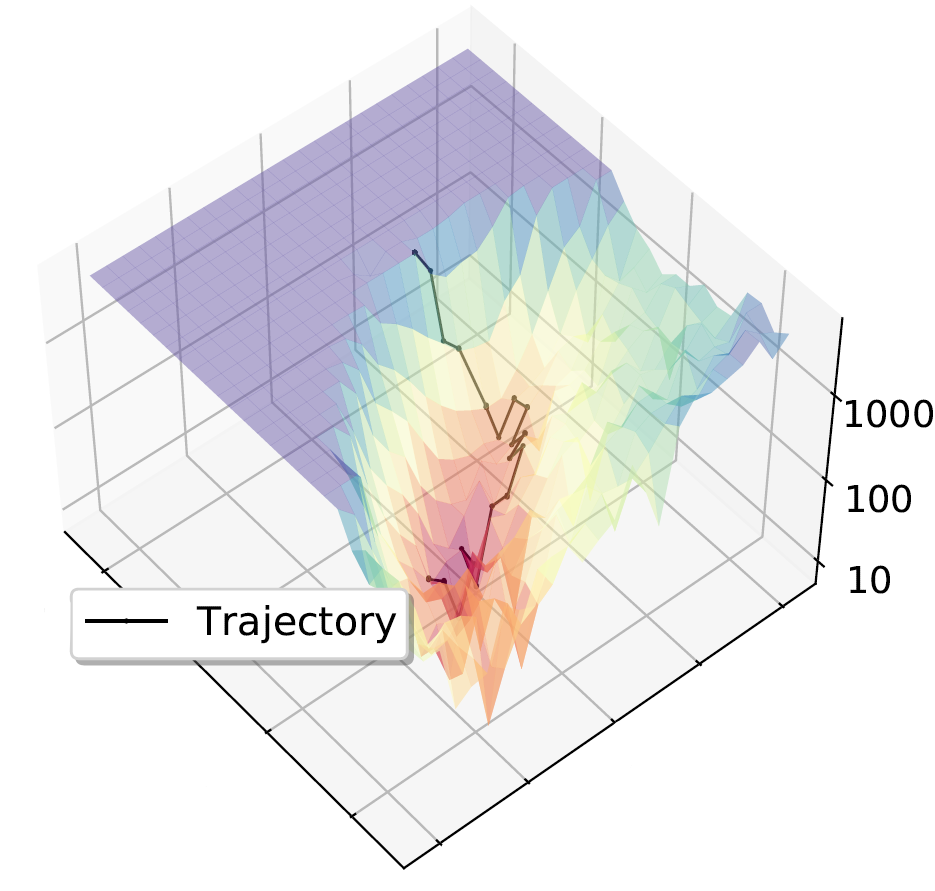} &
\includegraphics[height=1.1in,width=1.1in]{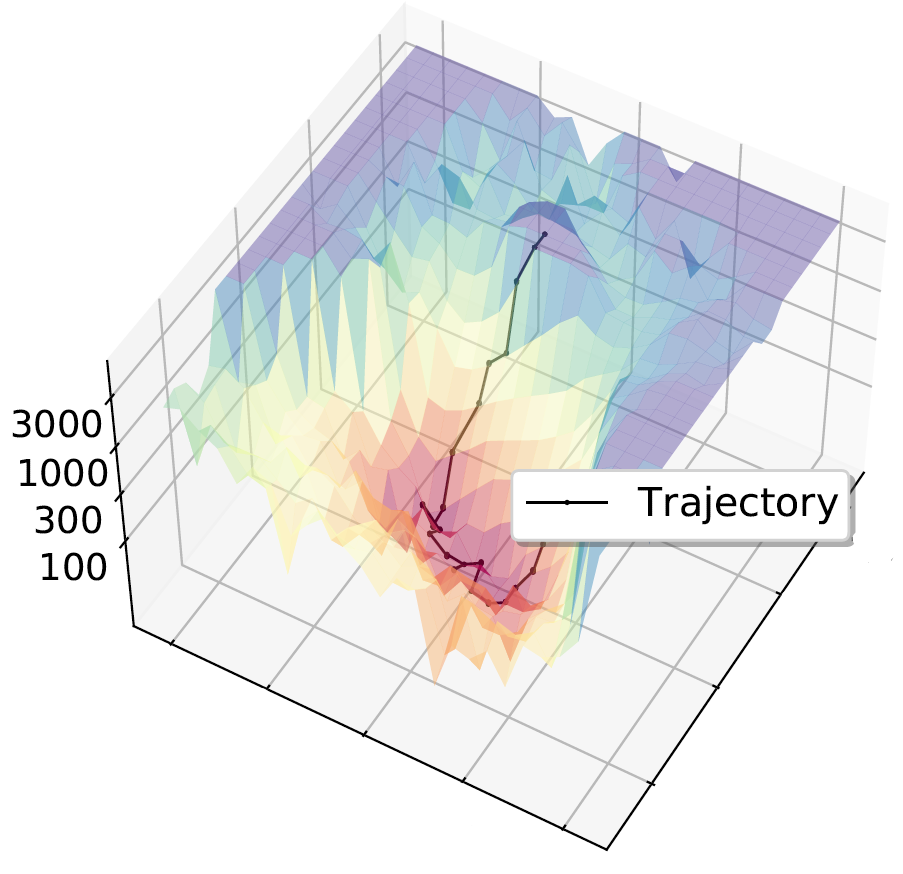} &
\includegraphics[height=1.1in,width=1.1in]{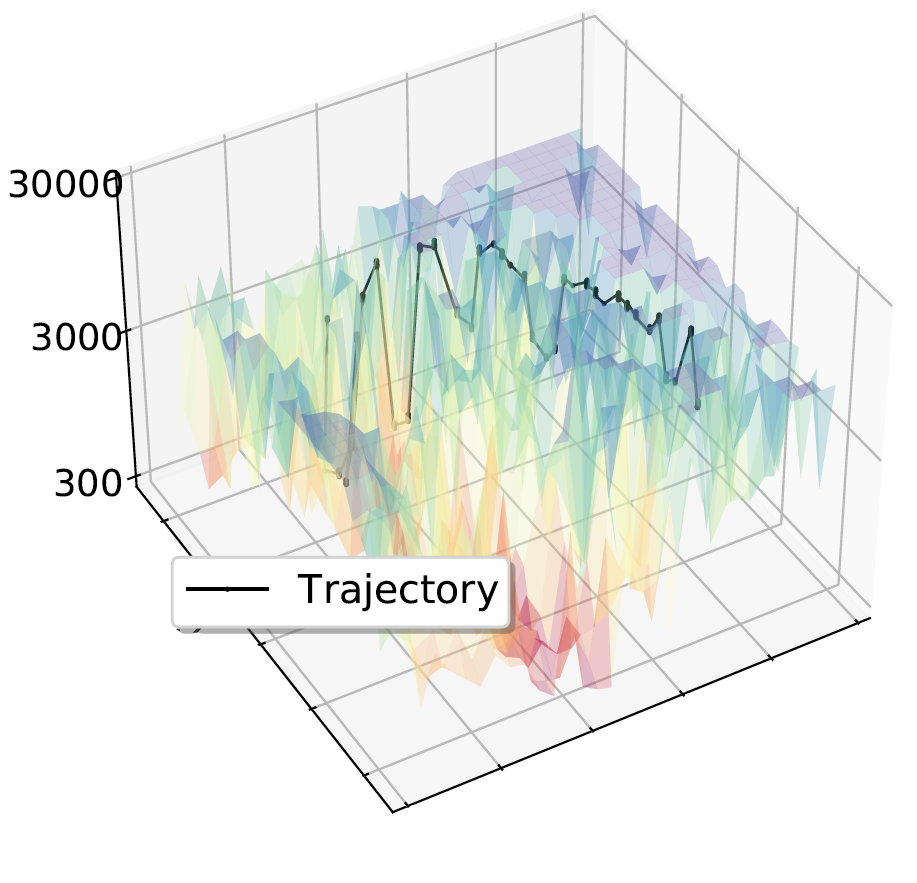} &
\includegraphics[height=1.1in,width=1.1in]{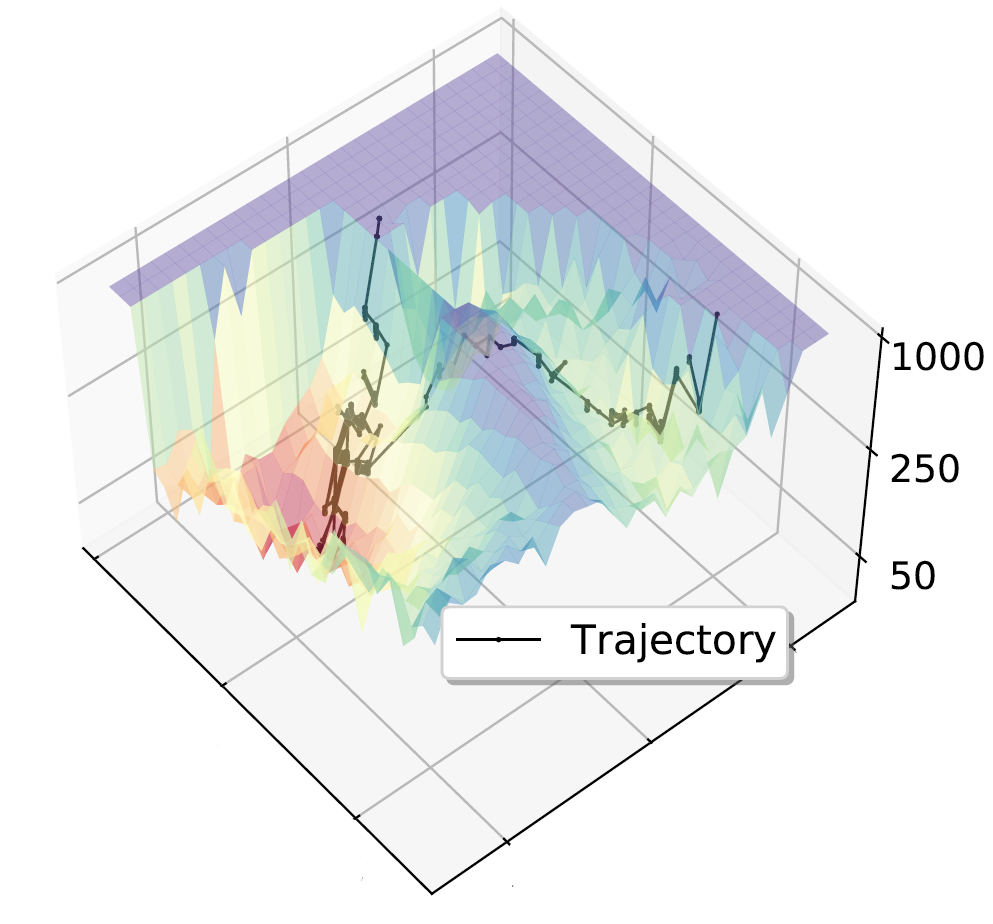} &\\
\includegraphics[height=1.1in,width=1.1in]{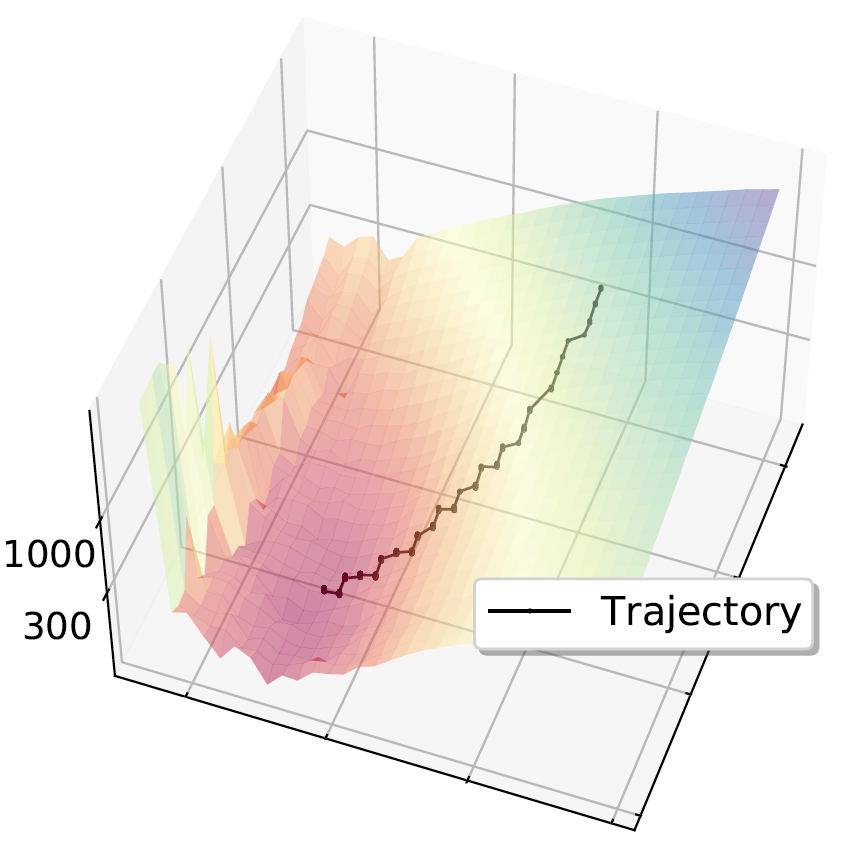} &
\includegraphics[height=1.1in,width=1.1in]{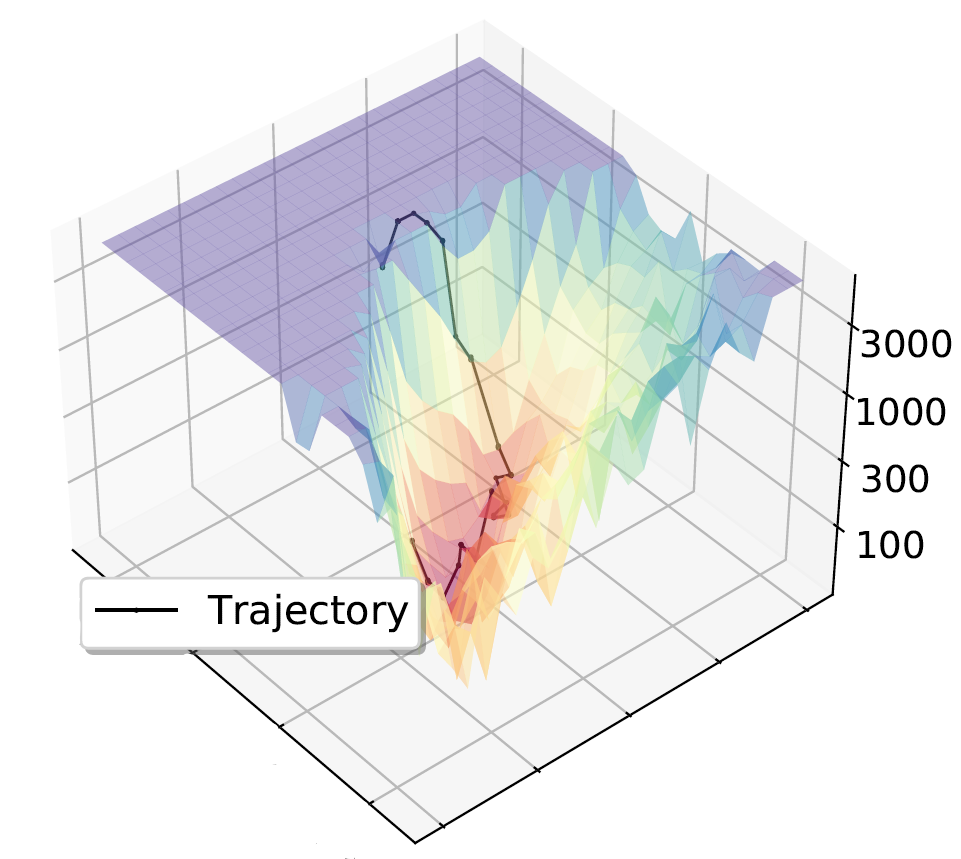} &
\includegraphics[height=1.1in,width=1.1in]{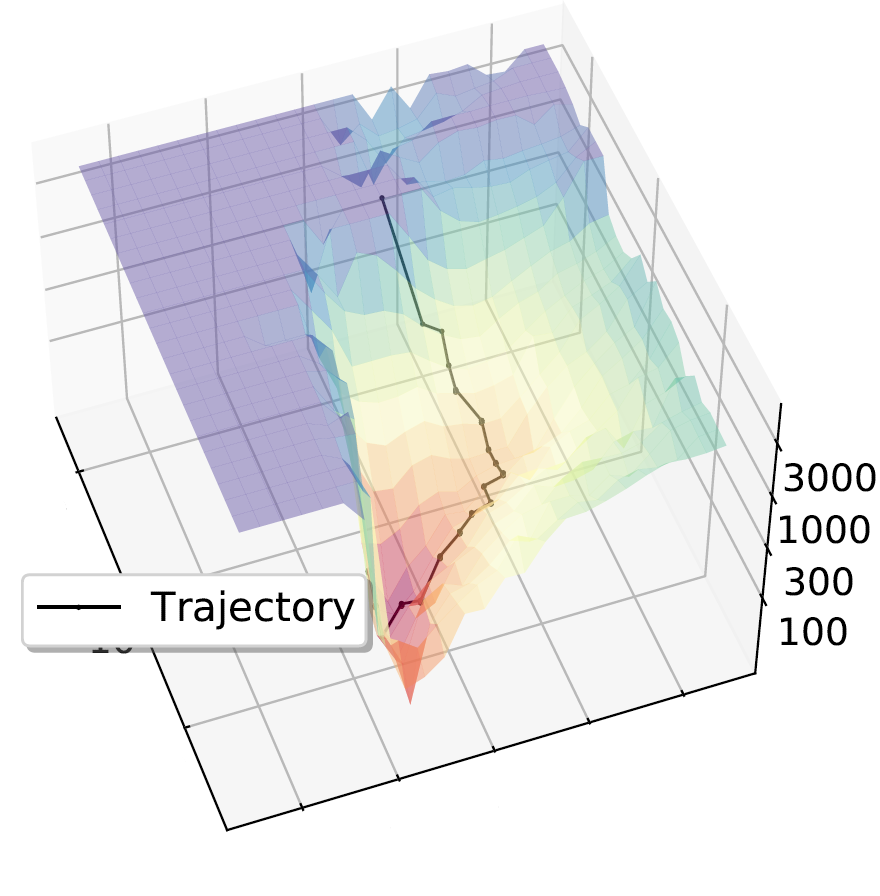} &
\includegraphics[height=1.1in,width=1.1in]{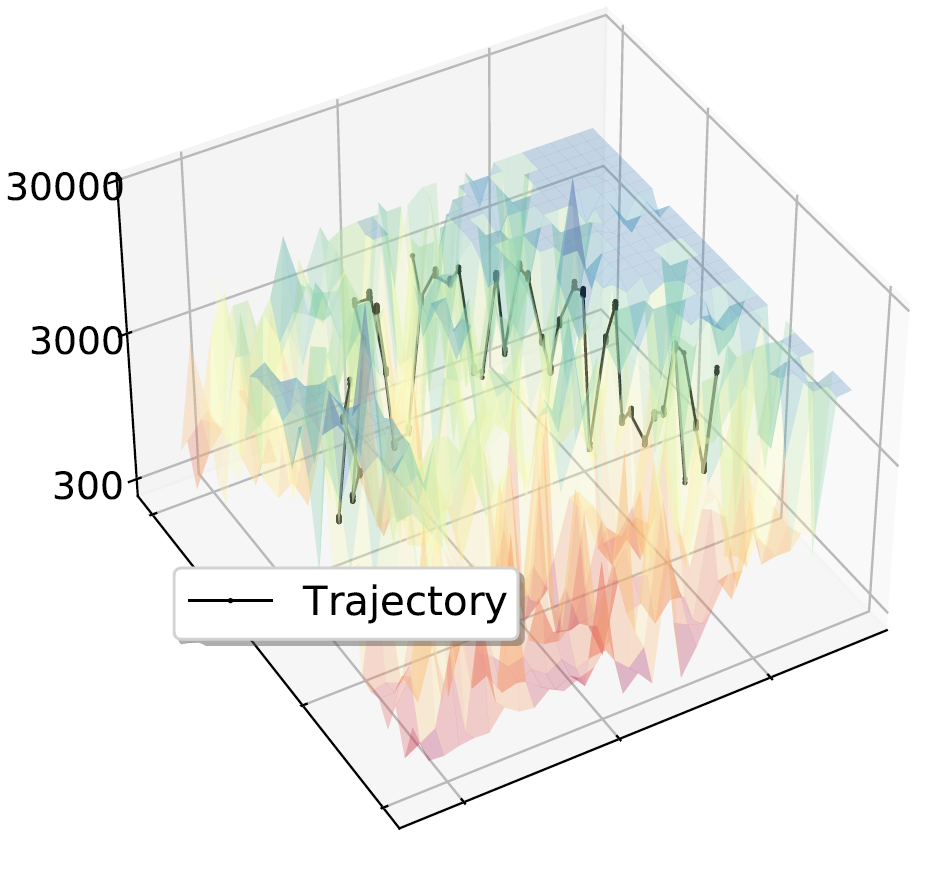} &
\includegraphics[height=1.1in,width=1.1in]{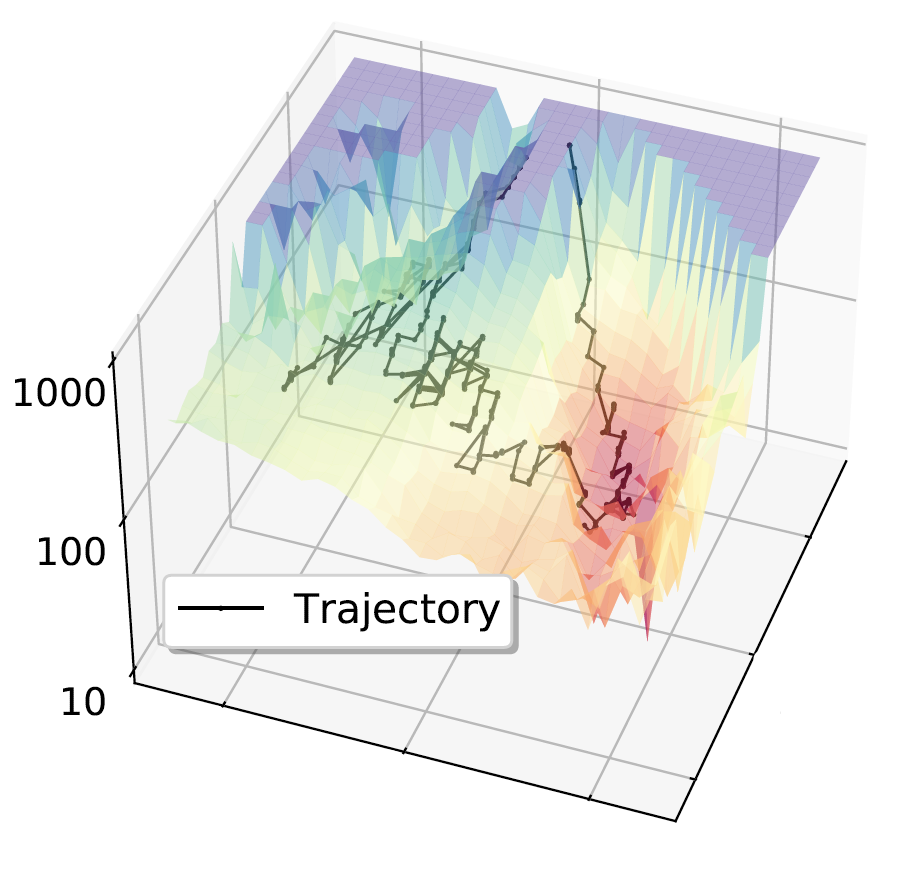} &
\end{tabular}
  \caption{Sampling trajectory visualization of different algorithms on a 2-class MNIST example: (a) SGD, (b) pSGLD, (c) SGHMC, (d) high-$\tau$ SGLD, and (e) AWSGLD, where two trajectories (obtained in two independent runs) are shown for each algorithm. }   
\label{Uncertainty_estimation_mnist_plots}
\end{figure}

To further demonstrate that AWSGLD is immune to local traps, we perform a 2-class classification task (digits 0 and 1) with a similar CNN model as used above. To visualize the approximate trajectory of each algorithm, we conduct a singular value decomposition (SVD) on its sampled models and pick some coordinates based on the first two eigenvectors.  We fix the domain that contains all the coordinates, recover the parameters on some grid points, and then fine-tune the parameters to approximate the loss function. We include  baselines, such as SGD, pSGLD, SGHMC, and high-temperature SGLD. Each algorithm is run for 500 epochs. Fig.\ref{Uncertainty_estimation_mnist_plots} shows that SGD, pSGLD and SGHMC all tend to get stuck at a local minimum (region); the high-temperature SGLD explores over a wide parameter domain, but never locates a good local minimum;  AWSGLD, instead, works well on such a highly non-convex energy landscape, which is indeed able to escape from local energy minima via self-adjusting the gradient multiplier.

\section{Conclusion} 
\label{conclusion}

We have proposed the AWSGLD algorithm as a general-purpose scalable algorithm for both Monte Carlo simulation and global optimization. We have tested its performance on multiple benchmark examples, which shows its
superiority in Monte Carlo simulation and global optimization tasks compared to other popular baselines.

AWSGLD can be viewed as a dynamic SGLD sampler for which both the temperature and learning rate are  self-adjusted according to the position/energy of the current sample. The self-adjusting mechanism enables the algorithm to achieve a good balance between sample space exploration and local geometry exploitation. In the high-energy region, it performs like a high-temperature SGLD sampler for sample space exploration. In the low-energy region, it performs like a low-temperature high-learning rate SGLD sampler; the low temperature enhances its performance in local geometry exploitation, while the large learning rate helps it to escape from local traps. The good balance between sample space exploration and local geometry exploitation enables the algorithm to perform well in both Monte Carlo simulation and global optimization tasks.

\chapter{CONCLUSION AND FUTURE WORK}

\section{Conclusion}
In this thesis, we propose two methodologies to accelerate non-convex Bayesian learning in standard deep learning problems. 

We first propose the replica exchange SGMCMC (reSGMCMC) algorithm and prove the accelerated convergence in terms of 2-Wasserstein distance. The theory implies an accuracy-acceleration trade-off and guides us to tune the correction factor to obtain the optimal performance. To promote more effective swaps and obtain more accelerations, we propose the variance-reduced reSGMCMC algorithm to accelerate the convergence by reducing the variance of the noisy energy estimators. Theoretically, this is the first variance reduction method that yields the potential of exponential accelerations instead of solely reducing the discretization error. Since our variance reduction only conducts on the noisy energy estimators and is not applied to the noisy gradients, the standard hyper-parameter setting can be also naturally imported, which greatly facilitates the training of deep neural works. We also show how to conduct efficient population-chain replica exchange in big data problems. To tackle the inefficiency
issue of the popular DEO scheme given limited chains, we present the generalized DEO scheme by applying an optimal window size to encourage deterministic paths, which accelerates the computation by $O(\frac{P}{\log P})$ times. For a user-friendly purpose, we propose a deterministic swap condition to interact with SGD-based exploration kernels, which greatly reduced the cost in hyperparameter tuning.

We also proposed two adaptively weighted stochastic gradient Langevin dynamics algorithms as general scalable Monte Carlo algorithms for both simulation and optimization tasks. The algorithms automatically adjust the invariant distribution during simulations to facilitate escaping from local traps and traversing over the entire energy landscape. The sampling bias introduced thereby is accounted for by dynamic importance weights. We proved a stability condition for the mean-field system together with the convergence of its self-adapting parameter to a unique fixed point. We established the convergence of a weighted averaging
estimator. The bias of the estimator decreases as we employ a finer partition, a larger mini-batch size, and smaller learning rates and step sizes. 

\section{Future Work}

\subsection{Federated Learning}

Federated learning (FL) allows multiple parties to jointly train a consensus model without sharing user data. A standard formulation of federated learning is a distributed optimization framework that tackles communication costs, client robustness, and data heterogeneity across different clients \cite{lsts20}. Central to the formulation is the efficiency of the communication, which directly motivates the communication-efficient federated averaging (FedAvg)~\cite{mmr+17}. However, the optimization framework often fails to quantify the uncertainty accurately for the parameters of interest. Such a problem leads to unreliable statistical inference and casts doubts on the credibility of the prediction tasks or diagnoses in medical applications.

To unify optimization and uncertainty quantification in federated learning, we resort to a \emph{Bayesian treatment by sampling from a global posterior distribution}, where the latter is aggregated by infrequent communications from local posterior distributions. We adopt a popular approach for inferring posterior distributions for large datasets, the stochastic gradient Markov chain Monte Carlo (SGMCMC) method~\cite{Welling11,Chen14,yian2015}, which enjoys theoretical guarantees beyond convex scenarios \cite{Maxim17, Yuchen17}. The close resemblance naturally inspires us to adapt the optimization-based FedAvg to a distributed sampling framework. Similar ideas have been proposed in federated posterior averaging~\cite{agxr21}, where empirical study and analyses on Gaussian posteriors have shown promising potential of this approach. Compared to the theoretical guarantees of optimization-based algorithms in federated learning, the convergence properties of sampling algorithms in federated learning is far less understood. To fill this gap, we proceed by asking the following question:
\begin{center}
    {\it Can we build a unified algorithm with convergence guarantees for sampling in FL?}
\end{center}

In \cite{deng_FALD}, we make a first step in answering this question in the affirmative. We propose the federated averaging Langevin dynamics (FA-LD) for posterior inference beyond the Gaussian
distribution, which also paves the way for future works of advanced Monte Carlo methods, such as underdamped Langevin dynamics \cite{ccbj18}, replica exchange Monte Carlo (also known as parallel tempering) \cite{deng2020}, and sparse sampling \cite{deng2019} in federated learning. It is also interesting to study the optimal number of local steps under the non-strongly convex \cite{dk17} or non-convex assumptions \cite{Maxim17, ma19}.

\subsection{Reinforcement Learning}

Thompson sampling (TS) and Information directed sampling (IDS) \cite{Russo_roy18} provide systematic strategies for efficient exploration in bandit problems, however, the deployment of these exploration strategies are often limited by complex or intractable posterior distributions. It is promising to tailor stochastic gradient Markov chain Monte Carlo methods \cite{Welling11, yian2015, ccbj18, deng2020, CSGLD} to approximate the posterior distribution under the framework of TS or IDS based on log-concave, sparse linear \cite{IDS_botao_deng}, or multi-modal assumptions. It is also interesting to validate the optimal regret guarantees under those frameworks over finite/infinite-time horizons. 

\subsection{Score-based Generative Modeling}

Score matching \cite{score_matching, score_matching_yang_song} is a promising alternative for generative models. However, it also suffers from the local trap problem, which may lead to subpar results. From the application perspective, it is interesting to study advanced score-based generative modeling inspired by advanced stochastic gradient Markov chain Monte Carlo methods \cite{deng_ICSGLD, deng_NRPT, AWSGLD}. Given better score estimations, we can expect more representative samples with higher quality \cite{provly_SB, SB_bounded_cost}.

\makeatletter
    \defbibenvironment{bibliography}
    {%
        \list
        {%
            \printtext[labelnumberwidth]%
            {%
                \printfield{prefixnumber}%
                \printfield{labelnumber}%
            }%
        }%
        {%
            \setlength{\bibhang}{0pt}%
            \setlength{\itemindent}{0pt}%  -\leftmargin}
            \setlength{\itemsep}{\bibitemsep}%
            \setlength{\leftmargin}{0pt}%  .22in} % 0.42in}
            \setlength{\parsep}{\bibparsep}%
            \setlength{\rightmargin}{0.33in}%
        }%
    }
    {\endlist}
    {\item}
\makeatother

%\immediate\setlength{\labelnumberwidth}{1.5in}
\setlength{\labelwidth}{1.5in}
\def\sllnsez{[999] }
{%
  % Make _ in URLs visible
%%%%  \def\t{\char'137}%
  \catcode`*=\active
  
  \def*{\char'137}%  \char'137 is _
  
  \PrintBibliography
  
}

\appendices

\chapter{TECHNICAL PROOFS FOR CHAPTER \ref{icml20}}
\label{chapter_appendix_icml20}

\section{Background}
The continuous-time replica exchange Langevin diffusion (reLD) $\{ \bbeta_t\}_{t\ge 0 }:=\left\{\begin{pmatrix}{}
\bbeta_t^{(1)}\\
\bbeta_t^{(2)}
\end{pmatrix}\right\}_{t\ge 0 }$ is a Markov process compounded with a Poisson jump process. In particular, the Markov process follows the stochastic differential equations
\begin{equation}
\label{sde_2}
\begin{split}
    d \bbeta^{(1)}_t &= -  \nabla L(\bbeta_t^{(1)}) dt+\sqrt{2\tau^{(1)}} d\bW_t^{(1)}\\
    d \bbeta^{(2)}_t &= -  \nabla L(\bbeta_t^{(2)}) dt+\sqrt{2\tau^{(2)}} d\bW_t^{(2)},\\
\end{split}
\end{equation}
where $\bbeta_t^{(1)}, \bbeta_t^{(2)}$ are the particles (parameters) at time $t$ in $\mathbb{R}^d$, $\bW^{(1)}, \bW^{(2)}\in\mathbb{R}^d$ are two independent Brownian motions, $L:\mathbb{R}^d\rightarrow \mathbb{R}$ is the energy function, $\tau^{(1)}<\tau^{(2)}$ are the temperatures. The jumps originate from the swaps of particles $\bbeta_t^{(1)}$ and $\bbeta_t^{(2)}$ and follow a Poisson process where the jump rate is specified as the Metropolis form $a (1\wedge S(\bbeta_t^{(1)}, \bbeta_t^{(2)}))dt$. Here $a\geq 0$ is a constant, and $S$ follows
\begin{equation*}
\label{swap_1}
\begin{split}
    S(\bbeta_t^{(1)}, \bbeta_t^{(2)})=e^{ \left(\frac{1}{\tau^{(1)}}-\frac{1}{\tau^{(2)}}\right)\left(L(\bbeta_t^{(1)})-L(\bbeta_t^{(2)})\right)}.\\
\end{split}
\end{equation*}
Under such a swapping rate, the probability $\nu_t$ associated with reLD at time $t$ is known to converge to the invariant measure (Gibbs distribution) with density
\begin{equation*}
\label{pt_density}
\begin{split}
    \pi(\bbeta^{(1)}, \bbeta^{(2)})\propto e^{-\frac{L(\bbeta^{(1)})}{\tau^{(1)}}-\frac{L(\bbeta^{(2)})}{\tau^{(2)}}}.
\end{split}
\end{equation*}

In practice, obtaining the exact energy and gradient for reLD (\ref{sde_2}) in a large dataset is quite expensive. We consider the replica exchange stochastic gradient Langevin dynamics (reSGLD), which generates iterates $\{\widetilde \bbeta^{\eta}(k)\}_{k\ge 1}$ as follows
\begin{equation}
\label{reSGLD}
       \begin{split}
           \widetilde \bbeta^{\eta(1)}(k+1) &= \widetilde \bbeta^{\eta(1)}(k)- \eta  \nabla \widetilde L (\widetilde \bbeta^{\eta(1)}(k))+\sqrt{2\eta\tau^{(1)}} \bxi_k^{(1)}\\
    \widetilde \bbeta^{\eta(2)}(k+1) &= \widetilde \bbeta^{\eta(2)}(k) - \eta \nabla \widetilde L(\widetilde \bbeta^{\eta(2)}(k))+\sqrt{2\eta\tau^{(2)}} \bxi_k^{(2)},\\
       \end{split}
   \end{equation}
where $\eta$ is considered to be a fixed learning rate for ease of analysis, and $\bxi_k^{(1)}$ and $\bxi_k^{(2)}$ are independent Gaussian random vectors in $\mathbb{R}^d$. Moreover, the positions of the particles swap based on the stochastic swapping rate. In particular, $\widetilde S(\bbeta^{(1)}, \bbeta^{(2)}):=S(\bbeta^{(1)}, \bbeta^{(2)})+\psi$, and the stochastic gradient $\nabla\widetilde L(\cdot)$ can be written as $ \nabla L(\cdot)+\bphi$, where both $\psi\in \mathbb{R}^1$ and $\bphi\in\mathbb{R}^d$ are random variables with mean not necessarily zero. We also denote $\m_k$ as the probability measure associated with $\{\widetilde \bbeta^{\eta}(k)\}_{k\ge 1}$ in reSGLD (\ref{reSGLD}) at step $k$, which is close to $\nu_{k\eta}$ in a suitable sense.

\section{Overview of the Analysis}
\label{overview_reSGLD}

We aim to study the convergence analysis of the probability measure $\m_{k}$ to the invariant measure $\pi$ in terms of 2-Wasserstein distance,
\begin{equation}
\label{w2}
    W_2(\m, \nu):=\inf_{\Gamma\in \text{Couplings}(\m, \n)}{\sqrt{\int\|\bbeta_{\m}-\bbeta_{\n}\|^2 d \Gamma(\bbeta_{\m},\bbeta_{\n})}},
\end{equation}
where $\|\cdot\|$ is the Euclidean norm, and the infimum is taken over all joint distributions $\Gamma(\bbeta_{\mu}, \bbeta_{\nu})$ with $\mu$ and $\nu$ being the marginals distributions.

By the triangle inequality, we easily obtain that for any $k\in \mathbb{N}$ and $t=k\eta$, we have 
\begin{equation*}
    W_2(\m_k, \pi)\le \underbrace{W_2(\m_k, \nu_{t})}_{\text{Discretization error}} +\underbrace{W_2(\nu_{t}, \pi)}_{\text{Exponential decay}}.
\end{equation*}
We start with the discretization error first by analyzing how reSGLD (\ref{reSGLD}) tracks the reLD (\ref{sde_2}) in 2-Wasserstein distance. The critical part is to study the discretization of the Poisson jump process in mini-batch settings. To handle this issue, we follow \cite{Paul12} and view the swaps of positions as swaps of temperatures. Then we apply standard techniques in stochastic calculus \cite{chen2018accelerating, yin_zhu_10, Sato2014ApproximationAO, Maxim17} to discretize the Langevin diffusion and derive the corresponding discretization error.

Next, we quantify the evolution of the 2-Wasserstein distance between $\n_t$ and $\pi$. The key tool is the exponential decay of entropy (Kullback-Leibler divergence) when $\pi$ satisfies the log-Sobolev inequality (LSI) \cite{Bakry2014}. To justify LSI, we first verify LSI for reSGLD without swaps, which is a direct result given a proper Lyapunov function criterion \cite{Cattiaux2010} and the Poincar\'{e} inequality \cite{chen2018accelerating}. Then we follow \cite{chen2018accelerating} and verify LSI for reLD with swaps by analyzing the Dirichlet form. Finally, the exponential decay of the 2-Wasserstein distance follows from the Otto-Villani theorem by connecting the 2-Wasserstein distance with the entropy \cite{Bakry2014}.

Before we move forward, we first lay out the following assumptions:
\begin{assumption}[Smoothness]\label{assump: lip and alpha beta}
The energy function $L(\cdot)$ is $C$-smoothness, which implies that there exists a Lipschitz constant $C>0$, such that for every $x,y\in\hR^d$, we have $\| \nabla L(x)- \nabla L(y)\|\le C\|x-y\|.$ \footnote{$\|\cdot\|$ denotes the Euclidean $L^2$ norm.}
\end{assumption}{}

\begin{assumption}[Dissipativity]\label{assump: dissipitive}
The energy function $L(\cdot)$ is $(A,B)$-dissipative, i.e. there exist constants $A>0$ and $B\ge 0$ such that $\forall x\in\mathbb R^d$,  $\la x, \nabla L(x)\ra \ge A\|x\|^2-B.$
\end{assumption}{}

Here the smoothness assumption is quite standard in studying the convergence of SGLD, and the dissipativity condition is widely used in proving the geometric ergodicity of dynamic systems \cite{Maxim17, Xu18}. Moreover, the convexity assumption is not required in our theory.

\section{Analysis of Discretization Error}

The key to deriving the discretization error is to view the swaps of positions as swaps of the temperatures, which has been proven equivalent in distribution \cite{Paul12}. Therefore, we model reLD using the following SDE, 
\begin{equation}\label{replica exchange}
    d\bbeta_t=-\nabla G(\bbeta_t)dt+\Si_td\bW_t,
\end{equation}{}
where  $G(\bbeta_t)=\begin{pmatrix}{}
L(\bbeta_t^{(1)})\\
L(\bbeta_t^{(1)})
\end{pmatrix}$, $\bW\in\mathbb{R}^{2d}$ is a Brownian motion, $\Si_t$ is a random matrix in continuous-time that swaps between the diagonal matrices $\mathbb{M}_1=\begin{pmatrix}{}
\sqrt{2\tau^{(1)}}\mathbf I_d&0\\
0&\sqrt{2\tau^{(2)}}\mathbf I_d
\end{pmatrix}$ and $\mathbb{M}_2=\begin{pmatrix}{}
\sqrt{2\tau^{(2)}}\mathbf I_d&0\\
0&\sqrt{2\tau^{(1)}}\mathbf I_d
\end{pmatrix}$ with probability $a S(\bbeta_t^{(1)}, \bbeta_t^{(2)})dt$, and $\mathbf I_d\in \mathbb R^{d\times d}$ is denoted as the identity matrix.

Moreover, the corresponding discretization of replica exchange SGLD (reSGLD) follows:
\begin{equation}
\begin{split}
\label{resgld_2}
    \widetilde \bbeta^{\eta}(k+1)=\widetilde \bbeta^{\eta}(k)-
    \eta\nabla \widetilde G(\widetilde \bbeta^{\eta}(k)) + \sqrt{\eta}\widetilde \Si^{\eta}(k)\bxi_k,
\end{split}
\end{equation}
where $\bxi_k$ is a standard Gaussian distribution in $\mathbb{R}^{2d}$, and $\widetilde \Si^{\eta}(k)$ is a random matrix in discrete-time that swaps between $\mathbb{M}_1$ and $\mathbb{M}_2$ with probability $a\widetilde S(\widetilde \bbeta^{\eta(1)}(k), \widetilde \bbeta^{\eta(2)}(k))\eta$. We denote $\{\widetilde \bbeta_t^{\eta}\}_{t\ge 0 }$ as the continuous-time interpolation of $\{\widetilde \bbeta^{\eta}(k)\}_{k\ge 1}$, which satisfies the following SDE, 
\begin{equation}\label{SGD continuous time interpolation}
	\widetilde \bbeta_t^{\eta}=\widetilde \bbeta_0-\int_0^t\nabla \widetilde G(\widetilde \bbeta^{\eta}_{\lfloor s/\eta \rfloor \eta})ds+\int_0^t\widetilde\Si^{\eta}_{\lfloor s/\eta \rfloor\eta}d\bW_s.
\end{equation}
Here the random matrix $\widetilde \Si_{\lfloor s/\eta\rfloor \eta}^{\eta}$ follows a similar trajectory as $\widetilde \Si^{\eta}(\lfloor s/\eta \rfloor)$. %with the stochastic swapping probability $\widetilde S(\widetilde \bbeta_{\lfloor s/\eta \rfloor\eta}^{\eta(1)}, \widetilde \bbeta_{\lfloor s/\eta \rfloor\eta}^{\eta(2)})$. 
For $k\in \mathbb N^{+}$ with $t=k\eta$, the relation $\widetilde \bbeta_t^{\eta}=\widetilde \bbeta_{k\eta}^{\eta}=\widetilde \bbeta^{\eta}(k)$ follows.

\begin{lemma}[Discretization error. Formal statement of Lemma \ref{discretization_main_reSGLD}]\label{discretization_appendix_reSGLD}
Given the smoothness and dissipativity assumptions \eqref{assump: lip and alpha beta} and \eqref{assump: dissipitive}, and the learning rate $\eta$ satisfying $0<\eta<1 \land A/C^2$, there exists constants $D_1, D_2$ and $D_3$ such that
\begin{equation}
	\begin{split}
&	\hE[\sup_{0\le t\le T}\|\bbeta_t-\widetilde \bbeta^{\eta}_t\|^2] \le D_1 \eta + D_2 \max_{k}\hE[\|\bphi_k\|^2]+D_3\max_{k}\sqrt{\hE\left[\mid\psi_{k}\mid^2\right]},\\
	\end{split}
\end{equation}
where $ D_1$ depends on $\tau^{(1)},\tau^{(2)},d, T, A,B, C$; $D_2$ depends on $T$ and $C$; $D_3$ depends on $a, d, T$ and $C$.
\end{lemma}{}

\begin{proof}
Based on the replica exchange Langevin diffusion $\{\bbeta_t\}_{t\ge 0}$ and the continuous-time interpolation of the stochastic gradient Langevin diffusion $\{\widetilde \bbeta_t^{\eta}\}_{t\ge 0}$, we have the following SDE for the difference $\bbeta_t-\widetilde \bbeta^{\eta}_t$. For any $t\in [0,T]$, we have
\begin{equation*}
	\begin{split}
		\bbeta_t-\widetilde \bbeta^{\eta}_t&=-\int_0^t(\nabla G(\bbeta_s)-\nabla\widetilde G(\widetilde \bbeta^{\eta}_{\lfloor s/\eta \rfloor\eta})ds+\int_0^t (\Si_s-\widetilde \Si^{\eta}_{\lfloor s/\eta \rfloor\eta})d\bW_s
	\end{split}
\end{equation*}
Indeed, note that
\begin{equation*}
	\begin{split}
		\sup_{0\le t\le T}\|\bbeta_t-\widetilde \bbeta^{\eta}_t\|&\le \int_0^T\|\nabla G(\bbeta_s)-\nabla\widetilde G(\widetilde \bbeta^{\eta}_{\lfloor s/\eta \rfloor\eta})\|)ds+\sup_{0\le t\le T}\left\|\int_0^t (\Si_s-\widetilde \Si^{\eta}_{\lfloor s/\eta \rfloor\eta})d\bW_s\right\|
	\end{split}
\end{equation*}
We first square both sides and take expectation, then apply the Burkholder-Davis-Gundy inequality and  Cauchy-Schwarz inequality, we have 
\begin{equation}\label{sup norm estimate 1}
	\begin{split}
	&\quad\hE[	\sup_{0\le t\le T}\|\bbeta_t-\widetilde \bbeta^{\eta}_t\|^2]\\
	&\le 2\hE\left[\left(\int_0^T\|\nabla G(\bbeta_s)-\nabla\widetilde G(\widetilde \bbeta^{\eta}_{\lfloor s/\eta \rfloor\eta})\|ds\right)^2+\sup_{0\le t\le T}\left\|\int_0^t (\Si_s-\widetilde \Si^{\eta}_{\lfloor s/\eta \rfloor\eta})d\bW_s\right\|^2\right]\\
	&\le \underbrace{2T\hE\left[\int_0^T\|\nabla G(\bbeta_s)-\nabla\widetilde G(\widetilde \bbeta^{\eta}_{\lfloor s/\eta \rfloor\eta})\|^2ds \right]}_{\cI}+\underbrace{8\hE\left[\int_0^T\|\Si_s-\widetilde \Si^{\eta}_{\lfloor s/\eta \rfloor\eta}\|^2 ds\right]}_{\cJ}
	\end{split}
\end{equation}

\noindent
$\textbf{Estimate of stochastic gradient:}$
For the first term $\cI$, by using the inequality
$$\|a+b+c\|^2\le 3(\|a\|^2+\|b\|^2+\|c\|^2),$$ 
we get  
\begin{equation}
	\begin{split}
		\cI=&2T\hE\Big[\int_0^T\Big\|\left(\nabla G(\bbeta_s)-\nabla G(\widetilde \bbeta^{\eta}_s)\right)+\left(\nabla G(\widetilde \bbeta^{\eta}_s)-\nabla  G(\widetilde \bbeta^{\eta}_{\lfloor s/\eta \rfloor\eta})\right)\\
		&\qquad+\left(\nabla  G(\widetilde \bbeta^{\eta}_{\lfloor s/\eta \rfloor\eta})-\nabla \widetilde G(\widetilde \bbeta^{\eta}_{\lfloor s/\eta \rfloor\eta})\right)\Big\|^2ds \Big]\\
		\le& \underbrace{6T\hE\left[\int_0^T\|\nabla G(\bbeta_s)-\nabla  G(\widetilde \bbeta^{\eta}_s)\|^2ds\right]}_{\cI_1}+\underbrace{6T\hE\left[\int_0^T\|\nabla  G(\widetilde \bbeta^{\eta}_s)-\nabla  G(\widetilde \bbeta^{\eta}_{\lfloor s/\eta \rfloor\eta})\|^2ds\right]}_{\cI_2}\\
		&+\underbrace{6T\hE\left[\int_0^T\|\nabla  G(\widetilde \bbeta^{\eta}_{\lfloor s/\eta \rfloor\eta})-\nabla \widetilde G(\widetilde \bbeta^{\eta}_{\lfloor s/\eta \rfloor\eta})\|^2ds\right]}_{\cI_3}\\
		\le& \cI_1+\cI_2+\cI_3.
	\end{split}
\end{equation}
By using the smoothness assumption \ref{assump: lip and alpha beta}, we first estimate 
\begin{equation*}
    \cI_1 \le 6TC^2\hE\left[\int_0^T\|\bbeta_s-\widetilde \bbeta^{\eta}_s\|^2ds\right].
\end{equation*}{}
By applying the smoothness assumption \ref{assump: lip and alpha beta} and discretization scheme, we can further estimate 
\begin{equation}\label{est cI2}
    \begin{split}
        \cI_2&\le 6TC^2\hE\left[\int_0^T\|\widetilde \bbeta^{\eta}_s-\widetilde \bbeta^{\eta}_{\lfloor s/\eta \rfloor\eta}\|^2ds\right]\\
        &\le 6TC^2\sum_{k=0}^{\lfloor T/\eta\rfloor} \hE\left[\int_{k\eta}^{(k+1)\eta}\|\widetilde \bbeta^{\eta}_s-\widetilde \bbeta^{\eta}_{\lfloor s/\eta \rfloor\eta} \|^2ds \right]\\
        &\le 6TC^2\sum_{k=0}^{\lfloor T/\eta\rfloor} \int_{k\eta}^{(k+1)\eta}\hE\left[\sup_{k\eta\le s<(k+1)\eta}\|\widetilde \bbeta^{\eta}_s-\widetilde \bbeta^{\eta}_{\lfloor s/\eta \rfloor\eta} \|^2\right]ds 
    \end{split}{}
\end{equation}{}
For $\forall~ k\in\mathbb N$ and $s\in [k\eta,(k+1)\eta)$, we have
\begin{equation*}
    \begin{split}
        \widetilde \bbeta^{\eta}_s-\widetilde \bbeta^{\eta}_{\lfloor s/\eta \rfloor\eta}=\widetilde \bbeta^{\eta}_s-\widetilde \bbeta^{\eta}_{k\eta}=-\nabla\widetilde G(\widetilde \bbeta^{\eta}_{k\eta})\cdot(s-k\eta)+\widetilde\Si^{\eta}_{k\eta}\int_{k\eta}^sd\bW_r
    \end{split}{}
\end{equation*}{}
which indeed implies
\begin{equation*}
    \begin{split}
     \sup_{k\eta\le s<(k+1)\eta}  \| \widetilde \bbeta^{\eta}_s-\widetilde \bbeta^{\eta}_{\lfloor s/\eta \rfloor\eta}\|\le \|\nabla\widetilde G(\widetilde \bbeta^{\eta}_{k\eta})\|(s-k\eta)+\sup_{k\eta\le s<(k+1)\eta} \|\widetilde\Si^{\eta}_{k\eta}\int_{k\eta}^sd\bW_r\|
    \end{split}{}
\end{equation*}{}
Similar to the estimate \eqref{sup norm estimate 1}, square both sides and take expectation, then apply the Burkholder-Davis-Gundy inequality, we have
\begin{equation*}
    \begin{split}
        \hE\left[\sup_{k\eta\le s<(k+1)\eta}  \| \widetilde \bbeta^{\eta}_s-\widetilde \bbeta^{\eta}_{\lfloor s/\eta \rfloor\eta}\|^2 \right]&\le 2\hE[ \|\nabla\widetilde G(\widetilde \bbeta^{\eta}_{k\eta})\|^2(s-k\eta)^2 ]+8\sum_{j=1}^{2d}\hE\left[\left(\widetilde \Si^{\eta}_{k\eta}(j)\la \int_{k\eta}^{\cdot}d\bW_r \ra_s^{1/2}\right)^2 \right]\\
         &\le  2(s-k\eta)^2\hE[ \|\nabla \widetilde G(\widetilde \bbeta^{\eta}_{k\eta})\|^2]+32d\tau^{(2)}(s-k\eta),
    \end{split}{}
\end{equation*}{}

% \begin{equation*}
%     \begin{split}
%     \Si 
%         8\sum_{j=1}^{2d}\hE\left[\left(\widetilde \Si^{\eta}_{k\eta}(j)\la \int_{k\eta}^{\cdot}d\bW_r \ra_s^{1/2}\right)^2 \right]\\
%          &\le  2(s-k\eta)^2\hE[ \|\nabla \widetilde G(\widetilde \bbeta^{\eta}_{k\eta})\|^2]+32d\tau^{(2)}(s-k\eta),
%     \end{split}{}
% \end{equation*}{}

% \begin{equation*}
%     \begin{split}
%     \la
%     \end{split}{}
% \end{equation*}{}

% \begin{equation*}
%     \begin{split}
%     \ra
%     \end{split}{}
% \end{equation*}{}
where the last inequality follows from the fact that $\widetilde \Si^{\eta}_{k\eta}$ is a diagonal matrix with diagonal elements $\sqrt{2\tau^{(1)}}$ or $\sqrt{2\tau^{(2)}}.$
For the first term in the above inequality, we further have
\begin{equation*}
    \begin{split}
  2(s-k\eta)^2 \hE[ \|\nabla \widetilde G(\widetilde \bbeta^{\eta}_{k\eta})\|^2]&=        2(s-k\eta)^2\hE[ \|(\nabla G(\widetilde \bbeta^{\eta}_{k\eta})+\bphi_k)\|^2 ]\\
  &\le 4\eta^2\hE[ \|\nabla  G(\widetilde \bbeta^{\eta}_{k\eta})-\nabla  G(\bbeta^*) \|^2+\|\bphi_k\|^2 ]\\
   &\le 8C^2\eta^2\hE[\|\widetilde \bbeta^{\eta}_{k\eta}\|^2+\|\bbeta^*\|^2]+4\eta^2\hE[\|\bphi_k\|^2 ],
    \end{split}{}
\end{equation*}
where the first inequality follows from the separation of the noise from the stochastic gradient and the choice of stationary point $\bbeta^*$ of $G(\cdot)$ with $\nabla G(\bbeta^*)=0$, and $\bphi_k$ is the stochastic noise in the gradient at step $k$. Thus, combining the above two parts and integrate $ \hE\left[\sup_{k\eta\le s<(k+1)\eta}  \| \widetilde \bbeta^{\eta}_s-\widetilde \bbeta^{\eta}_{k\eta}\|^2 \right]$ on the time interval $[k\eta,(k+1)\eta)$, we obtain the following bound 
\begin{equation}\label{est cI2 part two}
    \begin{split}
        &\quad\int_{k\eta}^{(k+1)\eta}  \hE\left[\sup_{k\eta\le s<(k+1)\eta}  \| \widetilde \bbeta^{\eta}_s-\widetilde \bbeta^{\eta}_{k\eta}\|^2 \right] ds\\
        &\le 8C^2\eta^3 \left(\sup_{k\ge 0}\hE[ \|\widetilde \bbeta_{k\eta}^{\eta}\|^2+ \|\bbeta^*\|^2] \right)+4\eta^3 \max_{k}\hE[\|\bphi_k\|^2]+32d\tau^{(2)}\eta^2
    \end{split}{}
\end{equation}{}
By plugging the estimate \eqref{est cI2 part two} into estimate \eqref{est cI2}, we obtain the following estimates when $\eta\le 1$,
\begin{equation}
    \begin{split}
        \cI_2&\le 6TC^2(1+T/\eta)\left[8C^2\eta^3 \left(\sup_{k\ge 0}\hE[ \|\widetilde \bbeta_{k\eta}^{\eta}\|^2+\|\bbeta^*\|^2] \right)+4\eta^3 \max_{k}\hE[\|\bphi_k\|^2]+32d\tau^{(2)}\eta^2\right]\\
        & \le \tilde \delta_1(d, \tau^{(2)}, T, C, A, B) \eta + 24TC^2(1+T)\max_{k}\hE[\|\bphi_k\|^2],
    \end{split}{}
\end{equation}{}
where $\tilde \delta_1(d, \tau^{(2)}, T, C, A, B)$ is a constant depending on $d, \tau^{(2)}, T, C, A$ and $B$. Note that the above inequality requires a result on the bounded second moment of   $\sup_{k\ge 0}\hE[ \|\widetilde \bbeta_{k\eta}^{\eta}\|^2]$, and this is majorly\footnote{The slight difference is that the constant in the RHS of (C.38) \cite{chen2018accelerating} is changed to account for the stochastic noise.} proved in Lemma $C.2$ in \cite{chen2018accelerating} when we choose the stepzise $\eta\in (0, A/C^2)$. We are now left to estimate the term $\cI_3$ and we have
\begin{equation}
    \begin{split}
   \cI_3  &\le  6T\sum_{k=0}^{\lfloor T/\eta \rfloor} \hE\left[\int_{k\eta}^{(k+1)\eta}\|\nabla  G(\widetilde \bbeta^{\eta}_{k\eta})-\nabla \widetilde G(\widetilde \bbeta^{\eta}_{k\eta})\|^2ds \right]\\
   &\le 6T (1+T/\eta)\max_{k}\hE[ \|\bphi_k\|^2]\eta\\
   &\le 6T (1+T)\max_{k}\hE[ \|\bphi_k\|^2].
    \end{split}{}
\end{equation}{}
Combing all the estimates of $\cI_1,\cI_2$ and $\cI_3$, we obtain
\begin{equation}
    \begin{split}
        \cI\le& \underbrace{6TC^2\int_0^T\hE\left[\sup_{0\le s\le T}\|\bbeta_s-\widetilde \bbeta^{\eta}_s\|^2\right]ds}_{\cI_1}+\underbrace{\tilde \delta_1(d, \tau^{(2)}, T, C, A, B) \eta+ 24TC^2(1+T)\max_{k}\hE[\|\bphi_k\|^2]}_{\cI_2}\\
        &+\underbrace{6T (1+T)\max_{k}\hE[ \|\bphi_k\|^2]}_{\cI_3}.\\
    \end{split}{}
\end{equation}{}
\noindent
$\textbf{Estimate of stochastic diffusion:}$
For the second term $\cJ$, we have 
\begin{equation}
% \footnotesize
	\begin{split}
		\cJ&=8\hE\left[\int_{0}^{T} \|\Sigma_s(j)-\widetilde \Sigma_{{\lfloor s/\eta \rfloor}\eta}(j)\|^2 ds\right] \\
		& \leq 8\sum_{j=1}^{2d}\sum_{k=0}^{{\lfloor T/\eta \rfloor}}\int_{k \eta}^{(k+1)\eta}\hE\left[\|\Sigma_s(j)-\widetilde \Sigma^{\eta}_{k \eta}(j)\|^2\right]ds \\
		& \leq 8\sum_{j=1}^{2d}\sum_{k=0}^{{\lfloor T/\eta \rfloor}}\int_{k \eta}^{(k+1)\eta}\hE\left[\|\Sigma_s(j)- \Sigma_{k \eta}^{\eta}(j)+\Sigma_{k \eta}^{\eta}(j)-\widetilde \Sigma_{k \eta}^{\eta}(j)\|^2\right]ds \\
		& \leq 16\sum_{j=1}^{2d}\sum_{k=0}^{{\lfloor T/\eta \rfloor}}\Bigg[\underbrace{\int_{k \eta}^{(k+1)\eta}\hE\left[\|\Sigma_s(j)- \Sigma_{k \eta}^{\eta}(j)\|^2\right]ds}_{\cJ_1}\\
		&\qquad\qquad\qquad\qquad+\underbrace{\int_{k \eta}^{(k+1)\eta}\hE\left[\|\Sigma_{k \eta}^{\eta}(j)-\widetilde \Sigma_{k \eta}^{\eta}(j)\|^2\right]ds}_{\cJ_2}\Bigg].
	\end{split}
\end{equation}
where $\Si_{k \eta}^{\eta}$ is the temperature matrix for the continuous-time interpolation of $\{\bbeta^{\eta}(k)\}_{k\ge 1}$, which is similar to \eqref{SGD continuous time interpolation} without noise generated from mini-batch settings and is defined as below
\begin{equation}
	 \bbeta_t^{\eta}= \bbeta_0-\int_0^t\nabla  G( \bbeta^{\eta}_{k\eta})ds+\int_0^t\Si^{\eta}_{k \eta}d\bW_s.
\end{equation}

We estimate $\cJ_1$ first, considering that $\Si_s$ and $\Si^{\eta}_{\lfloor s/\eta\rfloor \eta}$ are both diagonal matrices, we have
\begin{equation*}
    \begin{split}
        \cJ_1&=4(\sqrt{\tau^{(2)}}-\sqrt{\tau^{(1)}})^2\int_{k \eta}^{(k+1)\eta}\hP(\Si_s(j)\neq \Si^{\eta}_{k \eta}(j))ds\\
        &=4(\sqrt{\tau^{(2)}}-\sqrt{\tau^{(1)}})^2\hE\left[\int_{k \eta}^{(k+1)\eta}\hP(\Si_s(j)\neq \Si^{\eta}_{k \eta}(j)\mid\bbeta^{\eta}_{k \eta})ds\right]\\
        &= 4(\sqrt{\tau^{(2)}}-\sqrt{\tau^{(1)}})^2 a\int_{k \eta}^{(k+1)\eta} [(s-k \eta)+\mathcal R(s-k \eta)]ds\\
        &\le \tilde \delta_2(a, \tau^{(1)},\tau^{(2)})\eta^2,
    \end{split}{}
\end{equation*}{}
where $\tilde \delta_2(a, \tau^{(1)},\tau^{(2)})=4(\sqrt{\tau^{(2)}}-\sqrt{\tau^{(1)}})^2 a$, and the equality follows from the fact that the conditional probability $\hP(\Si_s(j)\neq \Si^{\eta}_{k \eta}(j)\mid\bbeta^{\eta}_{k \eta})=a S(\bbeta^{\eta(1)}_{k \eta},\bbeta^{\eta(2)}_{k \eta})\cdot(s-\eta)+a\mathcal R(s-k \eta)$. Here $\mathcal R(s-k \eta)$ denotes the higher remainder with respect to $s-k \eta$. The estimate of $\cJ_1$ without stochastic gradient for the Langevin diffusion is first obtained in \cite{chen2018accelerating}, we however present here again for reader's convenience.  
As for the second term $\cJ_2$, it follows that
\begin{equation}
	\begin{split}
	\label{new_prob}
	    \cJ_2&=4(\sqrt{\tau^{(2)}}-\sqrt{\tau^{(1)}})^2\int_{k \eta}^{(k+1)\eta}\hP(\Sigma_{k \eta}(j)\neq \widetilde \Sigma_{k \eta}(j))ds \\
	    &= 4(\sqrt{\tau^{(2)}}-\sqrt{\tau^{(1)}})^2 a\eta \hE\left[\mid S(\bbeta_{k \eta}^{\eta(1)}, \bbeta_{k \eta}^{\eta(2)})-\tilde S(\widetilde \bbeta_{k \eta}^{\eta(1)}, \widetilde \bbeta_{k \eta}^{\eta(2)})\mid\right] \\
	    &\le \tilde \delta_2(a, \tau^{(1)},\tau^{(2)}) \eta \sqrt{\hE\left[\mid S(\bbeta_{k \eta}^{\eta(1)}, \bbeta_{k \eta}^{\eta(2)})-\tilde S(\widetilde \bbeta_{k \eta}^{\eta(1)}, \widetilde \bbeta_{k \eta}^{\eta(2)})\mid^2\right]}\\
	    & \leq \tilde \delta_2(a, \tau^{(1)},\tau^{(2)}) \eta \sqrt{\hE\left[\mid\psi_{k}\mid^2\right]},
	\end{split}
\end{equation}
%\textbf{}
where $\psi_{k}$ is the noise in the swapping rate. Thus, one concludes the following estimates combing $\cI$ and $\cJ$.
\begin{equation}
	\begin{split}
	&\quad\hE[\sup_{0\le t\le T}\|\bbeta_t-\widetilde \bbeta^{\eta}_t\mid^2] \\
	&\le \underbrace{6TC^2\int_0^T\hE\left[\sup_{0\le s\le T}\|\bbeta_s-\widetilde \bbeta^{\eta}_s\|^2\right]ds}_{\cI_1}\\
	&\quad+\underbrace{\tilde \delta_1(d, \tau^{(2)}, T, C, A, B) \eta+ 24TC^2(\eta+T)\max_{k}\hE[\|\bphi_k\|^2]}_{\cI_2}\\
	&\ \ \ \ \ +\underbrace{6T (1+T) \hE[\|\bphi_{k}\|^2]}_{\cI_3}+\underbrace{32d(1+T)\tilde \delta_2(a, \tau^{(1)},\tau^{(2)})\left(\eta+\max_{k}\sqrt{\hE\left[\mid\psi_{k}\mid^2\right]}\right)}_{\cJ}. 
	\end{split}
\end{equation}
Apply Gronwall's inequality to the function
\begin{equation*}
    t\mapsto \hE\left[\sup_{0\le u\le t} \|\bbeta_u-\widetilde \bbeta_u^{\eta}\|^2 \right],
\end{equation*}{}
and deduce that 
\begin{equation}
	\begin{split}
&	\hE[\sup_{0\le t\le T}\|\bbeta_t-\widetilde \bbeta^{\eta}_t\mid^2] \le D_1 \eta + D_2 \max_{k}\hE[\|\bphi_k\|^2]+D_3\max_{k}\sqrt{\hE\left[\mid\psi_{k}\mid^2\right]},\\
	\end{split}
\end{equation}
where $ D_1$ is a constant depending on $\tau^{(1)},\tau^{(2)},d, T, C,A, B$; $D_2$ depends on $T$ and $C$; $D_3$ depends on $a, d, T$ and $C$. \qed

\end{proof}

\section{Exponential Decay of Wasserstein Distance in Continuous-Time}

We proceed to quantify the evolution of the 2-Wasserstein distance between $\n_{t}$ and $\pi$. We first consider the ordinary Langevin diffusion without swaps and derive the log-Sobolev inequality (LSI). Then we extend LSI to reLD and obtain the exponential decay of the relative entropy. Finally, we derive the exponential decay of the 2-Wasserstein distance.

In order to distinguish from the replica exchange Langevin diffusion $\bbeta_t$ defined in \eqref{replica exchange}, we call it $\hat \bbeta_t$ which follows, 
\begin{equation}
    d\hat \bbeta_t=-\nabla G(\hat \bbeta_t)dt+\Si_td\bW_t.
\end{equation}{}
where $\Si_t\in \hR^{2d\times 2d}$ is a diagonal matrix with the form $\begin{pmatrix}{}
\sqrt{2\tau^{(1)}}\mathbf I_d&0\\
0&\sqrt{2\tau^{(2)}}\mathbf I_d
\end{pmatrix}$. 
The process $\hat \bbeta_t$ is a Markov diffusion process with infinitesimal generator $\cL$ in the following form, for $\bbeta^{(1)}\in\hR^d$ and $\bbeta^{(2)}\in\hR^d$,
\beaa
\cL=&-\la\nabla_{\bbeta^{(1)}}f(\bbeta^{(1)},\bbeta^{(2)}), \nabla L(\bbeta^{(1)})\ra+\tau^{(1)}\Delta_{\bbeta^{(1)}}f(\bbeta^{(1)},\bbeta^{(2)})\\
&-\la \nabla_{\bbeta^{(2)}}f(\bbeta^{(1)},\bbeta^{(2)}), \nabla L(\bbeta^{(2)})\ra+\tau^{(2)}\Delta_{\bbeta^{(2)}}f(\bbeta^{(1)},\bbeta^{(2)})
\eeaa
Note that since matrix $\Si_t$ is a non-degenerate diagonal matrix, operator $\cL$ is an elliptic diffusion operator. According to the smoothness assumption \eqref{assump: lip and alpha beta}, we have that $\nabla^2 G\ge -C\mathbf I_{2d}$, where $C>0$, the unique invariant measure $\pi$ associate with the underlying diffusion process satisfies the Poincare inequality and LSI with the Dirichlet form given as follows,
\bea\label{dirichlet form}
\cE(f)=\int \Big(\tau^{(1)}\|\nabla_{\bbeta^{(1)}}f\|^2+\tau^{(2)}\|\nabla_{\bbeta^{(2)}}f\|^2 \Big)d\pi(\bbeta^{(1)},\bbeta^{(2)}),\qq f\in\cC_0^2(\hR^{2d}).
\eea
In this elliptic case with $G$ being convex, the proof for LSI follows from standard Bakry-Emery calculus \cite{Bakry85}. Since, we are dealing with the non-convex function $G$, we are particularly interested in the case of $\nabla^2 G\ge -C\mathbf I_{2d}$.
To obtain a Poincar\'{e} inequality for invariant measure $\pi$, \cite{chen2018accelerating} adapted an argument from \cite{Bakry08} and \cite{Maxim17} by constructing an appropriate Lyapunov function for the replica exchange diffusion without swapping $\hat \bbeta_t$. Denote $\n_t$ as the distribution associated with the diffusion process $\{\hat \bbeta_t\}_{t\ge 0}$, which is absolutely continuous with respect to $\pi$. It is a direct consequence of the aforementioned results that the following log-Sobolev inequality holds.

\begin{lemma}[LSI for Langevin Diffusion]\label{LSI no swaping}
Under assumptions \eqref{assump: lip and alpha beta} and \eqref{assump: dissipitive}, we have the following log-Sobolev inequality for invariant measure $\pi$, for some constant $c_{\text{LS}}>0$, 
\beaa
D(\n_t\|\pi)\le 2c_{\text{LS}}\cE(\sqrt{\frac{d\n_t}{d\pi}}).
\eeaa
where $D(\n_t\|\pi)=\int d\nu_t \log\frac{d\nu_t}{d\pi}$ denotes the relative entropy and the Dirichlet form $\cE(\cd)$ is defined in \eqref{dirichlet form}.
\end{lemma}{}
\begin{proof}

According to \cite{Cattiaux2010}, the sufficient conditions to establish LSI are:
\begin{enumerate}
\item There exists some constant $C\ge 0$, such that $\nabla^2 G\succcurlyeq -C I_{2d}$.
\item $\pi$ satisfies a Poincar\'{e} inequality with constant $c_{p}$, namely, for all probability measures $\nu\ll\pi$, $\chi^2(\nu\|\pi)\leq c_p \cE(\sqrt{\frac{d\n_t}{d\pi}})$, where $\chi^2(\nu\|\pi):=\|\frac{d\nu}{d\pi}-1\|^2$ is the $\chi^2$ divergence between $\nu$ and $\pi$.
\item There exists a $\cC^2$ Lyapunov function $V: \mathbb{R}^{2d}\rightarrow [1, \infty)$ such that $\frac{\cL V(\bbeta^{(1)}, \bbeta^{(2)})}{V(\bbeta^{(1)}, \bbeta^{(2)})} \leq \kappa - \gamma (\|\bbeta^{(1)}\|^2 + \|\bbeta^{(2)}\|^2)$
for all $(\bbeta^{(1)}, \bbeta^{(2)})\in \mathbb{R}^{2d}$ and some $\kappa, \gamma>0$.
\end{enumerate}
Note that the first condition on the Hessian is obtained from the smoothness assumption \eqref{assump: lip and alpha beta}. Moreover, the Poincar\'{e} inequality in the second condition is derived from Lemma C.1 in \cite{chen2018accelerating} given assumptions \eqref{assump: lip and alpha beta} and \eqref{assump: dissipitive}. Finally, to verify the third condition, we follow \cite{Maxim17} and construct the Lyapunov function 
$V(\bbeta^{(1)},\bbeta^{(2)}):=\exp\left\{A/4 \cdot \left(\frac{\|\bbeta^{(1)}\|^2}{\tau^{(1)}}+\frac{\|\bbeta^{(2)}\|^2}{\tau^{(2)}}\right)\right\}$. From the dissipitive assumption \ref{assump: dissipitive}, $V(\bbeta^{(1)}, \bbeta^{(2)})$ satisfies the third condition because
\begin{equation}
\begin{split}
    &\quad\cL(V(\bbeta^{(1)}, \bbeta^{(2)}))\\
    &=\Bigg(\frac{A}{2\tau^{(1)}}+\frac{A}{2\tau^{(2)}}+\frac{A^2}{4{\tau^{(1)}}^2}\|\bbeta^{(1)}\|^2+\frac{A^2}{4{\tau^{(2)}}^2}\|\bbeta^{(2)}\|^2\\
    &\qquad\qquad\qquad\qquad-\frac{A}{2{\tau^{(1)}}^2}\langle \bbeta^{(1)}, \nabla G(\bbeta^{(1)})-\frac{A}{2{\tau^{(2)}}^2}\langle \bbeta^{(1)}, \nabla G(\bbeta^{(2)})\rangle\Bigg) V(\bbeta^{(1)}, \bbeta^{(2)})\\
    &\leq \left(\frac{A}{2\tau^{(1)}}+\frac{A}{2\tau^{(2)}}+\frac{AB}{2{\tau^{(1)}}^2}+\frac{AB}{2{\tau^{(2)}}^2}-\frac{A^2}{4{\tau^{(1)}}^2}\|\bbeta^{(1)}\|^2-\frac{A^2}{4{\tau^{(2)}}^2}\|\bbeta^{(2)}\|^2\right) V(\bbeta^{(1)}, \bbeta^{(2)})\\
    &\leq \left(\kappa-\gamma (\|\bbeta^{(1)}\|^2+\|\bbeta^{(2)}\|^2)\right) V(\bbeta^{(1)}, \bbeta^{(2)}),\\
\end{split}
\end{equation}
where $\kappa=\frac{A}{2\tau^{(1)}}+\frac{A}{2\tau^{(2)}}+\frac{AB}{2{\tau^{(1)}}^2}+\frac{AB}{2{\tau^{(2)}}^2}$, and $\gamma=\frac{A^2}{4{\tau^{(1)}}^2}\land \frac{A^2}{4{\tau^{(2)}}^2}$. Therefore, the invariant measure $\pi$ satisfies a LSI with the constant
\begin{equation}
    c_{\text{LS}}=c_1+(c_2+2)c_p,
\end{equation}
where $c_1=\frac{2C}{\gamma}+\frac{2}{C}$ and $c_2=\frac{2C}{\gamma}\left(\kappa+\gamma\int_{\mathbb{R}^{2d}}( \|\bbeta^{(1)}\|^2 + \|\bbeta^{(2)}\|^2)\pi(d\bbeta^{(1)} d\bbeta^{(2)})\right)$. \qed

\end{proof}{}

We are now ready to prove the log-Sobolev inequality for invariant measure associated with the replica exchange Langevin diffusion \eqref{replica exchange}. We use a similar idea from \cite{chen2018accelerating} where they prove the Poincar\'{e} inequality for the invariant measure associated with the replica exchange Langevin diffusion \eqref{replica exchange} by analyzing the corresponding Dirichlet form. In particular, a larger Dirichlet form ensures a smaller log-Sobolev constant and hence results in a faster convergence in the relative entropy and Wasserstein distance.

\begin{lemma}[Accelerated exponential decay of $W_2$.  Formal statement of Lemma \ref{exponential decay_main}]\label{exponential decay_appendix}
Under assumptions \eqref{assump: lip and alpha beta} and \eqref{assump: dissipitive}, we have that the replica exchange Langevin diffusion converges exponentially fast to the invariant distribution $\pi$:
\begin{equation}
    W_2(\nu_t,\pi) \leq  D_0 e^{-k\eta(1+\delta_S)/c_{\text{LS}}},
\end{equation}
where $D_0=\sqrt{2c_{\text{LS}}D(\nu_0\|\pi)}$, $\delta_{S}:=\inf_{t>0}\frac{\cE_S(\sqrt{\frac{d\n_t}{d\pi}})}{\cE(\sqrt{\frac{d\n_t}{d\pi}})}-1$ is a non-negative constant depending on the swapping rate $S(\cd, \cd)$ and obtains $0$ only if $S(\cd, \cd)=0$.
\end{lemma}{}

\begin{proof} Given a smooth function $f:\mathbb{R}^d\times \mathbb{R}^d\rightarrow \mathbb{R}$, the infinitesimal generator $\cL_{ S}$ associated with the replica exchange Langevin diffusion with the (truncated) swapping rate $S$ follows
\begin{equation}
\label{generator_L_icml}
\begin{split}
    \cL_{S}f(\bbeta^{(1)}, \bbeta^{(2)})=&-\langle\nabla_{\bbeta^{(1)}}f(\bbeta^{(1)},\bbeta^{(2)}),\nabla L(\bbeta^{(1)})\rangle-\langle \nabla_{\bbeta^{(2)}}f(\bbeta^{(1)},\bbeta^{(2)}),\nabla L(\bbeta^{(2)})\rangle\\
    &
+\tau^{(1)}\Delta_{\bbeta^{(1)}}f(\bbeta^{(1)},\bbeta^{(2)})+\tau^{(2)}\Delta_{\bbeta^{(2)}}f(\bbeta^{(1)},\bbeta^{(2)})\\
&+ aS(\bbeta^{(1)},\bbeta^{(2)})\cd (f(\bbeta^{(2)},\bbeta^{(1)})-f(\bbeta^{(1)},\bbeta^{(2)})),
\end{split}
\end{equation}
where $\nabla_{\bbeta^{(h)}}$ and $\Delta_{\bbeta^{(h)}}$ are the gradient and the Laplace operators with respect to $\bbeta^{(h)}$, respectively. Next, we model the exponential decay of $W_2(\nu_t,\pi)$ using the Dirichlet form
\begin{equation}
\label{dirichlet}
    \cE_{S}(f)=\int \Gamma_{S}(f)d\pi,
\end{equation}
where $\Gamma_{S}(f)=\frac{1}{2}\cdot \cL_{S}(f^2) -f \cL_{S}(f)$ is the Carr\'{e} du Champ operator. In particular for the first term $\frac{1}{2}\cL_{S}(f^2)$, we have
\begin{equation}
\label{half_carre_icml}
\begin{split}
    &\ \ \ \frac{1}{2}\cL_{S}(f(\bbeta^{(1)}, \bbeta^{(2)})^2)\\
    =&-\langle f(\bbeta^{(1)}, \bbeta^{(2)})\nabla_{\bbeta^{(1)}} f(\bbeta^{(1)}, \bbeta^{(2)}) , \nabla_{\bbeta^{(1)}} L(\bbeta^{(1)})\rangle+\tau^{(1)} \|  \nabla_{\bbeta^{(1)}}f(\bbeta^{(1)}, \bbeta^{(2)})\|  ^2 \\
    & \ \ \ \ \ \ \ \ \ \ \ \ \ \ \ \ \ \ \ \ \ \ \ \ \ \ \ \ \ \ \ \ \ \ \ \ \ \ \ \ \ \ \ \ \ \ \ \ \ \ \ \ \ \ \ \  + \tau^{(1)} f(\bbeta^{(1)}, \bbeta^{(2)}) \Delta_{\bbeta^{(1)}}f(\bbeta^{(1)}, \bbeta^{(2)})\\
        &-\langle f(\bbeta^{(1)}, \bbeta^{(2)})\nabla_{\bbeta^{(2)}} f(\bbeta^{(1)}, \bbeta^{(2)}) , \nabla_{\bbeta^{(2)}} L(\bbeta^{(2)})\rangle+\tau^{(2)} \|  \nabla_{\bbeta^{(2)}}f(\bbeta^{(1)}, \bbeta^{(2)})\|  ^2 \\
        & \ \ \ \ \ \ \ \ \ \ \ \ \ \ \ \ \ \ \ \ \ \ \ \ \ \ \ \ \ \ \ \ \ \ \ \ \ \ \ \ \ \ \ \ \ \ \ \ \ \ \ \ \ \ \ \ + \tau^{(2)} f(\bbeta^{(1)}, \bbeta^{(2)}) \Delta_{\bbeta^{(2)}}f(\bbeta^{(1)}, \bbeta^{(2)})\\
        &+\frac{a}{2}S(\bbeta^{(1)}, \bbeta^{(2)}) (f^2(\bbeta^{(2)},\bbeta^{(1)})-f^2(\bbeta^{(1)},\bbeta^{(2)})).
\end{split}
\end{equation}

Combining the definition of the Carr\'{e} du Champ operator, (\ref{generator_L_icml}) and (\ref{half_carre_icml}), we have
\begin{equation}
\label{carre_du_icml}
% \small
    \begin{split}
        &\Gamma_{S}(f(\bbeta^{(1)}, \bbeta^{(2)}))\\
        =&\frac{1}{2}\mathcal{L}_{S}(f^2(\bbeta^{(1)}, \bbeta^{(2)}))-f(\bbeta^{(1)}, \bbeta^{(2)})\mathcal{L}_{S}(f(\bbeta^{(1)}, \bbeta^{(2)}))\\
        =&\tau^{(1)} \|  \nabla_{\bbeta^{(1)}}f(\bbeta^{(1)}, \bbeta^{(2)})\|  ^2+\tau^{(2)} \|  \nabla_{\bbeta^{(2)}}f(\bbeta^{(1)}, \bbeta^{(2)})\|  ^2\\
        &\ \ \ \ \ \ +\frac{a}{2}{S}(\bbeta^{(1)}, \bbeta^{(2)}) (f(\bbeta^{(2)},\bbeta^{(1)})-f(\bbeta^{(1)},\bbeta^{(2)}))^2.
    \end{split}
\end{equation}

By comparing Eq.\eqref{carre_du_icml} with Eq.\eqref{dirichlet form}, the Dirichlet form associated with operator $\cL_{S}$ under the invariant measure $\pi$ has the form
\bea\label{dirichlet swap_icml}
\cE_{S}(f)=\cE(f)+\underbrace{\frac{a}{2}\int S(\bbeta^{(1)},\bbeta^{(2)})\cd (f(\bbeta^{(2)},\bbeta^{(1)})-f(\bbeta^{(1)},\bbeta^{(2)}))^2d\pi(\bbeta^{(1)},\bbeta^{(2)})}_{\text{acceleration}},
\eea
where $f\in\cC_0^2(\hR^{2d})$ corresponds to $\frac{d\nu_t}{d\pi(\bbeta^{(1)}, \bbeta^{(2)})}$, and the asymmetry of $\frac{\nu_t}{\pi(\bbeta^{(1)}, \bbeta^{(2)})}$ is critical in the acceleration effect \cite{chen2018accelerating}. Given two different temperatures $\tau^{(1)}$ and $\tau^{(2)}$, a non-trivial distribution $\pi$ and function $f$, the swapping rate $S(\bbeta^{(1)},\bbeta^{(2)})$ is positive for almost any $\bbeta^{(1)}, \bbeta^{(2)}\in\mathbb{R}^d$. As a result, the Dirichlet form associated with $\cL_{S}$ is strictly larger than $\cL$. Therefore, there exists a constant $\delta_{S}> 0$ depending on $S(\bbeta^{(1)},\bbeta^{(2)})$, such that $\delta_{S}=\inf_{t>0}\frac{\cE_S(\sqrt{\frac{d\n_t}{d\pi}})}{\cE(\sqrt{\frac{d\n_t}{d\pi}})}-1$. From Lemma \ref{LSI no swaping}, we have
\begin{equation}
    D(\n_t\|\pi)\le 2c_{\text{LS}}\cE(\sqrt{\frac{d\n_t}{d\pi}})\le 2c_{\text{LS}} \sup_t\frac{ \cE(\sqrt{\frac{d\n_t}{d\pi}})}{\cE_{S}(\sqrt{\frac{d\n_t}{d\pi}})}\cE_{S}(\sqrt{\frac{d\n_t}{d\pi}})= 2 \frac{c_{\text{LS}}}{1+\delta_{S}}\cE_{S}(\sqrt{\frac{d\n_t}{d\pi}}).
\end{equation}
Thus, we obtain the following log-Sobolev inequality for the unique invariant measure $\pi$  associated with replica exchange Langevin diffusion $\{\bbeta_t\}_{t\ge 0}$ and its corresponding Dirichlet form $\cE_{S}(\cd)$. In particular, the LSI constant $ \frac{c_{\text{LS}}}{1+\delta_{S}}$ in replica exchange Langevin diffusion with swapping rate $S(\cd, \cd)>0$ is strictly smaller than the LSI constant $c_{\text{LS}}$ in the replica exchange Langevin diffusion with swapping rate $S(\cd, \cd)=0$. By the exponential decay in entropy \cite{Bakry2014}[Theorem 5.2.1] and the tight log-Sobolev inequality in Lemma \ref{LSI no swaping}, we get that, for any $t\in[k\eta,(k+1)\eta)$, 
\begin{equation}
    D(\nu_t\|\pi)\leq D(\nu_0\|\pi) e^{-2t(1+\delta_S)/c_{\text{LS}}}\leq D(\m_0\|\pi) e^{-2k\eta(1+\delta_S)/c_{\text{LS}}}.
\end{equation}
Finally, we can estimate the term $W_2(\nu_t,\pi)$ by the
Otto-Villani theorem \cite{Bakry2014}[Theorem 9.6.1],
\begin{equation}
    W_2(\nu_t,\pi) \leq \sqrt{2 c_{\text{LS}} D(\nu_t\|\pi)}\leq \sqrt{2c_{\text{LS}}D(\m_0\|\pi)} e^{-k\eta(1+\delta_S)/c_{\text{LS}}}.
\end{equation}
% \qed
\end{proof}

For more quantitative study on how large $\delta_{S}$ is on related problems, we refer interested readers to the study of spectral gaps in \cite{Holden18, jingdong, Futoshi2020}.

\section{Summary: Convergence of reSGLD}

Now that we have all the necessary ingredients in place, we are ready to derive the convergence of the distribution $\m_{k}$ to the invariant measure $\pi$ in terms of 2-Wasserstein distance,

\begin{theorem}[Convergence of reSGLD. Formal statement of Theorem \ref{convergence_reSGLD_main}]
\label{convergence_reSGLD_appendix}
Let the assumptions \eqref{assump: lip and alpha beta} and \eqref{assump: dissipitive} hold. For the unique invariant measure $\pi$ associated with the Markov diffusion process \eqref{replica exchange} and the distribution $\{\m_{k}\}_{k\ge 0}$ associated with the discrete dynamics $\{\widetilde \bbeta^{\eta}(k)\}_{k\ge 1}$, we have the following estimates, for $0\le k\in \mathbb N^{+}$ and the learning rate $\eta$ satisfying $0<\eta<1 \land a/C^2$, 
\begin{equation}
    W_2(\m_{k}, \pi) \le  D_0 e^{-k\eta(1+\delta_S)/c_{\text{LS}}}+\sqrt{\delta_1 \eta +  \delta_2 \max_{k}\hE[\|\bphi_k\|^2]+ \delta_3\max_{k}\sqrt{\hE\left[\mid\psi_{k}\mid^2\right]}}
\end{equation}
where $D_0=\sqrt{2c_{\text{LS}}D(\m_0\mid\pi)}$, $\delta_{S}:=\min_{k}\frac{\cE_S(\sqrt{\frac{d\m_k}{d\pi}})}{\cE(\sqrt{\frac{d\m_k}{d\pi}})}-1$ is a non-negative constant depending on the swapping rate $S(\cd, \cd)$ and obtains the minimum zero only if $S(\cd, \cd)=0$.
\end{theorem}{}

\begin{proof}
We reduce the estimates into the following two terms by using the triangle inequality,
\begin{equation}\label{w2 triangle}
    W_2(\m_{k}, \pi) \leq W_2(\m_{k}, \n_t) + W_2(\nu_t,\pi),\qq t\in[k\eta,(k+1)\eta).
\end{equation}
The first term $W_2(\m_{k}, \n_t)$ follows from the analysis of discretization error in Lemma.\ref{discretization_appendix_reSGLD}.
Recall the very definition of the $W_2(\cd,\cd)$ distance defined in (\ref{w2}). Thus, in order to control the distance $W_2(\m_{k},\nu_t)$, $t\in[k\eta,(k+1)\eta)$, we need to consider the diffusion process whose law give $\m_{k}$ and $\n_t$, respectively. Indeed, it is obvious that $\n_t=\cL(\bbeta_t)$ for $t\in[k\eta,(k+1)\eta)$. For the other measure $\m_k$, it follows that $\m_{k}=\tilde \n_{k\eta}$ for $t=k\eta$, where $\tilde\n_{k\eta}=\cL(\widetilde \bbeta_t^{\eta})$ is the probability measure associated with the continuous interpolation of reSGLD (\ref{resgld_2}). By Lemma.\ref{discretization_appendix_reSGLD}, we have that for $k\in\mathbb{N}$ and $ t\in [k\eta,(k+1)\eta)$, 
\begin{equation}
\begin{split}
    &\quad W_2(\m_{k}, \n_t)=W_2(\tilde\n_{k\eta}, \n_t) \\
    &\leq \sqrt{\hE[\sup_{0\le s\le t}\|\bbeta_s- \widetilde \bbeta_s^{\eta}\|^2]}\leq \sqrt{\delta_1 \eta + \delta_2 \max_{k}\hE[\|\bphi_k\|^2]+\delta_3\max_{k}\sqrt{\hE\left[\mid\psi_{k}\mid^2\right]}},
\end{split}
\end{equation}

Recall from the accelerated exponential decay of replica exchange Langevin diffusion in Lemma.\ref{exponential decay_appendix}, we have
\begin{equation}
    W_2(\nu_t,\pi)\leq \sqrt{2c_{\text{LS}}D(\nu_0\|\pi)} e^{-k\eta(1+\delta_S)/c_{\text{LS}}}= \sqrt{2c_{\text{LS}}D(\m_0\|\pi)} e^{-k\eta(1+\delta_S)/c_{\text{LS}}}.
\end{equation}

Combing the above two estimates completes the proof.
\qed
\end{proof}{}

\chapter{TECHNICAL PROOFS FOR CHAPTER \ref{vr_resgld_iclr}}

\section{Preliminaries}
\label{prelim}
\textbf{Notation} We denote the deterministic energy based on the parameter $\bbeta$ by $L(\bbeta)=\sum_{i=1}^N L(\bx_i\mid\bbeta)$ using the full dataset of size $N$. We denote the unbiased stochastic energy estimator by $\frac{N}{n}\sum_{i\in B} L(\bx_i\mid  \bbeta)$ using the mini-batch of data $B$ of size $n$. The same style of notations is also applicable to the gradient for consistency. We denote the Euclidean $L^2$ norm by $\|  \cdot\|  $. To prove the desired results, we need the following assumptions:
\begin{assumption}[Smoothness]\label{assump: lip and alpha beta_ICLR21}
The energy function $L(\bx_i\mid  \cdot)$ is $C_N$-smoothness if there exists a constant $C_N>0$ such that $\forall \bbeta_1,\bbeta_2\in\hR^d$, $i\in\{1,2,\cdots, N\}$, we have
\begin{equation}
\label{1st_smooth_condition}
    \|  \nabla L(\bx_i\mid  \bbeta_1)-\nabla L(\bx_i\mid  \bbeta_2)\|  \le C_N\|  \bbeta_1-\bbeta_2\|  .
\end{equation}
Note that the above condition further implies 
for a constant $C=NC_N$ and $\forall \bbeta_1,\bbeta_2\in\hR^d$, we have
\begin{equation}
\label{2nd_smooth_condition}
    \|  \nabla L(\bbeta_1)-\nabla L(\bbeta_2)\|  \le C\|  \bbeta_1-\bbeta_2\|,
\end{equation}
\end{assumption}{}
which recovers assumption \ref{assump: lip and alpha beta} in chapter \ref{chapter_appendix_icml20}.
% \begin{assumption}[Dissipativity]\label{assump: dissipitive}
% The energy function $L(\cdot)$ is $(a,b)$-dissipative if there exist constants $a>0$ and $b\ge 0$ such that $\forall \bbeta\in\mathbb R^d$,  $\la \bbeta,\nabla L(\bbeta)\rangle \ge a\|  \bbeta\|  ^2-b.$
% \end{assumption}{}

% The dissipativity condition implies that the Markov process is able to move inward on average regardless of the starting position. It has been widely used in proving the geometric ergodicity of dynamic systems \cite{mattingly02, Maxim17, Xu18}. 

% \begin{assumption}[Gradient oracle]\label{assump: stochastic_noise}
% There exists a constant $\delta\in[0, 1)$ such that for any $\bbeta$, we have
% \begin{equation}
%     \E[\|  \nabla \widetilde L(\bbeta)-\nabla L(\bbeta)\|  ^2]\leq 2\delta (C^2 \|  \bbeta\|  ^2+\Phi^2),
% \end{equation}
% where $\Phi$ is a positive constant. The same assumption has been used in \cite{Maxim17} to control the stochastic noise from the gradient.
% \end{assumption}{}

\section{Exponential Accelerations via Variance Reduction}
\label{exp_acc}
\setcounter{lemma}{0}
\setcounter{theorem}{0}

We aim to build an efficient estimator to approximate the deterministic swapping rate $S(\bbeta^{(1)}, \bbeta^{(2)})$
\begin{equation}
\label{S_exact}
    S(\bbeta^{(1)}, \bbeta^{(2)})=e^{ \left(\frac{1}{\tau^{(1)}}-\frac{1}{\tau^{(2)}}\right)\left( \sum_{i=1}^N L(\bx_i\mid  \bbeta^{(1)})-\sum_{i=1}^N L(\bx_i\mid  \bbeta^{(2)})\right)}.
\end{equation}

In big data problems and deep learning, it is too expensive to evaluate the energy $\sum_{i=1}^N L(\bx_i\mid  \bbeta)$ for each $\bbeta$ for a large $N$. To handle the computational issues, a popular solution is to use the unbiased stochastic energy $\frac{N}{n}\sum_{i\in B} L(\bx_i\mid  \bbeta)$ for a random mini-batch data $B$ of size $n$. However, a n\"{a}ive replacement of $\sum_{i=1}^N L(\bx_i\mid  \bbeta)$ by $\frac{N}{n}\sum_{i\in B} L(\bx_i\mid  \bbeta)$ leads to a large bias to the swapping rate. To remove such a bias, we follow \cite{deng2020} and consider the corrected swapping rate
\begin{equation}
\begin{split}
    \widehat S(\bbeta^{(1)}, \bbeta^{(2)})&=e^{ \left(\frac{1}{\tau^{(1)}}-\frac{1}{\tau^{(2)}}\right)\left( \frac{N}{n}\sum_{i\in B} L(\bx_i\mid  \bbeta^{(1)})-\frac{N}{n}\sum_{i\in B} L(\bx_i\mid  \bbeta^{(2)})-\left(\frac{1}{\tau^{(1)}}-\frac{1}{\tau^{(2)}}\right)\widehat \sigma^2\right)},\\
\end{split}
\end{equation}
where $2\widehat\sigma^2$ denotes the variance of $\frac{N}{n}\sum_{i\in B} L(\bx_i\mid  \bbeta^{(1)})-\frac{N}{n}\sum_{i\in B} L(\bx_i\mid  \bbeta^{(2)})$. \footnote[2]{We only consider the case of $F=1$ in the stochastic swapping rate for ease of analysis.} Empirically, $\widehat \sigma^2$ is quite large, resulting in almost no swaps and insignificant accelerations. To propose more effective swaps, we consider the variance-reduced estimator
\begin{equation}
    \widetilde L(B_k\mid  \bbeta_k)=\frac{N}{n}\sum_{i\in B_k}\left( L(\bx_i\mid   \bbeta_k) - L\left(\bx_i\mid    \bbeta_{m\lfloor \frac{k}{m}\rfloor}\right) \right)+\sum_{i=1}^N L\left(\bx_i\mid    \bbeta_{m\lfloor \frac{k}{m}\rfloor}\right),
\end{equation}
where the control variate $\bbeta_{m\lfloor \frac{k}{m}\rfloor}$ is updated every $m$ iterations. Denote the variance of $ \widetilde L(B\mid  \bbeta^{(1)})- \widetilde L(B\mid  \bbeta^{(2)})$ by $\widetilde\sigma^2$. The variance-reduced stochastic swapping rate follows
\begin{equation}
\begin{split}
\label{vr_s}
    \widetilde S_{\eta, m, n}(\bbeta^{(1)}, \bbeta^{(2)})&=e^{ \left(\frac{1}{\tau^{(1)}}-\frac{1}{\tau^{(2)}}\right)\left( \widetilde L(B\mid  \bbeta^{(1)})- \widetilde L(B\mid  \bbeta^{(2)})-\left(\frac{1}{\tau^{(1)}}-\frac{1}{\tau^{(2)}}\right)\widetilde\sigma^2\right)}.\\
\end{split}
\end{equation}

Using the strategy of variance reduction, we can lay down the first result, which differs from the existing variance reduction methods in that we only conduct variance reduction in the energy estimator for the class of SGLD algorithms.
\begin{lemma}[Variance-reduced energy estimator. Formal statement of Lemma \ref{vr-estimator_main}]
\label{vr-estimator}
Under the smoothness
and dissipativity assumptions \ref{assump: lip and alpha beta_ICLR21} and \ref{assump: dissipitive}, the variance of the variance-reduced energy estimator $\widetilde L(B_{k}\mid  \bbeta_{k}^{(h)})$, where $h\in\{1,2\}$, is upper bounded by
\begin{equation}
    \Var\left(\widetilde L(B_{k}\mid  \bbeta_{k}^{(h)})\right)\leq \frac{m^2 \eta}{n}D_R^2\left( \frac{2\eta}{n} (2C^2\Psi_{d,\tau^{(2)}, C, a, b} +2Q^2)+4\tau^{(2)} d\right).
\end{equation}
where $D_R=CR+\max_{i\in\{1,2,\cdots, N\}} N \|  \nabla L(\bx_i\mid  \bbeta_{\star})\|  +\frac{CB}{A}$ and $R$ is the radius of a sufficiently large ball that contains $\bbeta_k^{(h)}$ for $h\in\{1,2\}$.
\end{lemma}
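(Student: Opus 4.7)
The plan is to condition on the SGLD trajectory so the randomness reduces to that of the minibatch $B_k$, bound the resulting sampling variance by a Lipschitz-type argument, and then close the loop by controlling the iterate drift $\|\bbeta_k-\bbeta_j\|$ over the at most $m$ steps since the last control-variate refresh at $j=m\lfloor k/m\rfloor$.

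First I would note that, conditional on the filtration $\mathcal F_k$ generated by all randomness up to step $k$, the term $\sum_i L(\bx_i\mid \bbeta_j)$ in $\widetilde L(B_k\mid \bbeta_k)$ is deterministic, so the conditional variance is entirely the minibatch sampling variance of $f_i:=L(\bx_i\mid \bbeta_k)-L(\bx_i\mid \bbeta_j)$. A standard computation gives
\[
\Var\!\bigl(\widetilde L(B_k\mid \bbeta_k)\mid \mathcal F_k\bigr)\;\le\;\frac{N}{n}\sum_{i=1}^{N}f_i^{\,2}.
\]
Next, smoothness (Assumption \ref{assump: lip and alpha beta_ICLR21}) together with dissipativity (Assumption \ref{assump: dissipitive}) yields the per-sample bound $N\|\nabla L(\bx_i\mid \bbeta)\|\le D_R$ inside the ball of radius $R$, via $\|\nabla L(\bx_i\mid \bbeta)\|\le \|\nabla L(\bx_i\mid \bbeta_\star)\|+C_N\|\bbeta-\bbeta_\star\|$ together with $\|\bbeta-\bbeta_\star\|\le R+B/A$ from dissipativity. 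The mean value theorem then gives $|f_i|\le (D_R/N)\|\bbeta_k-\bbeta_j\|$, and substituting back produces the key intermediate estimate $\mathbb E\Var(\widetilde L(B_k\mid \bbeta_k)\mid \mathcal F_k)\le (D_R^{2}/n)\,\mathbb E\|\bbeta_k-\bbeta_j\|^{2}$.

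The remaining task is to bound $\mathbb E\|\bbeta_k-\bbeta_j\|^{2}$. Expanding the SGLD update as $\bbeta_k-\bbeta_j=\sum_{l=j}^{k-1}\bigl(-\eta\widetilde\nabla L(B_l\mid \bbeta_l)+\sqrt{2\eta\tau^{(h)}}\,\xi_l\bigr)$, using $(a+b)^2\le 2a^2+2b^2$ and Cauchy--Schwarz across the $k-j\le m$ summands, and using $\tau^{(h)}\le\tau^{(2)}$, yields an estimate of the form $2\eta^{2}m^{2}\sup_l\mathbb E\|\widetilde\nabla L(B_l\mid \bbeta_l)\|^{2}+4m^{2}\eta\,\tau^{(2)}d$. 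The stochastic-gradient second moment is then decomposed into $\|\nabla L(\bbeta_l)\|^{2}$, controlled by smoothness via $2C^{2}\|\bbeta_l-\bbeta_\star\|^{2}+\text{const}$, plus a minibatch-variance term of order $Q^{2}/n$; combined with the uniform dissipativity-driven moment bound $\mathbb E\|\bbeta_l-\bbeta_\star\|^{2}\le \Psi_{d,\tau^{(2)},C,a,b}$, this produces the factor $(2C^{2}\Psi+2Q^{2})/n$. Multiplying by the $D_R^{2}/n$ prefactor from the previous step recovers exactly the stated bound.

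The main technical obstacle is the third step: justifying the uniform-in-$l$ second moment bound on the iterates along the whole SGLD trajectory from dissipativity (which typically requires a one-step drift contraction, or invoking a previously established moment lemma), and carefully tracking the factors of $m$, $n$, and $\eta$ in the Cauchy--Schwarz telescoping. Once those ingredients are in place, the remainder is a mechanical combination of the three estimates above.
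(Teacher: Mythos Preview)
Your three-stage outline --- (i) reduce to the minibatch sampling variance and bound it by the local Lipschitz constant to obtain $\frac{D_R^2}{n}\,\E\bigl\|\bbeta_k^{(h)}-\bbeta_{m\lfloor k/m\rfloor}^{(h)}\bigr\|^2$, (ii) telescope and Cauchy--Schwarz over the at most $m$ increments, (iii) bound each one-step displacement via a gradient bound and a uniform $L^2$ moment bound --- is exactly the structure of the paper's proof, and steps (i) and (ii) are correct as you wrote them. The ``main technical obstacle'' you flag (the uniform moment bound on the iterates) is indeed isolated as a separate lemma in the paper (Lemma~\ref{Uniform_bound}), so your plan to invoke such a result is on target.

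The gap is in step (iii). Your bias--variance decomposition $\E\|\widetilde\nabla L(B_l\mid\bbeta_l)\|^2=\|\nabla L(\bbeta_l)\|^2+\mathrm{Var}(\widetilde\nabla L)$ cannot produce the factor $(2C^2\Psi+2Q^2)/n$ you claim: the full-gradient part $\|\nabla L(\bbeta_l)\|^2$ is of order $1$ in $n$ (it is bounded by $2C^2\|\bbeta_l\|^2+2Q^2$ with no $1/n$), so at best you obtain $2C^2\Psi+2Q^2+\mathcal O(1/n)$, and after multiplying by $2\eta^2 m^2$ the first bracket in the final bound would read $2\eta$ rather than $2\eta/n$. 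The paper gets the extra $1/n$ by a different route: it applies the \emph{per-sample} gradient bound $\|\nabla L(\bx_i\mid\bbeta)\|\le (C\|\bbeta\|+Q)/N$ of Lemma~\ref{grad_bound}, and replaces $\E\bigl\|\sum_{i\in B_k}\nabla L(\bx_i\mid\bbeta)\bigr\|^2$ by $\sum_{i\in B_k}\E\|\nabla L(\bx_i\mid\bbeta)\|^2$ (appealing to the i.i.d.\ minibatch draws). The combination $\tfrac{N^2}{n^2}\cdot n\cdot \tfrac{1}{N^2}$ then leaves exactly the $1/n$ appearing in the statement. To recover the stated constant you need this per-sample route rather than the bias--variance split.
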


\begin{proof}
\begin{equation}
\label{var_1st}
    \footnotesize
    \begin{split}
      &\Var\left(\widetilde L(B_{k}\mid  \bbeta_{k}^{(h)})\right)\\
      =&\E\left[\left(\frac{N}{n}\sum_{i\in B_k}\left[ L(\bx_i\mid   \bbeta_{k}^{(h)}) - L\left(\bx_i\mid    \bbeta^{(h)}_{m\lfloor \frac{k}{m}\rfloor}\right) \right]+\sum_{j=1}^N L\left(\bx_j\mid    \bbeta^{(h)}_{m\lfloor \frac{k}{m}\rfloor}\right)-\sum_{j=1}^N L(\bx_j\mid   \bbeta^{(h)}_k)\right)^2\right]\\
      =&\E\left[\left(\frac{N}{n}\sum_{i\in B_k}\left[ L(\bx_i\mid   \bbeta_{k}^{(h)}) - L\left(\bx_i\mid    \bbeta^{(h)}_{m\lfloor \frac{k}{m}\rfloor}\right) +\frac{1}{N}\left(\sum_{j=1}^N L\left(\bx_j\mid    \bbeta^{(h)}_{m\lfloor \frac{k}{m}\rfloor}\right)-\sum_{j=1}^N L(\bx_j\mid   \bbeta^{(h)}_k)\right)\right]\right)^2\right]\\
      =&\frac{N^2}{n^2}\E\left[\left(\sum_{i\in B_k}\left[ L(\bx_i\mid   \bbeta_{k}^{(h)}) - L\left(\bx_i\mid    \bbeta^{(h)}_{m\lfloor \frac{k}{m}\rfloor}\right) +\frac{1}{N}\left(\sum_{j=1}^N L\left(\bx_j\mid    \bbeta^{(h)}_{m\lfloor \frac{k}{m}\rfloor}\right)-\sum_{j=1}^N L(\bx_j\mid   \bbeta^{(h)}_k)\right)\right]\right)^2\right]\\
      =&\frac{N^2}{n^2}\sum_{i\in B_k}\E\left[\left( L(\bx_i\mid   \bbeta_{k}^{(h)}) - L\left(\bx_i\mid    \bbeta^{(h)}_{m\lfloor \frac{k}{m}\rfloor}\right) -\frac{1}{N}\left[\sum_{j=1}^N L(\bx_j\mid   \bbeta_k^{(h)})-\sum_{j=1}^N L\left(\bx_j\mid    \bbeta^{(h)}_{m\lfloor \frac{k}{m}\rfloor}\right)\right]\right)^2\right]\\
      \leq & \frac{N^2}{n^2}\sum_{i\in B_k}\E\left[\left( L(\bx_i\mid   \bbeta_{k}^{(h)}) - L\left(\bx_i\mid    \bbeta^{(h)}_{m\lfloor \frac{k}{m}\rfloor}\right)\right)^2\right]\\
      \leq & \frac{D_R^2}{n}\E\left[\left\|  \bbeta_{k}^{(h)}-\bbeta^{(h)}_{m\lfloor \frac{k}{m}\rfloor}\right\|  ^2\right],
    \end{split}
\end{equation}

where the last equality follows from the fact that $\E[(\sum_{i=1}^n x_i)^2]=\sum_{i=1}^n \E[x_i^2]$ for independent variables $\{x_i\}_{i=1}^n$ with mean 0. The first inequality follows from $\E[(x-\E[x])^2]\leq \E[x^2]$ and the last inequality follows from Lemma \ref{local_smooth}, where $D_R=CR+\max_{i\in\{1,2,\cdots, N\}} N \|  \nabla L(\bx_i\mid  \bbeta_{\star})\|  +\frac{CB}{A}$ and $R$ is the radius of a sufficiently large ball that contains $\bbeta_k^{(h)}$ for $h\in\{1,2\}$.

Next, we bound $\E\left[\left\|  \bbeta_{k}^{(h)}-\bbeta^{(h)}_{m\lfloor \frac{k}{m}\rfloor}\right\|  ^2\right]$ as follows 
\begin{equation}
\label{var_2nd}
\small
    \E\left[\left\|  \bbeta_{k}^{(h)}-\bbeta^{(h)}_{m\lfloor \frac{k}{m}\rfloor}\right\|  ^2\right]\leq \E\left[\left\|  \sum_{j=m\lfloor \frac{k}{m}\rfloor}^{k-1} (\bbeta_{j+1}^{(h)}-\bbeta_{j}^{(h)})\right\|  ^2\right]\leq m\sum_{j=m\lfloor \frac{k}{m}\rfloor}^{k-1}\E\left[\left\|   (\bbeta_{j+1}^{(h)}-\bbeta_{j}^{(h)})\right\|  ^2\right].
\end{equation}
For each term, we have the following bound
\begin{equation}
\label{var_3rd}
\begin{split}
    \E\left[\left\|   \bbeta_{j+1}^{(h)}-\bbeta_{j}^{(h)}\right\|  ^2\right]
    =&\E\left[\left\|  \eta \frac{N}{n}\sum_{i\in B_k}\nabla L(\bx_i\mid  \bbeta_{k}^{(h)})+\sqrt{2\eta\tau^{(h)}}\bxi_k\right\|  ^2\right]\\
    \leq & \frac{2\eta^2 N^2}{n^2}\sum_{i\in B_k} \E\left[\left\|  \nabla L(\bx_i\mid  \bbeta_{k}^{(h)})\right\|  ^2\right]+4\eta\tau^{(2)} d\\
    \leq & \frac{2\eta^2}{n} (2C^2 \E[\|  \bbeta_k^{(h)}\|  ^2]+2Q^2)+4\eta\tau^{(2)} d\\
    \leq & \frac{2\eta^2}{n} (2C^2\Psi_{d,\tau^{(2)}, C, a, b} +2Q^2)+4\eta\tau^{(2)} d,\\
\end{split}
\end{equation}
where the first inequality follows by $\E[\|  a+b\|  ^2]\leq 2\E[\|  a\|  ^2]+2\E[\|  b\|  ^2]$, the i.i.d of the data points and $\tau^{(1)}\leq \tau^{(2)}$ for $h\in\{1,2\}$; the second inequality follows by Lemma \ref{grad_bound}; the last inequality follows from Lemma \ref{Uniform_bound}.

Combining (\ref{var_1st}), (\ref{var_2nd}) and (\ref{var_3rd}), we have
\begin{equation}
    \Var\left(\widetilde L(B_{k}\mid  \bbeta_{k}^{(h)})\right)\leq \frac{m^2 \eta}{n}D_R^2\left( \frac{2\eta}{n} (2C^2\Psi_{d,\tau^{(2)}, C, a, b} +2Q^2)+4\tau^{(2)} d\right).
\end{equation}
\qed
\end{proof}

Since $\Var\left(\widetilde L(B_{k}\mid  \bbeta_{k}^{(h)})\right)\leq \Var\left(\frac{N}{n}\sum_{i\in B}L(\bx_i\mid   \bbeta_k)\right) +\Var\left(\frac{N}{n}\sum_{i\in B} L\left(\bx_i\mid    \bbeta_{m\lfloor \frac{k}{m}\rfloor}\right)\right)$ by definition,  $\Var\left(\widetilde L(B_{k}\mid  \bbeta_{k}^{(h)})\right)$ is upper bounded by $\mathcal{O}\left(\min\{\widehat\sigma^2, \frac{m^2 \eta}{n}\}\right)$, which becomes much smaller using a small learning rate $\eta$, a shorter period $m$ and a large batch size $n$. 

To satisfy the (stochastic) reversibility condition, we consider the truncated swapping rate $\min\{1, \widetilde S_{\eta, m, n}(\bbeta^{(1)}, \bbeta^{(2)})\}$, which targets the same invariant distribution (see section 3.1 \cite{Matias19} for details). We can show that the swapping rate may even decrease exponentially as the variance increases. 

\begin{lemma}[Variance reduction for larger swapping rates. Formal statement of Lemma \ref{exp_S_main_body}] \label{exp_S} Given a large enough batch size $n$, the variance-reduced energy estimator $\widetilde L(B_{k}\mid  \bbeta_{k}^{(h)})$ yields a truncated swapping rate that satisfies
\begin{equation}
     \E[\min\{1, \widetilde S_{\eta, m, n}(\bbeta^{(1)}, \bbeta^{(2)})\}]\approx\min\Big\{1, S(\bbeta^{(1)}, \bbeta^{(2)})\left(\mathcal{O}\left(\frac{1}{n^2}\right)+e^{-\mathcal{O}\left(\frac{m^2\eta}{n}+\frac{1}{n^2}\right)}\right)\Big\}.
\end{equation}

\end{lemma}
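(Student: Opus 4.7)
The plan is to factor the deterministic swapping rate out of $\widetilde S_{\eta,m,n}$, reduce the problem to a moment-generating-function estimate for the centered mini-batch noise, and then reinstate the $\min\{1,\cdot\}$ truncation via concavity at the end. Set $t := 1/\tau^{(1)}-1/\tau^{(2)}$ and $\Delta L := \sum_i L(\bx_i\mid\bbeta^{(1)})-\sum_i L(\bx_i\mid\bbeta^{(2)})$, so that $S(\bbeta^{(1)},\bbeta^{(2)}) = e^{t\Delta L}$, and let $Z := \widetilde L(B\mid\bbeta^{(1)}) - \widetilde L(B\mid\bbeta^{(2)}) - \Delta L$, which is mean zero with variance $\widetilde\sigma^2$ controlled by Lemma \ref{vr-estimator}. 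Then
$$\widetilde S_{\eta,m,n} = S(\bbeta^{(1)},\bbeta^{(2)}) \cdot e^{tZ - t^2\widetilde\sigma^2},$$
so the deterministic factor $e^{-t^2\widetilde\sigma^2}$ plays the role of bias correction, and the whole task reduces to estimating $\E[e^{tZ}]$.

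Next, I would expand $\E[e^{tZ}]$ via a cumulant/Taylor expansion. Because $Z$ is, up to the $N/n$ scaling, a sum of $n$ i.i.d.\ centered summands, the standard cumulant scaling gives $\kappa_2(Z)=\widetilde\sigma^2$, $\kappa_3(Z)=\mathcal{O}(1/n^2)$, and higher cumulants of strictly smaller order. Controlling these cumulants uniformly in $k$ reuses the ingredients already mobilized in Lemma \ref{vr-estimator}, namely the local smoothness constant $D_R$, the uniform second-moment bound of Lemma \ref{Uniform_bound}, and dissipativity, but extended to one higher moment to dominate the remainder of the Taylor series for $e^{tZ}$. This yields $\log\E[e^{tZ}] = \tfrac{1}{2}t^2\widetilde\sigma^2 + \mathcal{O}(1/n^2)$, so after multiplying by the deterministic $e^{-t^2\widetilde\sigma^2}$ one obtains
$$\E[\widetilde S_{\eta,m,n}] \approx S(\bbeta^{(1)},\bbeta^{(2)})\left(e^{-t^2\widetilde\sigma^2/2} + \mathcal{O}(1/n^2)\right).$$
Substituting the variance bound $\widetilde\sigma^2 = \mathcal{O}(m^2\eta/n)+\mathcal{O}(m^2\eta^2/n^2)$ from Lemma \ref{vr-estimator} converts the quadratic exponent into the claimed rate $-\mathcal{O}(m^2\eta/n + 1/n^2)$.

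To put the truncation back, I would apply Jensen's inequality to the concave map $x\mapsto\min\{1,x\}$ for the $\leq$ direction: $\E[\min\{1,\widetilde S_{\eta,m,n}\}] \leq \min\{1,\E[\widetilde S_{\eta,m,n}]\}$. For the matching $\gtrsim$ direction I would argue by concentration: once $n$ is large enough that $\widetilde S_{\eta,m,n}$ fluctuates by only $\mathcal{O}(1/n)$ around its mean, bounded convergence applied to $\min\{1,\cdot\}$ (which is $1$-Lipschitz) yields equality up to the same $\mathcal{O}(1/n^2)$ error already present in the MGF remainder. This is the source of the $\approx$ in the statement.

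The main obstacle I expect is controlling the MGF remainder uniformly in $t$. The cumulant-truncation error is genuinely $\mathcal{O}(1/n^2)$ only when $t^2\widetilde\sigma^2$ is not too large, so a sub-Gaussian or sub-exponential tail on $Z$ is needed to rule out heavy-tail contributions to $\E[e^{tZ}]$; the variance bound of Lemma \ref{vr-estimator} alone does not supply this, and a matching higher (ideally exponential) moment estimate for $\widetilde L(B\mid\bbeta_k^{(h)})$ under the smoothness and dissipativity assumptions will be required. This is precisely where the hypothesis ``given a large enough batch size $n$'' carries the weight: it is the regime in which the Gaussian MGF approximation is valid and the $\mathcal{O}(1/n^2)$ non-Gaussian correction is meaningful.
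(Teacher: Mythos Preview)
There is a genuine gap, and it hides behind what looks like a bookkeeping slip. The correction term $-t^2\widetilde\sigma^2$ in the exponent of $\widetilde S_{\eta,m,n}$ is precisely the log-normal bias correction chosen so that $\E[\widetilde S_{\eta,m,n}] \approx S(\bbeta^{(1)},\bbeta^{(2)})$; the paper invokes exactly this as the estimate $\mathcal{I}_1=|\E[\widetilde S]-S|=S\cdot\mathcal{O}(1/n^2)$. Carrying out your cumulant expansion with the paper's convention $\Var(Z)=2\widetilde\sigma^2$ gives $\log\E[e^{tZ}]\approx t^2\widetilde\sigma^2$ and hence $\E[\widetilde S]=S\cdot e^{-t^2\widetilde\sigma^2}\,\E[e^{tZ}]\approx S$, with no exponential suppression. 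The factor $e^{-t^2\widetilde\sigma^2/2}$ in your write-up survives only because you took $\Var(Z)=\widetilde\sigma^2$; once the convention is fixed, your Jensen step collapses to $\E[\min\{1,\widetilde S\}]\le\min\{1,S(1+\mathcal{O}(1/n^2))\}$, which is true but misses the entire content of the lemma.

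The mechanism you are missing is that the exponential variance penalty does not live in $\E[\widetilde S]$ at all: it is created by the nonlinear interaction of $\min\{1,\cdot\}$ with the heavy right tail of the (approximately) log-normal $\widetilde S$, so one must keep the $\min$ inside the expectation throughout. The paper does this by passing to the exactly log-normal surrogate $\mathbb{S}_{\eta,m,n}$ and invoking Lemma~\ref{exponential_dependence}, which computes $\E[\min(1,X)]=\mathcal{O}(e^{u-\sigma^2/8})$ for log-normal $X$ directly; note $\E[X]=e^u$, so the $e^{-\sigma^2/8}$ factor appears only \emph{after} truncation. The non-Gaussian discrepancy is then controlled not through a cumulant remainder on $\E[e^{tZ}]$ but through the elementary splitting $\min\{1,\mathbb{A}+\mathbb{B}\}\le\min\{1,\mathbb{A}\}+|\mathbb{B}|$ together with $\E|\widetilde S-\mathbb{S}|\le S\cdot\mathcal{O}(1/n^2)$. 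To salvage your route you would have to evaluate $\E[\min\{1,S\,e^{tZ-t^2\widetilde\sigma^2}\}]$ directly rather than first computing $\E[\widetilde S]$ and then truncating, which essentially reproduces Lemma~\ref{exponential_dependence}.
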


\begin{proof}

By central limit theorem, the energy estimator $\frac{N}{n}\sum_{i\in B} L(\bx_i\mid  \bbeta_k)$ converges in distribution to a normal distributions as the batch size $n$ goes to infinity. In what follows, the variance-reduced estimator $\widetilde L(B_k\mid  \bbeta_k)$ also converges to a normal distribution, where the corresponding estimator is denoted by $\mathbb{\widetilde L}(B_k\mid  \bbeta_k)$. Now the swapping rate $\mathbb{S}_{\eta, m, n}(\cdot, \cdot)$ based on normal estimators follows
\begin{equation}
\begin{split}
\label{vr_s_normal}
    \mathbb{S}_{\eta, m, n}(\bbeta^{(1)}, \bbeta^{(2)})&=e^{ \left(\frac{1}{\tau^{(1)}}-\frac{1}{\tau^{(2)}}\right)\left( \mathbb{\widetilde L}(B\mid  \bbeta^{(1)})- \mathbb{\widetilde L}(B\mid  \bbeta^{(2)})-\left(\frac{1}{\tau^{(1)}}-\frac{1}{\tau^{(2)}}\right)\bar \sigma^2\right)},\\
\end{split}
\end{equation}
where $2\bar \sigma^2$ denotes the variance of $\mathbb{\widetilde L}(B\mid  \bbeta^{(1)})- \mathbb{\widetilde L}(B\mid  \bbeta^{(2)})$. Note that $\mathbb{S}_{\eta, m, n}(\bbeta^{(1)}, \bbeta^{(2)})$ follows a log-normal distribution with mean $\log S(\bbeta^{(1)}, \bbeta^{(2)})-\left(\frac{1}{\tau^{(1)}}-\frac{1}{\tau^{(2)}}\right)^2 \bar \sigma^2$ and variance $2\left(\frac{1}{\tau^{(1)}}-\frac{1}{\tau^{(2)}}\right)^2\bar \sigma^2$ on the log-scale, and $S(\bbeta^{(1)}, \bbeta^{(2)})$ is the deterministic swapping rate defined in (\ref{S_exact}). Applying Lemma \ref{exponential_dependence}, we have
\begin{equation}
\begin{split}
    \E[\min\{1, \mathbb{S}_{\eta, m, n}(\bbeta^{(1)}, \bbeta^{(2)})\}]=\mathcal{O}\left(S(\bbeta^{(1)}, \bbeta^{(2)})\exp\left\{-\frac{\left(\frac{1}{\tau^{(1)}}-\frac{1}{\tau^{(2)}}\right)^2\bar \sigma^2}{4}\right\}\right).
\end{split}
\end{equation} 

Moreover, $2\bar \sigma^2$ differs from $2\widetilde\sigma^2$, the variance of $\widetilde L(B\mid  \bbeta^{(1)})-\widetilde L(B\mid  \bbeta^{(2)})$, by at most a bias of $\mathcal{O}(\frac{1}{n^2})$ according to the estimate of the third term of (S2) in \cite{Matias19} and $2\widetilde\sigma^2\leq \Var\left(\widetilde L(B_{k}\mid  \bbeta_{k}^{(1)})\right) +\Var\left(\widetilde L(B_{k}\mid  \bbeta_{k}^{(2)})\right)$, where both $\Var\left(\widetilde L(B_{k}\mid  \bbeta_{k}^{(1)})\right)$ and $\Var\left(\widetilde L(B_{k}\mid  \bbeta_{k}^{(2)})\right)$ are upper bounded by $\frac{m^2 \eta}{n}D_R^2\left( \frac{2\eta}{n} (2C^2\Psi_{d,\tau^{(2)}, C, a, b} +2Q^2)+4\tau d\right)$ by Lemma \ref{vr-estimator}, it follows that
\begin{equation}
% \small
\label{normal_truncate}
\begin{split}
    \E[\min\{1, \mathbb{S}_{\eta, m, n}(\bbeta^{(1)}, \bbeta^{(2)})\}] \leq S(\bbeta^{(1)}, \bbeta^{(2)}) e^{-\mathcal{O}\left(\frac{m^2\eta}{n}+\frac{1}{n^2}\right)}.
\end{split}
\end{equation}

Applying $\min\{1,\mathbb{A}+\mathbb{B}\}\leq \min\{1,\mathbb{A}\}+\mid  \mathbb{B}\mid  $, we have
\begin{equation}
% \small
\label{target_eq1}
\begin{split}
    &\E[\min\{1, \widetilde S_{\eta, m, n}(\bbeta^{(1)}, \bbeta^{(2)})\}]\\
    = & \E\big[\min
    \big\{1, \underbrace{\widetilde S_{\eta, m, n}(\bbeta^{(1)}, \bbeta^{(2)})-\mathbb{S}_{\eta, m, n}(\bbeta^{(1)}, \bbeta^{(2)})}_{\mathbb{B}}+\underbrace{\mathbb{S}_{\eta, m, n}(\bbeta^{(1)}, \bbeta^{(2)})}_{\mathbb{A}}\big\}\big]\\
    \leq & \underbrace{\E\left[\mid    \widetilde S_{\eta, m, n}(\bbeta^{(1)}, \bbeta^{(2)})-\mathbb{S}_{\eta, m, n}(\bbeta^{(1)}, \bbeta^{(2)})\mid   \right]}_{\mathcal{I}} + \underbrace{\E[\min\{1, \mathbb{S}_{\eta, m, n}(\bbeta^{(1)}, \bbeta^{(2)})\}]}_{\text{see formula\ }  (\ref{normal_truncate})} \\
\end{split}
\end{equation}

By the triangle inequality, we can further upper bound the first term $\mathcal{I}$ 
\begin{equation}
\label{target_eq2}
\begin{split}
    &\ \ \ \ \ \E\left[\mid   \widetilde S_{\eta, m, n}(\bbeta^{(1)}, \bbeta^{(2)})-\mathbb{S}_{\eta, m, n}(\bbeta^{(1)}, \bbeta^{(2)}\mid   \right]\\
    &\leq \underbrace{\mid   \E[\widetilde S_{\eta, m, n}(\bbeta^{(1)}, \bbeta^{(2)})]-S(\bbeta^{(1)}, \bbeta^{(2)})\mid   }_{\mathcal{I}_1}+\underbrace{\mid   S(\bbeta^{(1)}, \bbeta^{(2)})-\E[\mathbb{S}_{\eta, m, n}(\bbeta^{(1)}, \bbeta^{(2)})]\mid   }_{\mathcal{I}_2}\\
    &= S(\bbeta^{(1)}, \bbeta^{(2)}) \mathcal{O}\left(\frac{1}{n^2}\right)+S(\bbeta^{(1)}, \bbeta^{(2)}) \mathcal{O}\left(\frac{1}{n^2}\right),
\end{split}
\end{equation}
where $\mathcal{I}_1$ and $\mathcal{I}_2$ follow from the proof of S1 without and with normality assumptions, respectively \cite{Matias19}.

Combining (\ref{target_eq1}) and (\ref{target_eq2}), we have
\begin{equation}
 \E[\min\{1, \widetilde S_{\eta, m, n}(\bbeta^{(1)}, \bbeta^{(2)})\}]\approx \min\Big\{1, S(\bbeta^{(1)}, \bbeta^{(2)}) \left(\mathcal{O}\left(\frac{1}{n^2}\right)+ e^{-\mathcal{O}\left(\frac{m^2\eta}{n}+\frac{1}{n^2}\right)}\right)\Big\}.
\end{equation}

This means that reducing the update period $m$ (more frequent update the of control variable), the learning rate $\eta$ and the batch size $n$ significantly increases $\min\{1, \widetilde S_{\eta, m, n}\}$ on average.\qed
\end{proof}

The above lemma shows a potential to exponentially increase the number of effective swaps  via variance reduction under the same intensity $a$. Next, similar to Lemma \ref{exponential decay_appendix}, we show the impact of variance reduction in speeding up the exponential convergence of the corresponding continuous-time replica exchange Langevin diffusion.

\begin{theorem}[Exponential convergence. Formal statement of Theorem \ref{exponential decay_main_body_iclr}]\label{exponential decay_iclr}
Under the smoothness
and dissipativity assumptions \ref{assump: lip and alpha beta_ICLR21} and \ref{assump: dissipitive}, the replica exchange Langevin diffusion associated with the variance-reduced stochastic swapping rates $S_{\eta, m, n}(\cd, \cd)=\min\{1, \widetilde S_{\eta, m, n}(\cd, \cd)\}$ converges exponential fast to the invariant distribution $\pi$ given a smaller learning rate $\eta$, a smaller $m$ or a larger batch size $n$:
\begin{equation}
    W_2(\nu_t,\pi) \leq  D_0 \exp\left\{-t\left(1+\delta_{ S_{\eta, m, n}}\right)/c_{\text{LS}}\right\},
\end{equation}
where $D_0=\sqrt{2c_{\text{LS}}D(\nu_0\|  \pi)}$, $\delta_{ S_{\eta, m, n}}:=\inf_{t>0}\frac{\cE_{ S_{\eta, m, n}}(\sqrt{\frac{d\n_t}{d\pi}})}{\cE(\sqrt{\frac{d\n_t}{d\pi}})}-1$ is a non-negative constant depending on the truncated stochastic swapping rate $S_{\eta, m, n}(\cd, \cd)$ and increases with a smaller learning rate $\eta$, a shorter period $m$ and a large batch size $n$. $c_{\text{LS}}$ is the standard constant of the log-Sobolev inequality asscoiated with the Dirichlet form for replica exchange Langevin diffusion without swaps.
\end{theorem}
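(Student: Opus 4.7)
The plan is to follow the log-Sobolev framework used in the deterministic replica exchange case (Lemma \ref{exponential decay_appendix}), with the Dirichlet form augmented by the swap kernel now driven by the truncated, variance-reduced stochastic rate $S_{\eta,m,n}$. Concretely, I will first write down the generator $\mathcal{L}_{S_{\eta,m,n}}$ of the replica exchange Langevin diffusion in which the two components evolve according to overdamped Langevin dynamics at temperatures $\tau^{(1)},\tau^{(2)}$ and exchange states with intensity $a\,S_{\eta,m,n}(\bbeta^{(1)},\bbeta^{(2)})$. The associated Dirichlet form decomposes as $\mathcal{E}_{S_{\eta,m,n}}(f)=\mathcal{E}(f)+\mathcal{E}^{\mathrm{swap}}_{S_{\eta,m,n}}(f)$, where $\mathcal{E}(f)$ is the no-swap Langevin Dirichlet form and $\mathcal{E}^{\mathrm{swap}}_{S_{\eta,m,n}}(f)\ge 0$ is the symmetric bilinear form associated with the swap jumps under the invariant product measure $\pi$.

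Next I would differentiate the relative entropy $D(\nu_t\|\pi)$ in time, obtaining the standard identity $\frac{d}{dt}D(\nu_t\|\pi)=-2\,\mathcal{E}_{S_{\eta,m,n}}\!\bigl(\sqrt{d\nu_t/d\pi}\bigr)$. Factoring out the no-swap Dirichlet form gives
\begin{equation*}
\tfrac{d}{dt}D(\nu_t\|\pi)=-2\Bigl(1+\delta_{S_{\eta,m,n}}(t)\Bigr)\mathcal{E}\!\bigl(\sqrt{d\nu_t/d\pi}\bigr),
\end{equation*}
where $\delta_{S_{\eta,m,n}}(t):=\mathcal{E}^{\mathrm{swap}}_{S_{\eta,m,n}}/\mathcal{E}\ge 0$; taking the infimum in $t$ gives the constant $\delta_{S_{\eta,m,n}}$ in the theorem. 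Applying the log-Sobolev inequality for the non-swap Langevin Dirichlet form with constant $c_{\mathrm{LS}}$ yields $\frac{d}{dt}D(\nu_t\|\pi)\le -\frac{2(1+\delta_{S_{\eta,m,n}})}{c_{\mathrm{LS}}}\,D(\nu_t\|\pi)$, and Gr\"onwall's lemma gives exponential decay of the KL divergence at rate $(1+\delta_{S_{\eta,m,n}})/c_{\mathrm{LS}}$. The Otto--Villani/Talagrand $T_2$ inequality $W_2(\nu_t,\pi)^2\le 2c_{\mathrm{LS}}D(\nu_t\|\pi)$ then converts the entropy bound into the claimed $W_2$ bound with $D_0=\sqrt{2c_{\mathrm{LS}}D(\nu_0\|\pi)}$.

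To finish, I have to argue the two monotonicity claims about $\delta_{S_{\eta,m,n}}$: it is non-negative and it is increasing as $\eta\downarrow 0$, $m\downarrow 0$ or $n\uparrow\infty$. Non-negativity is immediate since $\mathcal{E}^{\mathrm{swap}}_{S_{\eta,m,n}}\ge 0$. For the monotone dependence on $(\eta,m,n)$, I invoke Lemma \ref{exp_S}, which shows that on average $\min\{1,\widetilde S_{\eta,m,n}\}$ increases pointwise (up to the explicit $\mathcal{O}(1/n^2)+\exp(-\mathcal{O}(m^2\eta/n+1/n^2))$ factor) toward the deterministic swapping rate as $\eta,m$ shrink and $n$ grows; since the swap Dirichlet form is linear in the jump intensity, this translates into a monotone increase of $\mathcal{E}^{\mathrm{swap}}_{S_{\eta,m,n}}$ and hence of $\delta_{S_{\eta,m,n}}$.

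The main obstacle, as in the deterministic-swap analogue, is that the swap generator is non-local and the two overdamped diffusions interact only through the swap channel, so one cannot appeal directly to a log-Sobolev inequality for the full generator; the trick of separating the Dirichlet form into a pure diffusion part (where $c_{\mathrm{LS}}$ applies) plus a non-negative swap correction is what makes the argument go through. A secondary technical point is controlling the time-dependence of $\delta_{S_{\eta,m,n}}(t)$ so that the infimum used to define the constant $\delta_{S_{\eta,m,n}}$ is attained and strictly positive under nontrivial overlap of the marginals; this again leverages Lemma \ref{exp_S} to ensure the stochastic swap rate does not collapse to zero.
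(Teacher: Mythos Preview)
Your proposal is correct and follows essentially the same route as the paper: write the generator and Dirichlet form for the swap process, split $\mathcal{E}_{S_{\eta,m,n}}$ into the no-swap part $\mathcal{E}$ plus a non-negative swap term, define $\delta_{S_{\eta,m,n}}$ as the infimum of the ratio, apply the log-Sobolev inequality for $\mathcal{E}$ with constant $c_{\mathrm{LS}}$, integrate (Gr\"onwall/exponential entropy decay), and conclude via Otto--Villani. The paper packages the entropy-dissipation step by citing Lemma \ref{exponential decay_appendix} and \cite{Bakry2014} rather than writing the de Bruijn-type identity explicitly, and it only mentions the monotone dependence of $\delta_{S_{\eta,m,n}}$ on $(\eta,m,n)$ in words, whereas you spell out that this comes from Lemma \ref{exp_S}; otherwise the arguments coincide.
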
{}

\begin{proof} Given a smooth function $f:\mathbb{R}^d\times \mathbb{R}^d\rightarrow \mathbb{R}$, the infinitesimal generator $\cL_{ S_{\eta, m, n}}$ associated with the replica exchange Langevin diffusion with the (truncated) swapping rate $ S_{\eta, m, n}=\min\{1, \widetilde S_{\eta, m, n}\}$ follows
\begin{equation}
\label{generator_L}
\begin{split}
    &\ \ \ \cL_{S_{\eta, m, n}}f(\bbeta^{(1)}, \bbeta^{(2)})\\
    =&-\langle\nabla_{\bbeta^{(1)}}f(\bbeta^{(1)},\bbeta^{(2)}),\nabla L(\bbeta^{(1)})\rangle-\langle \nabla_{\bbeta^{(2)}}f(\bbeta^{(1)},\bbeta^{(2)}),\nabla L(\bbeta^{(2)})\rangle\\
    &
+\tau^{(1)}\Delta_{\bbeta^{(1)}}f(\bbeta^{(1)},\bbeta^{(2)})+\tau^{(2)}\Delta_{\bbeta^{(2)}}f(\bbeta^{(1)},\bbeta^{(2)})\\
&+ aS_{\eta, m, n}(\bbeta^{(1)},\bbeta^{(2)})\cd (f(\bbeta^{(2)},\bbeta^{(1)})-f(\bbeta^{(1)},\bbeta^{(2)})),
\end{split}
\end{equation}
where $\nabla_{\bbeta^{(h)}}$ and $\Delta_{\bbeta^{(h)}}$ are the gradient and the Laplace operators with respect to $\bbeta^{(h)}$, respectively. Following the same technique in Eq.\eqref{dirichlet swap_icml}, the Dirichlet form associated with the operator $\cL_{ S_{\eta, m, n}}$ follows that
\begin{equation}\label{dirichlet swap}
\small
\begin{split}
    \cE_{S_{\eta, m, n}}(f)=&\underbrace{\int \Big(\tau^{(1)}\|  \nabla_{\bbeta^{(1)}}f(\bbeta^{(1)}, \bbeta^{(2)})\|  ^2+\tau^{(2)}\|  \nabla_{\bbeta^{(2)}}f(\bbeta^{(1)}, \bbeta^{(2)})\|  ^2 \Big)d\pi(\bbeta^{(1)},\bbeta^{(2)})}_{\text{vanilla term } \cE(f)}\\
    &\ +\underbrace{\frac{a}{2}\int S_{\eta, m, n}(\bbeta^{(1)},\bbeta^{(2)})\cd (f(\bbeta^{(2)},\bbeta^{(1)})-f(\bbeta^{(1)},\bbeta^{(2)}))^2d\pi(\bbeta^{(1)},\bbeta^{(2)})}_{\text{acceleration term}},
\end{split}
\end{equation}
where $f$ corresponds to $\frac{d\nu_t}{d\pi(\bbeta^{(1)}, \bbeta^{(2)})}$. By Lemma \ref{exponential decay_appendix}, there exists a constant $\delta_{S_{\eta, m, n}}=\inf_{t>0}\frac{\cE_{S_{\eta, m, n}}(\sqrt{\frac{d\n_t}{d\pi}})}{\cE(\sqrt{\frac{d\n_t}{d\pi}})}-1$ depending on $S_{\eta, m, n}$ that satisfies the following log-Sobolev inequality for the unique invariant measure $\pi$  associated with variance-reduced replica exchange Langevin diffusion $\{\bbeta_t\}_{t\ge 0}$
\begin{equation*}
    D(\n_t\|  \pi)\le 2 \frac{c_{\text{LS}}}{1+\delta_{S_{\eta, m, n}}}\cE_{S_{\eta, m, n}}(\sqrt{\frac{d\n_t}{d\pi}}),
\end{equation*}
where $\delta_{S_{\eta, m, n}}$ increases rapidly with the swapping rate $S_{\eta, m, n}$.
By virtue of the exponential decay of entropy \cite{Bakry2014}, we have
\begin{equation*}
    D(\nu_t\|  \pi)\leq D(\nu_0\|  \pi) e^{-2t(1+\delta_{S_{\eta, m, n}})/c_{\text{LS}}},
\end{equation*}
where $c_{\text{LS}}$ is the standard constant of the log-Sobolev inequality asscoiated with the Dirichlet form for replica exchange Langevin diffusion without swaps (Lemma 4 as in \cite{deng2020}). Next, we upper bound $W_2(\nu_t,\pi)$ by the
Otto-Villani theorem \cite{Bakry2014}
\begin{equation*}
    W_2(\nu_t,\pi) \leq \sqrt{2 c_{\text{LS}} D(\nu_t\|  \pi)}\leq \sqrt{2c_{\text{LS}}D(\m_0\|  \pi)} e^{-t\left(1+\delta_{S_{\eta, m, n}}\right)/c_{\text{LS}}},
\end{equation*}
where $\delta_{S_{\eta, m, n}}>0$ depends on the learning rate $\eta$, the period $m$ and the batch size $n$. \qed

\end{proof}

In the above analysis, we have established the relation that $\delta_{S_{\eta, m, n}}=\inf_{t>0}\frac{\cE_{S_{\eta, m, n}}(\sqrt{\frac{d\n_t}{d\pi}})}{\cE(\sqrt{\frac{d\n_t}{d\pi}})}-1$ depending on $S_{\eta, m, n}$ may increase significantly with a smaller learning rate $\eta$, a shorter period $m$ and a large batch size $n$. We leave the quantitative study of $\delta_{S_{\eta, m, n}}$ based on spectral gaps for future works \cite{Holden18, jingdong, Futoshi2020}.

\section{Proof of Technical Lemmas}
\setcounter{lemma}{0}

\begin{lemma}[Local Lipschitz continuity]\label{local_smooth}
Given a $d$-dimensional centered ball $U$ of radius $R$, $L(\cdot)$ is $D_R$-Lipschitz continuous in that $\mid  L(\bx_i\mid  \bbeta_1)- L(\bx_i\mid  \bbeta_2)\mid  \le \frac{D_R}{N}\|  \bbeta_1-\bbeta_2\|  $ for $\forall \bbeta_1, \bbeta_2\in U$ and any $i\in\{1,2,\cdots, N\}$,
where $D_R=CR+\max_{i\in\{1,2,\cdots, N\}} N \|  \nabla L(\bx_i\mid  \bbeta_{\star})\|  +\frac{CB}{A}$.
\end{lemma}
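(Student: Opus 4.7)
The plan is to bound the per-sample gradient uniformly on the ball $U$ via Assumption \ref{assump: lip and alpha beta_ICLR21} and then obtain the Lipschitz bound on $L(\bx_i\mid \cdot)$ by integrating along the segment connecting $\bbeta_1$ and $\bbeta_2$. The only nontrivial ingredient is controlling $\|\bbeta-\bbeta_{\star}\|$ for $\bbeta \in U$, which is where the dissipativity assumption \ref{assump: dissipitive} enters.

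First, I would use $C_N$-smoothness at the sample level (inequality \eqref{1st_smooth_condition}) centered at the reference point $\bbeta_\star$ to get
\begin{equation*}
   \|\nabla L(\bx_i\mid \bbeta)\| \;\le\; \|\nabla L(\bx_i\mid \bbeta_\star)\| + C_N\|\bbeta-\bbeta_\star\|
\end{equation*}
for every $\bbeta\in U$ and every $i$. Second, I would invoke the dissipativity assumption to argue that $\bbeta_\star$ (the standard reference point used throughout the paper) satisfies $\|\bbeta_\star\|\le B/A$, so that for any $\bbeta$ in the centered ball of radius $R$ the triangle inequality gives $\|\bbeta-\bbeta_\star\|\le R + B/A$. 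Multiplying the previous display by $N$ and using $C=NC_N$ then yields the uniform bound
\begin{equation*}
   N\|\nabla L(\bx_i\mid \bbeta)\| \;\le\; CR + \max_{i}N\|\nabla L(\bx_i\mid \bbeta_\star)\| + \frac{CB}{A} \;=\; D_R.
\end{equation*}

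Third, since $U$ is convex, for $\bbeta_1,\bbeta_2\in U$ the whole segment $\bbeta_1+t(\bbeta_2-\bbeta_1)$, $t\in[0,1]$, lies in $U$, and I would apply the fundamental theorem of calculus together with Cauchy--Schwarz:
\begin{equation*}
   |L(\bx_i\mid \bbeta_1) - L(\bx_i\mid \bbeta_2)|
   \;\le\; \int_0^1 \|\nabla L(\bx_i\mid \bbeta_1+t(\bbeta_2-\bbeta_1))\|\,dt \cdot \|\bbeta_1-\bbeta_2\|
   \;\le\; \frac{D_R}{N}\|\bbeta_1-\bbeta_2\|,
\end{equation*}
which is the claimed estimate.

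The only step that could cause trouble is the bound $\|\bbeta_\star\|\le B/A$; this follows from the standard dissipativity estimate $\langle \nabla L(\bbeta_\star),\bbeta_\star\rangle \ge A\|\bbeta_\star\|^2 - B$ combined with stationarity (or, equivalently, is built into the definition of $\bbeta_\star$ in the paper's setting). Everything else is mechanical, so I expect the proof to be short.
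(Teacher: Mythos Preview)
Your proposal is correct and follows essentially the same route as the paper: the paper packages your gradient bound (smoothness at $\bbeta_\star$ plus the dissipativity estimate $\|\bbeta_\star\|\le b/a$) into a separate Lemma~\ref{grad_bound}, and then applies the mean value theorem on the convex ball $U$ where you use the equivalent integral form of the fundamental theorem of calculus. The ingredients, the order of the steps, and the resulting constant $D_R$ all match.
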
{}  
\begin{proof}

For any $\bbeta_1, \bbeta_2\in U$, there exists $\bbeta_3\in U$ that satisfies the mean-value theorem such that
\begin{equation*}
    \mid  L(\bx_i\mid  \bbeta_1) - L(\bx_i\mid  \bbeta_2)\mid  =\langle\nabla L(\bx_i\mid  \bbeta_3), \bbeta_1-\bbeta_2\rangle \leq \|  \nabla L(\bx_i\mid  \bbeta_3)\|  \cdot \|  \bbeta_1-\bbeta_2\|  ,
\end{equation*}

Moreover, by Lemma \ref{grad_bound}, we have
\begin{equation*}
    \mid  L(\bx_i\mid  \bbeta_1) - L(\bx_i\mid  \bbeta_2)\mid  \leq \|  \nabla L(\bx_i\mid  \bbeta_3)\|  \cdot \|  \bbeta_1-\bbeta_2\|  \leq \frac{CR+Q}{N}\|  \bbeta_1-\bbeta_2\|  .\qed
\end{equation*}

\end{proof}

\begin{lemma}
\label{grad_bound}
Under the smoothness
and dissipativity assumptions \ref{assump: lip and alpha beta_ICLR21}, \ref{assump: dissipitive}, for any $\bbeta\in\mathbb{R}^d$, it follows that
\begin{equation}
   \|  \nabla L(\bx_i\mid  \bbeta)\|  \leq \frac{C}{N}\|  \bbeta\|  +\frac{Q}{N}.
\end{equation}
where $Q=\max_{i\in\{1,2,\cdots, N\}} N \|  \nabla L(\bx_i\mid  \bbeta_{\star})\|  +\frac{bC}{a}$.
\end{lemma}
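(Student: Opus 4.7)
The plan is to bound the per-point gradient at an arbitrary $\bbeta$ by comparing it to the gradient at a reference point $\bbeta_{\star}$ (the global minimizer / stationary point of $L$), exploiting the fact that this reference is fixed so its own gradient contributes only a constant. The structural identity I want to use is
\begin{equation*}
    \nabla L(\bx_i\mid\bbeta) \;=\; \bigl(\nabla L(\bx_i\mid\bbeta)-\nabla L(\bx_i\mid\bbeta_{\star})\bigr) \;+\; \nabla L(\bx_i\mid\bbeta_{\star}),
\end{equation*}
which reduces the problem to a smoothness estimate plus a residual constant.

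First, I would take norms and apply the triangle inequality, then use the per-datum smoothness in \eqref{1st_smooth_condition} with constant $C_N = C/N$ to obtain
\begin{equation*}
    \|\nabla L(\bx_i\mid\bbeta)\| \;\le\; \tfrac{C}{N}\|\bbeta-\bbeta_{\star}\| + \|\nabla L(\bx_i\mid\bbeta_{\star})\|.
\end{equation*}
A further triangle inequality yields $\|\bbeta-\bbeta_{\star}\| \le \|\bbeta\| + \|\bbeta_{\star}\|$, which already matches the $\tfrac{C}{N}\|\bbeta\|$ term in the stated bound; the remaining ingredients are a bound on $\|\bbeta_{\star}\|$ and on the residual per-point gradient at $\bbeta_{\star}$.

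Next, I would invoke the dissipativity Assumption \ref{assump: dissipitive}, evaluated at the stationary point where $\nabla L(\bbeta_{\star}) = 0$, to pin $\|\bbeta_{\star}\|$ by the ratio $b/a$ (so that $\tfrac{C}{N}\|\bbeta_{\star}\|$ contributes the $\tfrac{bC}{aN}$ piece inside $Q/N$). The residual term $\|\nabla L(\bx_i\mid\bbeta_{\star})\|$ is finally absorbed by taking the maximum over $i\in\{1,\dots,N\}$, producing the $\max_i N\|\nabla L(\bx_i\mid\bbeta_{\star})\|/N$ contribution to $Q/N$. Collecting the pieces gives exactly
\begin{equation*}
    \|\nabla L(\bx_i\mid\bbeta)\| \;\le\; \tfrac{C}{N}\|\bbeta\| + \tfrac{1}{N}\!\left(\max_{i}N\|\nabla L(\bx_i\mid\bbeta_{\star})\| + \tfrac{bC}{a}\right) \;=\; \tfrac{C}{N}\|\bbeta\| + \tfrac{Q}{N}.
\end{equation*}

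The only step with any subtlety is extracting $\|\bbeta_{\star}\| \le b/a$ from the dissipativity assumption in a way that produces the exact constant $bC/a$; this requires reading dissipativity at the stationary point rather than in its usual drift-bound form, but the gradient-vanishing condition at $\bbeta_{\star}$ makes this immediate. All other steps are routine triangle-inequality bookkeeping, so I expect no genuine obstacle.
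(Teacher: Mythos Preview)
Your proposal is correct and follows essentially the same argument as the paper: both add and subtract $\nabla L(\bx_i\mid\bbeta_{\star})$, apply the per-datum smoothness with $C_N=C/N$, then use dissipativity at the stationary point $\bbeta_{\star}$ (where $\nabla L(\bbeta_{\star})=0$) to bound $\|\bbeta_{\star}\|$ and absorb the remaining constants into $Q$. The only difference is the order of presentation---the paper states the dissipativity bound on $\|\bbeta_{\star}\|$ first and then the smoothness estimate, whereas you do it the other way around---which is immaterial.
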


\begin{proof}
According to the dissipativity assumption, we have 
\begin{equation}
    \la \bbeta_{\star},\nabla L(\bbeta_{\star})\rangle \ge a\|  \bbeta^{\star}\|  ^2-b,
\end{equation}
where $\bbeta_{\star}$ is a minimizer of $\nabla L(\cdot)$ such that $\nabla L(\bbeta_{\star})=0$. In what follows, we have $\|  \bbeta_{\star}\|  \leq \frac{b}{a}$.

Combining the triangle inequality and the smoothness assumption \ref{assump: lip and alpha beta_ICLR21}, we have
\begin{equation}
\begin{split}
    \|  \nabla L(\bx_i\mid  \bbeta)\|  \leq & C_N\|  \bbeta-\bbeta_{\star}\|   + \|  \nabla L(\bx_i\mid  \bbeta_{\star})\|  \leq C_N\|  \bbeta\|   + \frac{C_N b}{a} + \|  \nabla L(\bx_i\mid  \bbeta_{\star})\|  .
\end{split}
\end{equation}

Setting $C_N=\frac{C}{N}$ as in (\ref{2nd_smooth_condition}) and $Q=\max_{i\in\{1,2,\cdots, N\}} \|  \nabla L(\bx_i\mid  \bbeta_{\star})\|  +\frac{b C}{a}$ completes the proof. 
\qed
\end{proof}

The following lemma is majorly adapted from Lemma C.2 of \cite{chen2018accelerating}, except that the corresponding constant in the RHS of (C.38) is slightly changed to account for the stochastic noise. A similar technique has been established in Lemma 3 of \cite{Maxim17}.
\begin{lemma}[Uniform $L^2$ bounds on replica exchange SGLD]
\label{Uniform_bound}
Under the smoothness
and dissipativity assumptions \ref{assump: lip and alpha beta_ICLR21}, \ref{assump: dissipitive}.  Given a small enough learning rate $\eta\in (0, 1\vee \frac{A}{C^2})$, there exists a positive constant $\Psi_{d,\tau^{(2)}, C, A, B}<\infty$ such that $\sup_{k\geq 1} \E[\|  \bbeta_{k}\|  ^2] < \Psi_{d,\tau^{(2)}, C, A, B}$.
\end{lemma}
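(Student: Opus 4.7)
The plan is to derive a one-step contraction for the second moment $\E\|\bbeta_k^{(h)}\|^2$ of each replica chain $h\in\{1,2\}$ and then unroll it to obtain the uniform bound. Because a swap step merely exchanges the two replicas' positions, any bound derived at the higher temperature $\tau^{(2)}$ dominates both chains and is itself invariant under swaps, so it suffices to analyze a single SGLD half-step at temperature $\tau^{(h)} \leq \tau^{(2)}$.

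I would begin by writing the update as $\bbeta_{k+1}^{(h)} = \bbeta_k^{(h)} - \eta g_k + \sqrt{2\eta\tau^{(h)}}\bxi_k$ with stochastic gradient $g_k = \tfrac{N}{n}\sum_{i\in B_k}\nabla L(\bx_i\mid \bbeta_k^{(h)})$, then squaring and taking expectation. Since the Gaussian increment has mean zero and $\E\|\bxi_k\|^2 = d$, the cross terms drop and we get
\begin{equation*}
\E\|\bbeta_{k+1}^{(h)}\|^2 = \E\|\bbeta_k^{(h)}\|^2 - 2\eta\,\E\langle \bbeta_k^{(h)}, g_k\rangle + \eta^2\,\E\|g_k\|^2 + 2\eta\tau^{(h)}d.
\end{equation*}
The drift term is controlled by dissipativity: because $g_k$ is unbiased, $\E[\langle \bbeta_k^{(h)}, g_k\rangle\mid \bbeta_k^{(h)}] = \langle \bbeta_k^{(h)}, \nabla L(\bbeta_k^{(h)})\rangle \geq A\|\bbeta_k^{(h)}\|^2 - B$ by assumption \ref{assump: dissipitive}. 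For the gradient's second moment I would combine $\|\sum_{i\in B_k} v_i\|^2 \leq n\sum_{i\in B_k}\|v_i\|^2$ with Lemma \ref{grad_bound} and $(a+b)^2 \leq 2a^2 + 2b^2$ to obtain $\E\|g_k\|^2 \leq 2C^2\,\E\|\bbeta_k^{(h)}\|^2 + 2Q^2$.

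Collecting these estimates and using $\tau^{(h)} \leq \tau^{(2)}$ yields the one-step recursion
\begin{equation*}
\E\|\bbeta_{k+1}^{(h)}\|^2 \leq (1 - 2\eta A + 2\eta^2 C^2)\,\E\|\bbeta_k^{(h)}\|^2 + 2\eta B + 2\eta^2 Q^2 + 2\eta\tau^{(2)}d.
\end{equation*}
For $\eta$ small enough that $\rho := 1 - 2\eta A + 2\eta^2 C^2 \in (0,1)$ (satisfied whenever $\eta < A/C^2$), unrolling gives $\E\|\bbeta_k^{(h)}\|^2 \leq \rho^k\,\E\|\bbeta_0^{(h)}\|^2 + (2\eta B + 2\eta^2 Q^2 + 2\eta\tau^{(2)}d)/(1-\rho)$, and I would take $\Psi_{d,\tau^{(2)},C,A,B}$ to be any constant exceeding this uniform-in-$k$ upper bound. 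The main subtlety is the accounting around swap steps: an accepted swap must not inflate the induction, which is immediate because the swap merely permutes $\bbeta_k^{(1)}$ and $\bbeta_k^{(2)}$ and the bound is symmetric in $h$. Everything else follows the standard \emph{dissipativity contracts, smoothness bounds the gradient, Brownian noise contributes $2\eta\tau d$} template used in Raginsky et al.\ and Chen et al.
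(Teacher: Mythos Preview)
Your proposal is correct and is exactly the approach the paper intends: the paper does not give its own proof of this lemma but simply states that it is adapted from Lemma~C.2 of \cite{chen2018accelerating} (with a minor constant change to accommodate stochastic noise) and Lemma~3 of \cite{Maxim17}, which are precisely the Raginsky et al.\ and Chen et al.\ templates you invoke. Your one-step contraction via dissipativity, the bound $\E\|g_k\|^2 \le 2C^2\E\|\bbeta_k\|^2 + 2Q^2$ from Lemma~\ref{grad_bound} (which is the same estimate the paper uses in \eqref{var_3rd}), and the observation that swaps merely permute the replicas all match the intended argument.
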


\begin{lemma}[Exponential dependence on the variance]
\label{exponential_dependence}
Assume $S$ is a log-normal distribution with mean $u-\frac{1}{2}\sigma^2$ and variance $\sigma^2$ on the log scale. Then $\E[\min(1, S)]=\mathcal{O}(e^{u-\frac{\sigma^2}{8}})$, which is {exponentially smaller} given a large variance $\sigma^2$.
\end{lemma}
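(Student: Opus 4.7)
The plan is to combine a simple pointwise bound on $\min(1,\cdot)$ with the moment generating function of a Gaussian. Writing $X=\log S$, so that $X\sim\mathcal{N}(u-\sigma^2/2,\sigma^2)$, I would first observe the elementary pointwise inequality $\min(1,e^x)\le e^{x/2}$ for every $x\in\mathbb{R}$ (check the two cases $x\ge 0$ and $x<0$ separately). Applied to $X$ this gives $\min(1,S)\le \sqrt{S}$, so by the Gaussian MGF,
\begin{equation*}
\E[\min(1,S)]\le \E\!\left[e^{X/2}\right]=\exp\!\left(\tfrac{1}{2}\bigl(u-\tfrac{\sigma^2}{2}\bigr)+\tfrac{\sigma^2}{8}\right)=e^{u/2-\sigma^2/8}.
\end{equation*}
This one-line argument already displays the essential $e^{-\sigma^2/8}$ decay in the variance. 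The stated form $\mathcal{O}(e^{u-\sigma^2/8})$ then follows by absorbing the residual $e^{u/2}$ factor into the implicit $\mathcal{O}$ constant, which is harmless in the way the lemma is invoked inside Lemma~\ref{exp_S}: there $u=\log S(\bbeta^{(1)},\bbeta^{(2)})$ appears multiplicatively as $S(\bbeta^{(1)},\bbeta^{(2)})=e^u$, and it is the exponential factor $e^{-\sigma^2/8}$ in the variance that carries the meaningful acceleration.

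For a more transparent derivation, and in case one prefers to avoid the $\sqrt{S}$ trick, I would instead split
\begin{equation*}
\E[\min(1,S)]=\E\!\left[S\,\mathbf{1}_{\{S\le 1\}}\right]+P(S>1)
\end{equation*}
and handle each piece with a standard Gaussian tail estimate. Completing the square in the density of $X$ yields the closed form $\E[S\,\mathbf{1}_{\{S\le 1\}}]=e^u\,\Phi\!\bigl(-(u+\sigma^2/2)/\sigma\bigr)$, while $P(S>1)=1-\Phi\!\bigl((\sigma^2/2-u)/\sigma\bigr)$. Applying the Mills-ratio bound $1-\Phi(t)\le \phi(t)/t$ to both, and expanding $(\sigma/2\pm u/\sigma)^2/2=\sigma^2/8\pm u/2+u^2/(2\sigma^2)$, exhibits a common $e^{-\sigma^2/8}$ factor with only subexponential prefactors, reproducing the same $e^{u/2-\sigma^2/8}$ scaling as above.

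The only delicate point is pinning down the precise exponent $-\sigma^2/8$. This constant is not arbitrary: it is the value of $\lambda(\lambda-1)\sigma^2/2$ at $\lambda=1/2$ in the log-normal moment formula $\E[S^\lambda]=e^{\lambda u+\lambda(\lambda-1)\sigma^2/2}$, and $\lambda=1/2$ is precisely the exponent singled out by the inequality $\min(1,e^x)\le e^{x/2}$. I do not anticipate any further obstacle, since the remainder is purely an application of the Gaussian MGF, or equivalently of elementary Gaussian tail estimates.
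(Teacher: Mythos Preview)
Your primary argument is correct and is genuinely slicker than the paper's own proof. The paper proceeds by writing out the log-normal density, splitting the integral at $S=1$, performing the change of variable $y=(\log S-u+\sigma^2/2)/\sigma$, completing the square in the first piece, and then bounding both Gaussian tail integrals by $e^{-(-u/\sigma+\sigma/2)^2/2}$, arriving at essentially the same $e^{-\sigma^2/8}$ decay. Your alternative derivation via $\Phi$ and the Mills ratio is thus exactly the paper's route, just stated more compactly. By contrast, your main trick---the pointwise bound $\min(1,e^x)\le e^{x/2}$ followed by the Gaussian MGF---replaces all of that with a single line, and it is worth noting that $\lambda=1/2$ is the optimal choice in $\min(1,e^x)\le e^{\lambda x}$, so no sharper exponent than $-\sigma^2/8$ can be extracted this way. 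The one point to flag is your remark about ``absorbing the residual $e^{u/2}$ factor into the implicit $\mathcal{O}$ constant'': since $u$ is not fixed, $e^{u/2}$ is not a constant, so strictly speaking you obtain $e^{u/2-\sigma^2/8}$ rather than $e^{u-\sigma^2/8}$. However, the paper's own computation suffers from the same looseness (its final line reads $(e^u+1/\sigma)e^{-(-u/\sigma+\sigma/2)^2/2}\lesssim e^{u-\sigma^2/8}$, which after expanding the square carries an $e^{3u/2}$-type prefactor), and in the downstream application in Lemma~\ref{exp_S} only the $e^{-\sigma^2/8}$ factor in the variance is used, so this discrepancy is immaterial.
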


\begin{proof} For a log-normal distribution $S$ with mean $u-\frac{1}{2}\sigma^2$ and variance $\sigma^2$ on the log scale, the probability density $f_S(S)$ follows that $\frac{1}{S\sqrt{2\pi \sigma^2}}\exp\left\{-\frac{(\log S-u+\frac{1}{2}\sigma^2)^2}{2\sigma^2}\right\}$. In what follows, we have
\begin{equation*}
\small
\begin{split}
    \E[\min(1, S)]=&\int_{0}^{\infty} \min(1, S) f_S(S)dS=\int_{0}^{\infty} \min(1, S) \frac{1}{S\sqrt{2\pi \sigma^2}}\exp\left\{-\frac{(\log S-u+\frac{1}{2}\sigma^2)^2}{2\sigma^2} \right\}dS\\
\end{split}
\end{equation*}

By change of variable $y=\frac{\log S-u+\frac{1}{2}\sigma^2}{\sigma}$ where $S=e^{\sigma y+u-\frac{1}{2}\sigma^2}$ and $y=-\frac{u}{\sigma}+\frac{\sigma}{2}$ given $S=1$, it follows that
\begin{equation*}
\begin{split}
\small
    &\E[\min(1, S)]\\
    =&\int_{0}^{1} S \frac{1}{S\sqrt{2\pi \sigma^2}}\exp\left\{-\frac{(\log S-u+\frac{1}{2}\sigma^2)^2}{2\sigma^2} \right\}dS+\int_{1}^{\infty} \frac{1}{S\sqrt{2\pi \sigma^2}}\exp\left\{-\frac{(\log S-u+\frac{1}{2}\sigma^2)^2}{2\sigma^2} \right\}dS\\
    =&\int_{-\infty}^{-\frac{u}{\sigma}+\frac{\sigma}{2}} \frac{1}{\sqrt{2\pi\sigma^2}} e^{-\frac{y^2}{2}}\sigma e^{u-\frac{1}{2}\sigma^2+\sigma y}dy+\int_{-\frac{u}{\sigma}+\frac{\sigma}{2}}^{\infty}\frac{1}{\sqrt{2\pi\sigma^2}} e^{-\sigma y-u+\frac{1}{2}\sigma^2}e^{-\frac{y^2}{2}}\sigma e^{u-\frac{1}{2}\sigma^2+\sigma y}dy \\
    =&e^u\int_{-\infty}^{-\frac{u}{\sigma}+\frac{\sigma}{2}} \frac{1}{\sqrt{2\pi}} e^{-\frac{(y-\sigma)^2}{2}}dy+\frac{1}{\sigma}\int_{-\frac{u}{\sigma}+\frac{\sigma}{2}}^{\infty} \frac{1}{\sqrt{2\pi}} e^{-\frac{y^2}{2}}dy\\
    =&e^u\int^{\infty}_{\frac{u}{\sigma}+\frac{\sigma}{2}} \frac{1}{\sqrt{2\pi}} e^{-\frac{z^2}{2}}dz+\frac{1}{\sigma}\int_{-\frac{u}{\sigma}+\frac{\sigma}{2}}^{\infty} \frac{1}{\sqrt{2\pi}} e^{-\frac{y^2}{2}}dy\\
    \leq & e^u\int^{\infty}_{-\frac{u}{\sigma}+\frac{\sigma}{2}} \frac{1}{\sqrt{2\pi}} e^{-\frac{z^2}{2}}dz+\frac{1}{\sigma}\int_{-\frac{u}{\sigma}+\frac{\sigma}{2}}^{\infty} \frac{1}{\sqrt{2\pi}} e^{-\frac{y^2}{2}}dy\\
    \leq &\left(e^u+\frac{1}{\sigma}\right)e^{-\frac{(-\frac{u}{\sigma}+\frac{\sigma}{2})^2}{2}}\lesssim e^{u-\frac{\sigma^2}{8}},\\
\end{split}
\end{equation*}
where the last equality follows from the change of variable $z=\sigma-y$ and the second last inequality follows from the exponential tail bound of the standard Gaussian variable $\mathbb{P}(y>\epsilon)\leq e^{\frac{-\epsilon^2}{2}}$.
\qed
\end{proof}

\chapter{TECHNICAL PROOFS FOR CHAPTER \ref{NRPT_uncertainty}}

\section{Background}

\subsection{Replica Exchange Stochastic Gradient Langevin Dynamics}
\label{reSGLD_appendix}

To approximate the replica exchange Langevin diffusion Eq.(\ref{sde_2_couple}) based on multiple particles $(\bbeta_{k+1}^{(1)}, \bbeta_{k+1}^{(2)}, \cdots, \bbeta_{k+1}^{(P)})$ in big data scenarios, replica exchange stochastic gradient Langevin dynamics (reSGLD) proposes the following numerical scheme:

\begin{equation}  
\begin{split}
\label{sde_2_couple_numerical}
\text{Exploration: }\left\{  
             \begin{array}{lr}  
             \footnotesize{\small{\bbeta_{k+1}^{(P)}=\bbeta_{k}^{(P)} - \eta^{(P)}\nabla \widetilde L(\bbeta_k^{(P)})}+\sqrt{2\eta^{(P)}\tau^{(P)}} \bxi_k^{(P)}}, \\  
              \\
              \footnotesize{\cdots}  \\
             \footnotesize{\bbeta_{k+1}^{(2)} =\bbeta_{k}^{(2)} \ - \eta^{(2)}\nabla \widetilde L(\bbeta_k^{(2)})}\ +\sqrt{2\eta^{(2)}\tau^{(2)}} \bxi_k^{(2)},\\
             \end{array}
\right. \qquad\qquad\qquad\qquad\  & \\
\text{Exploitation: } \quad\footnotesize{\bbeta_{k+1}^{(1)}=\bbeta_{k}^{(1)} \ - \eta^{(1)}\nabla \widetilde L(\bbeta_k^{(1)})\ \ +\sqrt{2\eta^{(1)}\tau^{(1)}} \bxi_k^{(1)},}\ \qquad\qquad\qquad\qquad\qquad\ \ &\\
\end{split}
\end{equation}
where $\eta^{(\cdot)}$ is the learning rate, $\bxi_k^{(\cdot)}$ is a standard $d$-dimensional Gaussian noise and each subprocess follows a stochastic gradient Langevin dynamics  (SGLD) \cite{Welling11}. Further assuming the energy normality assumption \ref{energy_normality} as follows
\begin{assumption}[Energy normality]\label{energy_normality}
The stochastic energy estimator for each chain follows a normal distribution with a fixed variance.
\end{assumption}

In what follows, there exists a $\sigma_p$ such that 
\begin{equation}\label{sigma_p}
    \widetilde L(\bbeta^{(p)})-\widetilde L(\bbeta^{(p+1)})\sim\mathcal{N}(L(\bbeta^{(p)})- L(\bbeta^{(p+1)}), 2\sigma_p^2), \text{ for any } \bbeta^{(p)}\sim \pi^{(p)},
\end{equation}
where $p\in\{1,2,\cdots, P-1\}$, in what follows, \cite{deng2020} proposed the bias-corrected swap function as follows 
\begin{equation}
\label{swap_rate_numerical}
    % \vskip -1 in
    a\widetilde S(\bbeta_{k+1}^{(p)}, \bbeta_{k+1}^{(p+1)})=a\cdot\left(1\wedge e^{ \big(\frac{1}{\tau^{(p)}}-\frac{1}{\tau^{(p+1)}}\big)\big(\widetilde L(\bbeta_{k+1}^{(p)})-\widetilde L(\bbeta_{k+1}^{(p+1)})-\big(\frac{1}{\tau^{(p)}}-\frac{1}{\tau^{(p+1)}}\big)\sigma_p^2\big)}\right),
\end{equation}
where $p\in\{1,2,\cdots, P-1\}$, $\big(\frac{1}{\tau^{(p)}}-\frac{1}{\tau^{(p+1)}}\big)\sigma_p^2$ is a correction term to avoid the bias and the swap intensity $a$ can be then set to $\frac{1}{\min\{\eta^{(p)}, \eta^{(p+1)}\}}$ for convenience. Namely, given a particle pair at location $(\beta^{(p)}, \beta^{(p+1)})$ in the $k$-th iteration, the conditional probability of the swap follows that
\begin{equation*}
\begin{split}
    \mathbb{P}(\bbeta_{k+1}=(\beta^{(p+1)}, \beta^{(p)})\mid \bbeta_k=(\beta^{(p)}, \beta^{(p+1)}))&=\widetilde S(\beta^{(p)}, \beta^{(p+1)}),\\
    \mathbb{P}(\bbeta_{k+1}=(\beta^{(p)}, \beta^{(p+1)})\mid \bbeta_k=(\beta^{(p)}, \beta^{(p+1)}))&=1-\widetilde  S(\beta^{(p)}, \beta^{(p+1)}).\\
\end{split}
\end{equation*}

\subsection{Inhomogenous Swap Intensity via Different Learning Rates}
\label{diff_lr}
Assume the high-temperature process also applies $\eta^{(p+1)}\geq \eta^{(p)}>0$, where $p\in\{1,2\cdots, P-1\}$. At time $t$, the swap intensity can be interpreted as being $0$ in a time interval $[t, t+\eta^{(p+1)}-\eta^{(p)})$ and being $\frac{1}{\eta^{(p)}}$ in $[t+\eta^{(p+1)}-\eta^{(p)}, t+\eta^{(p+1)})$. Since the coupled Langevin diffusion process converges to the same joint distribution regardless of the swaps and the swap intensity $a$ varies in a bounded domain, the convergence of numerical schemes is not much affected except the numerical error.

\subsection{Connection to Non-convex Optimization}
\label{connection_2_non_convex}

Parallel to our work, a similar SGLD$\times$SGD framework was proposed by \cite{jingdong2} for non-convex optimization, where SGLD and SGD work as exploration and exploitation kernels, respectively. By contrast, our algorithm performs \emph{exactly in the opposite} for uncertainty approximation because SGLD is theoretically more appealing for the exploitations based on small learning rates instead of explorations, while the widely-adopted SGDs are quite attractive in exploration due to its user-friendly nature and ability in exploring wide optima.

If we manipulate the scheme to propose \emph{an exact swap} in each window instead of \emph{at most one} in the current version, the algorithm shows a better potential in non-convex optimization. In particular, a larger window size corresponds to a slower decay of temperatures in simulated annealing (SAA) \cite{Mangoubi18}. Such a mechanism yields a larger hitting probability to move into a sub-level set with lower energies (losses) and a better chance to hit the global optima. Nevertheless, the manipulated algorithm possess the natural of parallelism in cyclical fashions.

\subsection{Others}
\label{others}

\textbf{Swap time} refers to the communication time to conduct a swap in each attempt. For example, chain pair $(p, p+1)$ of ADJ requires to wait for the completion of chain pairs $(1, 2), (2,3), \cdots, (p-1, p)$ to attempt the swap and leads to swap time of $O(P)$; however, SEO, DEO, and DEO$_{\star}$ don't have this issue because in each iteration, only even or odd chain pairs are attempted to swap, hence the swap time is $O(1)$.

\textbf{Round trip time} refers to the time (stochastic variable) used in a round trip. A round trip is completed when a particle in the $p$-th chain, where $p\in[P]:=\{1, 2,\cdots, P\}$, hits the index boundary index at both $1$ and $P$ and returns back to its original index $p$.

\section{Analysis of Round Trips}
\label{analysis_of_round_trip}
To facilitate the theoretical analysis, we follow \cite{Syed_jrssb} and make the following assumptions 

\begin{assumption}[Stationarity]\label{Stationarity_process}
Each sub-process has achieved the stationary distribution $\bbeta^{(p)}\sim \pi^{(p)}$ for any $p\in [P]$.
\end{assumption}

\begin{assumption}[Weak independence]\label{weak_independence}
For any $\bar\bbeta^{(p)}$ simulated from the $p$-th chain conditional on $\bbeta^{(j)}$, $L(\bar\bbeta^{(j)})$ and $L(\bbeta^{(j)})$ are independent.
\end{assumption}

\subsection{Analysis of Round Trip Time}
\label{analyze_round_trip}
\begin{proof}[Proof of Lemma \ref{thm:round_trip_time}] For $t\in\mathbb{N}$, define $Z_t\in[P]=\{1,2,\cdots, P\}$ as the index of the chain a particle arrives after $t$ windows. 
Define $\delta_t\in\{1,-1\}$ to indicate the direction of the swap a particle intends to make during the $t$-th window; i.e., the swap is between $Z_t$ and $Z_{t}+1$ if $\delta_t=1$ and is between $Z_t$ and $Z_{t}-1$ if $\delta_t=-1$.

Define $U:=\min\{t\ge 0: Z_t=P,\delta_t=-1\}$ and $V:=\min\{t\ge 0: Z_t=1,\delta_t=1\}$. Define $r_p:=\hP[\text{ reject the swap between Chain } p \text{ and Chain } p+1\text{ for one time}]$ . 
Define $u_{p,\delta}:=\hE[U\mid Z_0=p,\delta_0=\delta]$ for $\delta\in\{1,-1\}$ and $v_{p,\delta}:=\hE[V\mid Z_0=p,\delta_0=\delta]$. Then the expectation of round trip time $T$ is
\begin{equation} \label{eq:ERTT}
\hE[T]=W(u_{1,1}+u_{P,-1}).
\end{equation}
By the Markov property, for $u_{p,\delta}$, we have
\begin{align}
&u_{p,1}=r_p^W (u_{p,-1}+1)+(1-r_p^W) (u_{p+1,1}+1) \label{eq:u1}\\
&u_{p,-1}=r_{p-1}^W  (u_{p,1}+1)+(1-r_{p-1}^W) (u_{p-1,-1}+1), \label{eq:u2}
\end{align}
where $r_{p}^W$ denotes the rejection probability of a particle in a window of size $W$ at the $p$-th chain.

According to Eq.(\ref{eq:u1}) and Eq.(\ref{eq:u2}), we have
\begin{align}
u_{p+1,1}-u_{p,1}=r_p^W(u_{p+1,1}-u_{p,-1})-1 \label{eq:u-1}\\
u_{p,-1}-u_{p-1,-1}=r_{p-1}^W(u_{p,1}-u_{p-1,-1})+1. \label{eq:u-2}
\end{align}
Define $\alpha_{p}=u_{p,1}-u_{p-1,-1}$. Then by definition, Eq.(\ref{eq:u-1}), and Eq.(\ref{eq:u-2}), we have
\begin{align*}
\alpha_{p+1}-\alpha_{p}&=(u_{p+1,1}-u_{p,-1})-(u_{p,1}-u_{p-1,-1})\\&=
(u_{p+1,1}-u_{p,1})-(u_{p,-1}-u_{p-1,-1})\\&=
r_{p}^W\alpha_{p+1}-r_{p-1}^W\alpha_{p}-2,
\end{align*}
which implies that
\begin{equation} \label{eq:a1}
a_{p+1}-a_p=-2,
\end{equation}
for $a_p:=(1-r_{p-1}^W)\alpha_{p}$. Thus, by Eq.(\ref{eq:a1}), we have
\begin{align} \label{eq:a2}
a_{p}=a_2-2(p-2).
\end{align}
By definition, $u_{1,-1}=u_{1,1}+1$.
According to Eq.(\ref{eq:u-1}), for $p=1$, we have
\begin{equation} \label{eq:a4}
\begin{aligned}
a_{2}&=(1-r_1^W)(u_{2,1}-u_{1,-1})\\&=
(1-r_1^W)\left[u_{1,1}-u_{1,-1}+r_1^W(u_{2,1}-u_{1,-1})-1\right]\\&=
-2(1-r_1^W)+r_1^Wa_{2}.
\end{aligned}
\end{equation}
Since $r_{p}\in(0,1)$ for $1\le p\le P$, Eq.(\ref{eq:a4}) implies $a_2=-2$ which together with Eq.(\ref{eq:a2}) implies 
\begin{equation} \label{eq:a5}
    (1-r_{p-1}^W)\alpha_p=a_p=-2(p-1),
\end{equation}
and therefore
\begin{equation} \label{eq:alpha1}
    r_{p-1}^W\alpha_p=-2(p-1)\frac{r_{p-1}^W}{1-r_{p-1}^W}.
\end{equation}
According to Eq.(\ref{eq:u-2}) and Eq.(\ref{eq:alpha1}), we have
\begin{align}
u_{P,1}-u_{1,1}=&\sum_{p=1}^{P-1}r_{p}^W(u_{p+1,1}-u_{p,-1})-(P-1)\\=&
\sum_{p=1}^{P-1}r_p^W\alpha_{p+1}-(P-1)\\=&
-2\sum_{p=1}^{P-1}\frac{r_{p}^W}{1-r_{p}^W}p-(P-1).
\end{align}
Since $u_{P,1}=1$, we have
\begin{equation} \label{eq:u1-1}
\begin{split}
u_{1,1}&=u_{P,1}+2\sum_{p=1}^{P-1}\frac{r_{p}^W}{1-r_{p}^W}p+(P-1)\\&=
P+2\sum_{p=1}^{P-1}\frac{r_{p}^W}{1-r_{p}^W}p.
\end{split}
\end{equation}

Similarly, for $v_{p,\delta}$, we also have
\begin{align*}
&v_{p,1}=r_p^W (v_{p,-1}+1)+(1-r_p^W) (v_{p+1,1}+1)\\
&v_{p,-1}=r_{p-1}^W  (v_{p,1}+1)+(1-r_{p-1}^W) (v_{p-1,-1}+1).
\end{align*}
With the same analysis, we have
\begin{align}
&b_{p+1}-b_p=-2 \label{eq:bp}\\
&v_{p,-1}-v_{p-1,-1}=r_{p-1}^W(v_{p,1}-v_{p-1,-1})+1, \label{eq:v3}
\end{align}
where $b_p:=(1-r_{p-1}^W)\beta_p$ and $\beta_p:=v_{p,1}-v_{p-1,-1}$.
According to Eq.(\ref{eq:v3}), we have
\begin{align} \label{eq:v4}
v_{P,-1}-v_{1,-1}=\sum_{p=1}^{P-1}r_p^W\beta_{p+1}+(P-1).
\end{align}
By definition, $v_{P,1}=v_{P,-1}+1$. 
According to Eq.(\ref{eq:v3}), for $p=P$, we have
\begin{equation} \label{eq:b3}
\begin{aligned}
b_{P}&=(1-r_{P-1}^W)(v_{P,1}-v_{P-1,-1})\\&=
(1-r_{P-1}^W)\left[v_{P,1}-v_{P,-1}+r_{P-1}^W(v_{P,1}-v_{P-1,-1})+1\right]\\&=
r_{P-1}^Wb_{P} + 2(1-r_{P-1}^W).
\end{aligned}
\end{equation}
Since $r_{p}\in(0,1)$ for $1\le p\le P$, Eq.(\ref{eq:b3}) implies $b_{P}=2$. 
Then according to Eq.(\ref{eq:bp}), we have
\begin{equation*}
b_{p}=2(P-p+1),
\end{equation*}
and therefore
\begin{equation} \label{eq:v5}
r_{p-1}^W\beta_{p}=2\frac{r_{p-1}^W}{1-r_{p-1}^W}(P-p+1).
\end{equation}
By Eq.(\ref{eq:v4}) and Eq.(\ref{eq:v5}), we have
\begin{equation*}
\begin{aligned}
v_{P,-1}-v_{1,-1}=2\sum_{p=1}^{P-1}\frac{r_{p}^W}{1-r_{p}^W}(P-p)+(P-1).
\end{aligned}
\end{equation*}
Since $v_{1,-1}=1$, we have
\begin{align} \label{eq:vP-1}
v_{P,-1} = 2\sum_{p=1}^{P-1}\frac{r_{p}^W}{1-r_{p}^W}(P-p)+P.
\end{align}

According to Eq.(\ref{eq:ERTT}), Eq.(\ref{eq:u1-1}) and Eq.(\ref{eq:vP-1}), we have
\begin{align*}
\hE[T]=W(u_{1,1}+u_{P,-1})=2WP+2WP\sum_{p=1}^{P-1}\frac{r_p^W}{1-r_p^W}.
\end{align*}
\end{proof}

The proof is a generalization of Theorem 1 \cite{Syed_jrssb}; when $W=1$, the generalized DEO scheme recovers the DEO scheme. For the self-consistency of our analysis, we present it here anyway.

\subsection{Analysis of Optimal Window Size}
\label{optimal_window_size}
\begin{proof}[Proof of Theorem \ref{col:W_approx}]
By treating $W\geq 1$ as a continuous variable and taking the derivative of $\hE[T]$ with respect to $W$ and we can get
\begin{equation} \label{eq:dWRTT1}
\begin{aligned}
\frac{\partial}{\partial W}\hE[T]&=
2P\left[1+\sum_{p=1}^{P-1}\frac{r_p^W}{(1-r_p^W)}+W\sum_{p=1}^{P-1}\frac{r_p^{W}\log r_p}{(1-r_p^W)^2}\right]\\&=
2P\left\{1+\sum_{p=1}^{P-1}\frac{r_p^W-r_p^{2W}+Wr_p^{W}\log r_p}{(1-r_p^W)^2}\right\}\\&=
2P\left\{1+\sum_{p=1}^{P-1}r_p^W\frac{1-r_p^{W}+W\log r_p}{(1-r_p^W)^2}\right\}.
\end{aligned}
\end{equation}

Assume that $r_p=r\in(0,1)$ for $1\le p\le P$. Then we have
\begin{equation} \label{eq:RTT_eqr}
\hE[T]=2WP+2WP(P-1)\frac{r^W}{1-r^W}.
\end{equation}
\begin{equation} \label{eq:dWRTT_v2}
\frac{\partial}{\partial W}\hE[T]=\frac{2P}{(1-r^W)^2}\left\{(1-r^W)^2+(P-1)r^W(1-r^W+W\log r)\right\}.
\end{equation}

% In the following, we attempt to analyze the function $(1-r^W)^2+(P-1)r^W(1-r^W+W\log r)=0$. 
Define $x:=r^W\in(0,1)$. Hence $W=\log_r(x)=\frac{\log x}{\log r}$ and
\begin{align} \label{eq:dWRTT_v3}
\frac{\partial}{\partial W}\hE[T]=f(x):=\frac{2P}{(1-x)^2}\left\{(1-x)^2+(P-1)x(1-x+\log x)\right\}
\end{align}
Thus it suffices to analyze the sign of the function $g(x):=(1-x)^2+(P-1)x(1-x+\log(x))$ for $x\in(0,1)$. For $g(x)$, we have
\begin{equation*}
g'(x)=(4-2P)(x-1) + (P-1) \log(x)
\end{equation*}
and
\begin{equation*}
g''(x)=4-2P + \frac{P-1}{x}.
\end{equation*}
Thus, $\lim_{x\rightarrow0^+}g'(x)=-\infty$, $\lim_{x\rightarrow1}g'(x)=g'(1)=0$ and $g''(x)$ is monotonically decreasing for $x>0$.
\begin{enumerate}
\item For $P>2$, we know that $g''(x)>0$ when $0<x<\frac{P-1}{2(P-2)}$ and $g''(x)<0$ when $x<\frac{P-1}{2(P-2)}$. Therefore, $g'(x)$ is maximized at $\frac{P-1}{2(P-2)}$.
\begin{enumerate}
    \item If $P=3$, by $\log(1+y)<y$ for $y>0$, we have $g'(x)=2\big(\log(1+(x-1))-(x-1)\big)<0$ for any $x\in(0,1)$. Thus $g(x)>g(1)=0$ for $x\in(0,1)$ and therefore $\frac{\partial}{\partial W}\hE[T]>0$ for $W\in\hN^{+}$. $\hE[T]$ is globally minimized at $W=1$.
    \item If $P>3$, 
    we have the following lemma and the proof is postponed in section \ref{proof_tech_lemma}.
    \begin{lemma}[Uniqueness of the solution] \label{lem:g_property}
    For $P>3$, there exists a unique solution $x^*\in(0,1)$ such that $g(x)>0$ for $\forall x\in(0,x^*)$ and $g(x)<0$ for $\forall x \in(x^*,1)$. Moreover, $W^*=\log_{r}(x^*)$ is the globally minimizer for the round trip time.
    \end{lemma}
\end{enumerate}

\item For $P=2$, we know that $g''(x)>0$ for $x\in(0,1)$. Thus $g'(x)<g(1)=0$ for $x\in(0,1)$ and therefore $g(x)>g'(1)=0$ for $x\in(0,1)$. Then according to Eq.(\ref{eq:dWRTT1}), $\frac{\partial}{\partial W}\hE[T]>0$ for $W\in\hN^{+}$. $\hE[T]$ is globally minimized at $W=1$.
\end{enumerate}

In what follows, we proceed to prove that $\frac{1}{P\log P}$ is a good approximation to $x^*$. In fact, we have
\begin{equation*}
\begin{aligned}
g\bigg(\frac{1}{P\log P}\bigg)&=\left(1-\frac{1}{P\log P}\right)^2+\frac{P-1}{P\log P}\left(1-\frac{1}{P\log P}-\log P-\log(\log P)\right)\\&=
1+o\left(\frac{1}{P}\right)-1+\frac{1}{\log P}-\frac{\log (\log P)}{\log P}+O\left(\frac{1}{P}\right)\\&=
-\frac{\log(\log P)}{\log P}+O\left(\frac{1}{\log P}\right)
\end{aligned}    
\end{equation*}
Thus, $\lim_{P\rightarrow\infty} g(\frac{1}{P\log P})=0$.

For $x=\frac{1}{P\log P}$, we have $W=\log_r\left(\frac{1}{P\log P}\right)=\frac{\log P+\log \log P}{-\log r}$.
Then according to Eq.(\ref{eq:RTT_eqr}), for $P\ge 4$,
\begin{equation} \label{eq:optimal_ET}
\begin{aligned}
\hE[T]&=2P\frac{\log P+\log \log P}{-\log r}\left[1+(P-1)\frac{\frac{1}{P\log P}}{1-\frac{1}{P\log P}}\right]\\&=
\left(1+\frac{1}{\log P}\frac{1}{1-\frac{1}{P\log P}}\right)\frac{2}{-\log r}\left(P\log P+P\log \log P\right)\\&\le
\left(1+\frac{1}{\log 4}\frac{1}{1-\frac{1}{4\log 4}}\right)\frac{2}{-\log r}\left(P\log P+P\log \log P\right)\\&<
\frac{4}{-\log r}\left(P\log P+P\log \log P\right)
\end{aligned}
\end{equation}

In conclusion, for $P=2,3$, the maximum round trip rate is achieved when the window size $W=1$. For $P\ge 4$, with the window size $W=\frac{\log P+\log \log P}{-\log r}$, the round trip rate is at least $\Omega\left(\frac{-\log r}{\log P}\right)$. 
\end{proof}

\textbf{Remark: } Given finite chains with a large rejection rate, the round trip time is only of order $O(P\log P)$ by setting the optimal window size $W\approx\left\lceil\frac{\log P+\log\log P}{-\log r}\right\rceil$. By contrast, the vanilla DEO scheme with a window of size 1 yields a much longer time of $O(P^2)$, where $\frac{1}{-\log(r)}=O(\frac{r}{1-r})$ based on Taylor expansion and a large $r\gg 0$.

\subsubsection{Technical Lemma}
\label{proof_tech_lemma}
\begin{proof}[Proof of Lemma \ref{lem:g_property}]
To help illustrate the analysis below, we plot the graphs of $g'(x)$ and $g(x)$ for $x\in(0,1)$ and $P=5$ in Figure \ref{fig:g}. For $P>3$, since $g'(x)$ is maximized at $x=\frac{P-1}{2(P-2)}\in(0,1)$ with $g''(x)>0$ when $0<x<\frac{P-1}{2(P-2)}$ and $g''(x)<0$ when $\frac{P-1}{2(P-2)}<x<1$,  $\lim_{x\rightarrow0^+}g'(x)=-\infty$, and $g'(1)=0$, we know that $g'(\frac{P-1}{2(P-2)})> 0$ and there exists $x_0\in(0,\frac{P-1}{2(P-2)})$ such that $g'(x)<0$ (i.e., $g'(x)$ is monotonically decreasing) for any $x\in(0,x_0)$ and $g'(x)>0$ (i.e., $g'(x)$ is monotonically increasing) for any $x\in(x_0,1)$. Then $g(x)$ on $(0,1)$ is globally minimized at $x=x_0$. Moreover, $\lim_{x\rightarrow0^+}g(x)=1$ and $g(1)=0$. Thus, $g(x_0)<g(1)=0$ and there exists $x^*\in(0,x_0)\subsetneq(0,1)$ such that $g(x)>0$ if $x\in(0,x^*)$ and $g(x)<0$ if $x\in(x^*,1)$. 

Meanwhile, by Eq.(\ref{eq:dWRTT1}) and the definition of $x$, we know that the sign of $\frac{\partial}{\partial W}\hE[T]$ is the same with that of $g(x)$ for $W=\log_r(x)$. Thus, $\frac{\partial}{\partial W}\hE[T]<0$ when $W< W^*:=\log_{r}(x^*)$ and $\frac{\partial}{\partial W}\hE[T]>0$ when $W>W^*$, which implies that $\hE[T]$ is globally minimized at $W^*=\log_{r}(x^*)$ with some $x^*\in(0,1)$.

\begin{figure*}[!ht]
  \centering
  \vskip -0.1in
  \subfloat[$g'(x)$]{\includegraphics[width=5.5cm, height=5cm]{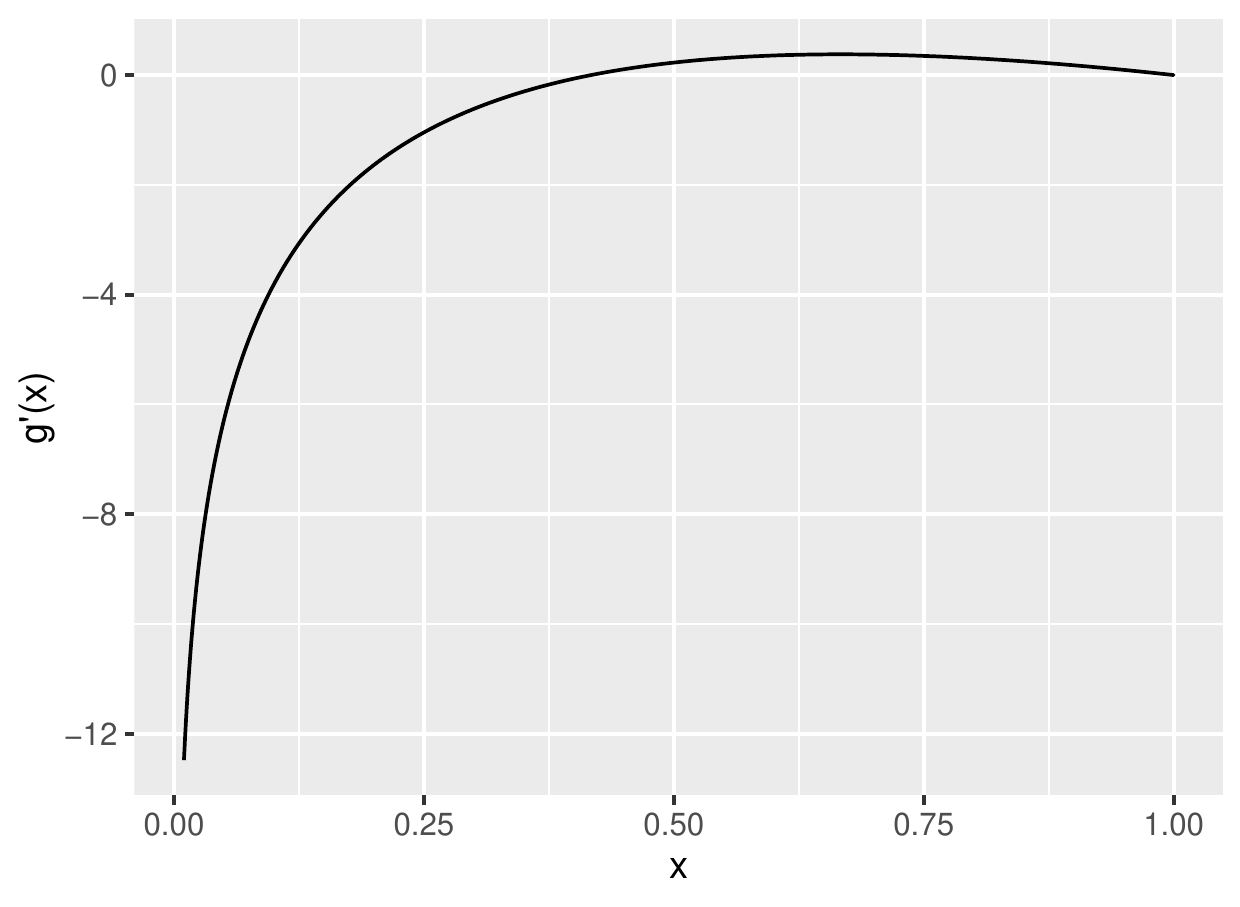}}\hspace*{0.5em} \subfloat[$g(x)$]{\includegraphics[width=5.5cm, height=5cm]{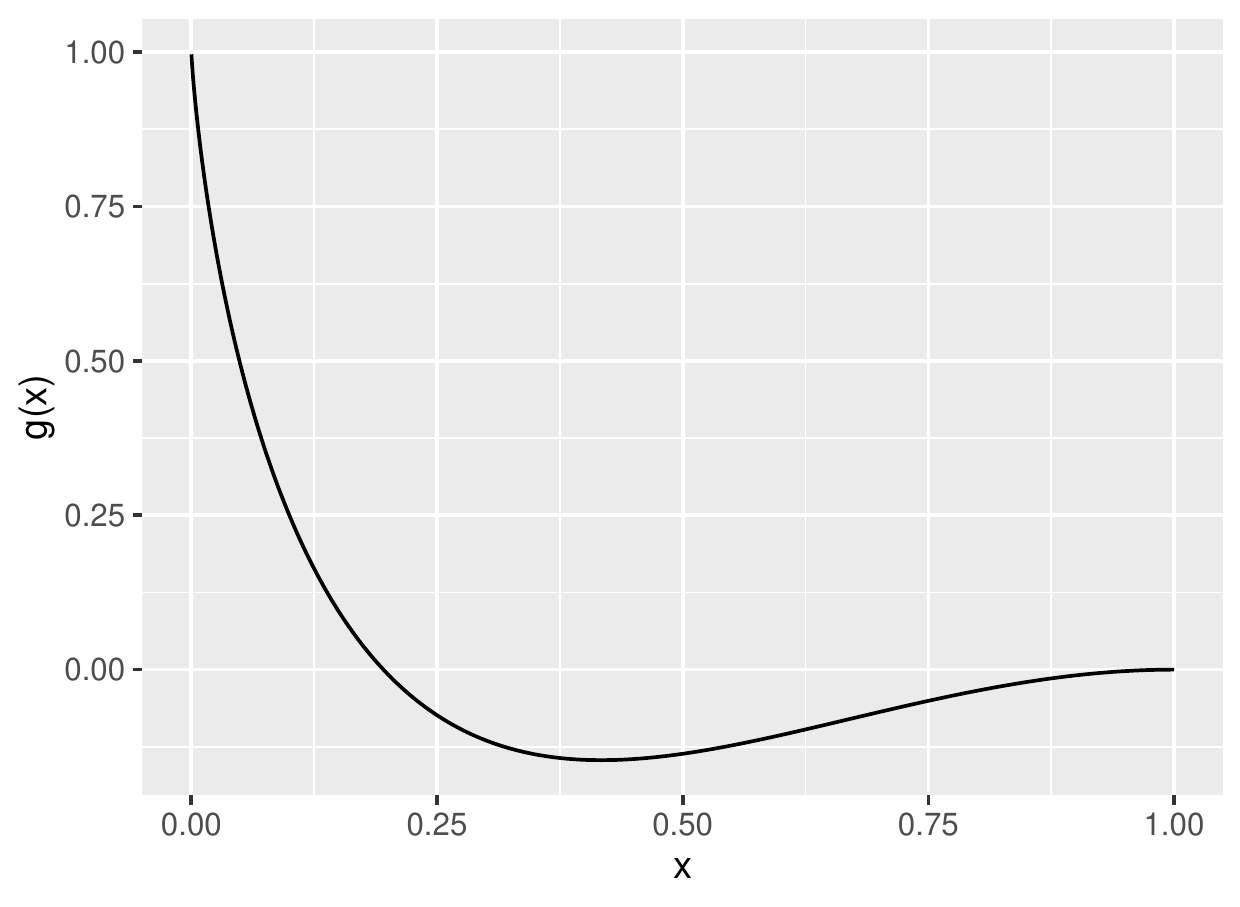}}
    \vskip -0.1in
  \caption{Illustration of functions $g'(x)$ and $g(x)$ for $x\in(0,1)$ and $P=5$}
  \label{fig:g}
%   \vspace{-1em}
\end{figure*}

\end{proof}
\chapter{TECHNICAL PROOFS FOR CHAPTER \ref{CSGLD_multi_distribution}}

\setcounter{lemma}{0}
\renewcommand\thesection{\Alph{section}}
\renewcommand{\thelemma}{\Alph{section}\arabic{lemma}}
\def\qed{ \ \vrule width.2cm height.2cm depth0cm\smallskip}
\newcommand\myeq{\stackrel{\mathclap{\normalfont\mbox{A}}}{=}}

\def \proof{{\noindent \bf Proof\quad}}

The supplementary material is organized as follows: Section \ref{review} provides a  review 
for the related methodologies, Section \ref{convergence_csgld} proves the stability condition and convergence of the self-adapting parameter, Section \ref{ergodicity} establishes the ergodicity of the contour stochastic gradient Langevin dynamics (CSGLD) algorithm,  and Section \ref{ext} provides more discussions for the algorithm.

% \Wei{$\delta$ change to $Var$}

% \Wei{verification of solution of Poisson's eqn}

% \Wei{no assumption 6.}

\section{Background on Stochastic Approximation and Poisson Equation}
\label{review}

\subsection{Stochastic Approximation}
Stochastic approximation \cite{Albert90} provides a standard framework for the development of adaptive algorithms. Given a random field function $\widetilde H(\bm{\btheta}, \bm{\bx})$, the goal of the stochastic approximation algorithm is to find the solution to the  mean-field equation $h(\btheta)=0$, i.e., solving
\begin{equation*}
\begin{split}
\label{sa00}
h(\btheta)&=\int_{\MX} \widetilde H(\bm{\theta}, \bm{\bx}) \varpi_{\bm{\theta}}(d\bm{\bx})=0,
\end{split}
\end{equation*}
where $\bx\in \MX \subset \mathbb{R}^d$, $\btheta\in\bTheta \subset \mathbb{R}^{m}$, $\widetilde H(\btheta,\bx)$ is a random field 
function and $\varpi_{\btheta}(\bx)$ is a distribution function of $\bx$ depending on the parameter $\btheta$. The stochastic approximation  algorithm works by repeating the following iterations
\begin{itemize}
\item[(1)] Draw $\bm{x}_{k+1}\sim\Pi_{\bm{\theta_{k}}}(\bm{x}_{k}, \cdot)$, where $\Pi_{\bm{\theta_{k}}}(\bm{x}_{k}, \cdot)$ is a transition kernel that admits $ \varpi_{\bm{\theta}_{k}}(\bm{x})$ as
the invariant distribution,

\item[(2)] Update $\bm{\theta}_{k+1}=\bm{\theta}_{k}+\omega_{k+1} \widetilde H(\bm{\theta}_{k}, \bm{x}_{k+1})+\omega_{k+1}^2 \rho(\bm{\theta}_{k}, \bm{x}_{k+1}),$
where $\rho(\cdot,\cdot)$ denotes a bias term. 
\end{itemize}

The algorithm differs from the Robbins–Monro algorithm \cite{RobbinsM1951} in that $\bx$ is simulated from a transition kernel $\Pi_{\bm{\theta_{k}}}(\cdot, \cdot)$ instead of the exact distribution $\varpi_{\bm{\theta}_{k}}(\cdot)$. As a result, a Markov state-dependent noise $\widetilde H(\btheta_k, \bx_{k+1})-h(\btheta_k)$ is generated, which requires some regularity conditions to control the fluctuation $\sum_k \Pi_{\btheta}^k (\widetilde H(\btheta, \bx)-h(\btheta))$. Moreover, it supports a more general form where a bounded bias term $\rho(\cdot,\cdot)$ is allowed without affecting the theoretical properties of the algorithm.

\subsection{Poisson Equation}

Stochastic approximation generates a nonhomogeneous Markov chain $\{(\bx_k, \btheta_k)\}_{k=1}^{\infty}$, for which the convergence theory can be studied based on the Poisson equation 
\begin{equation*}
    \mu_{\btheta}(\bm{x})-\mathrm{\Pi}_{\bm{\theta}}\mu_{\bm{\theta}}(\bm{x})=\widetilde H(\bm{\theta}, \bm{x})-h(\bm{\theta}),
\end{equation*}
where $\Pi_{\bm{\theta}}(\bm{x}, A)$ is the transition kernel for any Borel subset $A\subset \MX$ and $\mu_{\btheta}(\cdot)$ is a function on $\MX$.
The solution to the Poisson equation exists when 
the following series converges:
\begin{equation*}
    \mu_{\btheta}(\bx):=\sum_{k\geq 0} \Pi_{\btheta}^k (\widetilde H(\btheta, \bx)-h(\btheta)).
\end{equation*}
That is, the consistency of the estimator $\btheta$ can be established by controlling the perturbations of $\sum_{k \geq 0} \Pi_{\btheta}^k (\widetilde H(\btheta, \bx)-h(\btheta))$ via imposing some regularity conditions on $\mu_{\btheta}(\cdot)$. Towards this goal, \cite{Albert90} gave 
the following regularity conditions on $\mu_{\btheta}(\cdot)$ to ensure the convergence of the adaptive algorithm:

% {\it There exist a function  $V: \MX \to [1,\infty)$, and a constant $C$ such that for all $\bm{\theta}, \bm{\theta}'\in \bm{\bTheta}$,}
% \begin{equation*}
% \begin{split}
% \| \mathrm{\Pi}_{\bm{\theta}}\mu_{\btheta}(\bx)\| &\leq C V(\bx),\quad
% \| \mathrm{\Pi}_{\bm{\theta}}\mu_{\bm{\theta}}(\bx)-\mathrm{\Pi}_{\bm{\theta'}}\mu_{\bm{\theta'}}(\bx)\| \leq C\| \bm{\theta}-\bm{\theta}'\|  V(\bx),  \quad 
% \E[V(\bx)]\leq \infty,\\
% \end{split}
% \end{equation*}
% which requires only the first order smoothness. In contrast, the ergodicity theory by \cite{mattingly10} and \cite{VollmerZW2016} 
% relies on the much stronger 4th order smoothness.

\section{Stability and Convergence Analysis for CSGLD} \label{convergence_csgld}

\subsection{CSGLD Algorithm} \label{Alg:app}

To make the theory more general, we slightly extend CSGLD by allowing a higher order bias term. The resulting algorithm works by iterating between the following two steps:
\begin{itemize}
\item[(1)] Sample $\bm{x}_{k+1}=\bx_k- \epsilon_k\nabla_{\bx} \widetilde L(\bx_k, \btheta_k)+\mathcal{N}({0, 2\epsilon_k \tau\bm{I}}), \ \ \ \ \ \ \ \ \ \ \ \ \ \ \ \ \ \ \ \ \ \ \ \ \ \ \ \ \ \ \ \ \ \ \ \ \ \ \ \ \ \ \ \ \ \ \ \ \ \ \ \ \ \ \ \ \ \ \ \ (\text{S}_1)$

\item[(2)] Update $\bm{\theta}_{k+1}=\bm{\theta}_{k}+\omega_{k+1} \widetilde H(\bm{\theta}_{k}, \bm{x}_{k+1})
+\omega_{k+1}^2 \rho(\bm{\theta}_{k}, \bm{x}_{k+1}),
\ \ \ \ \ \ \ \ \ \ \ \ \ \ \ \ \ \ \ \ \ \ \ \ \ \ \ \ \ \ \ \ \ \ \ \ \ \  (\text{S}_2)$
\end{itemize}
where $\epsilon_k$ is the learning rate, $\omega_{k+1}$ is the step size, $\nabla_{\bx} \widetilde L(\bx, \btheta)$ is the 
stochastic gradient given by 
\begin{equation}
    \nabla_{\bx} \widetilde{L}(\bx,\btheta)= \frac{N}{n} \left[1+ 
   \frac{\zeta\tau}{\Delta u}  \left(\textcolor{black}{\log \theta(\tilde{J}(\bx))-\log\theta((\tilde{J}(\bx)-1)\vee 1)} \right) \right]  
    \nabla_{\bx} \widetilde U(\bx),
\end{equation}
$\widetilde H(\btheta,\bx)=(\widetilde H_1(\btheta,\bx), \ldots, 
 \widetilde H_m(\btheta,\bx))$ is a random field function with
\begin{equation}
     \widetilde H_i(\btheta,\bx)={\theta}^{\zeta}(\tilde J(\bx))\left(1_{i= \tilde J(\bx)}-{\theta}(i)\right), \quad i=1,2,\ldots,m,
\end{equation}
for some constant $\zeta>0$, and $\rho(\btheta_k,\bx_{k+1})$ is a bias term.

\subsection{Convergence of Parameter Estimation} 
\label{App:convergence}

To establish the convergence of $\btheta_k$, we make the following assumptions:

\begin{assump}[Compactness] \label{ass2a} 
The space $\Theta$ is compact such that $\inf_{\Theta} \theta(i) >0$ for any  $i\in \{1,2,\ldots,m\}$. There exists a large  constant $Q>0$ such that for any $\btheta\in \bTheta$ and $\bx \in \MX$, 
\begin{equation}
\label{compactness}
     \| \btheta\| \leq Q, \quad 
     \| \widetilde H(\btheta, \bx)\| \leq Q, \quad 
     \| \rho(\btheta, \bx)\| \leq Q.
\end{equation}
\end{assump}

To simplify the proof, we consider a slightly stronger assumption such that $\inf_{\Theta} \theta(i)>0$ holds for any $i \in \{1,2,\ldots,m\}$. To relax this assumption, we refer interested readers to \cite{Fort15} where the recurrence property was proved for the sequence $\{\btheta_k\}_{k\geq 1}$ of a similar algorithm. Such a property guarantees $\btheta_k$ to visit often 
enough to a desired compact space, rendering the convergence of the sequence.

\textcolor{black}{\begin{assump}[Smoothness]
\label{ass2}
$U(\bm{\xeta})$ is $M$-smooth; that is, there exists a constant $M>0$ such that for any $\bx, \bx'\in \MX$,
\begin{equation}
\label{ass_2_1_eq}
\begin{split}
\| \nabla_{\bx} U(\bx)-\nabla_{\bx} U(\bm{\bx}')\|  & \leq M\| \bx-\bx'\| . \\
\end{split}
\end{equation}
\end{assump}}

Smoothness is a standard assumption in the study of convergence of SGLD, see e.g. \cite{Maxim17,Xu18}.

\begin{assump}[Dissipativity]
\label{ass3}
 There exist constants $\tilde{m}>0$ and $\tilde{b}\geq 0$ such that for any $\bx \in \MX$ and $\btheta \in \bTheta$, 
\label{ass_dissipative}
\begin{equation}
\label{eq:01}
\langle \nabla_{\bx} L(\bx, \btheta), \bx\rangle\leq \tilde{b}-\tilde{m}\| \bx\| ^2.
\end{equation}
\end{assump}
This assumption ensures samples to move towards the origin regardless the initial point, 
which is standard in proving the geometric ergodicity of dynamical systems, see e.g. \cite{mattingly02, Maxim17, Xu18}.

\begin{assump}[Gradient noise] 
\label{ass4}
The stochastic gradient is unbiased, that is, 
\begin{equation*}
\E[\nabla_{\bx}\widetilde U(\bx_{k})-\nabla_{\bx} U(\bx_{k})]=0;
\end{equation*}
in addition, there exist some constants $M>0$ and 
$B>0$ such that
\begin{equation*} 
\E [ \| \nabla_{\bx}\widetilde U(\bx_{k})-\nabla_{\bx} U(\bx_{k})\| ^2 ] \leq M^2 \| \bx\| ^2+B^2,
\end{equation*}
where the expectation $\E[\cdot]$ is taken with respect to the distribution of the noise component included in $\nabla_{\bx} \widetilde{U}(\bx)$.
\end{assump}

Lemma \ref{convex_appendix} establishes a stability condition for CSGLD, which implies potential 
convergence of $\btheta_k$.

\begin{lemma}[Stability. Formal statement of Lemma \ref{convex_main_csgld}] \label{convex_appendix} 
Suppose that Assumptions  \ref{ass2a}-\ref{ass4}  hold. 
For any $\btheta \in \bTheta$, $\langle h(\btheta), \btheta - \btheta_{\star}\rangle \leq  -\phi\| \btheta - \btheta_{\star}\| ^2+\mathcal{O}\left(\Var(\xi_n)+\epsilon+\frac{1}{m}\right)$, where $\phi=\inf_{\btheta} Z_{\btheta}^{-1}>0$, and $\theta_{\star}$ follows that $(\int_{\MX_1}\pi(\bx)d\bx,\int_{\MX_2}\pi(\bx)d\bx,\ldots,\int_{\MX_m}\pi(\bx)d\bx)$. $\Var(\xi_n)$ denotes the largest variance of the noise in the stochastic energy estimator of batch size $n$.
\end{lemma}

\begin{proof}
 Let $\varpi_{\Psi_{\btheta}}(\bx)\propto\frac{\pi(\bx)}{\Psi^{\zeta}_{\btheta}(U(\bx))}$ denote a theoretical invariant measure of SGLD, where 
  $\Psi_{\btheta}(u)$ is a fixed
  piecewise continuous function given by 
\begin{equation}\label{new_design_appendix}
\Psi_{\btheta}(u)= \sum_{i=1}^m \left(\theta(i-1)e^{(\log\theta(i)-\log\theta(i-1)) \frac{u-u_{i-1}}{\Delta u}}\right) 1_{u_{i-1} < u \leq u_i},
\end{equation}
 the full data is used 
 in determining the indexes of subregions, and the learning rate converges to zero. 
 In addition, we 
 define a piece-wise constant function 
 \[
 \widetilde{\Psi}_{\btheta}=\sum_{i=1}^m \theta(i) 1_{u_{i-1} < u \leq u_{i}},
 \]
 and a theoretical measure 
 $\varpi_{\widetilde{\Psi}_{\btheta}}(\bx) \propto \frac{\pi(\bx)}{\theta^{\zeta}(J(\bx))}$. 
 Obviously, as the sample space partition becomes 
 fine and fine, i.e., $u_1 \to u_{\min}$, $u_{m-1}\to u_{\max}$ and $m \to \infty$, we have  
 $\| \widetilde{\Psi}_{\btheta}-\Psi_{\btheta}\| \to 0$ and $\|  
 \varpi_{\widetilde{\Psi}_{\btheta}}(\bx)- 
 \varpi_{\Psi_{\btheta}}(\bx) \| \to 0$, where 
 $u_{\min}$ and $u_{\max}$ denote the minimum and maximum of $U(\bx)$, respectively. 
 Without loss of generality, we assume $u_{\max}<\infty$. Otherwise, $u_{\max}$ can be set to a value such that $\pi(\{\bx: U(\bx)>u_{\max}\})$ is sufficiently small.

For each $i \in \{1,2,\ldots,m\}$, the random field $\widetilde H_i(\btheta,\bx)={\theta}^{\zeta}(\tilde J(\bx))\left(1_{i\geq \tilde J(\bx)}-{\theta}(i)\right)$ is a biased estimator of $ H_i(\btheta,\bx)={\theta}^{\zeta}( J(\bx))\left(1_{i\geq J(\bx)}-{\theta}(i)\right)$. Let $\delta_n(\btheta)=\E[\widetilde{H}(\btheta,\bx)-H(\btheta,\bx)]$ denote the bias, which is caused by the mini-batch evaluation of the energy and 
decays to $0$ as $n\rightarrow N$. By Lemma.\ref{bias_in_SA}, we know that the bias caused by the stochastic energy is of order $\mathcal{O}(\Var(\xi_n))$.

First, let's compute the mean-field $h(\btheta)$ with respect to the empirical measure $\varpi_{\btheta}(\bx)$:
\begin{equation}
\small
\label{iiii_csgld}
\begin{split} 
        h_i(\btheta)&=\int_{\MX} \widetilde H_i(\btheta,\bx) 
         \varpi_{\btheta}(\bx) d\bx
         =\int_{\MX} H_i(\btheta,\bx) 
         \varpi_{\btheta}(\bx) d\bx+\mathcal{O}(\Var(\xi_n))\\
         &=\ \int_{\MX} H_i(\btheta,\bx) \left( \underbrace{\varpi_{\widetilde{\Psi}_\btheta}(\bx)}_{\text{I}_1} \underbrace{-\varpi_{\widetilde{\Psi}_\btheta}(\bx)+\varpi_{\Psi_{\btheta}}(\bx)}_{\text{I}_2}\underbrace{-\varpi_{\Psi_{\btheta}}(\bx)+\varpi_{\btheta}(\bx)}_{\text{I}_3}\right) d\bx+\mathcal{O}(\Var(\xi_n)).\\
\end{split}
\end{equation}

For the term $\text{I}_1$, we have
\begin{equation}
\begin{split}
\label{i_1}
    \int_{\MX} H_i(\btheta,\bx) 
     \varpi_{\widetilde{\Psi}_\btheta}(\bx) d\bx&=\frac{1}{Z_{\btheta}} \int_{\MX} {\theta}^{\zeta}(J(\bx))\left(1_{i= J(\bx)}-{\theta}(i)\right) \frac{\pi(\bx)}{\theta^{\zeta}(J(\bx))} d\bx\\
    &=Z_{\btheta}^{-1}\left[\sum_{k=1}^m \int_{\MX_k} 
     \pi(\bx) 1_{k=i} d\bx -\theta(i)\sum_{k=1}^m\int_{\MX_k} \pi(\bx)d\bx \right] \\
    &=Z_{\btheta}^{-1} \left[\theta_{\star}(i)-\theta(i)\right],
\end{split}
\end{equation}
where $Z_{\btheta}=\sum_{i=1}^m \frac{\int_{\MX_i} \pi(\bx)d\bx}{\theta(i)^{\zeta}}$ denotes the normalizing constant 
of $\varpi_{\widetilde{\Psi}_\btheta}(\bx)$.

Next, let's consider the integrals $\text{I}_2$ and  $\text{I}_3$. By Lemma \ref{partition_order} and the boundedness of $H(\btheta,\bx)$, we have 
\begin{equation} \label{biasI2}
\int_{\MX} H_i(\btheta,\bx) (-\varpi_{\widetilde{\Psi}_{\btheta}}(\bx)+\varpi_{\Psi_{\btheta}}(\bx)) d\bx= \mathcal{O}\left(\frac{1}{m}\right).
\end{equation}
For the term $I_3$, we have for any fixed $\btheta$,
\begin{equation}\label{iiii_csgld_2}
    \int_{\MX} H_i(\btheta,\bx) \left(-\varpi_{\Psi_{\btheta}}(\bx)+\varpi_{\btheta}(\bx)\right) d\bx=\mathcal{O}(\Var(\xi_n)\left(\btheta)\right)+\mathcal{O}(\epsilon),
\end{equation}
where the \textcolor{black}{order of $\mathcal{O}(\epsilon)$ follows from Theorem 6 of} \cite{Sato2014ApproximationAO}.

Plugging (\ref{i_1}), (\ref{biasI2}) and  \textcolor{black}{(\ref{iiii_csgld_2})} into (\ref{iiii_csgld}), we have
\begin{equation}\label{h_i_theta}
     h_i(\btheta)=Z_{\btheta}^{-1} \left[\varepsilon\beta_i(\btheta)+\theta_{\star}(i)-\theta(i)\right],
\end{equation}
where $\varepsilon=\mathcal{O}\left(\Var(\xi_n)+\epsilon+\frac{1}{m}\right)$ and $\beta_i(\btheta)$ is a bounded term such that $Z_{\btheta}^{-1}\varepsilon\beta_i(\btheta)=\mathcal{O}\left(\Var(\xi_n)+\epsilon+\frac{1}{m}\right)$.

To solve the ODE system with small disturbances, we consider standard techniques in perturbation theory.
According to the fundamental theorem of
perturbation theory \cite{Eric}, we can obtain the solution to 
the mean field equation $h(\btheta)=0$: 
\begin{equation}
    \theta(i)=\theta_{\star}(i)+\varepsilon\beta_i(\btheta_{\star}) +\mathcal{O}(\varepsilon^2), \quad i=1,2,\ldots,m,
\end{equation}
which is a stable point in a small neighbourhood of $\btheta_{\star}$.

Considering the positive definite function $\mathbb{V}(\btheta)=\frac{1}{2}\|  \btheta_{\star}-\btheta\| ^2$ for the mean-field system $h(\btheta)=Z_{\btheta}^{-1} (\varepsilon\beta_i(\btheta)+\btheta_{\star}-\btheta)=Z_{\btheta}^{-1} (\btheta_{\star}-\btheta)+\mathcal{O}(\varepsilon)$, we have
\begin{equation*}
    \langle h(\btheta), \nabla \mathbb{V}(\btheta)\rangle=\langle h(\btheta), \btheta - \btheta_{\star}\rangle = -Z_{\btheta}^{-1}\| \btheta - \btheta_{\star}\| ^2+\mathcal{O}(\varepsilon)\leq -\phi\| \btheta - \btheta_{\star}\| ^2+\mathcal{O}\left(\Var(\xi_n)+\epsilon+\frac{1}{m}\right),
\end{equation*}
where $\phi=\inf_{\btheta} Z_{\btheta}^{-1}>0$ by
the compactness assumption \ref{ass2a}. This concludes the proof.
\end{proof}

\textcolor{black}{The following is a restatement of  Lemma 3.2 of \cite{Maxim17}, which holds for any $\btheta$ in the compact space $\bTheta$.
\begin{lemma}[Uniform $L^2$ bounds]
\label{lemma:1}
Suppose Assumptions \ref{ass2a}, \ref{ass3} and \ref{ass4} hold.  Given a small enough learning rate, then 
$\sup_{k\geq 1} \E[\| \bm{\xeta}_{k}\| ^2] < \infty$.
\end{lemma}}

The following lemma justifies the regularity properties of Poisson's equation, which is crucial in controlling the perturbations through the stochastic approximation process. The first version was proposed in Lemma B2 of \cite{CSGLD}. Now we give a more detailed proof by utilizing a Lyapunov function $V(\bx)=1+\bx^2$ and Lemma \ref{lemma:1}.

\begin{lemma}[Solution of Poisson equation]
\label{lyapunov}
Suppose that Assumptions  \ref{ass2a}-\ref{ass4}  hold. 
There is a solution $\mu_{\btheta}(\cdot)$ on $\MX$ to the Poisson equation 
\begin{equation}
    \label{poisson_eqn}
    \mu_{\btheta}(\bm{x})-\mathrm{\Pi}_{\bm{\theta}}\mu_{\bm{\theta}}(\bm{x})=\widetilde H(\bm{\theta}, \bm{x})-h(\bm{\theta}).
\end{equation}
In addition, for all $\bm{\theta}, \bm{\theta}'\in \bm{\bTheta}$, %and a function  $V(\bx)=1+\| \bx\| ^2$, 
there exists a constant $C$ such that
\begin{equation}
\begin{split}
\label{poisson_reg}
\E[\| \mathrm{\Pi}_{\bm{\theta}}\mu_{\btheta}(\bx)\| ]&\leq C,\\
\E[\| \mathrm{\Pi}_{\bm{\theta}}\mu_{\bm{\theta}}(\bx)-\mathrm{\Pi}_{\bm{\theta}'}\mu_{\bm{\theta'}}(\bx)\| ]&\leq C\| \bm{\theta}-\bm{\theta}'\| .\\
\end{split}
\end{equation}
\end{lemma}

\begin{proof} The existence and the regularity property of Poisson's equation can be used to control the perturbations. The key of the proof lies in verifying drift conditions proposed in Section 6 of \cite{AndrieuMP2005}.

\textbf{(DRI)} By the smoothness assumption \ref{ass2}, we have that $U(\bx)$ is continuously differentiable almost everywhere. By the dissipative assumption \ref{ass3} and Theorem 2.1 \cite{Roberts_Tweedie_Bernoulli}, we can show that the discrete dynamics system is irreducible and aperiodic. Now consider a Lyapunov function $V=1+\| \bx\| ^2$ and any compact subset $\mathcal{\bm{K}}\subset \bTheta$, the drift conditions are verified as follows:

\textbf{(DRI1)} Given small enough learning rates $\{\epsilon_k\}_{k\geq 1}$, the smoothness assumption \ref{ass2}, and the dissipative assumption \ref{ass3}, applying Corollary 7.5 \cite{mattingly02} yields the minorization condition for the CSGLD algorithm, i.e. there exists $\eta>0$, a measure $\nu$, and a set $\mathcal{C}$ such that $\nu(\mathcal{C})=1$. Moreover, we have
% \begin{equation*}
    $$P_{\btheta\in \mathcal{\bm{K}}}(x, A)\geq  \eta \nu(A)\ \ \ \ \ \forall A\in \MX, \bx\in \mathcal{C}. \eqno{(\text{I})}$$
    
% \end{equation*}
where $P_{\btheta}(\bx, \by):=\frac{1}{2\sqrt{(4\pi\epsilon)^{d/2}}}\E\big[e^{-\frac{\| \by-\bx+\epsilon \nabla_{\bx} \widetilde{L}(\bx, \btheta)\| ^2}{4\epsilon}}\mid \bx\big]$ denotes the transition kernel based on CSGLD with the parameter $\btheta\in\mathcal{\bm{K}}$ and a learning rate $\epsilon$, in addition, the expectation is taken over the adaptive gradient $\nabla_{\bx} \widetilde{L}(\bx, \btheta)$ in Eq.(\ref{adaptive_grad}). Using Assumption \ref{ass2a}-\ref{ass4}, we can prove the uniform L2 upper bound by following Lemma 3.2 \cite{Maxim17}. Further, by Theorem 7.2 \cite{mattingly02}, there exist $\tilde\alpha\in (0,1)$ and $\tilde \beta\geq 0$ such that
    $$P_{\btheta\in\mathcal{\MK}}V(\bx)\leq \tilde\alpha V(\bx)+\tilde\beta. \eqno{(\text{II})}$$

Consider a Lyapunov function $V=1+\| \bx\| ^2$ and a constant $\kappa=\tilde\alpha+\tilde \beta$, it yields that
$$P_{\btheta\in\mathcal{\bm{K}}}V(\bx)\leq \kappa V(\bx). \eqno{(\text{III})}$$
Now we have verified the first condition \text{(DRI1)} by checking conditions (\text{I}),(\text{II}), and (\text{III}), 

\textbf{(DRI2)} In what follows, we check the boundedness and Lipshitz conditions on the random-field function $\widetilde H(\btheta,\bx)$, where each subcomponent is defiend as $\widetilde H_i(\btheta,\bx)={\theta}^{\zeta}(J_{\widetilde U}(\bx))\left(1_{i=J_{\widetilde U}(\bx)}-{\theta}(i)\right)$. Recall that $V=1+\| \bx\| ^2$, the compactness assumption \ref{ass2a} directly leads to
$$\sup_{\btheta\in\mathcal{\bm{K}}\subset [0, 1]^m}\|  H(\btheta, \bx)\| \leq m V(\bx). \eqno{(\text{IV})}$$
For any $\btheta_1, \btheta_2\in \mathcal{\bm{K}}$ and a fixed $\bx\in\MX$, it suffices for us to solely verify the $i$-th index, which is the index that maximizes $\mid \theta_1(i)-\theta_2(i)\mid $, then repeatedly applying mean-value theorem and assumption \ref{ass2a} leads to
\begin{equation*}
\begin{split}
\small
    \mid \widetilde H_i(\btheta_1,\bx)- \widetilde H_i(\btheta_2,\bx)\mid &={\theta_1^{\zeta}} (J_{\widetilde U}(\bx))\left(1_{i=J_{\widetilde U}(\bx)}-{\theta_1}(i)\right)-{\theta_2^{\zeta}} (J_{\widetilde U}(\bx))\left(1_{i=J_{\widetilde U}(\bx)}-{\theta_2}(i)\right)\\
    &\lesssim \mid {\theta_1} (J_{\widetilde U}(\bx))-{\theta_2} (J_{\widetilde U}(\bx))\mid +\mid {\theta_1} (J_{\widetilde U}(\bx)){\theta_1}(i)-{\theta_2} (J_{\widetilde U}(\bx)){\theta_2}(i)\mid \\
    &\lesssim \max_{j}\Big(\mid {\theta_1} (j)-{\theta_2} (j)\mid +{\theta_1} (j)\mid {\theta_1}(i)-{\theta_2}(i)\mid +\mid {\theta_1} (j)-{\theta_2} (j)\mid \theta_2(i)\Big)\\
    &\lesssim 3\mid \theta_1(i)-\theta_2(i)\mid ,\\
\end{split}
\end{equation*}
where the last inequality holds since $\theta(i)\in(0, 1]$ for any $i\leq m$.

\textbf{(DRI3)} We proceed to verify the smoothness of the transitional kernel $P_{\btheta}(\bx, \by)$ with respect to $\btheta$. For any $\btheta_1, \btheta_2\in \mathcal{\bm{K}}$ and fixed $\bx$ and $\by$, we have 
\begin{equation*}
\begin{split}
    &\quad\mid P_{\btheta_1}(\bx, \by)-P_{\btheta_2}(\bx, \by)\mid \\
    &=\frac{1}{2\sqrt{(4\pi\epsilon)^{d/2}}}\E\big[e^{-\frac{\| \by-\bx+\epsilon \nabla_{\bx} \widetilde{L}(\bx, \btheta_1)\| ^2}{4\epsilon}}\mid \bx\big]-\frac{1}{2\sqrt{(4\pi\epsilon)^{d/2}}}\E\big[e^{-\frac{\| \by-\bx+\epsilon \nabla_{\bx} \widetilde{L}(\bx, \btheta_2)\| ^2}{4\epsilon}}\mid \bx\big]\\
    &\lesssim \mid \| \by-\bx+\epsilon \nabla_{\bx} \widetilde{L}(\bx, \btheta_1)\| ^2-\| \by-\bx+\epsilon \nabla_{\bx} \widetilde{L}(\bx, \btheta_2)\| ^2\mid \\
    &\lesssim \| \nabla_{\bx} \widetilde{L}(\bx, \btheta_1)- \nabla_{\bx} \widetilde{L}(\bx, \btheta_2)\| \\
    &\lesssim \|  \btheta_1-\btheta_2\| ,\\
\end{split}
\end{equation*}
where the first inequality (up to a finite constant) follows by $\| e^{\bx}-e^{\by}\| \lesssim \| \bx-\by\| $ for any $\bx$, $\by$ in a compact space; the last inequality follows by the definition of the adaptive gradient in Eq.(\ref{adaptive_grad}) and $\| \log(\bx)-\log(\by)\| \lesssim \| \bx-\by\| $ by the compactness assumption \ref{ass2a}.

For $f:\MX\rightarrow\mathbb{R}^d$, define the norm $\| f\| _V=\sup_{\bx\in\MX} \frac{\mid f(\bx)\mid }{V(\bx)}$. Following the same technique proposed in \cite{Liang07} (page 319), we can verify the last drift condition
$$\| P_{\btheta_1}f-P_{\btheta_2}f\| _V\leq C\| f\| _V \| \btheta_1-\btheta_2\| , \ \ \forall f\in \mathcal{L}_V:=\{f: \MX\rightarrow \mathbb{R}^d, \| f\| _V<\infty\}. \eqno{(\text{VI})}$$

Having conditions (\text{I}), (\text{II}), $\cdots$ and (\text{VI}) verified, we are now able to prove the drift conditions proposed in Section 6 of \cite{AndrieuMP2005}.\qed
\end{proof}

% \begin{lemma}[Solution of Poisson equation]
% \label{lyapunov}
% Suppose that Assumptions  \ref{ass2a}-\ref{ass4}  hold. 
% There is a solution $\mu_{\btheta}(\cdot)$ on $\MX$ to the Poisson equation 
% \begin{equation}
%     \label{poisson_eqn}
%     \mu_{\btheta}(\bm{x})-\mathrm{\Pi}_{\bm{\theta}}\mu_{\bm{\theta}}(\bm{x})=\widetilde H(\bm{\theta}, \bm{x})-h(\bm{\theta}).
% \end{equation}
% In addition, for all $\bm{\theta}, \bm{\theta}'\in \bm{\bTheta}$, %and a function  $V(\bx)=1+\| \bx\| ^2$, 
% there exists a constant $C$ such that
% \begin{equation}
% \begin{split}
% \label{poisson_reg}
% \E[\| \mathrm{\Pi}_{\bm{\theta}}\mu_{\btheta}(\bx)\| ]&\leq C,\\
% \E[\| \mathrm{\Pi}_{\bm{\theta}}\mu_{\bm{\theta}}(\bx)-\mathrm{\Pi}_{\bm{\theta}'}\mu_{\bm{\theta'}}(\bx)\| ]&\leq C\| \bm{\theta}-\bm{\theta}'\| .\\
% \end{split}
% \end{equation}
% \end{lemma}
% \begin{proof}
% The lemma can be proved based on 
% Theorem 13 of  \cite{VollmerZW2016}, 
% whose conditions can be 
% easily verified for CSGLD given the assumptions A1-A4 and 
% Lemma \ref{lemma:1}. The details are omitted. 
% \end{proof}

Now we are ready to prove the first main result on the 
convergence of $\btheta_k$.
The technique lemmas are listed 
in Section \ref{Lemmasection}.

\begin{assump}[Learning rate and step size]
\label{ass1}
The learning rate $\{\epsilon_k\}_{k \in \mathrm{N}}$ is a positive non-increasing sequence of real numbers satisfying the conditions 
\[
\lim_k \epsilon_k=0, \quad \sum_{k=1}^{\infty} \epsilon_k=\infty.
\]
The step size $\{\omega_{k}\}_{k\in \mathrm{N}}$ is a positive non-increasing  sequence of real numbers such that
\begin{equation} \label{a1}
\lim_{k \to \infty} \omega_k=0, \quad
\sum_{k=1}^{\infty} \omega_{k}=+\infty, \quad  \sum_{k=1}^{\infty} \omega_{k}^2<+\infty.
\end{equation}
A practical strategy is to set $\omega_{k}:=\mathcal{O}(k^{-\alpha})$ to satisfy the above conditions for any $\alpha\in (0.5, 1]$. 

 \end{assump}

\begin{theorem}[$L^2$ convergence rate. Formal statement of Theorem \ref{thm:1}]
\label{latent_convergence}
Suppose Assumptions $\ref{ass2a}$-$\ref{ass1}$ hold. For a sufficiently
large value of $m$, a sufficiently small learning rate sequence  $\{\epsilon_k\}_{k=1}^{\infty}$,  and a sufficiently small
 step  size sequence $\{\omega_k\}_{k=1}^{\infty}$, 
$\{\btheta_k\}_{k=0}^{\infty}$ converges to
 $\btheta_{\star}$ in $L_2$-norm  such that
\begin{equation*}
    \E\left[\| \bm{\theta}_{k}-\btheta_{\star}\| ^2\right]=\mathcal{O}\left( \omega_{k}+
    \sup_{i\geq k_0}\epsilon_i+
\frac{1}{m} +\Var(\xi_n)\right),
\end{equation*}
where $k_0$ is a sufficiently large constant.
\end{theorem}
\begin{proof}
Consider the iterations 
\begin{equation*}
    \bm{\theta}_{k+1}=\bm{\theta}_{k}+\omega_{k+1} \left(\widetilde H(\bm{\theta}_{k}, \bm{x}_{k+1})+\omega_{k+1} \rho(\btheta_k, \bx_{k+1})\right).
\end{equation*}
Define $\bm{T}_{k}=\bm{\theta}_{k}-\btheta_{\star}$. By subtracting $\btheta_{\star}$ from both sides and taking the square and $L_2$ norm,  we have
\begin{equation*}
\small
\begin{split}
    \| \bT_{k+1}\| ^2&=\| \bT_k\| ^2 +\omega_{k+1}^2 \| \widetilde H(\btheta_k, \bx_{k+1}) + \omega_{k+1}\rho(\btheta_k, \bx_{k+1})\| ^2+2\omega_{k+1}\underbrace{\langle \bT_k,  \widetilde H(\bx_{k+1})+\omega_{k+1}\rho(\btheta_k, \bx_{k+1})\rangle}_{\text{D}}.
\end{split}
\end{equation*}

First, by Lemma \ref{convex_property}, there exists a constant $G=4Q^2(1+Q^2)$ such that
\begin{equation}
\label{first_term}
    \|  \widetilde H(\btheta_k, \bx_{k+1}) + \omega_{k+1}\rho(\btheta_k, \bx_{k+1})\| ^2 \leq G (1+\| \bT_k\| ^2).
\end{equation}

Next, by the Poisson equation (\ref{poisson_eqn}), we have
\begin{equation*}
\begin{split}
   \text{D}&=\langle \bT_k,  \widetilde H(\btheta_k, \bx_{k+1})+\omega_{k+1}\rho(\btheta_k, \bx_{k+1}) \rangle\\
   &=\langle \bT_k,  h(\btheta_k)+\mu_{\btheta_k}(\bm{x}_{k+1})-\mathrm{\Pi}_{\bm{\theta}_k}\mu_{\bm{\theta}_k}(\bm{x}_{k+1})+\omega_{k+1}\rho(\btheta_k, \bx_{k+1}) \rangle\\
   &=\underbrace{\langle \bT_k,  h(\btheta_k)\rangle}_{\text{D}_{1}} +\underbrace{\langle\bT_k, \mu_{\btheta_k}(\bm{x}_{k+1})-\mathrm{\Pi}_{\bm{\theta}_k}\mu_{\bm{\theta}_k}(\bm{x}_{k+1})\rangle}_{\text{D}_{2}}+\underbrace{\langle \bT_k, \omega_{k+1}\rho(\btheta_k, \bx_{k+1})\rangle}_{{\text{D}_{3}}}.
\end{split}
\end{equation*}

For the term $\text{D}_1$, by Lemma \ref{convex_appendix}, we have
\begin{align*}
\E\left[\langle \bm{T}_{k}, h(\bm{\theta}_{k})\rangle\right] &\leq - \phi\E[\| \bm{T}_{k}\| ^2]+\mathcal{O}(\Var(\xi_n)+\epsilon_k+\frac{1}{m}).
\end{align*}
For convenience, in the following context, we denote $\mathcal{O}(\Var(\xi_n)+\epsilon_k+\frac{1}{m})$ by $\Delta_k$.

To deal with the term $\text{D}_2$, we make the following decomposition 
\begin{equation*}
\begin{split}
\text{D}_2 &=\underbrace{\langle \bT_k, \mu_{\bm{\theta}_{k}}(\bm{\xeta}_{k+1})-\mathrm{\Pi}_{\bm{\theta}_{k}}\mu_{\bm{\theta}_{k}}(\bm{\bx}_{k})\rangle}_{\text{D}_{21}} \\
&+ \underbrace{\langle \bT_k,\mathrm{\Pi}_{\bm{\theta}_{k}}\mu_{\bm{\theta}_{k}}(\bm{x}_{k})- \mathrm{\Pi}_{\bm{\theta}_{k-1}}\mu_{\bm{\theta}_{k-1}}(\bm{x}_{k})\rangle}_{\text{D}_{22}}
+ \underbrace{\langle \bT_k,\mathrm{\Pi}_{\bm{\theta}_{k-1}}\mu_{\bm{\theta}_{k-1}}(\bm{x}_{k})- \mathrm{\Pi}_{\bm{\theta}_{k}}\mu_{\bm{\theta}_{k}}(\bm{\xeta}_{k+1})\rangle}_{\text{D}_{23}}.\\
\end{split}
\end{equation*}

(\text{i})  From the Markov property, $\mu_{\bm{\theta}_{k}}(\bm{\xeta}_{k+1})-\mathrm{\Pi}_{\bm{\theta}_{k}}\mu_{\bm{\theta}_{k}}(\bm{x}_{k})$ forms a martingale difference sequence 
$$\E\left[\langle \bT_k, \mu_{\bm{\theta}_{k}}(\bm{\xeta}_{k+1})-\mathrm{\Pi}_{\bm{\theta}_{k}}\mu_{\bm{\theta}_{k}}(\bm{x}_{k})\rangle \mid \mathcal{F}_{k}\right]=0, \eqno{(\text{D}_{21})}$$
where  $\mathcal{F}_k$ is a $\sigma$-filter formed by $\{\btheta_0, \bx_1, \btheta_1, \bx_2, \cdots, \bx_k,\btheta_k\}$.

(\text{ii})  By the regularity of the solution of Poisson equation in (\ref{poisson_reg}) and Lemma \ref{theta_lip}, we have 
\begin{equation}
\label{theta_delta}
\E[\| \mathrm{\Pi}_{\bm{\theta}_{k}}\mu_{\bm{\theta}_{k}}(\bm{x}_{k})- \mathrm{\Pi}_{\bm{\theta}_{k-1}}
 \mu_{\bm{\theta}_{k-1}}(\bm{x}_{k})\| ]\leq C \| \btheta_k-\btheta_{k-1}\| \leq 2Q C\omega_k.
\end{equation}
Using Cauchy–Schwarz inequality, (\ref{theta_delta}) and the compactness of $\Theta$ in Assumption \ref{ass2a}, we have
$$\small{\E[\langle\bm{T}_{k},\mathrm{\Pi}_{\bm{\theta}_{k}}\mu_{\bm{\theta}_{k}}(\bm{x}_{k})- \mathrm{\Pi}_{\bm{\theta}_{k-1}}\mu_{\bm{\theta}_{k-1}}(\bm{x}_{k})\rangle]\leq \E[\| \bT_k\| ]\cdot 2Q C\omega_k\leq 4Q^2 C\omega_{k}\leq 5Q^2 C\omega_{k+1}}   \eqno{(\text{D}_{22})},$$
where the last inequality follows from assumption \ref{ass1} and holds for a large enough $k$.

(\text{iii})  For the last term of $\text{D}_{2}$, 
\begin{equation*}
\begin{split}
\small
&\langle \bm{T}_{k},\mathrm{\Pi}_{\bm{\theta}_{k-1}}\mu_{\bm{\theta}_{k-1}}(\bm{x}_{k})- \mathrm{\Pi}_{\bm{\theta}_{k}}\mu_{\bm{\theta}_{k}}(\bm{\xeta}_{k+1})\rangle\\
=&\left(\langle \bm{T}_{k}, \mathrm{\Pi}_{\bm{\theta}_{k-1}}\mu_{\bm{\theta}_{k-1}}(\bm{x}_{k}) \rangle- \langle \bm{T}_{k+1}, \mathrm{\Pi}_{\bm{\theta}_{k}}\mu_{\bm{\theta}_{k}}(\bm{\xeta}_{k+1})\rangle\right)\\
&\ \ \ +\left(\langle \bm{T}_{k+1}, \mathrm{\Pi}_{\bm{\theta}_{k}}\mu_{\bm{\theta}_{k}}(\bm{\xeta}_{k+1})\rangle-\langle \bm{T}_{k}, \mathrm{\Pi}_{\bm{\theta}_{k}}\mu_{\bm{\theta}_{k}}(\bm{\xeta}_{k+1})\rangle\right)\\
=&{({z}_{k}-{z}_{k+1})}+{\langle \bm{T}_{k+1}-\bm{T}_{k}, \mathrm{\Pi}_{\bm{\theta}_{k}}\mu_{\bm{\theta}_{k}}(\bm{\xeta}_{k+1})\rangle},\\
\end{split}
\end{equation*}
where ${z}_{k}=\langle \bm{T}_{k}, \mathrm{\Pi}_{\bm{\theta}_{k-1}}\mu_{\bm{\theta}_{k-1}}(\bm{x}_{k})\rangle$. By the regularity assumption (\ref{poisson_reg}) and Lemma \ref{theta_lip}, 
$$\E\langle \bm{T}_{k+1}-\bm{T}_{k}, \mathrm{\Pi}_{\bm{\theta}_{k}}\mu_{\bm{\theta}_{k}}(\bm{\xeta}_{k+1})\rangle\leq   \E[\| \bm{\theta}_{k+1}-\bm{\theta}_{k}\| ] \cdot \E[\| \mathrm{\Pi}_{\bm{\theta}_{k}}\mu_{\bm{\theta}_{k}}(\bm{\xeta}_{k+1})\| ] \leq 2Q C \omega_{k+1}.\eqno{(\text{D}_{23})}$$

Regarding $\text{D}_3$, since $\rho(\btheta_k, \bx_{k+1})$ is bounded, applying Cauchy–Schwarz inequality gives
$${\E[\langle \bT_k, \omega_{k+1}\rho(\btheta_k, \bx_{k+1}))]\leq 2Q^2\omega_{k+1}} \eqno{(\text{D}_{3})}$$

Finally, adding (\ref{first_term}), $\text{D}_1$, $\text{D}_{21}$, $\text{D}_{22}$, $\text{D}_{23}$ and $\text{D}_3$ together, it follows that for a constant $C_0 = G+10Q^2C+4QC+4Q^2$,
\begin{equation}
\begin{split}
\label{key_eqn}
\E\left[\| \bm{T}_{k+1}\| ^2\right]&\leq (1-2\omega_{k+1}\phi+G\omega^2_{k+1} )\E\left[\| \bm{T}_{k}\| ^2\right]+C_0\omega^2_{k+1} +2\Delta_k\omega_{k+1} +2\E[z_{k}-z_{k+1}]\omega_{k+1}.
\end{split}
\end{equation}
Moreover, from (\ref{compactness}) and (\ref{poisson_reg}), $\E[\mid z_{k}\mid ]$ is upper bounded by
\begin{equation}
\begin{split}
\label{condition:z}
\E[\mid z_{k}\mid ]=\E[\langle \bm{T}_{k}, \mathrm{\Pi}_{\bm{\theta}_{k-1}}\mu_{\bm{\theta}_{k-1}}(\bm{x}_{k})\rangle]\leq \E[\| \bT_k\| ]\E[\| \mathrm{\Pi}_{\bm{\theta}_{k-1}}\mu_{\bm{\theta}_{k-1}}(\bm{x}_{k})\| ]\leq 2QC.
\end{split}
\end{equation}

According to Lemma $\ref{lemma:4}$, we can choose $\lambda_0$ and $k_0$ such that 
\begin{align*}
\E[\| \bm{T}_{k_0}\| ^2]\leq \psi_{k_0}=\lambda_0 \omega_{k_0}+\frac{1}{\phi}\sup_{i\geq k_0}\Delta_{i},
\end{align*}
which satisfies the conditions ($\ref{lemma:3-a}$) and ($\ref{lemma:3-b}$) of Lemma $\ref{lemma:3-all}$. Applying Lemma $\ref{lemma:3-all}$ leads to
\begin{equation}
\begin{split}
\label{eqn:9}
\E\left[\| \bm{T}_{k}\| ^2\right]\leq \psi_{k}+\E\left[\sum_{j=k_0+1}^{k}\Lambda_j^k \left(z_{j-1}-z_{j}\right)\right],
\end{split}
\end{equation}
where $\psi_{k}=\lambda_0 \omega_{k}+\frac{1}{\phi}\sup_{i\geq k_0}\Delta_{i}$ for all $k>k_0$. Based on ($\ref{condition:z}$) and the increasing condition of $\Lambda_{j}^k$ in Lemma $\ref{lemma:2}$, we have
\begin{equation}
\small
\begin{split}
\label{eqn:10}
&\E\left[\mid \sum_{j=k_0+1}^{k} \Lambda_j^k\left(z_{j-1}-z_{j}\right)\mid \right]
=\E\left[\mid \sum_{j=k_0+1}^{k-1}(\Lambda_{j+1}^k-\Lambda_j^k)z_j-2\omega_{k}z_{k}+\Lambda_{k_0+1}^k z_{k_0}\mid \right]\\
\leq& \sum_{j=k_0+1}^{k-1}2(\Lambda_{j+1}^k-\Lambda_j^k)QC+\E[\mid 2\omega_{k} z_{k}\mid ]+2\Lambda_k^k QC\\
\leq& 2(\Lambda_k^k-\Lambda_{k_0}^k)QC+2\Lambda_k^k QC+2\Lambda_k^k QC\\
\leq& 6\Lambda_k^k QC.
\end{split}
\end{equation}

Given $\psi_{k}=\lambda_0 \omega_{k}+\frac{1}{\phi}\sup_{i\geq k_0}\Delta_{i}$ which satisfies the conditions ($\ref{lemma:3-a}$) and ($\ref{lemma:3-b}$) of Lemma $\ref{lemma:3-all}$,
it follows from (\ref{eqn:9}) and (\ref{eqn:10}) 
that  the following inequality holds for any $k>k_0$,
 
\begin{equation*}
\E[\| \bm{T}_{k}\| ^2]\leq \psi_{k}+6\Lambda_k^k QC=\left(\lambda_0+12QC\right)\omega_{k}+\frac{1}{\phi}\sup_{i\geq k_0}\Delta_{i}=\lambda \omega_{k}+\frac{1}{\phi}\sup_{i\geq k_0}\Delta_{i},
\end{equation*}
where $\lambda=\lambda_0+12QC$, $\lambda_0=\frac{2G\sup_{i\geq k_0} \Delta_i + 2C_0\phi}{C_1\phi}$, $\small{C_1=\lim \inf 2\phi \dfrac{\omega_{k}}{\omega_{k+1}}+\dfrac{\omega_{k+1}-\omega_{k}}{\omega^2_{k+1}}>0}$, $C_0=G+5Q^2C+2QC+2Q^2$ and $G=4 Q^2(1+Q^2)$.
\end{proof}

\subsection{Technical Lemmas} \label{Lemmasection}

\begin{lemma}\label{bias_in_SA}
The stochastic energy estimator $\widetilde U(\bx)$ leads to a controllable bias in the random-field function. 
\label{bias_sa}
\begin{equation*}\label{eq_bias_in_SA}
 \mid \E[\widetilde H_i(\btheta,\bx)]- H_i(\btheta,\bx)\mid =\mathcal{O}\left(\Var(\xi_n)\right),
\end{equation*}
where the expectation $\E[\cdot]$ is taken with respect to the random noise in the stochastic energy estimator of $\widetilde U(\cdot)$. 
\end{lemma}

\begin{proof}
Denote the noise in the stochastic energy estimator by $\xi(\bx)$, such that $\widetilde U(\cdot)=U(\cdot) + \xi(\cdot)$. Recall that $\widetilde H_i(\btheta,\bx)={\theta}(J_{\widetilde U}(\bx))\left(1_{i= J_{\widetilde U}(\bx)}-{\theta}(i)\right)$ and $J_{\widetilde U}(\bx)\in\{1,2,\cdots, m\}$ satisfies $u_{J_{\widetilde U}(\bx)-1}< \frac{N}{n}\widetilde U(\bx)\leq u_{J_{\widetilde U}(\bx)}$ for a set of energy partitions $\{u_i\}_{i=0}^{m}$. We can interpret $\widetilde H_i(\btheta,\bx)$ as a non-linear transformation $\Phi$ that maps $\widetilde U(\bx)$ to $(0, 1)$. Similarly, $ H_i(\btheta,\bx)$ maps $U(\bx)$ to $(0, 1)$. In what follows, the bias of random-field function is upper bounded as follows
\begin{equation*}
\begin{split}
     \mid \E[\widetilde H_i(\btheta,\bx)]- H_i(\btheta,\bx)\mid &=\mid \int \Phi(U(\bx)+\xi(\bx))-\Phi(U(\bx))d\mu(\xi(\bx))\mid \\
     &=\mid \int \xi(\bx) \Phi'(U(\bx))+\frac{\xi(\bx)^2}{2} \Phi''(u) d\mu(\xi(\bx))\mid \\
     &= \mathcal{O}\left(\Var(\xi_n)\right), \\
\end{split}
\end{equation*}
where the second equality follows from Taylor expansion for some energy $u$ and the third equality follows because the stochastic energy estimator is unbiased; $\Phi'(U(\bx))=\mathcal{O}( \frac{\theta(J(\bx))-\theta(J(\bx)-1)}{\Delta u})$ is clearly bounded due to the definition of $\btheta$; a similar conclusion also applies to $\Phi''(\cdot)$.

\end{proof}

\begin{lemma}\label{partition_order}
Suppose Assumption \ref{ass2a} holds, and $u_1$ and 
$u_{m-1}$ are fixed such that $\Psi(u_1)>\nu$ and $\Psi(u_{m-1})>1-\nu$ for some small constant $\nu>0$. For any bounded function $f(\bx)$, we have 
\label{m_order}
\begin{equation}\label{i_2}
    \int_{\MX} f(\bx)\left(\varpi_{\Psi_{\btheta}}(\bx)-\varpi_{\widetilde\Psi_{\btheta}}(\bx)\right) d\bx=\mathcal{O}\left(\frac{1}{m}\right).
\end{equation}
\end{lemma}

\begin{proof}
Recall that $\varpi_{\widetilde\Psi_{\btheta}}(\bx)= \frac{1}{Z_{\btheta}} 
\frac{\pi(\bx)}{\theta^{\zeta}(J(\bx))}$ and $\varpi_{\Psi_{\btheta}}(\bx)=\frac{1}{Z_{\Psi_{\btheta}}}\frac{\pi(\bx)}{\Psi^{\zeta}_{\btheta}(U(\bx))}$. Since $f(\bx)$ is bounded, 
it suffices to show 
\begin{equation}
\begin{split}
    &\int_{\MX} \frac{1}{Z_{\btheta}} 
\frac{\pi(\bx)}{\theta^{\zeta}(J(\bx))}-\frac{1}{Z_{\Psi_{\btheta}}}\frac{\pi(\bx)}{\Psi^{\zeta}_{\btheta}(U(\bx))} d\bx\\
\leq &\int_{\MX} \mid \frac{1}{Z_{\btheta}} 
\frac{\pi(\bx)}{\theta^{\zeta}(J(\bx))}-\frac{1}{Z_{\btheta}}\frac{\pi(\bx)}{\Psi^{\zeta}_{\btheta}(U(\bx))}\mid d\bx+\int_{\MX}\mid \frac{1}{Z_{\btheta}}\frac{\pi(\bx)}{\Psi^{\zeta}_{\btheta}(U(\bx))}-\frac{1}{Z_{\Psi_{\btheta}}}\frac{\pi(\bx)}{\Psi^{\zeta}_{\btheta}(U(\bx))}\mid  d\bx\\
=&\underbrace{\frac{1}{Z_{\btheta}}\sum_{i=1}^m \int_{\MX_i} \mid  
\frac{\pi(\bx)}{\theta^{\zeta}(i)}-\frac{\pi(\bx)}{\Psi^{\zeta}_{\btheta}(U(\bx))}\mid d\bx}_{\text{I}_1}+\underbrace{\sum_{i=1}^m\mid \frac{1}{Z_{\btheta}}-\frac{1}{Z_{\Psi_{\btheta}}}\mid \int_{\MX_i}\frac{\pi(\bx)}{\Psi^{\zeta}_{\btheta}(U(\bx))} d\bx}_{\text{I}_2}=\mathcal{O}\left(\frac{1}{m}\right),\\
\end{split}
\end{equation}
where $Z_{\btheta}=\sum_{i=1}^m \int_{\MX_i} \frac{\pi(\bx)}{\theta(i)^{\zeta}}d\bx$, $Z_{\Psi_{\btheta}}=\sum_{i=1}^{m}\int_{\MX_i} \frac{\pi(\bx)}{\Psi^{\zeta}_{\btheta}(U(\bx))}d\bx$, and $\Psi_{\btheta}(u)$ is a piecewise continuous function defined in (\ref{new_design_appendix}).

By Assumption \ref{ass2a}, $\inf_{\Theta} \theta(i)>0$ for any $i$. 
Further, by the mean-value theorem, which implies $\mid x^{\zeta}-y^{\zeta}\mid \lesssim \mid x-y\mid  z^{\zeta}$ for any $\zeta>0, x\leq y$ and $z\in[x, y]\subset [u_1, \infty)$, we have 
\begin{equation*}
\small
\begin{split}
    \text{I}_1&=\frac{1}{Z_{\btheta}}\sum_{i=1}^m \int_{\MX_i} \mid  
\frac{\theta^{\zeta}(i)-\Psi^{\zeta}_{\btheta}(U(\bx))}{\theta^{\zeta}(i)\Psi^{\zeta}_{\btheta}(U(\bx))}\mid \pi(\bx)d\bx\lesssim \frac{1}{Z_{\btheta}}\sum_{i=1}^m \int_{\MX_i} 
\frac{\mid \Psi_{\btheta}(u_{i-1})-\Psi_{\btheta}(u_i)\mid }{\theta^{\zeta}(i)}\pi(\bx)d\bx\\
&\leq \max_i \mid \Psi_{\btheta}(u_{i}-\Delta u)-\Psi_{\btheta}(u_i)\mid   \frac{1}{Z_{\btheta}}\sum_{i=1}^m \int_{\MX_i} 
\frac{\pi(\bx)}{\theta^{\zeta}(i)}d\bx=\max_i \mid \Psi_{\btheta}(u_{i}-\Delta u)-\Psi_{\btheta}(u_i)\mid \lesssim \Delta u=\mathcal{O}\left(\frac{1}{m}\right),
\end{split}
\end{equation*}
where the last inequality follows by Taylor expansion, 
and the last equality follows as $u_1$ and $u_{m-1}$ 
are fixed. Similarly, we have % the same order for $\text{I}_2$.
\begin{equation*}
    \begin{split}
        \text{I}_2= \mid \frac{1}{Z_{\btheta}}-\frac{1}{Z_{\Psi_{\btheta}}}\mid Z_{\Psi_{\btheta}}=\frac{ \mid Z_{\Psi_{\btheta}}-Z_{\btheta}\mid }{Z_{\btheta}}\leq \frac{1}{Z_{\btheta}}\sum_{i=1}^m \int_{\MX_i} \mid \frac{\pi(\bx)}{\theta^{\zeta}(i)}-\frac{\pi(\bx)}{\Psi^{\zeta}_{\btheta}(U(\bx))}\mid d\bx=\text{I}_1=\mathcal{O}\left(\frac{1}{m}\right).
    \end{split}
\end{equation*}
The proof can then be concluded by combining the orders of $\text{I}_1$ and $\text{I}_2$. 
\end{proof}

\begin{lemma}
\label{convex_property}
Given $\sup\{\omega_k\}_{k=1}^{\infty}\leq 1$, there exists a constant $G=4 Q^2(1+Q^2)$ such that
\begin{equation} \label{bound2}
\|  \widetilde H(\bm{\theta}_k, \bm{\xeta}_{k+1})+\omega_{k+1}\rho(\btheta_k, \bx_{k+1})\| ^2 \leq G (1+\| \bm{\theta}_k-\btheta_{\star}\| ^2). 
\end{equation}
\end{lemma}
\begin{proof}

According to the compactness condition in Assumption \ref{ass2a}, we have
\begin{equation}
\label{mid_1}
\| \widetilde H(\bm{\theta}_k, \bm{\xeta}_{k+1})\| ^2\leq Q^2 (1+\| \bm{\theta}_k\| ^2) = 
 Q^2 (1+\| \bm{\theta}_k-\btheta_{\star}+\btheta_{\star}\| ^2)\leq Q^2 (1+2\| \bm{\theta}_k-\btheta_{\star}\| ^2+2Q^2).
\end{equation}

Therefore, using (\ref{mid_1}), we can show that for a constant $G=4Q^2(1+Q^2)$
\begin{equation*}
\small
\begin{split}
    &\ \ \ \|  \widetilde H(\bm{\theta}_k, \bm{\xeta}_{k+1})+\omega_{k+1}\rho(\btheta_k, \bx_{k+1})\| ^2 \\
    &\leq 2\| \widetilde H(\bm{\theta}_k, \bm{\xeta}_{k+1})\| ^2 + 2\omega_{k+1}^2 \| \rho(\btheta_k, \bx_{k+1})\| ^2\\
    &\leq 2Q^2 (1+2\| \bm{\theta}_k-\btheta_{\star}\| ^2+2Q^2) + 2Q^2\\
    &\leq 2Q^2 (2+2Q^2+(2+2Q^2)\| \bm{\theta}_k-\btheta_{\star}\| ^2)\\
    &\leq G (1+\| \bm{\theta}_k-\btheta_{\star}\| ^2).
\end{split}
\end{equation*}
\end{proof}

\begin{lemma}
\label{theta_lip}Given $\sup\{\omega_k\}_{k=1}^{\infty}\leq 1$, we have that
\begin{equation}
\label{lip_theta}
    \| \btheta_{k}-\btheta_{k-1}\| \leq 2\omega_{k} Q
\end{equation}
\end{lemma}

\begin{proof}
Following the update $\btheta_k-\btheta_{k-1}=\omega_k \widetilde H(\bm{\theta}_{k-1}, \bm{x}_{k})+\omega_{k}^2 \rho(\btheta_{k-1}, \bx_{k})$, we have that
$$\| \btheta_{k}-\btheta_{k-1}\| = \| \omega_k \widetilde H(\bm{\theta}_{k-1}, \bm{x}_{k})+\omega_{k}^2 \rho(\btheta_{k-1}, \bx_{k})\| \leq \omega_k\|  \widetilde H(\bm{\theta}_{k-1},\bm{x}_{k})\| +\omega_{k}^2\|  \rho(\btheta_{k-1}, \bx_{k})\| .$$
By the compactness condition in Assumption \ref{ass2a} and $\sup\{\omega_k\}_{k=1}^{\infty}\leq 1$, (\ref{lip_theta}) can be derived.
\end{proof}

\begin{lemma}
\label{lemma:4}
There exist constants $\lambda_0$ and $k_0$ such that $\forall \lambda\geq\lambda_0$ and $\forall k> k_0$, the sequence $\{\psi_{k}\}_{k=1}^{\infty}$, where $\psi_{k}=\lambda\omega_{k}+\frac{1}{\phi} \sup_{i\geq k_0}\Delta_i$, satisfies
\begin{equation}
\begin{split}
\label{key_ieq}
\psi_{k+1}\geq& (1-2\omega_{k+1}\phi+G\omega_{k+1}^2)\psi_{k}+C_0\omega_{k+1}^2  +2\Delta_k\omega_{k+1}.
\end{split}
\end{equation}
\begin{proof}
By replacing $\psi_{k}$ with $\lambda\omega_{k}+\frac{1}{\phi} \sup_{i\geq k_0}\Delta_i$ in ($\ref{key_ieq}$), it suffices to show
\begin{equation*}
\small
\begin{split}
\label{lemma:loss_control}
\lambda \omega_{k+1}+\frac{1}{\phi} \sup_{i\geq k_0}\Delta_i\geq& (1-2\omega_{k+1}\phi+G\omega_{k+1}^2)\left(\lambda \omega_{k}+\frac{1}{\phi} \sup_{i\geq k_0}\Delta_i\right)+C_0\omega_{k+1}^2 + 2\Delta_k\omega_{k+1}.
\end{split}
\end{equation*}

which is equivalent to proving
\begin{equation*}
\small
\begin{split}
&\lambda (\omega_{k+1}-\omega_k+2\omega_k\omega_{k+1}\phi-G\omega_k\omega_{k+1}^2)\geq  \frac{1}{\phi}\sup_{i\geq k_0}\Delta_i(-2\omega_{k+1}\phi+G\omega_{k+1}^2 )+C_0\omega_{k+1}^2+ 2\Delta_k\omega_{k+1}.
\end{split}
\end{equation*}

Given the step size condition in ($\ref{a1}$), we have $$\small{\omega_{k+1}-\omega_{k}+2 \omega_{k}\omega_{k+1}\phi \geq C_1 \omega_{k+1}^2},$$ 
where $\small{C_1=\lim \inf 2\phi  \dfrac{\omega_{k}}{\omega_{k+1}}+\dfrac{\omega_{k+1}-\omega_{k}}{\omega^2_{k+1}}>0}$. Combining $-\sup_{i\geq k_0}\Delta_i\leq \Delta_k$, it suffices to prove
\begin{equation}
\begin{split}
\label{loss_control-2}
\lambda \left(C_1-G\omega_{k}\right)\omega^2_{k+1}\geq  \left(\frac{G}{\phi} \sup_{i\geq k_0}\Delta_i+C_0\right)\omega^2_{k+1}.
\end{split}
\end{equation}

It is clear that for a large enough $k_0$ and $\lambda_0$ such that $\omega_{k_0}\leq \frac{C_1}{2G}$, $\lambda_0=\frac{2G\sup_{i\geq k_0} \Delta_i + 2C_0\phi}{C_1\phi}$, the desired conclusion ($\ref{loss_control-2}$) holds for all such $k\geq k_0$ and $\lambda\geq \lambda_0$.
\end{proof}
\end{lemma}

The following lemma is a restatement of Lemma 25 (page 247) from \cite{Albert90}.
\begin{lemma}
\label{lemma:2}
Suppose $k_0$ is an integer satisfying
$\inf_{k> k_0} \dfrac{\omega_{k+1}-\omega_{k}}{\omega_{k}\omega_{k+1}}+2\phi-G\omega_{k+1}>0$ 
for some constant $G$. 
Then for any $k>k_0$, the sequence $\{\Lambda_k^K\}_{k=k_0, \ldots, K}$ defined below is increasing and uppered bounded by $2\omega_{k}$
\begin{equation}  
\Lambda_k^K=\left\{  
             \begin{array}{lr}  
             2\omega_{k}\prod_{j=k}^{K-1}(1-2\omega_{j+1}\phi+G\omega_{j+1}^2) & \text{if $k<K$},   \\  
              & \\
             2\omega_{k} &  \text{if $k=K$}.
             \end{array}  
\right.  
\end{equation} 
\end{lemma}

\begin{lemma}
\label{lemma:3-all}
Let $\{\psi_{k}\}_{k> k_0}$ be a series that satisfies the following inequality for all $k> k_0$
\begin{equation}
\begin{split}
\label{lemma:3-a}
\psi_{k+1}\geq &\psi_{k}\left(1-2\omega_{k+1}\phi+G\omega^2_{k+1}\right)+C_0\omega^2_{k+1} + 2 \Delta_k\omega_{k+1},
\end{split}
\end{equation}
and assume there exists such $k_0$ that 
\begin{equation}
\begin{split}
\label{lemma:3-b}
\E\left[\| \bm{T}_{k_0}\| ^2\right]\leq \psi_{k_0}.
\end{split}
\end{equation}
Then for all $k> k_0$, we have
\begin{equation}
\begin{split}
\label{result}
\E\left[\| \bm{T}_{k}\| ^2\right]\leq \psi_{k}+\sum_{j=k_0+1}^{k}\Lambda_j^k (z_{j-1}-z_j).
\end{split}
\end{equation}
\end{lemma}

\begin{proof}
We prove by the induction method. Assuming (\ref{result}) is true and applying (\ref{key_eqn}), we have that 
\begin{equation*}
\small
\begin{split}
    \E\left[\| \bm{T}_{k+1}\| ^2\right]&\leq (1-2\omega_{k+1}\phi+\omega^2_{k+1} G)(\psi_{k}+\sum_{j=k_0+1}^{k}\Lambda_j^k (z_{j-1}-z_j))\\
    &\ \ \ \ \ \ \ \ +C_0\omega^2_{k+1} +2 \Delta_k\omega_{k+1}+2\omega_{k+1}\E[z_{k}-z_{k+1}]\\
\end{split}
\end{equation*}

Combining (\ref{key_ieq}) and Lemma.\ref{lemma:2}, respectively, we have
\begin{equation*}
\small
\begin{split}
    \E\left[\| \bm{T}_{k+1}\| ^2\right]&\leq  \psi_{k+1}+(1-2\omega_{k+1}\phi+\omega^2_{k+1} G)\sum_{j=k_0+1}^{k}\Lambda_j^k (z_{j-1}-z_j)+2\omega_{k+1}\E[z_{k}-z_{k+1}]\\
    & \leq \psi_{k+1}+\sum_{j=k_0+1}^{k}\Lambda_j^{k+1} (z_{j-1}-z_j)+\Lambda_{k+1}^{k+1}\E[z_{k}-z_{k+1}]\\
    & \leq \psi_{k+1}+\sum_{j=k_0+1}^{k+1}\Lambda_j^{k+1} (z_{j-1}-z_j).\\
\end{split}
\end{equation*}
\end{proof}

\section{Ergodicity and Dynamic Importance Sampler}
\setcounter{lemma}{0}

\label{ergodicity}
Our interest is to analyze the deviation between the weighted averaging estimator $\frac{1}{k}\sum_{i=1}^k\theta_{i}^{\zeta}( \tilde{J}(\bx_i)) f(\bx_i)$ and posterior expectation $\int_{\MX}f(\bx)\pi(d\bx)$ for a \textcolor{black}{bounded} function $f$. To accomplish this analysis, we first study the convergence of the 
posterior sample mean $\frac{1}{k}\sum_{i=1}^k f(\bx_i)$ 
to the posterior expectation $\bar{f}=\int_{\MX}f(\bx)\varpi_{\Psi_{\btheta_{\star}}}(\bx)(d\bx)$ and then extend it to $\int_{\MX}f(\bx)\varpi_{\widetilde{\Psi}_{\btheta_{\star}}}(\bx)(d\bx)$. The key tool for establishing the ergodic theory is still the Poisson equation 
which is used to characterize the fluctuation 
between $f(\bx)$ and $\bar f$: 
\begin{equation}
    \mathcal{L}g(\bx)=f(\bx)-\bar f,
\end{equation}
where $g(\bx)$ is the solution of the Poisson equation, and $\mathcal{L}$ is the infinitesimal generator of the Langevin diffusion $\mathcal{L}g:=\langle\nabla g, \nabla L(\cdot, \btheta_{\star})\rangle+\textcolor{black}{\tau}\nabla^2g.$

Similar to the proof of Lemma \ref{lyapunov}, the existence of the solution of the Poisson's equation has been established in \cite{mattingly02, VollmerZW2016}. Moreover, the perturbations of $\E[f(\bx_k)]-\bar f$ are properly bounded given regularity properties for $g(\bx)$, where the 0-th, 1st, and 2nd order of the regularity properties has been established in \cite{Mackey18}. In what follows, we present a lemma, which is majorly adapted from Theorem 2 of \cite{Chen15} with a fixed learning rate $\epsilon$.

\begin{lemma}[Convergence of the Averaging Estimators. Formal statement of lemma \ref{avg_converge}]
\label{avg_converge_appendix}
Suppose Assumptions \ref{ass2a}-\ref{ass1} hold. For any bounded function $f$, 
\begin{equation*}
\small
\begin{split}
    \mid \E\left[\frac{\sum_{i=1}^k f(\bx_i)}{k}\right]-\int_{\MX}f(\bx)\varpi_{\widetilde{\Psi}_{\btheta_{\star}}}(\bx)d\bx\mid &=
    \mathcal{O}\left(\frac{1}{k\epsilon}+\sqrt{\epsilon}+\sqrt{\frac{\sum_{i=1}^k \omega_i}{k}}+ \frac{1}{\sqrt{m}}+
    \sqrt{\Var(\xi_n)}\right), \\
\end{split}
\end{equation*}
where $k_0$ is a sufficiently large constant, $\varpi_{\widetilde{\Psi}_{\btheta_{\star}}}(\bx) \propto  
\frac{\pi(\bx)}{\theta_{\star}^{\zeta}(J(\bx))}$, and $\frac{\sum_{i=1}^k \omega_i}{k} =o(\frac{1}{\sqrt{k}})$ as implied by Assumption \ref{ass1}. 
\end{lemma}

\begin{proof}
We rewrite the CSGLD algorithm as follows:
\begin{equation*}
\begin{split}
    \bm{x}_{k+1}&=\bx_k- \epsilon_k\nabla_{\bx} \widetilde{L}(\bx_k, \btheta_k)+\mathcal{N}({0, 2\epsilon_k \tau\bm{I}})\\
    &=\bx_k- \epsilon_k\left(\nabla_{\bx} 
    \widehat{L}(\bx_k, \btheta_{\star})+{\Upsilon}(\bx_k, \btheta_k, \btheta_{\star})\right)+\mathcal{N}({0, 2\epsilon_k \tau\bm{I}}),
\end{split}
\end{equation*}
where    
$\nabla_{\bx} \widehat{L}(\bx,\btheta)= \frac{N}{n} \left[1+  \frac{\zeta\tau}{\Delta u}  \left(\textcolor{black}{\log \theta({J}(\bx))-\log\theta(({J}(\bx)-1)\vee 1)} \right) \right]  \nabla_{\bx} \widetilde U(\bx)$,  $\nabla_{\bx} \widetilde{L}(\bx,\btheta)$ is as 
defined in Section \ref{Alg:app},
 and the bias term is given by ${\Upsilon}(\bx_k,\btheta_k,\btheta_{\star})=\nabla_{\bx} \widetilde{L}(\bx_k,\btheta_k)-\nabla_{\bx} \widehat{L}(\bx_k,\btheta_{\star})$.

By Assumption \ref{ass2}, we have $\| \nabla_{\bx} U(\bx)\| =\| \nabla_{\bx} U(\bx)-\nabla_{\bx} U(\bx_{\star})\| \lesssim \| \bx-\bx_{\star}\| \leq \| \bx\| +\| \bx_{\star}\| $ for some optimum. Then the $L^2$ upper bound in Lemma \ref{lemma:1} implies that $\nabla_{\bx} U(\bx)$ has a bounded second moment. Combining Assumption \ref{ass4}, we have $\E\left[\| \nabla_{\bx} \widetilde U(\bx)\| ^2\right]<\infty$. Further by Eve’s law (i.e., the
variance decomposition formula), it is easy to derive that $\E \left[\|  \nabla_{\bx} \widetilde{U}(\bx) \| \right]<\infty$.
Then, by the triangle inequality and Jensen's inequality, 
\begin{equation}
\label{latent_bias}
\small
\begin{split}
    \| \E[\Upsilon(\bx_k,\btheta_k,\btheta_{\star})]\| &\leq 
    \E[\| \nabla_{\bx} \widetilde{L}(\bx_k, \btheta_k)-\nabla_{\bx} \widetilde{L}(\bx_k, \btheta_{\star})\| ] + \E[\| \nabla_{\bx} \widetilde{L}(\bx_k, \btheta_{\star})-\nabla_{\bx} \widehat{L}(\bx_k, \btheta_{\star})\| ] \\
    &\lesssim  \E[\| \btheta_k-\btheta_{\star}\| ]+\mathcal{O}(\Var(\xi_n))\leq \sqrt{\E[\| \btheta_k-\btheta_{\star}\| ^2]}+\mathcal{O}(\Var(\xi_n))\\
    &\leq \mathcal{O}\left( \sqrt{\omega_{k}+\epsilon+
\frac{1}{m} +\Var(\xi_n)}\right),
\end{split}
\end{equation}
where Assumption \ref{ass2a} and Theorem \ref{latent_convergence} are used to derive the smoothness of $\nabla_{\bx} \tilde{L}(\bx, \btheta)$ with respect to $\btheta$.

The ergodic average based on biased gradients and a fixed learning rate is a direct result of Theorem 2 of  \cite{Chen15} by imposing the regularity condition. By simulating from $\varpi_{\Psi_{\btheta_{\star}}}(\bx)\propto\frac{\pi(\bx)}{\Psi^{\zeta}_{\btheta_{\star}}(U(\bx))}$ and combining (\ref{latent_bias}) and Theorem \ref{latent_convergence}, we have 
\begin{equation*}
\small
\begin{split}
    \mid \E\left[\frac{\sum_{i=1}^k f(\bx_i)}{k}\right]-\int_{\MX}f(\bx) \varpi_{\Psi_{\btheta_{\star}}}(\bx)d\bx\mid &\leq \mathcal{O}\left(\frac{1}{k\epsilon}+\epsilon+\frac{\sum_{i=1}^k \| \E[\Upsilon(\bx_k,\btheta_k,\btheta_{\star})]\| }{k}\right)\\
    &\lesssim \mathcal{O}\left(\frac{1}{k\epsilon}+\epsilon+\frac{\sum_{i=1}^k \sqrt{\omega_i+\epsilon+\frac{1}{m}+\Var(\xi_n)}}{k}\right) \\
    &\leq \mathcal{O}\left(\frac{1}{k\epsilon}+\sqrt{\epsilon}+\sqrt{\frac{\sum_{i=1}^k \omega_i}{k}}+
    \frac{1}{\sqrt{m}}+\sqrt{\Var(\xi_n)}\right),
\end{split}
\end{equation*}
where the last inequality follows by repeatedly applying the inequality $\sqrt{a+b}\leq \sqrt{a}+\sqrt{b}$ and 
the inequality $\sum_{i=1}^k \sqrt{\omega_i}\leq \sqrt{k\sum_{i=1}^k \omega_i}$.

For any a bounded function $f(\bx)$, we have $\mid \int_{\MX}f(\bx) \varpi_{\Psi_{\btheta_{\star}}}(\bx)d\bx -  \int_{\MX}f(\bx) \varpi_{\widetilde{\Psi}_{\btheta_{\star}}}(\bx)d\bx\mid = \mathcal{O}(\frac{1}{m})$ by Lemma \ref{partition_order}. By the triangle inequality, we have 
\begin{equation*}
\small
\begin{split}
    \mid \E\left[\frac{\sum_{i=1}^k f(\bx_i)}{k}\right]-\int_{\MX}f(\bx) \varpi_{\widetilde{\Psi}_{\btheta_{\star}}}(\bx)d\bx\mid \leq  \mathcal{O}\left(\frac{1}{k\epsilon}+\sqrt{\epsilon}+\sqrt{\frac{\sum_{i=1}^k \omega_i}{k}}+ \frac{1}{\sqrt{m}}+
    \sqrt{\Var(\xi_n)}\right),
\end{split}
\end{equation*}
which concludes the proof. 
\end{proof}

Finally, we are ready to show the convergence of the weighted averaging estimator $\frac{\sum_{i=1}^k\theta_{i}
     ^{\zeta}(\tilde{J}(\bx_i)) f(\bx_i)}{\sum_{i=1}^k\theta_{i}^{\zeta}( 
      \tilde{J}(\bx_i))}$ to the posterior mean $\int_{\MX}f(\bx)\pi(d\bx)$.
\begin{theorem}[Convergence of the Weighted Averaging Estimators. Formal statement of Theorem \ref{w_avg_converge}] Assume Assumptions \ref{ass2a}-\ref{ass1} hold. For any bounded function $f$, we have that 
\label{w_avg_converge_appendix}
\begin{equation*}
\small
\begin{split}
    \mid \E\left[\frac{\sum_{i=1}^k\theta_{i}
     ^{\zeta}(\tilde{J}(\bx_i)) f(\bx_i)}{\sum_{i=1}^k\theta_{i}^{\zeta}( 
      \tilde{J}(\bx_i))}\right]-\int_{\MX}f(\bx)\pi(d\bx)\mid &= \mathcal{O}\left(\frac{1}{k\epsilon}+\sqrt{\epsilon}+\sqrt{\frac{\sum_{i=1}^k \omega_i}{k}}+\frac{1}{\sqrt{m}}+\sqrt{\Var(\xi_n)}\right). \\
\end{split}
\end{equation*}
\end{theorem}

\begin{proof}

Applying triangle inequality and $\mid \E[x]\mid \leq \E[\mid x\mid ]$, we have
\begin{equation*}
% \footnotesize
    \begin{split}
        &\mid \E\left[\frac{\sum_{i=1}^k\theta_{i}
        ^{\zeta}( \tilde{J}(\bx_i)) f(\bx_i)}{\sum_{i=1}^k\theta_{i}^{\zeta}(
         \tilde{J}(\bx_i))}\right]-\int_{\MX}f(\bx)\pi(d\bx)\mid \\
        \leq &\underbrace{\E\left[\mid \frac{\sum_{i=1}^k\theta_{i}^{\zeta}
         (\tilde{J}(\bx_i))f(\bx_i)}
         {\sum_{i=1}^k\theta_{i}^{\zeta}(\tilde{J}(\bx_i)) }-\frac{\sum_{i=1}^k\theta_{i}^{\zeta}
         ({J}(\bx_i))f(\bx_i)}
         {\sum_{i=1}^k\theta_{i}^{\zeta}({J}(\bx_i)) }\mid \right]}_{\text{I}_1}\\
         &\quad+\underbrace{\E\left[\mid \frac{\sum_{i=1}^k\theta_{i}^{\zeta}
         ({J}(\bx_i))f(\bx_i)}
         {\sum_{i=1}^k\theta_{i}^{\zeta}({J}(\bx_i)) }-\frac{Z_{\btheta_{\star}}\sum_{i=1}^k\theta_{i}^{\zeta} ({J}(\bx_i)) f(\bx_i)}{k}\mid \right]}_{\text{I}_2}\\
        &\qquad\quad+ \underbrace{\E\left[\frac{Z_{\btheta_{\star}}}{k}\sum_{i=1}^k\mid \theta_i^{\zeta} ({J}(\bx_i))-\theta_{\star}^{\zeta}
      ({J}(\bx_i))  \mid  \cdot \mid f(\bx_i)\mid \right]}_{\text{I}_3} \\
      &\qquad\qquad\quad +\underbrace{\mid \E\left[\frac{Z_{\btheta_{\star}}}{k}\sum_{i=1}^k\theta_{\star}^{\zeta}
     ({J}(\bx_i)) f(\bx_i)\right]-\int_{\MX}f(\bx)\pi(d\bx)\mid }_{\text{I}_4}.
    \end{split}
\end{equation*}

For the term $\text{I}_1$, by applying mean-value theorem, we have 
\begin{equation}
\footnotesize
\begin{split}
    \text{I}_1&=\E\left[\mid \frac{\left(\sum_{i=1}^k\theta_{i}^{\zeta}(
         \tilde{J}(\bx_i))f(\bx_i)\right)\left(\sum_{i=1}^k\theta_{i}^{\zeta}(
         {J}(\bx_i))\right)-\left(\sum_{i=1}^k\theta_{i}^{\zeta}(
         {J}(\bx_i))f(\bx_i)\right)\left(\sum_{i=1}^k\theta_{i}^{\zeta}(
         \tilde{J}(\bx_i))\right)}
         {\left(\sum_{i=1}^k\theta_{i}^{\zeta}(
         \tilde{J}(\bx_i))\right)\left(\sum_{i=1}^k\theta_{i}^{\zeta}(
         {J}(\bx_i))\right)}\mid \right]\\
         %%%%%
         &\lesssim \sup_i \Var(\xi_n) \E\left[\frac{\left(\sum_{i=1}^k\theta_{i}
     ^{\zeta}({J}(\bx_i)) f(\bx_i) \left(\sum_{i=1}^k\theta_{i}^{\zeta}(
         {J}(\bx_i))\right)\right)}{\left(\sum_{i=1}^k\theta_{i}^{\zeta}(
         {J}(\bx_i))\right)\left(\sum_{i=1}^k\theta_{i}^{\zeta}(
         {J}(\bx_i))\right)}\right]
         =\mathcal{O}\left(\sup_i\Var(\xi_n)\right).
\end{split}
\end{equation}

For the term $\text{I}_2$, 
by the boundedness of $\bTheta$ and $f$ and the assumption  $\inf_{\Theta}\theta^{\zeta}(i)>0$, we have
\begin{equation*}
\small
\begin{split}
    \text{I}_2=&\E\left[\mid \frac{\sum_{i=1}^k\theta_{i}^{\zeta}({J}(\bx_i))  f(\bx_i)}{\sum_{i=1}^k\theta_{i}^{\zeta}
    ({J}(\bx_i))
    }\left(1-\sum_{i=1}^k\frac{\theta_i^{\zeta}
    ({J}(\bx_i))
    }{k}Z_{\btheta_{\star}}\right)\mid \right]\\
    %%%%%%
    \lesssim & \E\left[\mid Z_{\btheta_{\star}}\frac{{\sum_{i=1}^k\theta_{i}^{\zeta}
    ({J}(\bx_i))
    }}{k}-1\mid \right]\\
    =&\E\left[\mid Z_{\btheta_{\star}}\sum_{i=1}^m \frac{\sum_{j=1}^k\left( \theta_j^{\zeta}(i)-\theta_{\star}^{\zeta}(i)+\theta_{\star}^{\zeta}(i)\right)1_{
    {J}(\bx_j)=i}}{k}-1\mid \right]\\
    %%%%%%%%%%%
    \leq & \underbrace{\E\left[Z_{\btheta_{\star}}\sum_{i=1}^m \frac{\sum_{j=1}^k\mid  \theta_j^{\zeta}(i)-\theta_{\star}^{\zeta}(i)\mid  1_{{J}(\bx_j)=i}}{k} \right]}_{\text{I}_{21}} + \underbrace{\E\left[\mid  Z_{\btheta_{\star}}\sum_{i=1}^m \frac{\theta_{\star}^{\zeta}(i)\sum_{j=1}^k  1_{{J}(\bx_j)=i}}{k}-1\mid 
    \right]}_{\text{I}_{22}}.\\
\end{split}
\end{equation*}

For $\text{I}_{21}$, by first applying the inequality $\mid x^{\zeta}-y^{\zeta}\mid \leq \zeta \mid x-y\mid  z^{\zeta-1}$ for any $\zeta>0, x\leq y$ and $z\in[x, y]$ based on the mean-value theorem and then applying the Cauchy–Schwarz inequality, we have 
\begin{equation}\label{ii_21}
    \text{I}_{21}\lesssim \frac{1}{k}\E\left[ \sum_{j=1}^k\sum_{i=1}^m\mid  \theta_j^{\zeta}(i)-\theta_{\star}^{\zeta}(i)\mid  \right]\lesssim  \frac{1}{k}\E\left[ \sum_{j=1}^k\sum_{i=1}^m\mid  \theta_j(i)-\theta_{\star}(i)\mid  \right]\lesssim  \frac{1}{k}\sqrt{\sum_{j=1}^k\E\left[\left\|  \btheta_j-\btheta_{\star}\right\| ^2\right]},
\end{equation}
where the compactness of $\bTheta$ has been 
used in deriving the second inequality. 

For $\text{I}_{22}$, considering the following relation $$
    1=\sum_{i=1}^m\int_{\MX_i} \pi(\bx)d\bx=\sum_{i=1}^m\int_{\MX_i} \theta_{\star}^{\zeta}(i) \frac{\pi(\bx)}{\theta_{\star}^{\zeta}(i)}d\bx
    =Z_{\btheta_{\star}}\int_{\MX} \sum_{i=1}^m \theta_{\star}^{\zeta}(i) 1_{{J}(\bx)=i}\varpi_{\widetilde{\Psi}_{\btheta_{\star}}}(\bx)d\bx,$$ then we have

\begin{equation}
\begin{split}
    \text{I}_{22}&=\E\left[\mid  Z_{\btheta_{\star}}\sum_{i=1}^m \frac{\theta_{\star}^{\zeta}(i)\sum_{j=1}^k  1_{{J}(\bx_j)=i}}{k}-Z_{\btheta_{\star}}
    \int_{\MX} \sum_{i=1}^m \theta_{\star}^{\zeta}(i) 1_{{J}(\bx)=i}\varpi_{\widetilde{\Psi}_{\btheta_{\star}}}(\bx)d\bx\mid \right]\\
    &=Z_{\btheta_{\star}} \E\left[\mid  \frac{1}{k}\sum_{j=1}^k \left(\sum_{i=1}^m\theta_{\star}^{\zeta}(i)  1_{{J}(\bx_j)=i}\right)-\int_{\MX} \left(\sum_{i=1}^m \theta_{\star}^{\zeta}(i) 1_{{J}(\bx)=i}\right)\varpi_{\widetilde{\Psi}_{\btheta_{\star}}}(\bx)d\bx\mid \right]\\
    &= \mathcal{O}\left(\frac{1}{k\epsilon}+\sqrt{\epsilon}+\sqrt{\frac{\sum_{i=1}^k \omega_i}{k}}+\frac{1}{\sqrt{m}}+ \sqrt{\Var(\xi_n)} \right),
\end{split}
\end{equation}
where the last equality follows  from Lemma \ref{avg_converge_appendix} as the \textcolor{black}{step 
function $\sum_{i=1}^m \theta_{\star}^{\zeta}(i) 1_{{J}(\bx)=i}$} is integrable.

For $\text{I}_3$, by the boundedness of $f$,  
the mean value theorem and Cauchy-Schwarz inequality, 
we have 
\begin{equation}\label{ii_3}
\small
    \begin{split}
        \text{I}_3&\lesssim \E\left[\frac{1}{k}\sum_{i=1}^k\mid \theta_{i}
        ^{\zeta}({J}(\bx_i)) -\theta_{\star}^{\zeta}(
        {J}(\bx_i))\mid \right]\lesssim  \frac{1}{k}\E\left[ \sum_{j=1}^k\sum_{i=1}^m\mid  \theta_j(i)-\theta_{\star}(i)\mid  \right]\lesssim  \frac{1}{k}\sqrt{\sum_{j=1}^k\E\left[\left\|  \btheta_j-\btheta_{\star}\right\| ^2\right]}.\\
    \end{split}
\end{equation}

For the last term $\text{I}_4$, we first decompose $\int_{\MX} f(\bx) \pi(d\bx)$ into $m$ disjoint regions to facilitate the analysis
\begin{equation}
\label{split_posterior}
\footnotesize
\begin{split}
      \int_{\MX} f(\bx) \pi(d\bx)&=\int_{\cup_{j=1}^m \MX_j}  f(\bx) \pi(d\bx)=\sum_{j=1}^m\int_{\MX_j}\theta_{\star}^{\zeta}(j)  f(\bx) \frac{\pi(d\bx)}{\theta_{\star}^{\zeta}(j)}\\
      &=Z_{\btheta_{\star}}\int_{\MX} \sum_{j=1}^m \theta_{\star}(j)^{\zeta}f(\bx) 1_{
        {J}(\bx_i)=j 
        }\varpi_{\widetilde{\Psi}_{\btheta_{\star}}}(\bx)(d\bx).\\
\end{split}
\end{equation}

Plugging (\ref{split_posterior}) into the last term $\text{I}_4$, we have

\begin{equation}
\label{final_i2}
\small
    \begin{split}
        \text{I}_4&=\mid \E\left[\frac{Z_{\btheta_{\star}}}{k}\sum_{i=1}^k\sum_{j=1}^m\theta_{\star}(j)^{\zeta} f(\bx_i)1_{  {J}(\bx_i)=j
        }\right]-\int_{\MX}f(\bx)\pi(d\bx)\mid \\
        &= Z_{\btheta_{\star}}\mid \E\left[\frac{1}{k}\sum_{i=1}^k \left(\sum_{j=1}^m\theta_{\star}^{\zeta}(j) f(\bx_i)1_{
        {J}(\bx_i)=j 
        }\right)\right]-\int_{\MX}  \left(\sum_{j=1}^m\theta_{\star}^{\zeta}(j) f(\bx_i)1_{
        {J}(\bx_i)=j 
        }\right) \varpi_{\widetilde{\Psi}_{\btheta_{\star}}}(\bx)(d\bx)\mid \\
    \end{split}
\end{equation}

Applying the function \textcolor{black}{$\sum_{j=1}^m\theta_{\star}^{\zeta}(j) f(\bx_i)1_{
        {J}(\bx_i)=j 
        }$ }
to Lemma \ref{avg_converge_appendix} yields
\begin{equation}
\label{almost_i2}
\small
\begin{split}
      \mid \E\left[\frac{1}{k}\sum_{i=1}^k f(\bx_i)\right]-\int_{\MX}  f(\bx) \varpi_{\widetilde{\Psi}_{\btheta_{\star}}}(\bx)(d\bx)\mid  = \mathcal{O}\left(\frac{1}{k\epsilon}+\sqrt{\epsilon}+\sqrt{\frac{\sum_{i=1}^k \omega_i}{k}}+\frac{1}{\sqrt{m}}+ \sqrt{\Var(\xi_n)} \right).\\
\end{split}
\end{equation}

Plugging (\ref{almost_i2}) into (\ref{final_i2}) and combining $\text{I}_{1}$, $\text{I}_{21}$, $\text{I}_{22}$, $\text{I}_3$ and Theorem \ref{latent_convergence}, we have
\begin{equation*}
\small
\begin{split}
      \mid \E\left[\frac{\sum_{i=1}^k\theta_{i}
     ^{\zeta}(\tilde{J}(\bx_i)) f(\bx_i)}{\sum_{i=1}^k\theta_{i}^{\zeta}( 
      \tilde{J}(\bx_i))}\right]-\int_{\MX}f(\bx)\pi(d\bx)\mid  = \mathcal{O}\left(\frac{1}{k\epsilon}+\sqrt{\epsilon}+\sqrt{\frac{\sum_{i=1}^k \omega_i}{k}}+\frac{1}{\sqrt{m}}+ \sqrt{\Var(\xi_n)} \right),\\
\end{split}
\end{equation*}
which concludes the proof of the theorem.

\end{proof}

\section{More Discussions on the Algorithm}
\label{ext}

\subsection{An Alternative Numerical Scheme}
\label{alternative_scheme}
In addition to the numerical scheme used in (6) and (8) in the main body, we can also consider the following numerical scheme 
  \begin{equation*} \label{alternative_SGLDeq6}
  \footnotesize
 \begin{split}
  \small{\bx_{k+1}=\bx_k - \epsilon_{k+1} \frac{N}{n} \left[1+ 
   \zeta\tau\frac{\log {\theta}_{k}\big(\tilde J(\bx_k) \wedge m\big) - \log{\theta}_{k}\big(\tilde J(\bx_k)\big)}{\Delta u}  \right]  
    \nabla_{\bx} \widetilde U(\bx_k) +\sqrt{2 \tau \epsilon_{k+1}} \bw_{k+1}}.
 \end{split}
  \end{equation*}

Such a scheme leads to a similar theoretical result and a better treatment of $\Psi_{\btheta}(\cdot)$ for the subregions that contains stationary points.

\subsection{Bizarre Peaks in the Gaussian Mixture Distribution}

A bizarre peak always indicates that there is a stationary point of the same energy in somewhere of the sample space, as the sample space is partitioned according to the energy function in CSGLD. For example, we study a mixture distribution with asymmetric modes $\pi(x)=1/6 N(-6,1)+5/6 N(4,1)$. Figure \ref{bizzae} shows a bizarre peak at $x$. Although $x$ is not a local minimum, it has the same energy as ``-6'' which is a local minimum. Note that in CSGLD, $x$ and ``-6'' belongs to the same subregion.

\begin{figure}[ht]
\vspace{-0.05in}
  \centering
  \includegraphics[scale=0.27]{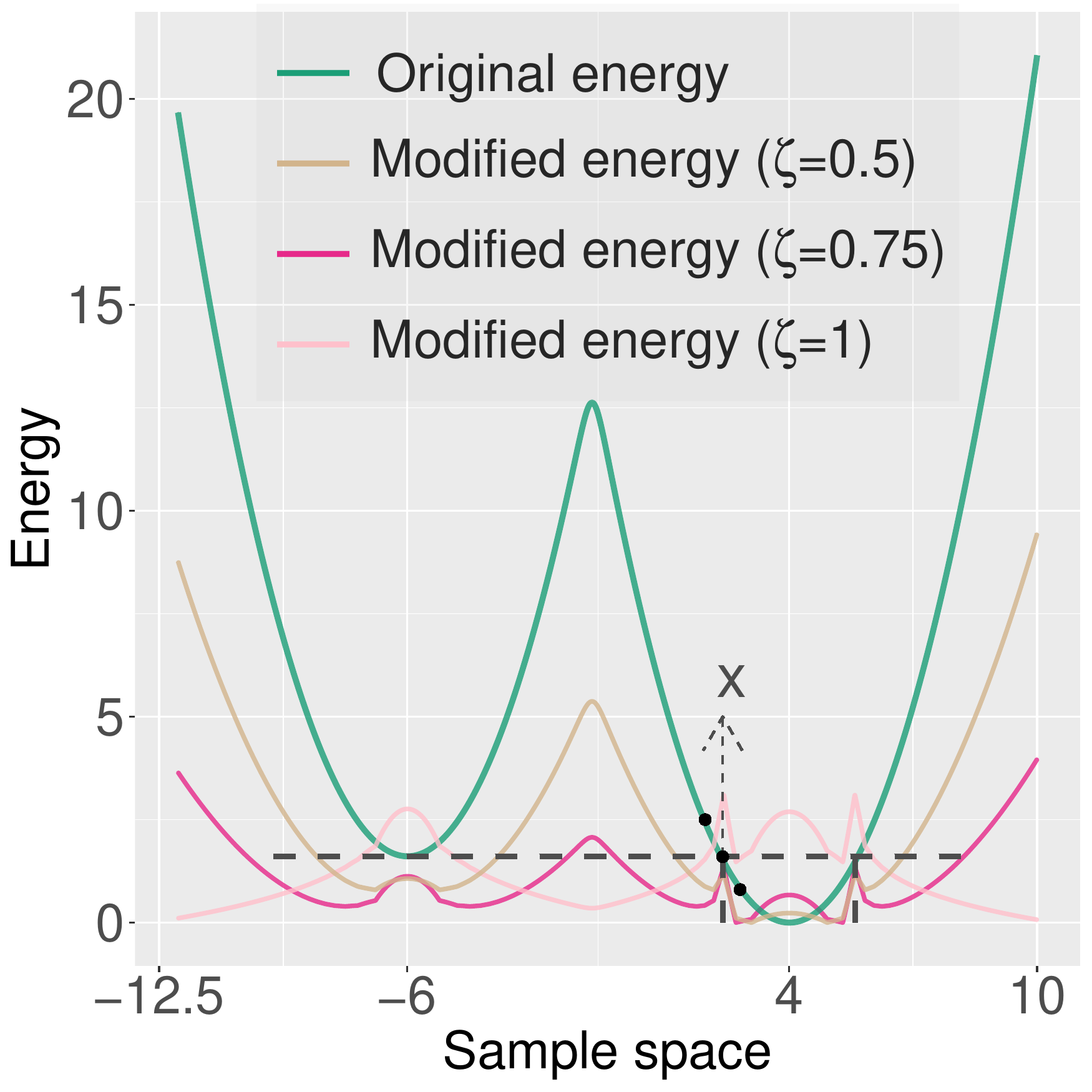}
  \caption{Explanation of bizarre peaks.}
  \label{bizzae}
  \vspace{-0.05in}
\end{figure}
\chapter{TECHNICAL PROOFS FOR CHAPTER \ref{ICSGLD}}

We summarize the supplementary material as follows: Section \ref{review_icsgld} provides the preliminary knowledge for stochastic approximation; Section \ref{convergence} shows a local stability condition that adapts to high losses; Section \ref{Gaussian_approx} proves the main asymptotic normality for the stochastic approximation process, which naturally yields the conclusion that interacting contour stochastic gradient Langevin dynamics (ICSGLD) is more efficient than the analogous single chain based on slowly decreasing step sizes; Section \ref{details_exp} details the experimental settings.

\section{Preliminaries: Gaussian Diffusions}
\label{review_icsgld}

Consider a stochastic linear differential equation
\begin{equation}
\label{slde}
    d\bU_t=h_{\btheta}(\btheta_t) \bU_t dt + \bR^{1/2}(\btheta_t)d\bW_t,
\end{equation}
where $\bU$ is a $m$-dimensional random vector, $h_{\btheta}:=\frac{d}{d\btheta} h(\btheta)$, $\bR(\btheta):=\sum_{k=-\infty}^{\infty} \cov_{\btheta}(H(\btheta, \bx_k), H(\btheta, \bx_0))$ is a positive definite matrix that depends on $\btheta(\cdot)$, $\bW\in\mathbb{R}^m$ is a standard Brownian motion. Given a large enough $t$ such that $\btheta_t$ converges to a fixed point $\widehat\btheta_{\star}$ sufficiently fast, we may write the diffusion associated with Eq.(\ref{slde}) as follow
\begin{equation}
\label{solution_slde}
    \bU_t\approx e^{-th_{\btheta}(\widehat\btheta_{\star})}\bU_0+\int_0^t e^{-(t-s)h_{\btheta}(\widehat\btheta_{\star})}\circ \bR(\widehat\btheta_{\star}) d\bW_s,
\end{equation}

Suppose that the matrix $h_{\btheta}(\widehat\btheta_{\star})$ is negative definite, then $\bU_t$ converges in distribution to a Gaussian variable 
\begin{equation*}
\begin{split}
    \E[\bU_t]&=e^{-t h_{\btheta}(\widehat\btheta_{\star})}\bU_0\\
\Var(\bU_t)&=\int_0^{t} e^{t h_{\btheta}(\widehat\btheta_{\star})}\circ \bR \circ e^{t h_{\btheta}(\widehat\btheta_{\star})} du.
\end{split}
\end{equation*}

The main goal of this supplementary file is to study the Gaussian approximation of the process $\omega_k^{-1/2}(\btheta_k-\widehat\btheta_{\star})$ to the solution Eq.(\ref{solution_slde}) for a proper step size $\omega_k$. Thereafter, the advantage of interacting mechanisms can be naturally derived.

\section{Stability and Convergence Analysis}
\label{convergence}

As required by the algorithm, we update $P$ contour stochastic gradient Langevin dynamics (CSGLD) simultaneously. For the notations, we denote the particle of the p-th chain at iteration $k$ by $\bx_{k}^{(p)}\in \MX\subset \mathbb{R}^d$ and the joint state of the $P$ parallel particles at iteration $k$ by $\bx_{k}^{\pop P}:=\left(\bx_{k}^{(1)}, \bx_{k}^{(2)}, \cdots, \bx_{k}^{(P)}\right)^\top\in \MX^{\pop P}\subset \mathbb{R}^{dP}$. We also denote the learning rate and step size at iteration $k$ by $\epsilon_k$ and $\omega_k$, respectively. We denote by  $\mathcal{N}({0, \bm{I}_{dP}})$ a standard $dP$-dimensional Gaussian vector and denote by $\zeta$ a positive hyperparameter.

\subsection{ICSGLD Algorithm} \label{Alg:app_icsgld}

First, we introduce the interacting contour stochastic gradient Langevin dynamics (ICSGLD) with $P$ parallel chains:
\begin{itemize}
\item[(1)] Simulate $\bx_{k+1}^{\pop P}=\bx_k^{\pop P}- \epsilon_k\nabla_{\bx} \widetilde \bL(\bx_k^{\pop P}, \btheta_k)+\mathcal{N}({0, 2\epsilon_k \tau\bm{I}_{dP}}), \ \ \ \ \ \ \ \ \ \ \ \ \ \ \ \ \ \ \ \ \ \ \ \ \ \ \ \ \ \ \ \ \ \ \ \ \ \ \ \ \  (\text{S}_1)$

\item[(2)] Optimize $\bm{\theta}_{k+1}=\bm{\theta}_{k}+\omega_{k+1} \widetilde \bH(\bm{\theta}_{k}, \bx_{k+1}^{\pop P}),
\ \ \ \ \ \ \ \ \ \ \ \ \ \ \ \ \ \ \ \ \ \ \ \ \ \ \ \ \ \ \ \ \ \ \ \ \ \ \ \ \ \ \ \ \ \ \ \ \ \ \ \ \ \ \ \ \ \ \ \ \ \ \ \ \ \ \ (\text{S}_2)$
\end{itemize}
where  $\nabla_{\bx} \widetilde \bL(\bx^{\pop P}, \btheta):=\left(\nabla_{\bx} \widetilde L(\bx^{(1)}, \btheta), \nabla_{\bx} \widetilde L(\bx^{(2)}, \btheta), \cdots, \nabla_{\bx} \widetilde L(\bx^{(P)}, \btheta)\right)^\top$, $\nabla_{\bx} \widetilde L(\bx, \btheta)$ is the 
stochastic adaptive gradient given by 
\begin{equation}
\label{adaptive_grad}
    \nabla_{\bx} \widetilde{L}(\bx,\btheta)= \frac{N}{n} \left[1+ 
   \frac{\zeta\tau}{\Delta u}  \left(\log \theta({J}_{\widetilde U}(\bx))-\log\theta((J_{\widetilde U}(\bx)-1)\vee 1) \right) \right]  
    \nabla_{\bx} \widetilde U(\bx).
\end{equation}

In particular, the interacting random-field function is written as \begin{equation}
\label{interactions}
    \widetilde \bH(\bm{\theta}_{k}, \bx_{k+1}^{\pop P})=\frac{1}{P}\sum_{p=1}^P \widetilde H(\btheta_k,\bx_{k+1}^{(p)}),
\end{equation} 
where each random-field function $\widetilde H(\btheta,\bx)=(\widetilde H_1(\btheta,\bx), \ldots,\widetilde H_m(\btheta,\bx))$ follows
\begin{equation}
\label{random_field_H}
     \widetilde H_i(\btheta,\bx)={\theta}( J_{\widetilde U}(\bx))\left(1_{i= J_{\widetilde U}(\bx)}-{\theta}(i)\right), \quad i=1,2,\ldots,m.
\end{equation}
Here $J_{\widetilde U}(\bx)$ denotes the index $i\in\{1, 2,3,\cdots, m\}$ such that $u_{i-1}< \frac{N}{n} \widetilde U(\bx)\leq u_i$ for a set of energy partitions $\{u_i\}_{i=0}^{m}$ and $\widetilde U(\bx)=\sum_{i\in B} U_i(\bx)$ where $U_i$ denotes the negative log of a posterior based on a single data point $i$ and $B$ denotes a mini-batch of data of size $n$. Note that the stochastic energy estimator $\widetilde U(\bx)$ results in a biased estimation for the partition index $J_{\widetilde U}(\bx)$ due to a non-linear transformation. To avoid such a bias asymptotically with respect to the learning rate $\epsilon$, we may consider a variance-reduced energy estimator $\widetilde U_{\text{VR}}(\bx)$ following \cite{deng_VR}
\begin{equation}
\label{VR_estimator}
    \frac{N}{n} \widetilde U_{\text{VR}}(\bx)=\frac{N}{n}\sum_{i\in B_k}\left( U_i(\bx) - U_i\left(\bx_{q\lfloor\frac{k}{q}\rfloor}\right) \right)+\sum_{i=1}^N U_i\left(\bx_{q\lfloor\frac{k}{q}\rfloor}\right),
\end{equation}
where the control variate $\bx_{q\lfloor \frac{k}{q}\rfloor}$ is updated every $q$ iterations.

Compared with the na\"ive parallelism of CSGLD, a key feature of the ICSGLD algorithm lies in the joint estimation of the interacting random-field function $\widetilde \bH(\bm{\theta}, \bx^{\pop P})$ in Eq.(\ref{interactions}) for the same mean-field function $h(\btheta)$.% \cite{andrieu06}.

\subsubsection{Discussions on the Hyperparameters}
\label{hyper_setup}
The most important hyperparameter is $\zeta$. A fine-tuned $\zeta$ usually leads to a small or even slightly negative learning rate in low energy regions to avoid local-trap problems. Theoretically, $\zeta$ affects the $L^2$ convergence rate hidden in the big-O notation in Lemma \ref{convex_appendix_icsgld}.

The other hyperparameters can be easily tuned. For example, the ResNet models yields the full loss ranging from 10,000 to 60,000 after warm-ups, we thus partition the sample space according to the energy into 200 subregions equally without tuning; since the optimization of SA is nearly convex, tuning $\{\omega_k\}$ is much easier than tuning $\{\epsilon_k\}$ for non-convex learning.

\subsubsection{Discussions on Distributed Computing and Communication Cost}
\label{communication_cost}
In shared-memory settings, the implementation is trivial and the details are omitted.

In distributed-memory settings: $\btheta_{k+1}$ is updated by the central node as follows: 
\begin{itemize}
    \item The $p$-th worker conducts the sampling step $(\text{S}_1)$ and sends the indices $J_{\widetilde U(\bx_{k+1}^{(p)})}$'s to the central node;
    \item The central node aggregates the indices from all worker and updates $\btheta_k$ based on $(\text{S}_2)$;
    \item The central node sends $\btheta_{k+1}$ back to each worker.
\end{itemize}

We emphasize that we don't communicate the model parameters $\bx\in\mathbb{R}^d$, but rather share the self-adapting parameter $\btheta\in\mathbb{R}^m$, where $m\ll d$.  For example, WRN-16-8 has 11 M parameters (40 MB), while $\btheta$ can be set to dimension $200$ of size 4 KB; hence, the communication cost is not a big issue. Moreover, the theoretical advantage still holds if the communication frequency is reduced.

\subsubsection{Scalability to Big Data}
\label{scalability}

Recall that the adaptive sampler follows that
\begin{equation*} 
  \footnotesize
  \small{\bx_{k+1}=\bx_k - \epsilon_{k+1} \frac{N}{n} \underbrace{\left[1+ 
   \zeta\tau\frac{\log {\theta}_{k}(\tilde J(\bx_k)) - \log{\theta}_{k}((\tilde J(\bx_k)-1)\vee 1)}{\Delta u}  \right]}_{\text{gradient multiplier}}  
    \nabla_{\bx} \widetilde U(\bx_k) +\sqrt{2 \tau \epsilon_{k+1}} w_{k+1}}, 
\end{equation*}

The key to the success of (I)CSGLD is to generate sufficiently strong bouncy moves (\emph{negative} gradient multiplier) to escape local traps. To this end, $\zeta$ can be tuned to generate proper bouncy moves. 

Take the CIFAR100 experiments for example: 
\begin{itemize}
    \item the self-adjusting mechanism fails if the gradient multiplier uniformly ``equals'' to 1 and a too small value of $\zeta=1$ could lead to this issue;
    \item the self-adjusting mechanism works only if we choose a large enough $\zeta$ such as 3e6 to  generate (desired) negative gradient multiplier in over-visited regions.
\end{itemize}
However, when we set $\zeta=$3e-6, the original stochastic approximation (SA) update proposed in \cite{CSGLD} follows that
$${\theta}_{k+1}(i)={\theta}_{k}(i)+\omega_{k+1}\underbrace{{\theta}_{k}^{\zeta}(\tilde J(\bx_{k+1}))}_{\textbf{essentially 0 for } \zeta\gg 1}\left(1_{i=\tilde J(\bx_{k+1})}-{\theta}_{k}(i)\right).$$
Since $\theta(i)<1$ for any $i\in\{1,\cdots, m\}$, $\theta(i)^{\zeta}$ {is essentially 0} for such a large $\zeta$, which means that \textbf{the original SA fails to optimize when $\zeta$ is large}. Therefore, the limited choices of $\zeta$ inevitably limits the scalability to big data problems. Our newly proposed SA scheme $${\theta}_{k+1}(i)={\theta}_{k}(i)+\omega_{k+1}\underbrace{{\theta}_{k}(\tilde J(\bx_{k+1}))}_{\text{independent of } \zeta}\left(1_{i=\tilde J(\bx_{k+1})}-{\theta}_{k}(i)\right)$$  is more independent of $\zeta$ and proposes to converge to $\theta_{\infty}^{1/\zeta}$ instead of $\theta_{\infty}$, where $\theta_{\infty}(i)=\int_{\chi_{i}} \pi(x) d x\propto \int_{\chi_{i}} e^{-\frac{U(x)}{\tau}} d x$ is the energy PDF. As such, despite the linear stability is sacrificed, the resulting algorithm is more scalable. For example, estimating $e^{-10,000\times\frac{1}{\zeta}}$ is numerically much easier than $e^{-10,000}$ for a large $\zeta$ such as $10,000$, where $10,000$ can be induced by the high losses in training deep neural networks in big data.

\subsection{Local Stability via the Scalable Random-field Function}
Now, we are ready to present our first result. Lemma \ref{convex_appendix_icsgld} establishes a local stability condition for the non-linear mean-field system of ICSGLD, which implies a potential convergence of $\btheta_k$ to a unique fixed point that adapts to a wide energy range under mild assumptions. 

\begin{lemma}[Local stability, restatement of Lemma \ref{convex_main}] \label{convex_appendix_icsgld} 
Assume Assumptions  \ref{ass2a}-\ref{ass4} hold. Given a small enough learning rate $\epsilon$, a large enough $m$ and batch size $n$, and any $\btheta\in \widetilde\bTheta$, where $\widetilde\bTheta$ is a small neighborhood of $\btheta_{\star}$ that contains $\widehat\btheta_{\star}$, we have $\langle h(\btheta), \btheta - \widehat\btheta_{\star}\rangle \leq  -\phi\| \btheta - \widehat\btheta_{\star}\| ^2$, where $\widehat\btheta_{\star}=\btheta_{\star}+\mathcal{O}(\varepsilon)$, $\varepsilon=\mathcal{O}\left(\Var(\xi_n)+\epsilon+\frac{1}{m}\right)$ and  $\btheta_{\star}=\left(\frac{\left(\int_{\MX_1} \pi(\bx)d\bx\right)^{\frac{1}{\zeta}}}{\sum_{k=1}^m \left(\int_{\MX_k} \pi(\bx)d\bx\right)^{\frac{1}{\zeta}}}, 
\ldots,
\frac{\left(\int_{\MX_m} \pi(\bx)d\bx\right)^{\frac{1}{\zeta}}}{\sum_{k=1}^m \left(\int_{\MX_k} \pi(\bx)d\bx\right)^{\frac{1}{\zeta}}}\right)$, $\phi=\inf_{\btheta}\min_i \widehat Z_{\zeta,\theta(i)}^{-1}\big(1-\mathcal{O}(\varepsilon)\big)>0$, $\widehat Z_{\zeta,\theta(i)}$ is defined below Eq.(\ref{h_i_theta_icsgld_v2}), and $\Var(\xi_n)$ denotes the maximum variance of noise in the energy estimator $\widetilde U(\bx)$ of batch size $n$.
\end{lemma}

\begin{proof} The random-field function $\widetilde H_i(\btheta,\bx)={\theta}(J_{\widetilde U}(\bx))\left(1_{i= J_{\widetilde U}(\bx)}-{\theta}(i)\right)$ based on the stochastic energy estimator $\widetilde U(\bx)$ yields a biased estimator of $ H_i(\btheta,\bx)={\theta}( J(\bx))\left(1_{i= J(\bx)}-{\theta}(i)\right)$ for any $i \in \{1,2,\ldots,m\}$ based on the exact energy partition function $J(\cdot)$. By Lemma.\ref{bias_in_SA}, we know that the bias caused by the stochastic energy is of order $\mathcal{O}(\Var(\xi_n))$.

Now we compute the mean-field function $h(\btheta)$ based on the measure $\varpi_{\btheta}(\bx)$ simulated from SGLD:
\begin{equation}
\small
\label{iiii}
\begin{split} 
        h_i(\btheta)&=\int_{\MX} \widetilde H_i(\btheta,\bx) 
         \varpi_{\btheta}(\bx) d\bx
         =\int_{\MX} H_i(\btheta,\bx) 
         \varpi_{\btheta}(\bx) d\bx+\mathcal{O}\left(\Var(\xi_n)\right)\\
         &=\ \int_{\MX} H_i(\btheta,\bx) \left( \underbrace{\varpi_{\widetilde{\Psi}_\btheta}(\bx)}_{\text{I}_1} \underbrace{-\varpi_{\widetilde{\Psi}_\btheta}(\bx)+\varpi_{\Psi_{\btheta}}(\bx)}_{\text{I}_2: \text{piece-wise approximation}}\underbrace{-\varpi_{\Psi_{\btheta}}(\bx)+\varpi_{\btheta}(\bx)}_{\text{I}_3: \text{numerical discretization}}\right) d\bx+\mathcal{O}\left(\Var(\xi_n)\right),\\
\end{split}
\end{equation}
where $\varpi_{\btheta}$ is the invariant measure simulated via SGLD that
approximates $\varpi_{\Psi_{\btheta}}(\bx)$. $\varpi_{\Psi_{\btheta}}(\bx)$ and $\varpi_{\widetilde{\Psi}_{\btheta}}(\bx)$ are two invariant measures that follow $\varpi_{\Psi_{\btheta}}(\bx)\propto\frac{\pi(\bx)}{\Psi^{\zeta}_{\btheta}(U(\bx))}$ and
 $\varpi_{\widetilde{\Psi}_{\btheta}}(\bx) \propto \frac{\pi(\bx)}{\widetilde{\Psi}^{\zeta}_{\btheta}(U(\bx))}$; $\Psi_{\btheta}(u)$ and $\widetilde{\Psi}_{\btheta}(u)$ are piecewise continuous and constant functions, respectively
\begin{equation*}
\begin{split}
\Psi_{\btheta}(u)&= \sum_{k=1}^m \left(\theta(k-1)e^{(\log\theta(k)-\log\theta(k-1)) \frac{u-u_{k-1}}{\Delta u}}\right) 1_{u_{k-1} < u \leq u_k};\ \ \ \widetilde{\Psi}_{\btheta}(u)=\sum_{k=1}^m \theta(k) 1_{u_{k-1} < u \leq u_{k}}.\\
\end{split}
\end{equation*}

(i) For the first term $\text{I}_1$, we have
\begin{equation}
\begin{split}
\label{i_1_icsgld}
    \int_{\MX} H_i(\btheta,\bx) 
     \varpi_{\widetilde{\Psi}_\btheta}(\bx) d\bx&=\frac{1}{\widetilde Z_{\zeta+1,\btheta}} \int_{\MX} {\theta}(J(\bx))\left(1_{i= J(\bx)}-{\theta}(i)\right) \frac{\pi(\bx)}{\theta^{\zeta}(J(\bx))} d\bx\\
     &=\frac{1}{\widetilde Z_{\zeta+1,\btheta}} \sum_{k=1}^m\int_{\MX_k} \left(1_{i= k}-{\theta}(i)\right) \frac{\pi(\bx)}{\theta^{\zeta-1}(k)} d\bx\\
    &=\frac{1}{\widetilde Z_{\zeta+1,\btheta}}\left[\sum_{k=1}^m \int_{\MX_k} 
     \frac{\pi(\bx)}{\theta^{\zeta-1}(k)} 1_{k=i} d\bx -\theta(i)\sum_{k=1}^m\int_{\MX_k} \frac{\pi(\bx)}{\theta^{\zeta-1}(k)}d\bx \right] \\
    &=\frac{1}{\widetilde Z_{\zeta+1,\btheta}}\left[
     \frac{\int_{\MX_i} \pi(\bx)d\bx}{\theta^{\zeta-1}(i)} -\theta(i) \widetilde Z_{\zeta,\btheta} \right], \\
\end{split}
\end{equation}
where $\widetilde Z_{\zeta+1,\btheta}=\sum_{k=1}^m  \frac{\int_{\MX_k} \pi(\bx)d\bx}{\theta^{\zeta}(k)}$ denotes the normalizing constant of $\varpi_{\widetilde{\Psi}_\btheta}(\bx)$.

The solution $\btheta_{\star}$ that solves $\frac{\int_{\MX_k} \pi(\bx)d\bx}{\theta^{\zeta-1}(k)} -\theta(k) \widetilde Z_{\zeta,\btheta}=0$ for any $k\in\{1,2,\cdots, m\}$ satisfies $\theta_{\star}(k)=\left(\frac{\int_{\MX_k} \pi(\bx)d\bx}{\widetilde Z_{\zeta, \btheta_{\star}}}\right)^{\frac{1}{\zeta}}$.

Combining the definition of $\widetilde Z_{\zeta, \btheta_{\star}}=\sum_{k=1}^m  \frac{\int_{\MX_k} \pi(\bx)d\bx}{\theta_{\star}^{\zeta-1}(k)}$, we have
\begin{equation*}
    \begin{split}
    \widetilde Z_{\zeta, \btheta_{\star}}&
    =\sum_{k=1}^m \frac{\int_{\MX_k} \pi(\bx)d\bx}{\theta_{\star}^{\zeta-1}(k)}
    =\sum_{k=1}^m \frac{\int_{\MX_k} \pi(\bx)d\bx}{\left(\frac{\int_{\MX_k} \pi(\bx)d\bx}{\widetilde Z_{\zeta, \btheta_{\star}}}\right)^{\frac{\zeta-1}{\zeta}}}\\
    &
    =\widetilde Z_{\zeta, \btheta_{\star}}^{\frac{\zeta-1}{\zeta}} \sum_{k=1}^m \frac{\int_{\MX_k} \pi(\bx)d\bx}{\left(\int_{\MX_k} \pi(\bx)d\bx\right)^{\frac{\zeta-1}{\zeta}}}
    =\widetilde Z_{\zeta, \btheta_{\star}}^{\frac{\zeta-1}{\zeta}} \sum_{k=1}^m \left(\int_{\MX_k} \pi(\bx)d\bx\right)^{\frac{1}{\zeta}},\\
    \end{split}
\end{equation*}
which leads to $\widetilde Z_{\zeta, \btheta_{\star}}=\left(\sum_{k=1}^m \left(\int_{\MX_k} \pi(\bx)d\bx\right)^{\frac{1}{\zeta}}\right)^{\zeta}$. In other words, the mean-field system without perturbations yields a unique solution $\theta_{\star}(i)=\frac{\left(\int_{\MX_i} \pi(\bx)d\bx\right)^{\frac{1}{\zeta}}}{\sum_{k=1}^m \left(\int_{\MX_k} \pi(\bx)d\bx\right)^{\frac{1}{\zeta}}}$ for any $i\in\{1,2,\cdots, m\}$.

(ii) For the second term $\text{I}_2$, we have 
\begin{equation} \label{biasI2_icsgld}
\int_{\MX} H_i(\btheta,\bx) (-\varpi_{\widetilde{\Psi}_{\btheta}}(\bx)+\varpi_{\Psi_{\btheta}}(\bx)) d\bx= \mathcal{O}\left(\frac{1}{m}\right),
\end{equation}
where the result follows from the boundedness of $H(\btheta,\bx)$ in (\ref{ass2a}) and Lemma B4 \cite{CSGLD}.

(iii) For the last term $\text{I}_3$, following Theorem 6 of \cite{Sato2014ApproximationAO}, we have for any fixed $\btheta$,
\begin{equation}\label{iiii_2}
    \int_{\MX} H_i(\btheta,\bx) \left(-\varpi_{\Psi_{\btheta}}(\bx)+\varpi_{\btheta}(\bx)\right) d\bx=\mathcal{O}(\epsilon).
\end{equation}

Plugging Eq.(\ref{i_1_icsgld}), Eq.(\ref{biasI2_icsgld}) and  Eq.(\ref{iiii_2}) into Eq.(\ref{iiii}), we have
\begin{equation}
\begin{split}
\label{h_i_theta_icsgld}
     h_i(\btheta)&={\widetilde Z_{\zeta+1,\btheta}}^{-1} \left[\varepsilon\tilde\beta_i(\btheta)+ \frac{\int_{\MX_i} \pi(\bx)d\bx}{\theta^{\zeta-1}(i)} -\theta(i) \widetilde Z_{\zeta,\btheta}\right]\\
     &={\widetilde Z_{\zeta+1,\btheta}}^{-1} \frac{\widetilde Z_{\zeta, \btheta_{\star}}}{\theta^{\zeta-1}(i)}\left[\varepsilon\tilde\beta_i(\btheta)\frac{\theta^{\zeta-1}(i)}{{\widetilde Z_{\zeta, \btheta_{\star}}}}+ \frac{\int_{\MX_i} \pi(\bx)d\bx}{\widetilde Z_{\zeta, \btheta_{\star}}}-\theta^{\zeta}(i) \frac{\widetilde Z_{\zeta,\btheta}}{\widetilde Z_{\zeta, \btheta_{\star}}} \right]\\
     &={\widetilde Z_{\zeta+1,\btheta}}^{-1} \frac{\widetilde Z_{\zeta, \btheta_{\star}}}{\theta^{\zeta-1}(i)}\left[\varepsilon\tilde\beta_i(\btheta)\frac{\theta^{\zeta-1}(i)}{{\widetilde Z_{\zeta, \btheta_{\star}}}} +\theta_{\star}^{\zeta}(i)-\left(\theta(i) C_{\btheta}\right)^{\zeta} \right],
\end{split}
\end{equation}
where $\tilde\beta_i(\btheta)$ is a bounded term such that ${\widetilde Z_{\zeta+1,\btheta}}^{-1}\varepsilon\tilde\beta_i(\btheta)=\mathcal{O}\left(\Var(\xi_n)+\epsilon+\frac{1}{m}\right)$,  $\varepsilon=\mathcal{O}\left(\Var(\xi_n)+\epsilon+\frac{1}{m}\right)$ and $C_{\btheta}=\left(\frac{\widetilde Z_{\zeta,\btheta}}{\widetilde Z_{\zeta, \btheta_{\star}}}\right)^{\frac{1}{\zeta}}$. By the definition of $\widetilde Z_{\zeta,\btheta}=\sum_{k=1}^m  \frac{\int_{\MX_k} \pi(\bx)d\bx}{\theta^{\zeta-1}(k)}$, when $\zeta=1$, $C_{\btheta}\equiv 1$ for any $\btheta\in \bTheta$, which suggests that the stability condition doesn't rely on the initialization of $\btheta$; however, when $\zeta\neq 1$,  $C_{\btheta}\neq 1$ when $\btheta\neq \btheta_{\star}$, we see that $h_i(\btheta)\propto \theta_{\star}(i)^{\zeta}-\left(\theta(i) C_{\btheta}\right)^{\zeta}+\text{perturbations}$ is a non-linear mean-field system and requires a proper initialization of $\btheta\in\widetilde \bTheta$. 

For any $\btheta\in\widetilde\bTheta\subset \bTheta$ being close enough to $\btheta_{\star}$, there exists a Lipschitz constant $L_{\widetilde \btheta}=\sup_{i\leq m, \btheta\in\widetilde\bTheta} \frac{\mid C_{\btheta_{\star}}-C_{\btheta}\mid }{\mid \theta_{\star}(i)-\theta(i)\mid }<\infty$. By $C_{\btheta_{\star}}=1$, $\theta(i)\leq 1$, and mean value theorem for some $\widetilde \theta(i)\in[\theta(i), \theta_{\star}(i)]$, we have
\begin{align}\label{mean-value-decompose}
    \mid \theta_{\star}^{\zeta}(i)-\left(\theta(i) C_{\btheta}\right)^{\zeta}\mid &= \zeta (\widetilde\theta(i)C_{\widetilde\btheta})^{\zeta-1}\mid \theta_{\star}(i)-\theta(i) C_{\btheta}\mid \notag\\
    &=\zeta (\widetilde\theta(i)C_{\widetilde\btheta})^{\zeta-1}\mid \theta_{\star}(i)-\theta(i)+\theta(i)C_{\btheta_{\star}}-\theta(i) C_{\btheta}\mid \notag\\
    &\leq \zeta (\widetilde\theta(i)C_{\widetilde\btheta})^{\zeta-1}\mid \theta_{\star}(i)-\theta(i)\mid +\theta(i) \mid C_{\btheta_{\star}}-C_{\btheta}\mid \notag\\
    &\leq \zeta (\widetilde\theta(i)C_{\widetilde\btheta})^{\zeta-1} (1+ L_{\widetilde \btheta})\mid \theta_{\star}(i)-\theta(i)\mid ,
\end{align}

Combining Eq.(\ref{h_i_theta_icsgld}) and Eq.(\ref{mean-value-decompose}), we have
\begin{equation}
\begin{split}
\label{h_i_theta_icsgld_v2}
     h_i(\btheta)&={\widetilde Z_{\zeta+1,\btheta}}^{-1} \frac{\widetilde Z_{\zeta, \btheta_{\star}}}{\theta^{\zeta-1}(i)}\left[\varepsilon\tilde\beta_i(\btheta)\frac{\theta^{\zeta-1}(i)}{{\widetilde Z_{\zeta, \btheta_{\star}}}} +\theta_{\star}^{\zeta}(i)-\left(\theta(i) C_{\btheta}\right)^{\zeta} \right]\\
     &=\widehat Z_{\zeta, \theta(i)}^{-1} \left[\varepsilon\beta_i(\btheta) +\theta_{\star}(i)-\theta(i)  \right],\\
\end{split}
\end{equation}
where $\widehat Z_{\zeta,\theta(i)}^{-1}=\frac{{\widetilde Z_{\zeta+1,\btheta}}^{-1}\widetilde Z_{\zeta, \btheta_{\star}}}{\zeta(\widetilde \theta(i)C_{\widetilde\btheta})^{\zeta-1}(1+L_{\widetilde\btheta})\theta^{\zeta-1}(i)}$; $\beta_i(\btheta)$ is some bounded term such that $\beta_i(\btheta)\leq \frac{\tilde\beta_i(\btheta)\theta^{\zeta-1}(i)}{\zeta (\widetilde\theta(i)C_{\widetilde\btheta})^{\zeta-1} (1+ L_{\widetilde \btheta}) {\widetilde Z_{\zeta, \btheta_{\star}}}}$; $C_{\widetilde\btheta}=\left(\frac{\widetilde Z_{\zeta,\widetilde\btheta}}{\widetilde Z_{\zeta, \btheta_{\star}}}\right)^{\frac{1}{\zeta}}$; $L_{\widetilde \btheta}=\sup_{i\leq m, \btheta\in\widetilde\bTheta} \frac{\mid C_{\btheta_{\star}}-C_{\btheta}\mid }{\mid \theta_{\star}(i)-\theta(i)\mid }<\infty$.

Next, we apply the perturbation theory to solve the ODE system with small disturbances \cite{Perturbation2} and obtain the equilibrium $\widehat\btheta_{\star}$,

where $\varepsilon\bbeta(\widehat\btheta_{\star})+\btheta_{\star}-\widehat \btheta_{\star}=0$, to the mean-field equation $h_i(\btheta)$ such that
\begin{equation}
\begin{split}
    h_i(\btheta)&=\widehat Z_{\zeta, \theta(i)}^{-1} \left[\varepsilon\beta_i(\theta)+\theta_{\star}(i)-\theta(i)\right]\\
    &=\widehat Z_{\zeta, \theta(i)}^{-1} \left[\varepsilon\beta_i(\theta)-\varepsilon\beta_i(\widehat\theta_{\star})+\varepsilon\beta_i(\widehat\theta_{\star})+\theta_{\star}(i)-\theta(i)\right]\\
     &=\widehat Z_{\zeta, \theta(i)}^{-1} \left[\mathcal{O}(\varepsilon)(\theta(i)-\widehat\theta_{\star}(i))+\widehat \theta_{\star}(i)-\theta(i)\right]\\
    &=\widehat Z_{\zeta, \theta(i)}^{-1} \big(1-\mathcal{O}(\varepsilon)\big)\left(\widehat\theta_{\star}(i)-\theta(i)\right),\\
\end{split}
\end{equation}
where a smoothness condition clearly holds for the $\beta(\cdot)$ function.  Given a positive definite Lyapunov function $\mathbb{V}(\btheta)=\frac{1}{2}\|  \widehat\btheta_{\star}-\btheta\| ^2$, the mean-field system $h(\btheta)=\widehat Z_{\zeta,\theta(i)}^{-1} (\varepsilon\bbeta(\btheta)+\btheta_{\star}-\btheta)=\widehat Z_{\zeta,\theta(i)}^{-1} (1-\mathcal{O}(\varepsilon)) (\widehat\btheta_{\star}-\btheta)$ for $i\in\{1,2,\cdots, m\}$ enjoys the following property
\begin{equation*}
\begin{split}
     \langle h(\btheta), \nabla\mathbb{V}(\btheta)\rangle&=\langle h(\btheta), \btheta - \widehat\btheta_{\star}\rangle \\
     &\leq  -\min_{i}\widehat Z_{\zeta,\theta(i)}^{-1} \big(1-\mathcal{O}(\varepsilon)\big)\| \btheta -\widehat \btheta_{\star}\| ^2\\
    &\leq -\phi\| \btheta - \widehat\btheta_{\star}\| ^2,
\end{split}
\end{equation*}
where $\phi=\inf_{\btheta}\min_i\widehat Z_{\zeta,\theta(i)}^{-1}\big(1-\mathcal{O}(\varepsilon)\big) >0$ given
the compactness assumption \ref{ass2a} and a small enough $\varepsilon=\mathcal{O}\left(\Var(\xi_n)+\epsilon+\frac{1}{m}\right)$. \qed
\end{proof}

\begin{remark}
The newly proposed random-field function Eq.(\ref{random_field_H}) may sacrifice the global stability by including an approximately linear mean-field system Eq.(\ref{h_i_theta_icsgld_v2}) instead of a linear stable system (see formula (15) in \cite{CSGLD}). The advantage, however, is that such a mechanism facilitates the estimation of $\btheta_{\star}$. We emphasize that the original energy probability in each partition  $\big\{\int_{\MX_k} \pi(\bx)d\bx\big\}_{k=1}^m$ \cite{CSGLD} may be very difficult to estimate for big data problems. By contrast, the estimation of $\big\{\big(\int_{\MX_k} \pi(\bx)d\bx\big)^{\frac{1}{\zeta}}\big\}_{k=1}^m$ becomes much easier given a proper $\zeta>0$.
\end{remark}

\subsection{Convergence of the Self-adapting Parameters}

The following is an application of Theorem 24 (page 246) \cite{Albert90} given stability conditions (Lemma \ref{convex_appendix_icsgld}).
\begin{lemma}[$L^2$ convergence rate. Formal restatement of Lemma \ref{latent_convergence_main}]
\label{latent_convergence_appendix}
Assume Assumptions $\ref{ass2a}$-$\ref{ass1}$ hold. For any $\btheta_{0}\in\widetilde\bTheta\subset\bTheta$, a large $m$, small learning rates $\{\epsilon_k\}_{k=1}^{\infty}$, and step sizes $\{\omega_k\}_{k=1}^{\infty}$, 
$\{\btheta_k\}_{k=0}^{\infty}$ converges to $\widehat\btheta_{\star}$,
where $\widehat\btheta_{\star}=\btheta_{\star}+\mathcal{O}\left(\Var(\xi_n)+\sup_{k\geq k_0}\epsilon_k+\frac{1}{m}\right)$ for some $k_0$, such that
\begin{equation*}
    \E\left[\| \bm{\theta}_{k}-\widehat\btheta_{\star}\| ^2\right]= \mathcal{O}\left(\omega_{k}\right).
\end{equation*}
\end{lemma}

The theoretical novelty is that we treat the biased $\widehat\btheta_{\star}$ as the equilibrium of the continuous system instead of analyzing how far we are away from $\btheta_{\star}$ in all aspects as in Theorem 1 \cite{CSGLD}. This enables us to directly apply Theorem 24 (page 246). Nevertheless, it can be interpreted as a special case of Theorem 1 \cite{CSGLD} except that there are no perturbation terms and the equilibrium is $\widehat\btheta_{\star}$ instead of $\btheta_{\star}$.

\section{Gaussian Approximation}
\label{Gaussian_approx}

\subsection{Preliminary: Sufficient Conditions for Weak Convergence}

To formally prove the asymptotic normality of the stochastic approximation process $\omega_k^{-1/2}(\btheta_k-\widehat\btheta_{\star})$, we first lay out a preliminary result (Theorem 1 of \cite{Pelletier98}) that provides sufficient conditions to guarantee the weak convergence.

\begin{lemma}[Sufficient Conditions]
\label{sufficiency}
Consider a stochastic algorithm as follows
\begin{equation*}
    \btheta_{k+1}=\bm{\theta}_{k}+\omega_{k+1}h(\bm{\theta}_{k}) +\omega_{k+1} \bm{\widetilde\nu}_{k+1}+\omega_{k+1}\bm{e}_{k+1},
\end{equation*}
where $\bm{\widetilde\nu}_{k+1}$ denotes a perturbation and $\bm{e}_{k+1}$ is a random noise. Given three conditions (\textbf{C1}), (\textbf{C2}), and (\textbf{C3}) defined below, we have the desired weak convergence result
\begin{equation}
    \omega^{-\frac{1}{2}}(\btheta_k -\widehat\btheta_{\star})\Rightarrow\mathcal{N}(0, \bSigma),
\end{equation}
where $\bSigma=\int_0^{\infty} e^{t h_{\btheta_{\star}}}\circ \bR\circ  e^{th^{\top}_{\btheta_{\star}}}dt$, $\bR$ denotes the limiting covariance of the martingale $\lim_{k\rightarrow\infty}\E[\bm{e_{k+1}}\bm{e_{k+1}}^{\top}\mid \mathcal{F}_k]$ and $\mathcal{F}_k$ is the $\sigma$-algebra of the events up to iteration $k$, $h_{\btheta_{\star}}=h_{\btheta}(\widehat\btheta_{\star})+\widehat\xi\bI$, $\widehat\xi=\lim_{k\rightarrow \infty}\frac{\omega_k^{0.5}-\omega_{k+1}^{0.5}}{\omega_k^{1.5}}$. \footnote[2]{For example, $\widehat\xi=0$ if $\omega_k=\mathcal{O}(k^{-\alpha})$, where $\alpha\in(0.5, 1]$ and $\widehat\xi=\frac{k_0}{2}$ if  $\omega_k=\frac{k_0}{k}$.}

\textbf{(C1)} There exists an equilibrium point $\widehat\btheta_{\star}$ and a stable matrix $h_{\btheta_{\star}}:=h_{\btheta}(\widehat\btheta_{\star})\in\mathbb{R}^{m\times m}$ such that for any $\btheta\in \{\btheta: \| \btheta-\widehat\btheta_{\star}\| \leq \widetilde M\}$ for some $\widetilde M>0$, the mean-field function $h:\mathbb{R}^m\rightarrow \mathbb{R}^m$ satisfies
\begin{equation*}
\begin{split}
    h(\widehat\btheta_{\star})&=0\\
    \| h(\btheta)-h_{\btheta_{\star}} (\btheta-\widehat\btheta_{\star})\| &\lesssim \| \btheta-\widehat\btheta_{\star}\| ^2,
\end{split}
\end{equation*}

\textbf{(C2)} The step size $\omega_k$ decays with an order $\alpha\in (0, 1]$ such that $\omega_k=\mathcal{O}(k^{-\alpha})$.

\textbf{(C3)} Assumptions on the disturbances . There exists constants $\widetilde M>0$ and $\widetilde \alpha>2$ such that
$$ \E\left[\bm{e}_{k+1}\mid \mathcal{F}_k\right]\bm{1}_{\{\| \btheta-\widehat\btheta_{\star}\| \leq \widetilde M\}}=0,  \eqno{(\text{I}_1)}$$
$$  \sup_k\E\left[\| \bm{e}_{k+1}\| ^{\widetilde \alpha}\mid \mathcal{F}_k\right]\bm{1}_{\{\| \btheta-\widehat\btheta_{\star}\| \leq \widetilde M\}}< \infty,  \eqno{(\text{I}_2)}$$
$$ \E\left[\omega_k^{-1}\| \bm{\widetilde \nu}_{k+1}\| ^2\right]\bm{1}_{\{\| \btheta-\widehat\btheta_{\star}\| \leq \widetilde M\}}\rightarrow 0,  \eqno{(\text{II})}$$
$$ \E\left[\bm{e}_{k+1} \bm{e}_{k+1}^{\top}\mid \mathcal{F}_k\right]\bm{1}_{\{\| \btheta-\widehat\btheta_{\star}\| \leq \widetilde M\}}\rightarrow \bR. \eqno{(\text{III})}$$

\end{lemma}

\begin{remark}
By the definition of the mean-field function $h(\btheta)$ in Eq.(\ref{h_i_theta_icsgld}), it is easy to verify the condition C1. Moreover, Assumption \ref{ass1} also fulfills the  condition C2. Then, the proof hinges on the verification of the condition C3. 
\end{remark}

\subsection{Preliminary: Convergence of the Covariance Estimators}

In particular, to verify the condition $\E\left[\bm{e}_{k+1} \bm{e}_{k+1}^{\top}\mid \mathcal{F}_k\right]\bm{1}_{\{\| \btheta-\widehat\btheta_{\star}\| \leq \widetilde M\}}\rightarrow \bR$, , we study the convergence of the 
empirical sample mean $\E[f(\bx_k)]$ for a test function $f$ to the posterior expectation $\bar{f}=\int_{\MX}f(\bx)\varpi_{\widehat\btheta_{\star}}(\bx)(d\bx)$. Poisson's equation is adopted again to characterize the perturbations 
between $f(\bx)$ and $\bar f$: 
\begin{equation}
    \mathcal{L}g(\bx)=f(\bx)-\bar f,
\end{equation}
where $\mathcal{L}$ refers to an infinitesimal generator and $g(\bx)$ denotes the solution of the Poisson's equation. 

The following result helps us to identify the convergence of the covariance estimators, which is adapted from Theorem 5 \cite{Chen15} with decreasing learning rates $\{\epsilon_k\}_{k\geq 1}$. The gradient biases from Theorem 2 \cite{Chen15} are also included to handle the adaptive biases.

\begin{lemma}[Convergence of the Covariance Estimators.]
\label{covariance_estimator}
Suppose Assumptions \ref{ass2a}-\ref{ass1} hold. For any $\btheta_{0}\in\widetilde\bTheta\subset\bTheta$, a large $m$, small learning rates $\{\epsilon_k\}_{k=1}^{\infty}$, step sizes $\{\omega_k\}_{k=1}^{\infty}$ and any bounded function $f$, we have 
\begin{equation*}
\begin{split}
    \mid \E\left[f(\bx_k)\right]-\int_{\MX}f(\bx)\varpi_{\widehat\btheta_{\star}}(\bx)d\bx\mid &\rightarrow 0, \\
\end{split}
\end{equation*}
where $\varpi_{\widehat\btheta_{\star}}(\bx)$ is the invariant measure simulated via SGLD that approximates $\varpi_{\widetilde{\Psi}_{\btheta_{\star}}}(\bx) \propto  
\frac{\pi(\bx)}{\theta_{\star}^{\zeta}(J(\bx))}$.
\end{lemma}

\begin{proof} We study the single-chain CSGLD and reformulate the adaptive algorithm as follows:
\begin{equation*}
\begin{split}
    \bx_{k+1}&=\bx_k- \epsilon_k\nabla_{\bx} \widetilde{L}(\bx_k, \btheta_k)+\mathcal{N}({0, 2\epsilon_k \tau\bm{I}})\\
    &=\bx_k- \epsilon_k\left(\nabla_{\bx} 
    \widetilde{L}(\bx_k, \widehat\btheta_{\star})+{\Upsilon}(\bx_k, \btheta_k)\right)+\mathcal{N}({0, 2\epsilon_k \tau\bm{I}}),
\end{split}
\end{equation*}
where    
$\nabla_{\bx} \widetilde{L}(\bx,\btheta)= \frac{N}{n} \left[1+  \frac{\zeta\tau}{\Delta u}  \left(\log \theta({J}(\bx))-\log\theta(({J}(\bx)-1)\vee 1) \right) \right]  \nabla_{\bx} \widetilde U(\bx)$ \footnote[3]{$J(\bx)=\sum_{i=1}^m i 1_{u_{i-1}<U(\bx)\leq u_i}$, where the exact energy function $U(\bx)$ is selected.},  $\nabla_{\bx} \widetilde{L}(\bx,\btheta)$ is 
defined in Section \ref{Alg:app_icsgld} and the bias term is given by ${\Upsilon}(\bx_k,\btheta_k)=\nabla_{\bx} \widetilde{L}(\bx_k,\btheta_k)-\nabla_{\bx} \widetilde{L}(\bx_k,\widehat\btheta_{\star})$.

Then, by Jensen's inequality and Lemma \ref{latent_convergence_appendix}, we have
\begin{equation}
\label{latent_bias_icsgld}
% \small
\begin{split}
    \| \E[\Upsilon(\bx_k,\btheta_k)]\| &\leq 
    \E[\| \nabla_{\bx} \widetilde{L}(\bx_k, \btheta_k)-\nabla_{\bx} \widetilde{L}(\bx_k, \widehat\btheta_{\star})\| ] \\
    &\lesssim  \E[\| \btheta_k-\widehat\btheta_{\star}\| ]\leq \sqrt{\E[\| \btheta_k-\widehat\btheta_{\star}\| ^2]}\leq \mathcal{O}\left( \sqrt{\omega_{k}}\right).
\end{split}
\end{equation}
Combining Eq.(\ref{latent_bias_icsgld}) and Theorem 5 \cite{Chen15}, we have
\begin{equation*}
% \small
\begin{split}
    \mid \E\left[f(\bx_k)\right]-\int_{\MX}f(\bx)\varpi_{\widehat\btheta_{\star}}(\bx)d\bx\mid &=
    \mathcal{O}\left(\frac{1}{\sum_i^k \epsilon_i }+\frac{\sum_{i=1}^k \omega_i \| \E[\Upsilon(\bx_i,\btheta_i)]\| }{\sum_i^k \omega_i } +\frac{\sum_i^k \epsilon_i^2}{\sum_i^k \epsilon_i }\right)\\
    &\rightarrow 0, \text{\ as}\ k\rightarrow \infty, \\
\end{split}
\end{equation*}
where the last argument directly follows from the conditions on learning rates and step sizes in Assumption \ref{ass1}. \qed
\end{proof}

\subsection{Proof of Theorem \ref{Asymptotic}}
\label{proof_theorem_1}
Recall that the stochastic approximation based on a single process follows from
\begin{equation}
\begin{split}
\label{ga_1}
    &\quad\bm{\theta}_{k+1}\\
    &=\bm{\theta}_{k}+\omega_{k+1} H(\bm{\theta}_{k}, \bx_{k+1})\\
    &=\bm{\theta}_{k}+\omega_{k+1}h(\bm{\theta}_{k}) +\omega_{k+1}\left(\mu_{\btheta_k}(\bx_{k+1})-\Pi_{{\btheta_k}}\mu_{\bm{\theta}_k}(\bx_{k+1})\right)\\
    &=\bm{\theta}_{k}+\omega_{k+1}h(\bm{\theta}_{k})\\
    &\quad+\omega_{k+1}\underbrace{\left(\Pi_{{\btheta_{k+1}}}\mu_{\btheta_{k+1}}(\bx_{k+1})-\Pi_{{\btheta_k}}\mu_{\btheta_k}(\bx_{k+1})+\frac{\omega_{k+2}-\omega_{k+1}}{\omega_{k+1}}\Pi_{{\btheta_{k+1}}}\mu_{\btheta_{k+1}}(\bx_{k+1})\right)}_{\bm{\nu}_{k+1}}\\
    &\quad+\omega_{k+1}\bigg(\underbrace{\frac{1}{\omega_{k+1}}\bigg(\omega_{k+1} \Pi_{{\btheta_{k}}}\mu_{\btheta_{k}}(\bx_{k})- \omega_{k+2} \Pi_{{\btheta_{k+1}}}\mu_{\btheta_{k+1}}(\bx_{k+1})\bigg)}_{\bm{\varsigma}_{k+1}}+\underbrace{\mu_{\btheta_k}(\bx_{k+1})-\Pi_{{\btheta_k}}\mu_{\btheta_k}(\bx_{k})}_{\bm{e}_{k+1}}\bigg) \\
    &=\bm{\theta}_{k}+\omega_{k+1}h(\bm{\theta}_{k}) +\omega_{k+1}\underbrace{\left( \bm{\nu}_{k+1}+\bm{\varsigma}_{k+1}\right)}_{\text{perturbation}}+\omega_{k+1}\underbrace{\bm{e}_{k+1}}_{\text{martingale}},\\
\end{split}
\end{equation}
where the second equality holds from the solution of Poisson's equation in Eq.(\ref{poisson_eqn}).

We denote $\ddot\btheta_k=\btheta_k+\omega_{k+1} \Pi_{{\btheta_{k}}}\mu_{\btheta_{k}}(\bx_{k})$. Adding $\omega_{k+2} \Pi_{{\btheta_{k+1}}}\mu_{\btheta_{k+1}}(\bx_{k+1})$ on both sides of Eq.(\ref{ga_1}), we have
\begin{equation}
\begin{split}
    &\quad\ddot\btheta_{k+1}\\
    &=\ddot\btheta_k+\omega_{k+1}h(\bm{\theta}_{k}) +\omega_{k+1}\left( \bm{\nu}_{k+1}+\bm{e}_{k+1}+\bm{\varsigma}_{k+1}\right)+\omega_{k+2} \Pi_{{\btheta_{k+1}}}\mu_{\btheta_{k+1}}(\bx_{k+1})-\omega_{k+1} \Pi_{{\btheta_{k}}}\mu_{\btheta_{k}}(\bx_{k})\\
    &=\ddot\btheta_k+\omega_{k+1}h(\bm{\theta}_{k}) +\omega_{k+1}\left( \bm{\nu}_{k+1}+\bm{e}_{k+1}\right)\\
    &=\ddot\btheta_k+\omega_{k+1}h(\ddot\btheta_k) +\omega_{k+1}\left( \bm{\tilde\nu}_{k+1}+\bm{e}_{k+1}\right),\\
\end{split}
\end{equation}
where $\bm{\tilde\nu}_{k+1}=\bm{\nu}_{k+1}+h(\btheta_k)-h(\ddot\btheta_k)$. Next, we proceed to verify the conditions in \textbf{C3}.

(I) 
By the martingale difference property of $\{\bm{e_k}\}$ and the compactness assumption \ref{ass2a}, we know that for any $\widetilde\alpha>2$
$$\E[\bm{e}_{k+1}\mid \mathcal{F}_k]=\bm{0}, \ \ \ \ \ \sup_{k\geq 0}\E[\| \bm{e}_{k+1}\| ^{\widetilde\alpha}\mid \mathcal{F}_k]<\infty. \eqno{(\text{I})}$$

(II) By the definition of $h(\btheta_k)$ in Eq.(\ref{h_i_theta_icsgld}), we can easily check that $h(\btheta_k)$ is Lipschitz continuous in a neighborhood of $\widehat\btheta_{\star}$. Combining Eq.(\ref{poisson_reg}), we have $\| h(\btheta_k)-h(\ddot\btheta_k)\| =\mathcal{O}(\| \btheta_k-\ddot\btheta_{k}\| )=\mathcal{O}(\| \omega_{k+1} \Pi_{{\btheta_{k}}}\mu_{\btheta_{k}}(\bx_{k})\| )=\mathcal{O}(\omega_{k+1})$. Then $\E[\| \bm{\nu}_{k+1}\| ]\leq C\| \btheta_k-\ddot \btheta_k\| +\mathcal{O}(\omega_{k+2})=\mathcal{O}(\omega_{k+1})$ by the step size condition Eq.(\ref{a1}). In what follows, we can verify
$$\E\left[\frac{\| \bm{\tilde\nu}_{k+1}\| ^2}{\omega_{k}}\right]\leq 2\E\left[\frac{\| \bm{\nu}_{k+1}\| ^2}{\omega_{k}}\right]+2\E\left[\frac{\| h(\btheta_k)-h(\ddot\btheta_k)\| ^2}{\omega_k}\right]=\mathcal{O}(\omega_k)\rightarrow 0. \eqno{(\text{II})} $$

(III) For the martingale difference noise $\bm{e}_{k+1}=\mu_{\btheta_k}(\bx_{k+1})-\Pi_{{\btheta_k}}\mu_{\btheta_k}(\bx_{k})$ with mean 0, we have
\begin{equation*}
    \E[\bm{e}_{k+1}\bm{e}_{k+1}^{\top}\mid \mathcal{F}_k]=\E[\mu_{\btheta_k}(\bx_{k+1})\mu_{\btheta_k}(\bx_{k+1})^{\top}\mid \mathcal{F}_k]-\Pi_{{\btheta_k}}\mu_{\btheta_k}(\bx_{k})\Pi_{{\btheta_k}}\mu_{\btheta_k}(\bx_{k})^{\top}.
\end{equation*}
We denote $\E[\bm{e}_{k+1}\bm{e}_{k+1}^{\top}\mid \mathcal{F}_k]$ by a function $f(\bx_{k})$. Applying Lemma \ref{covariance_estimator}, we have
$$\E[\bm{e}_{k+1}\bm{e}_{k+1}^{\top}\mid \mathcal{F}_k]=\E[f(\bx_{k})]\rightarrow \int f(\bx)\varpi_{\widehat\btheta_{\star}}d\bx=\lim_{k\rightarrow \infty} \E[\bm{e}_{k+1}\bm{e}_{k+1}^{\top}\mid \mathcal{F}_k]:=\bR, \eqno{(\text{III})}$$
where $\bR:=\bR(\widehat\btheta_{\star})$ and $\bR(\btheta)$ is also equivalent to $\sum_{k=-\infty}^{\infty} \cov_{\btheta}(H(\btheta, \bx_k), H(\btheta, \bx_0))$. 

Having the conditions C1, C2 and C3 verified, we apply Lemma \ref{sufficiency} and have the following weak convergence for $\ddot\btheta_k$
\begin{equation*}
\begin{split}
    \omega_k^{-1/2}(\ddot\btheta_k-\widehat\btheta_{\star})\Rightarrow\mathcal{N}(0, \bSigma),
\end{split}
\end{equation*}
where $\bSigma=\int_0^{\infty} e^{t h_{\btheta_{\star}}}\circ \bR\circ  e^{th^{\top}_{\btheta_{\star}}}dt$ and $h_{\btheta_{\star}}=h_{\btheta}(\widehat\btheta_{\star})+\widehat\xi\bI$, $\widehat\xi=\lim_{k\rightarrow \infty}\frac{\omega_k^{0.5}-\omega_{k+1}^{0.5}}{\omega_k^{1.5}}$.

Considering the definition that $\ddot\btheta_k=\btheta_k+\omega_{k+1} \Pi_{{\btheta_{k}}}\mu_{\btheta_{k}}(\bx_{k})$ and $\E[ \| \Pi_{{\btheta_{k}}}\mu_{\btheta_{k}}(\bx_{k})\| ]$ is uniformly bounded by Eq.(\ref{poisson_reg}), we have 
\begin{equation*}
    \omega_k^{1/2}\Pi_{{\btheta_{k}}}\mu_{\btheta_{k}}(\bx_{k})\rightarrow 0 \text{\ \ \ \ \ in\ probability.}
\end{equation*}

By Slutsky's theorem, we eventually have the desired result
\begin{equation*}
   \omega_k^{-1/2}(\btheta_k-\widehat\btheta_{\star})\Rightarrow\mathcal{N}(0, \bSigma).
\end{equation*}
where the step size $\omega_k$ decays with an order $\alpha\in (0.5, 1]$ such that $\omega_k=\mathcal{O}(k^{-\alpha})$. \qed

\newpage

\section{More on Experiments}
\label{details_exp}

\subsection{Mode Exploration on MNIST via the Scalable Random-field Function}
\label{mnist_appendix}
For the network structure, we follow \cite{Jarrett09} and choose a standard convolutional neural network (CNN). Such a CNN has two convolutional (conv) layers and two fully-connected (FC) layers. The two conv layers has 32 and 64 feature maps, respectively. The FC layers both have 50 hidden nodes and the network has 5 outputs. A large batch size of 2500 is selected to reduce the gradient noise and reduce the stochastic approximation bias. We fix $\zeta=3e4$ and weight decay 25. For simplicity, we choose 10000 partitions and $\Delta u=10$. The step size follows $\omega_k=\min\{0.01, \frac{1}{k^{0.6}+100}\}$.

% \begin{figure}[htbp]
% \small
%  \begin{tabular}{cccc}
% (a) Training Loss & (b) Gradient multiplier & (c) Test accuracy \\ 
% %   \includegraphics[height=1.5in,width=1.6in]{figures/UQ_high_to_low_resnet20.pdf} &
%   \includegraphics[height=1.6in,width=1.6in]{figures/losses.pdf} & 
% %   \includegraphics[height=1.6in,width=1.6in]{figures/mul.pdf} & 
%     \includegraphics[height=1.6in,width=1.6in]{figures/acc.pdf} &
% %   \includegraphics[height=1.6in,width=1.6in]{figures/UQ_high_to_low_mnist.pdf} 
% \end{tabular}
%   \caption{Energy space exploration on a subset of  MNIST data (digits 0-4) by AWSGLD, high-$\tau$ SGLD, low-$\tau$ SGLD and SGD: (a) paths of training loss; (b) gradient multiplier vs. training loss; (c) paths of test accuracy; 
%   (d) empirical CDFs of Shannon entropy evaluated on the out-of-distribution images (digits 5-9).} 
% \label{Uncertainty_estimation_mnist_ICSGLD}
% \end{figure}

% \begin{figure}[htbp]
% \small
%  \begin{tabular}{ccc}
% (a) SGD & (b) SGLD & (c) CSGLD \\ 
% \includegraphics[height=1.65in,width=1.65in]{figures/landscape_78563_sn5.pdf} &
% \includegraphics[height=1.65in,width=1.65in]{figures/landscape_73374_sn5.pdf} &
% \includegraphics[height=1.65in,width=1.65in]{figures/landscape_csgld.pdf}
% \end{tabular}
%   \caption{Visualization of the sampling trajectory on a 2-class MNIST example based on different algorithms.}
% \label{Uncertainty_estimation_mnist_ICSGLD_plots}
% \end{figure}

\subsection{Simulations of Multi-modal Distributions}
\label{simulation_appendix}
The target density function is given by $\pi(\bx)\propto \exp(-U(\bx))$, where $\bx=(x_1, x_2)$ and $U(\bx)$ follows $U(\bx) = 0.2 (x_1^2 + x_2^2) - 2(\cos(2\pi x_1) + \cos(2\pi x_2))$. \begin{wrapfigure}{r}{0.35\textwidth}
   \begin{center}
   \vskip -0.2in
     \includegraphics[width=0.35\textwidth]{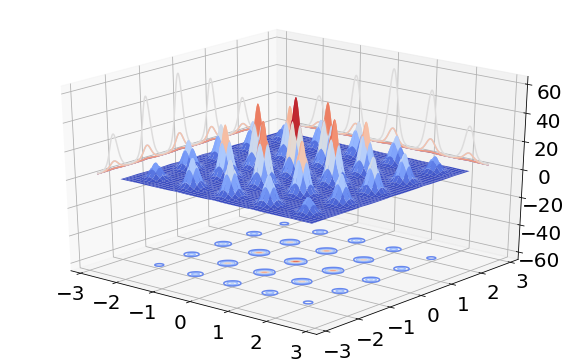}
   \end{center}
   \vskip -0.2in
   \caption{Target density.}
   \label{fig:energy}
\end{wrapfigure}We also include a regularization term $L(x) = \mathbb{I}_{(x_1^2 + x_2^2) > 20} \times ((x_1^2 + x_2^2) - 20)$. This design leads to a highly multi-modal distribution with 25 isolated modes. 
Figure \ref{fig:energy} shows the contour and the 3-D plot of the target density. The ICSGLD and baseline algorithms are applied to this example. For ICSGLD, we set $\epsilon_k=3e^{-3}$, $\tau = 1$, $\zeta=0.75$ and total number of iterations$=8e^4$. Besides, we partition the sample space into 100 subregions with bandwidth $\Delta u=0.125$ and set $\omega_k = \min(3e^{-3}, \frac{1}{k^{0.6}+100})$.

For comparison, we run the baseline algorithms under similar settings. 
For CSGLD, we run a single process 5 times of the time budget and all the settings are the same as those used by ICSGLD. For reSGLD, we run five parallel chains with learning rates $0.001, 0.002, \cdots, 0.005$ and temperatures $1, 2, \cdots, 5$, respectively. We estimate the correction every $100$ iterations. We fix the initial correction 30 and choose the same step size for the stochastic approximation as in ICSGLD. For SGLD, we run five chains in parallel with the learning rate $3e^{-3}$ and a temperature of $1$.
For cycSGLD, we run a single-chain with 5 times of the time budget. We set the initial learning rate as $1e^{-2}$ and choose 10 cycles.
For the particle-based SVGD, we run five chains in parallel. For each chain, we initialize 100 particles as being drawn from a uniform distribution over a rectangle. The learning rate is set to $3e^{-3}$.
\begin{figure}[ht]
% \small
\begin{tabular}{cc}
(a) Steps & (b) Runtime \\ 
\includegraphics[height=3in,width=3in]{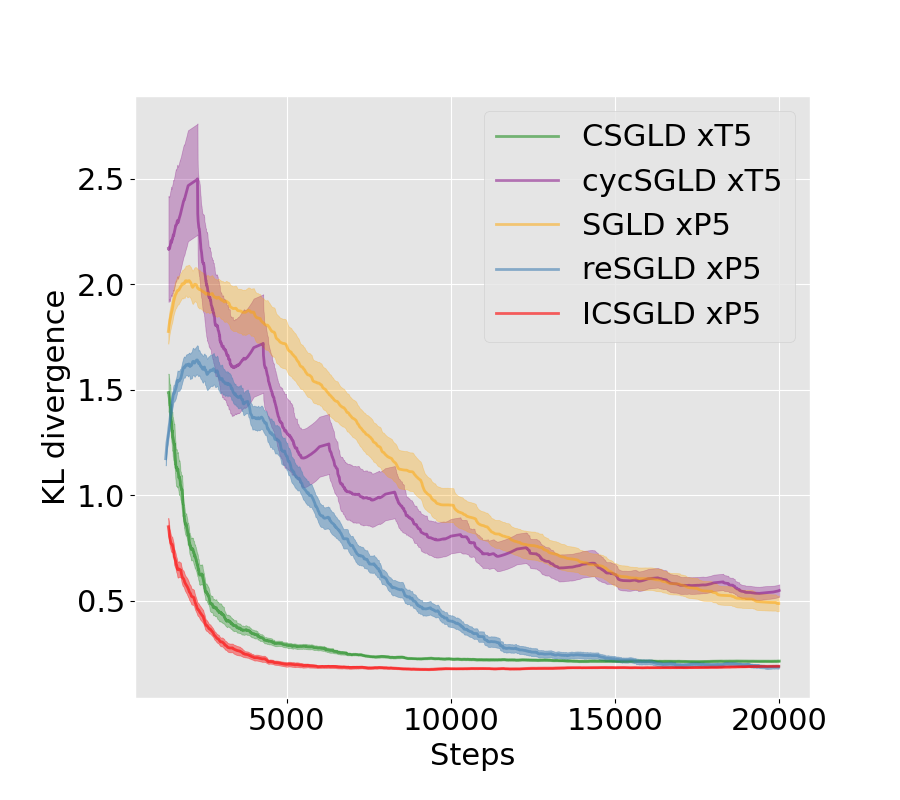} &
\includegraphics[height=3in,width=3in]{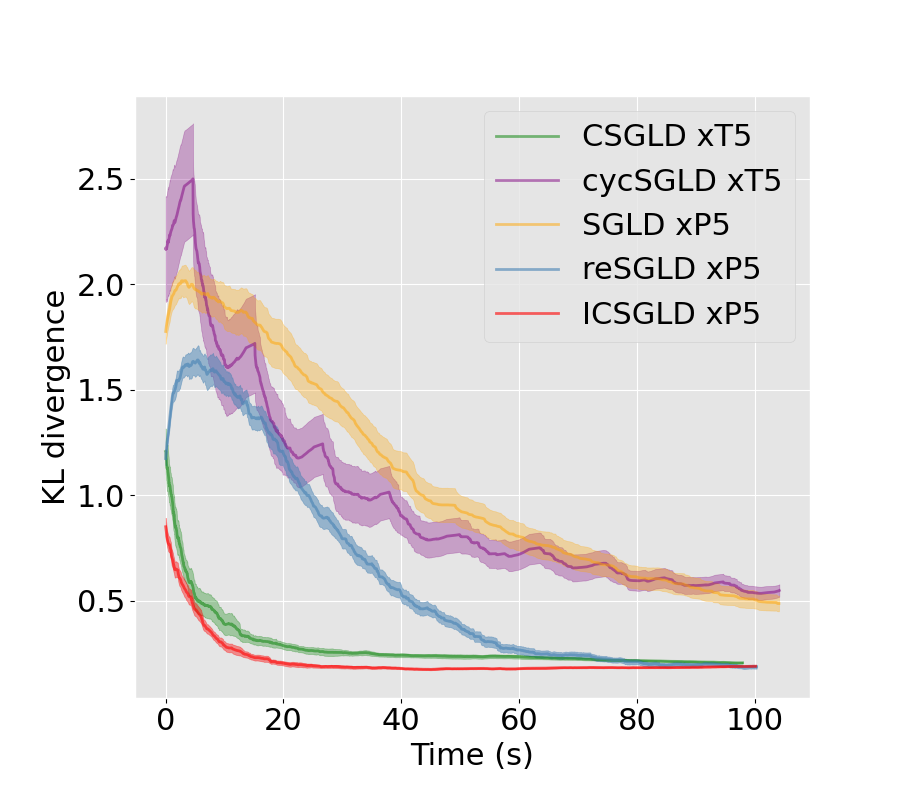} 
\end{tabular}
\vskip -0.1in
  \caption{Estimation KL divergence versus time steps for ICSGLD and baseline methods. We repeat experiments 20 times.}
\label{fig:convergence}
\vskip -0.1in
\end{figure}

To compare the convergence rates in terms of \emph{running steps} and \emph{time} between ICSGLD and other algorithms, we repeat each algorithm 20 times and calculate the mean and standard error over 20 trials. Note that we run all the algorithms based on 5 parallel chains ($\times$P5) except that cycSGLD and CSGLD are run in a single-chain with 5 times of time budget ($\times$T5) and the steps and running time are also scaled accordingly. Figure \ref{fig:convergence} shows that the vanilla SGLD$\times$P5 converges the slowest among the five algorithms due to the lack of mechanism to escape local traps; cycSGLD$\times$T5 slightly alleviates that problem by adopting cyclical learning rates; reSGLD$\times$P5 greatly accelerates the computations by utilizing high-temperature chains for exploration and low-temperature chains for exploitation, but the large correction term inevitably slows down the convergence; ICSGLD$\times$P5 converges faster than all the others and the noisy energy estimators only induce a bias for the latent variables and don't affect the convergence rate significantly. 

For the particle-based SVGD method, since more particles require expensive computations while fewer particles lead to a crude approximation. Therefore, we don't show the convergence of SVGD and only compare the Monte Carlo methods.

\subsection{Deep Contextual Bandits on Mushroom Tasks}
\label{bandit_mushroom}
For the UCI Mushroom data set, each mushroom is either edible or poisonous. Eating an edible mushroom yields a reward of 5, but eating a poisonous mushroom has a 50\% chance to result in a reward of -35 and a reward of 5 otherwise. Eating nothing results in 0 reward. All the agents use the same architecture. In particular, we fit a two-layer neural network with 100 neurons each and ReLU activation functions. The input of the network is a feature vector with dimension 22 (context) and there are 2 outputs, representing the predicted reward for eating or not eating a mushroom. The mean squared loss is adopted for training the models. We initialize 1024 data points and keep a data buffer of size 4096 as the training proceeds. The size of the mini-batch data is set to 512. To adapt to online scenarios, we train models after every 20 new observations.

We choose one $\epsilon$-greedy policy (EpsGreedy) based on the RMSProp optimizer with a decaying learning rate  \cite{bandits_showdown} as a baseline. Two variational methods, namely stochastic gradient descent with a constant learning rate (ConstSGD) \cite{Mandt} and Monte Carlo Dropout (Dropout) \cite{Gal16b} are compared to approximate the posterior distribution. For the sampling algorithms, we include preconditioned SGLD (pSGLD) \cite{Li16}, preconditioned CSGLD (pCSGLD) \cite{CSGLD}, and preconditioned ICSGLD (pICSGLD). Note that all the algorithms run 4 parallel chains with average outputs ($\times$P4) except that pCSGLD runs a single-chain with 4 times of computational budget ($\times$T4). In particular for the two contour algorithms, we set $\zeta=20$ and choose a constant step size for the stochastic approximation to fit for the time-varying posterior distributions. For more details on the experimental setups, we refer readers to section D in the supplementary material.

We report the experimental setups for each algorithm. Similar to Table 2 of \cite{bandits_showdown}, the inclusion of advanced techniques may change the optimal settings of the hyperparameters. Nevertheless, we try to report the best setups for each individual algorithm. We train each algorithm 2000 steps. We initialize 1024 mushrooms and
keep a data buffer of size 4096 as the training proceeds. For each step, we are given 20 random mushrooms and train the model 16 iterations every step for the parallel algorithms ($\times$P4); we train pCSGLD$\times$T4 64 iterations every step.

EpsGreedy decays the learning rate by a factor of 0.999 every step; by contrast, all the others choose a fixed learning rate. RMSprop adopts a regularizer of $0.001$ and a learning rate of $0.01$ to learn the preconditioners. Dropout proposes a 50\% dropout rate and each subprocess simulates 5 models for predictions. For the two importance sampling (IS) algorithms, we partition the energy space into $m=100$ subregions and set the energy depth $\Delta u$ as 10. We fix the hyperrameter $\zeta=20$. The step sizes for pICSGLD$\times$P4 and pCSGLD$\times$T4 are chosen as 0.03 and 0.006, respectively. A proper regularizer is adopted for the low importance weights. See Table \ref{hyper_TS} for details.

\begin{table}[ht]
\begin{sc}
\caption{Details of the experimental setups.} \label{hyper_TS}
\small
\begin{center} 
\begin{tabular}{cccccccc}
\hline
Algorithm & \scriptsize{Learning rate} & T\upshape{emperature} & \scriptsize{RMS\upshape{prop}} & IS & T\upshape{rain} & D\upshape{ropout} & $\epsilon$-G\upshape{reedy} \\
\hline
EpsGreedy$\times$P4 & 5\upshape{e}-7 (0.999) & 0 & Yes & No & 16 & No & 0.3\%  \\
ConstSGD$\times$P4 & 1\upshape{e}-6 & 0 & No & No & 16 & No & No \\
Dropout$\times$P4 & 1\upshape{e}-6 & 0 & No & No & 16 & Yes (50\%) & No \\
pCSGLD$\times$T4 & 5\upshape{e}-8 & 0.3 & Yes & Yes & 64 & No & No \\
pSGLD$\times$P4 & 3\upshape{e}-7 & 0.3 & Yes & No & 16 & No & No \\
pICSGLD$\times$P4 & 3\upshape{e}-7 & 0.3 & Yes & Yes  & 16 & No & No \\
\hline
\end{tabular}
\end{center}
\end{sc}
\vspace{-0.2in}
\end{table}

\subsection{Uncertainty Estimation}
\label{UQ_appendix}
All the algorithms, excluding M-SGD$\times$P4, choose a temperature of 0.0003 \footnote[2]{We use various data augmentation techniques, such as random flipping, cropping, and random erasing \cite{Zhong17}. This leads to a much more concentrated posterior and requires a very low temperature.}. We run the parallel algorithms 500 epochs ($\times$P4) and run the single-chain algorithms 2000 epochs ($\times$T4). The initial learning rate is 2e-6 (Bayesian settings), which corresponds to the standard 0.1 for averaged data likelihood.

We train cycSGHMC$\times$T4 and MultiSWAG$\times$T4 based on the cosine learning rates with 10 cycles. The learning rate in the last 15\% of each cycle is fixed at a constant value. MultiSWAG simulates 10 random models at the end of each cycle. M-SGD$\times$P4 follows the same cosine learning rate strategy with one cycle.

reSGHMC$\times$P4 proposes swaps between neighboring chains and requires a fixed correction of 4000 for ResNet20, 32, and 56 and a correction of 1000 for WRN-16-8. The learning rate is annealed at 250 and 375 epochs with a factor of 0.2. ICSGHMC$\times$P4 also applies the same learning rate. We choose $m=200$ and $\Delta u=200$ for ResNet20, 32, and 56 and $\Delta u=60$ for WRN-16-8. A proper regularization is applied to the importance weights.

Variance reduction \cite{deng_VR} only applies to reSGHMC$\times$P4 and ICSGHMC$\times$P4 because they are the only two algorithms that require accurate estimations of the energy. We only update control variates every 2 epochs in the last 100 epochs, which maintain a reasonable training time and a higher reduction of variance due to a small learning rate. Other algorithms yield a worse performance when variance reduction is applied to the gradients.

\chapter{EXPERIMENTAL SETTINGS FOR CHAPTER \ref{awsgld}}

\vskip +0.4in

\section{Sample Space Exploration} \label{sample_space_exploration}

For the first example, all the algorithms are run  for 100,000 iterations. The default learning rate is 5e-5. pSGLD, SGHMC, and cycSGLD adopt a low temperature $\tau=2$, while the high-temperature SGLD and AWSGLD adopt a high temperature of 20. For pSGLD, the smooth factor $\alpha$ and the regularizer $\lambda$ (to control extreme values) of \cite{Li16} are set to 0.999 and 0.1, respectively. For SGHMC, we fix the momentum 0.9 and resample the velocity variable from a Gaussian distribution every 1000 steps. For cycSGLD, we choose 10 cycles; the learning rate in each cycle goes from 1e-4 to 3e-5. For AWSGLD, we set $\omega_k=\frac{0.1}{k^{0.6}+1000}$ and $\zeta=5$ and partition the energy space [0,20] into 30 subregions.

For the Griewank function, we inherit most of the settings from the previous example and run the algorithms for 200,000 iterations. The base learning rate is 0.005. The default high and low temperatures are 2 and 0.2, respectively. For cycSGLD, the learning rate in each cycle goes from 0.02 to 0.005; for AWSGLD, we set $\zeta=3$ and partition the energy space [0,2] into 30 subregions.

\section{Optimization of Multi-modal Functions} \label{10func}

For AWSGLD, we set $\omega_k=\frac{100}{k^{0.75}+1000}$ for most of the functions, except for the first Rastrigin function, which adopts $\omega_k=\frac{200}{k^{0.75}+1000}$. The
rest hyperparameters are set as in Table \ref{nonconvex_funcs_setting}.

\begin{table}[!ht]
\small
\caption{Hyperparameters used in the experiments of 10 non-convex functions, where AW-10 and AW-100 are shorts for AWSGLD with a partition of 10 subregions and 100 subregions, respectively, $d$ is the dimension, $\epsilon$ is the learning rate, $\tau$ is the temperature, $\Delta u$ is the energy bandwidth, and $\zeta$ is a tuning parameter. The run terminates when the sampler hits the target set $\{\bx: U(\bx)\leq U_{\min} +\varrho\}$.}\label{nonconvex_funcs_setting}
\vskip +0.3in
\begin{center}
\begin{tabular}{llccccccccc}
\hline
\multirow{2}{1em}{No} & \multirow{2}{4em}{Function}  &  \multirow{2}{2em}{$\epsilon$} & \multirow{2}{1em}{$\tau$} & \multicolumn{2}{c}{$\Delta u$} & &  \multicolumn{2}{c}{$\zeta$} & \multirow{2}{1em}{$\varrho$} \\ \cline{5-6} \cline{8-9}
 &  & & & AW-10 & AW-100 & & AW-10 & AW-100 & \\ \hline
\hline
1 & Rastrigin ($RT_{20}$)  & 0.0005 & 5 & 30 & 3 & & 0.02 & 0.02 & 75 \\
2 & Griewank ($G_{20}$) & 0.1 & 10 & 50 & 5 &  & 10 & 10 & 25 \\
3 & Sum Squares ($SS_{20}$) &  0.01 & 0.01 & 10 & 1 & & 1 & 1 & 1.5 \\
4 & Rosenbrock ($R_{20}$) &  $10^{-5}$  & 10 & 30 & 3 & & 100 & 10 & 20 \\
5 & Zakharov ($Z_{20}$)  & $10^{-9}$  & $10^4$ &  500 & 50 & & 1 & 0.5 & 500 \\
6 & Powell ($PW_{24}$) & $10^{-4}$    & 1 &  20 & 2 & & 500 & 200 & 1 \\
7 & Dixon$\&$Price ($DP_{25}$) & $10^{-5}$    &  10   &  20 & 2 & & 100 & 20 & 10 \\
8 & Levy ($L_{30}$) &  $10^{-4}$  & 100 &  600 & 60 & & 100 & 10 & 400 \\
9 & Sphere ($SR_{30}$) &  0.01   &  $10^{-4}$  & 20 & 2 & & 1 & 1 & 0.001 \\
10 & Ackley ($AK_{30}$) &  0.01  & 0.05 &  0.4 & 0.04 & & 0.2 & 0.2 & 0.4 \\
\hline
\end{tabular}
\end{center}
\end{table}

% \include{ap-vita}
% \pdfbookmark{INDEX}{index}
% \printindex


@string{JMLR = {Journal of Machine Learning Research}}

@string{AAAI= {Proc. of the National Conference on Artificial Intelligence (AAAI)}}

@string{AISTAT = {Proceedings of the International Workshop on Artificial Intelligence and Statistics}}

@string{ACL = {Proc. of the Annual Meeting of the Association Computational Linguistics (ACL)}}

@string{COLT = {Proc. of Conference on Learning Theory (COLT)}}

@string{CVPR = {The IEEE Conference on Computer Vision and Pattern Recognition (CVPR)}}

@string{ICML = {Proc. of the International Conference on Machine Learning (ICML)}}

@string{NeurIPS = {Advances in Neural Information Processing Systems (NeurIPS)}}

@string{AISTATS = {Proc. of the International Conference on Artificial Intelligence and Statistics (AISTATS)}}

@string{ICCV = {Proc. of the International Conference on Computer Vision (ICCV)}}

@string{ICLR = {Proc. of the International Conference on Learning Representation (ICLR)}}

@string{UAI = {Proc. of the Conference on Uncertainty in Artificial Intelligence (UAI)}}

@inproceedings{Jarrett09, 
author={K. Jarrett and K. Kavukcuoglu and M. Ranzato and Y. LeCun}, 
booktitle=ICCV, 
title={What is the best multi-stage architecture for object recognition?}, 
year={2009},
pages={2146-2153},
month=sep
}

@inproceedings{paszke2017,
  title={Automatic differentiation in {P}y{T}orch},
  author={Paszke, Adam and Gross, Sam and Chintala, Soumith and Chanan, Gregory and Yang, Edward and DeVito, Zachary and Lin, Zeming and Desmaison, Alban and Antiga, Luca and Lerer, Adam},
  booktitle={NeurIPS Autodiff Workshop},
  year={2017}
}

@inproceedings{yian2015,
  title={A {C}omplete {R}ecipe for {S}tochastic {G}radient {MCMC}},
  author={Yi-An Ma and Tianqi Chen and Emily B. Fox},
  booktitle=NeurIPS,
  year={2015}
}

@inproceedings{SVGD,
  title={Stein {V}ariational {G}radient {D}escent: {A} {G}eneral {P}urpose {B}ayesian {I}nference {A}lgorithm},
  author={Qiang Liu and Dilin Wang},
  booktitle=NeurIPS,
  year={2016}
}

@inproceedings{SPOS,
  title={Stochastic {P}article-{O}ptimization {S}ampling and the {N}on-{A}symptotic
{C}onvergence {T}heory},
  author={Jianyi Zhang and Ruiyi Zhang and Lawrence Carin and Changyou Chen},
  booktitle=AISTAT,
  year={2020}
}

@inproceedings{Holden18,
  title={Beyond {L}og-concavity: {P}rovable {G}uarantees for {S}ampling {M}ulti-modal {D}istributions using {S}imulated {T}empering {L}angevin {M}onte {C}arlo},
  author={Holden Lee and Andrej Risteski and Rong Ge},
  booktitle=NeurIPS,
  year={2018}
}

@incollection{PattersonTeh2013,
title = {Stochastic Gradient Riemannian Langevin Dynamics on the Probability Simplex},
author = {Patterson, Sam and Teh, Yee Whye},
booktitle = {Advances in Neural Information Processing Systems 26},
pages = {3102--3110},
year = {2013}
}

@InProceedings{Lu17b,
  title = 	 {{Relativistic Monte Carlo }},
  author = 	 {Xiaoyu Lu and Valerio Perrone and Leonard Hasenclever and Yee Whye Teh and Sebastian Vollmer},
  booktitle = 	 {Proceedings of the 20th International Conference on Artificial Intelligence and Statistics},
  pages = 	 {1236--1245},
  year = 	 {2017},
  editor = 	 {Aarti Singh and Jerry Zhu},
  volume = 	 {54},
  series = 	 {Proceedings of Machine Learning Research},
  address = 	 {Fort Lauderdale, FL, USA},
  publisher =    {PMLR},
}

@inproceedings{Simsekli2016,
 author = {\c{S}im\c{s}ekli, Umut and Badeau, Roland and Cemgil, A. Taylan and Richard, Ga\"{e}},
 title = {Stochastic {Q}uasi-{N}ewton {L}angevin {M}onte {C}arlo},
 booktitle = ICML,
 year = {2016},
 location = {New York, NY, USA},
 pages = {642--651},
 numpages = {10},
}

@inproceedings{Niladri18,
    author = {Niladri Chatterji and Nicolas Flammarion and Yi-An Ma and Peter Bartlett and Michael Jordan},
    title = {On the {T}heory of {V}ariance {R}eduction for {S}tochastic {G}radient {M}onte {C}arlo},
    booktitle = ICML,
    location = {Stockholm, Sweden},
    year = {2018},
}

@inproceedings{Simsekli2019b,
 author = {\c{S}im\c{s}ekli, Umut and Levent Sagun and Mert G\"{u}rb\"{u}zbalaban},
 title = {A {T}ail-{I}ndex {A}nalysis of {S}tochastic {G}radient {N}oise in {D}eep {N}eural {N}etworks},
 booktitle = ICML,
 year = {2019},
 location = {Long Beach, CA, USA},
 numpages = {10},
}

@inproceedings{Ahn12,
  author = {Sungjin Ahn and Anoop Korattikara and Max Welling},
  booktitle = ICML,
  title = {{B}ayesian {P}osterior {S}ampling via {S}tochastic {G}radient {F}isher {S}coring},
  year = {2012}
}

@article{SongWL2014,
 title="Weak Convergence Rates of Population versus Single-Chain Stochastic Approximation {MCMC} Algorithms",
 author="Qifan Song and Mingqi Wu and Faming Liang",
 year="2014",
 journal="Advances in Applied Probability",
 volume="46",
 pages="1059-1083",
}

@book{Albert90,
    author = {Albert Benveniste and Michael M\'etivier and Pierre Priouret},
    title = {Adaptive {A}lgorithms and {S}tochastic {A}pproximations},
    publisher = {Berlin: Springer},
    year = {1990}
}

@article{andrieu05,
 author = {Christophe Andrieu and \'{E}ric Moulines and Pierre Priouret},
 journal = {SIAM J. Control Optim.},
 number = {1},
 pages = {283–312},
 title = {Stability of {S}tochastic {A}pproximation under {V}erifiable {C}onditions},
 volume = {44},
 year = {2005}
}

@article{Paul12,
 author = {Paul Dupuis and Yufei Liu and Nuria Plattner and J. D. Doll},
 journal = {SIAM J. Multiscale Modeling \& Simulation},
 title = {On the {I}nfinite {S}wapping {L}imit for {P}arallel {T}empering},
 volume = {10},
 Issue = {3},
 year = {2012}
}

@inproceedings{Aitchison2021,
 author = {Laurence Aitchison},
 title = {A {S}tatistical {T}heory of {C}old {P}osteriors in {D}eep {N}eural {N}etworks},
 booktitle = ICLR,
 year = {2021}
}

@article{Eric,
 author = {Eric Vanden-Eijnden},
 journal = {Slides},
 title = {Introduction to {R}egular {P}erturbation {T}heory},
 url={https://cims.nyu.edu/~eve2/reg_pert.pdf},
  year = {2001}
}

@article{Perturbation2,
 author = {T. Weinhart and A. Singh and A.R. Thornton},
 journal = {Slides},
 title = {Perturbation {T}heory \& {S}tability {A}nalysis},
  year = {2010}
}

@article{PhysRevLett86,
  title = {Replica {M}onte {C}arlo {S}imulation of {S}pin-{G}lasses},
  author = {Swendsen, Robert H. and Wang, Jian-Sheng},
  journal = {Physical Review Letters},
  volume = {57},
  issue = {21},
  pages = {2607--2609},
  numpages = {0},
  year = {1986},
}

@article{PhysRevLett85,
  title = {Bosonic {L}attice {G}auge {T}heory with {N}oise},
  author = {Gyan V. Bhanot and Anthony D. Kennedy},
  journal = {Physics Letters B},
  volume = {157B},
  issue = {1},
  pages = {70-76},
  year = {1985},
}

@article{VollmerZW2016,
  author  = {Sebastian J. Vollmer and Konstantinos C. Zygalakis and Yee Whye Teh},
  title   = {Exploration of the ({N}on-) {A}symptotic {B}ias and {V}ariance of {S}tochastic {G}radient {L}angevin {D}ynamics},
  journal = {Journal of Machine Learning Research},
  year    = {2016},
  volume  = {17},
  number  = {159},
  pages   = {1-48},
}

@article{Geyer91,
  Author = {Geyer, Charles J.},
  Title = {Markov {C}hain {M}onte {C}arlo {M}aximum {L}ikelihood},
  Journal = { Computing Science and
Statistics: Proceedings of the 23rd Symposium on the Interfac},
  Year = {1991},
  Pages = {156--163},
}

@article{parallel_tempering05,
  title = {Parallel {T}empering: {T}heory, {A}pplications, and {N}ew {P}erspectives
},
  author = {David J. Earl and Michael W. Deem},
  journal = {Phys. Chem. Chem. Phys.},
  volume = {7},
  pages = {3910-3916},
  year = {2005},
}

@article{ST,
	year = 1992,
	volume = {19},
	number = {6},
	pages = {451--458},
	author = {E Marinari and G Parisi},
	title = {Simulated {T}empering: {A} {N}ew {M}onte {C}arlo {S}cheme},
	journal = {Europhysics Letters ({EPL})},
}

@article{Kirkpatrick83optimizationby,
    author = {Scott Kirkpatrick and D. Gelatt Jr and Mario P. Vecchi},
    title = {Optimization by {S}imulated {A}nnealing},
    journal = {Science},
    year = {1983},
    volume = {220},
    number = {4598},
    pages = {671--680}
}

@article{Andrieu09,
 author = {Christophe Andrieu and Gareth O. Roberts},
 journal = {The Annals of Statistics},
 number = {2},
 pages = {697-725},
 title = {The {P}seudo-{M}arginal {A}pproach for {E}fficient {M}onte {C}arlo {C}omputations},
 volume = {37},
 year = {2009}
}

@article{Liang10,
 author = {Faming Liang},
 journal = {The Annals of Statistics},
 pages = {2823–2856},
 publisher = {Institute of Mathematical Statistics},
 title = {Trajectory {A}veraging for {S}tochastic {A}pproximation {MCMC} {A}lgorithms},
 volume = {38},
 year = {2010}
}

@inproceedings{Chen15,
 author = {Chen, Changyou and Ding, Nan and Carin, Lawrence},
 title = {On the {C}onvergence of {S}tochastic {G}radient {MCMC} {A}lgorithms with {H}igh-order {I}ntegrators},
 booktitle = NeurIPS,
 year = {2015},
 location = {Montreal, Canada},
 pages = {2278--2286},
 numpages = {9}
}

@inproceedings{chen16_distributed,
 author = {Chen, Changyou and Ding, Nan and Chunyuan Li and Yizhe Zhang and Lawrence Carin},
 title = {Stochastic {G}radient {MCMC} with {S}tale {G}radients},
 booktitle = NeurIPS,
 year = {2016}
}

@inproceedings{Li16,
 author = {Li, Chunyuan and Chen, Changyou and Carlson, David and Carin, Lawrence},
 title = {Preconditioned {S}tochastic {G}radient {L}angevin {D}ynamics for {D}eep {N}eural {N}etworks},
 booktitle = AAAI,
 year = {2016},
 location = {Phoenix, Arizona},
 pages = {1788--1794},
 numpages = {7}
}

@InProceedings{Jose_adam_15,
  title = 	 {Probabilistic {B}ackpropagation for {S}calable {L}earning of {B}ayesian {N}eural {N}etworks},
  author = 	 {Jose Miguel Hernandez-Lobato and Ryan Adams},
  booktitle = ICML,
  pages = 	 {1861--1869},
  year = 	 {2015},
  volume = 	 {37},
}

@inproceedings{Gal16b,
Author = {Yarin Gal and Zoubin Ghahramani},
Title = {Dropout as a {B}ayesian {A}pproximation: {R}epresenting {M}odel {U}ncertainty in {D}eep {L}earning},
booktitle = ICML,
year={2016}
}

@book{David04,
author = {David Applebaum},
title={L\'{e}vy {P}rocesses and {S}tochastic {C}alculus},
year={2004},
publisher = {Cambridge University Press}
}

@article{Miasojedow_2013,
 author = {Btażej Miasojedow and \'{E}ric Moulines and Matti Vihola},
 journal = {Journal of Computational and Graphical Statistics},
 number = {3},
 pages = {649-664},
 title = {An {A}daptive {P}arallel {T}empering {A}lgorithm},
 volume = {22},
 year = {2013}
}

@article{Liang07,
 author = {Faming Liang and Chuanhai Liu and Raymond J. Carroll},
 journal = {Journal of the American Statistical Association},
 pages = {305-320},
 title = {{S}tochastic {A}pproximation in {M}onte {C}arlo {C}omputation},
 volume = {102},
 year = {2007}
}

@article{Matias19,
 author = {Matias Quiroz and Robert Kohn and Mattias Villani and Minh-Ngoc Tran},
 journal = {Journal of the American Statistical Association},
 pages = {831-843},
 title = {{S}peeding {U}p {MCMC} by {E}fficient {D}ata {S}ubsampling},
 volume = {114},
 issue={526},
 year = {2019}
}

@inproceedings{Ding14,
title = {Bayesian {S}ampling using {S}tochastic {G}radient
{T}hermostats},
author = {Nan Ding and Youhan Fang and Ryan Babbush and Changyou Chen and Robert D. Skeel and Hartmut Neven},
booktitle = NeurIPS,
pages = {3203-3211},
year = {2014}
}

@incollection{SVRG,
title = {Accelerating {S}tochastic {G}radient {D}escent using
{P}redictive {V}ariance {R}eduction},
author = {Rie Johnson and Tong Zhang},
booktitle = NeurIPS,
year = {2013}
}

@incollection{SAGA,
title = {{SAGA}: {A} {F}ast {I}ncremental {G}radient {M}ethod With {S}upport for {N}on-{S}trongly {C}onvex {C}omposite {O}bjectives},
author = {Aaron Defazio and Francis Bach and Simon Lacoste-Julien},
booktitle = NeurIPS,
year = {2014}
}

@article{AWSGLD,
  author = {Wei Deng and Guang Lin and Faming Liang},
  title = {{An Adaptively Weighted Stochastic Gradient MCMC Algorithm for Monte Carlo simulation and Global Optimization}},
  journal = {Statistics and Computing},
  year = {2022},
  pages={32:58},
}

@inproceedings{deng2019,
 author = {Wei Deng and Xiao Zhang and Faming Liang and Guang Lin},
 title = {An {A}daptive {E}mpirical {B}ayesian {M}ethod for {S}parse {D}eep {L}earning},
 booktitle = NeurIPS,
 year = {2019},
 location = {Vancouver, Canada},
}

@inproceedings{deng_VR,
   author = {Wei Deng and Qi Feng and Georgios Karagiannis and Guang Lin and Faming Liang},
    title = {Accelerating {C}onvergence of {R}eplica {E}xchange {S}tochastic {G}radient {MCMC} via {V}ariance {R}eduction},
    booktitle=ICLR,
     year = {2021},
}

@inproceedings{Chaudhari17,
   author = {Pratik Chaudhari and Anna Choromanska and  Stefano Soatto and Yann LeCun and Carlo Baldassi and Christian Borgs and Jennifer Chayes and Levent Sagun and Riccardo Zecchina},
    title = {Entropy-{SGD}: {B}iasing {G}radient {D}escent Into {W}ide {V}alleys},
    booktitle=ICLR,
     year = {2017},
}

@inproceedings{CSGLD,
title = {A {C}ontour {S}tochastic {G}radient {L}angevin {D}ynamics {A}lgorithm for {S}imulations of {M}ulti-modal {D}istributions},
author = {Wei Deng and Guang Lin and Faming Liang},
booktitle = NeurIPS,
year = {2020}
}

@inproceedings{Yuchen17,
 author = {Yuchen Zhang and Percy Liang and Moses Charikar},
 title = {A {H}itting {T}ime {A}nalysis of {S}tochastic {G}radient {L}angevin {D}ynamics},
 booktitle = COLT,
 year = {2017},
 location = {Amsterdam, Netherlands},
 pages = {1980-2022},
}

@article{lecun1998gradient,
  title={Gradient-based learning applied to document recognition},
  author={LeCun, Yann and Bottou, L{\'e}on and Bengio, Yoshua and Haffner, Patrick},
  journal={Proceedings of the IEEE},
  volume={86},
  number={11},
  pages={2278--2324},
  year={1998},
  publisher={Ieee}
}

@ARTICLE{Laguna2005,
  author = {Laguna, M. and Mart\'i, R.},
  title = {Experimental {T}esting of {A}dvanced {S}catter {S}earch {D}esigns for {G}lobal 
    {O}ptimization of {M}ultimodal {F}unctions},
  journal = {Journal of Global Optimization},
  year = {2005},
  volume = {33},
  number ={},
  pages = {235-255},
}

@inproceedings{Berthier,
 author = {Rapha\"{e}l Berthier and Francis Bach and Pierre Gaillard},
 title = {Tight {N}onparametric {C}onvergence {R}ates for {S}tochastic {G}radient {D}escent under the {N}oiseless {L}inear {M}odel},
 booktitle = NeurIPS,
 year = {2020},
}

@inproceedings{difan_2021,
 author = {Difan Zou and Jingfeng Wu and Vladimir Braverman and Quanquan Gu and Sham M. Kakade},
 title = {Benign {O}verfitting of {C}onstant-{S}tepsize {SGD}
for {L}inear {R}egression},
 booktitle = COLT,
 year = {2021},
}

@inproceedings{Mangoubi18,
 author = {Oren Mangoubi and Nisheeth K. Vishnoi},
 title = {Convex {O}ptimization with {U}nbounded {N}onconvex {O}racles using {S}imulated {A}nnealing},
 booktitle = COLT,
 year = {2018}
}

@inproceedings{bert,
 author = {Jacob Devlin and Ming-Wei Chang and  Kenton Lee and Kristina Toutanova},
 title = {{BERT}: {P}re-training of {D}eep {B}idirectional {T}ransformers for
{L}anguage {U}nderstanding},
 booktitle = ACL,
 year = {2019}
}

@inproceedings{wide_residual,
        	title={Wide {R}esidual {N}etworks},
        	author={Sergey Zagoruyko and Nikos Komodakis},
        	year={2016},
        	pages={87.1-87.12},
        	articleno={87},
        	numpages={12},
        	booktitle={Proceedings of the British Machine Vision Conference (BMVC)},
        }

@article{PT_SA,
   author = {Yaohang Li and Vladimir A. Protopopescu and Nikita Arnold and Xinyu Zhang and Andrey Gorin},
    title = {Hybrid {P}arallel {T}empering and {S}imulated {A}nnealing {M}ethod},
    journal = {Applied Mathematics and Computation},
    volume={212},
    number={1},
    pages={216--228},
    year = {2009},
}

@inproceedings{Li19,
 author = {Xuechen Li and Denny Wu and Lester Mackey and Murat A. Erdogdu},
 title = {Stochastic {R}unge-{K}utta {A}ccelerates
{L}angevin {M}onte {C}arlo and {B}eyond},
 booktitle = NeurIPS,
 year = {2019},
 location = {Vancouver, Canada},
 pages = {7746--7758},
}

@inproceedings{Balaji17,
 author = {Balaji Lakshminarayanan and Alexander Pritzel and Charles Blundell},
 title = {Simple and {S}calable {P}redictive {U}ncertainty {E}stimation using {D}eep {E}nsemble},
 booktitle = NeurIPS,
 year = {2017},
}

@inproceedings{deng2020,
 author = {Wei Deng and Qi Feng and Liyao Gao and Faming Liang and Guang Lin},
 title = {Non-{C}onvex {L}earning via {R}eplica {E}xchange {S}tochastic {G}radient {MCMC}},
 booktitle = ICML,
 year = {2020}
}

@inproceedings{temperature_scaling,
 author = {Chuan Guo and Geoff Pleiss and Yu Sun and Kilian Q. Weinberger},
 title = {On {C}alibration of {M}odern {N}eural {N}etworks},
 booktitle = ICML,
 year = {2017}
}

@inproceedings{Futoshi2020,
 author = {Futoshi Futami and Issei Sato and Masashi Sugiyama},
 title = {Accelerating the {D}iffusion-based {E}nsemble {S}ampling by {N}on-reversible {D}ynamics},
 booktitle = ICML,
 year = {2020}
}

@inproceedings{Florian2020,
 author = {Florian Wenzel and Kevin Roth and Bastiaan S. Veeling and Jakub \'{S}wiatkowski and Linh Tran and
Stephan Mandt and Jasper Snoek and Tim Salimans and Rodolphe Jenatton and Sebastian Nowozin},
 title = {How {G}ood is the {B}ayes {P}osterior in {D}eep {N}eural {N}etworks {R}eally?},
 booktitle = ICML,
 year = {2020}
}

@inproceedings{ruqi2020,
 author = {Ruqi Zhang and Chunyuan Li and Jianyi Zhang and Changyou Chen and Andrew Gordon Wilson},
 title = {Cyclical {S}tochastic {G}radient {MCMC} for {B}ayesian {D}eep {L}earning},
 booktitle = ICLR,
 year = {2020}
}

@inproceedings{swag,
 author = {Wesley Maddox and Timur Garipov and Pavel Izmailov and Dmitry Vetrov and Andrew Gordon Wilson},
 title = {A {S}imple {B}aseline for {B}ayesian {U}ncertainty in {D}eep {L}earning},
 booktitle = NeurIPS,
 year = {2019}
}

@inproceedings{SWA1,
 author = {Pavel Izmailov and Dmitry Podoprikhin and Timur Garipov and Dmitry Vetrov and Andrew Gordon Wilson},
 title = {Averaging {W}eights {L}eads to {W}ider {O}ptima and {B}etter {G}eneralization},
 booktitle = UAI,
 year = {2018}
}

@article{dm+16,
  title={Sampling from a {S}trongly {L}og-concave {D}istribution with the {U}nadjusted {L}angevin {A}lgorithm},
  author={Alain Durmus and Moulines, \'{E}ric },
  journal={arXiv:1605.01559},
  year={2016}
}

@article{Yves10,
author={Yves F. Atchad\'{e} and Gareth O. Roberts and Jeffrey S. Rosenthal},
title={Towards {O}ptimal {S}caling of {M}etropolis-coupled {M}arkov {C}hain {M}onte {C}arlo},
journal={Statistics and Computing},
year={2011},
volume={21},
pages={555–568}
}

@article{Martin09,
author={Martin Lingenheil and Robert Denschlag and Gerald Mathias and Paul Tavan},
title={Efficiency of {E}xchange {S}chemes in {R}eplica {E}xchange},
journal={Chemical Physics Letters},
year={2009},
volume={478},
issue={1-3},
pages={80-84}
}

@article{Predescu04,
author={Cristian Predescu and Mihaela Predescu and Cristian V. Ciobanu},
title={The {I}ncomplete {B}eta {F}unction {L}aw for {P}arallel {P}empering {S}ampling of {C}lassical {C}anonical {S}ystems},
journal={Chemical Physics Letters},
year={2004},
volume={120},
issue={9},
pages={4119-4128}
}

@inproceedings{agxr21,
  title={Federated {L}earning via {P}osterior {A}veraging: {A} {N}ew {P}erspective and {P}ractical {A}lgorithms},
  author={Al-Shedivat, Maruan and Gillenwater, Jennifer and Xing, Eric and Rostamizadeh, Afshin},
  booktitle={ICLR},
  publisher={arXiv:2010.05273},
  year={2021}
}

@inproceedings{mmr+17,
  title={Communication-efficient learning of deep networks from decentralized data},
  author={McMahan, Brendan and Moore, Eider and Ramage, Daniel and Hampson, Seth and y Arcas, Blaise Aguera},
  booktitle={Artificial Intelligence and Statistics},
  pages={1273--1282},
  year={2017},
  organization={PMLR}
}

@article{lsts20,
  title={ederated {L}earning: Challenges, {M}ethods, and {F}uture {D}irections},
  author={Li, Tian and Sahu, Anit Kumar and Talwalkar, Ameet and Smith, Virginia},
  journal={IEEE Signal Processing Magazine},
  volume={37},
  number={3},
  pages={50--60},
  year={2020},
  publisher={IEEE}
}

@article{Russo_roy18,
  title={Learning to {O}ptimize via {I}nformation-directed {S}ampling},
  author={Daniel Russo and Benjamin Van Roy},
  journal={Operations Research},
  volume={66},
  number={1},
  pages={230--252},
  year={2018}
}

@article{DEO,
author={Tsuneyasu Okabe and Masaaki Kawata and Yuko Okamoto and Masuhiro Mikami},
title={Replica {E}xchange {M}onte {C}arlo {M}ethod for the {I}sobaric–isothermal {E}nsemble},
journal={Chemical Physics Letters},
year={2001},
volume={335},
issue={5-6},
pages={435–439}
}

@article{Brenner07,
author={Paul Brenner and Christopher R. Sweet and Dustin VonHandorf and Jes\'{u}s A. Izaguirre},
title={Accelerating the {R}eplica {E}xchange {M}ethod through an {E}fficient {A}ll-pairs {E}xchange},
journal={The Journal of Chemical Physics },
year={2007},
volume={126},
pages={074103}
}

@article{Nadler07,
author={Walter Nadler and Ulrich H. E. Hansmann},
title={Generalized {E}nsemble and {T}empering {S}imulations: {A} {U}nified {V}iew},
journal={Phys. Rev. E},
year={2007},
volume={75},
pages={026109},
}

@article{Nadler07_V2,
author={Walter Nadler and Ulrich H. E. Hansmann},
title={Dynamics and {O}ptimal {N}umber of {R}eplicas in {P}arallel {T}empering {S}imulations},
journal={Phys. Rev. E},
year={2007},
volume={76},
pages={065701},
}

@article{mattingly02,
 author = {J.C. Mattingly and A.M. Stuartb and D.J. Highamc},
 journal = {Stochastic Processes and their Applications},
 pages = {185-232},
 title = {Ergodicity for {SDE}s and {A}pproximations: {L}ocally
{L}ipschitz {V}ector {F}ields and {D}egenerate {N}oise},
 volume = {101},
 year = {2002}
}

@article{Kofke02,
 author = {David A. Kofke},
 journal = {The Journal of Chemical Physics},
 title = {On the {A}cceptance {P}robability of {R}eplica-{E}xchange {M}onte {C}arlo {T}rials},
 volume = {117},
 year = {2002}
}

@article{Elmar08,
 author = {Elmar Bittner and Andreas Nussbaumer and Wolfhard Janke},
 journal = {Physical Review Letters},
 pages = {130603-130603},
 title = {Make {L}ife {S}imple: {U}nleash the {F}ull {P}ower of the {P}arallel {T}empering {A}lgorithm},
 volume = {101},
 year = {2008}
}

@article{Katzgraber06,
 author = {Helmut G Katzgraber and Simon Trebst and David A Huse and Matthias Troyer},
 journal = {Journal of Statistical Mechanics: Theory and Experiment},
 title = {{F}eedback-{O}ptimized {P}arallel {T}empering {M}onte {C}arlo},
 pages={p. P03018},
 year = {2008}
}

@inproceedings{Li19_v2,
 author = {Li, Chunyuan and Chen, Changyou and Yunchen Pu and Ricardo Henao and Lawrence Carin},
 title = {Communication-{E}fficient {S}tochastic {G}radient {MCMC} for {N}eural {N}etworks},
 booktitle = AAAI,
 year = {2019}
}

@ARTICLE{GirolamiG2011,
  author = {Mark Girolami and Ben Calderhead},
  title = {Riemann {M}anifold {L}angevin and {H}amiltonian {M}onte {C}arlo {M}ethods (with discussion)},
  journal = {Journal of the Royal Statistical Society  Series B},
  year = {2011},
  volume = {73},
  number ={2},
  pages = {123-214},
}

@article{NemethF2019,
author = {Christopher Nemeth and Paul Fearnhead},
title = {Stochastic Gradient Markov Chain Monte Carlo},
journal = {Journal of the American Statistical Association},
volume = {116},
number = {533},
pages = {433-450},
year  = {2021},
publisher = {Taylor & Francis},
}

@ARTICLE{Liang2004PRE,
  author = {Faming Liang},
  title = {Generalized 1/{K}-ensemble {A}lgorithm},
  journal = {Physical Review E},
  year = {2004},
  volume = {69},
  number ={},
  pages = {66701-66707},
}

@article{Hesselbo1995MonteCS,
  title={{M}onte {C}arlo simulation and {G}lobal {O}ptimization without {P}arameters.},
  author={Bobby Hesselbo and R.B. Stinchcombe},
  journal={Physical Review Letters},
  year={1995},
  volume={74 12},
  pages={2151-2155}
}

@inproceedings{lhy+20,
  title={On the {C}onvergence of {F}ed{A}vg on {N}on-{IID} {D}ata},
  author={Li, Xiang and Huang, Kaixuan and Yang, Wenhao and Wang, Shusen and Zhang, Zhihua},
  booktitle=ICLR,
  year={2020}
}

@article{Syed_jrssb,
 author = {Saifuddin Syed and Alexandre Bouchard-C\^{o}t\'{e} and George Deligiannidis and Arnaud Doucet},
 journal = {arXiv:1905.02939v4},
 title = {Non-{R}eversible {P}arallel {T}empering: a {S}calable {H}ighly {P}arallel {MCMC} {s}cheme},
 year = {2021},
}

@article{Particle_MCMC,
 author = {Andrieu, Christophe and Doucet, Arnaud and Holenstein, Roman},
 journal = {Journal of the Royal Statistical Society: Series B (Statistical
Methodology)},
volume = {72},
number = {3},
 title = {Particle {M}arkov {C}hain {M}onte {C}arlo {M}ethods},
 year = {2010}
}

@article{SMC1,
 author = {Gordon, Neil J and Salmond, David J and Smith, Adrian FM},
 journal = {IEE Proceedings F (Radar and Signal Processing)},
volume = {140},
number = {2},
 title = {Novel {A}pproach to {N}onlinear/{N}on-{G}aussian {B}ayesian {S}tate {E}stimation},
 year = {1993}
}

@book{SMC2,
 author = {Doucet, Arnaud and de Freitas, Nando and Gordon, Neil},
 title = {Sequential {M}onte {C}arlo {M}ethods in {P}ractice},
 year = {2001},
 publisher = {Springer Science \& Business Media},
}

@inproceedings{Daniel17,
 author = {Daniel Seita and Xinlei Pan and Haoyu Chen and John Canny},
 title = {An {E}fficient {M}inibatch {A}cceptance {T}est for {M}etropolis-{H}astings},
 booktitle = UAI,
 year = {2017},
}

@inproceedings{Anoop14,
 author = {Anoop Korattikara and Yutian Chen and Max Welling},
 title = {Austerity in {MCMC} {L}and: {C}utting the {M}etropolis-{H}astings {B}udget},
 booktitle = ICML,
 year = {2014},
}

@inproceedings{Ahn14_icml,
 author = {Sungjin Ahn and Babak Shahbaba and Max Welling},
 title = {{D}istributed {S}tochastic {G}radient {MCMC}},
 booktitle = ICML,
 year = {2014},
}

@inproceedings{IPMCMC,
 author = {Tom Rainforth and Christian A. Naesseth and Fredrik Lindsten and Brooks Paige and Jan-Willem van de Meent and Arnaud Doucet and Frank Wood},
 title = {Interacting {P}article {M}arkov {C}hain {M}onte {C}arlo},
 booktitle = ICML,
 year = {2016},
}

@article{dk17,
  title={Theoretical {G}uarantees for {A}pproximate {S}ampling from {S}mooth and {L}og-concave {D}ensities
},
  author={Dalalyan, Arnak S},
  journal={Journal of the Royal Statistical Society: Series B},
  volume={79},
  number={3},
  pages={651-676},
  year={2017}
}

@article{Geoff12,
   author = {Geoff K. Nicholls and Colin Fox and Alexis Muir Watt},
    title = {Coupled {MCMC} with a {R}andomized {A}cceptance {P}robability},
  journal = {ArXiv e-prints},
archivePrefix = "arXiv",
   eprint = {1205.6857},
     year = {2012},
}

@inproceedings{kaiming15,
   author = {Kaiming He and Xiangyu Zhang and Shaoqing Ren and Jian Sun},
    title = {Deep {R}esidual {L}earning for {I}mage {R}ecognition},
  booktitle = CVPR,
    year = {2016}
}

@article{Alexandros06,
    author = {Alexandros Beskos and Omiros Papaspiliopoulos and Gareth O. Roberts and Paul Fearnhead},
    title = {Exact and {C}omputationally {E}fficient {L}ikelihood-{B}ased {E}stimation for {D}iscretely {O}bserved {D}iffusion {P}rocesses},
    journal = {Journal of the Royal Statistical Society, Series B},
    year = {2006},
    volume = {68},
    issue = {3},
    pages = {333-382}
}

@article{Gijbels99,
    author = {Ir\`{e}ne Gijbels and Alun Pope and Matt P. Wand},
    title = {Understanding {E}xponential {S}moothing via {K}ernel {R}egression},
    journal = {Journal of the Royal Statistical Society, Series B},
    year = {1999},
    volume = {61},
    pages = {39– 50}
}

@article{Cattiaux2010,
    author = {Patrick Cattiaux and Arnaud Guillin and Li-Ming Wu},
    title = {A {N}ote on {T}alagrand’s {T}ransportation {I}nequality and {L}ogarithmic {S}obolev {I}nequality},
    journal = {Prob. Theory and Rel. Fields},
    year = {2010},
    volume = {148},
    pages = {285-334}
}

@article{john66,
    author = {John Bossons},
    title = {The {E}ffects of {P}arameter {M}isspecification and
{N}on-stationary on {T}he {A}pplicability of {A}daptive {F}orecasts.},
    journal = {Management Science},
    year = {1966},
    volume = {58},
    pages = {659– 669}
}

@inproceedings{Saatci17,
title = {{B}ayesian {GAN}},
author = {Saatci, Yunus and Wilson, Andrew G},
booktitle = NeurIPS,
pages = {3622--3631},
year = {2017}
}

@article{ma19,
  author = {Yi-An Ma and Yuansi Chen and Chi Jin and Nicolas Flammarion and Michael I. Jordan},
    title = {Sampling {C}an {B}e {F}aster {T}han {O}ptimization},
  journal = {PNAS},
     year = {2019}
}

@ARTICLE{AndrieuMP2005,
  author = {Andrieu, C. and Moulines, E. and Priouret, P.},
  title = {Stability of Stochastic Approximation under Verifiable Conditions},
  journal = {SIAM J. Control Optim.},
  year = {2005},
  volume = {44},
  number ={1},
  pages = {283-312},
}

@article{jingdong,
  author = {Jing Dong and Xin T. Tong},
    title = {Spectral {G}ap of {R}eplica {E}xchange {L}angevin {D}iffusion on
{M}ixture {D}istributions},
  journal = {ArXiv 2006.16193v2},
     year = {2020},
    month = Jul
}

@article{SGD_AOS,
  author = {Xi Chen and Jason D. Lee and Xin T. Tong and Yichen Zhang},
    title = {Statistical {I}nference for {M}odel {P}arameters in {S}tochastic {G}radient {D}escent},
  journal = {Annals of Statistics},
     year = {2020},
     volume = {48},
     number = {1},
     pages = {251-273}
}

@article{jingdong2,
  author = {Jing Dong and Xin T. Tong},
    title = {Replica {E}xchange for {N}on-{C}onvex {O}ptimization},
  journal = {arXiv:2001.08356v4},
archivePrefix = "arXiv",
 primaryClass = "Probability",
     year = {2021},
}

@article{talldata17,
  author  = {R\'{e}mi Bardenet and Arnaud Doucet and Chris Holmes},
  title   = {On {M}arkov {C}hain {M}onte {C}arlo {M}ethods for {T}all {D}ata},
  journal = {Journal of Machine Learning Research},
  year    = {2017},
  volume  = {18},
  pages   = {1-43}
}

@article{penalty_swap99,
  author  = {David Ceperley and Mark Dewing},
  title   = {The {P}enalty {M}ethod for {R}andom {W}alks with {U}ncertain {E}nergies},
  journal = {The Journal of Chemical Physics},
  year    = {1999},
  volume  = {110},
  issue = {20},
  pages   = {9812-9820}
}

@article{Kone2005,
  author  = {Aminata Kone and David A. Kofke},
  title   = {Selection of {T}emperature {I}ntervals for {P}arallel-tempering {S}imulations},
  journal = {The Journal of Chemical Physics},
  year    = {2005},
  volume  = {122},
  pages   = {206101}
}

@inproceedings{Chen14,
 author = {Chen, Tianqi and Fox, Emily B. and Guestrin, Carlos},
 title = {Stochastic {G}radient {H}amiltonian {M}onte {C}arlo},
 booktitle = ICML,
 year = {2014},
 location = {Beijing, China}
}

@inproceedings{Dubey16,
    title = {Variance {R}eduction in {S}tochastic {G}radient {L}angevin {D}ynamics},
    author = {Avinava Dubey and Sashank J. Reddi and Barnab\'{a}s P\'{o}czos and Alexander J. Smola and Eric P. Xing and Sinead A. Williamson},
    booktitle = NeurIPS,
    year = {2016}
}

@article{Zhong17,
  author = {Zhun Zhong and Liang Zheng and Guoliang Kang and Shaozi Li and Yi Yang},
    title = {Random {E}rasing {D}ata {A}ugmentation},
  journal = {ArXiv e-prints 1708.04896},
     year = {2017},
}

@inproceedings{Welling11,
 author = {Welling, Max and Teh, Yee Whye},
 title = {{B}ayesian {L}earning via {S}tochastic {G}radient {L}angevin {D}ynamics},
 booktitle = ICML,
 year = {2011},
 location = {Bellevue, Washington, USA},
 pages = {681--688},
 numpages = {8}
}

@article{Teh16,
    author = {Yee Whye Teh and Alexandre Thiery and Sebastian Vollmer},
    title = {Consistency and {F}luctuations for {S}tochastic {G}radient {L}angevin {D}ynamics},
    journal = JMLR,
    volume = {17},
    pages = 	{1-33},
    year = {2016},
}

@article{Dongruo,
    author = {Dongruo Zhou and Pan Xu and Quanquan Gu},
    title = {Stochastic {N}ested {V}ariance {R}eduction for {N}onconvex {O}ptimization},
    journal = JMLR,
    volume = {20},
    pages = {1-47},
    year = {2019},
}

@article{Mandt,
    author = {Stephan Mandt and Matthew D. Hoffman and David M. Blei},
    title = {Stochastic {G}radient {D}escent as {A}pproximate {B}ayesian {I}nference},
    journal = JMLR,
    volume = {18},
    pages = {1-35},
    year = {2017},
}

@article{baker17,
    author = {Jack Baker and Paul Fearnhead and Emily B. Fox and Christopher Nemeth},
    title = {Control {V}ariates for {S}tochastic {G}radient {MCMC}},
    journal = {Statistics and Computing},
    volume = {29},
    pages = 	{599–615},
    year = {2019},
}

@article{Bakry08,
    author = {Dominique Bakry and F. Barthe, Patrick Cattiaux and Arnaud Guillin},
    title = {A {S}imple {P}roof of the {P}oincaré {I}nequality for {A} {L}arge {C}lass of {P}robability {M}easures},
    journal = {Electron. Comm. Probab.},
    volume = {13},
    pages = 	{60-66},
    year = {2008},
}

@article{Bakry85,
    author = {Dominique Bakry and Michel \'{E}mery},
    title = {Diffusions {H}ypercontractives
},
    journal = {S\'{e}minaire de Probabilit\'{e}s XIX 1983/84},
    pages = 	{177-206},
    year = {1985},
}

@article{Bakry2014,
    author = {Bakry, Dominique and Gentil, Ivan and Ledoux, Michel},
    title = {{A}nalysis and {G}eometry of {M}arkov {D}iffusion {O}perators},
    journal = {Springer},
    year = {2014}
}

@inproceedings{Maxim17,
   author = {Maxim Raginsky and Alexander Rakhlin and Matus Telgarsky},
    title = {Non-convex {L}earning via {S}tochastic {G}radient {L}angevin {D}ynamics: a {N}onasymptotic {A}nalysis},
    booktitle = COLT,
    year = {2017},
    location = {Amsterdam, Netherlands},
    month = Jun
}

@inproceedings{
chen2018accelerating,
title={{A}ccelerating {N}onconvex {L}earning via {R}eplica {E}xchange {L}angevin {D}iffusion},
author={Yi Chen and Jinglin Chen and Jing Dong and Jian Peng and Zhaoran Wang},
booktitle=ICLR,
year={2019},
}

@inproceedings{
bandits_showdown,
title={Deep {B}ayesian {B}andits {S}howdown},
author={Carlos Riquelme and George Tucker and Jasper Snoek},
booktitle=ICLR,
year={2018},
}

@inproceedings{Xu18,
   author = {Pan Xu and Jinghui Chen and Difan Zou and Quanquan Gu},
    title = {Global {C}onvergence of {L}angevin {D}ynamics {B}ased {A}lgorithms for {N}onconvex {O}ptimization},
    booktitle = NeurIPS,
    year= {2018}
}

@inproceedings{zhanxing_anisotropic,
   author = {Zhanxing Zhu and Jingfeng Wu and Bing Yu and Lei Wu and Jinwen Ma},
    title = {The {A}nisotropic {N}oise in {S}tochastic {G}radient {D}escent: {I}ts {B}ehavior of {E}scaping from {S}harp {M}inima and {R}egularization {E}ffects},
    booktitle = ICML,
    year= {2019}
}

@inproceedings{pracSVRG,
   author = {Reza Harikandeh and Mohamed Osama Ahmed and Alim Virani and Mark Schmidt and Jakub Kone\v{c}n\'{y} and Scott Sallinen},
    title = {Stop {W}asting {M}y {G}radients: {P}ractical {SVRG}},
    booktitle = NeurIPS,
    year= {2015}
}

@inproceedings{SAGA_LD_Zou,
   author = {Difan Zou and Pan Xu and Quanquan Gu},
    title = {Sampling from {N}on-{L}og-{C}oncave {D}istributions via
{V}ariance-{R}educed {G}radient {L}angevin {D}ynamics},
    booktitle = AISTATS,
    year= {2019}
}

@inproceedings{SVRG_HMC_Zou,
   author = {Difan Zou and Pan Xu and Quanquan Gu},
    title = {Stochastic {G}radient {H}amiltonian {M}onte {C}arlo {M}ethods with {R}ecursive {V}ariance {R}eduction},
    booktitle = NeurIPS,
    year= {2019}
}

@article{WangLandau2001,
  title={Efficient, {M}ultiple-range {R}andom {W}alk {A}lgorithm to {C}alculate the {D}ensity of {S}tates.},
  author={Fugao Wang and David P. Landau},
  journal={Physical Review Letters},
  year={2001},
  volume={86},
  pages={2050-3}
}

@article{OPT_SA,
  title={Orthogonal {P}arallel {MCMC} {M}ethods for {S}ampling and {O}ptimization},
  author={L. Martino and V. Elvira and D. Luengo and J. Corander and F. Louzada},
  journal={Digital Signal Processing},
  year={2016},
  volume={58},
  pages={64-84}
}

@article{Berg1991Multicanonical,
  title={Multicanonical {A}lgorithms for {F}irst {O}rder {P}hase {T}ransitions},
  author={Bernd A. Berg and Neuhaus, T.},
  year={1991},
  journal = {Physics Letters B},
  volume = {267},
  number ={2},
  pages = {249-253},
}

@article{Liang05,
  author = {Faming Liang},
  title = {A {G}eneralized {W}ang–{L}andau {A}lgorithm for
{M}onte {C}arlo {C}omputation},
  journal = {Journal of the American Statistical Association},
  year = {2005},
  volume = {100},
  number ={472},
  pages = {1311-1327},
}

@article{Gyongy86,
  author = {I. Gy\"{o}ngy},
  title = {Mimicking the {O}ne-dimensional {M}arginal {D}istributions of {P}rocesses {H}aving an {I}to {D}ifferential},
  journal = {Probability Theory and Related Fields},
  year = {1986},
  volume = {71},
  pages = {501-516}
}

@article{Amel,
  author = {Amel Bentata and Rama Cont},
  title = {Mimicking the {M}arginal {D}istributions of a {S}emimartingale},
  journal = {arXiv:0910.3992v5},
  year = {2012}
}

@article{mao_mcmc,
  author = {Mao Ye and Tongzheng Ren and Qiang Liu},
  title = {Stein {S}elf-{R}epulsive {D}ynamics: {B}enefits {F}rom {P}ast {S}amples},
  journal = {arXiv:2002.09070v1},
  year = {2020}
}

@book{yin_zhu_10,
  title={Hybrid {S}witching {D}iffusions: {P}roperties and {A}pplications},
  author={George Yin and Chao Zhu},
  year={2010},
  publisher={Springer}
}

@book{Robert04,
  title={Monte {C}arlo {S}tatistical {M}ethods},
  author={Robert, C.P. and Casella, G.},
  year={2004},
  publisher={Springer}
}

@article{Metropolis1953,
title = {Equation of {S}tate {C}alculations by {F}ast {C}omputing {M}achines},
author = {Metropolis, N. and Rosenbluth, A.W. and Rosenbluth, M.N. and Teller, A.H. and Teller, E.},
year = {1953},
journal = {Journal of Chemical Physics},
volume = {21},
pages = {1087-1091},
}

@article{Hastings1970,
title = {{M}onte {C}arlo {S}ampling {M}ethods using {M}arkov {C}hain and {T}heir {A}pplications},
author = {Hastings, W.K.},
year = {1970},
journal = {Biometrika},
volume = {57},
pages = {97-109},
}

@article{Roberts_Tweedie_Bernoulli,
title = {Exponential {C}onvergence of {L}angevin {D}istributions and {T}heir {D}iscrete {A}pproximations},
author = {Gareth O. Roberts and Richard L. Tweedie},
year = {1996},
journal = {Bernoulli},
volume = {2},
Issue = {4},
pages = {341-363},
}

@article{LiangPL2009,
  author = {Liang, Faming},
  title = {On the {U}se of {S}tochastic {A}pproximation {M}onte {C}arlo for {M}onte {C}arlo {I}ntegration},
  journal = {Statistics and Probability Letters},
  year = {2009},
  volume = {79},
  number ={},
  pages = {581-587},
}

@article{leli2008,
  author = {T. Leli\`{e}vre and M. Rousset and G. Stoltz},
  title = {Long-time {C}onvergence of an {A}daptive {B}iasing {F}orce {M}ethod},
  journal = {Nonlinearity},
  year = {2008},
  volume = {21},
  pages = {1155–1181},
}

@article{Fort15,
 author = {G. Fort and B. Jourdain and E. Kuhn and T. Lelièvre and G. Stoltz},
 journal = {Math. Comput.},
 pages = {2297–2327},
 title = {Convergence of the {W}ang-{L}andau {A}lgorithm},
 volume = {84},
 number={295},
 year = {2015}
}

@article{RobbinsM1951,
  author = {Herbert Robbins and Sutton Monro},
  title = {A {S}tochastic {A}pproximation {M}ethod},
  journal = {Annals of Mathematical Statistics},
  year = {1951},
  volume = {22},
  number ={},
  pages = {400-407},
}

@article{RobertsRosenthal2007,
  author = {Roberts, Gerneth O. and Jeff S. Rosenthal},
  title = {Coupling and {E}rgodicity of {A}daptive {M}arkov {C}hain {M}onte {C}arlo {A}lgorithms},
  journal = {Journal of Applied Probability},
  volume={44},
  year = {2007},
  pages = {458-475}
}

@article{Alain17,
 author={Alain Durmus and \'{E}ric Moulines},
 journal={Annals of Applied Probability},
 title={Non-asymptotic {C}onvergence {A}nalysis for the {U}nadjusted {L}angevin {A}lgorithm
},
 year={2017},
 volume={27},
 pages={1551-1587}
}

@inproceedings{Mackey18,
  title={Global {N}on-convex {O}ptimization with {D}iscretized {D}iffusions},
  author={Murat A Erdogdu and Lester Mackey and Ohad Shamir},
  booktitle=NeurIPS,
  year={2018}
}

@article{AndrieuMoulines2006,
 author={Christophe Andrieu and \'{E}ric Moulines},
 journal={Annals of Applied Probability},
 title={On the {E}rgodicity {P}roperties of {S}ome {A}daptive {MCMC} {A}lgorithms},
 year={2006},
 volume={16},
 pages={1462-1505}
}

@article{Pelletier98,
 author={Mariane Pelletier},
 journal={Annals of Applied Probability},
 title={Weak {C}onvergence {R}ates for {S}tochastic {A}pproximation with {A}pplication to {M}ultiple {T}argets and {S}imulated {A}nnealing},
 year={1998},
 volume={8},
 pages={10-44}
}

@article{Fortetal2011,
 author={Fort, G. and \'{E}. Moulines and P. Priouret},
 journal={Annals of Statistics},
 title={Convergence of {A}daptive and {I}nteracting {M}arkov {C}hain {M}onte {C}arlo {A}lgorithms},
 year={2011},
 volume={39},
 pages={3262-3289}
}

@inproceedings{Sato2014ApproximationAO,
  title={Approximation {A}nalysis of {S}tochastic {G}radient {L}angevin {D}ynamics by {U}sing {F}okker-{P}lanck {E}quation and {I}to {P}rocess},
  author={Issei Sato and Hiroshi Nakagawa},
  booktitle=ICML,
  year={2014}
}

@article{deng_FALD,
   author = {Wei Deng and Yian Ma and Zhao Song and Qian Zhang and Guang Lin},
    title = {On {C}onvergence of {F}ederated {A}veraging {L}angevin {D}ynamics},
  journal = {arXiv:2112.05120v3},
     year = {2022},
}

@inproceedings{deng_ICSGLD,
   author = {Wei Deng and Siqi Liang and Botao Hao and Guang Lin and Faming Liang},
    title = {Interacting {C}ontour {S}tochastic {G}radient {L}angevin {D}ynamics},
  booktitle = ICLR,
     year = {2022},
}

@inproceedings{provly_SB,
   author = {Yu Chen and Wei Deng and Shikai Fang and Fengpei Li and Nicole Yang and Yikai Zhang and Kashif Rasul and Shandian Zhe and Anderson Schneider and Yuriy Nevmyvaka},
    title = {{Provably Convergent Schrödinger Bridge with Applications to Probabilistic Time Series Imputation}},
  booktitle = ICML,
     year = {2023},
}

@inproceedings{SB_bounded_cost,
   author = {Wei Deng and Yu Chen and Nicole Yang and Hengrong Du and Qi Feng and Ricky T.Q. Chen},
    title = {{On Convergence of Approximate Schrödinger Bridge with Bounded Cost}},
  booktitle = {submitted},
     year = {2023},
}

@article{deng_NRPT,
   author = {Wei Deng and Qian Zhang and Qi Feng and Faming Liang and Guang Lin},
    title = {{Non-reversible Parallel Tempering for Deep Posterior Approximation}},
  booktitle = AAAI,
     year = {2023},
}

@inproceedings{IDS_botao_deng,
  title={Information {D}irected {S}ampling for {S}parse {L}inear {B}andits},
  author={Botao Hao and Tor Lattimore and Wei Deng},
  booktitle=NeurIPS,
  year={2021}
}

@inproceedings{ccbj18,
  title={Underdamped {L}angevin {MCMC}: A non-asymptotic analysis},
  author={Cheng, Xiang and Chatterji, Niladri S and Bartlett, Peter L and Jordan, Michael I},
  booktitle={Conference on Learning Theory (COLT)},
  pages={300--323},
  year={2018},
  organization={PMLR}
}

@inproceedings{score_matching_yang_song,
  title={{S}core-{B}ased {G}enerative {M}odeling through {S}tochastic {D}ifferential {E}quations},
  author={Yang Song and Jascha Sohl-Dickstein and Diederik P. Kingma and Abhishek Kumar and Stefano Ermon and Ben Poole},
  booktitle=ICLR,
  year={2021},
}

@article{score_matching,
 author={A. Hyvarinen},
 journal={Journal of Machine Learning Research},
 title={Estimation of {N}on-normalized {S}tatistical {M}odels by {S}core {M}atching},
 year={2005},
 volume={6},
 pages={695--709}
}
\end{document}